\tikzset{commutative diagrams/.cd,every label/.append style = {font = \normalsize}}
\DeclareMathOperator{\Gr}{Gr}
\DeclareMathOperator{\Fl}{Fl}
\DeclareMathOperator{\GL}{GL}
\DeclareMathOperator{\SL}{SL}
\DeclareMathOperator{\Mat}{Mat}
\DeclareMathOperator{\sgn}{\text{sgn}}
\DeclareMathOperator{\rk}{\text{rank}}
\DeclareMathOperator{\Span}{\text{span}}
\DeclareMathOperator{\real}{\text{real}}
\newcommand{\bulR}{\triangleleft}
\newcommand{\ahead}{{\text{ahead}}}
\newcommand{\outside}{{\text{outside}}}
\newcommand{\woutside}{{\text{weak-outside}}}
\newcommand{\CC}{\mathbf{C}}
\newcommand{\abe}{{\delta_0}}
\newcommand{\ww}{{w}}
\newcommand{\pre}{{\text{pre}}}
\newcommand{\FL}{{\text{FL}}}
\DeclareMathOperator{\inc}{{\text{inc}}}
\DeclareMathOperator{\rem}{{\text{rem}}}
\DeclareMathOperator{\shift}{{\text{shift}}}
\DeclareMathOperator{\addL}{{\square}}
\newcommand{\Blue}{{\color{blue}\text{Blue}}}
\newcommand{\Black}{{\color{black}\text{Black}}}
\newcommand{\Red}{{\color{red}\text{Red}}}
\newcommand{\Purp}{{\color{purple}\text{Purple}}}
\newcommand{\tr}{{\mathrm{tr}}}
\newcommand{\CD}{\mathcal{CD}}
\newcommand{\bz}{\mathbf{z}}
\newcommand{\bw}{\mathbf{w}}
\newcommand{\ee}{\mathrm{e}}
\newcommand{\D}{{\mathcal{D}}}
\newcommand{\PosC}{\mathcal{P}^C}
\newcommand{\PosD}{\mathcal{P}^{C+D}}
\newcommand{\PosBC}{\mathcal{B}^C}
\newcommand{\PosBD}{\mathcal{B}^{C+D}}
\newcommand{\p}{\text{p}}
\newcommand{\match}{\text{match}}
\newcommand{\pr}{\text{pr}}
\newcommand{\BCFWV}{\text{BCFWVar}}
\newcommand{\BBCFWV}{\overline{\text{BCFWVar}}}
\newcommand{\Var}{{\text{Var}}}
\newcommand{\tVar}{\widetilde{\text{{Var}}}}
\newcommand{\halp}{{\hat\alpha}}
\newcommand{\hbet}{{\hat\beta}}
\newcommand{\hgam}{{\hat\gamma}}
\newcommand{\hdel}{{\hat\delta}}
\newcommand{\heps}{{\hat\varepsilon}}
\newcommand{\hzet}{{\hat\zeta}}
\newcommand{\hsigm}{{\hat\sigma}}
\newcommand{\balp}{{\bar\alpha}}
\newcommand{\bbet}{{\bar\beta}}
\newcommand{\bgam}{{\bar\gamma}}
\newcommand{\bdel}{{\bar\delta}}
\newcommand{\beps}{{\bar\varepsilon}}
\newcommand{\bzet}{{\bar\zeta}}
\newcommand{\bsigm}{{\bar\sigma}}
\newcommand{\Fdel}{{F_\delta}}
\newcommand{\Fdelr}{{F_{\delta_0}}}
\newcommand{\Fepsh}{{F_{\varepsilon_h}}}
\newcommand{\Ampl}{\mathcal{A}}
\newcommand{\Zu}{Z_{u_\star}}
\newcommand{\Grk}{\Gr_{k,n}^{\scriptscriptstyle \ge 0}}
\def\tZ{\tilde{Z}}
\newcommand{\bcfw}{{\bowtie}}
\newcommand{\mbcfw}{{{\iota_{\bowtie}}}}
\newcommand{\lr}[1]{\langle #1 \rangle}
\newcommand{\llrr}[1]{\langle\!\langle #1 \rangle\!\rangle}
\newcommand{\Pos}{\mathcal{P}}
\newcommand{\PosB}{\mathcal{B}}
\newcommand{\tzeta}{\tilde{\zeta}}
\newcommand{\talpha}{\tilde{\alpha}}
\newcommand{\tbeta}{\tilde{\beta}}
\newcommand{\tgamma}{\tilde{\gamma}}
\newcommand{\tdelta}{\tilde{\delta}}
\newcommand{\tepsilon}{\tilde{\varepsilon}}
\DeclareMathOperator{\after}{\text{after}}
\DeclareMathOperator{\Between}{\text{between}}
\DeclareMathOperator{\below}{\text{below}}
\newcommand{\scale}{\text{scale}}
\newcommand{\Loop}{\mathcal{L}}
\newcommand{\SA}{S^{\partial\Ampl}}
\newcommand{\Sred}{{\color{red}S^{\text{red}}}}
\newcommand{\Sblue}{{\color{blue}S^{\text{blue}}}}
\newcommand{\Srem}{S^{reg}}
\newcommand{\PC}{\Pi_C}
\newcommand{\PROJ}{\Pi}
\newcommand{\PD}{\Pi_{C+D}}
\newcommand{\PL}{\PROJ}
\newcommand{\C}{\mathbb{C}}
\newcommand{\B}{\mathcal{B}}
\newcommand{\TTT}{\mathcal{T}}
\def\R{\mathbb{R}}
\newtheorem*{theorem*}{Theorem}
\newtheorem{theorem}{Theorem}[section]
\newtheorem{thm}{Theorem}[section]
\newtheorem{lemma}[theorem]{Lemma}
\newtheorem{prop}[theorem]{Proposition}
\newtheorem{cor}[theorem]{Corollary}
\theoremstyle{definition}
\newtheorem{definition}[theorem]{Definition}
\newtheorem{ex}[theorem]{Example}
\newtheorem{defobs}[theorem]{Definition/Observation}
\newtheorem{obs}[theorem]{Observation}
\newtheorem{rmk}[theorem]{Remark}
\newtheorem{nn}[theorem]{Notation}
\newtheorem{cond}[theorem]{Condition}
\newtheorem{claim}[theorem]{Claim}
\setlist[itemize]{leftmargin=*}
\setlist[enumerate]{leftmargin=*}
\title{Notes on the one-loop amplituhedron and its BCFW tiling}
\author[R. Tessler]{Ran Tessler}
	\address{Department of Mathematics, Weizmann Institute of Science, Israel}
\begin{document}

\maketitle
\begin{abstract}
These notes are based on talks I gave in the seminar "Mathematical structures in scattering amplitudes in quantum field theories" I organized in Weizmann Institute on Fall 24'. They study amplituhedra, and extend the proof of \cite{even2021amplituhedron} of the BCFW conjecture for tree amplituhedra to one loop. 
\end{abstract}
This text consists of extended notes of a series of talks I gave as part of the seminar "Mathematical structures in scattering amplitudes in quantum field theories" I ran in Weizmann Institute on Fall 24', for math and physics graduate students. The perspective is purely mathematical.

The Amplituhedron was defined by Arkani-Hamed and Trnka \cite{arkani-hamed_trnka} in their study of scattering amplitudes for planar $\mathcal{N}=4$ SYM.
This space is the image of the nonnegative Grassmannian, or its loopy analogs under a certain \emph{Amplituhedron map} (see \cref{sec:back}).
Part of the motivation for defining this space was that it provided a geometric realization of the BCFW recursion \cite{BCFW}: it was conjectured implicitly in \cite{arkani-hamed_trnka} that images of certain subspaces of the loopy Grassmannian, constructed recursively via a process analogous to the BCFW recursion (see \cref{sec:bcfw_cells}), tile the amplituhedron.

These notes contain:
\begin{itemize}
\item A definition of the loopy Grassmannian and its (Pl\"ucker) nonnegative part, and a definition of loop amplituhedra, following \cite{arkani-hamed_trnka}. A definition of $B$-loop amplituhedra.
\item The ingredients of the BCFW recursion. In particular the \emph{forward limit} step that is known from physics literature, but has not appeared in the mathematical literature before.
\item A description of zero and one loop BCFW cells, and a detailed study of their geometry and combinatorics.
\item The algebraic counterpart of the BCFW recursion - promotions of functions on amplituhedra.
\item A proof that for zero and one loop BCFW cells tile the amplituhedron. For one loop this result is new.
\end{itemize}
This text follows the strategy of \cite{even2021amplituhedron} for studying BCFW cells and tilings. While this strategy is perfectly suited for studying the loopy case, a considerable complication comes from the absence of solid foundations for one loop objects, whose loopless analogues are well known in the rich theory of positroids. These difficulties are addressed by developing suitable tools.

We thank the Israel Science Foundation for support under grant ISF 1729/23.

\newpage
\setcounter{tocdepth}{1}
\tableofcontents
\section{Background: The loopy Grassmannian, its positive part, and the amplituhedron}\label{sec:back}
In this section we define loopy Grassmannians, which are the domains and targets of the amplituhedron maps, their nonnegative parts, amplituhedron maps and the images - the amplituhedra.
These objects were defined in \cite{arkani-hamed_trnka}. The special case of no loops involves the nonnegative Grassmannian, an elegant object that was defined in \cite{lusztig} and whose beautiful structure was discovered by Postnikov in \cite{postnikov}, and was extensively studied since.

The one loop Grassmannian is in fact a two-step flag variety. Lustig \cite{lusztig} also defined positivity structure for it. But it is another positivity structure, the \emph{Pl\"ucker positivity} which is relevant to the study of the amplituhedron. These two positivity structures are not equivalent, see \cite{bloch2023two} for an extended discussion.

We begin by making certain definitions and notations that will repeat throughout the text.
\begin{definition}\label{def:index_sets}
An \emph{index set} $N$ is a finite linearly ordered set. To make the wording less boring, we shall refer to elements of $N$ as \emph{indices, markers} and \emph{positions}. We write $\min N,~\max N$ for the minimal and maximal elements in $N,$ respectively. We usually denote the latter by $n.$ A repeating choice of an index set is $[n]=\{1,2,\ldots,n\}.$

For $i,j\in N$ we write $i<j,$ if $i$ comes before $j$ in the order. In this case we say that $i$ \emph{precedes} $j.$ We write $i\lessdot j$ if $i<j$ and they are consecutive in the order. In this case we say that $i,j$ are consecutive and that $j$ \emph{follows} $i.$ 
We say that $j$ \emph{cyclically follows $i$} if either $j$ follows $i$ or $j=\min N,~i=\max{N}.$ In case $j$ follows $i$ we sometimes write $j=i+1,~i=j-1$ or, if we want to stress that these relations are in $N,$ we may write $j=i+_N1,~i=j-_N1.$
 
\end{definition}
In what follows we work over the field $\R,$ unless specified otherwise.
\begin{nn}
We write $\ee_i$ for $i\in N$ for the $i$th standard basis element of $\R^N.$
\end{nn}
\begin{definition}\label{nn:Mat}
Fix index sets $N,~K$, and a finite set $\Loop$
A \emph{loopy matrix} is a tuple $(C\vdots \{D_p\}_{p\in\Loop})$ where $C\in\Mat_{K, N},~D_p\in\Mat_{2,N}$ for $p\in\Loop.$
$C$ is called the \emph{tree part} of the loopy matrix, $D_p$ are called \emph{loops}.

We denote the set of loopy matrices for given $K,N,\Loop$ by $\Mat_{K,N;\Loop}.$ If $N=[n]$, $K=[k]$ of $\Loop=[\ell]$ we write $n,k,\ell$ respectively instead of $[n],[k],[\ell].$ 
 We will often write $\Mat_{K,N;|\Loop|}$ for $\Mat_{K,N;\Loop}.$
The main case of interest in this text are $|\Loop|=0$ and $|\Loop|=1$. In the former cases we refer to a loopy matrix just as a matrix, and often erase $\Loop=\emptyset,|\Loop|=0$ from the subscript. In the latter case we denote the loopy matrix simply by $C\vdots D,$ and sometimes refer to it as a \emph{loopy pair}, or just as a \emph{pair}. We denote the $j$th column of a matrix $C$ or a loopy matrix $(C\vdots (D_i)_{i\in\Loop})$ by $C^j,~(C^j\vdots (D_i^j)_{i\in\Loop})$ respectively. We denote the $j$th row of a matrix $M$ by $M_j.$

For a subset $\Loop'=\{p_1,\ldots,p_j\}\subseteq \Loop$ we write \[C_{\Loop'}=\begin{pmatrix}D_{p_1}\\
\vdots\\
D_{p_j}\\C\end{pmatrix}\] for the matrix whose $k$ bottom rows are $C$, and the upper rows are the rows of $D_p,~p\in\Loop'$ where $\Loop'$ is ordered in an arbitrary order that will not affect what follows. We sometimes abuse notations and denote $C_{\Loop'}$ by $C+\sum_{p\in\Loop'}D_p$.

There are several important subspaces of $\Mat_{K,N;\Loop}.$ $\Mat_{K,N;\Loop}^\circ$ is the subset where
\begin{equation}\label{eq:ranks}
\forall \Loop'\subseteq \Loop,~~\rk( C_{\Loop'})=\min\{|N|,|K|+2|\Loop'|\}. \end{equation}
In this case the loopy matrix is said to be \emph{of full rank}. 
If all these matrices are in addition \emph{nonnegative}, meaning that all their maximal minors are nonnegative, then the loopy matrix is said to be \emph{nonnegative}. If they are all positive then the loopy matrix is said to be \emph{positive}. We write 
$\Mat_{K,N;\Loop}^{\geq},~\Mat_{K,N;\Loop}^>$ for the set of nonnegative and positive loopy matrices, respectively.

When $|K|+2|\Loop'|\leq |N|,$ for every subset $I\in\binom{N}{|K|+2|\Loop'|}$ we denote the maximal minor of the submatrix of $C_{\Loop'}$ whose columns are the columns indexed $I$ by the following notations
\[\lr{C_{\Loop'}}_I=\lr{C D_{p_1}\ldots D_{p_j}}_I=\lr{C +D_{p_1}+\ldots+ D_{p_j}}_I.\]
\end{definition}
\begin{definition}\label{nn:loopGrass}
Fix an index set $N,$ a nonnegative integer $k$, and a finite set $\Loop$.
A \emph{loopy vector space} is a tuple $(U\vdots \{V_p\}_{p\in\Loop})$ where $U\in\Gr_{K, N},~D_p\in\Gr_{2}(\R^N/U)$ for $p\in\Loop.$
If $N=[n]$ or $\Loop=[\ell]$ we write $n,\ell$ instead of $[n],[\ell]$ respectively. 
$U$ is called the \emph{tree part} of the loopy vector, $V_p$ are called \emph{loops}.

We denote the set of loopy vector spaces for given $k,N,\Loop$ by $\widetilde{\Gr}_{k,N;\Loop}.$ 

For a subset $\Loop'=\{p_1,\ldots,p_j\}\subseteq \Loop$ we write $U_{\Loop'}$ for the vector space $U+V_{p_1}+\ldots V_{p_j}.$ Note that since every $V_{p_i}$ is defined modulo $U,$ this summation makes sense and is well defined. 

There are several important subspaces of $\widetilde{\Gr}_{K,N;\Loop}.$ \[\Gr_{K,N;\Loop}=\{U\vdots\{V_p\}_{p\in\Loop}|~\forall \Loop'\subseteq\Loop,~~\dim(U_{\Loop'})=\min(|N|,k+2|\Loop'|)\}.\]Elements of this space are \emph{loopy vector spaces of full rank}. We will often write $\Gr_{k,N;|\Loop|}$ for $\Gr_{k,N;\Loop}.$
Again we will be mainly interested in $|\Loop|=0$ and $|\Loop|=1$. In the former cases we refer to a loopy vector space just as a vector space, and often erase $\Loop=\emptyset,|\Loop|=0$ from the subscript. In the latter case we denote the loopy vector space simply by $U\vdots V,$ and sometimes, as in the case of loopy matrices, refer to it as a \emph{loopy pair}, or just as a \emph{pair}. It will be clear from the context if when we speak about a loopy pair we mean a loopy matrix or a loopy vector space. 

There is a natural submersive surjection
\[\PROJ:\Mat^\circ_{K,N;\Loop}\to\Gr_{|K|,N;\Loop}\]
given by $(C\vdots\{D_p\}_{p\in\Loop})\mapsto(U\vdots\{V_p\}_{p\in\Loop})$
where $U$ is the row span of $C,$ and $V_p$ is the row span of $D_p,$ modulo $U.$
If $\PROJ(C\vdots\{D_p\}_{p\in\Loop})=(U\vdots\{V_p\}_{p\in\Loop})$ we say that $(C\vdots\{D_p\}_{p\in\Loop})$ is a \emph{representative} of $(U\vdots\{V_p\}_{p\in\Loop})$.
We will often not distinguish between a loopy vector space and its loopy matrix representative, and refer to a loopy matrix $(C\vdots\{D_p\}_{p\in\Loop})$ as a loopy vector space, or an element of $\Gr_{|K|,N;\Loop}$. We often write $C+D_{i_1}+\ldots D_{i_p}$ for the corresponding vector space, and may write $v\in C+D_{i_1}+\ldots D_{i_p}$ to indicate that the vector $v$ belongs to the corresponding vector space.  

We denote by $\Gr_{k,|N|;\Loop}^{\geq} ~(\Gr_{k,|N|;\Loop}^{>})$ the subspaces of $\Gr_{k,|N|;\Loop}$ which have a nonnegative (positive) representative. We refer to elements of this space as \emph{nonnegative (positive)} loopy vector spaces.
\end{definition}
\begin{definition}\label{def:group}
The fiber of $\PROJ$ is a torsor of the group \[\GL_{k;\Loop}=\GL_{k;|\Loop|}=\GL_{k,k+2}^{(1)}\times_{\GL_k}\GL_{k,k+2}^{(2)}\times_{\GL_k}\cdots \times_{\GL_k}\GL_{k,k+2}^{(|\Loop|)},\]where $\GL_{k,k+2}^{(i)}$ is a copy of the parabolic group which acts on pairs $(C\vdots D_i)$ by row operations involving $C$'s rows on $C$, and row operations involving both the rows of $C$ and of $D_i$ on $D,$ such that the total matrix representing the row operations on $C+D$ is invertible. The product is fibred on $\GL_k,$ in the sense that the restrictions of the row operations to $C$ are all the same.

We can write the elements of $\GL_{k;|\Loop|}$ as 
\[\begin{pmatrix}
    N_{|\Loop|} & 0_{2\times 2}&\ldots&0_{2\times 2}&K_{|\Loop|}\\
    \vdots&\ddots&\cdots&\vdots&\vdots\\
    0_{2\times 2} & 0_{2\times 2}&\ldots&N_1&K_{1}\\
    0_{k\times 2} & 0_{k\times 2}&\ldots&0_{k\times 2}&M
\end{pmatrix},\]
where $M$ is an invertible $k\times k$ matrix, each $N_i$ is an invertible $2\times 2$ matrix, $0_{i\times j}$ is an $i\times j$ matrix of $0$'s and each $K_i$ is an arbitrary $2\times k$ matrix. $\GL_{k;\Loop}$ has natural projections on $\GL_{k;\Loop'}$ for each $\Loop'\subseteq\Loop,$ obtained by restricting only to $(C\vdots(D_i)_{i\in\Loop'}).$ We also define $SL_{k;\Loop}$ as the subgroup of $\GL_{k;\Loop}$ where the representing matrix has determinant $1,$ that is
\[\det(M)\prod_{i\in\Loop'}\det N_i = 1.\]

In the special case of one loop the group $\GL_{k;1}$ is just the parabolic group preserving the two step flag variety \[\Fl_{k,k+2,n}=\{(U,W)|~U\in \Gr_{k,n},~W\in\Gr_{k+2,n},~U\subset W\}.\]
\end{definition}
The following lemma is easily verified.
\begin{lemma}\label{lem:submersion2Grass}
For $\Loop'\subseteq\Loop,$ write 
\[\PL_{\Loop'}:\Mat_{k,n;\Loop}\to\Mat_{k+2|\Loop'|,n},\qquad\qquad(C\vdots (D_i)_{i\in\Loop})\mapsto C+\sum_{i\in\Loop'}D_i.\]We abuse notations and write $\PL$ for the analogous map in the level of loopy vector spaces
\[\PL_{\Loop'}:\Gr_{k,n;\Loop}\to\Gr_{k+2|\Loop'|,n},\qquad\qquad(U\vdots (V_i)_{i\in\Loop})\mapsto U+\sum_{i\in\Loop'}V_i.\]
Note that the last map is defined only if $k+2|\Loop'|\leq n.$

Then these maps are continuous submersions, and in particular open. 
\end{lemma}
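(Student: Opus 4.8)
The plan is to deduce both claims from two facts already in place: that $\PROJ$ is a surjective submersion (\cref{nn:loopGrass}), and the elementary observation that a surjective linear map stays a submersion after restriction to an open subset. I would first dispose of the matrix-level map. Since $\Mat_{k,n;\Loop}$ is the affine space $\R^{(k+2|\Loop|)\times n}$ and the entries of $C+\sum_{i\in\Loop'}D_i$ are precisely a sub-collection of the coordinates of $(C\vdots(D_i)_{i\in\Loop})$ — the rows of $C$ together with the rows of those $D_i$ with $i\in\Loop'$ — the map $\PL_{\Loop'}$ is a coordinate projection, hence a surjective continuous submersion; this persists after restricting the source to the open subset $\Mat^\circ_{k,n;\Loop}$. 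I would also record here that specializing the rank conditions \eqref{eq:ranks} to the subset $\Loop'$ itself gives $\rk(C_{\Loop'})=\min(n,k+2|\Loop'|)$, so on $\Mat^\circ_{k,n;\Loop}$ the map $\PL_{\Loop'}$ actually lands inside $\Mat^\circ_{k+2|\Loop'|,n}$ (which, when $k+2|\Loop'|\le n$, is the space of full row rank matrices).

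For the loopy-vector-space map I would pass to the commuting square
\[
\begin{array}{ccc}
\Mat^\circ_{k,n;\Loop} & \xrightarrow{\ \PL_{\Loop'}\ } & \Mat^\circ_{k+2|\Loop'|,n}\\[3pt]
\downarrow & & \downarrow\\[3pt]
\Gr_{k,n;\Loop} & \xrightarrow{\ \PL_{\Loop'}\ } & \Gr_{k+2|\Loop'|,n}
\end{array}
\]
whose vertical arrows are the two copies of $\PROJ$; it commutes because both ways around send $(C\vdots(D_i)_{i\in\Loop})$ to the row span of $C_{\Loop'}$. The top arrow is a submersion by the previous paragraph and the right arrow is a submersion by \cref{nn:loopGrass}, so the composite $\Mat^\circ_{k,n;\Loop}\to\Gr_{k+2|\Loop'|,n}$ along the top and right is a submersion; but this composite also equals the bottom $\PL_{\Loop'}$ precomposed with the (surjective submersion) left $\PROJ$. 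I would then pick an arbitrary point of $\Gr_{k,n;\Loop}$, lift it through the surjection $\PROJ$ to some $x\in\Mat^\circ_{k,n;\Loop}$, and differentiate the identity at $x$: the linear map $(d\PL_{\Loop'})\circ(d\PROJ)$ is onto, and since $d\PROJ$ is onto it follows that $d\PL_{\Loop'}$ is onto at the image point; as the point was arbitrary, the bottom $\PL_{\Loop'}$ is a submersion. Its smoothness, hence continuity, drops out the same way, using that the surjective submersion $\PROJ$ is a smooth quotient map through which the composite factors. The last clause is then automatic: every submersion of smooth manifolds is open, by the local normal form for submersions.

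I do not anticipate a genuine obstacle here — this is the "easily verified" lemma it is advertised to be. The two points that deserve a moment's care are: (i) confirming that $\PL_{\Loop'}$ carries full-rank loopy matrices to full-rank matrices, so that the whole commuting square stays inside the world of $\Mat^\circ$ and $\Gr$ where $\PROJ$ is available as a surjective submersion — which is just the instance of \eqref{eq:ranks} for the subset $\Loop'$; and (ii) fixing which smooth structure is meant on $\Gr_{k,n;\Loop}$, namely that of the open submanifold of $\widetilde{\Gr}_{k,n;\Loop}$ cut out by the dimension conditions (equivalently, the structure for which $\PROJ$ is a submersion), so that the word "submersion" is meaningful on the target. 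Both are routine.
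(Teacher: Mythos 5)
The paper supplies no proof of this lemma; it is stated right after the sentence ``The following lemma is easily verified'' with no argument attached, so there is no official proof to compare your attempt against. Your argument is correct and is surely the one the authors had in mind: at the matrix level $\PL_{\Loop'}$ is a coordinate projection between affine spaces, hence a surjective submersion, and this persists on the open subset $\Mat^\circ$; the rank conditions~\eqref{eq:ranks} applied to the subset $\Loop'$ guarantee the restricted map lands in full-rank matrices; and at the Grassmannian level one chases the commuting square whose vertical arrows are the two quotient submersions $\PROJ$ (the right-hand one being the ordinary row-span map $\Mat^\circ_{k+2|\Loop'|,n}\to\Gr_{k+2|\Loop'|,n}$, i.e.\ the $\Loop=\emptyset$ instance of $\PROJ$). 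Differentiating the identity $\PL_{\Loop'}\circ\PROJ=\PROJ\circ\PL_{\Loop'}$ at a preimage point and using surjectivity of $d\PROJ$ gives surjectivity of $d\PL_{\Loop'}$, and smoothness/continuity descend through the quotient in the same way. Openness then follows from the local normal form for submersions. Your two points of care are exactly the right ones, and both are correctly disposed of.
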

In the special case of $1-$loop we will use the notations:
\begin{align*}&\PC:\Mat_{k,n;1}\to \Mat_{k,n},~~(C\vdots D)\mapsto C,&\qquad\PD:\Mat_{k,n;1}\mapsto \Mat_{k+2,n},~~(C\vdots D)\to C+D,\\
&\PC:\Gr_{k,n;1}\to \Gr_{k,n},~~(U\vdots V)\to U,&\qquad
\PD:\Gr_{k,n;1}\to \Gr_{k+2,n},~~(U\vdots V)\to U+V.\end{align*}
\begin{definition}\label{def:pluckers_and_cone}
    Given a loopy matrix $(C\vdots(D_i)_{i\in\Loop}),~\Loop'\subseteq \Loop$ and $I\in\binom{[n]}{k+2|\Loop'|}$ we denote the $I$th \emph{maximal minor} of $(C\vdots(D_i)_{i\in\Loop'}),$ or of $C+\sum_{i\in\Loop'}D_i,$ made of the columns labeled $I$ by $\lr{C\vdots(D_i)_{i\in\Loop'}}_I=\lr{C+\sum_{i\in\Loop'}D_i}_I.$ We also use the slightly inaccurate term the $I$th \emph{Pl\"ucker coordinate} of $C+\sum_{i\in\Loop'}D_i.$ Note that if $g$ is an element of $\GL_{k;\Loop'}$ then
    \begin{equation}\label{eq:effect_of_grp_on_plckr}\lr{g\cdot (C\vdots(D_i)_{i\in\Loop})}_I=\det(g)\lr{(C\vdots(D_i)_{i\in\Loop})}.\end{equation}

    For a loopy vector space we define the Pl\"ucker coordinates
    \[\lr{U\vdots(V_i)_{i\in\Loop'}}_I=\lr{U+\sum_{i\in\Loop'}V_i}_I\]
    by first choosing a loopy matrix representative, and then taking its Pl\"ucker coordinates. While this definition depends on choices, for every given $\Loop'\subseteq\Loop$ with $k+2|\Loop'|\leq n$ the collection of Pl\"ucker coordinates $(\lr{U\vdots(V_i)_{i\in\Loop'}}_I)_{I\in\binom{[n]}{k+2|\Loop'|}}$ forms a \emph{projective vector}, by \eqref{eq:effect_of_grp_on_plckr}.

    More intrinsically, the Pl\"ucker coordinates $\lr{U\vdots(V_i)_{i\in\Loop'}}_I$ are \emph{global sections} of the line bundle
    \[\TTT_{\Loop'}:=\underline{\Mat_{k,n;\Loop'}^\circ}\times_{\SL_{k;\Loop'}}\underline{\mathbb{R}},\]where $\underline{\mathbb{R}}$ is the trivial real line bundle over $\Gr_{k,n;\Loop}$ with the trivial $\SL_{k;\Loop'}$ action, and $\underline{\Mat_{k,n;\Loop'}^\circ}$ is fiber bundle over $\Gr_{k,n;\Loop}$ with fiber over $(U\vdots (V_i)_{i\in\Loop})$ being matrix representatives to the sub loopy vector space $(U\vdots (V_i)_{i\in\Loop'}),$ obtained from $(U\vdots (V_i)_{i\in\Loop})$ by restriction, and it is a  
 $\GL_{k;\Loop'}-$principle bundle. $\SL_{k,\Loop'}$ acts by the natural action induced from the inclusion $\SL_{k;\Loop'}\hookrightarrow\GL_{k;\Loop'}$.
 We refer to the total space of $\TTT_\Loop$ as the \emph{cone} $\CC\Gr_{k,n;\Loop}$ over $\Gr_{k,n;\Loop},$ and it is endowed with the bundle map $\Pi:\CC\Gr_{k,n;\Loop}\to\Gr_{k,n;\Loop}.$ The Pl\"ucker coordinates are equivalently global functions over the cone. We use the same conventions as above to denote elements of the cone by loopy matrix representatives whose minors agree with the Pl\"ucker coordinates of the element.
\end{definition}
We think of the elements of the cones as matrix representatives up to $\SL_{k;\Loop}$ action.
In the case of one loop we simplify the notations:
\begin{nn}\label{nn:plucker}
For $I\in\binom{[n]}{k}$ or $\binom{[n]}{k+2}$ and a loopy matrix $C\vdots D,$ we write 
\[P_I(C)=P_I(C\vdots D)=\lr{C}_I,~\text{or }P_I(C+D)=P_I(C\vdots D)=\lr{C+D}_I=\lr{CD}_I,\]
respectively. We use similar notations for a loopy vector space $U\vdots V.$ We write $\TTT_0,\TTT_1$ for the corresponding line bundles.
\end{nn}

\subsection{Operations which preserve nonnegativity}
There are certain operations which preserve nonnegativity. 
\begin{definition}
\label{def:ops}
Let $N'$ be an index set containing an index $h,$ and $N=N'\setminus\{h\}.$
We define \[\pre_h:\Mat_{k,N;\Loop}\to\Mat_{k,N';\Loop}\]as the map which adds a zero column at the $h$th position of $C$ and every $D_i.$

Let $N'$ be an index set containing an index $h,$ and $N='N\setminus\{h\},$ $K'$ an index set containing an index $l,$ and write $K=K'\setminus\{l\}.$
The map $\inc_{h|l}$ is defined by \[\inc_{h|l}:\Mat_{K,N;\Loop}\to\Mat_{K',N';\Loop},\qquad (C\vdots(D_i)_{i\in\Loop})\mapsto (C'\vdots(D'_i)_{i\in\Loop})\]
where
\[C' = \begin{pmatrix}
C_{\min {K},\min {N}} & \cdots & C_{\min {K},h-1} & 0 & -C_{\min {K},h+1} & \cdots & -C_{\min {K},\max {N}}\\
\vdots  & \ddots & \vdots & \vdots & \vdots & \ddots & \vdots\\
C_{l-1,\min {N}} & \cdots & C_{l-1,h-1} & 0 & -C_{l-1,h-1} & \cdots & -A_{l-1,\max {N}}\\[0.25em]
0       & \cdots & 0         & 1 & 0       & \cdots & 0      \\
-C_{l+1,\min {N}} & \cdots & -C_{l+1,h-1} & 0 & C_{h+1,h+1} & \cdots & C_{h+1,\max {N}}\\
\vdots  & \ddots & \vdots & \vdots & \vdots & \ddots & \vdots\\
-C_{\max {K},\min {N}} & \cdots & -C_{\max {K},h-1} & 0 & C_{\max {K},h+1} & \cdots & C_{\max {K},\max {N}}
\end{pmatrix},
\qquad D'_i=\pre_h D_i.\]
For $i\in N,$ let $\rem_i:\Mat_{K,N;\Loop}\to\Mat_{K,N\setminus\{i\};\Loop}$ be the operation which removes the $i$th column of the loopy matrix.

For $i\in N,$ let $\scale_i(t)$ be the operation which scales the $i$th column of a loopy matrix by $t.$

For $h\in N,~t\in\mathbb{R}$ we define the map $x_h(t):\Mat_{k,N;\Loop}\to\Mat_{k,N;\Loop}$ by
\[[x_h(t)(C\vdots(D_i)_{i\in\Loop})]_j=\begin{cases}~(C\vdots(D_i)_{i\in\Loop})_j,&j\neq h+_N1\\
(C\vdots(D_i)_{i\in\Loop})_{h+1}+t(C\vdots(D_i)_{i\in\Loop})_h,&j= h+_N1,~h\neq\max{N},\\
(C\vdots(D_i)_{i\in\Loop})_{h+1}+(-1)^{k-1}t(C\vdots(D_i)_{i\in\Loop})_h,&j= h+_N1,~h=\max{N}
\end{cases}\]
We similarly define
$y_h(t):\Mat_{k,N;\Loop}\to\Mat_{k,N;\Loop}$ by
\[[y_h(t)(C\vdots(D_i)_{i\in\Loop})]_j=\begin{cases}~(C\vdots(D_i)_{i\in\Loop})_j,&j\neq h\\
(C\vdots(D_i)_{i\in\Loop})_{h}+t(C\vdots(D_i)_{i\in\Loop})_{h+1},&j= h,~h\neq\max{N},\\
(C\vdots(D_i)_{i\in\Loop})_{h}+(-1)^{k-1}t(C\vdots(D_i)_{i\in\Loop})_{h+1},&j= h,~i=\max{N}
\end{cases}.\]
These maps can be realized by right multiplication by the matrices we denote by the same notations (and there is an obvious modification for $h={\max{N}}$):
$$
\begin{matrix}
x_i(t) \;= &
\begin{pmatrix}
    1 & \cdots & 0 & 0 & \cdots & 0 \\
    \vdots & \ddots & \vdots & \vdots & \ddots &  \vdots \\
    0 & \cdots & 1 & t & \cdots & 0 \\
    0 & \cdots & 0 & 1 & \cdots & 0 \\
    \vdots & \ddots & \vdots & \vdots & \ddots & \vdots \\
    0 & \cdots & 0 & 0 & \cdots & 1
\end{pmatrix}
\end{matrix}
\;\;\;\;\;\;\;\;\;\;\;\;\;\;\;\;\;\;\;\;
\begin{matrix}
y_i(t) \;= &
\begin{pmatrix}
    1 & \cdots & 0 & 0 & \cdots & 0 \\
    \vdots & \ddots & \vdots & \vdots & \ddots & \vdots \\
    0 & \cdots & 1 & 0 & \cdots & 0 \\
    0 & \cdots & t & 1 & \cdots & 0 \\
    \vdots & \ddots & \vdots & \vdots & \ddots & \vdots \\
    0 & \cdots & 0 & 0 & \cdots & 1
\end{pmatrix}
\end{matrix}
$$

The maps $x_i,y_i,\inc_{i|j}$ and, if $k+2|\Loop|\leq |N|$ also $\pre_i$  descend to the loopy Grassmannian and to its nonnegative subspace (see \cref{lem:effect_ops} below). The same holds for $\rem_i$ as long as we restrict to the subspace where the $\rem_i$ operation does not harm the rank inequalities. We use the same notations for the Grassmannian-level maps $x_i,y_i,\pre_i,\rem_i$ and write $\inc_i$ for the induced $\inc_{i|j}$. If we perform the operation $\pre,\inc$ or $\rem$ with respect to the indices $I=\{i_1,\ldots,i_h\}$ we write $\pre_I,\inc_I,\rem_I$ respectively.

The last map we will now introduce adds loops.
Let $N$ be an index set and $A,B\in N$ cyclically consecutive indices.
For a $I=\{i_1,\ldots,i_r\}\subseteq N$ denote by
$\Mat^{(i_1\ldots i_r)}_{k,N;\ell},~\Gr^{(i_1\ldots i_r)}_{k,N;\ell} $ the subspaces of $\Mat_{k,N;\ell},~\Gr_{k,N;\ell}$ respectively, which consist of elements in which the columns labeled $I$ are linearly independent. 
We define \[\addL_{AB}:
\Mat^{(AB)}_{k,N;\ell}\to\Gr^{(AB)}_{k-2,N;\ell+1},\qquad(C\vdots (D_1,\ldots,D_\ell))\mapsto(U'\vdots (V'_1,\ldots,V'_\ell,V'_{\ell+1}))\]as follows. Perform row operations on $C$ to obtain $C'$ with $C'_{i,A},C'_{i,B}=0$ for all $i\geq 3.$ Then by the assumption on $C,$ the row span of the upper two rows is two dimensional. Denote it by $V'_{\ell+1}.$ $U'$ is the $k-2$ dimensional vector space spanned by the remaining rows of $C'.$ Note that $U'$ is independent of the choice of the representative, while $V'_{\ell+1}$ is independent of the choice of the representative as a two dimensional vector space \emph{in} $\mathbb{R}^{N}/U'.$ Finally, for $i=1,\ldots, \ell$ we add to the rows of $D'_i$ vectors from $V'_{\ell+1}$ to make its $A,B$ columns zero. $V'_i$ is the linear span of the resulting matrix, and it is easy to see it is well defined, independently of choices, as a subspace of $\R^N/U'.$ 
This map also descends to the loopy Grassmannian and by \cref{lem:effect_ops} below, also to the nonnegative loopy Grassmannian. We use the same notation for the maps in the loopy Grassmannians level.
\end{definition}
\begin{lemma}\label{lem:effect_ops}
The operations of \cref{def:ops} have the following effect on the Pl\"ucker coordinates
\begin{itemize}
\item $\pre_h:$
\[P_I(\pre_h W)=\begin{cases}P_I(W),&h\notin I\\
0,&h\in I
\end{cases}\]
\item $\inc_h:$
\[P_I(\inc_h W)=\begin{cases}P_{I\setminus\{h\}}(W),&h\in I\\
0,&h\notin I
\end{cases}\]
\item $\rem_h:$
\[P_I(\rem_h W)=P_I(W)\]
\item $x_h(t):$
\[P_I(x_h(t)W)=\begin{cases}P_{I}(W),& I\cap\{h,h+1\}\neq\{h+1\}\\P_{I}(W)+tP_{I\cup\{h\}\setminus\{h+1\}},& I\cap\{h,h+1\}=\{h+1\}
\end{cases}\]
\item $y_h(t):$
\[P_I(y_h(t)W)=\begin{cases}P_{I}(W),& I\cap\{h,h+1\}\neq\{h\},\\P_{I}(W)+tP_{I\cup\{h+1\}\setminus\{h\}},& I\cap\{h,h+1\}=\{h\}
\end{cases}\]
\item $\scale_h(t):$
\[P_I(\scale_h(t)W)=\begin{cases}tP_{I}(W),& h\in I,\\P_{I}(W),& h\notin I
\end{cases},\]
\end{itemize}
where $W$ is a loopy vector space and $I$ are sets in the appropriate size for which the left and right hand sides make sense.
We will be more specific regarding the effect of $\addL_{AB}.$ Assume $\addL_{AB}(U\vdots(V_1,\ldots,V_\ell))=(U'\vdots(V'_1,\ldots,V'_{\ell+1}).$ For $\Loop'\subseteq [\ell+1]$ with $k_{\Loop'}=k+2|\Loop'|\leq |N|,$ write $U'_{\Loop'}=U+\sum_{i\in\Loop'}V'_i.$ First, assume $\ell+1\notin\Loop.$ Then $P_I(U'_{\Loop'})=0$ whenever $I\cap\{A,B\}\neq\emptyset,$ and 
\[(P_I(U'_{\Loop'}))_{I\in\binom{N\setminus\{A,B\}}{k_{\Loop'}}}=(P_{I\cup\{A,B\}}(U+\sum_{i\in\Loop'}V_i))_{I\in\binom{N\setminus\{A,B\}}{k_{\Loop'}}}~~\text{as \emph{projective vectors}}.\]
If $\ell+1\in\Loop'$ then
\[(P_I(U'_{\Loop'}))_{I\in\binom{N}{k_{\Loop'}}}=(P_{I}(U+\sum_{i\in\Loop'\cap[\ell]}V_i))_{I\in\binom{N}{k_{\Loop'}}}~~\text{as \emph{projective vectors}}.\]
\end{lemma}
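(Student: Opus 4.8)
The plan is to verify each bullet of \cref{lem:effect_ops} separately, in each case reducing to the matrix level via \cref{def:pluckers_and_cone}: since the Pl\"ucker coordinates of a loopy vector space are defined by picking a representative and all these operations act by right multiplication by an explicit matrix (or, for $\pre_h,\rem_h,\inc_{h|l}$, by an explicit column/row manipulation), it suffices to check the identity for a fixed matrix representative $W$; the case of loopy matrices with $|\Loop'|>0$ columns is literally the same computation applied to $C+\sum_{i\in\Loop'}D_i$, so no separate argument is needed. For $\pre_h$ and $\inc_h$ the claim is immediate from cofactor expansion along the distinguished column/row. For $\rem_h$ there is nothing to prove, since removing a column outside $I$ does not change the submatrix on columns $I$. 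For $\scale_h(t)$ it is multilinearity of the determinant in the column indexed $h$.

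For $x_h(t)$ and $y_h(t)$ I would argue as follows. Right multiplication by $x_h(t)$ replaces column $h+1$ by (column $h+1$) $+\,t\cdot$(column $h$) and leaves all other columns fixed (I will treat the sign twist at $h=\max N$ by the same computation with $t$ replaced by $(-1)^{k-1}t$, which is exactly the coefficient appearing in \cref{def:ops}, so the stated formula — written with plain $t$ — is understood with that convention, or one simply notes the formula is stated for the untwisted columns). Hence $P_I$ is unchanged whenever $h+1\notin I$. When $h+1\in I$ and $h\notin I$, expanding the determinant linearly in the $(h+1)$-st column gives $P_I(W) + t\,P_{I\cup\{h\}\setminus\{h+1\}}(W)$; and when $h+1\in I$ and $h\in I$, the added term is a determinant with a repeated column, hence zero, so again $P_I$ is unchanged — this is precisely the case split ``$I\cap\{h,h+1\}\neq\{h+1\}$'' versus ``$=\{h+1\}$''. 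The argument for $y_h(t)$ is identical with the roles of $h$ and $h+1$ interchanged. One should double-check that these operations indeed descend to $\Gr_{k,N;\Loop}$ and preserve nonnegativity as asserted at the end of \cref{def:ops}; for the present lemma, however, only the Pl\"ucker formula is needed, and that holds at the representative level regardless.

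For $\addL_{AB}$ the computation is less mechanical. Fix a representative $C$ of $(U\vdots(V_1,\ldots,V_\ell))$ and perform the row operations described in \cref{def:ops} to get $C'$ whose rows $3,\ldots,k$ vanish in columns $A,B$; these row operations multiply every minor by $\det M$ for a fixed invertible $M$, hence do not change projective Pl\"ucker vectors, so I may assume $C=C'$ from the start. Now I split on whether $\ell+1\in\Loop'$. If $\ell+1\notin\Loop'$, then $U'_{\Loop'}$ is spanned by rows $3,\ldots,k$ of $C$ together with the rows of the $D_i$, $i\in\Loop'$, all modified to vanish in columns $A,B$ (adding multiples of the first two rows of $C$, i.e. of $V'_{\ell+1}$, which again does not affect Pl\"ucker vectors); so $U'_{\Loop'}$ is represented by a $k_{\Loop'}\times N$ matrix all of whose rows are zero in columns $A,B$, whence $P_I(U'_{\Loop'})=0$ for $I\cap\{A,B\}\neq\emptyset$, and for $I\subseteq N\setminus\{A,B\}$ the minor on columns $I$ equals (up to the fixed overall scalar) the minor of the original $(k+2|\Loop'|)\times N$ matrix on columns $I\cup\{A,B\}$, by Laplace expansion along columns $A,B$ using that rows $1,2$ of $C$ are (after the reduction) the only rows supported there and they can be taken to be $\ee_A,\ee_B$ in those two coordinates. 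If $\ell+1\in\Loop'$, then $U'_{\Loop'}$ contains $V'_{\ell+1}$, i.e. the span of the first two rows of $C$, which in columns $A,B$ is all of the $\{A,B\}$-plane; modulo $U'+V'_{\ell+1}$ the spaces $V'_i$ for $i\in\Loop'\cap[\ell]$ are exactly the images of $V_i$, so $U'_{\Loop'}=U+\sum_{i\in\Loop'\cap[\ell]}V_i$ as a subspace of $\R^N$ (the summand $V'_{\ell+1}$ sits inside $U$'s complement but the total span is unchanged because we only performed row operations and added back multiples of the first two rows), giving the second displayed identity. The main obstacle will be bookkeeping the sign: Laplace expansion along the two columns $A,B$ produces a sign $(-1)^{\,\mathrm{pos}(A)+\mathrm{pos}(B)+1}$ depending on the positions of $A,B$ in the linear order, and one has to check this is a \emph{global} scalar independent of $I$ (so that the projective-vector statement is exactly as written) — it is, because $A,B$ are cyclically consecutive, so their positions relative to any $I\subseteq N\setminus\{A,B\}$ contribute the same parity; verifying this carefully, and matching it with the orientation conventions implicit in the $(-1)^{k-1}$ twists elsewhere in \cref{def:ops}, is the only genuinely delicate point.
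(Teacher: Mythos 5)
Your proposal is essentially the direct computation the paper alludes to (its only stated ``proof'' is ``a direct computation''), and the bulk of it is correct: the reduction to matrix representatives, multilinearity for $\scale_h$, cofactor expansion for $\pre_h$ and $\inc_h$, linearity in a single column for $x_h,y_h$, and the $\addL_{AB}$ analysis via Laplace expansion along columns $A,B$ with the observation that $A,B$ being (cyclically) consecutive makes the resulting sign $I$-independent.

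One point, however, is handled incorrectly rather than merely loosely: the $(-1)^{k-1}$ twist in $x_h,y_h$ at $h=\max N$. You write that the stated formula ``is understood with that convention, or one simply notes the formula is stated for the untwisted columns,'' suggesting the lemma either needs the coefficient $t$ replaced by $(-1)^{k-1}t$ or only covers $h<\max N$. Neither is the case: the formula holds exactly as written, with plain $t$, at $h=\max N$ as well. The reason is a sign cancellation you did not carry out. At $h=\max N$ the modified column sits in position $1$ (it is column $\min N$), while the set $I\cup\{\max N\}\setminus\{\min N\}$ lists $\max N$ in position $|I|$; cycling the column there costs $(-1)^{|I|-1}$. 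Since $|I|\equiv k\pmod 2$ (as $|I|=k$ or $k+2|\Loop'|$), this sign is $(-1)^{k-1}$, which precisely cancels the $(-1)^{k-1}$ baked into \cref{def:ops}, yielding $P_I(W)+tP_{I\cup\{h\}\setminus\{h+1\}}(W)$ with plain $t$. This is, in fact, the entire \emph{purpose} of the $(-1)^{k-1}$ in the definition: it makes the Plücker formula uniform across the cyclic boundary. You should replace the parenthetical with this computation. The rest of the argument is fine.
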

The proof is a direct computation.

\subsubsection{One step into one loop}
The Pl\"ucker coordinates satisfy certain relations. While we can write them for the most general loopy setting, the notations are lighter in the one loop case, which is the content of this text.
\begin{prop}[\cite{weyman2003cohomology}, Proposition 3.1.6]\label{prop:plucker_rels_flag}
For every $r,s\in\{k,k+2\},~I\in\binom{[n]}{r-1},~J=\{j_1,\ldots,j_{s+1}\}\in\binom{[n]}{s+1}$, the following relation holds for every loopy vector space $U\vdots V:$\[\sum_{h=1}^{s+1} (-1)^hP_{I\cup\{j_h\}}P_{J\setminus \{j_h\}}=0.\] 
    
\end{prop}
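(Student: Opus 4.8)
The plan is to deduce every instance of the relation from a single structural fact, the flag containment $U\subseteq U+V$, by a short computation in the exterior algebra of $\R^n$. First I would fix a loopy matrix representative $(C\vdots D)$ of $U\vdots V$ and set $M=C+D$, the $(k+2)\times n$ matrix obtained by stacking the two rows of $D$ on top of the $k$ rows of $C$; by construction $U$ is the rowspan of $C$, $U+V$ is the rowspan of $M$, and $U\subseteq U+V$. Every monomial on the left-hand side carries exactly one Pl\"ucker factor of type $r$ and one of type $s$, so by \eqref{eq:effect_of_grp_on_plckr} the whole identity rescales by a single nonzero factor under a change of representative; hence it is enough to prove it for this $M$, and the argument below uses only the inclusion $U\subseteq U+V$.

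For $r\in\{k,k+2\}$ write $R_r$ for $U$ when $r=k$ and for $U+V$ when $r=k+2$, and let $\pi_r\in\bigwedge^{r}\R^n$ be the wedge of the rows of the matrix representing $R_r$ (that is, of $C$ or of $M$); then the coefficient of $e_S=e_{s_1}\wedge\cdots\wedge e_{s_r}$ in $\pi_r$ is exactly the Pl\"ucker coordinate $P_S$ for every $S=\{s_1<\cdots<s_r\}$. Assume $r\le s$, so that $(r,s)\in\{(k,k),(k,k+2),(k+2,k+2)\}$ and $R_r\subseteq R_s$; this is the range of the Proposition carried by the flag $U\subseteq U+V$, and when $r=s$ the identity is the classical Grassmann--Pl\"ucker quadratic relation for the single matrix $C$ or $M$, which one could instead simply cite. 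Set $\xi=e_{i_1}^{\ast}\wedge\cdots\wedge e_{i_{r-1}}^{\ast}$ and $\zeta=e_{j_1}^{\ast}\wedge\cdots\wedge e_{j_{s+1}}^{\ast}$ for $I=\{i_1<\cdots<i_{r-1}\}$ and $J=\{j_1<\cdots<j_{s+1}\}$, and form the interior products $\iota_{\xi}\pi_r\in\R^n$ and $\iota_{\pi_s}\zeta\in(\R^n)^{\ast}$ (of exterior degree $r-(r-1)=1$ and $(s+1)-s=1$ respectively). Expanding in the standard bases, the coefficient of $e_m$ in $\iota_{\xi}\pi_r$ equals $\pm P_{I\cup\{m\}}$ for $m\notin I$ and $0$ for $m\in I$, while the coefficient of $e_m^{\ast}$ in $\iota_{\pi_s}\zeta$ equals $\pm P_{J\setminus\{m\}}$ for $m\in J$ and $0$ otherwise, with explicit Koszul signs. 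Two observations finish it: (a) $\iota_{\xi}\pi_r$ lies in $R_r$, since contracting a decomposable $r$-vector $v_1\wedge\cdots\wedge v_r$ against an $(r-1)$-form returns a linear combination of $v_1,\dots,v_r$; and (b) $\iota_{\pi_s}\zeta$ annihilates $R_s$, since for $w\in R_s$ the adjunction identity gives $\langle\iota_{\pi_s}\zeta,\,w\rangle=\pm\langle\zeta,\,\pi_s\wedge w\rangle$ and $\pi_s\wedge w=0$, because $\pi_s$ is decomposable with span $R_s$. As $R_r\subseteq R_s$, (b) shows $\iota_{\pi_s}\zeta$ also kills $R_r$, so pairing it with $\iota_{\xi}\pi_r\in R_r$ gives $0$; reading that pairing off coordinatewise produces $\sum_{m}(\pm)\,P_{I\cup\{m\}}\,P_{J\setminus\{m\}}=0$, the sum running over $m\in J\setminus I$.

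What remains is bookkeeping. I would check that the product of the two Koszul signs attached to the index $m=j_h$ is $(-1)^h$ up to a sign independent of $h$, which converts the coordinate identity into exactly $\sum_{h=1}^{s+1}(-1)^h P_{I\cup\{j_h\}}P_{J\setminus\{j_h\}}=0$; here the terms with $j_h\in I$ are absent on the right, in agreement with the usual convention that $P_{I\cup\{j_h\}}=0$ when $j_h\in I$ and with the fact that such indices contribute nothing to the pairing on the left, and instances in which $r-1$ or $s+1$ exceeds $n$ are vacuous. I expect this sign matching to be the only delicate step: the conceptual core — a vector of $R_r$ paired against a functional that annihilates $R_r$ must vanish — is immediate once the two interior products are identified with the Pl\"ucker data.
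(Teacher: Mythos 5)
The paper does not prove this proposition; it only cites Weyman, so there is no in-text argument to compare against, and the review below addresses the proposal on its own terms. Your argument is correct and is, in essence, the standard one: encode the flag inclusion $R_r \subseteq R_s$ through the decomposable multivectors $\pi_r, \pi_s$, contract $\pi_r$ by an $(r-1)$-form $\xi$ to produce a vector in $R_r$ (observation (a)), contract $\pi_s$ against the $(s+1)$-form $\zeta$ to produce a functional that annihilates $R_s \supseteq R_r$ (observation (b), via adjunction and $\pi_s\wedge w = 0$ for $w\in R_s$), and read off the vanishing pairing coefficientwise. Both observations are correct and they carry the entire proof, including the $r=s$ case, which recovers the classical Grassmann--Pl\"ucker quadrics. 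You also correctly restrict to $r\le s$; as literally printed the proposition allows $(r,s)=(k+2,k)$, for which the identity is false (take $k=0$, $n=4$, $I,J$ singletons: the left side is a single nonzero Pl\"ucker coordinate times $P_\emptyset=1$). That restriction is exactly what the flag inclusion $U\subseteq U+V$ supplies and is implicit in the cited source; making it explicit is the right call.

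The one place where the plan is slightly optimistic is the "bookkeeping" paragraph. With $P_{I\cup\{j_h\}}$ read as the Pl\"ucker coordinate of the \emph{sorted} set $I\cup\{j_h\}$, the pairing you set up produces, at the index $m=j_h$, the sign $(-1)^{(r-p(m))+(s+1-h)}$ where $p(m)$ is the position of $m$ in the sorted $I\cup\{m\}$; the factor $(-1)^{r-p(m)}$ is not constant in $m$ unless all of $J\setminus I$ sits on one side of $I$, so it does \emph{not} cancel into a global sign. Concretely, in $\Gr_{2,4}$ with $I=\{3\}$, $J=\{1,2,4\}$ and sorted-set Pl\"uckers, $\sum_h(-1)^h P_{I\cup\{j_h\}}P_{J\setminus\{j_h\}}=-P_{13}P_{24}+P_{23}P_{14}-P_{34}P_{12}=-2P_{12}P_{34}$, which is not identically zero. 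The sign $(-1)^h$ is correct precisely when $P_{I\cup\{j_h\}}$ is read with the columns in the order $(i_1,\dots,i_{r-1},j_h)$, i.e. $j_h$ appended last and the determinant alternating in the column labels — which is exactly what your wedge computation produces before any resorting. So your suspicion that the signs are the only delicate step is right, but carrying them out would reveal that the $(-1)^h$ in the statement presupposes the ordered-tuple convention from Weyman, which the paper's set notation $I\cup\{j_h\}$ glosses over. This is a defect in the statement, not in your proof; once the convention is fixed your argument goes through cleanly.
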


The next topological lemma will be useful in the study of the $1$-loop amplituhedron.
\begin{lemma}\label{lem:connected}
Fix nonnegative integers $n,k$ with $n\geq k+2.$ $\Gr_{k,n;1}^>$ is a connected submanifold of the real two step flag variety $\Gr_{k,n;1},$ of full dimension $k(n-k)+2(n-k-2).$  
$\Gr_{k,n;1}^\geq$ is its closure, it is compact and connected.\end{lemma}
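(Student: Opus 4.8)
The plan is to reduce the statement to well-understood facts about the positive parts of ordinary Grassmannians via the submersion maps $\PC$ and $\PD$. First I would observe that a loopy pair $(U \vdots V)\in\Gr_{k,n;1}$ lies in $\Gr_{k,n;1}^{>}$ precisely when $U=\PC(U\vdots V)\in\Gr_{k,n}^{>}$ and $U+V=\PD(U\vdots V)\in\Gr_{k+2,n}^{>}$, since by \cref{def:pluckers_and_cone} the positivity of a representative is exactly the positivity of the two ordinary maximal-minor families $P_I(C)$ and $P_I(C+D)$ (the conditions $\Loop'=\emptyset$ and $\Loop'=\{1\}$ in \eqref{eq:ranks}). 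Thus $\Gr_{k,n;1}^{>}$ is the preimage of $\Gr_{k,n}^{>}\times\Gr_{k+2,n}^{>}$ under the map $(\PC,\PD)$, and since both $\PC$ and $\PD$ are submersions by \cref{lem:submersion2Grass}, $\Gr_{k,n;1}^{>}$ is an open submanifold of $\Gr_{k,n;1}$.

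To compute its dimension and establish connectedness I would work with a fixed affine chart. Fix $U_0 \in \Gr_{k,n}^{>}$; the fiber $\PC^{-1}(U_0)\subseteq \Gr_{k,n;1}$ is naturally identified with $\Gr_2(\R^n/U_0)\cong \Gr_{2,n-k}$, and under this identification the positivity condition on $U_0+V$ cuts out a nonempty open subset (nonempty because a positive loopy pair exists — e.g. take a Vandermonde-style representative), which by dimension count inside $\Gr_{2,n-k}$ contributes $2(n-k-2)$ to the total dimension; together with $\dim\Gr_{k,n}=k(n-k)$ this gives $k(n-k)+2(n-k-2)$, the full dimension of $\Gr_{k,n;1}$, confirming $\Gr_{k,n;1}^{>}$ is open and dense. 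For connectedness, I would argue that $\Gr_{k,n}^{>}$ is connected (it is a cell, being homeomorphic via Postnikov's parametrization to an open ball — or one can use the action of positive upper-triangular torus/unipotent operations $x_i(t),y_i(t),\scale_i(t)$ from \cref{def:ops} to contract it), and that each positive fiber is connected: the fiber over $U_0$ sits inside $\Gr_{2,n-k}$, and the positivity condition $P_I(U_0+V)>0$ describes (after choosing a representative) a region cut out by a family of linear-fractional inequalities in the $2(n-k)$ affine coordinates of $V$, which I would show is convex, hence connected, in a suitable chart — alternatively, realize the fiber as a positive cell in its own right and invoke the same cell structure. A fibration with connected base and connected fibers has connected total space.

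For the closure statement: $\Gr_{k,n;1}^{\geq}$ is by definition the set of loopy pairs admitting a nonnegative representative, equivalently the preimage under $(\PC,\PD)$ of $\Gr_{k,n}^{\geq}\times\Gr_{k+2,n}^{\geq}$ intersected with $\Gr_{k,n;1}$ (the full-rank locus). Compactness: $\Gr_{k,n;1}^{\geq}$ is a closed subset of the compact flag variety $\Gr_{k,n;1}$ — closedness follows because nonnegativity of Plücker coordinates is a closed condition and the two projections are continuous. That it equals the closure of $\Gr_{k,n;1}^{>}$ requires showing every nonnegative loopy pair is a limit of positive ones; here I would perturb a given nonnegative representative $(C\vdots D)$ by, say, multiplying on the right by a matrix close to the identity that makes all minors of $C$ and of $C+D$ strictly positive (for instance conjugating by a small positive torus element together with a Vandermonde-type perturbation, exactly as in the classical argument that $\Gr_{k,n}^{\geq}=\overline{\Gr_{k,n}^{>}}$), then taking the limit; continuity of $(\PC,\PD)$ and openness give that the perturbed pairs lie in $\Gr_{k,n;1}^{>}$. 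Connectedness of $\Gr_{k,n;1}^{\geq}$ then follows from connectedness of its dense subset $\Gr_{k,n;1}^{>}$.

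The main obstacle I anticipate is the connectedness (and convexity/cell-likeness) of the positive fibers of $\PC$ together with the density claim $\Gr_{k,n;1}^{\geq}=\overline{\Gr_{k,n;1}^{>}}$: the loopy positivity is \emph{Plücker} positivity, not the Lusztig positivity of the flag variety (as the text emphasizes, citing \cite{bloch2023two}), so I cannot directly quote the standard total-positivity results for flag varieties. I would need either an explicit positive parametrization of $\Gr_{k,n;1}^{>}$ (a loopy analogue of Postnikov's parametrization, perhaps assembled from the operations $x_i,y_i,\addL$ of \cref{def:ops}) or a careful self-contained convexity argument in affine charts. Everything else — openness, the dimension count, compactness, closedness — is routine given \cref{lem:submersion2Grass} and \cref{lem:effect_ops}.
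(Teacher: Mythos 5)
Your strategy is genuinely different from the paper's: you fiber $\Gr_{k,n;1}^{>}$ over the \emph{positroid} base $\Gr_{k,n}^{>}$ via $\PC$, with fibers being the positive locus in $\Gr_2(\R^n/U_0)\cong\Gr_{2,n-k}$, while the paper inducts on $n$ by fibering over $\Gr_{k,n-1;1}^{>}$ via the ``forget the last column'' map. That choice of fibration is the crux, and it is where your proposal has a real gap — one you yourself flag, correctly, as the main obstacle.

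The paper's fibration has fibers inside $\R^{k+2}$ (the possible $n$th columns), and each positivity constraint involving the new column is genuinely \emph{linear} in its entries, because a $(k{+}2)\times(k{+}2)$ minor expanded along the last column is linear in that column. Hence each fiber is an intersection of open half-spaces, and convexity is immediate. Your fibers live in $\Gr_{2,n-k}$, and the constraints $P_I(U_0+V)>0$, while linear in the Plücker coordinates $\lr{D}_J$ of a representative $D$ of $V$ (by Laplace expansion along the two rows of $D$), are \emph{quadratic} in the entries of $D$ in any affine chart, since $\lr{D}_{j_1j_2}=D_{1,j_1}D_{2,j_2}-D_{1,j_2}D_{2,j_1}$. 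So the region is not cut out by linear-fractional inequalities as you claim, and the convexity argument in an affine chart does not go through as stated. Your fallback of ``realize the fiber as a positive cell in its own right'' also needs substantiation: the fiber is a linear slice of $\Gr_{k+2,n}^{>}$ (those $W$ with $U_0\subset W$), not a positroid cell in $\Gr_{2,n-k}$, and connectedness of such slices is not a standard quotable fact in this setting. There is a second, smaller issue in the same place: for the ``connected base, connected fibers'' conclusion you also need \emph{every} fiber over $U_0\in\Gr_{k,n}^{>}$ to be nonempty, not merely that some positive pair exists; the paper handles the analogous nonemptiness by an explicit one-column construction with coefficients $\epsilon_i$, and you would need a corresponding argument. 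Your closure argument is also less concrete than the paper's (which pushes any nonnegative point into the positive part by small $x_i(t)$ moves, using \cref{lem:effect_ops}), but that part is repairable along the lines you sketch. The openness, dimension count, and compactness steps are fine and essentially agree with the paper.
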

\begin{proof}
Since the defining inequalities of $\Gr_{k,n;1}^>$ are strict, it is an open set.
For $0<x_1<\ldots<x_n$ write $v_i=(x_1^i,\ldots,x^i_n).$ Then $U\vdots V\in \Gr_{k,n;1}^>$, where $U=\Span(v_1,\ldots,v_k),$ and $V=\Span(v_{k+1},v_{k+2})$, by the well known properties of Vandermonde determinants. Thus, $\Gr_{k,n;1}^>\neq\emptyset.$

For connectedness we will use induction on $n.$ For $n=k+2,~\Gr_{k,n;1}^>\simeq\Gr_{k,k+2}^>$ which is known to be connected.
Note that for $n>k+2$ we have a forgetful map
\[\Gr_{k,n;1}^>\to\Gr_{k,n-1;1}^>\]obtained by forgetting the last column. We will show that the fiber of this map over every point of $\Gr_{k,n-1;1}^>$ is connected and non empty.
Take $U\vdots V\in \Gr_{k,n-1;1}^>,$ and fix a loopy matrix representative $C\vdots D$ for it. Extending the representative to a representative of a point in $\Gr^>_{k,n;1}$ amounts to picking an $n$th column $(a_1,\ldots,a_k\vdots a_{k+1}, a_{k+2}),$ so that all minors of the loopy matrix obtained from appending this column as the last column of the loopy matrix whose first $n-1$ columns are $C\vdots D,$ are positive. Each positivity constraint either does not involve the $n$th column, or is a linear inequality in its entries. Thus, the fiber is the intersection of open half spaces of $\mathbb{R}^{k+2}$ defined by these inequalities. Such an intersection, if non empty, is convex, hence connected.
We are left with showing the fiber is non empty.
Fix coefficients $\epsilon_1,\ldots,\epsilon_{n-k}$  to be specified later, and write \[C^n\vdots D^n = \sum_{i=1}^{n-k}(-1)^{i-1}\epsilon_i (C^{n-i}\vdots D^{n-i}),\]and let $\tilde{C}\vdots \tilde{D}$ be the matrix obtained from $C\vdots D$ by adding this column on the right. Then for every $I\in\binom{[n-1]}{k-1},$ 
\begin{equation}\label{eq:fiber_non_empty}\lr{\tilde{C}}_{I\cup\{n\}}=\sum_{j\in [n-1]\setminus (I\cup[k-1])}(-1)^{n-j-1}\epsilon_{n-j}\lr{C}_{I\cup \{j\}}.\end{equation}Note that for the maximal $j$ in the summation $(-1)^{n-j-1}\lr{C}_{I\cup \{j\}}>0,$ since $j\notin I$ but $j+1,\ldots,n-1\in I.$ We have similar expansion for $C+D$ minors. Thus, we can pick the coefficients $\epsilon_i$ in a way that the summand $\epsilon_i/\epsilon_{i+1}$ is large enough so that all expressions \eqref{eq:fiber_non_empty}, and the analogous expressions for the minors of $C+D$ are positive. This shows that the fiber is nonempty.

The topological closure $\overline{\Gr_{k,n;1}^>}$ of $\Gr_{k,n;1}^>$, which must be connected, since $\Gr_{k,n;1}^>$ is, and compact since the two-step flag variety $\Gr_{k,n;1}$ is, is clearly contained in $\Gr_{k,n;1}^\geq.$
Now, for an arbitrary point in $U\vdots V\Gr_{k,n;1}^\geq,$ note that applying a certain sequences of $x_i$ operators with arbitrarily small positive parameters on $U\vdots V\in$ results in $U'\vdots V'\in\Gr_{k,n;1}^>.$ Thus, $\Gr_{k,n;1}^\geq\subseteq\overline{\Gr_{k,n;1}^>},$ hence they are equal.
\end{proof}
\begin{rmk}
    Since $\Gr^>_{k,n}$ is contractible \cite{postnikov}, and the fibers from the above analysis are also contractible, simple induction shows that $\Gr^>_{k,n;1}$ itself is contractible.
\end{rmk}
\subsection{The loop amplituhedron}
\begin{definition}[Arkani-Hamed and Trnka \cite{arkani-hamed_trnka}]
Let $n,k,\ell$ be nonnegative integers satisfying $k+4\leq n.$
Let $Z$ be a positive $n\times (k+4)$ matrix.
The \emph{loop amplituhedron map} $\tZ:\Gr_{k,n;\ell}^\geq\to\widetilde\Gr_{k,k+4;\ell}$ is defined, in matrix representatives, by 
\[(C\vdots(D_1,\ldots,D_\ell))\mapsto (C\vdots(D_1,\ldots,D_\ell))Z=(CZ\vdots(D_1Z,\ldots,D_\ell Z)).\]The \emph{loop amplituhedron}
$\Ampl_{n,k,4}^\ell(Z)$ is the image of this map.

The $\ell=0-$amplituhedron is called the \emph{tree amplituhedron}. The $\ell=1-$amplituhedron is called the \emph{one loop amplituhedron}.    
\end{definition}
\begin{thm}\label{thm:ampli_well_def}For $\ell=0,1$ the amplituhedron is a compact connected subspace of $\Gr_{k,k+4;\ell}.$
\end{thm}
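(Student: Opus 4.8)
The plan is to exhibit $\Ampl^\ell_{n,k,4}(Z)$ as the continuous image of a compact connected space, so that compactness and connectedness are automatic; the only real content is that $\tZ$ is a well-defined continuous map whose image actually lies in the full-rank locus $\Gr_{k,k+4;\ell}$ and not merely in $\widetilde\Gr_{k,k+4;\ell}$.

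\textbf{Well-definedness.} On matrix representatives $\tZ$ is the linear map $(C\vdots(D_i)_i)\mapsto(CZ\vdots(D_iZ)_i)$, which commutes with the $\GL_{k;\ell}$-action (right multiplication by $Z$ commutes with the block left-multiplications of \cref{def:group}), so it descends to $\PROJ$-classes. To see that the result lands in $\Gr_{k,k+4;\ell}$ one checks that $(CZ\vdots(D_iZ)_i)$ is a full-rank loopy matrix whenever $(C\vdots(D_i)_i)\in\Mat^{\ge}_{k,n;\ell}$. For each $\Loop'\subseteq\Loop$ the matrix $C_{\Loop'}$ is nonnegative of rank $k+2|\Loop'|$ by definition of $\Mat^{\ge}_{k,n;\ell}$, and $C_{\Loop'}Z=\PL_{\Loop'}$ of the candidate image (cf.\ \cref{lem:submersion2Grass}). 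So it suffices to invoke the standard well-definedness input for the tree amplituhedron map (\cite{arkani-hamed_trnka,even2021amplituhedron}): if $M$ is a nonnegative $m\times n$ matrix of rank $m$ and $Z$ an $n\times q$ positive matrix with $q\ge m$, then $\rk(MZ)=m$. (This is exactly where positivity of $Z$ enters: a nonzero $w$ in the row span of $M$ with $wZ=0$ is a linear dependence among the rows of $Z$, which by total positivity of $Z$ forces $w$ to change sign at least $q$ times, contradicting that a nonzero vector in a nonnegative $m$-dimensional subspace changes sign at most $m-1<q$ times.) Applying this with $q=k+4$ and $m=k+2|\Loop'|\le k+4$ for every $\Loop'\subseteq\Loop$ — i.e.\ $m=k$ when $\ell=0$, and $m\in\{k,k+2\}$ when $\ell=1$ — shows $\tZ(C\vdots(D_i)_i)$ satisfies the rank conditions of $\Mat^\circ_{k,k+4;\ell}$, hence represents a point of $\Gr_{k,k+4;\ell}$.

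\textbf{Continuity.} Since $\PROJ\colon\Mat^\circ_{k,k+4;\ell}\to\Gr_{k,k+4;\ell}$ is a submersion it is open and admits local continuous sections; composing a local section of the source projection, the polynomial map $(C\vdots(D_i)_i)\mapsto(C\vdots(D_i)_i)Z$, and the target projection realizes $\tZ$ as continuous. (Equivalently, on the cones the Pl\"ucker coordinates transform by the Cauchy--Binet formulas, which are polynomial, so the induced map of projective coordinate vectors, hence of loopy Grassmannians, is continuous.)

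\textbf{Conclusion.} The domain $\Gr^{\ge}_{k,n;\ell}$ is compact and connected: for $\ell=0$ it is the nonnegative Grassmannian $\Grk$, the closure of its top cell, hence compact and connected \cite{postnikov}; for $\ell=1$ it is compact and connected by \cref{lem:connected}. Therefore $\Ampl^\ell_{n,k,4}(Z)=\tZ(\Gr^{\ge}_{k,n;\ell})$ is compact and connected as the continuous image of such a space, and being a compact subset of the Hausdorff manifold $\Gr_{k,k+4;\ell}$ it is closed there, which yields the claim. The main obstacle is the well-definedness step: that the image lands in the \emph{full-rank} locus is precisely the point where positivity of $Z$, via the sign-variation input above, is indispensable; once that is granted the rest is soft point-set topology together with \cref{lem:connected}.
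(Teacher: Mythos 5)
Your proof is correct and follows essentially the same route as the paper's: well-definedness of $\tZ$ via the rank-preservation result of Karp (which you unpack with the sign-variation sketch the paper merely cites), continuity of $\tZ$, and then compactness and connectedness as the continuous image of $\Gr^{\ge}_{k,n;\ell}$, invoking Postnikov for $\ell=0$ and \cref{lem:connected} for $\ell=1$. The paper is slightly more terse—it simply applies Karp's result to $U$ and $U+V$ separately—but the decomposition of the argument and the inputs used are the same.
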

\begin{proof}
That the amplituhedron map is independent of choices is straightforward. 
In \cite[Section 4]{karp2017sign} Karp shows that the multiplying a matrix representative of $V\in\Gr_{k,n}^\geq$ by a positive $n\times(k+m)-$matrix $Z$ yields a $k$ dimensional subspace of $\R^{k+m},$ as long as $k+m\leq n.$

The case $\ell=0$ is well known, and follows from \cite[Section 4]{karp2017sign}, together with the compactness and connectedness of $\Gr_{k,n}^\geq$ \cite{postnikov}.

For $\ell=1$ we apply Karp's result both to $U$ and to $U+V,$ where $U\vdots V\in\Gr_{k,n;1}^\geq,$ to deduce that the image lies in $\Gr_{k,k+4}^1.$ The compactness and connectedness then follows from \cref{lem:connected}.
\end{proof}
\begin{rmk}
    While there are many works which study the tree amplituhedron in the mathematical literature, there are much less works which study the loopy case.
    The case $k=0,\ell=1$ is equivalent to the $m=2$ tree amplituhedron. This space has been studied a lot, see, e.g., \cite{BaoHe,PSW}. \cite{bai2016amplituhedron} studies the $k=\ell=1$ loop amplituhedron, and its cells.
\end{rmk}
\section{The BCFW cells}\label{sec:bcfw_cells}
The protagonists of this text are the recursively defined BCFW cells. These cells are interesting since they were implicitly conjectured in \cite{arkani-hamed_trnka} to \emph{tile} the amplituhedron, see \cref{def:tiling} below. See also \cite{Bai:2014cna}. This conjecture was a central motivation for introducing the amplituhedron in the study of scattering amplitudes of planar $\mathcal{N}=4$ SYM theory.
\subsection{The ingredients of the BCFW recursion}
\begin{nn}\label{nn:bcfwmap}
Fix elements $1 \leq a\lessdot b< c\lessdot d\lessdot n$ in $[n]$, and let $N_L= \{1, \dots, a, b, n\}$ and $N_R=\{b, \dots, c, d, n\}$. Fix $k \leq n$ and two nonnegative integers $k_L \leq |N_L|$ and $ k_R\leq |N_R|$ such that $k_L + k_R +1=k$. Finally, fix  $\ell_L,\ell_R,\ell\in\{0,1\}$ with $\ell=\ell_L+\ell_R.$
In what follows, whenever we refer to this notation, $\ell_L,\ell_R,\ell\in\{0,1\},$ unless explicitly specified differently. 
\end{nn}

\begin{definition}[BCFW Map]\label{def:bcfw-map} 
Using \cref{nn:bcfwmap},
the \emph{BCFW map} is the rational map
\[\mbcfw\;
:\;{\Mat}_{k_L, N_L;\ell_L}\times\; \R^5\;\times\;{\Mat}_{k_R,N_R,\ell_R}\dashrightarrow\Mat_{k,n;\ell},\]\[((C_L\vdots\{D_p\}_{p\in{\Loop}_L}\}), (\alpha, \beta, \gamma, \delta,\varepsilon), (C_R\vdots\{D_q\}_{q\in\Loop_R}))\mapsto(C\vdots\{D'_r\}_{r\in\Loop_L\cup\Loop_R})\]
 where
\begin{itemize}
\item \[C = C'_L+v+C'_R,\]
defined by 
\[C_L'=\pre_{N_R\setminus\{a,b,n\}}.y_a(\frac{\alpha}{\beta}).C_L,\qquad\qquad C_R'=\pre_{N_L\setminus\{a\}}.y_c(\frac{\gamma}{\delta}).y_d(\frac{\delta}{\varepsilon}).C_R,\]
and \[v=\alpha \ee_a+\beta \ee_b+(-1)^{k_R}\gamma \ee_c+(-1)^{k_R}\delta \ee_d +(-1)^{k_R}\varepsilon \ee_n\in \R^n.\]

\item For $r\in\Loop_L$ 
\[D'_r=\pre_{N_R\setminus\{a,b,n\}}.y_a(\frac{\alpha}{\beta}).D_r\]
\item For $r\in\Loop_R$ 
\[D'_r=\pre_{N_L\setminus\{a\}}.y_c(\frac{\gamma}{\delta}).y_d(\frac{\delta}{\varepsilon}).D_r\]
\end{itemize}
Schematically we obtain the loopy matrix
\begin{equation*}\label{eq:path-mtx} \begin{bmatrix}
    (D_{L})_1& \cdots & (D_{L})_{a-1} & (D'_{L})_a& (D_{L})_b&  0& \cdots& 0& 0&  0  & 	(-1)^{k_R+1} (D_{L})_n\\
    	0 & \cdots& \cdots&0 & (D_{R})_b & (D_{R})_{b+1}& \cdots&(D_{R})_{c-1}&(D'_{R})_c & (D'_{R})_d & (D_{R})_n\\
    \hline
	(C_L)_1& \cdots & (C_L)_{a-1} & (C'_L)_a& (C_L)_b&  0& \cdots& 0& 0&  0  & 	(-1)^{k_R+1} (C_L)_n\\
	0 & \cdots & 0 & \alpha & \beta & 0& \cdots &0 & (-1)^{k_R} \gamma& (-1)^{k_R}\delta& (-1)^{k_R}\varepsilon\\
	0 & \cdots& \cdots&0 & (C_R)_b & (C_R)_{b+1}& \cdots&(C_R)_{c-1}&(C'_R)_c & (C'_R)_d & (C_R)_n\\
	
\end{bmatrix}\end{equation*}
where $D_R,D_L$ are the union of rows of $D_i$ for $i\in\Loop_L,i\in\Loop_R$ respectively.

If the left or right factor is trivial, we omit it from the notations.
In particular, in the end case $a=1$ the left factor must trivial, and we refer to the map as the \emph{upper BCFW map}
\[\mbcfw\;
: \R^5\;\times\;{\Mat}_{k-1,\{2,\ldots,n\},\ell_R}\;\dashrightarrow\;\Mat_{k,n;\ell_R}.\]
Similarly, the \emph{lower BCFW map} is obtained when $b=n-3.$ In this case the right part must be trivial, and we write
\[\mbcfw\;
: {\Mat}_{k-1,\{1,\ldots,n-3,n\},\ell_L}\times\;\R^5\;\dashrightarrow\;\Mat_{k,n;\ell_L}\]
For later purposes we shall refer to the markers $a,b,c,d,n$ as the \emph{BCFW chord's support}.

\end{definition}
\begin{prop}\label{prop:BCFW_and_pos}
If $(\ell_L,\ell_R)\in\{(0,0),(0,1),(1,0)\},$ and $k_L\leq |N_L|-2,~k_R\leq |N_R|-2,$ then
the map $\mbcfw$ descends to a rational map (denoted using the same notations)
\[\mbcfw\;
:\;{\Gr}_{k_L, N_L;\ell_L}\times\; \Gr_{1,5}\;\times\;{\Gr}_{k_R,N_R,\ell_R}\;\dashrightarrow\;\Gr_{k,n;\ell}.\]
\[\mbcfw\left(\text{Domain}(\mbcfw)\cap({\Gr}^{\geq}_{k_L, N_L;\ell_L}\times\; \Gr_{1,5}^{>}\;\times\;{\Gr}^{\geq}_{k_R,N_R,\ell_R})\right)\subseteq\Gr_{k,n;\ell}^\geq.\]
The same holds for the upper and lower BCFW products.
\end{prop}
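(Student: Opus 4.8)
The plan is to reduce everything to the combination of operations whose effect on Plücker coordinates we already control via \cref{lem:effect_ops}, together with the single genuinely new ingredient: the ``middle column'' insertion $C = C'_L + v + C'_R$. First I would observe that the BCFW map, after choosing representatives, is a composition of: (i) the operations $y_a(\alpha/\beta)$ applied to $C_L$ (and its loop $D_r$, $r\in\Loop_L$), and $y_c(\gamma/\delta)\,y_d(\delta/\varepsilon)$ applied to $C_R$ (and $D_r$, $r\in\Loop_R$); (ii) the zero-column padding operators $\pre_I$ that put the left and right blocks into $\R^n$ on disjoint index sets $N_L\setminus\{a,b,n\}$ and $N_R\setminus\{a,b,n\}$, overlapping only in the columns $a,b,n$; and (iii) stacking the left block, the explicit row $v$ (with its $\alpha,\beta,(-1)^{k_R}\gamma,(-1)^{k_R}\delta,(-1)^{k_R}\varepsilon$ entries), and the right block. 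Since $y_h(t)$ and $\pre_h$ preserve nonnegativity when $t\ge 0$ (by \cref{lem:effect_ops}, the Plücker coordinates of the image are nonnegative combinations of those of the source), and since $\alpha/\beta,\gamma/\delta,\delta/\varepsilon>0$ on $\Gr_{1,5}^{>}$ for a suitably normalized representative $(\alpha,\beta,\gamma,\delta,\varepsilon)$ of a positive point of $\Gr_{1,5}$ (all five coordinates can be taken positive), the only thing left to check is that the stacking step (iii) sends nonnegative-times-positive-times-nonnegative data to a nonnegative $k\times n$ matrix.

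For that core step I would argue directly on maximal minors, as in \cite{even2021amplituhedron}. Fix $I\in\binom{[n]}{k}$ and expand $\lr{C}_I$ along the middle row $v$, using that $v$ is supported on the five columns $\{a,b,c,d,n\}$ and that the left block is supported on $\{1,\dots,a,b,n\}$ while the right block is supported on $\{b,\dots,c,d,n\}$. A Laplace expansion along the $v$-row writes $\lr{C}_I$ as a sum over the at-most-five choices of which column of $I$ the $v$-row ``uses,'' and each resulting term factors (up to a sign that one checks is absorbed by the $(-1)^{k_R}$ signs in $v$, in $C'_L$'s last column, and in the schematic matrix) as a product of a maximal minor of the (padded) left block and a maximal minor of the (padded) right block — here one uses that $C'_L$ has rank exactly $k_L$ and $C'_R$ rank exactly $k_R$, so only splittings with $|I\cap(\text{left support})| = k_L+1$-ish survive, consistent with $k_L+k_R+1=k$. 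By \cref{lem:effect_ops} these block minors are nonnegative (they equal minors of $C_L$, resp.\ $C_R$, possibly with a positive coefficient coming from the $y$'s), and the coefficients $\alpha,\beta,\gamma,\delta,\varepsilon$ are positive, so $\lr{C}_I\ge 0$. The same expansion, now with $4|\Loop|$ replaced appropriately, handles the minors of $C+D'$ where $D'$ is the stacked loop block: one further expands along the rows of $D'_r$, which are themselves supported compatibly (the left loop rows on $N_L$, the right loop rows on $N_R$), and reduces to minors of $C_L+D$ (resp.\ $C_R+D$), nonnegative by hypothesis.

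Two bookkeeping points deserve care. First, the rank hypotheses $k_L\le|N_L|-2$ and $k_R\le|N_R|-2$, together with $(\ell_L,\ell_R)\ne(1,1)$, are exactly what guarantees that the padded blocks and the stacked loopy matrix satisfy the full-rank inequalities \eqref{eq:ranks} on the domain, so that $\mbcfw$ genuinely lands in $\Gr_{k,n;\ell}$ and not merely in $\Mat_{k,n;\ell}$; I would verify the rank of $C$, and of $C+D'$, by exhibiting a nonvanishing minor from the expansion above (choosing $I$ to consist of the ``pivot'' columns of the left block, the column $b$, and the pivot columns of the right block). Second, the sign computation in the Laplace expansion — reconciling the $(-1)^{k_R}$ and $(-1)^{k_R+1}$ factors in $v$ and in the last columns of the left block with the parity of the shuffle that interleaves left-support and right-support columns inside $I$ — is the one place where a wrong choice of convention silently breaks positivity, so this is the main obstacle; the intended payoff of the specific signs in \cref{def:bcfw-map} is precisely that every surviving term in the expansion comes out with a $+$ sign. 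The upper ($a=1$, left factor trivial) and lower ($b=n-3$, right factor trivial) cases are the degenerate specializations where one side of the expansion disappears, and the argument goes through verbatim with one block omitted.
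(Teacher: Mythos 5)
Your outline shares the decisive idea with the paper: to handle the loop part $C+D'$, observe that it is literally the \emph{tree} BCFW product of $C_L+D$ with $C_R$ (or of $C_L$ with $C_R+D$), so its nonnegativity follows from the tree-level statement applied to these bigger matrices. That is the entire content of the paper's proof. Where you diverge is that you propose to re-derive the tree-level nonnegativity from scratch by Laplace-expanding $\lr{C}_I$ along the middle row $v$ and factoring each surviving term into a left-block minor times a right-block minor; the paper instead treats the tree case as known and simply cites \cite{even2023cluster}. Your route is more self-contained, but the sign bookkeeping you flag as the main obstacle — reconciling the $(-1)^{k_R}$, $(-1)^{k_R+1}$ conventions with the shuffle parity — is precisely the nontrivial content that the cited reference disposes of, and you leave it unverified. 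If you were to complete it you would essentially be reproducing the proof of the tree-level lemma, so citing it (as the paper does) is the cleaner choice.

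One smaller discrepancy in emphasis: you attribute the hypotheses $k_L\leq|N_L|-2$, $k_R\leq|N_R|-2$ to the rank conditions \eqref{eq:ranks} for the output. The paper's point is slightly different (and you should make it explicit in your argument too): these bounds are what make $C_L+D$ (resp.\ $C_R+D$) a genuine $(k_L+2)\times|N_L|$ (resp.\ $(k_R+2)\times|N_R|$) \emph{nonnegative matrix} to which the tree result can be applied at all — i.e.\ not more rows than columns. Without them the reduction to the tree case would not even be well posed, independently of any full-rank considerations. Finally, a tiny slip: $v$ is a middle \emph{row}, not a ``middle column,'' as your own subsequent Laplace expansion along $v$ correctly assumes.
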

\begin{proof}
It is easy to see that without adding the positivity requirements that rank conditions \eqref{eq:ranks} are generically met.
We now move to the nonnegative case.
The tree case is known, it is also known that applying the BCFW product to two nonnegative matrices and a positive element from $\Gr_{1,5}^>$ results in a nonnegative matrix. If, in addition, the left and right factors are positive, then the result is of full rank (see, for example, \cite{even2023cluster} for these statements). This shows that $C$ is nonnegative. If we now apply these facts to the matrices $C_L+D,~C_R$, if $\ell_L=1$, or to $C_L,C_R+D$, if $\ell_R=1,$ the resulting $C+D'$ is non negative. Note that the assumptions on $k_L,k_R$ imply that $C_L+D$ or $C_R+D$ is nonnegative, which is needed for this derivation. Thus, the image is generically in $\Gr_{k,n;1}^{\geq}.$
The same argument applies in the end cases of upper and lower BCFW products.
\end{proof}
\begin{nn}\label{nn:butterfly}
With the above notations, if $S_L\subseteq \Gr_{k_L,N_L;\ell_L}^{\geq},S_R\subseteq \Gr_{k_R,N_R;\ell_R}^{\geq}$ we write $S_L\bcfw S_R$ for their \emph{BCFW product}, that is, the image of
$\mbcfw(S_L,\Gr_{1,5}^{>},S_R)$ whenever it is defined.
\end{nn}
\begin{definition}\label{def:forward_limit}
Let $N$ be the ordered index set $\{1,2,\ldots,n-1,A,B,n\}.$ We denote $c=n-2,d=n-1.$ The \emph{(one loop) forward limit} $\FL$ is the following rational map
\[\FL:\Mat^{(B)}_{k+1,N}\times\R^5\dashrightarrow\Gr_{k,[n];1}\] where $\Mat^{(B)}_{k+1,N}$ is the subspace of $\Mat_{k+1,N}$ made of matrices without a zero column at $B,$ defined as follows:
For $M\in\Mat^{(B)}_{k+1,N}$ and $(\gamma_\star,\delta_\star,\alpha_\star,\beta_\star,\varepsilon_\star)\in\R^5,$ with $\varepsilon_\star,\delta_\star,\alpha_\star\neq 0$ we 
\begin{itemize}
\item First apply $\inc_l$ on $M$, where $l$ is a new marker between $d$ and $A,$ and then $\scale_l(\varepsilon_\star).$ 
Call the resulting matrix $M_0.$
\item We then set
\[M_1=y_c(\frac{\gamma_\star}{\delta_\star}).y_d(\frac{\delta_\star}{\varepsilon_\star}).x_A(\frac{\beta_\star}{\alpha_\star}).x_l(\frac{\alpha_\star}{\varepsilon_\star}).M_0\]
Note that the rank of the columns $A,B$ of $M_1$ is $2.$
\item $M_2=\addL_{AB}(M_1).$ 
\item Finally, 
\[\FL(M)=\rem_l(x_l(1)
(\rem_{A,B}M_2)).\]
\end{itemize}    
\end{definition}
\begin{rmk}\label{rmk:last_y_c}
Observe that in the above construction of the forward limit we could have performed the last $y_c$ operation in the end, without changing the result. Denote by $\FL'$, the \emph{forward limit degenerated at $c$}, which is the operator defined like $\FL$ only without the $y_c$ operation, or equivalently, by substituting $\gamma_\star=0$ in $\FL$.
\end{rmk}
\begin{prop}\label{prop:fl_domain}
The map $\FL$ is generically defined, and is independent of choices.
Moreover, it descends to a rational map, still denoted by the same notation
\[\FL:\Gr_{k+1,n}^{\geq}\cap\Gr^{(B)}_{k+1,N}\times\Gr^>_{1,5}\dashrightarrow\Gr^{\geq}_{k,[n];1},\]
$\Gr^{(B)}_{k+1,N}$ is the subspace of $\Gr_{k+1,N}$ made of matrices without a zero column at $B.$
\end{prop}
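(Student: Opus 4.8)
The plan is to unwind the definition of $\FL$ step by step, tracking three things: (i) that each elementary operation in the composition is generically defined, (ii) that each is independent of the choice of matrix representative, and (iii) that each preserves (or, for $\addL$, converts appropriately) the nonnegativity and full-rank conditions. Most of the ingredients are already in hand: $\inc_l$, $\scale_l(\varepsilon_\star)$, $x_A$, $x_l$, $y_c$, $y_d$, $\rem_l$, $\rem_{A,B}$ all descend to the Grassmannian and to its nonnegative part by \cref{lem:effect_ops} and the discussion following \cref{def:ops}, and $\addL_{AB}$ descends to the nonnegative loopy Grassmannian by the same lemma. So the bulk of the argument is bookkeeping: composing these facts in the order $M\mapsto M_0\mapsto M_1\mapsto M_2\mapsto \FL(M)$.

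First I would check independence of choices. The only subtlety is the input to $\addL_{AB}$: $\addL_{AB}$ is defined on $\Mat^{(AB)}_{k+1,N;1}$ (columns $A,B$ linearly independent) and its output is already intrinsic — the text records that $U'$ is independent of the representative and $V'_{\ell+1}$ is well-defined as a subspace of $\R^N/U'$. So if I show that $M_1$ generically has columns $A,B$ of rank $2$ (which the definition already asserts, and which I would justify: $\scale_l(\varepsilon_\star)$ with $\varepsilon_\star\neq 0$ makes column $l$ nonzero, $x_l(\alpha_\star/\varepsilon_\star)$ with $\alpha_\star\neq 0$ pushes a multiple of it into column $A$; combined with column $B$ being nonzero by the $\Mat^{(B)}$ hypothesis, one gets rank $2$ generically), then $M_2$ and hence $\FL(M)$ depend only on the row span of $M$, i.e.\ on the point of $\Gr_{k+1,n}$. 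Everything before $\addL$ is manifestly $\GL$-equivariant, and everything after acts on an already-intrinsic object, so descent to $\Gr_{k+1,n}$ is automatic.

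Next, nonnegativity. Start with $M$ a nonnegative representative of a point in $\Gr^{\geq}_{k+1,n}\cap\Gr^{(B)}_{k+1,N}$. The operation $\inc_l$ preserves nonnegativity (it descends to the nonnegative loopy Grassmannian per \cref{def:ops}); $\scale_l(\varepsilon_\star)$ with $\varepsilon_\star>0$ preserves it; and then I need the $x$'s and $y$'s applied with the specified parameters to preserve nonnegativity. Here the parameters $\gamma_\star/\delta_\star$, $\delta_\star/\varepsilon_\star$, $\beta_\star/\alpha_\star$, $\alpha_\star/\varepsilon_\star$ are not all obviously positive for arbitrary $(\gamma_\star,\delta_\star,\alpha_\star,\beta_\star,\varepsilon_\star)\in\R^5$ — but the statement restricts to $\Gr^>_{1,5}$, i.e.\ to a positive point of $\Gr_{1,5}$, which (after choosing the standard representative) means exactly that the relevant ratios are positive, so the $x_h(t),y_h(t)$ with $t>0$ preserve nonnegativity by \cref{lem:effect_ops}. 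Then $\addL_{AB}$ preserves nonnegativity by \cref{lem:effect_ops}, landing in $\Gr^{\geq}_{k,N;1}$; and $\rem_{A,B}$, $x_l(1)$, $\rem_l$ preserve it (with $\rem$ restricted to where rank is not harmed). So $\FL(M)\in\Gr^{\geq}_{k,[n];1}$, generically.

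The main obstacle I expect is the rank/genericity accounting for $\rem_{A,B}$ and $\rem_l$: $\rem$ only descends to the locus where deleting the column does not violate the rank inequalities \eqref{eq:ranks}, so I must argue that after $\addL_{AB}$ the columns $A,B$ have been "zeroed out" in the loopy sense (the explicit Plücker description in \cref{lem:effect_ops} — $P_I(U'_{\Loop'})=0$ whenever $I\cap\{A,B\}\neq\emptyset$, and the reindexing by $I\mapsto I\cup\{A,B\}$ — says precisely that removing $A,B$ loses no information and the resulting loopy vector space on index set $\{1,\dots,n-1,l,n\}$ still satisfies the rank conditions generically), and similarly that $x_l(1)$ followed by $\rem_l$ is harmless generically. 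This is exactly the kind of "one loop object lacking the positroid foundations" the introduction warns about, so I would spell it out using the Plücker formulas rather than citing positroid combinatorics. Genericity of the whole composite then follows since each step is defined on a dense open subset and the composite's domain is the preimage of a finite intersection of such.
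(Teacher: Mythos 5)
Your proof is correct and takes essentially the same route as the paper's: unwind the five elementary steps of $\FL$, cite \cref{lem:effect_ops} (and the descent of $\addL_{AB}$) for independence of choices and preservation of nonnegativity, observe that the $\Gr_{1,5}^>$ hypothesis makes the $x,y$-parameters positive, and note that generic full rank is preserved. The paper's proof is far terser (three sentences: "easy to see" for ranks, $\addL_{AB}$ descends for independence, \cref{lem:effect_ops} for nonnegativity), so the real work you do — checking that columns $A,B$ of $M_1$ have rank $2$ using $\alpha_\star\neq 0$ and the $\Mat^{(B)}$ hypothesis, and worrying about whether $\rem_{A,B}$ and $\rem_l$ harm rank — is exactly the content the paper leaves implicit behind "easy to see." Your expanded version is a faithful filling-in of those gaps rather than a different argument.
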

\begin{proof}
Choose a representative $M$ for the element of $\Gr_{k+1,n}^{\geq}\cap\Gr^{(B)}_{k+1,N},$ and $(\gamma_\star,\ldots,\varepsilon_\star)$ for the element of $\Gr^{>}_{1,5}$. It is easy to see that generically $C,C+D$ have full rank, so that \eqref{eq:ranks} hold.

Since $\addL_{AB}$ descends to the Grassmannian, different choices for $M,$ which are the only choices in the process, do not affect the final answer. 

By \cref{lem:effect_ops} all the operations preserve nonnegativity. 
%
\end{proof}
\begin{nn}\label{nn:FL_set}
With the above notations, if $S\subseteq \Gr_{k,N}^{\geq}$ we write $\FL(S)$ for the image of
$\FL(S,\Gr_{1,5}^{>})$ whenever it is defined.
\end{nn}
\subsection{The tree and one loop BCFW cells}
We will now describe the BCFW cells, and some of their basic properties. 
\begin{definition}\label{def:1-loop-cells}\label{def:tree-bcfw-cells}
	\emph{$0$ and $1-$loop} BCFW cells are defined recursively as follows.
	\begin{enumerate}[align=left]
    \item[(Base case)]
			For $k=0$, the trivial cell $\Gr_{0,n}$ is a $0-$loop BCFW cell.
			\item[(Soft factor/adding a zero column)]
			If $S$ is a $\ell-$loop BCFW cell, for $\ell=0,1,$ then so is the cell obtained by inserting a zero column in the penultimate position.
			\item[(Factorization/Product)] In the notations of \cref{nn:bcfwmap}, if $S_L$ and $S_R$ are $\ell_L-$loop and $\ell_R-$loop BCFW cells on $N_L$ and $N_R$, for $\ell_L,\ell_R\in\{0,1\}$ with $\ell_L+\ell_R\leq 1,$ then their BCFW product $S_L \bcfw S_R$ is a standard BCFW cell. 

   \item[(Forward limit)]Let $S$ be a $0-$loop BCFW cell on the index set $N=\{1,2,\ldots,c,d,A,B,n\}$ (so that $A,B$ are located in this order just before $n$). Assume that $S$ has no zero column at $B,$ and that in the course of constructing $S$ there was no BCFW product step in which the BCFW chord was supported on $c,d,A,B,n.$ Then $\FL(S)$ is a $1-$loop BCFW cell.
	\end{enumerate}
    We will refer to $0-$loop BCFW cells as \emph{tree} BCFW cells. We will collectively refer to $0$ and $1$ loop BCFW cells by the name BCFW cells.
    We refer to the sequence of the above operations in the construction of a BCFW cell as the \emph{generation sequence} or \emph{recipe} for the BCFW cell, and each operation is a \emph{step} in this sequence. 
\end{definition}
Since the BCFW and forward limit maps are only rational maps, we will need to show that they are defined, meaning that the resulting loopy matrices are of full rank, throughout the above recursive process. This will be proven in \cref{prop:fixed_positroids} below. 

A few comments are in place.
\begin{rmk}\label{rmk:domain_cells_for_FL}
The reason tree BCFW cells with a zero column at $B$ are not included as input to the $\FL$ step is that the operation $\FL$ is not defined on this locus in the Grassmannian.

The other cells which are excluded from the collection of inputs to the forward limit operation are excluded for another reason. Applying $\FL$ to these cells results in a space of dimension less than $4(k+1)$, hence they will not be used to triangulate the $1-$loop amplituhedron. In fact, one can easily show using the techniques we develop in this text, that these cells actually map to the zero locus of a certain function $\llrr{cdAB},$ which will be defined below.
\end{rmk}
\begin{rmk}\label{rmk:cardinality}
The number of BCFW cells in a tiling of $\Grk$ has been shown in \cite{karp2020decompositions} to be the Narayana number $\frac{1}{k+1}\binom{n-4}{k}\binom{n-3}{k}$. It is known from the physics literature \cite{arkani-hamed_trnka} that the number of $1-$loop BCFW cells in a tiling is $\binom{n-2}{k}\binom{n-2}{k+2}.$
This fact can be easily deduced from this text, by counting recursively the number of different BCFW cells, and observing they are all different, a fact which follows directly from the analysis below.
\end{rmk}
\begin{rmk}\label{rmk:general_bcfw}
More general BCFW cells can be obtained from a similar recursive process if we allow in addition steps of cyclic rotations and reflections, see \cite[Section 6]{even2023cluster} for more general tree BCFW cells. If one proves the tiling conjecture for one collection, the proof for other collections follows directly via the strategy of \cite{even2023cluster}. In this text, mainly from length reasons, we will therefore restrict attention only to the BCFW cells described above. 
\end{rmk}
\subsubsection{Chord diagrams}\label{subsub:chord_diags}
\cite{even2021amplituhedron} had defined \emph{chord diagrams}, and had shown that they are in bijection with BCFW cells. In this subsection we extend these notions and correspondence to the $1$-loop case.
\begin{definition}[$\ell=0,1$ chord diagram]\label{def:chord_L=1} 
Let $k,n \in \mathbb{N}$. A~\emph{$\ell$-loop chord diagram} $\D$ for $\ell\in\{0,1\},$ is a set of $k+\ell$~quadruples named \emph{chords}, of non decreasing integers in the set $\{1,\dots,n-1\}$ named \emph{markers, indices} or \emph{positions}, of the following form:
	$$ \D \;=\; \{D_1,\ldots,\D_{k+\ell}\} 
    \;\;\text{ where }\;\;\D_i=(a_i,b_i,c_i,d_i)~\text{with }
	b_i=a_i+1 \text{ and }d_i=c_i+1.$$
    $(a_i,b_i)$ and $(c_i,d_i)$ are the \emph{start} and \emph{end} of $\D_i,$ respectively. 
    The chords satisfy the following assumptions:
 \begin{enumerate}
 \item No two chords $\D_i,\D_j \in D$ satisfy
$ a_i \;=\; a_j$ or $a_i \;<\; a_j \;<\; c_i \;<\; c_j.$ 
\item The chords are ordered so that for $i>j$ then $c_i\geq c_j,$ and, in case of equality then $b_i<b_j.$
\item If $\ell=0$ all chords are \emph{colored} black. If $\ell=1$ then every chord is colored black, red, blue or purple. We denote the sets of black, red, blue, and purple chords by $\Black=\Black(\D),\Blue=\Blue(\D),\Blue=\Blue(\D),\Purp=\Purp(\D),$ respectively, and $\Red\neq\emptyset$. The red chord $\D_i$ with minimal $a_i$ is termed the \emph{top red chord}, and we write $i=\tr=\tr(\D).$ For convenience we also denote the top red chord by $\D_0=(a_0,b_0,c_0,d_0).$
 \item All red, blue and purple chords end at $(c_0,d_0),$ every red chord $\D_i$ starts before every blue chord
$\D_j,$ that is $a_i<a_j.$ Every blue chord starts before every purple chord.
 Every other chord $\D_i$ which ends at $(c_0,d_0)$ has $a_i<a_0.$
 \item Every 
	chord $\D_i$ satisfies $b_i\leq c_i,$ and there is at most one chord with $c_i=b_i.$ If such a chord exists then it must be blue or red. 
 \end{enumerate}
 For $\ell=1$ we also denote $(c_0,d_0)$ by $(c_\star,d_\star)$, and refer to it the \emph{yellow chord}, which we denote by $\D_\star$. $\ell=1$ chord diagrams are also called \emph{one loop} chord diagrams. $\ell=0$ chord diagrams are also called \emph{tree} chord diagrams. We write $[k+\ell]_\D=[k+\ell]\setminus\{\tr(\D)\},$ if $\ell=1$ and $[k]$ if $\ell=0.$
\end{definition} 
\begin{figure}
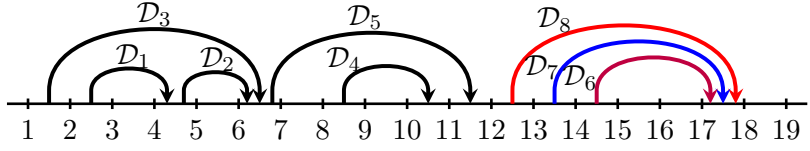

\begin{center}
\tikz[line width=1, scale=0.7]{
\draw (0.4,0) -- (0.8*19.5,0);
\foreach \i in {1,2,...,19}{
\def\x{\i*0.8}
\draw (\x,-0.1)--(\x,+0.1);
\node at (\x,-0.5) {\i};}
\foreach \i/\j in {1.5/6.5,2.5/4.3,4.7/6.2,6.8/11.5,8.5/10.5
}
\def\x{\i*0.8}
\def\y{\j*0.8}
\draw[line width=1.5,-stealth] (\x,0) -- (\x,0.25) to[in=90,out=90] (\y,0.25) -- (\y,0);

\draw[line width=1.5,red,-stealth] (12.5*0.8,0) -- (12.5*0.8,0.25) to[in=90,out=90] (17.8*0.8,0.25) -- (17.8*0.8,0);
\draw[line width=1.5,blue,-stealth] (13.5*0.8,0) -- (13.5*0.8,0.25) to[in=90,out=90] (17.5*0.8,0.25) -- (17.5*0.8,0);

\draw[line width=1.5,purple,-stealth] (14.5*0.8,0) -- (14.5*0.8,0.25) to[in=90,out=90] (17.2*0.8,0.25) -- (17.2*0.8,0);

\node at(4*0.8,1.7) {$\D_3$};
\node at(3.5*0.8,0.9) {$\D_1$};
\node at(5.5*0.8,0.85) {$\D_2$};
\node at(9*0.8,1.65) {$\D_5$};
\node at(8.5*0.8,0.85) {$\D_4$};
\node at(13.5*0.8,1.6) {$\D_8$};
\node at(13.2*0.8,0.7) {$\D_7$};
\node at(14.1*0.8,0.5) {$\D_6$};
}
\end{center}
\caption{
A one loop chord diagram $\D$ on $n=19$ markers.
} 
\label{fig:cd-example}
\end{figure}
See Figure~\ref{fig:cd-example}, where we visualize such a chord diagram $\D$ in the plane as a horizontal line with $n$ markers labeled $\{1,\dots,n\}$ from left to right, and $k+\ell$ nonintersecting chords above it, whose \emph{start} and \emph{end} lie inside the segments $(a_i,b_i)$ and $(c_i,d_i)$ respectively. The definition constrains the chords: they cannot start before $1$, end after $n-1$, or start or end on a marker; two chord cannot cross; two chords cannot start in the same segment $(h,h+1)$; and one chord cannot start and end in the same segment, and at most one chord, which must be red or blue can start and end in adjacent segments. 
\begin{definition}[Terminology and notations for chords]
\label{terminology:cd}
A~chord is a~\emph{top} chord 
if there is no chord above it, e.g. $\D_3,\D_5$ and $\D_8$ in Figure~\ref{fig:cd-example}.
 Otherwise it is a \emph{descendant} of the chords above it, called its \emph{ancestors}, and in particular a~\emph{child} of the chord immediately above it, which is called its~\emph{parent}. We denote by $\p(i)$ the parent of chord $i.$ For example, $\D_6$ has ancestors $\D_7$ and $\D_8$, and $\p(6)=7$. When $\D_i$ is a top chord we write $\p(i)=\emptyset,$ and it will be convenient to set $a_\emptyset=b_\emptyset=n,$ and sometimes think of $\D_\emptyset$ as the single element set $\{n\}.$
Two chords are \emph{siblings} if they are either top chords or children of a common parent, or both are top chords; for example, $\D_1$ and $\D_2$ are siblings, and $\D_3$ and $\D_5$ are siblings. We also consider all top chords as siblings.
Two chords are \emph{same-end} 
if their ends occur in a common segment $(e,e+1),$ they are \emph{head-to-tail} if the first ends in the segment where the second starts, and are \emph{sticky} if their 
starts lie in consecutive segments $(s,s+1)$ and~$(s+1,s+2)$.
For example, chords $\D_2$ and $\D_3$, and so are $\D_6,\D_7,\D_8$ are same-end, $\D_1$, 
	$\D_2$ and $\D_3,\D_5$ are head-to-tail, and chords $\D_1$ is a sticky child of $\D_3$. $\D_8,\D_7,\D_6$ form a \emph{sticky chain}, that is a sequence of chords, each a sticky child to the previous one. It is moreover a \emph{sticky-same end chain}, since these chords are also all same-end. A chord is \emph{short} if $c_i=b_i+1,$ and is \emph{very short} if $c_i=b_i.$ A very short chord is always red or blue. $\D_1,\D_2$ and $\D_4$ are short. Had we replaced $\D_6$ by a blue or red chord $\D'_6=(16,17,17,18)$ it would have been a very short chord. The yellow chord in \cref{fig:cd-example} is $(17,18).$

    If $\ell=1,$ every ancestor, descendent, parent, child or sibling to the top red chord is considered as having the same relation with the yellow chord. Otherwise we refer to the yellow chord also as a top chord. The yellow chord is considered same-ended with every other chord which ends at $(c_\star,d_\star),$ every chord which starts at $(c_\star,d_\star)$ is considered head-to-tail after the yellow chord.

    A chord $\D_i$ and a descendent $\D_j$ are said to be \emph{strictly same-ended} if they are same ended, and either they are of the same color, or $\D_i$ is black, or $\D_i$ is the yellow chord, and $\D_j$ is the top purple chord. In \cref{fig:cd-example} $\D_3,\D_1$ are strictly same end and also $\D_\star$ and $\D_6.$
\end{definition}
\begin{definition}\label{def:behind_above_etc}
Let $\D$ be a chord diagram. We define the following quantities per chord.
    \begin{itemize}
\item[--] $\ahead(\D_l)$ is the set of all chords which end after the end of $\D_l.$ 
We consider the end of the yellow chord and of the red chords to be $(c_\star,d_\star).$

\item[--] $\outside(\D_l)$ is the set of all chords which are neither $\D_l$ nor its descendants. $\woutside(\D_l)=\outside(\D_l)\cup\{\D_l\},$ if $l\neq0,\star,$ while if $\l\in\{0,\star\}$ it is $\outside(\D_l)\cup\{0,\star\}.$
\item[--] 
$\below(\D_l)$ is the number of descendants of $\D_l$. 
\item[--] 
$\after(\D_l),$ for $l\neq0,\star$, is the number of chords that start after the start of~$\D_l$. We \emph{exclude} from this count the top red and yellow chords. We define $\after(\D_0)=\after(\D_\star)$ as the number of chords with $a_i\geq c_\star$. 
\item[--] 
$\Between(\D_l),$ for $l\neq 0,\star,$ is the number of descendants of $\D_{\p(l)}$, that start before~$\D_l$ starts, excluding $0,\star$. If $\D_l$ is a top chord it is just the number of chords that start before $\D_l.$ $\Between(\D_0)=\Between(\D_\star)$ is one plus the number of descendants of $\D_{\p(0)},$ which either start before $\D_0$, or are descendants of $\D_0.$
\end{itemize}We will sometimes write $\Between(l)$ for $\Between(\D_l)$ etc.
\end{definition}

\begin{nn}\label{nn:k}
We extend the definition and notations of chord diagrams to general index sets $K,N$ in the natural way. For a $1$ loop chord diagram $\D$ we write $K_\D=K\setminus\{\tr(\D)\}.$ We write $\CD^\ell_{N,K}$ for $\ell\in\{0,1\}$ and $N,K$ index sets, for the collection of $\ell-$loops chord diagrams on the index set $N$ with chords labeled $K.$ $\CD_{N,K}:=\CD^0_{N,K}\cup\CD^1_{N,K}$. 
We will denote $\CD^\ell_{[n],[k+1]}$ by $\CD^\ell_{n,k}$.
\end{nn}
\begin{rmk}
In what follows, usually when we shall consider a single chord diagram we will take its index set to be either $[n]$, or $[n]\cup\{A,B\}$ where $A<B<n$ are consecutive in the order of $[n]\cup\{A,B\}$ (mainly when we perform a forward limit operation) or a subset of $[n]$ not containing $n-1$ (when we perform a $\pre$ operation). We will usually not be too careful about what the set $K$ is, except for its cardinality. Yet, in all cases the extension to more general index sets will be automatic.
\end{rmk}
\begin{definition}\label{def:L=1sub_diags}[Left, right, soft and FL subdiagrams]
\label{def:L=0_subdiags} 
Let $\D\in\CD^\ell_{N,K}$ be a chord diagram.
A \emph{subdiagram} is a chord diagram obtained by restricting to a subset of the chords, a 
subset of the markers which contains these chords and the marker~$n$. In addition, some of the chords may change color, where if a chord changes color then all chords must become black, and it may happen only if the rightmost top chord of the subdiagram is red, blue or purple in $\D.$
If the subdiagram contains a blue or a red chord, and their colors are unchanged, then it must also contain the top red chord. Note that in this case no color changes, and the resulting subdiagram is a $\ell=1$ diagram. Otherwise all chords become black and the result is a $\ell=0$ diagram.

If $p$ is the penultimate marker of $N,$ and no chord has $d_i=p$ then the \emph{soft limit subdiagram}, or the \emph{$\pre$ subdiagram}, $\D'$ of $\D$ is the subdiagram obtained by erasing this marker. We write $\D=\pre_p\D'.$

Let
$\D_{\max K} = (a,b,c,d)$ be the rightmost top chord of~$\D$, and assume $d$ is the penultimate marker of $N,$ and $\max K\neq\tr(\D).$ 
Define $\D_L$, the \emph{left subdiagram} of $\D$, on the markers $N_L=\{\min N,\dots,a,b,n\}$
and the \emph{right subdiagram} $\D_R$ on~$N_R=\{b,\dots,c,d,n\}$. The subdiagram $D_L$ contains all chords $D_i$ with $c_i\leq a,$ and $\D_R$ contains the descendants of~$\D_{\max K}$.

We write $\D=\D_L\bcfw\D_R,$ where $\D_L,\D_R$ are the left and right subdiagrams respectively. We do not rule out the case that one of $\D_L$ or $\D_R$ is an empty diagram.
Note that at most one will be an $\ell=1$ diagram.

For a chord $\D_i,$ where $i=\tr(K)$ is not excluded, we write $\D^{(i)}$ for the subdiagram obtained by restricting only to chords $\D_j$ that $(a_j,b_j)<(c_i,d_i)$ and, if $\D_i$ is not a top chord, are descendants of its parent. 

The index set of the $\FL$ subdiagram is obtained from the index set $N$ of $\D$ as follows. We erase all indices greater than $d_i$ other than $n,$ and all indices smaller than $b_{\p(i)}$ when $\p(i)\neq\emptyset.$ If $\D_i$ is not a top chord, we also erase from $N$ all indices smaller or equal $a_{\p(i)}.$  We erase from the index set. Note that $\D_i$ is the rightmost top chord for $\D^{(i)}.$

The subdiagram $\D^{(0)}=\D^{(\tr(\D))}$ is called the \emph{FL subdiagram}.

We also define the \emph{pre FL diagram} of $\D$ as the $\ell=0$ diagram obtained from the FL diagram of $\D$ by adding two markers $A\lessdot B$ between $d=d_{\tr(\D)}$ and $n,$ moving all red chords to end at $(A,B),$ moving all blue chords to end at $(d,A)$ and forgetting the colors.

In case $\D$ equals its $\FL$-subdiagram, and $\D'$ is its pre-$\FL$ diagram, then we write $\D=\FL(\D').$

A subdiagram $\D'$ of $\D$ is \emph{proper} if there is a sequence of subdiagrams $\D^{(0)},\ldots,\D^{(m)}$ with $\D^{(0)}=\D,~\D^{(m)}=\D'$ and $\D^{(i+1)}$ is either a pre, left, right or pre-FL subdiagram of $\D^{(i)}.$
\end{definition}
\begin{definition}[$\ell-$loop BCFW cell from a $\ell-$loop chord diagram]
\label{def:treeBCFWfromCD}
Let $\D\in \CD^\ell_{N,K}$ be a $\ell-$loop chord diagram. 
We associate $\D$ a $\ell-$loop BCFW cell $S_\D$ in ${\Gr}^{\geq}_{K, N;\ell}$ using the following recursive process:
\begin{enumerate}[align=left]
\item If $K=\emptyset$ then $S_\D$ is the trivial empty cell $\Gr_{\emptyset,N}^\geq.$
\item If $\D=\pre_p\D',$ where $\D'$ is the $\pre$-subdiagram of $\D,$ then $S_\D=\pre_p S_{\D'}.$
\item Otherwise, if the rightmost top chord is not red then let $S_L$ and $S_R$ be the standard BCFW cells on $N_L$ and $N_R$ associated to 
the left and right subdiagrams	$\D_L$ and $\D_R$ of $\D$.  Then, we set $S_\D := S_L \bcfw S_R$, the tree BCFW cell which is their BCFW product. Note that exactly at least one of $S_L,S_R$ is a $0-$loop cell and the other is an $\ell-$loop cell.
\item If the rightmost top chord is red, let $\D'$ be the pre-FL subdiagram of $\D.$ Then $S_\D=\FL(S_{\D'}).$ 
\end{enumerate}
\end{definition}
The following observation is straight forward
\begin{obs}\label{obs:1loopCDvs1loopBCFW}
There is a bijection between $\CD^\ell_{n,k}$ and recipes for constructing $\ell-$loop BCFW cells via the recursive process of \cref{def:1-loop-cells}.
\end{obs}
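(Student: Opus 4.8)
The plan is to realize the correspondence as the construction $\D\mapsto S_\D$ of \cref{def:treeBCFWfromCD} itself: each of its four clauses carries out exactly one step in the sense of \cref{def:1-loop-cells} (an empty base cell, a $\pre$ step, a BCFW product step, or an $\FL$ step), so unwinding the recursion reads off a recipe from $\D$. I would then prove this map is a bijection by exhibiting a two-sided inverse, all by simultaneous induction. A suitable complexity measure is the triple $(\#\text{chords},\ \ell,\ \#\text{markers})$ in lexicographic order: a $\pre$ step fixes the first two entries and drops the last; a product step strictly drops the number of chords (the rightmost top chord is discarded from both subdiagrams) and cannot raise $\ell$; and an $\FL$ step fixes the number of chords but drops $\ell$ from $1$ to $0$ (at the cost of two extra markers, which the lexicographic order absorbs).

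First I would check that the recursion of \cref{def:treeBCFWfromCD} is well posed: for a diagram with at least one chord, the branch is dictated by the penultimate marker $p$ of $N$. If no chord has $d_i=p$, only clause (2) can apply, and the $\pre$-subdiagram is the unique diagram with $p$ erased. If some chord ends at $p$, then by the ordering axiom of \cref{def:chord_L=1} together with $d_i=c_i+1$, the chord $\D_{\max K}$ is the rightmost top chord and has $d_{\max K}=p$, so we are forced into clause (3) or clause (4) according to whether $\D_{\max K}$ is red. One then verifies that the produced subdiagrams are again legal chord diagrams: axioms (1)--(5) restrict to any left/right subdiagram, and a short argument from axiom (1) (the $a_i<a_j<c_i<c_j$ prohibition) shows that all red, blue and purple chords lie on a single side of a product decomposition, so exactly one of $\D_L,\D_R$ inherits the yellow-chord structure and is a $1$-loop diagram, matching $\ell_L+\ell_R\le 1$. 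For clause (4) the pre-$\FL$ subdiagram is uniquely determined and is a legal tree diagram.

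Next I would build the inverse, turning a recipe into a diagram by reversing each step: the base step yields the empty diagram; a $\pre$ step inserts a marker in the penultimate position; a product step with chord support $(a,b,c,d,n)$ glues the marker sets of $\D_L$ and $\D_R$, adds a new top chord $(a,b,c,d)$ with the chords of $\D_R$ as its descendants, and keeps all colors (consistent because at most one side was a $1$-loop diagram); an $\FL$ step applied to a tree diagram on $\{1,\dots,c,d,A,B,n\}$ fuses the three end segments $(c,d),(d,A),(A,B)$ into the single yellow end $(c_\star,d_\star)$, recoloring a chord red, blue, or black/purple according to whether it ended in $(A,B)$, $(d,A)$, or $(c,d)$, with purple versus black decided by the start position relative to the top red chord. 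The content here is that the side conditions on recipes in \cref{def:1-loop-cells} --- the ``no zero column at $B$'' hypothesis and the prohibition of a prior BCFW product supported on $c,d,A,B,n$ --- are exactly equivalent to the chord-diagram axioms governing the yellow/red end and the ``at most one very short chord, necessarily red or blue'' clause of axiom (5); granting this, step-by-step checking shows the construction inverts $\D\mapsto S_\D$.

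The hard part will be the $\FL$ clause: showing that the fusion-and-recoloring operation lands in the class of legal $1$-loop chord diagrams and that every legal $1$-loop diagram with a red rightmost top chord is obtained from a unique tree diagram this way. This is where the bookkeeping of \cref{terminology:cd} and \cref{def:behind_above_etc} (same-end chains, the ordering of red/blue/purple starts, which chords are children of the top red chord) must be matched precisely against the forward-limit side conditions; the other three clauses, and the termination of the recursion, are routine once the complexity measure above is fixed.
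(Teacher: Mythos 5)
The paper records this observation without proof, simply calling it ``straight forward''; your proposal supplies the missing content, and it is essentially correct. You observe that \cref{def:treeBCFWfromCD} already specifies, for each chord diagram, a unique and well-posed recipe in the sense of \cref{def:1-loop-cells}: the branch is dictated by whether a chord ends at the penultimate marker and, if so, by the color of the rightmost top chord, which by axiom (2) together with $d_i=c_i+1$ is the unique chord ending last. Your complexity measure $(\#\text{chords},\ \ell,\ \#\text{markers})$ in lexicographic order does strictly decrease at each of the three non-base clauses. You also correctly describe the inverse: insert a penultimate marker for $\pre$, glue index sets and add a top chord for $\bcfw$, and fuse the three end segments $(c,d),(d,A),(A,B)$ with the red/blue/black-or-purple recoloring rule for $\FL$, noting that the forward-limit side conditions in \cref{def:1-loop-cells} (no zero column at $B$, no prior product chord supported on $c,d,A,B,n$) match axioms (4)--(5) of \cref{def:chord_L=1}: the pre-$\FL$ diagram always has a chord ending at $(A,B)$, and a chord supported on $(c,d,A,B,n)$ would correspond to a chord with $b_i>c_i$, which axiom (5) forbids.

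One small imprecision: when you argue that all red, blue and purple chords lie on a single side of a product decomposition, you cite only axiom (1), the no-crossing rule. Axiom (1) alone does not give single-sidedness; what you need first is axiom (4) (all colored chords share the end $(c_0,d_0)$), and then axiom (1) together with the position of $(c_0,d_0)$ relative to $a_{\max K}$ forces them all into $\D_L$ or all into $\D_R$. With that correction, and granting the routine chord-diagram bookkeeping you defer in the $\FL$ clause, your sketch is sound and fills a gap the paper leaves unfilled.
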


\begin{ex}
\label{right-left-diagrams}
The chord diagram $\D$ in \cref{fig:cd-example}, is its own $\FL$-subdiagram, and the rightmost top chord is $\D_8 = (12,13,17,18)$.  
The pre-$\FL$ subdiagram $\D'$ is the $\ell=0$ diagram on the index set $N'=\{1,\ldots,18,A,B,19\}$ with $\D'_1=\D_1,\ldots,\D'_6=\D_6,$ where $\D_6$ is black, $\D'_7=(13,14,18,A),~\D'_8=(12,13,A,B)$ and they are also black. 

The left subdiagram $\D'_L=\{\D_1,\ldots,\D_6\}$ of $\D'$ has index set $N_L = \{1,\dots,13,19\}$, while $N_R = \{13,\dots,18,A,B,19\}$ and $\D'_R = \{\D'_6,\D'_7\}$. 

The BCFW cell $S_\D$ equals $\FL(S_{\D'}),$ and $S_{\D'}=S_{\D_L} \bcfw S_{\D_R}.$
\end{ex}

In what follows we shall show that the resulting cells are all distinct. This will be an immediate consequence of \cref{thm:sep}, that will show that their images in the amplituhedron are disjoint.

\subsection{BCFW coordinates and form}\label{subsec:BCFW_form}
Let $\D$ be a chord diagram. 
We now explain how to write a canonical, up to row scaling, $\ell-$loop matrix representative for every element of $S_\D.$ The canonical form is called the \emph{BCFW form}, and is presented in terms of the BCFW coordinates. 
Before we describe the BCFW form, we would like to reduce a few unnecessary degrees of freedom. The next lemma does exactly this.
\begin{lemma}\label{lem:scale_inv}
Let $a,b,c,d$ be as in \cref{nn:bcfwmap}, and let $C_L\vdots D_L,C_R\vdots D_R,$ where one of $D_L,\D_R$ is trivial, be loopy matrices as in \cref{def:bcfw-map}. We write $A=c,B=d.$
Then, the loopy vector space
\[\FL\left(\mbcfw(C_L\vdots D_L,
(\alpha_0,\beta_0,\gamma_0,\delta_0,\varepsilon_0), C_R\vdots D_R),
(\gamma_\star,\delta_\star,\beta_\star,\alpha_\star,\varepsilon_\star)\right)\]
equals
\[\FL\left(\mbcfw(C_L\vdots D_L,
(\alpha_0,\beta_0,\gamma'_0,\delta\delta_0,\varepsilon_0), \scale_d(\alpha)\scale_c(\delta)[C_R\vdots D_R]),
(\gamma_\star,\delta_\star,\beta'_\star,\alpha\alpha_\star,\varepsilon_\star)\right),\]for every $\gamma_0,\gamma'_0,\beta_\star,\beta'_\star$ and non zero $\alpha_0,\beta_0,\delta_0,\varepsilon_0,\gamma_\star,\delta_\star,\alpha_\star,\varepsilon_\star,\alpha,\delta.$
In other words, the first expression is independent of $\gamma_0,\beta_\star$ and its dependence on the $\delta_0,\alpha_\star,(C_R\vdots D_R)^A,(C_R\vdots D_R)^B$ factors through $\frac{1}{\alpha_\star}(C_R\vdots D_R)^A,\frac{1}{\delta_0}(C_R\vdots D_R)^B.$ 
\end{lemma}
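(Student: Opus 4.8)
The plan is to expand $\FL$ and $\mbcfw$ into the elementary operations of \cref{def:ops} ($\pre,\inc,\rem,\scale_h,x_h,y_h,\addL$) and to show that, once carried through this pipeline, each of the asserted parameter changes has no net effect on the output. Throughout we take $D_L,D_R$ trivial, so that $\mbcfw$ produces a loopless matrix, which is the input type of $\FL$. The engine of the argument is one observation about $\addL_{cd}$ (recall $A=c$, $B=d$): for a loopless matrix $M$ with row span $W$, the tree part of $\addL_{cd}(M)$ is $U'=\{w\in W:\ w_c=w_d=0\}$ and the new loop is $W$ read modulo $U'$. Hence, if $g$ is any invertible linear automorphism of $\R^N$ that acts only on the $(c,d)$-coordinate plane and fixes all other coordinates, then $\rem_{c,d}\bigl(\addL_{cd}(Mg)\bigr)=\rem_{c,d}\bigl(\addL_{cd}(M)\bigr)$. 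In particular $\rem_{c,d}\circ\addL_{cd}$ is annihilated by each of $\scale_c,\scale_d,x_c=x_A,y_c=y_A$ and by all of their products. This, combined with the obvious commutations of $x_h,y_h,\scale_h,\pre,\inc$ on disjoint (or suitably nested) columns, is what collapses the parameters.

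I would first prove the independence of $\gamma_0$ and $\beta_\star$. In $\mbcfw$ the parameter $\gamma$ appears only in $y_c(\gamma/\delta)$ applied to the right block and in the $\ee_c$-entry of $v$; since the $\pre$-steps have already put zeros in columns $c,d$ of the left block, changing $\gamma_0$ to $\gamma'_0$ is exactly post-composition of $\mbcfw$ with $y_c\!\bigl(\tfrac{\gamma'_0-\gamma_0}{\delta_0}\bigr)$ (this uses $\delta_0\neq0$). One commutes this extra $y_c$ rightward through the operations $\FL$ performs before $\addL_{cd}$ --- namely $\inc_l,\scale_l,x_l,x_A,y_{\hat d},y_{\hat c}$, where $\hat c,\hat d$ denote the forward limit's own pair of markers (distinct from the $c,d$ of \cref{nn:bcfwmap}) --- noting that it commutes with all of them except $x_A$, past which it turns into another invertible $(c,d)$-operation; the latter still commutes past $y_{\hat d},y_{\hat c}$ and is then annihilated at $\addL_{cd}$. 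The parameter $\beta_\star$ is even easier: changing it to $\beta'_\star$ post-multiplies the step $x_A(\beta_\star/\alpha_\star)$ by the extra factor $x_A\!\bigl(\tfrac{\beta'_\star-\beta_\star}{\alpha_\star}\bigr)$, again an invertible $(c,d)$-operation that commutes down to $\addL_{cd}$ and dies. So we may assume $\gamma_0=\gamma'_0=0$ and $\beta_\star=\beta'_\star$; with $\gamma_0=0$ the map $\mbcfw$ simplifies, so that column $c=A$ of its output is literally $(C_R\vdots D_R)^c$ sitting in the right rows, zero elsewhere, and column $d=B$ equals $(C_R)^d+\tfrac{\delta_0}{\varepsilon_0}(C_R)^n$ in the right rows together with the single $v$-entry $(-1)^{k_R}\delta_0$.

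It then remains to establish two joint-scaling invariances, which together with the above give the ``in other words'' statement (and hence the displayed identity, up to the evident relabeling of which free scalar multiplies which parameter). (I) Because column $c$ of the $\mbcfw$-output is $(C_R\vdots D_R)^c$, the substitution $(C_R\vdots D_R)^c\mapsto\delta(C_R\vdots D_R)^c$ is exactly $\scale_c(\delta)=\scale_A(\delta)$ applied to that output; I commute $\scale_A(\delta)$ rightward through $\FL$ --- it passes $\inc_l,\scale_l$ freely, passes $x_l$ and $x_A$ after a rescaling of their parameters, commutes past $y_{\hat d},y_{\hat c}$, and is annihilated at $\rem_{c,d}\circ\addL_{cd}$ --- and find that $\scale_A(\delta)$ on the output produces the same result as rescaling the forward-limit parameter $\alpha_\star$ by $1/\delta$. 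Hence the composite is unchanged when $(C_R\vdots D_R)^c$ and $\alpha_\star$ are scaled simultaneously, i.e.\ it depends on them only through $\tfrac1{\alpha_\star}(C_R\vdots D_R)^c$. (II) One checks that the substitution $(C_R\vdots D_R)^d\mapsto\lambda(C_R\vdots D_R)^d$ together with $\delta_0\mapsto\lambda\delta_0$ is exactly $\scale_B(\lambda)$ applied to the $\mbcfw$-output; commuting $\scale_B(\lambda)$ rightward (past $\inc_l,\scale_l,x_l$ freely, past $x_A$ with a parameter rescaling, past $y_{\hat d},y_{\hat c}$) and killing it at $\rem_{c,d}\circ\addL_{cd}$ shows this substitution has the same effect as rescaling $\beta_\star$ --- hence none, by the already-proven $\beta_\star$-independence. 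So the composite depends on $(C_R\vdots D_R)^d,\delta_0$ only through $\tfrac1{\delta_0}(C_R\vdots D_R)^d$, as claimed.

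The step I expect to be the main obstacle is the careful verification of the engine together with the parameter-rescalings in the commutation steps: one has to pin down the invariant description of $\addL_{cd}$ in order to conclude the $(c,d)$-annihilation legitimately (including that it survives the passage of $\rem_{c,d}$), and one has to check that each time a scaling or a $y_c$/$x_A$-correction is pushed past the genuinely non-commuting pair $x_l,x_A$ the residue is again of the $(c,d)$-only type that $\rem_{c,d}\circ\addL_{cd}$ destroys, and that the closing steps $x_l(1),\rem_l$ introduce nothing new. The hypotheses $\varepsilon_0,\varepsilon_\star,\delta_0,\alpha_\star\neq0$ are precisely what keeps all these reparametrisations (and the genericity needed for $\FL,\mbcfw,\addL_{cd}$ to be defined in the first place) legitimate. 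Everything else is routine bookkeeping with the elementary operations, in the spirit of \cref{lem:effect_ops}.
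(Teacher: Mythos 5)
Your proposal is correct and reorganizes the paper's ``direct calculation'' around a genuinely useful invariance property: invertible column operations supported on the $(A,B)$-plane (i.e., $\scale_A,\scale_B,x_A,y_A$ and their products) are annihilated by $\rem_{AB}\circ\addL_{AB}$, since $U'$ is intrinsically the subspace of the row span with vanishing $(A,B)$ entries, which such operations preserve, and the loop part is read modulo $U'$ after deleting those two columns. Once this is established, the three reductions you perform --- that changing $\gamma_0$ or the forward-limit $\beta$-parameter amounts to post-composing $\mbcfw$ with $y_A$ or $x_A$, and that the two joint scalings amount to post-composing with $\scale_A$ or $\scale_B$ --- are checked exactly as you describe, and the commutations of these against $\inc_l,\scale_l,x_l,x_A,y_{n-1},y_{n-2}$ (with parameter rescalings exactly compensated by the stated reparametrisation of $\alpha_\star$) are routine. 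I verified the crucial ones: with $\gamma_0=0$, column $A$ of the $\mbcfw$-output is $(C_R)^A$ in the right rows and zero elsewhere, so $\scale_A$ acts there as you say; and column $B$ is $(C_R)^B+\frac{\delta_0}{\varepsilon_0}(C_R)^n$ in the right rows with $(-1)^{k_R}\delta_0$ in the $v$-row, so the simultaneous scaling of $(C_R)^B$ and $\delta_0$ is precisely $\scale_B$. Commuting $\scale_A(\lambda)$ through $x_l(\alpha_\star/\varepsilon_\star)$ and $x_A(\beta_\star/\alpha_\star)$ consistently rescales $\alpha_\star\mapsto\alpha_\star/\lambda$ in both places. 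You are also right to flag the parameter-labeling tension in the displayed formula: the consistent pairing, which your derivation and the lemma's ``in other words'' clause both give, is $\{\alpha_\star,(C_R)^A\}$ jointly scaled and $\{\delta_0,(C_R)^B\}$ jointly scaled, whereas the display as printed pairs the scalars the other way; your remark that the display holds ``up to the evident relabeling'' is the correct resolution. One small point: the lemma's phrase ``one of $D_L,D_R$ is trivial'' should, for the composite $\FL\circ\mbcfw$ to typecheck, be read as both trivial (since $\FL$ takes a loopless input), and you appropriately make that assumption at the outset.
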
The proof is a direct calculation.
This lemma motivates the following definition.

\begin{definition}\label{def:BCFW coords}
The set of \emph{BCFW coordinates} of a $\ell-$loop chord diagram $\D$ is the set $\{\halp_i,\ldots,\heps_i\}_{i\in[k]},$ if $\ell=0,$  
    or $\{\halp_i,\ldots,\heps_i\}_{i\in[k+1]_\D}\cup\{\halp_0,\hbet_0,\hdel_0,\heps_0,\halp_\star,\hgam_\star,\hdel_\star,\heps_\star\},$ if $\ell=1.$ These coordinates are subject to the following \emph{gauge equivalence} generated by
    \begin{itemize}
    \item Scaling all variables $\hzet_i$ by $\lambda\in\R_+,$ for a given $i.$
    \item Scaling $\alpha_\star,\{\hgam_i\}_{i\in\Red(\D)},\{\hdel_i\}_{i\in\Blue(\D)},$ by $\lambda\in\R_+.$
    \item Scaling $\hdel_0,\{\hdel_i\}_{i\in\Red(\D)},$ by $\lambda\in\R_+.$
    \end{itemize}
    We label the set of BCFW coordinates by     $\BCFWV=\BCFWV_\D.$ 

For later purposes we also define the set of \emph{reduced BCFW coordinates} $\BBCFWV$ as the set 
\[\{\bgam_\star,\bdel_\star,
\beps_\star\}\cup\{\balp_0,\bbet_0,
\beps_0\}\cup\bigcup_{i=1}^k\{\balp_i,\bbet_i,\bgam_i,\bdel_i,\beps_i\},\]
where $\bzet_i=\hzet_i$, except for $\balp_\star,\bdel_0$ which are not defined, $\bgam_i=\hgam_i/\halp_\star,~\bdel_i=\hdel_i/\hdel_0,~i\in\Red,$ and $\bzet_i=\hzet_i\halp_\star,~i\in\Blue,~\bzet\neq\bdel.$ The gauge equivalence for these variables is generated by scaling all the variables $\zeta_i,$ for a given $i,$ by $\lambda\in\R_+.$
\end{definition}

The description of the BCFW form follows closely \cite[Definition 12.11]{even2023cluster}.
We define a collection of evolving vectors $v^t_{\halp,i},\ldots,v^t_{\heps,i}$ where $t=0,\ldots,T$ index the steps $\pre,\bcfw,\FL$ in the generation sequence for $S_\D,$ so that $T$ is the last step, and $i\in [k],$ if $\ell=0$ or $i\in[k+1]\cup\{\star\}$. 
At $t=0$ all vectors are set to $0.$ 
As usual we denote $\tr(\D)$ by $0.$
We will construct the matrix in terms of the inputs to the BCFW and forward limit steps, which are $\halp_i,\ldots,\heps_i$ for $i\in [k],$ if $\ell=0$ or $i\in[k+1]\cup\{\star\}$. 
\begin{enumerate}
		\item If the $t$-th step is $\pre_p,$ where $p$ is the penultimate marker of $N_t$ then
        for every $i$ and every $\hzet\in\{\halp,\ldots,\heps\},$\[v_{\zeta,i}^t=\pre_pv_{\zeta,i}^{t-1}.\]
        \item If the $t$-th step is a BCFW step adding the chord $\D_l$,
        then we set
		\[v_{\halp,l}^t= \ee_{a_l}, \quad v_{\hbet,l}^t= \ee_{b_l}, \quad v_{\hgam,l}^t= (-1)^{k_R^t}\ee_{c_l}, \quad v_{\hdel,l}^t= (-1)^{k_R^t}\ee_{d_l}, \quad v_{\heps,l}^t= (-1)^{k_R^t}\ee_{n}, \quad \]where $k_R^t$ is the number of chords of the right subdiagram at time $t.$
	For other $i\neq l$  we set
		\begin{equation*}
			v_{\hzet,i}^t= \begin{cases}
				\scale_n((-1)^{k_R^t+1})y_{a_l}(\frac{\halp_l}{\hbet_l})v_{\hzet,i}^{t-1}   & \text{ if $\D_i$ belongs to the left subdiagram at time $t$ }\\
				 y_{c_l}(\frac{\hgam_l}{\hdel_l}) y_{d_l}(\frac{\hdel_l}{\heps_l})v_{\hzet,i}^{t-1} &\text{ if  $\D_i$ is in the right subcomponent at subdiagram $t$ }
	\\v_{\hzet,i}^{t-1}&\text{otherwise.}
    \end{cases}\end{equation*}If $l=\tr(\D)$ we use \cref{lem:scale_inv} to set $\hgam_l=0$ without affecting the final loopy vector space.
\item If the $t$-th step is a forward limit then we set 
\[v_{\hgam,\star}^t = \ee_{c_\star},~v_{\hdel,\star}^t = \ee_{d_\star},~v_{\halp,\star}^t = \ee_A,~v_{\heps,\star}^t = \ee_n.\]We do not need to define $v_{\hbet,\star}^t$ because of \cref{lem:scale_inv}.
Note that the $t-1$ step must be a BCFW step. The row which was added at the $t-1$th step is labeled $\tr(\D),$ and we also label it by $0.$ We would like to make this row the second top most in the matrix form, and to use it with the row $\star$ to perform the $\addL_{AB}$ operations.
Set \[\tilde{v}^{t-1}_{\hzet,i}=\begin{cases}(-1)^{k_R^{t-1}+1}\scale_{A}(-1)\scale_{B}(-1)\scale_{n}(-1){v}^{t-1}_{\hzet,i}&i=0\\
    \scale_{A}(-1)\scale_{B}(-1)\scale_{n}(-1){v}^{t-1}_{\hzet,i} &\text{otherwise}
\end{cases}.\]
The origin of these signs is that we apply $\inc_l$ to the matrix generated by the BCFW recursion up to step $t-1,$ where we order the rows in an \emph{increasing order of $a_i.$} $\inc_l$ adds the row labeled $\star$ as a topmost row, and adds signs as in \cref{def:ops}. We move row $0$ to be the second topmost, and fix the signs so that both the whole matrix and the matrix made of the $k_R^{t-1}+k_L^{t-1}$ bottom rows are nonnegative. 
Note that $N_t=N_{t-1}\setminus\{A,B\}.$ Put $v_{\hzet,0}^t=\tilde{v}_{\hzet,0}^{t-1},~\hzet\in\{\halp,\ldots,\heps\},$\[ \tilde{v}_\star^t=\hgam_\star v_{\hgam,\star}^t+\hdel_\star v_{\hdel,\star}^t+\halp_\star v^t_{\halp,\star}+\heps_\star v_{\heps,\star}^t,\qquad\qquad \tilde{v}_0^t=\halp_0 v_{\halp,0}^t+\hbet_0 v_{\hbet,0}^t+\hdel_0 v^t_{\hdel,0}+\heps_0 v_{\heps,0}^t\]and
\[ v_\star^t=\rem_{AB}\tilde{v}^t_\star=\hgam_\star v_{\hgam,\star}^t+\hdel_\star v_{\hdel,\star}^t+\heps_\star  v_{\heps,\star}^t,\qquad\qquad v_0^t=\rem_{AB}\tilde{v}^t_0=\halp_0 v_{\halp,0}^t+\hbet_0 v_{\hbet,0}^t+\heps_0 v_{\heps,0}^t.\]
For any other chord $\D_i$ we use $\tilde{v}_0^t,\tilde{v}_\star^t$ to cancel its $A,B$ entries and perform the $\FL$ operation.
\begin{align*}v_{\hzet,i}^t&=
\rem_{AB}\left(y_{c_\star}(\frac{\gamma_\star}{\delta_\star})\tilde{v}_{\hzet,i}^{t-1}-\frac{\pr_{A}\tilde{v}_{\hzet,i}^{t-1}}{\halp_\star}\tilde{v}_\star^t-(-1)^{k_L^{t-1}}\frac{\pr_{B}\tilde{v}_{\hzet,i}^{t-1}}{\hdel_0}\tilde{v}_0^t\right)\\&=
\rem_{AB}\left(y_{c_\star}(\frac{\gamma_\star}{\delta_\star})\tilde{v}_{\hzet,i}^{t-1}\right)-\frac{\pr_{A}\tilde{v}_{\hzet,i}^{t-1}}{\halp_\star}{v}_\star^t-(-1)^{k_L^{t-1}}\frac{\pr_{B}\tilde{v}_{\hzet,i}^{t-1}}{\hdel_0}{v}_0^t
\end{align*}where we refer to the projection into a one dimensional space, oriented according to $\ee_A$ or $\ee_B$, as a scalar in $\R$.
	\end{enumerate}
    \begin{rmk}\label{rmk:bcfw_vecs_fl}
    We can also obtain the same vectors $v^t_{\hzet,i}$ if we do not set $\hbet_\star=\hgam_0=0.$
    In this case, in the notations of the forward limit step, we can write for $i\in[k+1],$ \[\check{v}^{t-1}_{\hzet,i}= y_{A}(\frac{\hgam_0}{\hdel_0})\tilde{v}^{t-1}_{\hzet,i},\]which are (the proper scalings) of the vectors one would have obtained after the $t-1$ step had $\hgam_0\neq0,$ and \[\check{v}^t_\star=x_{A}(\frac{\hbet_\star}{\halp_\star})\tilde{v}^t_\star, \qquad\qquad\check{v}^{t}_0= x_{A}(\frac{\hbet_\star}{\halp_\star})y_A(\frac{\hgam_0}{\hdel_0})\tilde{v}^{t}_{0}.\]
    Then $v_{\hzet,i}^t=
\rem_{AB}(\check{v}_{\hzet,i}^t),$ where
    \begin{align}\label{eq:v_in_terms_check_v}
    \check{v}_{\hzet,i}^t=
y_{c_\star}(\frac{\gamma_\star}{\delta_\star})x_{A}(\frac{\hbet_\star}{\halp_\star})\check{v}_{\hzet,i}^{t-1}-s\check{v}_\star^t-r\check{v}_0^t.
\end{align}
where $r=r(\hzet,i)=(-1)^{k_L^{t-1}}\frac{\pr_{B}\tilde{v}_{\hzet,i}^{t-1}}{\hdel_0},$ and $s=s(\hzet,i)=\frac{\pr_{A}\tilde{v}_{\hzet,i}^{t-1}-\hgam_0r}{\halp_\star}$ are chosen to make the $A,B$ entries of the vector in the parenthesis $0.$ Direct calculation shows that the result is independent of $\hbet_\star,\hgam_0.$

\end{rmk}

It will also we useful to define the vectors\begin{equation}\label{eq:passage_from_v_hat_to_v_bar}v^t_{\bzet,i}=\begin{cases}
\hdel_0v^t_{\hdel,i},&\bzet=\bdel,\D_i\in\Red_\D\\
   \halp_\star v^t_{\hzet,i},&\bzet=\bgam,~\D_i\in\Red_\D~\text{or }\bzet=\bdel,~\D_i\in\Blue_\D\\v^t_{\hzet,i},&\text{otherwise.}
   \end{cases}\end{equation}
Then, for every $t\in [T]$, the row indexed $i$ of the BCFW pair at time $t$ equals
\begin{equation}\label{eq:recipevecs}
v_{i}^t=\begin{cases}\frac{1}{c_i}\sum_{\hzet\in\{\halp,\ldots,\heps\}} \hzet_i v_{\hzet,i}^t=\sum_{\bzet\in\{\balp,\ldots,\beps\}} \bzet_i v_{\bzet,i}^t,,&i\neq0,\star\\
\sum_{\hzet\in\{\halp,\hbet,\heps\}} \hzet_i v_{\hzet,i}^t
=\sum_{\bzet\in\{\balp,\bbet,\beps\}} \bzet_i v_{\bzet,i}^t
,&i=0,\\
\sum_{\hzet\in\{\hgam,\hdel,\heps\}} \hzet_i v_{\hzet,i}^t=\sum_{\bzet\in\{\bgam,\bdel,\beps\}} \bzet_i v_{\bzet,i}^t,&i=\star,\end{cases}\end{equation}
where $c_i=\halp_\star$ for $\D_i\in\Blue_\D$ and is $1$ otherwise.
We denote $v^T_{\hzet,i}$ by $v^{\D_i}_\hzet,$ $v^T_{\bzet,i}$ by $v^{\D_i}_\bzet,$ and $v^T_i$ by $v^{\D_i}.$

\begin{obs}\label{obs:bcfw_form}
With the above notations, let $C\vdots D$ be the matrices defined as follows.
If $\ell=0$ then $C$ is the matrix whose rows are the vectors $v^{\D_i},~i\in[k],$ ordered in an increasing order of $a_i,$ and $D$ is the empty matrix. If $\ell=1$ then $C$ is the matrix whose rows are the vectors $v^{\D_i},~i\in[k+1]_\D,$ ordered in an increasing order of $a_i,$ and $D$ is the matrix whose first row is $v^{\D_\star}$ and the second is $v^{\D_0}.$
    Then $C\vdots D$ is the $\ell-$loop vector space $U\vdots V\in S_\D$ which is the result of the BCFW recursion with the parameters $\halp_i,\ldots,\heps_i,$ for $i\in[k]$ if $\ell=0$ or $[k+\ell]\cup\star,$ if $\ell=1$.

    The vectors $v^{\D_i}_\hzet$ are Laurent polynomials in the BCFW coordinates of the chords $\D_h$ for $h\in\ahead(i).$ The vectors $v^{\D_i},v^{\D_i}_\bzet$ are Laurent polynomials in the reduced BCFW coordinates of the same chords set of chords. 
\end{obs}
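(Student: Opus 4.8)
The statement has two parts: first, that the matrix $C\vdots D$ assembled from the vectors $v^{\D_i}$ is exactly the loopy vector space produced by the BCFW recursion with the given parameters; second, that the $v^{\D_i}_{\hzet}$ (resp.\ $v^{\D_i},v^{\D_i}_{\bzet}$) are Laurent polynomials in the (reduced) BCFW coordinates indexed by $\ahead(i)$. Both parts are proved by induction on the length $T$ of the generation sequence, tracking the evolving vectors $v^t_{\hzet,i}$ step by step. The plan is to set up the induction so that the inductive hypothesis at time $t$ asserts exactly that the matrix with rows $v^t_i$ (ordered by $a_i$, with the two $D$-rows $v^t_\star,v^t_0$ on top when $\ell=1$) is a matrix representative of the loopy vector space obtained by running the recursion through step $t$, and simultaneously that $v^t_{\hzet,i}$ depends only on the coordinates of chords that end weakly after $\D_i$ relative to the diagram restricted to $N_t$.

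First I would handle the base case $t=0$ (all vectors zero, $K=\emptyset$, the trivial cell) and then the three inductive steps separately. For a $\pre_p$ step, both claims are immediate since $v^t_{\zeta,i}=\pre_p v^{t-1}_{\zeta,i}$ and $\pre_p$ commutes with taking row span, and \cref{lem:effect_ops} records its effect. For a BCFW step adding $\D_l$: the new row $v^t_l=\sum\hzet_l v^t_{\hzet,l}$ is precisely the vector $v=\halp_l\ee_{a_l}+\hbet_l\ee_{b_l}+(-1)^{k_R^t}(\hgam_l\ee_{c_l}+\hdel_l\ee_{d_l}+\heps_l\ee_n)$ from \cref{def:bcfw-map}, and the transformations applied to $v^{t-1}_{\hzet,i}$ for $i\ne l$ (namely $\scale_n((-1)^{k_R^t+1})y_{a_l}(\halp_l/\hbet_l)$ on the left block and $y_{c_l}(\hgam_l/\hdel_l)y_{d_l}(\hdel_l/\heps_l)$ on the right block) match exactly the definitions of $C_L',C_R',D_r'$ in \cref{def:bcfw-map} after recalling that $\pre$ of the complementary markers is absorbed into the ambient index set $N_t$; the case $l=\tr(\D)$ uses \cref{lem:scale_inv} to justify setting $\hgam_l=0$. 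The Laurent-polynomiality claim for this step follows because $y_h(t)$ acts by adding $t$ times one coordinate to another, so it only introduces the ratios $\halp_l/\hbet_l$, $\hgam_l/\hdel_l$, $\hdel_l/\heps_l$ of the newly added chord — and $\D_l$, being the rightmost top chord at time $t$, ends weakly after every $\D_i$ that survives to time $t$, so it lies in $\ahead(i)$; passing from $\hzet$ to $\bzet$ absorbs the scaling ambiguities by \eqref{eq:passage_from_v_hat_to_v_bar}, turning the Laurent monomials into ones in the reduced coordinates.

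The forward-limit step is the main obstacle, and most of the work goes there. I would verify that the explicit formulas in item (3) of the construction — the sign-twisted $\tilde v^{t-1}_{\hzet,i}$, the auxiliary rows $\tilde v^t_\star,\tilde v^t_0$, and the combination in the displayed formula for $v^t_{\hzet,i}$ — literally implement the composite $\rem_l\circ x_l(1)\circ\rem_{AB}\circ\addL_{AB}\circ M_1$ of \cref{def:forward_limit} applied to the BCFW pair at time $t-1$ after $\inc_l$ and $\scale_l(\varepsilon_\star)$. Concretely: applying $\inc_l$ reorders rows increasingly in $a_i$ and introduces the signs displayed in $\tilde v$; the operators $y_{c_\star}(\gamma_\star/\delta_\star)y_{d_\star}(\delta_\star/\varepsilon_\star)x_A(\beta_\star/\alpha_\star)x_l(\alpha_\star/\varepsilon_\star)$ transform the rows, with the $y_{d_\star},x_A,x_l$ parts together with $\scale_l(\varepsilon_\star)$ producing exactly the vector $\tilde v^t_\star$ spanning the new loop $V'_{\ell+1}$; then $\addL_{AB}$ is carried out explicitly by using $\tilde v^t_\star$ and $\tilde v^t_0$ to clear the $A,B$ entries of every other row, which is what the $-s\,\tilde v^t_\star-r\,\tilde v^t_0$ correction does, and $\rem_{AB}$ deletes those now-zero columns; finally $\rem_l\circ x_l(1)$ removes the auxiliary marker $l$. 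I would invoke \cref{rmk:last_y_c} to postpone the $y_{c_\star}$ operation and \cref{rmk:bcfw_vecs_fl} (equation \eqref{eq:v_in_terms_check_v}) to see that the answer is independent of the gauge parameters $\hbet_\star,\hgam_0$ that were set to zero, so that the assembled $C\vdots D$ is a genuine representative of $\FL(S_{\D'})=S_\D$. For Laurent-polynomiality at this step, note the new coordinates introduced are $\hgam_\star,\hdel_\star,\halp_\star,\heps_\star,\halp_0,\hbet_0,\hdel_0,\heps_0$, all belonging to $\D_0=\D_\star$, which is in $\ahead(i)$ for every $i$; the denominators $\halp_\star,\hdel_0$ appearing in $s,r$ are exactly the ones removed by passing to reduced coordinates via $\bgam_i=\hgam_i/\halp_\star$, $\bdel_i=\hdel_i/\hdel_0$, $\bzet_i=\hzet_i\halp_\star$ for blue chords, so $v^{\D_i}$ and $v^{\D_i}_{\bzet}$ are honest Laurent polynomials in $\BBCFWV$. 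The only genuinely delicate bookkeeping is the sign tracking in $\tilde v$ and the claim that the resulting matrix and its bottom-$(k_L^{t-1}+k_R^{t-1})$-row submatrix are both nonnegative, which is needed for $C\vdots D$ to represent a point of the nonnegative loopy Grassmannian; this I would reduce to \cref{lem:effect_ops}, \cref{prop:fl_domain}, and a careful but routine parity count of how many row swaps $\inc_l$ and the relabeling of row $0$ induce.
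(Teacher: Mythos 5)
Your proposal is correct and takes essentially the same route as the paper: the paper's "proof" is the single sentence that the claim "is immediate from unveiling the effect of $\pre,\bcfw,$ and $\FL$ steps in the generation sequence of the BCFW cell," and your induction on the generation sequence, with a step-by-step check of the three operations against \cref{def:bcfw-map} and \cref{def:forward_limit}, is exactly what that unveiling consists of. Your identification of the forward-limit step as the only genuinely delicate part, and of the denominators $\halp_\star,\hdel_0$ as the ones removed by passing to the reduced coordinates via \eqref{eq:passage_from_v_hat_to_v_bar}, matches the intent of the construction.
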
The proof is immediate from unveiling the effect of $\pre,\bcfw,$ and $\FL$ steps in the generation sequence of the BCFW cell. 
\begin{obs}\label{obs:parameterizing_S_check_D}
Let $t_\FL$ be the time of the forward limit step. Write $S_{\check\D}$ for the space of loopy vector spaces which have a matrix representation  whose rows are the non zero vectors $\check{v}^{t_\FL}_i=\sum_{\hzet\in\{\halp,\ldots,\heps\}}\hzet_i\check{v}^{t_\FL}_{\hzet,i},$ where each $\hzet_i>0,$ such that the tree part is the span of the rows other than $\tr(\D),\star.$
Then the cell after the forward limit is obtained by taking elements of $S_{\check\D}$, and chopping columns $A,B.$ 

We slightly abuse notations and refer to the variables $\hzet_i$, for $i$ such that the corresponding row is non zero, as the BCFW variables for $S_{\check\D}.$ 
\end{obs}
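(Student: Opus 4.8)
The plan is to feed the canonical representative produced by the recursion at time $t_\FL-1$ through the forward limit map of \cref{def:forward_limit} and to match its operations, one at a time, with the evolving vectors $v^t_{\hzet,i},$ $\tilde v^{t}_{\hzet,i},$ $\check v^{t}_{\hzet,i}$ of \cref{subsec:BCFW_form} and \cref{rmk:bcfw_vecs_fl} at $t=t_\FL;$ the observation then becomes a bookkeeping consequence of that construction.

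I would first recall, by the same argument as \cref{obs:bcfw_form} applied one step earlier, that the $0$-loop pair present just before the forward limit step is the matrix whose row indexed $i$ is $v^{t_\FL-1}_i=\sum_{\hzet\in\{\halp,\dots,\heps\}}\hzet_i v^{t_\FL-1}_{\hzet,i}$ as the $\hzet_i$ range over the positive reals, among whose rows is the one labeled $0=\tr(\D)$ created by the BCFW step at $t_\FL-1.$ Applying to this matrix the operations $\inc_l,$ $\scale_l(\varepsilon_\star)$ and (using \cref{rmk:last_y_c} to defer the last $y_{c_\star}$) the operators $y_{d_\star},x_A,x_l$ of \cref{def:forward_limit}, and invoking \cref{lem:effect_ops} together with the explicit matrix forms of $x_h,y_h,$ a direct computation shows that after the row reordering ($\star$ topmost, $0=\tr(\D)$ second) and the sign normalisation recorded in step (3) of \cref{subsec:BCFW_form} the matrix fed into $\addL_{AB}$ has rows $\sum_{\hzet}\hzet_i\tilde v^{t_\FL}_{\hzet,i},$ with $\tilde v^{t_\FL}_\star$ and $\tilde v^{t_\FL}_0$ on top, the former being the image of the row introduced by $\inc_l.$

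Next I would run the $\addL_{AB}$ step. By the construction of the $\check v$'s in \cref{rmk:bcfw_vecs_fl}, the $A,B$-columns of every row other than $0$ and $\star$ are cleared by subtracting the multiples of $\tilde v^{t_\FL}_0,\tilde v^{t_\FL}_\star$ whose coefficients are the $r=r(\hzet,i),$ $s=s(\hzet,i)$ of \eqref{eq:v_in_terms_check_v}; the resulting rows are by definition the $\check v^{t_\FL}_i,$ and $\addL_{AB}$ returns the loop $V'_\star=\Span(\tilde v^{t_\FL}_\star)$ together with the tree part spanned by the $\check v^{t_\FL}_i$ and $\tilde v^{t_\FL}_0.$ So the loopy pair reached at this stage is a representative of a generic element of $S_{\check\D}.$ The remaining operations of \cref{def:forward_limit} are $\rem_{AB},$ which deletes the columns $A,B$ --- the ``chopping'' in the statement --- leaving $v^{t_\FL}_i=\rem_{AB}\check v^{t_\FL}_i$ in the tree part and $v^{t_\FL}_\star=\rem_{AB}\tilde v^{t_\FL}_\star,$ $v^{t_\FL}_0=\rem_{AB}\tilde v^{t_\FL}_0$ in the loop, and then $x_l(1),\rem_l,$ which merely remove the auxiliary marker $l.$ Hence the output is exactly the cell obtained after the forward limit; reading the chain of identifications backwards gives the reverse inclusion and surjectivity, and the independence of the result from $\hbet_\star$ and $\hgam_0$ noted in \cref{rmk:bcfw_vecs_fl} shows the free parameters are precisely the positive $\hzet_i$ attached to nonzero rows, which is the claimed abuse of terminology.

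I expect the only genuinely delicate part to be the bookkeeping of the auxiliary column $l$ and of the various signs --- those of the $\inc_l$ convention (which negates the columns $A,B,n$), of the row reordering, and of the closing $x_l(1),\rem_l;$ granting \cref{lem:effect_ops} and the well-definedness of $\addL_{AB}$ established in \cref{def:ops}, what is left is a mechanical comparison with the formulas of \cref{subsec:BCFW_form}.
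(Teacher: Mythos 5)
Your plan --- feeding the canonical representative reached just before the forward limit through the operations of \cref{def:forward_limit} and matching them against the evolving vectors $v^t,\tilde v^t,\check v^t$ of \cref{subsec:BCFW_form} and \cref{rmk:bcfw_vecs_fl} --- is exactly the intended unfold-the-definitions argument; the paper states the observation without a separate proof, treating it as immediate from that construction.

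There is, however, a genuine misstatement at the $\addL_{AB}$ step. You write that $\addL_{AB}$ ``returns the loop $V'_\star=\Span(\tilde v^{t_\FL}_\star)$ together with the tree part spanned by the $\check v^{t_\FL}_i$ and $\tilde v^{t_\FL}_0$.'' By \cref{def:ops}, $\addL_{AB}$ takes a $(k+2)$-row matrix with independent columns $A,B$, cancels the $A,B$ entries of all rows but the top two, declares the span of those \emph{two} top rows to be the new $2$-dimensional loop, and takes the span of the remaining $k$ rows as the tree part. After the row reordering and sign normalization of step (3) of \cref{subsec:BCFW_form}, the two top rows are $\tilde v^{t_\FL}_\star$ \emph{and} $\tilde v^{t_\FL}_0$, so the loop is $\Span(\tilde v^{t_\FL}_\star,\tilde v^{t_\FL}_0)$, and the tree part is spanned only by $\check v^{t_\FL}_i$ for $i\neq 0,\star$. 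Your decomposition places $\tilde v^{t_\FL}_0$ in the tree part and gives a one-dimensional loop --- dimensions that are impossible for a $1$-loop pair --- and, more to the point, it contradicts the clause ``the tree part is the span of the rows other than $\tr(\D),\star$'' in the very statement you are proving, so your identification of the intermediate matrix with a generic element of $S_{\check\D}$ does not follow as written. Once you move $\tilde v^{t_\FL}_0$ into the loop, the rest of your argument goes through: $\rem_{AB}$ is the claimed chopping, and, since the $\check v$ vectors already encode the $\ee_l$-column via $v_{\heps,\star}^{t_\FL}=\ee_n$, the closing $x_l(1),\rem_l$ do not require a separate step beyond that absorption.
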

\begin{rmk}\label{rmk:v_becomes_explicit}
The vectors $v^{\D_i}_\bzet$ are relatively simple for $\bzet=\balp,\bbet,$ and equal $\ee_{a_i},\ee_{b_i}$ respectively. For $\bzet=\bgam$ 
\[v^{\D_i}_\bgam = \begin{cases}-v^{\D_\star},&\D_i\in\Red\\\frac{u_\star}{\hgam_\star},&\D_i\in\Blue\\\ee_{c_i},&\text{otherwise}\end{cases},\]where $u_\star=\pr_{c_\star d_\star}v^{\D_\star}.$ The vector $v^{\D_i}_\bdel$ might be more complicated, but always has the form
\[v^{\D_i}_\bdel = \begin{cases}-v^{\D_0}+c_{\D,i}v^{\D_i}_\bgam,&\D_i\in\Red\\-v^{\D_\star}+c_{\D,i}v^{\D_i}_\bgam,&\D_i\in\Blue\\\ee_{d_i}+c_{\D,i}v^{\D_i}_\bgam,&\text{otherwise}\end{cases},\]
\end{rmk}
\begin{definition}\label{def:bcfw_matrix_form}
   The loopy matrix $C\vdots D$ defined in the above observation is called the \emph{BCFW representative}, \emph{BCFW form} and the \emph{BCFW matrix} of $U\vdots V.$ We will transfer terms from chord diagrams to the rows of the matrix, e.g. a red row, or row $i$ is an ancestor of row $j,$ when these terms hold for the corresponding chords.  
\end{definition}
\begin{rmk}
   From the construction the matrix $C$ is always upper triangular when $\ell=0$ in the sense of rectangular matrices, meaning that the first non zero entry of the $i$th row comes before the first non zero entry of the $(i+1)-$th row. When $\ell=1$, this property breaks because of the red rows. However, by subtracting appropriate multiples of $D_0$ to the red rows of $C$, $C$ becomes an upper triangular matrix, and moreover one can add the rows $D_0,\D_\star$ in appropriate places to obtain an upper triangular representative of $U+V$ in a canonical (up to row scalings) way. The entries of $C\vdots D$ are Laurent polynomials in the variables of the BCFW recursion. The entries of the row corresponding to $\D_i$ are linear in the variables $\hzet_i.$
   Note that the construction of the BCFW matrix is essentially following the steps of the BCFW recursion in the level of matrix representatives. The only time a choice is being made is in the forward limit step, in which we chose to use the $\tr(\D)$ row, in addition to the yellow row, in order to cancel the $A,B$ columns of the other rows.
\end{rmk}
\begin{cor}\label{cor:BCFW_up_2_gauge}
For $\D\in\CD_{n,k},$ every point in $S_\D$ is the result of applying the BCFW recursion step with respect to some choice of positive BCFW coordinates in $\R_+^{\BCFWV}$. Different choices which are equivalent under the gauge equivalence yield the same element of $S_\D,$ hence every point in $S_\D$ is also the result of applying the BCFW recursion step with respect to some choice of positive reduced BCFW coordinates in $\R_+^{\BBCFWV}$\end{cor}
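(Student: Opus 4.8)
The plan is to induct on the number of steps in a recipe for $S_\D$ (\cref{def:treeBCFWfromCD}), reading the required parametrization off \cref{obs:bcfw_form} together with the descriptions of the operations $\pre,\bcfw,\FL$. In the base case $K=\emptyset$ the cell is a single point and $\BCFWV_\D=\emptyset$, so there is nothing to prove. Assuming the statement for all shorter recipes, I would look at the last step producing $\D$. If it is $\pre_p$, with $\D=\pre_p\D'$, then $S_\D=\pre_p S_{\D'}$, the operation $\pre_p$ is a bijection between these cells, and $\BCFWV_\D=\BCFWV_{\D'}$ with the same gauge, so the claim is immediate from the inductive hypothesis. If it is a product step $\D=\D_L\bcfw\D_R$, then $S_\D=\mbcfw\!\left(S_{\D_L},\Gr_{1,5}^{>},S_{\D_R}\right)$ by \cref{prop:BCFW_and_pos} (using \cref{prop:fixed_positroids} to know $\mbcfw$ is defined on the relevant locus). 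Each point of $\Gr_{1,5}^{>}$ is the row span of a positive vector $(\halp_l,\hbet_l,\hgam_l,\hdel_l,\heps_l)$, unique up to a common positive rescaling, and that rescaling gives the same point of $\Gr_{1,5}^{>}$ --- hence the same output of $\mbcfw$ --- which is precisely the first gauge generator of \cref{def:BCFW coords} for the index $l$. Applying the inductive hypothesis to $S_{\D_L}$ and $S_{\D_R}$, feeding the chosen data into $\mbcfw$, and invoking \cref{obs:bcfw_form} then identifies the data chosen along the whole recipe with a point of $\R_+^{\BCFWV_\D}$, and shows that gauge-equivalent inputs yield the same loopy vector space.

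The delicate case is a forward limit step $\D=\FL(\D')$, where $S_\D$ is obtained, after chopping columns $A,B$, from $\FL\!\left(S_{\D'},\Gr_{1,5}^{>}\right)$. Here a point of $\Gr_{1,5}^{>}$ is a positive vector $(\hgam_\star,\hdel_\star,\hbet_\star,\halp_\star,\heps_\star)$ up to common rescaling, and by induction a point of $S_{\D'}$ is parametrized by the positive BCFW coordinates of its chords up to gauge. Combining this with \cref{obs:bcfw_form}, \cref{obs:parameterizing_S_check_D} and \cref{rmk:bcfw_vecs_fl} shows that the output depends only on the coordinates listed in $\BCFWV_\D$: by \cref{lem:scale_inv} it is insensitive to $\hgam_0$ and $\hbet_\star$ (which is why these are omitted from $\BCFWV_\D$), and, the output being independent of $\hbet_\star$, a common rescaling of $\halp_\star,\hgam_\star,\hdel_\star,\heps_\star$ does not change it either --- the first gauge generator for the index $\star$. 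It remains to realize the last two gauge generators as honest redundancies of the output, which is exactly \cref{lem:scale_inv}: rescaling a column of the right factor together with $\hdel_0$, or another column together with $\halp_\star$, leaves the output unchanged. Tracking how these two column rescalings on $C_R\vdots D_R$ propagate through the remaining $\bcfw$ and $\FL$ steps of the recipe identifies them with the two remaining generators of \cref{def:BCFW coords} --- the one scaling $\{\hgam_i\}_{i\in\Red(\D)}$ and $\{\hdel_i\}_{i\in\Blue(\D)}$ together with $\halp_\star$, and the one scaling $\{\hdel_i\}_{i\in\Red(\D)}$ together with $\hdel_0$. This gives both assertions for $\D$, and the induction goes through.

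For the final sentence, by their very definition the reduced coordinates $\BBCFWV$ form a complete set of invariants for the subgroup of the gauge group generated by the last two generators of \cref{def:BCFW coords} (that subgroup acts freely on positive coordinates, and $\BBCFWV$ realizes the slice $\halp_\star=\hdel_0=1$ up to the remaining row-scaling gauge). Hence every gauge class of positive BCFW coordinates is named by an arbitrary positive point of $\R_+^{\BBCFWV}$, and the second assertion follows from the first.

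I expect the forward limit step to be the main obstacle. What must be checked is that, after propagating the column rescalings of \cref{lem:scale_inv} through every later $\bcfw$ and $\FL$ operation in the recipe, the induced rescalings of the BCFW coordinates are exactly the two generators of \cref{def:BCFW coords} --- with no extra hidden relations among the coordinates and nothing left over. This is precisely where the fine combinatorics of the colors of the chords (red, blue, purple) and of the pre-$\FL$ diagram enters, and it is the only step that is not a routine unwinding of definitions.
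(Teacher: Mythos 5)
Your proposal takes essentially the same route the paper intends: the paper gives no standalone proof of \cref{cor:BCFW_up_2_gauge}, presenting it as an immediate consequence of \cref{lem:scale_inv} and the recursive construction of the BCFW form in \cref{subsec:BCFW_form} culminating in \cref{obs:bcfw_form}. Your reconstruction correctly assembles those pieces: inducting on the recipe, treating $\pre$ as trivial, obtaining the row-scaling gauge for a $\bcfw$ step from projectivity of $\Gr_{1,5}^>$, obtaining the row-scaling gauge for $\star$ from projectivity plus the $\hbet_\star$-independence in \cref{lem:scale_inv}, and sourcing generators~2 and~3 of the gauge from the two column rescalings in \cref{lem:scale_inv}.

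The one substantive item you flag but do not carry out is precisely the part that is not a routine unwinding: identifying the rescalings $\scale_A,\scale_B$ of the pre-FL right factor $C_R\vdots D_R$ with the coordinate rescalings $\{\hgam_i\}_{i\in\Red},\{\hdel_i\}_{i\in\Blue}$ (respectively $\{\hdel_i\}_{i\in\Red}$). This needs to be traced \emph{backwards} through the BCFW form of $\D_R$ (your phrase ``the remaining $\bcfw$ and $\FL$ steps'' reads as if the scalings propagate forwards, but they must be pulled back through the earlier BCFW and $\pre$ steps that built $C_R\vdots D_R$). The point to verify is that the entries of columns $A$ and $B$ of the BCFW form of $\D_R$ are, row by row, proportional to exactly one of $\hgam_i,\hdel_i$ according to whether the chord ends at $(d,A)$ or $(A,B)$; this is where the $y_{c}$ same-end modifications could in principle introduce cross-terms, and one has to check they do not obstruct the clean diagonal rescaling. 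This can be read off from \cref{lem:L=1_BCFW2Domino} (items for blue and red chords) together with \cref{cor:BCFW2dominoExplicit}, so there is no conceptual gap, but as written the proposal leaves this as an expectation rather than a proof.
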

\subsection{BCFW cells and positroids}\label{subsec:positroids}
We recall that a \emph{positroid} \cite{postnikov} for $\Gr_{k,N}^\geq,$ where $N$ is an ordered index set is the subset of $\mathcal{B}\subseteq\binom{N}{k}$ for which there exists an element of $\Gr_{k,N}^\geq $ for which precisely the Pl\"ucker coordinates labeled $\mathcal{B}$ are non zero. The elements of $\mathcal{B}$ are called \emph{bases} and the subspace of $\Gr_{k,N}^\geq$ with these non zero Pl\"ucker coordinates is called the \emph{positroid cell}. Positroid and positroid cells are well studied objects, with many beautiful properties, and can be labeled by different combinatorial object. See \cite{postnikov} for a beautiful introduction.
\begin{definition}\label{def:1-loop positroid}
A \emph{$1-$loop positroid cell} $\Pos=(\PosC,\PosD)$ for $\Gr_{k,N;1}^{\geq}$ is
a pair of positroid cells of $\Gr_{k,N}^{\geq},\Gr_{k+2,N}^{\geq}$ respectively,
such that there exists $U\vdots V\in\Gr_{k,N;1}^{\geq}$ 
with $U\in\PosC,~U+V\in\PosD.$ For such $(U\vdots V)$ we refer to $(\PosC,\PosD)$ as the $1-$loop positroid cell of $U\vdots V$. 

The \emph{$1-$loop} positroid cell of $U\vdots V$ is denoted $\Pos(U\vdots V)=(\PosC(U\vdots V),\PosD(U\vdots V)).$
If a set $S\subseteq\Gr_{k,N;1}^{\geq}$ is contained in a single $1-$loop positroid cell, we refer to the cell as the $1-$loop positroid cell of $S,$ and denote it by $\Pos(S)=(\PosC(S),\PosD(S)).$

The \emph{$1-$loop positroid} of a $1-$loop positroid cell $(\PosC,\PosD)$ is the pair $\PosB=(\PosBC,\PosBD)$ where $\PosBC,~\PosBD$ are the collection of bases for $\PosC,~\PosD$ respectively.
We will sometimes refer to a basis which \emph{belongs} to $\PosBC~(\PosBD)$ as a basis for $\PosC~(\PosD).$ We will sometimes refer to $\PosC$ as the \emph{tree part} of $(\PosC,\PosD),$ and to $\PosD$ as the \emph{loop part}. By slight abuse of notations we extend these notions to standard, that is, non loopy ($\ell=0$), positroids and positroid cells: the \emph{tree part} of a standard positroid (cell) is the positroid (cell) itself, while the loop part is trivial.
In cases that it is clear from the context that we discuss a $1-$loop positroid, or in case we state a claim for both $1-$loop positroids and standard positroids we may omit the prefix '$1-$loop'. In cases we would like to stress that a positroid is not a $1-$loop positroid we may refer to it as a \emph{$0-$loop} positroid. 

The intersection of a $1-$loop positroid cell $(\PosC,\PosD)$ with $\Gr_{k,N;\ell}^{\geq}$ for $\ell\in\{0,1\}$ is called \emph{the realizable part of $(\PosC,\PosD)$} and is denoted by $(\PosC,\PosD)^{\real}.$

It follows from \cref{lem:effect_ops} that the operations $\inc_i,\pre_i,\rem_i,x_i,y_i,\addL_{AB}$ also act on positroids and positroid cells. We use the same notations to describe their actions on these objects.
\end{definition}
Every point in $\Gr_{k,n;\ell}^\geq$ for $\ell=0,1$ belongs, by definition, to some $1-$loop positroid cell. We will see below that every BCFW cell is contained in a $1-$loop positroid cell. While tree level BCFW cells are positroid cells, the same does not hold in the loop level, as the next example shows.
\begin{ex}
Let $k=1,n=5,\ell=1.$ The dimension of $\Gr_{1,5;1}$ is $8.$ There is only one $1-$loop positroid cell of this dimension, where $\PosC,\PosD$ are the big cells $\Gr_{1,5}^>,~\Gr_{3,5}^>$ of $\Gr_{1,5}^\geq,\Gr_{3,5}^{\geq}$ respectively. But there are three $1-$loop BCFW cells, which must therefore correspond to it.
\end{ex}

In this section we will study the effect of the operations $\pre,\bcfw,\FL$ on $1-$loop positroids.

The proof of the following proposition is straight forward.
\begin{prop}\label{prop:matroid_under_pre}
Let $N$ be an index set, $p\in N$ the penultimate marker.
If $S\subseteq\Gr_{k,N\setminus\{p\};1}$ is contained in a $1-$loop positroid cell $(\PosC,\PosD)$ then $\pre_p S$ is contained in the $1-$loop positroid cell $\pre_p \Pos=(\pre_p\PosC,\pre_p\PosD)$ whose bases are the same bases as $(\PosC,\PosD),$ thought of as subsets of $N$ rather than $N\setminus \{p\}$. A similar statement holds in tree level.
\end{prop}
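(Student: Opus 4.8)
The plan is to reduce the claim to the statement about Plücker coordinates in \cref{lem:effect_ops}. Recall that $\pre_p$ acts on the tree part $C$ and on $C+D$ (equivalently $D$) by inserting a zero column at position $p$, and that by \cref{lem:effect_ops} the effect on Plücker coordinates is precisely $P_I(\pre_p W)=P_I(W)$ when $p\notin I$ and $P_I(\pre_p W)=0$ when $p\in I$. So for a single $1$-loop vector space $U\vdots V$ with tree part having nonzero Plücker set $\PosBC$ and loop part having nonzero set $\PosBD$ (as subsets of $\binom{N\setminus\{p\}}{k}$ and $\binom{N\setminus\{p\}}{k+2}$ respectively), the space $\pre_p(U\vdots V)$ has tree Plücker coordinate $P_I$ nonzero iff $p\notin I$ and $I\in\PosBC$, and similarly for the loop part. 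Since no basis in $\PosBC$ or $\PosBD$ contains $p$ (they are subsets of $N\setminus\{p\}$), this is exactly the condition "$I\in\PosBC$" (resp. $\PosBD$), now viewed inside $\binom{N}{k}$. Thus the $1$-loop positroid cell of $\pre_p(U\vdots V)$ is $(\pre_p\PosC,\pre_p\PosD)$ with the asserted bases.

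The next step is to handle a whole set $S$ contained in a single $1$-loop positroid cell $(\PosC,\PosD)$. Recall that "$S$ is contained in $(\PosC,\PosD)$" means every $U\vdots V\in S$ satisfies $U\in\PosC$, $U+V\in\PosD$, i.e. the tree and loop Plücker supports are uniform across $S$ and equal to $\PosBC,\PosBD$. Applying the pointwise computation above to each element of $S$ shows that every element of $\pre_p S$ has tree support $\pre_p\PosBC$ and loop support $\pre_p\PosBD$, so $\pre_p S$ is contained in the single $1$-loop positroid cell $(\pre_p\PosC,\pre_p\PosD)$. One small point worth checking is that $(\pre_p\PosC,\pre_p\PosD)$ is a genuine $1$-loop positroid cell in the sense of \cref{def:1-loop positroid}, i.e. that there exists an element of $\Gr_{k,N;1}^{\geq}$ realizing both supports simultaneously: this is immediate by taking the image under $\pre_p$ of any realizing element in $\Gr_{k,N\setminus\{p\};1}^{\geq}$, which exists since $S\neq\emptyset$ (and $\pre_p$ preserves nonnegativity and, when $k+2\le|N\setminus\{p\}|$, full rank — for the positroid statement we only need nonnegativity). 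The tree-level statement is the same argument with the loop part deleted, using only the first bullet of \cref{lem:effect_ops}.

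I do not expect a serious obstacle here — the proposition is a bookkeeping consequence of the Plücker formula for $\pre_p$ that is already recorded. The only mildly delicate point is keeping straight that the bases of $(\PosC,\PosD)$ are literally the same subsets, merely reinterpreted as living in $\binom{N}{k}$ (resp. $\binom{N}{k+2}$) rather than in $\binom{N\setminus\{p\}}{\bullet}$, and confirming that inserting the zero column kills no base of $\PosC$ or $\PosD$ precisely because none of them contained $p$ to begin with. This is why the statement is phrased with the penultimate marker $p$ of the larger set $N$ and the smaller-set positroid living on $N\setminus\{p\}$. Given \cref{lem:effect_ops}, writing this out is a line or two, which is presumably why the authors call the proof "straight forward."
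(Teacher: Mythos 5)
The paper does not actually give a proof of this proposition; it just remarks that the proof is straightforward. Your argument is correct and is exactly the bookkeeping argument the authors are alluding to: it reduces to the $\pre_p$ case of \cref{lem:effect_ops}, applied separately to the tree part and the loop part, together with the observation that $\pre_p$ preserves nonnegativity so that $(\pre_p\PosC,\pre_p\PosD)$ is indeed realizable.
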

\begin{definition}\label{def:4biden}
Let $V \in \Gr_{k,N}$. A subset $J \subseteq N$ is \emph{coindependent for $V$} if $V$ has a nonzero Pl\"ucker coordinate $\lr{V}_I$, such that $I \cap J = \emptyset$. By convention, if $k=0$ then all subsets are coindependent for $V$. If $S \subset  \Gr_{k,n}$, then $J$ is \emph{coindependent for} $S$ if $J$ is coindependent for all~$V \in S$. If $S$ is a positroid cell with a positroid $\mathcal{B}$ we will sometimes say that $J$ is coindependent for $\mathcal{B}.$

	For $r\in\mathbb{N},$ $P\subset\Gr_{k,N}^{\geq}$ is \emph{$r$-coindependent} if every $r$-element subset of $N$ is co-independent for $P$. 

 Let $V \in \Gr_{k,N}$. $j\in N$ is \emph{independent for $V$} if $V$ has a nonzero Pl\"ucker coordinate $\lr{V}_I$, with $j\in I.$
 If $S \subset  \Gr_{k,n}$, then $j$ is \emph{independent for} $S$ if $j$ is independent for every~$V \in S$. If $S$ is a positroid cell with a positroid $\mathcal{B}$ we will sometimes say that $j$ is independent for $\mathcal{B}.$ In these cases we will also say that $V,~S$ or $\mathcal{B}$ are $j-$independent. 
 \begin{obs}\label{obs:ind_coind}
 
$\{i_1,\ldots,i_r\}\subseteq[n]$ is coindependent for the vector space $V$ if and only if $V$ does not contain a vector supported on the entries labeled $\{i_1,\ldots,i_r\}.$ 

 $j$ is independent for $V$ if and only if the $j$th column of $V$ is non zero.
 \end{obs}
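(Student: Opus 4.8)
\textbf{Proof plan for \cref{obs:ind_coind}.}

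The statement has two parts, and both are essentially unwindings of the definition of coindependence and independence together with the basic linear algebra of the Grassmannian, so the plan is short. For the first claim: by definition $\{i_1,\ldots,i_r\}$ is coindependent for $V\in\Gr_{k,N}$ iff there exists $I\in\binom{N}{k}$ with $\lr{V}_I\neq 0$ and $I\cap\{i_1,\ldots,i_r\}=\emptyset$, i.e. iff $V$ has a $k$-subset of columns avoiding $\{i_1,\ldots,i_r\}$ that forms a basis of $V$. I would prove the equivalence with ``$V$ contains no vector supported on $\{i_1,\ldots,i_r\}$'' by contraposition in both directions. If $V$ contains a nonzero vector $w$ supported on $\{i_1,\ldots,i_r\}$, then in any matrix representative $C$ of $V$ the row space contains $w$, so after row-reducing we may assume one row of $C$ is $w$; any $k$ columns including none of $\{i_1,\ldots,i_r\}$ then have a zero entry in that row, so the corresponding minor vanishes — hence every $\lr{V}_I$ with $I\cap\{i_1,\ldots,i_r\}=\emptyset$ is zero, i.e. the set is not coindependent. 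Conversely, if no such vector exists, then the projection of $\R^N$ onto the coordinates $N\setminus\{i_1,\ldots,i_r\}$ is injective on $V$ (its kernel is exactly the vectors of $V$ supported on $\{i_1,\ldots,i_r\}$), so the image is a $k$-dimensional subspace of $\R^{N\setminus\{i_1,\ldots,i_r\}}$, which therefore has a nonzero Pl\"ucker coordinate $\lr{V}_I$ with $I\subseteq N\setminus\{i_1,\ldots,i_r\}$; this $I$ witnesses coindependence. (The $k=0$ convention is consistent: $V=0$ contains no nonzero vector at all.)

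For the second claim: $j$ is independent for $V$ iff there is $I\ni j$ with $\lr{V}_I\neq 0$. Fix a matrix representative $C$ of $V$. If the $j$th column $C^j$ is zero, then every maximal minor using column $j$ is zero, so no such $I$ exists. Conversely, if $C^j\neq 0$, extend $\{C^j\}$ to a basis of the column space of $C$ by $k-1$ further columns $C^{i_2},\ldots,C^{i_k}$ (possible since the columns span the $k$-dimensional column space and $C^j$ is a nonzero vector in it); then $I=\{j,i_2,\ldots,i_k\}$ has $\lr{V}_I\neq 0$. This is independent of the choice of representative by \eqref{eq:effect_of_grp_on_plckr} (scaling the Pl\"ucker vector does not change which coordinates vanish), and the column being zero or not is likewise representative-independent up to the $\GL_{k}$-action on rows, which does not create or destroy zero columns.

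There is no real obstacle here; the only point requiring a line of care is the ``conversely'' direction of the first part, where one must observe that absence of a vector of $V$ supported on $\{i_1,\ldots,i_r\}$ is exactly the statement that the coordinate-projection $\R^N\to\R^{N\setminus\{i_1,\ldots,i_r\}}$ restricts to an injection on $V$, and hence that the image still has full dimension $k$ and therefore a nonvanishing Pl\"ucker coordinate. Everything else is a direct translation between the minor language and the vector language, and I would present it as such without belaboring the computation.
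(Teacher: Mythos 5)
Your proof is correct and takes essentially the same approach as the paper's: both parts reduce to the same elementary linear algebra (a vanishing row for the forward direction of coindependence, a rank/linear-dependence argument for the converse, and zero-column/basis-extension for the independence claim). The only cosmetic difference is that for the converse of the first claim you argue directly via injectivity of the coordinate projection, whereas the paper argues by contraposition (non-coindependence gives a linear dependence among projected rows, hence a vector of $V$ supported on $\{i_1,\ldots,i_r\}$); these are the same observation phrased from opposite ends.
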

 \begin{proof}
If $V$ contains such a vector $v$, let $C$ be a matrix representation including $v$ as a row. Then every minor of $C$ avoiding $\{i_1,\ldots,i_r\}$ is $0$ as the determinant of a matrix with a zero column. For the opposite direction, if $C$ is a matrix representation with rows $v_1,\ldots,v_k$, then the rows of $C^{[n]\setminus\{i_1,\ldots,i_r\}}$ are not linearly independent. Therefore there is a non trivial linear combination $\sum_{i=1}^ka_iv_i^{[n]\setminus\{i_1,\ldots,i_r\}}$ which vanishes. Thus, $\sum_{i=1}^ka_iv_i$, which cannot be $0$ since $C$ has rank $k$, is supported on $\{i_1,\ldots,i_r\}.$

For the second part, if $V^j$ is a zero column $j$ is not independent. Otherwise, starting from a matrix representative $C$ and performing row operations we can get to a matrix representation $C'$ in which the $j$th column has only the first entry non zero. The remaining rows have rank $k-1.$ If $i_1,\ldots,i_{k-1}$ are linearly independent columns for the remaining rows, the minor $\{i_1,\ldots,i_{k-1},j\}$ is non zero.
 \end{proof}


\end{definition}
\begin{cond}\label{condition:BCFW}
Let $k_L,N_L,\ell_L,k_R,N_R,\ell_R$ be as in \cref{nn:bcfwmap}. We say that two realizable positroid cells $\Pos_L\subseteq\Gr_{k_L,N_L;\ell_L}^{\geq},~\Pos_R\subseteq\Gr_{k_R,N_R;\ell_R}^{\geq},$ or the corresponding positroids, satisfy the \emph{BCFW product condition} if $a,b,n$ are coindependent for the tree part of $\Pos_L,$ and $b,c,d,n$ are coindependent for the tree part of $\Pos_R.$
\end{cond}
The following proposition modifies \cite[Lemma 10.1]{even2023cluster} to the one loop level.
\begin{prop}
\label{prop:matroid_under_bcfw} Fix $k_L,\ldots,\ell_R$ as in \cref{nn:bcfwmap}.
	Let $S_L \subset {\Gr}^{\geq}_{k_L, N_L;\ell_L}$ and $S_R \subset {\Gr}^{\geq}_{k_R, N_R;\ell_R}$, be two spaces such that $S_L\subseteq\Pos_L,~S_R\subseteq\Pos_R,$ for positroid cells $\Pos_L,\Pos_R$ satisfying \cref{condition:BCFW}. Let $\PosBC_L,\PosBC_R$ be the tree parts of the associated positroids.
	
Then the BCFW product $S_L\bcfw S_R$ is defined, and is contained in a single $\ell-$loop positroid cell. The bases of the tree part of that positroid cell are exactly the sets $I_L \sqcup \{f\} \sqcup I_R$ where $I_L, f, I_R$ are disjoint and satisfy one of the following:
	\begin{center}
	\begin{tabular}
 {|c | c|c|c|}
		\hline
		&$I_L$ & $f$ & $I_R$\\
		\hline
		(1)&$I_L \in \PosBC_L, ~ b \notin I_L$& $a$ &$I_R \in \widehat{\PosBC_R}$\\
		\hline
		(2)&$I_L \in \PosBC_L$& $b$ &$I_R \in \widehat{\PosBC_R}$\\
				\hline
		(3)&$I_L \in \widehat{\PosBC_L}$& $c$ &$I_R \in \PosBC_R,~d \notin I_R$\\
		\hline
		(4)&$I_L \in \widehat{\PosBC_L}$& $d$ &$I_R \in \PosBC_R$\\
		\hline
		(5)&$I_L \in \widehat{\PosBC_L}$& $n$ &$I_R \in \PosBC_R$\\
		\hline
		(6)&$I_L \in \widehat{\PosBC_L}$& $n$ &$(I_R \setminus \{c\}) \cup \{d\} \in \PosBC_R$\\
		\hline
	\end{tabular}
\end{center}
where $\widehat{\PosBC_L}\supseteq \PosBC_L$ is the positroid obtained from $\PosBC_L$ by applying $y_a,$ and $\widehat{\PosBC_R}\supseteq\PosBC_R$ is the positroid obtained from $\PosBC_R$ by first applying $y_d$ and then $y_c.$ The same holds for bases of $\PosD$, where if the loopy part is on the left $\PosBC_L,\widehat{\PosBC_L}$ are replaced by $\PosBD_L,\widehat{\PosBD_L},$ respectively, and if the loopy part is on the right, then $\PosBC_R,\widehat{\PosBC_R}$ are replaced by $\PosBD_R,\widehat{\PosBD_R},$ respectively, where $\widehat{\PosBD_L},~\widehat{\PosBD_R}$ are defined similarly to $\widehat{\PosBC_L},~\widehat{\PosBC_R}$.   
\end{prop}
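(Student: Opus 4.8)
The plan is to reduce the one-loop statement to the known tree-level statement \cite[Lemma 10.1]{even2023cluster} applied twice — once to the tree part and once to the relevant "loop-summed" matrix — and then to reconcile the two outputs. Concretely, write the BCFW product $S_L\bcfw S_R$ in matrix representatives using \cref{def:bcfw-map}: the loopy matrix $C\vdots D'$ has tree part $C=C'_L+v+C'_R$, and the loop rows $D'$ are obtained from the loop rows of whichever of $S_L,S_R$ carries the loop by the \emph{same} $\pre$, $y_a$ (resp. $y_c,y_d$) operators that act on the tree part. The key structural observation — to be stated and proved first — is that the BCFW map "commutes with $\PL$" in the following sense: if $\ell_L=1$ then $\PL_{\{1\}}(C\vdots D')=C_L'+v+C_R'$ where $C_L'$ is built from $C_L+D$ exactly as $C_L'$ was built from $C_L$ (same $\pre$, same $y_a(\halp/\hbet)$), because $\pre$ and $y_a$ are linear row operations that are applied uniformly to all rows of the left factor including the $D$-rows; and symmetrically when $\ell_R=1$. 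In other words, $\PosD$ of $S_L\bcfw S_R$ is the tree-level BCFW product of $\PosBD_L$ with $\PosBC_R$ (or $\PosBC_L$ with $\PosBD_R$), with the caveat that on the loop side the matrix has $k_L+2$ (resp. $k_R+2$) rows, so \cref{condition:BCFW}'s hypothesis $k_L\le|N_L|-2$, $k_R\le|N_R|-2$ is exactly what guarantees the loop-side factor still has full row rank and hence is itself a genuine positroid cell input.

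With that observation, the proof is two applications of the tree-level lemma. First, apply \cite[Lemma 10.1]{even2023cluster} to $\PosBC_L$ and $\PosBC_R$ (the tree parts), whose hypotheses are precisely \cref{condition:BCFW} for the tree parts; this yields that $\PosC(S_L\bcfw S_R)$ is a positroid cell with bases as in rows (1)–(6) of the table, with $\widehat{\PosBC_L},\widehat{\PosBC_R}$ the $y_a$- and $y_c,y_d$-enlargements. Second, apply the same tree-level lemma to the loop-summed factors: if $\ell_L=1$, apply it to $\PosBD_L$ (a positroid of $\Gr_{k_L+2,N_L}^{\ge}$, using coindependence of $a,b,n$ for its \emph{tree part}, i.e. for $\PosBC_L$ — here one must check that coindependence of $a,b,n$ for $\PosBC_L$ implies coindependence of $a,b,n$ for $\PosBD_L$ as needed, which follows since $\PosBC_L\subseteq\PosBD_L$ in the sense that any vector of $U_L$ supported off $\{a,b,n\}$ would also lie in $U_L+V_L$; so the BCFW-product hypothesis for the loop side is implied by \cref{condition:BCFW}) and to $\PosBC_R$; if $\ell_R=1$, symmetrically to $\PosBC_L$ and $\PosBD_R$. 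This gives the bases of $\PosD(S_L\bcfw S_R)$ in the same tabular form with the stated substitutions. Finally one checks that the pair $(\PosC,\PosD)$ produced this way is an honest one-loop positroid cell in the sense of \cref{def:1-loop positroid} — i.e. that there genuinely exists $U\vdots V$ with $U\in\PosC$, $U+V\in\PosD$ — which is immediate because any element of $S_L\bcfw S_R$ (nonempty by \cref{prop:BCFW_and_pos}, taking positive factors) realizes both simultaneously, being a single loopy matrix.

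The routine part is the bookkeeping of which nonzero Plücker coordinates survive under $\pre$, $y_a$, $y_c$, $y_d$ and under adding the soft vector $v$; this is exactly what \cref{lem:effect_ops} handles, and it is where rows (1)–(6) and the $\widehat{\ (\cdot)\ }$-enlargements come from — we simply quote the tree-level computation rather than redo it. The main obstacle I expect is the first bullet's commutation claim together with its rank bookkeeping: one has to be careful that the loop rows $D'$ are acted on by the identical sequence of elementary operations as the tree rows of the \emph{corresponding} factor (and \emph{not} by the operations coming from the other factor — e.g. if $\ell_L=1$, the $D$-rows get $\pre_{N_R\setminus\{a,b,n\}}$ and $y_a(\alpha/\beta)$ but not $y_c,y_d$), so that $\PL$ of the output really is the tree BCFW product of the correct enlarged factors; and one must verify that under the hypotheses $k_L\le|N_L|-2$, $k_R\le|N_R|-2$ the loop-side input $\PosBD_L$ (resp. $\PosBD_R$) is a full-rank positroid and that \cref{condition:BCFW} for the tree parts transfers to the BCFW-product hypothesis for this loop-side input. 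Once those two points are nailed down, everything else is a direct translation of the $\ell=0$ argument.
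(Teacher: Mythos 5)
Your structural observation that the $\pre$, $y_a$, $y_c$, $y_d$ operations act uniformly on all rows (including the $D$-rows) and hence commute with the loop-summing $\PL$ is correct and is in fact the right place to start; and your treatment of the tree part is exactly what the paper does. However, your attempt to apply the tree-level lemma to $\PosBD_L$ has a genuine gap, and the argument you sketch to close it runs in the wrong direction. Coindependence of $\{a,b,n\}$ for a space $U$ says $U$ contains \emph{no} nonzero vector supported on $\{a,b,n\}$ (\cref{obs:ind_coind}). Since $U_L\subseteq U_L+V_L$, coindependence for $U_L+V_L$ \emph{implies} coindependence for $U_L$, not conversely — which is exactly the opposite of what you wrote. \Cref{condition:BCFW} only grants coindependence of $\{a,b,n\}$ for the \emph{tree} part $\PosBC_L$, and it is entirely possible that $U_L+V_L$ contains a nonzero vector supported on $\{a,b,n\}$ even though $U_L$ does not. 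So you cannot invoke \cite[Lemma 10.1]{even2023cluster} with $\PosBD_L$ as the left factor; the hypothesis may simply fail.

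The paper's proof has the same skeleton as yours but differs at precisely this point, and the difference is the crux. Rather than applying the tree lemma to $\PosBD_L$ directly, it first inserts two extra columns at new consecutive positions $A,B$ on the loopy side, filled with zeros in all rows except for an invertible $2\times 2$ block in the two $D$-rows (this is essentially $\varphi_{AB}$ of \cref{nn:varphi}). Any vector in the row span of the enlarged matrix that vanishes at $A,B$ must lie in the tree part $U_L$ alone; hence coindependence of $\{a,b,n\}$ for $\PosBC_L$ automatically promotes to coindependence of $\{a,b,n\}$ for the enlarged $(k_L+2)$-row matrix, and the tree lemma becomes applicable. The loop part's bases are then read off from the enlarged positroid by discarding bases that use $A$ or $B$. (The paper also separately records that the output lies in a single loopy positroid cell by observing that the minors of the output are multilinear with positive coefficients in the minors of the inputs and the BCFW variables — a point your proposal leaves implicit.) So to repair your argument you should replace the bogus ``$\PosBC_L\subseteq\PosBD_L$ gives the implication'' step with the column-adding trick; everything else in your outline then goes through.
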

\begin{proof}
The $\ell_L=\ell_R=0$ case is \cite[Lemma 10.1]{even2023cluster}. The description of the BCFW operation \cref{def:bcfw-map} and \cref{lem:effect_ops} show that, given loopy matrix representatives for an element of $S_L$ and an element of $S_R$, the minors of the resulting $\ell-$loop matrix are multilinear functions with positive coefficients in the minors of the left and right matrices, the BCFW variables and their inverses $\alpha^\pm,\ldots,\varepsilon^\pm.$ Thus, also in the $1-$loop case $S_L\bcfw S_R$ is contained in a single loopy positroid cell. 
The claim about its tree level part is again \cite[Lemma 10.1]{even2023cluster} applied to the tree level parts of $S_L,S_R.$ For the loopy part we use the following trick. We add two columns, in two consecutive positions $A,B$ in the side which contains the loop. These columns contain zeroes in all rows but those of the loopy part $D_L$ or $D_R,$ where they contain an invertible $2\times 2$ matrix. By the assumption on the tree level the resulting $k_L+2$ or $k_R+2$ dimensional vector space is $a,b,n$ or $b,c,d,n$ coindependent, respectively. We apply \cite[Lemma 10.1]{even2023cluster} to it, and ignore bases which include $A$ or $B.$ This implies the result.
\end{proof}
The following claim is a direct consequence of the matroid description in \cref{prop:matroid_under_bcfw}.
\begin{cor}\label{cor:matroid_after_bcfw}
Using the notations of \cref{prop:matroid_under_bcfw}, write $\Pos$ for the positroid cell containing $S_L\bcfw S_R,$ and $\PosBC$ the tree part of its positroid. 
Suppose that $\PosBC_L$ is $\{a,b,n\}-$coindependent, and that $\PosBC_R$ is $\{b,c,d,n\}-$coindependent. Then the following holds for $\PosBC.$
\begin{enumerate}
\item\label{it:matroid_after_bcfw_bcdn}
It is $\{1,c,d,n\}-$coindependent.
\item\label{it:matroid_after_bcfw_chord}
It is $J$-coindependent for every $J\in\binom{\{a,b,c,d,n\}}{4}.$  
\item\label{it:matroid_after_bcfw_promoting_1_side}
For every basis $I$ in $\PosBC_L$ ($\PosBC_R$) there is a basis in $\PosBC$ containing it. Moreover, there exists a basis for $\PosBC$ whose intersection with $N_L$ ($N_R$) is $I.$ With the exception of $I\in\PosBC_L$ with $b\in I,$ one can also find, for every  $t\in\{a,b,c,d,n\}\setminus I,$ a basis of $\PosBC$ whose intersection with $N_L\cup\{a,b,c,d,n\}~(N_R\cup\{a,b,c,d,n\})$ is $I\cup\{t\}$. 
\end{enumerate}
\end{cor}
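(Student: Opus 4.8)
The plan is to derive \Cref{cor:matroid_after_bcfw} purely as a bookkeeping consequence of the explicit list of tree-level bases in \Cref{prop:matroid_under_bcfw}, together with the two coindependence hypotheses on $\PosBC_L$ and $\PosBC_R$. Throughout, recall that a basis of $\PosBC$ has the form $I_L\sqcup\{f\}\sqcup I_R$ as in the six rows of the table, and that $\{a,b,n\}$-coindependence of $\PosBC_L$ (resp.\ $\{b,c,d,n\}$-coindependence of $\PosBC_R$) gives us, for \emph{every} basis, the freedom to choose one avoiding any prescribed subset of these markers; the same persists after applying $y_a$ (resp.\ $y_c$ then $y_d$), since $\widehat{\PosBC_L}\supseteq\PosBC_L$ and $\widehat{\PosBC_R}\supseteq\PosBC_R$ only add bases, and the added bases differ from old ones by swapping $a\leftrightarrow b$ (resp.\ $c\leftrightarrow d$), which does not destroy coindependence for the relevant sets.

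For part \eqref{it:matroid_after_bcfw_bcdn} I would exhibit, for any given $U\vdots V\in S_L\bcfw S_R$, a nonzero Pl\"ucker coordinate of the tree part avoiding $\{1,c,d,n\}$: pick a basis of $\PosBC_L$ avoiding $\{a,b,n\}$ (possible by hypothesis; here $1\in N_L$ and avoiding $1$ is automatic only if $1\in\{a,b,n\}$ — but in the BCFW setup $1\le a$, so if $1<a$ we additionally need $1$-coindependence of $\PosBC_L$, which for tree BCFW cells is part of the standing coindependence assumptions and is already built into the hypothesis of the corollary via the phrasing; otherwise $1\in\{a,b\}$ and it is covered), use row (2) with $f=b$ and $I_R\in\widehat{\PosBC_R}$ avoiding $\{c,d,n\}$, and observe the union avoids $\{c,d,n\}$ and, for $1$, avoids it as argued. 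Part \eqref{it:matroid_after_bcfw_chord} is the cleanest: for each $J\in\binom{\{a,b,c,d,n\}}{4}$ one simply selects the row of the table whose distinguished element $f$ lies in $\{a,b,c,d,n\}\setminus J$ (such an $f$ exists since $|J|=4<5$), then picks $I_L$ and $I_R$ avoiding the respective traces of $J$ on $N_L$, $N_R$ using the two coindependence hypotheses, so $I_L\sqcup\{f\}\sqcup I_R$ is a basis of $\PosBC$ disjoint from $J$.

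Part \eqref{it:matroid_after_bcfw_promoting_1_side} is where the real case analysis sits. Given $I\in\PosBC_L$, I would first handle the ``moreover'' claim: we want a basis of $\PosBC$ whose intersection with $N_L$ is exactly $I$. Note $N_L\cap\{a,b,c,d,n\}=\{a,b,n\}$, and the $I_R$ part lives in $N_R\setminus\{b\}$, which is disjoint from $N_L$ except possibly... it is disjoint from $N_L\setminus\{a,b,n\}$, and from $\{a,n\}$ it is disjoint while $b\in N_R$. So to get intersection exactly $I$ with $N_L$ we must control whether $b\in I_R$ and whether $f\in\{a,b,n\}$. If $b\notin I$: use row (1) with $f=a$ and $I_R\in\widehat{\PosBC_R}$ avoiding $b$ (possible by $\{b,\dots\}$-coindependence); then the trace on $N_L$ is $I\cup\{a\}\setminus\{a\}$... more carefully, $I_L=I$, $f=a\in N_L$, so trace is $I\cup\{a\}$ — that is $I\cup\{a\}$, not $I$. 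This shows I actually need $a\notin I$ too, or I should instead use $f=n$ via row (5) or (6) with $I_R$ avoiding $b$, giving trace $I\cup\{n\}$ on $N_L$, still not $I$. The resolution: the statement says intersection with $N_L$ is $I$, and since every basis of $\PosBC$ must contain exactly one of $a,b$ or else $n$ — wait, it contains exactly one $f\in\{a,b,c,d,n\}$, so its trace on $N_L$ is $I_L$ if $f\notin\{a,b,n\}$ and $I_L\cup\{f\}$ otherwise; choosing $f\in\{c,d\}$ via rows (3)--(4) with $I_L\in\widehat{\PosBC_L}$ — but then $I_L\in\widehat{\PosBC_L}$ need not equal $I$. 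The honest plan is: for the bare existence claim (a basis containing $I$), rows (3)/(4) with $I_L=I$ (valid since $\PosBC_L\subseteq\widehat{\PosBC_L}$) and $I_R\in\PosBC_R$ avoiding the trace of $I$... but $I\subseteq N_L$, $I_R\subseteq N_R$, overlap only in $b$, and $b\notin I_L$ is needed for row (1) but rows (3),(4) allow $b\in I_L$ only if $b\notin I_R$, which we arrange by coindependence of $\PosBC_R$ at $b$; then the union $I\sqcup\{c\}\sqcup I_R$ (or with $d$) contains $I$, proving existence. For ``intersection with $N_L$ is $I$'': choose $f\in\{c,d\}$ so $f\notin N_L$, and we need $b\notin I_R$ (done) and — since $I_L\in\widehat{\PosBC_L}$ is forced but we want trace $=I$ — take $I_L=I$ itself, legitimate; then trace on $N_L$ is $I$. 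Finally the augmented claim (intersection with $N_L\cup\{a,b,c,d,n\}$ equal to $I\cup\{t\}$ for each $t\in\{a,b,c,d,n\}\setminus I$): for $t=c$ use row (3) with $I_L=I$, $d\notin I_R$, $b\notin I_R$; for $t=d$ use row (4); for $t=a$ use row (1), which requires $b\notin I_L=I$ — this is exactly the excluded case, explaining the ``with the exception of $I\in\PosBC_L$ with $b\in I$'' clause; for $t=b$ use row (2); for $t=n$ use row (5). The symmetric statements for $\PosBC_R$ are obtained by the same analysis with left/right interchanged, using rows (1)--(2) to promote right bases and the $\{a,b,n\}$-coindependence of $\PosBC_L$ to clear the left side, with the case-distinction on whether $b$ or $d\in I$ (since $d\in I_R$ interacts with rows (4),(6)). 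The main obstacle is purely organizational: keeping straight which of the six table rows can realize a prescribed trace without double-counting the markers $a,b$ shared structurally between the pieces and the insertion element $f$, and locating precisely where the $b\in I$ exception and the loop-side substitution ($\PosBC_L\to\PosBD_L$ etc., which changes nothing in the combinatorics) enter.
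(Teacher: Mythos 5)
Your approach --- reading off bases of $\PosBC$ directly from the six rows of the table in \cref{prop:matroid_under_bcfw} and using the two coindependence hypotheses (together with $\widehat{\PosBC_L}\supseteq\PosBC_L$, $\widehat{\PosBC_R}\supseteq\PosBC_R$, so coindependence transfers) to clear the traces on $N_L$ and $N_R$ --- is the only approach available, and the paper itself says no more than that the corollary is ``a direct consequence of the matroid description.'' Part (2) is handled correctly. The $\PosBC_L$ case of part (3) also comes out right once you work through your self-corrections: one must insert $f\in\{c,d\}$ (which lies outside $N_L$) to realize the trace exactly $I$, and the $b\in I$ exception is precisely the failure of the constraint $b\notin I_L$ in row (1), as you observe.

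Two points remain genuinely unresolved. First, for part (1) you correctly notice that nothing in the stated hypothesis (``$\PosBC_L$ is $\{a,b,n\}$-coindependent'') controls the marker $1$ when $1<a$; but your resolution --- that $1$-coindependence is ``built into the hypothesis of the corollary via the phrasing'' --- is false as the statement is written. What is true is that in every application of the corollary the stronger $\{\min N_L,a,b,n\}$-coindependence of $\PosBC_L$ is in force (it is the inductive hypothesis in \cref{prop:fixed_positroids}, and \cref{lem:4coind_tree} supplies it for tree BCFW cells); a proof should either invoke that standing assumption explicitly or flag that item (1) requires it. Second, the closing sentence ``symmetric statements for $\PosBC_R$ by the same analysis with left/right interchanged'' conceals an asymmetry you only half-acknowledge: if $I\in\PosBC_R$ has $d\in I$ and $t=c$, the sole row whose inserted element is $c$ is row (3), which demands $d\notin I_R$, so $I_R=I$ is forbidden; the remaining rows either insert $d$, $b$ or $n$ and therefore require $I_R$ to be a set like $(I\setminus\{d\})\cup\{c\}$ or $(I\cup\{c\})\setminus\{b\}$ lying in $\PosBC_R$ or $\widehat{\PosBC_R}$, and none of that follows from the hypotheses. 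You flagged that ``$d\in I_R$ interacts with rows (4),(6)'' but did not resolve it --- you must either produce the required basis, identify a symmetric exception clause for $\PosBC_R$, or check that the paper's uses of item (3) never call it with $d\in I$ and $t=c$.
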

The following was proven in \cite{even2023cluster}.
\begin{lemma}\label{lem:4coind_tree}\cite[Corollary 10.7]{even2023cluster}
Tree BCFW cells are $4-$coindependent.
\end{lemma}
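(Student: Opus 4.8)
\emph{Overall approach.} This statement is \cite[Corollary 10.7]{even2023cluster}; here is the route one follows (it specialises verbatim to the loopless situation at hand). By \cref{obs:ind_coind}, saying that $S_{\D}$ is $4$-coindependent means exactly that no $V\in S_{\D}$ contains a vector supported on at most four columns, and since every tree BCFW cell is a positroid cell it suffices to prove that the (tree) positroid of $S_{\D}$ is $4$-coindependent. The proof is by induction on the generation sequence of $S_{\D}$ as in \cref{def:tree-bcfw-cells}, with three cases: the base case $k=0$, the soft-factor step, and the BCFW-product step. It is convenient to carry along, by the same induction, the auxiliary fact that $|N|\ge|K|+4$ for every tree BCFW cell, without which $4$-coindependence would be impossible.

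\emph{Base and soft cases.} For $k=0$ the cell is $\Gr_{0,n}$ and every subset is coindependent by convention. If $S_{\D}=\pre_p S_{\D'}$ with $\D'$ a tree cell on $N'=N\setminus\{p\}$, then by the $\pre_p$ line of \cref{lem:effect_ops} the coordinate $P_I(\pre_p W)$ is nonzero iff $P_I(W)$ is, and in that case $p\notin I$ automatically because $I\subseteq N'$. Hence a four-element set $J\subseteq N$ is coindependent for $\pre_p W$ as soon as $J\setminus\{p\}$, a subset of $N'$ of size at most $4$, is coindependent for $W$; this holds by the inductive hypothesis, using that $4$-coindependence implies $3$-coindependence (extend a $3$-set to a $4$-set, which is possible since $|N'|\ge 4$).

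\emph{Product step (the crux).} Write $S_{\D}=S_L\bcfw S_R$ in the notation of \cref{nn:bcfwmap}, with $S_L,S_R$ tree BCFW cells on $N_L=\{1,\dots,a,b,n\}$ and $N_R=\{b,\dots,c,d,n\}$; by induction their positroids $\PosBC_L,\PosBC_R$ are $4$-coindependent, so in particular $\{a,b,n\}$- and $\{b,c,d,n\}$-coindependent, \cref{condition:BCFW} holds, and \cref{prop:matroid_under_bcfw} describes the positroid $\PosBC$ of $S_{\D}$ explicitly, its bases being the disjoint unions $I_L\sqcup\{f\}\sqcup I_R$ in the six rows of the table, where $\widehat{\PosBC_L}\supseteq\PosBC_L$ and $\widehat{\PosBC_R}\supseteq\PosBC_R$ are still $4$-coindependent as they only gain bases. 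Given a four-element $J\subseteq[n]$, split it as $J=J_{\mathrm{left}}\sqcup J_{\mathrm{ch}}\sqcup J_{\mathrm{right}}$ along the left interior $\{1,\dots,a-1\}$, the chord support $\{a,b,c,d,n\}$, and the right interior $\{b+1,\dots,c-1\}$ (recall $b=a+1$, $d=c+1=n-1$, so these cover $[n]$). If $|J_{\mathrm{ch}}|=4$ one is done by the clause of \cref{cor:matroid_after_bcfw} giving $J$-coindependence for $J\in\binom{\{a,b,c,d,n\}}{4}$; if $J\subseteq N_L$ (resp.\ $J\subseteq N_R$) one takes a basis of $\PosBC_L$ (resp.\ $\PosBC_R$) avoiding $J$ and promotes it to a basis of $\PosBC$ avoiding $J$ by the base-promotion clause of \cref{cor:matroid_after_bcfw}; and in the remaining cases $J$ meets both interiors, so $|J\cap N_L|,|J\cap N_R|\le 3$, one chooses $f\in\{a,b,c,d,n\}\setminus J$ and uses $4$-coindependence of the (hatted) left and right positroids to produce $I_L$ and $I_R$ avoiding $J$ together with the at most four forbidden markers on their side, gluing them into a basis of $\PosBC$ in one of the six families.

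\emph{Main obstacle.} All the real work is the bookkeeping in the product step: one must arrange the case split so that on each side the set of markers to be avoided never exceeds four (so the inductive $4$-coindependence applies), while simultaneously the chosen $I_L$ and $I_R$ are compatible, i.e.\ disjoint along the shared markers $b,n$ and matching the side condition of one of the six basis families (for instance $b\notin I_L$ when $f=a$, $d\notin I_R$ when $f=c$, and the $c\leftrightarrow d$ swap when $f=n$ in row~(6)). This gluing is precisely the content of \cite[Lemma 10.1]{even2023cluster} together with \cref{cor:matroid_after_bcfw}, and only the loopless case is needed here; with it in hand the case analysis above goes through and the induction closes.
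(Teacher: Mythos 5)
The paper does not prove this lemma; it is quoted from \cite[Corollary~10.7]{even2023cluster}, so there is no internal proof to compare your proposal against. Your inductive outline is the natural one, and the base and $\pre_p$ cases are handled correctly (including the reduction of $3$-coindependence to $4$-coindependence, which does require carrying along $|N|\ge|K|+4$).

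The product step, however, is where all the real work lies, and your sketch stops exactly there. You claim that after picking $f\in\{a,b,c,d,n\}\setminus J$ one can ``produce $I_L$ and $I_R$ avoiding $J$ together with the at most four forbidden markers on their side,'' but that size bound is false for an arbitrary choice of $f$. Take $J=\{a,b,p,n\}$ with $p$ in the right interior $\{b+1,\dots,c-1\}$; this lies in your ``remaining'' case. Choosing $f=c$ (row~3 of \cref{prop:matroid_under_bcfw}) forces $I_R\in\PosBC_R$ to avoid $(J\cap N_R)\cup\{c,d\}=\{b,p,n,c,d\}$, a set of size five, to which $4$-coindependence of $\PosBC_R$ does not apply; one must instead take $f=d$. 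Likewise, for $f=a$ the set $I_L$ must avoid $(J\cap N_L)\cup\{a,b\}$, which has size five whenever $b\notin J$ and $|J\cap N_L|=3$. So the choice of $f$, the row of the table, and the decision of which of $I_L,I_R$ is forced to miss $b$ (needed for disjointness, since $N_L\cap N_R=\{b,n\}$) must all be adapted to the structure of $J$. You correctly identify this as the main obstacle, but then attribute its resolution to \cite[Lemma~10.1]{even2023cluster} together with \cref{cor:matroid_after_bcfw}: neither of these performs the bookkeeping, the former being merely the basis description recalled in \cref{prop:matroid_under_bcfw} and the latter recording only a handful of specific coindependences. The casework does in fact close (roughly: take $f=n$ when $n\notin J$; take $f=b$ when $n\in J$ but $b\notin J$; when $b,n\in J$ locate the element of $J$ off the chord support and pick $f$ on that side accordingly), but it has to be carried out; as written the step is asserted, not proved.
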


We can apply \cref{prop:matroid_under_bcfw},\cref{cor:matroid_after_bcfw} and \cref{lem:4coind_tree} to deduce the following results on the matroids of tree level BCFW cells.
\begin{lemma}\label{lem:bcfw_before_fl_tree}
Let $\D\in\CD_{n,k}^0$ be a chord diagram with at least one chord ending at $(n-2,n-1)$ and no chord of the form $(n-4,n-3,n-2,n-1).$
Then the following holds.
\begin{enumerate}
\item\label{it:bcfw_before_fl_yellow}$S_\D$ is $n-1-$independent and $\{n-4,n-3,n-2,n-1,n\}-$coindependent.
\item\label{it:bcfw_before_fl_bcdn}
$S_\D$ has a basis $I\cap\{1,n-4,n-3,n-2,n-1,n\}=\{n-1\}.$
\item\label{it:bcfw_before_fl_chord}
For every chord $\D_i=(a_i,b_i,c_i,d_i)$ with $c_i,d_i\neq n-2$ there is a basis $I$ for $S_\D$ with  $I\cap\{a_i,b_i,c_i,d_i,n-2,n-1\}=\{n-1\}.$
\item\label{it:bcfw_before_fl_reds}For every chord $\D_i$ with $d_i\in\{n-2,n-1\}$ there exists a basis $I$ with $I\cap\{a_i,b_i,n-3,n-2,n-1,n\}=\{n\}.$ If in addition $b_i=n-4,$ then there exists a basis $I$ with $I\cap\{n-4,\ldots,n\}=\{n-4\}.$
\item\label{it:bcfw_before_fl_blues_reds}If $d_i\in\{n-2,n-1\}$ there exists a basis $I$ for $S_\D$ which does not intersect $\{a_i,b_i,n-3,n-2,n\}$ and includes $n-1.$
\end{enumerate}
\end{lemma}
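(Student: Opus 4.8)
The strategy is induction on the generation sequence of the chord diagram $\D$, using \Cref{prop:matroid_under_bcfw}, \Cref{cor:matroid_after_bcfw}, and \Cref{lem:4coind_tree}, much as in the tree-level arguments of \cite{even2023cluster}. The hypotheses single out a distinguished chord ending at $(n-2,n-1)$, and since no chord is of the form $(n-4,n-3,n-2,n-1)$, the rightmost top chord $\D_{\max K}$ cannot coincide with such a forbidden configuration; I first dispose of the case $\D = \pre_{n-1}\D'$ (impossible, since some chord ends at $(n-2,n-1)$, so $n-1$ cannot be the penultimate erased marker), and reduce to the product step $\D = \D_L \bcfw \D_R$ with BCFW chord's support $(a,b,c,d,n)$. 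The key point is to track where $n-2$ and $n-1$ live relative to this support: either the rightmost top chord itself ends at $(n-2,n-1)$ (so $(c,d)=(n-2,n-1)$), or it ends strictly later and the distinguished chord lies in one of the subdiagrams.

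For \eqref{it:bcfw_before_fl_yellow}, $n-1$-independence follows from \Cref{obs:ind_coind} together with the observation that after a $\bcfw$ step the columns indexed by the chord's support remain nonzero (this is visible in the schematic loopy matrix of \Cref{def:bcfw-map}); the $\{n-4,n-3,n-2,n-1,n\}$-coindependence is immediate from \Cref{lem:4coind_tree} once one checks that in this regime $\{n-4,n-3,n-2,n-1,n\}$ meets the chord's support in at most four elements — which is where the exclusion of chords $(n-4,n-3,n-2,n-1)$ is used, to prevent $\{n-4,n-3,n-2,n-1\}$ itself from being forced into a dependent column set. For \eqref{it:bcfw_before_fl_bcdn} and \eqref{it:bcfw_before_fl_chord}, I would apply \Cref{cor:matroid_after_bcfw}\eqref{it:matroid_after_bcfw_bcdn}, \eqref{it:matroid_after_bcfw_chord}: when $(c,d)=(n-2,n-1)$ these are literally the conclusions of the corollary with the chord's support equal to $\{n-4?,\ldots\}$; when the distinguished chord sits inside $\D_R$ (resp. $\D_L$), I use \Cref{cor:matroid_after_bcfw}\eqref{it:matroid_after_bcfw_promoting_1_side} to lift a basis witnessing the inductive statement for the subdiagram to a basis of $S_\D$, being careful that the lift does not pick up any of the five "bad" markers — this requires choosing the extra promoted marker $t$ among $\{a,b,c,d,n\}$ to avoid the relevant intersection, which is possible since at most four of the five are ever obstructed.

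Parts \eqref{it:bcfw_before_fl_reds} and \eqref{it:bcfw_before_fl_blues_reds} concern chords $\D_i$ with $d_i\in\{n-2,n-1\}$, i.e.\ chords that are same-end with (or equal to) the distinguished chord; for these the inductive hypothesis in the relevant subdiagram already provides a basis $I$ with the prescribed intersection pattern on $\{a_i,b_i,\ldots\}$, and the work is to promote it through the outer $\bcfw$ step while controlling the new markers $n-3,n-2,n-1,n$. Here I expect the main obstacle: the markers $n-3,n-2,n-1,n$ are precisely those near the chord's support, so lifting must be done via the explicit multilinear-with-positive-coefficients formula for the minors after $\bcfw$ (as in the proof of \Cref{prop:matroid_under_bcfw}), checking that one of the monomials in the expansion of a suitable Pl\"ucker coordinate of $S_\D$ is supported on exactly the desired index set. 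The special sub-case $b_i=n-4$ in \eqref{it:bcfw_before_fl_reds} — giving a basis with $I\cap\{n-4,\ldots,n\}=\{n-4\}$ — is the most delicate, since it asserts a basis avoiding four consecutive markers plus $n$; I would handle it by noting that $b_i=n-4$ forces the chord $\D_i$ to be sticky below the rightmost top chord in a very constrained way, reduce to the subdiagram case where the corresponding statement for $\{a_i,b_i\}$-avoidance is available by induction, and then verify the promotion explicitly. Throughout, the bookkeeping of signs is irrelevant (only nonvanishing of minors matters), so the combinatorics of which basis survives is governed entirely by \Cref{prop:matroid_under_bcfw}'s table.
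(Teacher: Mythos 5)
Your high-level plan — induction on the generation sequence, reduction to the $\bcfw$-product step, and reliance on \cref{prop:matroid_under_bcfw}, \cref{cor:matroid_after_bcfw}, and \cref{lem:4coind_tree} — is the same route the paper takes, and your observation that $\D=\pre_{n-1}\D'$ is ruled out by the hypotheses is correct, as is the slicker argument for $n-1$-independence via \cref{obs:ind_coind} (the $(n-1)$-column carries $\delta_k\neq0$ from the rightmost top chord).

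However, there is a genuine gap in your treatment of the coindependence half of item~\eqref{it:bcfw_before_fl_yellow}. You write that $\{n-4,n-3,n-2,n-1,n\}$-coindependence is ``immediate from \cref{lem:4coind_tree}.'' It is not: \cref{lem:4coind_tree} gives $4$-coindependence, i.e.\ every $4$-element subset of $[n]$ is coindependent, whereas you need a $5$-element set to be coindependent, and $5$-coindependence fails for tree BCFW cells in general (already for $k=1$, $n=5$). The reason it holds under the lemma's hypotheses is structural, not dimensional: the paper shows it by an induction on the number of chords, proving (in the $\bcfw$-step) that if the right subdiagram $\D_R$ is $\{n-4,\ldots,n\}$-coindependent — either because $\D_R$ also satisfies the lemma's hypotheses (induction) or because $\D_R=\pre_{n-1}\D_R'$ with $\D_R'$ $4$-coindependent and then \cref{prop:matroid_under_pre} is applied — then \cref{cor:matroid_after_bcfw}\eqref{it:matroid_after_bcfw_promoting_1_side} lifts a witnessing basis to $\D$. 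Your attempted shortcut, ``the chord's support meets $\{n-4,\ldots,n\}$ in at most four elements,'' does not establish coindependence of the $5$-element set.

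A similar imprecision affects your sketch of items~\eqref{it:bcfw_before_fl_bcdn} and~\eqref{it:bcfw_before_fl_chord}: you claim these are ``literally the conclusions'' of \cref{cor:matroid_after_bcfw}\eqref{it:matroid_after_bcfw_bcdn},\eqref{it:matroid_after_bcfw_chord} when $(c,d)=(n-2,n-1)$, but those items only assert $\{1,c,d,n\}$-coindependence and $J$-coindependence for $4$-subsets of the chord support; the lemma asks for a basis whose intersection with a $6$-element set is prescribed to be exactly $\{n-1\}$. As in item~\eqref{it:bcfw_before_fl_yellow}, this requires the induction: one must choose $I_L$ via \cref{lem:4coind_tree}, $I_R$ via the inductive hypothesis, and the splitting marker $f$ judiciously (e.g.\ $f=b$ in one case, $f=d=n-1$ in another), and in particular one must verify, using the hypothesis excluding $(n-4,n-3,n-2,n-1)$, that $b\neq n-4$ whenever the case $f=b$ is used, lest $b$ itself pollute the intersection. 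Your discussion of items~\eqref{it:bcfw_before_fl_reds} and~\eqref{it:bcfw_before_fl_blues_reds} is closer in spirit to the paper (lift through $\bcfw$-steps with explicit choice of $t$), but is still too informal to count as a proof.
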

\begin{proof}
For \cref{it:bcfw_before_fl_yellow} we first show that if $\D$ has no chord supported on $\{n-4,\ldots,n\}$ then it is $\{n-4,\ldots,n\}-$independent. We induct on the number of chords in the diagram. The case of no chords is true in an empty sense.  If there is one chord then by the assumption on $\D$ it is straightforward that $S_\D$ is $\{n-4,\ldots,n\}-$independent. In the general case $\D=\D_L\bcfw\D_R.$ $\D_R$ has no chord $(n-4,n-3,n-2,n-1)$, it either has a chord ending at $(n-2,n-1)$ or $\D_R=\pre_{n-1}\D'_R.$ In the former case $\D_R$ is $\{n-4,\ldots,n\}-$independent by induction. In the latter $\D'_R$ is $\{n-4,n-3,n-2,n\}$-coindependent by \cref{lem:4coind_tree}, hence by \cref{prop:matroid_under_pre} $\D_R$ is $\{n-4,\ldots,n\}$ coindependent. \cref{cor:matroid_after_bcfw},~\cref{it:matroid_after_bcfw_promoting_1_side} shows the same for $\D.$ 

Elements of $S_\D$ have a matrix representative whose $k$th row which corresponds to $\D_k$ has support $\{a_k,b_k,c_k,d_k,n\}.$ Thus, by \cref{obs:ind_coind}, every non zero maximal minor must include one of $a_k,b_k,c_k,d_k,n.$ By \cref{cor:matroid_after_bcfw},~\cref{it:matroid_after_bcfw_chord} there is a minor not containing $\{a_k,b_k,c_k,n\}.$ It thus must contain $d_k=n-1.$

For \cref{it:bcfw_before_fl_bcdn} we induct on the number of chords. For $k=0,1$ this is straightforward, under the assumptions. Assume the claim holds for chord diagrams with less than $k$ chords which satisfy the assumptions. Consider $\D_R.$ If it also satisfies the assumptions, then we can pick a basis $I_R $ for $S_{\D_R}$ satisfying the assumptions, meaning that it intersects $\{b,n-4,\ldots,n\}$ only at $n-1.$ We use \cref{lem:4coind_tree} to find a basis $I_L$ for $S_{\D_L}$ which avoids $1,a,b,n$ and pick $I=I_L\cup\{b\}\cup I_R,$ which satisfies the requirements. The other possibility is that $\D_R=\pre\D'_R.$ In this case we use \cref{lem:4coind_tree} to pick a basis $I_R$ for the positroid of $S_{\D'_R}$ which avoids $n-4,n-3,n-2,n,$ and $I_L$ as before. Now $I= I_L\cup\{n-1\}\cup I_R$ is our required basis.

For \cref{it:bcfw_before_fl_chord}, let $\D'$ be the subdiagram of $\D$ which is the result of the $i$th BCFW product step. By \cref{cor:matroid_after_bcfw},~\cref{it:matroid_after_bcfw_chord} $S_{\D'}$ has a basis which avoids $a_i,\ldots,d_i.$ It also avoids $n-2,n-1$ which do not belong to the index set of $\D'.$ We use \cref{prop:matroid_under_pre} and \cref{cor:matroid_after_bcfw},~\cref{it:matroid_after_bcfw_promoting_1_side} to lift this basis to a basis with the same property for $S_{\D''}$ where $\D''$ is the subdiagram of $\D$ which descends from the maximal top chord. Then, using \cref{cor:matroid_after_bcfw},~\cref{it:matroid_after_bcfw_promoting_1_side} again we find a basis for $S_\D$ whose intersection with $\{a_i,b_i,c_i,d_i,n-2,n-2\}$ is $\{n-1\}.$

For \cref{it:bcfw_before_fl_reds}, by applying \cref{cor:matroid_after_bcfw},~\cref{it:matroid_after_bcfw_promoting_1_side} repeatedly we may assume $i=k.$ Similarly to the proof of the first two items we will induct of the number of chords to show that $S_{\D_R}$ has a basis which avoids $a_k,b_k,n-3,n-2,n-1,n.$ 
For at most one chord this is straight forward. $\pre$ steps do not affect the basis, and for $\bcfw$ steps with respect to indices $a,b,c,d,n$ we will use \cref{cor:matroid_after_bcfw},~\cref{it:matroid_after_bcfw_promoting_1_side} with $t=b,$ using $b\neq n-3$, which follows from the assumptions. Using \cref{cor:matroid_after_bcfw},~\cref{it:matroid_after_bcfw_promoting_1_side} once more we may find a basis for $S_\D$ whose intersection with $\{a_i,b_i,n-3,n-2,n-1,n\}$ is $\{n\}.$ The second part of the item is proven similarly.

The proof of \cref{it:bcfw_before_fl_blues_reds} when $d_i=n-1$ is similar to the previous ones. We use induction and \cref{cor:matroid_after_bcfw},~\cref{it:matroid_after_bcfw_promoting_1_side} to find a basis for $S_{\D}$ which avoids $a_i,b_i,n-3,n$ and includes exactly one of $n-2,n-1.$ We omit the details. When $d_i=n-2$ we use similar considerations to show that for the subdiagram obtained at the end of the $i$th BCFW step there is a basis $I'_R$ which avoids $a_i,b_i,c_i,d_i$ but uses $n.$ In the $\p(i)$th BCFW step that will add a parent to $\D_i$ we use case $(a)$ in the table of \cref{prop:matroid_under_bcfw}, and \cref{lem:4coind_tree}, to find a basis $I$ for the resulting positroid which satisfies $I=I_L\cup\{a_{\p(i)}\}\cup I_R$ with
$I_L$ a basis for the left component which avoids $a_{\p(i)},b_{\p(i)},n,$ and $I_R = I'_R\cup\{n-1\}\setminus\{d\}.$ This is a basis to the subdiagram obtained after the $p(i)$th BCFW step in the generation sequence for $\D,$ which satisfies our requirements. We now use \cref{cor:matroid_after_bcfw},~\cref{it:matroid_after_bcfw_promoting_1_side}, as in the proofs of the previous items, to lift this base to a base of our positroid of interest.
\end{proof}
\begin{cond}\label{condition:FL}
 Let $N=[n]\cup\{A,B\}$ be an index set. Let $S\subseteq \Gr_{k+1,[n]\cup\{A,B\}}^{\geq}$ be a positroid cell. We say that $S$ satisfies the \emph{forward limit condition} if $B$ is independent for $S,$ and $\{c,d,A,B,n\}$ are coindependent for $S,$ where $c=n-2,d=n-1$ and $c\lessdot d\lessdot A\lessdot B\lessdot n.$   
\end{cond}
\begin{prop}\label{prop:matroid_under_FL}
Let $S\subseteq \Gr_{k+1,[n]\cup\{A,B\}}^{\geq}$ be a positroid cell with positroid $\PosB$ for which $B$ is independent. Then if $\FL(S)\neq\emptyset$ it is contained in a single $1-$loop positroid cell $(\PosC,\PosD)$,with bases $(\PosBC,\PosBD)$ which we now describe.
$I\in\binom{[n]}{k}$ belongs to $\PosBC$ in the following cases. Let $J=I\cap\{c,d,n\}.$
	\begin{center}
	\begin{tabular}
 {|c | c|c|}
		\hline
		&$J$ & $I$\\
		\hline
		(1)&$\{c,d,n\}$& $I\cup\{A,B\}\setminus\{n\}\in\PosB$ or $I\cup\{A,B\}\setminus\{d\}\in\PosB$ or $I\cup\{B\}\in\PosB$\\
		\hline
		(2)&$\{d,n\}$& $I\cup\{A,B\}\setminus\{n\}\in\PosB$ or $I\cup\{A,B\}\setminus\{d\}\in\PosB$ or $I\cup\{B\}\in\PosB$\\
				\hline
		(3)&$\{n\}$& $I\cup\{A,B\}\setminus\{n\}\in\PosB$ or $I\cup\{B\}\in\PosB$\\
		\hline
		(4)&$\{c,d\}$&  $I\cup\{A,B\}\setminus\{d\}\in\PosB$ or $I\cup\{B\}\in\PosB$\\
		\hline
		(5)&$\{d\}$& $I\cup\{A,B\}\setminus\{d\}\in\PosB$ or $I\cup\{B\}\in\PosB$\\
		\hline
  	(6)&$\{c,n\}$& $I\cup\{A,B\}\setminus\{n\}\in\PosB$ or $I\cup\{A,B\}\setminus\{c\}\in\PosB$ or $I\cup\{A,B,d\}\setminus\{c,n\}\in\PosB$ or $I\cup\{B\}\in\PosB$\\
		\hline
		
  (7)&$\{c\}$& $I\cup\{A,B\}\setminus\{c\}\in\PosB$ or $I\cup\{B\}\in\PosB$\\
		\hline
  (8)&$\emptyset$& $I\cup\{B\}\in\PosB$\\
		\hline

	\end{tabular}
\end{center}

$I\in\binom{[n]}{k+2}$ belongs to $\PosBD$ in the following cases. Let $J=I\cap\{c,d,n\}.$ Then $J\neq\emptyset$ and
	\begin{center}
	\begin{tabular}
 {|c | c|c|}
		\hline
		&$J$ & $I$\\
		\hline
		(1)&$\{c,d,n\}$& $I\setminus\{n\}\in\PosB$ or $I\setminus\{d\}\in\PosB$\\
		\hline
		(2)&$\{d,n\}$& $I\setminus\{n\}\in\PosB$ or $I\setminus\{d\}\in\PosB$ \\
				\hline
		(3)&$\{n\}$& $I\setminus\{n\}\in\PosB$\\
		\hline
		(4)&$\{c,d\}$& $I\setminus\{d\}\in\PosB$\\
		\hline
		(5)&$\{d\}$& $I\setminus\{d\}\in\PosB$\\
		\hline
  	(6)&$\{c,n\}$& $I\setminus\{n\}\in\PosB$ or $I\setminus\{c\}\in\PosB$ or $I\cup\{d\}\setminus\{c,n\}\in\PosB$\\
		\hline
		
  (7)&$\{c\}$& $I\setminus\{c\}\in\PosB$\\
		\hline
	\end{tabular}
\end{center}

\end{prop}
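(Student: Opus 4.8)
The plan is to compute the positroid of $\FL(S)$ directly, by pushing it through the explicit presentation of $\FL$ in \cref{def:forward_limit} as a composition of the elementary operations of \cref{def:ops}:
\[
\FL \;=\; \rem_l\circ x_l(1)\circ \rem_{A,B}\circ \addL_{AB}\circ y_c(\gamma_\star/\delta_\star)\circ y_d(\delta_\star/\varepsilon_\star)\circ x_A(\beta_\star/\alpha_\star)\circ x_l(\alpha_\star/\varepsilon_\star)\circ \scale_l(\varepsilon_\star)\circ \inc_l .
\]
By \cref{lem:effect_ops} and the remark after \cref{def:1-loop positroid}, each operation occurring here carries a ($1$-loop) positroid cell into a ($1$-loop) positroid cell and acts on the associated positroid by an explicit combinatorial rule. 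Hence $\FL(S)$, being the image of the positroid cell of $S$ under this composition, lies in a single $1$-loop positroid cell whose positroid is the composite of these rules; the content of the proposition is the unwinding of that composite.

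I would run the chain step by step. After $\inc_l$ (with $l$ the marker inserted between $d$ and $A$) the positroid on $\{1,\dots,c,d,l,A,B,n\}$ is $\{I\cup\{l\}:I\in\PosB\}$, and $\scale_l(\varepsilon_\star)$ leaves it unchanged. The four steps $x_l, x_A, y_d, y_c$ are \emph{inflations}: since all the parameters are positive and every Pl\"ucker coordinate of a nonnegative element is $\ge 0$, no cancellation can occur (exactly as in the proof of \cref{prop:matroid_under_bcfw}), so each step merely adjoins the shifted bases dictated by the relevant $x$- or $y$-rule of \cref{lem:effect_ops}; by \cref{rmk:last_y_c} one may even postpone the $y_c$-inflation to the very end. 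This produces an explicit rank-$(k+2)$ positroid $\PosB^{(1)}$ on $\{1,\dots,c,d,l,A,B,n\}$ in which $A,B$ are independent and every basis meets $\{c,d,l,A,B\}$. Applying $\addL_{AB}$ turns this into a $1$-loop positroid: by the $\addL_{AB}$ clause of \cref{lem:effect_ops} its loop part (for $U'+V'$) is $\PosB^{(1)}$ itself, while its tree part (for $U'$) records the sets $I$, disjoint from $\{A,B\}$, with $I\cup\{A,B\}\in\PosB^{(1)}$. Finally $\rem_{A,B}$ deletes bases meeting $\{A,B\}$, the remaining $x_l(1)$ inflates across the consecutive pair $\{l,n\}$, and $\rem_l$ deletes bases containing $l$; the role of $x_l(1)$ is precisely to let a basis containing $n$ be traded for one ``using $l$ in place of $n$'' before $l$ is forgotten, which is what manufactures the $\setminus\{n\}$ alternatives in the tables.

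It then remains to rewrite the resulting positroid in terms of $\PosB$ and to sort it according to $J=I\cap\{c,d,n\}$, as in the statement. Tracing a candidate basis backwards through $\rem_l$, $x_l(1)$, $\rem_{A,B}$, $\addL_{AB}$ and then through the inflations, one finds that membership of $I$ in $\PosBC$ is the disjunction of ``$I\cup\{B\}\in\PosB$'' (the branch not using the final $x_l(1)$, resolved through a single $x_A$-inflation step) together with the alternatives trading one of $c,d,n\in I$ for $\{A,B\}$ inside a basis of $\PosB$ (coming, respectively, from the $y_c$-, $y_d$-, and $x_l(1)$-branch), plus, in the case $J=\{c,n\}$, the alternative $I\cup\{A,B,d\}\setminus\{c,n\}\in\PosB$ arising from the composed $x_l(1)$-then-$y_c$-then-$y_d$ trace; and membership of $I$ in $\PosBD$ is the parallel disjunction with ``$\{A,B\}$ traded in'' replaced by ``that element simply deleted'' and with the special-case alternative $I\cup\{d\}\setminus\{c,n\}\in\PosB$. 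Collecting, for each way $c,d,n$ can meet $I$, exactly the alternatives these routes permit reproduces the two tables. In particular $J=\emptyset$ never occurs for $\PosBD$, since the pre-$\rem$ loop part is $\PosB^{(1)}$ itself and every basis of $\PosB^{(1)}$ meets $\{c,d,l,A,B\}$, so after $\rem_{A,B},x_l(1),\rem_l$ every surviving loop-part basis meets $\{c,d,n\}$; by contrast the tree part $\PosBC$ comes from a contraction of $\PosB^{(1)}$, and there $J=\emptyset$ is genuinely possible.

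The main obstacle is organizational rather than conceptual. One must run two parallel computations, since $\addL_{AB}$ treats the tree part and the loop part of the positroid asymmetrically (the loop part keeps the full rank-$(k+2)$ span, whereas the tree part is forced off the columns $A,B$ and drops to rank $k$); and before the $x_A$- and $x_l$-inflations one must carefully split the bases of $\PosB$ according to whether they already contain $A$ or $B$, since that controls which shifted bases get adjoined. Keeping the auxiliary markers $l,A,B$ straight through the whole chain, and checking that the many resulting cases collapse exactly to the compact form of the two tables, is where essentially all the work lies — the input that makes this bookkeeping purely combinatorial being the absence of cancellation, supplied by total nonnegativity as in \cref{prop:matroid_under_bcfw}.
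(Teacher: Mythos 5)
Your proposal is correct and follows essentially the same route as the paper: compose the elementary positroid operations from \cref{def:ops}, push the positroid of $S$ through each one using \cref{lem:effect_ops}, and unwind. The paper also stages it as $\inc_l$, then the $x,y$-inflations, then $\addL_{AB}$, then $\rem_{A,B}$, then (add column $l$ to column $n$ and erase $l$), with the observation — which you state more explicitly than the paper does — that total nonnegativity forbids cancellation, so the combinatorial rules of \cref{lem:effect_ops} determine the resulting positroid exactly as in \cref{prop:matroid_under_bcfw}. The only organizational difference is that the paper computes $\PosBD$ and $\PosBC$ via the intermediate set $\mathcal{B}_1$ and two case tables in terms of $J'=I\cap\{c,d,l\}$ rather than maintaining the tree/loop split from the moment $\addL_{AB}$ is applied, but this is cosmetic: the two bookkeeping schemes are interconvertible and lead to the same two final tables.
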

\begin{proof}
We follow the process in which we defined the forward limit. We first apply $\inc_l,$ and this operations takes $\mathcal{B}$ to 
\[\mathcal{B}_0:=\{J\in \binom{[n]\cup\{A,B\}}{k+2}|~J=\{l\}\cup I,I\in\mathcal{B}\}.\]
The applications of $x_*,y_*$ operations take $\mathcal{B}_0$ to $\mathcal{B}_1,$ which is calculated from $\mathcal{B}_0$ by applying \cref{lem:effect_ops} to $x_l,x_A,y_d,y_c$ and $\mathcal{B}_0.$
In order to define $\PosD$ we first erase columns $A,B$ which corresponds to passing to $\mathcal{B}'=\{I\in\mathcal{B}_1|~A,B\notin I\}.$ 
A case-by-case examination shows that $I\in\PosB'$ is described via $J'=I\cap\{c,d,l\},$ which must not be empty, as follows:

	\begin{center}
	\begin{tabular}
 {|c | c|c|}
		\hline
		&$J'$ & $I$\\
		\hline
		(1)&$\{c,d,l\}$& $I\setminus\{l\}\in\PosB$\\
		\hline
		(2)&$\{d,l\}$& $I\setminus\{n\}\in\PosB$ \\
				\hline
		(3)&$\{l\}$& $I\setminus\{n\}\in\PosB$\\
		\hline
		(4)&$\{c,d\}$& $I\setminus\{d\}\in\PosB$\\
		\hline
		(5)&$\{d\}$& $I\setminus\{d\}\in\PosB$\\
		\hline
  	(6)&$\{c,l\}$& $I\setminus\{l\}\in\PosB$ or $I\cup\{d\}\setminus\{c,l\}\in\PosB$\\
		\hline
		
  (7)&$\{c\}$& $I\setminus\{c\}\in\PosB$\\
		\hline
	\end{tabular}
\end{center}

Finally, adding the $l$th column to the $n$ th column, and erasing $l$ takes us to
\[\PosBD=\{I\in\binom{[n]}{k+2}|~I\in\PosB'~\text{or }I\cup\{l\}\setminus\{n\}\in\PosB'\}.\]
The result of this operation is as stated.
Note that had we not erased the $A,B$ columns the operation of adding the $l$th column to the $n$th column could have resulted in losing the positivity. 

A similar case-by-case analysis shows the following rule for a basis $I$ which contains $A,B$, to belong to $\PosB_1$, in terms of $J'=I\cap\{c,d,l\}:$
	\begin{center}
	\begin{tabular}
 {|c | c|c|}
		\hline
		&$J'$ & $I$\\
		\hline
		(1)&$\{c,d,l\}$& $I\cup\{A,B\}\setminus\{l\}\in\PosB$\\
		\hline
		(2)&$\{d,l\}$& $I\cup\{A,B\}\setminus\{l\}\in\PosB$\\
				\hline
		(3)&$\{l\}$& $I\cup\{A,B\}\setminus\{l\}\in\PosB$\\
		\hline
		(4)&$\{c,d\}$&  $I\cup\{A,B\}\setminus\{d\}\in\PosB$ or $I\cup\{B\}\in\PosB$\\
		\hline
		(5)&$\{d\}$& $I\cup\{A,B\}\setminus\{d\}\in\PosB$ or $I\cup\{B\}\in\PosB$\\
		\hline
  	(6)&$\{c,l\}$& $I\cup\{A,B\}\setminus\{l\}\in\PosB$ or $I\cup\{A,B,d\}\setminus\{c,l\}\in\PosB$ or $I\cup\{B\}\in\PosB$\\
		\hline
		
  (7)&$\{c\}$& $I\cup\{A,B\}\setminus\{c\}\in\PosB$ or $I\cup\{B\}\in\PosB$\\
		\hline
  (8)&$\emptyset$& $I\cup\{B\}\in\PosB$\\
		\hline

	\end{tabular}
\end{center}
And finally, after the last step involving the columns $l,n$ we see that\begin{equation}\label{eq:PoSC}\PosBC=\{J\in\binom{[n]}{k}|~J\cup\{A,B\}\in\mathcal{B}_1~\text{or }J\cup\{A,B,l\}\setminus\{n\}\in\mathcal{B}_1\}.
\end{equation}
satisfies the description in the statement of the claim.

The above analysis writes $\PosBC,\PosBD$ purely in terms of $\PosB.$ Thus, whenever they are non empty, the resulting space is contained in a single $1-$loop positroid cell.
\end{proof}
\begin{rmk}\label{rmk:degenerate_pos}
For later purposes we also define the positroid cells $\widetilde{\PosC},~\widetilde{\PosD}$ obtained from $S$ by $\FL',$ the forward limit degenerate at $c$ of \cref{rmk:last_y_c}. The same argument shows that these two positroid cells are well defined. The rules for the associated positroids are this time in terms of $J'=I\cap\{d,n\},$ and are given by the rows of the tables of \cref{prop:matroid_under_FL} in which $J$ does not include $c.$ The proof is the same.
\end{rmk}
\cref{prop:matroid_under_FL} and \cref{rmk:degenerate_pos} have the following consequences.
\begin{cor}\label{cor:matroid_after_fl}
With the notations of \cref{prop:matroid_under_FL}, let $\Pos=(\PosC,\PosD)$ be the $1-$loop positroid cell containing $\FL(S),$ and $(\PosBC,\PosBD)$ the associated positroid. Then:
\begin{enumerate}
\item\label{it:matroid_after_fl_B_case}
If $S\subseteq\Gr_{k+1,N}$ has a basis $I$ with $I\cap\{A,B\}=\{B\}$ then $\PosBC$ has a basis $I\setminus\{B\}$. In particular, if it has a basis $I$ with $I\cap\{1,c,d,A,B,n\}=\{B\}$ then $\PosBC$ is $\{1,c,d,n\}$-independent. If it has a basis $I$ with $I\cap\{c,d,A,B,n\}=\{B\}$ then $\PosBC$ is $\{c,d,n\}$-independent.
\item\label{it:matroid_after_FL_yellow}If $\{c,d,A,B,n\}$ is coindependent for $S$ then every $\{i,j\}\in\binom{c,d,n}{2}$ is coindependent for $\PosBD.$ Moreover, for every basis $I$ for $\PosB$ not intersecting $\{c,\ldots, n\}$ and every $i\in\{c,d,n\},$ $I\cup\{i\}$ is a basis for $\PosBC$
\item\label{it:matroid_after_fl_reds}If there is $I\in\PosB$ satisfying $I\cap\{d,A,B,n\}=\{n\}$ or $I\cap\{c,d,A,B,n\}=\{c\}$ then $I\cup\{d\}$ is a basis for $\PosBD.$
\end{enumerate}
\end{cor}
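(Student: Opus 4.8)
The plan is to deduce every clause directly from the explicit combinatorial descriptions of $\PosBC$ and $\PosBD$ furnished by \cref{prop:matroid_under_FL}, and — for the version with the degenerate forward limit — from the analogous tables of \cref{rmk:degenerate_pos}. So nothing needs proving beyond bookkeeping: given a basis $I$ of $\PosB$ (necessarily of size $k+1$), one forms the candidate set $I'$ — of size $k$ for $\PosBC$, of size $k+2$ for $\PosBD$ — records its type $J=I'\cap\{c,d,n\}$, reads the corresponding row of the relevant table, and checks that one of the disjunctive alternatives listed there is met by the $I$ one started from. The only things to keep track of are the cardinalities and which subset $J\subseteq\{c,d,n\}$ one has landed in.

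For \cref{it:matroid_after_fl_B_case} I would first observe that the alternative ``$I\cup\{B\}\in\PosB$'' occurs in \emph{every} row $(1)$--$(8)$ of the $\PosBC$-table; hence whenever $\PosB$ has a basis $I$ with $A\notin I$ and $B\in I$, the set $I\setminus\{B\}\in\binom{[n]}{k}$ lies in $\PosBC$ no matter what its type is. The two refinements are then immediate, since under the stated hypotheses the exhibited basis $I\setminus\{B\}$ is disjoint from $\{1,c,d,n\}$, respectively from $\{c,d,n\}$. For \cref{it:matroid_after_FL_yellow} I would take the basis $I_0$ of $\PosB$ disjoint from $\{c,d,A,B,n\}$ supplied by coindependence; for a pair $\{i,j\}\subseteq\{c,d,n\}$ with third element $x$, the set $I_0\cup\{x\}$ has type $\{x\}$, so row $(3)$, $(5)$ or $(7)$ of the $\PosBD$-table (according as $x$ is $n$, $d$ or $c$) applies through the alternative ``$(I_0\cup\{x\})\setminus\{x\}=I_0\in\PosB$'', giving a basis of $\PosBD$ avoiding $\{i,j\}$. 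The ``moreover'' is the same computation with $I_0$ replaced by an arbitrary $\PosB$-basis disjoint from $\{c,\dots,n\}$: $I\cup\{i\}$ has type $\{i\}$ and, by cardinality, lies in $\PosBD$.

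For \cref{it:matroid_after_fl_reds} I would put $I'=I\cup\{d\}$ in both cases. If $I\cap\{d,A,B,n\}=\{n\}$, then $I'$ has type $\{d,n\}$ or $\{c,d,n\}$ according as $c\notin I$ or $c\in I$, and in row $(2)$, respectively $(1)$, of the $\PosBD$-table the alternative ``$I'\setminus\{d\}=I\in\PosB$'' holds; if $I\cap\{c,d,A,B,n\}=\{c\}$, then $I'$ has type $\{c,d\}$ and row $(4)$ applies via the same alternative. Either way $I'\in\PosBD$. Finally, rerunning each of these arguments against the reduced tables of \cref{rmk:degenerate_pos} (obtained by discarding the rows whose $J$ contains $c$) yields the corresponding statements for $\widetilde{\PosC}$ and $\widetilde{\PosD}$. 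I do not expect any genuine obstacle here: the substance is already contained in \cref{prop:matroid_under_FL}, and the only real risk is a cardinality slip or mismatching a given basis pattern to its row.
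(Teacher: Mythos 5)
Your proposal is correct, and it is precisely the "direct consequence" reading that the paper implicitly intends (the paper supplies no separate proof of this corollary, stating it as an immediate consequence of Proposition~\ref{prop:matroid_under_FL} and Remark~\ref{rmk:degenerate_pos}). Your bookkeeping is accurate: the disjunct $I\cup\{B\}\in\PosB$ does occur in every row of the $\PosBC$ table, which gives item~\ref{it:matroid_after_fl_B_case}; and items~\ref{it:matroid_after_FL_yellow} and~\ref{it:matroid_after_fl_reds} follow by plugging the candidate sets into the correct rows exactly as you describe. You have also, tacitly but correctly, repaired two slips in the statement itself: the "$\{1,c,d,n\}$-independent" and "$\{c,d,n\}$-independent" in item~\ref{it:matroid_after_fl_B_case} should read "coindependent" (which is what your argument proves and what is used downstream in Proposition~\ref{prop:fixed_positroids}), and the "$\PosBC$" at the end of the "moreover" in item~\ref{it:matroid_after_FL_yellow} should be "$\PosBD$" on cardinality grounds, as you note.
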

\begin{prop}\label{prop:fixed_positroids}
Let $\D\in \CD_{n,k}$ be a chord diagram. Then every BCFW or forward limit step in the generation sequence of the cell is defined, and there exists a $1-$loop positroid cell $\Pos_D=(\PosC_\D,\PosD_\D)$ of $\Gr_{k,n;1}^\geq$ which contains $S_\D.$
\end{prop}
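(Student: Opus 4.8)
The plan is to prove \cref{prop:fixed_positroids} by induction on the generation sequence of $\D$, following the recursive structure of \cref{def:treeBCFWfromCD}. The statement has two coupled parts: that every BCFW product and forward limit step is \emph{defined} (i.e.\ the resulting loopy matrix has full rank, so that \eqref{eq:ranks} holds), and that the resulting cell is contained in a single $1$-loop positroid cell. These must be proved together, since knowing the cell sits in a positroid cell with prescribed coindependence properties is precisely what lets us apply \cref{prop:matroid_under_bcfw} and \cref{prop:matroid_under_FL} at the next step. So the induction hypothesis should be strengthened: $S_\D$ lies in a single $1$-loop positroid cell $\Pos_\D$, and moreover the tree part $\PosBC_\D$ (and, when $\ell=1$, also $\PosBD_\D$) satisfies the coindependence conditions needed to feed the recursion — concretely, $4$-coindependence of the tree part (extending \cref{lem:4coind_tree}), and the independence/coindependence hypotheses in \cref{condition:BCFW} and \cref{condition:FL} whenever the chord diagram's combinatorics call for a product or forward limit step.

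First I would set up the base case: $\D$ has no chords, $S_\D=\Gr_{0,n}$, and everything is trivial since all subsets are coindependent by convention (\cref{def:4biden}). For the inductive step I would split into the four cases of \cref{def:treeBCFWfromCD}. The $\pre_p$ case is handled by \cref{prop:matroid_under_pre} together with \cref{lem:effect_ops}: the positroid is unchanged as a set of bases, and coindependence of a set not containing $p$ is preserved, while $p$ itself becomes coindependent — and one checks this is compatible with the chord-diagram bookkeeping. For the product case (rightmost top chord not red), I would verify that \cref{condition:BCFW} holds for $\Pos_{\D_L},\Pos_{\D_R}$: here one uses the inductive $4$-coindependence of the tree parts (so $a,b,n$ is coindependent for $\Pos_{\D_L}$ and $b,c,d,n$ for $\Pos_{\D_R}$), then invokes \cref{prop:matroid_under_bcfw} to conclude $S_{\D_L}\bcfw S_{\D_R}$ is defined and lies in a single $1$-loop positroid cell, and finally re-establishes the strengthened hypotheses for $\Pos_\D$ from \cref{cor:matroid_after_bcfw} (parts \ref{it:matroid_after_bcfw_bcdn} and \ref{it:matroid_after_bcfw_chord} give the needed coindependences; one must also check $4$-coindependence survives, which is where \cref{lem:4coind_tree} and its one-loop analogue enter). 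For the forward limit case (rightmost top chord red), I would check that the pre-FL subdiagram $\D'$ — a tree diagram which equals its own FL-subdiagram — satisfies \cref{condition:FL}: $B$-independence and $\{c,d,A,B,n\}$-coindependence. This is exactly the content of \cref{lem:bcfw_before_fl_tree}, parts \ref{it:bcfw_before_fl_yellow} and the hypothesis that $S_{\D'}$ has no zero column at $B$ and no chord $(n-4,n-3,n-2,n-1)$, which are guaranteed by the constraints in \cref{def:1-loop-cells} on which tree cells may be fed to $\FL$. Then \cref{prop:matroid_under_FL} gives that $\FL(S_{\D'})$ is defined and lies in a single $1$-loop positroid cell $(\PosC_\D,\PosD_\D)$, and \cref{cor:matroid_after_fl} re-establishes the coindependence data.

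Regarding why each step being \emph{defined} follows along the way: \cref{prop:BCFW_and_pos} and \cref{prop:fl_domain} already assert that the maps descend to the nonnegative loopy Grassmannian provided the relevant rank/dimension inequalities $k_L\le|N_L|-2$ etc.\ hold; these numerical constraints are built into the chord-diagram definition (\cref{def:chord_L=1}), so I would just note that the combinatorial constraints translate to exactly the hypotheses of those propositions, and that fullness of rank on the positroid cell $\Pos_\D$ is nonempty by the explicit BCFW parametrization of \cref{obs:bcfw_form}/\cref{cor:BCFW_up_2_gauge} (which exhibits an honest positive-parameter point).

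The main obstacle I expect is the bookkeeping in the forward-limit case: one must be careful that the index-set surgery defining the pre-FL subdiagram (adding markers $A\lessdot B$, moving red chords to end at $(A,B)$ and blue chords to $(d,A)$, as in \cref{def:L=1sub_diags}) really produces a tree diagram $\D'$ on which \cref{lem:bcfw_before_fl_tree} applies with $c,d$ playing the roles of $n-2,n-1$ — i.e.\ that $B$ is the penultimate marker, that $d=d_{\tr(\D)}$ is in the right position, and that the excluded-cell conditions of \cref{def:1-loop-cells} (no BCFW product with chord supported on $c,d,A,B,n$) are exactly what rules out the forbidden chord $(n-4,n-3,n-2,n-1)$ in $\D'$. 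A secondary subtlety is maintaining the strengthened induction hypothesis through the product step when one side is a one-loop cell: \cref{prop:matroid_under_bcfw} handles the loop part via the "add two columns $A,B$" trick, and I would need to confirm that the coindependence conclusions of \cref{cor:matroid_after_bcfw} hold for $\PosBD$ as well, not just $\PosBC$, so that a subsequent forward-limit or product step on the one-loop side still has its hypotheses met. Once the hypothesis is correctly stated, the verification in each case is a routine application of the already-established \cref{prop:matroid_under_pre}, \cref{prop:matroid_under_bcfw}, \cref{cor:matroid_after_bcfw}, \cref{lem:bcfw_before_fl_tree}, \cref{prop:matroid_under_FL}, and \cref{cor:matroid_after_fl}.
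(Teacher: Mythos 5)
Your overall structure — induction on the generation sequence with a strengthened hypothesis carrying coindependence data so that the hypotheses of \cref{prop:matroid_under_bcfw} and \cref{prop:matroid_under_FL} are met at every step — is exactly the paper's strategy, and your case analysis ($\pre$, $\bcfw$, $\FL$) and invocation of \cref{lem:bcfw_before_fl_tree} to check \cref{condition:FL} are also the same.

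Where you deviate is in the precise form of the strengthened induction hypothesis, and here there is a genuine gap. You propose $4$-coindependence of the tree part, ``extending \cref{lem:4coind_tree}'', but that one-loop extension is never established anywhere in the text: \cref{lem:4coind_tree} is a strictly tree-level result, and the coindependence facts the paper \emph{does} prove for one-loop cells (\cref{lem:bcfw_after_fl}) form a specific list of four-element sets, not unrestricted $4$-coindependence. You yourself flag the need for ``\cref{lem:4coind_tree} and its one-loop analogue'' but do not supply the analogue, and it is not obvious it holds once columns are merged in the forward limit. The paper sidesteps this with a weaker, exactly-tailored hypothesis: the tree part of the positroid cell is coindependent with respect to $\{\min N\}$ together with the top three maximal elements of the current index set $N$. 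That is precisely what comes out of \cref{cor:matroid_after_bcfw}\,\cref{it:matroid_after_bcfw_bcdn} after a product step and \cref{cor:matroid_after_fl} after an FL step, and it is a fortiori enough to feed the next step: for the left factor of a product (index set $N_L=\{1,\dots,a,b,n\}$) it gives $\{1,a,b,n\}$-coindependence, hence the $\{a,b,n\}$-coindependence demanded by \cref{condition:BCFW}; for the right factor ($N_R=\{b,\dots,c,d,n\}$) it gives $\{b,c,d,n\}$-coindependence directly; and $\pre$ steps preserve it by \cref{prop:matroid_under_pre}. If you swap your $4$-coindependence claim for this weaker hypothesis, your induction closes without needing the unproved analogue.

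A minor additional point: your concern about tracking coindependence of $\PosBD$ is unnecessary. Once a cell is one-loop there is no further forward-limit step (the grammar allows at most one $\FL$), and \cref{condition:BCFW} constrains only the \emph{tree} parts of $\Pos_L,\Pos_R$, so only tree-part coindependence needs to propagate through the recursion.
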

\begin{proof}
We will use induction on the lexicographically ordered $(\ell,k,n)$ to show a stronger claim, that moreover the tree part of the $1-$loop positroid cell at each step is coindependent with respect to the set made of the minimal and three maximal elements of the index set of that step. 
If $\ell=0$ then $\D$ is a tree chord diagram. In this case it is well known it is contained in a positroid cell, see, for example, \cite[Section 3]{even2021amplituhedron}, which also describes the decorated permutation of the positroid cell and algorithms for constructing it. \cref{lem:4coind_tree} shows the coindependent result.

Assume $\ell=1.$ If $\D=\pre_{p}\D'$ then the result follows from the induction and \cref{prop:matroid_under_pre}.

If $\D=\FL(\D')$ then $\D'$ is a tree diagram, which by assumption has no chord $(c,d,A,B)$ but has some chord ending at $(A,B).$ By \cref{lem:bcfw_before_fl_tree},~\cref{it:bcfw_before_fl_yellow},~\cref{it:bcfw_before_fl_bcdn} the matroid $\PosB$ of $S_{\D'}$ is $\{c,d,A,B,n\}$ coindependent, and has a basis $I$ with $I\cap\{1,c,d,A,B,n\}=\{B\}.$ \cref{prop:matroid_under_FL} then shows that the $1-$loop positroid of $S_\D$ is nonempty, hence the forward limit step is defined, and moreover its tree part is $\{1,c,d,n\}-$coindependent. In particular $S_\D$ is nonempty and is contained in a single $1-$loop positroid cell.

If $\D=\D_L\bcfw \D_R,$ where both are assumed to satisfy the inductive assumptions, then by \cref{prop:matroid_under_bcfw} and \cref{cor:matroid_after_bcfw},~\cref{it:matroid_after_bcfw_bcdn}, also $S_\D$ satisfies the inductive assumptions.
\end{proof}
The following lemma summarizes some useful properties of the BCFW cells' positroids.
\begin{lemma}\label{lem:bcfw_after_fl}Let $\D\in\CD_{n,k}^1$ be a chord diagram. Let $\Pos=(\PosC,\PosD)$ be the $1-$loop positroid cell containing it. 
\begin{enumerate}
\item\label{it:bcfw_after_fl_bcdn}$\{1,n-2,n-1,n\}$ is coindependent for $\PosC$.
\item\label{it:bcfw_after_fl_chords}For every black or purple chord $\D_i,$ $\{a_i,b_i,c_i,d_i\}$ is coindependent for $\PosC.$ If $\D=\D_L\bcfw\D_R$ for $\D_L,\D_R$ where $\D_L,\D_R$ have index sets as in \cref{nn:bcfwmap}, then $S_\D$ is also $J-$coindependent for every $J\in\binom{\{a,b,c,d,n\}}{4}.$
\item\label{it:bcfw_after_fl_for_Sblue_argument} $\{c_\star,d_\star,n\}$, if $\D_0$ is a top chord, or $\{a_{\p(0)},b_{\p(0)},c_\star,d_\star\},$ otherwise, is coindependent for $\PosC.$
\item\label{it:bcfw_after_fl_yellow}
$\{c_\star,d_\star\}$ is coindependent for $\PosD.$ If $\D=\FL(\D')$ then every $J\in\binom{\{c_\star,d_\star,n\}}{2}$ is coindependent for
$\PosD.$ 
\item\label{it:bcfw_after_fl_reds}For every red or blue chord $\D_i$ $\{a_i,b_i\}$ is coindependent for $\PosD.$ If $\D=\FL(\D')$ then  
every $J\in\binom{\{a_i,b_i,n\}}{2}$ is coindependent for
$\PosD.$
\end{enumerate}
\end{lemma}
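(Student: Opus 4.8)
The plan is to argue by induction on the generation sequence of $S_\D$ — equivalently, on the lexicographically ordered triple $(\ell,k,n)$ — with a case analysis on the last step, $\pre$, $\bcfw$ or $\FL$, exactly as in the proof of \cref{prop:fixed_positroids}. The bookkeeping engines are \cref{prop:matroid_under_pre} (a $\pre$ step leaves the bases of both the tree and loop positroids unchanged, as subsets of the enlarged index set), \cref{prop:matroid_under_bcfw} with \cref{cor:matroid_after_bcfw} (which, once \cref{condition:BCFW} is known, describe the tree and loop positroids of $S_L\bcfw S_R$ and their coindependences), and \cref{prop:matroid_under_FL} with \cref{cor:matroid_after_fl} (which do the same for the forward limit once \cref{condition:FL} is known). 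That \cref{condition:BCFW} and \cref{condition:FL} hold at the steps where they are needed, and that the diagram feeding the unique $\FL$ step satisfies the hypotheses of \cref{lem:bcfw_before_fl_tree}, is part of the inductive stronger claim, just as in \cref{prop:fixed_positroids}: at every step the tree part is coindependent with respect to the minimum and the three maxima of the current index set. In particular, \cref{it:bcfw_after_fl_bcdn} is this stronger claim read off at the final step, whose index set is $[n]$ and whose four relevant markers are $1,n-2,n-1,n$.

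For \cref{it:bcfw_after_fl_chords} and \cref{it:bcfw_after_fl_for_Sblue_argument} I would track coindependence in the tree part. The second assertion of \cref{it:bcfw_after_fl_chords} is precisely \cref{cor:matroid_after_bcfw},~\cref{it:matroid_after_bcfw_chord}, whose hypotheses are supplied by the stronger claim for $\D_L$ and $\D_R$. For the first assertion, fix a black or purple chord $\D_i$. There is a distinguished step creating $\D_i$, namely the $\bcfw$ step whose rightmost top chord $\D_{\max K}$ is $\D_i$; for a purple chord this step lies inside the generation of the pre-$\FL$ tree cell, where that chord is a black chord ending at $(c_\star,d_\star)$. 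Right after that step $\{a_i,b_i,c_i,d_i\}$-coindependence holds by the second assertion just proven, and I would propagate it forward through the remaining $\pre$, $\bcfw$ and at most one $\FL$ step. Propagation through $\pre$ is vacuous; through a $\bcfw$ step, lift via \cref{cor:matroid_after_bcfw},~\cref{it:matroid_after_bcfw_promoting_1_side} a basis of the side carrying $\D_i$ that avoids $\{a_i,b_i,c_i,d_i\}$ to a basis of the product still avoiding it, taking the extra marker $t\in\{a,b,c,d,n\}$ outside $\{a_i,b_i,c_i,d_i\}$ — possible since $n\notin\{a_i,b_i,c_i,d_i\}$, the check of which of $a,b,c,d$ can lie there being as in the proof of \cref{lem:bcfw_before_fl_tree}; through the $\FL$ step, invoke \cref{prop:matroid_under_FL} and \cref{cor:matroid_after_fl}, fed by the bases of the pre-$\FL$ cell from \cref{lem:bcfw_before_fl_tree},~\cref{it:bcfw_before_fl_chord},~\cref{it:bcfw_before_fl_blues_reds} (needed when $\D_i$ ends at the yellow position). \cref{it:bcfw_after_fl_for_Sblue_argument} is handled the same way, the creating step now being the $\FL$ step itself: right after it, \cref{cor:matroid_after_fl},~\cref{it:matroid_after_fl_B_case} — together with \cref{lem:bcfw_before_fl_tree},~\cref{it:bcfw_before_fl_bcdn} if $\D_0$ is a top chord, or \cref{it:bcfw_before_fl_reds} otherwise — yields coindependence of $\{c_\star,d_\star,n\}$, respectively $\{a_{\p(0)},b_{\p(0)},c_\star,d_\star\}$, which then propagates forward as above.

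For \cref{it:bcfw_after_fl_yellow} and \cref{it:bcfw_after_fl_reds} I would track coindependence in the loop part $\PosD$, which is born at the $\FL$ step: by \cref{prop:matroid_under_FL}, $\PosBD$ is expressed purely through the positroid of the pre-$\FL$ tree cell. When $\D=\FL(\D')$ this makes \cref{it:bcfw_after_fl_yellow} immediate from \cref{cor:matroid_after_fl},~\cref{it:matroid_after_FL_yellow} (using $\{c,d,A,B,n\}$-coindependence of $S_{\D'}$, which is \cref{lem:bcfw_before_fl_tree},~\cref{it:bcfw_before_fl_yellow}), and \cref{it:bcfw_after_fl_reds} immediate from \cref{cor:matroid_after_fl},~\cref{it:matroid_after_fl_reds}: in $\D'$ a red (blue) chord of $\D$ has been moved to end at $(A,B)$ (at $(d,A)$), so the $\PosB$-bases demanded by \cref{cor:matroid_after_fl},~\cref{it:matroid_after_fl_reds} are exactly those furnished by \cref{lem:bcfw_before_fl_tree},~\cref{it:bcfw_before_fl_reds} and \cref{it:bcfw_before_fl_blues_reds}. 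For a general one-loop $\D$ (the $\FL$ step occurring earlier in the sequence) I would propagate forward: $\pre$ is vacuous, and for a $\bcfw$ step \cref{prop:matroid_under_bcfw} shows that $\PosBD$ of the product is governed by $\PosBD$ of the loopy side and the tree part of the other side by the same rules governing the tree part, so the lifting argument of the previous paragraph applies verbatim — a red or blue chord, as well as the yellow chord, necessarily lying on the loopy side. The stronger refinements that quantify over all $J$ (and also involve the marker $n$) are only claimed when $\D=\FL(\D')$, matching what the forward-limit-level statements in \cref{cor:matroid_after_fl} actually give.

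The step I expect to be the main obstacle is the $\FL$ case: it requires matching, row by row, the cases of the tables in \cref{prop:matroid_under_FL} against the basis-existence statements of \cref{lem:bcfw_before_fl_tree}, while keeping straight the dictionary between the red, blue, purple chords of $\D$ and the chords of its pre-$\FL$ diagram together with where each of them now ends. Everything else — the $\pre$ and $\bcfw$ cases — is routine forward propagation of coindependence, parallel to the proof of \cref{lem:bcfw_before_fl_tree} and to \cite[Corollary 10.7]{even2023cluster}.
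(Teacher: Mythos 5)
Your overall strategy — reducing the $\FL$ step to \cref{lem:bcfw_before_fl_tree}, and then pushing coindependence forward along the $\pre$/$\bcfw$ suffix of the generation sequence using \cref{prop:matroid_under_pre}, \cref{prop:matroid_under_bcfw}/\cref{cor:matroid_after_bcfw} and \cref{prop:matroid_under_FL}/\cref{cor:matroid_after_fl} — is exactly the paper's, and your handling of items \ref{it:bcfw_after_fl_bcdn}, \ref{it:bcfw_after_fl_chords}, \ref{it:bcfw_after_fl_yellow} and \ref{it:bcfw_after_fl_reds}, including the correct observation that the stronger ``$J\ni n$'' variants of items \ref{it:bcfw_after_fl_yellow} and \ref{it:bcfw_after_fl_reds} are only claimed for $\D=\FL(\D')$, matches the paper.

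There is, however, a genuine gap in your treatment of item \ref{it:bcfw_after_fl_for_Sblue_argument}. You set the base case ``right after the $\FL$ step'' and branch on whether $\D_0$ is a top chord, citing \cref{lem:bcfw_before_fl_tree},~\cref{it:bcfw_before_fl_reds} for the non-top case. But by construction of the $\FL$-subdiagram, $\D_0$ is always a top chord of the pre-$\FL$ tree diagram, so the ``otherwise'' branch cannot occur at that step; indeed the markers $a_{\p(0)},b_{\p(0)}$ do not yet exist at that point, so the set $\{a_{\p(0)},b_{\p(0)},c_\star,d_\star\}$ is not even defined there. The passage to that set happens only at a later $\bcfw$ step — the one at which $\D_0$ first acquires a parent, necessarily a step where the loopy component is $\D_R$ and $\D_0$ is a top chord of $\D_R$ — and at that step the set being tracked genuinely \emph{changes}: on the loopy side the available coindependence is $\{\min N_R, c_\star, d_\star, n\}=\{b,c_\star,d_\star,n\}$, while the target for $\D$ is $\{a,b,c_\star,d_\star\}$. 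Your ``propagates forward as above'' pattern, which lifts a basis avoiding a \emph{fixed} set via \cref{cor:matroid_after_bcfw},~\cref{it:matroid_after_bcfw_promoting_1_side}, does not cover this transition. The paper resolves it by strengthening the inductive hypothesis — tracking the ``forbidden quadruple'' $\{\min N,c_\star,d_\star,n\}$ or $\{a_{\p(0)},b_{\p(0)},c_\star,d_\star\}$ according to whether $\D_0$ is a top chord at the current stage — and by a dedicated basis construction with the choices $I_L$ avoiding $\{a,b,n\}$, $I_R$ avoiding the forbidden quadruple of $\PosC_R$, and $f\in\{a,d,n\}$ chosen separately in each of the three $\bcfw$ sub-cases (loop on the left; loop on the right with $\D_0$ top in $\D_R$; loop on the right with $\D_0$ non-top in $\D_R$). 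You would need to reproduce that case split to close item \ref{it:bcfw_after_fl_for_Sblue_argument}.
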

\begin{proof}
\cref{it:bcfw_after_fl_bcdn}  was proven in \cref{prop:fixed_positroids}.

In case $\D=\FL(\D')$ the first part of \cref{it:bcfw_after_fl_chords} follows from \cref{cor:matroid_after_fl}~\cref{it:matroid_after_fl_B_case} and \cref{lem:bcfw_before_fl_tree},~\cref{it:bcfw_before_fl_chord}.

Recall $\D$ can be constructed from its $\FL-$subdiagram by repeated applications of $\pre$ and BCFW products with tree level BCFW cells. We thus apply \cref{prop:matroid_under_pre}, \cref{cor:matroid_after_bcfw} and \cref{lem:4coind_tree} to deduce both parts of \cref{it:bcfw_after_fl_chords} from the case $\D=\FL(\D')$ described above.

For \cref{it:bcfw_after_fl_for_Sblue_argument}, we will show by induction on the number of steps in the recipe of the BCFW cell in terms of BCFW and $\pre$ steps performed after the forward limit step, that for every $1-$loop chord diagram $\D$ on the index set $N,$ either $\{\min N,c_\star,d_\star, n\}$ is coindependent for the corresponding matroid $\PosC,$ when $\D_0$ is a top chord, or $\{a_{\p(0)},a_{\p(0)},c_\star,d_\star\}$ is coindependent for $\PosC.$ We will refer to the above quadruple as the \emph{forbidden quadruple for $\PosC$}.

If $\D=\FL(\D')$ then \cref{it:bcfw_after_fl_for_Sblue_argument} follows from the first item, and moreover $\{\min N,c_\star,d_\star,n\}$ is coindependent for $\PosC$. 

If $\D=\pre_{p}\D'$ and the claim holds for $\D'$ then it clearly holds for $\D$ by induction and  \cref{prop:matroid_under_pre}. If $\D=\D_L\bcfw\D_R,$ then if $\D_L$ is the $1-$loop component, let $I_L$ be a basis for the positroid $\PosC_L$ of $S_{\D_L}$ avoiding the forbidden quadruple for $\PosC_L$, which exists by induction. Choose a basis $I_R$ for the positroid cell of $S_{\D_R},$ avoiding $\{b,c,d,n\}$, which is possible thanks to \cref{lem:4coind_tree}, and pick $f=d.$ Then $I_L\cup\{f\}\cup I_R$ is a basis for $\PosC,$ by \cref{prop:matroid_under_bcfw} which avoids the forbidden quadruple for $\PosC_L,$ which in this case equals the forbidden quadruple for $\PosC$.

If $\D_R$ is the $1-$loop component there are two cases. Either $\D_0$ is a top chord of $\D_R$, or not. If it is, pick $I_R$ which avoids the forbidden quadruple for $\PosC_R,$ which must be $\{b,c_\star,d_\star,n\}.$ For this we use induction. We pick a basis $I_L$ for the left part which avoids $a,b,n.$ For this we use \cref{lem:4coind_tree}. We pick $f=n.$ $I=I_L\cup\{f\}\cup I_R$ avoids $a,b,c_\star,d_\star.$
If $\D_0$ is not a top chord of $\D_R$ then the forbidden quadruple for $\PosC_R$ is $\{a_{\p(0)},b_{\p(0)},c_\star,d_\star\}$. We pick a basis $I_R$ for $\PosC_R$ avoiding it, using the induction, a basis $I_L$ for the left part avoiding $a,b,n$, using \cref{lem:4coind_tree} and $f=a.$ $I=I_L\cup\{f\}\cup I_R$ avoids $\{a_{\p(0)},b_{\p(0)},c_\star,d_\star\}$, which is the forbidden quadruple for $\PosC.$
In both cases, by \cref{prop:matroid_under_bcfw} $I_L\cup\{f\}\cup I_R.$

\cref{it:bcfw_after_fl_yellow}, in case $\D=\FL(\D')$ follows from \cref{cor:matroid_after_fl},~\cref{it:matroid_after_FL_yellow} and \cref{lem:bcfw_before_fl_tree},~\cref{it:bcfw_before_fl_yellow}. 
For general $\D\in\CD_{n,k}^1$ let $\D'$ be its $\FL-$subdaigram. Then $\{c_\star,d_\star\}$ is coindependent for its loopy part, meaning we can find a basis for the loopy part not involving $c_\star,d_\star.$ $S_\D$ is constructed from $S_{\D'}$ by a series of $\pre$ and $\bcfw$ steps, which is a subsequence of the generation sequence. Let $\D_i,~i=1,\ldots, T$ be the sequence of intermediate terms with $\D_1=\D', \D_T=\D$ and every $\D_i$ is either $\pre_p\D_{i-1},$ $\D_{i-1}\bcfw\D''$ or $\D''\bcfw\D_{i-1}.$ Let $I_{i-1}$ be a basis for the loopy part of $\D_{i-1}$ not involving $\{c_\star,d_\star\},$ whose existence we assume by induction. We will constructs a basis $I_i$ with the same property for the loop part of $\D_{i}.$ If $\D_i=\pre_p\D_{i-1}$ then $I_i=I_{i-1}.$
If $\D_{i-1}\bcfw\D''$ or $\D''\bcfw\D_{i-1},$ where the new chord is $(a,b,c,d)$,  let $j\in\{a,b,c,d,n\}$ be an index not included in the index set of $\D_{i-1},$ defined as follows. If $\D_{i-1}$ is the left component of the BCFW product, we pick $j=d.$ If it is the right part we pick $j=a.$ In both case $j\neq c_\star,d_\star.$
We pick a basis $J$ for $\D'',$ which is a tree level BCFW cell, not including $\{a,b,c,d,n\}.$ For this we use the $4-$coindependence of $S_{\D''},$ \cref{lem:4coind_tree}. Then by \cref{prop:matroid_under_bcfw} $I_i=J\cup \{j\}\cup I_{i-1}$ is a basis satisfying our requirements.

\cref{it:bcfw_after_fl_reds}, in case $\D=\FL(\D')$ follows from \cref{cor:matroid_after_fl},~\cref{it:matroid_after_fl_reds},~\cref{it:matroid_after_FL_yellow} and \cref{lem:bcfw_before_fl_tree},~\cref{it:bcfw_before_fl_reds}. The coindependence of $\{a_i,b_i\}$ for $\PosD$ in the general case is proven similarly to the previous item.
\end{proof}

\subsection{The domino form}
Karp, Williams, Zhang and Thomas~\cite[Appendix~A]{karp2020decompositions} suggested representing the points in the BCFW cells by special matrices, whose rows are called \emph{domino bases}. \cite{even2021amplituhedron} used this form extensively in their proof of the $\ell=0$ BCFW conjecture. We now recall the definition, and generalize it to $1-$loop.
\begin{definition}
\label{def:domino_entries}
The \emph{($\ell-$loop) domino matrix} of a chord diagram $\D \in \mathcal{CD}^1_{n,k}$ is a pair $(C\vdots D)_\D=(C\vdots D)_\D(\{\alpha_0,\beta_0,\varepsilon_0\}\cup\{\gamma_\star,\delta_\star,\varepsilon_\star\}\cup\{\alpha_i,\ldots,\varepsilon_i\}_{i\in[k+\ell]_\D}),$
where $C_\D$ is a $k \times n$ matrix, and $D_\D$ is a $2\times n$ matrix. If $\ell=0$ there are no $\{\alpha_0,\beta_0,\varepsilon_0\}\cup\{\gamma_\star,\delta_\star,\varepsilon_\star\},$ but we may keep them in the notations to avoid splitting into cases.
The rows of $C_\D$ are indexed by $[k+\ell]_\D=[k+\ell]\setminus\{\tr(\D)\}$ where the $l$th row corresponds to the chord $\D_l.$ They \emph{appear in the matrix}, however, according to the increasing order of the starting points $a_i$ of the chords. We say that a row is black, red, blue, or purple if the corresponding chord is.
$D_\D$ consists of two rows, the top row is the \emph{yellow row} $D_\star,$ and the bottom row is the \emph{top red row} $D_0.$ 
We slightly abuse terminology by transferring terms from chords to describe relations between corresponding row, for example we say that $D_p$ is a parent of $D_l$ if $\D_p$ is a parent of $\D_l.$
The matrices $C_\D,D_\D$ are defined as follows.
We start with all rows being $0.$
\begin{itemize}
\item[(a)]
Write $(\alpha_l,\beta_l)$ at the respective positions $(a_l,b_l)$ of $C_l,$ if $l\in[k+\ell]_\D$, and of $D_0,$ if $l=0.$ 
\item[(b)]
If the $l$th chord is a top chord, write  $\varepsilon_l$ at the $n$th position of $C_l$ (if $l\in[k+\ell]_\D$) or $D_l$ (if $l\in\{0,\star\}$). Otherwise, add to the $l$th row $\varepsilon_l\alpha_{\p(l)},\varepsilon_l\beta_{\p(l)}$ at positions $a_{\p(l)},b_{\p(l)}$  respectively. 
\item[(c)]
If $\D_l$ is black or purple or yellow,
write $(\gamma_l, \delta_l)$ at the respective positions $(c_l,d_l)$. 
\item[(d)]
If $\D_l$ is red, add to it $\delta_lD_0+\gamma_lD_\star.$
\item[(e)] If $D_l$ is blue, add to it $\delta_lD_\star+\gamma_l\pr_{(c_\star,d_\star)}D_\star.$ 
\end{itemize}
We will refer to $\alpha_i \ee_{a_i}+\beta_i\ee_{b_i},~i\in[k+\ell]$ and to $\gamma_i\ee_{c_i}+\delta_i\ee_{d_i},~i\in[k+l]\cup\{\star\}\setminus(\Red(\D)\cup\Blue(\D)),$ as the \emph{starting} and \emph{ending domino} of $\D_i,$ respectively.
By slight abuse of nations we will sometimes also refer to the pairs
$(\alpha_i,\beta_i),~i\in[k+\ell]$ and to $(\gamma_i,\delta_i),~i\in[k+l]\cup\{\star\}\setminus(\Red(\D)\cup\Blue(\D))$ as the starting and ending dominoes.
It will be convenient to refer to refer to $\ee_n$ as the \emph{starting domino of $\D_\phi$}. This will allow us to call the vector which multiplies $\varepsilon_i$ in the row corresponding to $\D_i$ the \emph{starting domino inherited from the parent of $\D_i$} or just the starting domino of the parent.
We write $u_\star = \pr_{c_\star d_\star}D_\star.$ 
As usual we use $\tr=\tr(\D)$ and $0$ interchangeably. 
\end{definition}
\begin{ex}\label{ex:domino_matrices}
\cite{even2021amplituhedron} provides several examples of $\ell=0$ domino matrices, e.g. Examples 2.6 and 2.31.
We now provide $\ell=1$ examples. There are three chord diagrams for $(\ell,n,k)=(1,5,1).$ Both have two cords, one $(1,2,3,4)$ and the other is a very short $(2,3,4).$
The first is when the two chords are red:
\[\begin{pmatrix}
0 &0 &\gamma_\star &\delta_\star &\varepsilon_\star \\
\alpha_0 &\beta_0 &0 &0&\varepsilon_0\\\hline
(\delta_1+\varepsilon_1)\alpha_0 &\alpha_1+(\delta_1+\varepsilon_1)\beta_0 &\beta_1+\gamma_1\gamma_\star &\gamma_1\delta_\star &\gamma_1\varepsilon_\star +\delta_1\varepsilon_0
\end{pmatrix}\]
The second case is that $(1,2,3,4)$ is red and $(2,3,4)$ is blue:
\[\begin{pmatrix}
0 &0 &\gamma_\star &\delta_\star &\varepsilon_\star \\
\alpha_0 &\beta_0 &0 &0&\varepsilon_0\\\hline\varepsilon_1\alpha_0 &\alpha_1+\varepsilon_1\beta_0 &\beta_1+(\gamma_1+\delta_1)\gamma_\star &(\gamma_1+\delta_1)\delta_\star &\delta_1\varepsilon_\star 
\end{pmatrix}\]
The last one is when $(1,2,3,4)$ is black and $(2,3,4)$ is the top red chord.
\[\begin{pmatrix}
\varepsilon_\star\alpha_2 &\varepsilon_\star\beta_2 &\gamma_\star &\delta_\star &0 \\
\varepsilon_0\alpha_2&\alpha_0+\varepsilon_0\beta_2 &\beta_0 &0&0\\\hline\alpha_2&\beta_2 &\gamma_2 &\delta_2&\varepsilon_2
\end{pmatrix}\]
When we add the sign constraints of \cref{def:L=1domino_signs} we obtain the domino parameterizations of the three BCFW cells for these values of $(\ell,n,k).$ Note that the three cells are contained in the same $1$-loop positroid cell, which consists of the two big cells $(\Gr^>_{1,5},\Gr^>_{3,5}).$
\end{ex}
\begin{definition}\label{def:L=1domino_signs}
The \emph{sign rules} of the domino matrix $(C\vdots D)_\D$ 
\begin{enumerate}
\item 
For every chord $l\neq 0,\star$: \; $\alpha_l,\beta_l > 0,$ and,
\[(-1)^{1+\below(\D_0)}\alpha_0,~(-1)^{1+\below(\D_0)}\beta_0>0 .\]
\item 
$\gamma_\star,\delta_\star>0,$ and for $l\neq 0,\star,$
\[(-1)^{\below(\D_l)}\gamma_l,~(-1)^{\below(\D_l)}\delta_l>0.\]
\item $\varepsilon_l$ satisfies
\[\begin{cases}(-1)^{\after(\D_l)}\varepsilon_l>0,&\D_l~\text{is a top chord, including }l=0,\star\\(-1)^{1+\Between(\D_l)+\below(\D_0)}\varepsilon_l>0,&\p(l)=0\\(-1)^{\Between(\D_l)}\varepsilon_l>0,&\text{otherwise}\end{cases}\]
\item\label{it:for_eta}
Suppose $\D_l$ is a strict same end child of $\D_m.~l\in[k+\ell]_\D\cup\{\star\},~m\in[k+\ell]_\D,$ or $m=\star$ and $\D_l$ is the top purple chord. We recall that this condition means that either $\D_l$ is the same-end child of $\D_m,$ and either both are of the same color or $\D_m$ is black, or that $\D_m=\D_\star$ and $\D_l$ is the top purple chord. Then \; $\delta_l/\gamma_l < \delta_m/\gamma_m.$
\item\label{it:for_theta}
If $\D_l,\D_m$ are siblings such that $\D_l$ is head-to-tail after $\D_m,$ for $l\in[k+\ell],~m\in[k+\ell]_\D\cup\{\star\}$, where $\D_m$ is either black, yellow or purple, then: \; $\beta_l/\alpha_l > \delta_m/\gamma_m,$ unless $l=0$ in the opposite inequality holds.
\end{enumerate}
For later purposes we add a few more notations.
If $\D_m$ is a strict same-end ancestor of $\D_l$ we write 
\begin{equation}\label{eq:eta}
\eta_{ml}=\det\begin{pmatrix}
   \gamma_m&\delta_m\\
   \gamma_l&\delta_l,
\end{pmatrix}
\end{equation}where in the special case of \cref{it:for_eta} we write $\eta_l:=\eta_{ml}.$
Similarly, when $\D_l$ starts where $\D_m$ ends, and $m\notin\Blue_\D\cup\Red_\D$ we write
\begin{equation}\label{eq:theta}
\theta_{ml}=\det\begin{pmatrix}
   \alpha_l&\beta_l\\
   \gamma_m&\delta_m
\end{pmatrix}
\end{equation}
and in the special case of \cref{it:for_theta} we write $\theta_m:=\theta_{ml}.$

We write $\tVar=\tVar_\D$ for the set of domino entries of $\D,$ and every $\eta_{ml},\theta_{ml}$ as above. Its subset $\Var=\Var_\D$ is obtained by taking every $\eta_i,\theta_i$ and every domino entry, except $\gamma_i$ whenever $\eta_i$ or $\theta_i$ is defined.
\end{definition}
\begin{obs}\label{obs:gauge_domino}
The domino representation is also subject to gauge.
The group generated by the following scalings acts on domino forms for $U\vdots V\in S_\D.$
\begin{itemize}
\item[--]Scaling the row $D_\star$ by $\lambda,$ every $\gamma_i,~\D_i\in\Red_\D\cup\Blue_\D$ and $\delta_i,~\D_i\in\Blue_\D$ by $\lambda^{-1}.$
\item[--]Scaling the row $D_0$ by $\lambda,$ every $\delta_i,~\D_i\in\Red_\D$ and $\varepsilon_i,~\p(i)=0,$ by $\lambda^{-1}.$
\item[--]Scaling the row $C_h$ by $\lambda,$ and every and $\varepsilon_i,~\p(i)=h,$ by $\lambda^{-1}.$
\end{itemize}
\end{obs}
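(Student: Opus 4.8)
The plan is to reduce the statement to the elementary fact that scaling one row of a loopy matrix representative does not change the loopy vector space it represents, and then to verify, by unwinding the construction in \cref{def:domino_entries}, that each of the three generating operations is exactly such a row scaling, read off in terms of the domino parameters.

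First I would recall from \cref{nn:loopGrass} and \cref{def:group} that the fiber of $\PROJ$ over $U\vdots V\in\Gr_{k,n;1}$ is a $\GL_{k;1}$-torsor. In particular, left multiplication of $C_\D$ by an invertible diagonal $k\times k$ matrix, and multiplication of the pair of rows of $D_\D$ by an invertible diagonal $2\times 2$ matrix, both lie in $\GL_{k;1}$ and hence fix $U\vdots V$ (equivalently: rescaling a single row of $C_\D$ or of $D_\D$ by $\lambda>0$ preserves $U=\mathrm{rowspan}(C_\D)$ and $V=\mathrm{rowspan}(D_\D)\bmod U$). So it suffices to prove that each of the three listed rescalings of the domino entries transforms the domino matrix $(C\vdots D)_\D$ into the matrix obtained from $(C\vdots D)_\D$ by scaling a single distinguished row, namely $D_\star$, $D_0$, or $C_h$, respectively, by $\lambda$; since these three row scalings commute, the assertion about the group they generate follows at once, and the resulting parameters still give a domino form (indeed of the same $U\vdots V\in S_\D$).

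The verification is a direct inspection of rules (a)--(e) in \cref{def:domino_entries}. The only couplings between distinct rows of $(C\vdots D)_\D$ are: a non-top chord's row carries the products $\varepsilon_i\alpha_{\p(i)},\varepsilon_i\beta_{\p(i)}$ of rule~(b); a red row carries $\delta_i D_0+\gamma_i D_\star$ of rule~(d); and a blue row carries $\delta_i D_\star+\gamma_i u_\star$ of rule~(e), with $u_\star=\pr_{c_\star d_\star}D_\star$ scaling together with $D_\star$. Scaling the row $D_\star$ by $\lambda$ multiplies $\gamma_\star,\delta_\star,\varepsilon_\star$ (hence $u_\star$) by $\lambda$, and this is undone in every red and blue row by $\gamma_i\mapsto\lambda^{-1}\gamma_i$ and, for blue rows, additionally $\delta_i\mapsto\lambda^{-1}\delta_i$; no other row refers to $D_\star$. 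Scaling the row $D_0$ by $\lambda$ multiplies $\alpha_0,\beta_0,\varepsilon_0$ by $\lambda$, undone in the rows of its children by $\varepsilon_i\mapsto\lambda^{-1}\varepsilon_i$ ($\p(i)=0$) and in the red rows by $\delta_i\mapsto\lambda^{-1}\delta_i$; no other row refers to $D_0$. Scaling the row $C_h$ by $\lambda$ multiplies $\alpha_h,\beta_h,\gamma_h,\delta_h,\varepsilon_h$ by $\lambda$, undone in the rows of its children by $\varepsilon_i\mapsto\lambda^{-1}\varepsilon_i$ ($\p(i)=h$); the phantom entries $\gamma_0=0$ and $\delta_0$ (which never appears in the matrix) play no role. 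Matching against the three families in \cref{ex:domino_matrices} is a convenient sanity check, and one observes directly that all inequalities of \cref{def:L=1domino_signs} are preserved, since $\lambda>0$ and every ratio occurring there is either left individually unchanged or scaled uniformly on both sides of the inequality.

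The one place that requires care -- the ``main obstacle'', such as it is -- is the bookkeeping of the parent/child and colour-dependent references: one must check that rescaling the parameters of a red or blue chord in the first two cases does not secondarily disturb the rows of \emph{its} own descendants. It does not, precisely because rules~(d) and~(e) always point at the \emph{fixed} rows $D_0$ and $D_\star$ rather than at a parent row, while rule~(b) involves only the parameters $\alpha_{\p(i)},\beta_{\p(i)}$, which are unscaled in those two cases; hence each compensation is purely local to its row. Once this is observed, each of the three cases collapses to a one-line check.
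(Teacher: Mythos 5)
Your proof is correct and supplies the verification that the paper leaves implicit (the statement is labeled as an observation with no proof given). The reduction to the elementary invariance of row span under positive row rescaling, followed by a direct reading of rules (a)--(e) of \cref{def:domino_entries}, is exactly the intended argument. Your bookkeeping of which rows reference $D_\star$, $D_0$, and the parent's starting domino $(\alpha_{\p(i)},\beta_{\p(i)})$ is accurate, and the observation that the compensations are local -- i.e. that rules (d) and (e) point to the fixed rows $D_0$ and $D_\star$ rather than cascading through the parent chain -- is the one nontrivial thing that needs to be said. The final sentence about the sign rules of \cref{def:L=1domino_signs} being preserved is also right, provided one reads the observation (as intended, consistently with \cref{def:BCFW coords}) as scaling by $\lambda\in\R_+$; it is worth making that positivity assumption explicit, since for negative $\lambda$ the sign constraints of item (1) in \cref{def:L=1domino_signs} would break.
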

The domino sign rule immediately imply
\begin{obs}\label{obs:iterated_etas} 
\[(-1)^{1+\below(\D_l)+\below(\D_m)}\eta_{ml}>0,\]if $l,m\neq\star.$ If $l$ or $m$ equals $\star$ we subtract $\below(\D_\star)$ from the power of $-1.$
Similarly,
\[(-1)^{1+\below(\D_m)}\theta_{ml}>0,\]if $l\neq 0,m\neq \star.$ If $l=0$ we add $1+\below(\D_0)$ to the power of $-1,$ while if $m=\star$ we subtract $\below(\D_\star)$ from the power. 
In particular these signs hold for $\eta_l,\theta_m.$
\end{obs}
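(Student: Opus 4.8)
The plan is to deduce both sign statements directly from the sign rules of \cref{def:L=1domino_signs} after one algebraic reduction. Each of $\eta_{ml}$ and $\theta_{ml}$ is a $2\times 2$ determinant in the domino entries, so one writes
\[
\eta_{ml}=\gamma_m\delta_l-\delta_m\gamma_l=\gamma_m\gamma_l\Bigl(\frac{\delta_l}{\gamma_l}-\frac{\delta_m}{\gamma_m}\Bigr),\qquad\qquad
\theta_{ml}=\alpha_l\delta_m-\beta_l\gamma_m=\alpha_l\gamma_m\Bigl(\frac{\delta_m}{\gamma_m}-\frac{\beta_l}{\alpha_l}\Bigr).
\]
By the second rule of \cref{def:L=1domino_signs} the entries $\gamma_i,\delta_i$ carry the sign $(-1)^{\below(\D_i)}$ (and are positive for $i=\star$), and by the first rule $\alpha_i$ is positive for $i\neq 0$ and carries the sign $(-1)^{1+\below(\D_0)}$ for $i=0$. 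Hence the prefactor $\gamma_m\gamma_l$ has sign $(-1)^{\below(\D_m)+\below(\D_l)}$ and $\alpha_l\gamma_m$ has sign $(-1)^{\below(\D_m)}$, with exactly the stated corrections when $m=\star$ (one drops $\below(\D_\star)$) or $l=0$ (one picks up the extra $(-1)^{1+\below(\D_0)}$ from $\alpha_0$). So the entire claim is equivalent to the two ratio inequalities $\delta_l/\gamma_l<\delta_m/\gamma_m$ and, for $l\neq 0$, $\beta_l/\alpha_l>\delta_m/\gamma_m$ (reversed for $l=0$).

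For a parent--child pair these are \emph{exactly} \cref{it:for_eta} and \cref{it:for_theta}, so the work is to upgrade them to the full range in which $\eta_{ml},\theta_{ml}$ are defined. For $\theta_{ml}$ there is little to do: if $\D_l$ is head-to-tail after $\D_m$ and $\D_m\notin\Red_\D\cup\Blue_\D$, then $\D_l,\D_m$ are siblings --- after restricting to the subdiagram cut out by the BCFW and $\pre$ steps that separate them, if they are not siblings already --- and \cref{it:for_theta} applies verbatim. For $\eta_{ml}$ one walks up the nesting chain $\D_l=\D_{j_0},\ \D_{j_1}=\D_{\p(j_0)},\ \dots,\ \D_{j_r}=\D_m$. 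Using three structural facts about one-loop chord diagrams --- every chord ends weakly before $(c_\star,d_\star)$; the red, blue and purple chords have strictly increasing starting points in that order; and all chords ending in a common segment other than $(c_\star,d_\star)$ are black --- one checks that whenever $\D_m$ is not black, or $\D_m$ and $\D_l$ both end at $(c_\star,d_\star)$, every link of the chain is monochromatic, hence is a strict same-end parent--child pair; then \cref{it:for_eta} applies to each link and transitivity of $<$ gives $\delta_l/\gamma_l<\delta_m/\gamma_m$. The instances with $\D_m=\D_\star$ are the special case of \cref{it:for_eta} (when $\D_l$ is the top purple chord), or are handled by the same chain argument, since the parent of the top red chord is always black.

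The one case the chain argument misses is when $\D_m$ is a \emph{black} chord ending at $(c_\star,d_\star)$ which is a strict same-end ancestor of a blue or purple chord $\D_l$; this really occurs --- e.g.\ when $\D_m$ is the rightmost top chord produced by a BCFW product whose one-loop factor sits on the right --- so the chain from $\D_l$ must cross a blue chord and then a red chord before reaching $\D_m$, and the red--blue link is not strict. Here I would not chain \cref{it:for_eta}; instead I would route the comparison through the yellow row. For $\D_l$ purple this is immediate: the purples form a monochromatic chain down from the top purple, \cref{it:for_eta} (special case) gives $\delta_l/\gamma_l<\delta_\star/\gamma_\star$, and $\delta_\star/\gamma_\star<\delta_m/\gamma_m$ is the already established sign of $\eta_{m\star}$, whose chain is all black. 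For $\D_l$ blue one must exploit the precise shape of the blue ending domino $\delta_l D_\star+\gamma_l u_\star$ --- whose restriction to the columns $(c_\star,d_\star)$ is proportional to $(\gamma_\star,\delta_\star)$ --- together with the red ending domino $\delta_i D_0+\gamma_i D_\star$, and propagate positivity through the explicit BCFW-product and forward-limit formulas of \cref{def:bcfw-map} and \cref{def:forward_limit} that produced $\D_m$ and its descendants. I expect this mixed-color blue case to be the main obstacle; once it is settled, the closing assertions of the statement, the bounds on $\eta_l$ and $\theta_m$, are immediate, being the parent--child instances already covered in the first two steps.
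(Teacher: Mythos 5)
Your algebraic reduction to the ratio inequalities, and the strategy of chaining \cref{it:for_eta} up the nested same-end chords, is exactly the intended (though unwritten) argument; the purple-to-$\D_\star$-to-black reroute is also correct. For $\theta_{ml}$ you can in fact drop the ``restrict to a subdiagram'' caveat: the non-crossing condition together with the rule that no two chords start in a common segment forces $\D_l$ and $\D_m$ to be siblings whenever $\D_l$ starts where $\D_m$ ends, so \cref{it:for_theta} applies directly, and the $l=0$ and $m=\star$ adjustments come from the signs of $\alpha_0$ and $\gamma_\star$ exactly as you say.

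The blue gap you flag, however, is genuine, and I am not convinced your proposed route through the BCFW/forward-limit formulas closes it. The chain from a black $\D_m$ at $(c_\star,d_\star)$ down to a blue $\D_l$ necessarily passes through $\D_0$, which carries no ending domino, while $\D_\star$ is not a strict same-end ancestor of any non-top-purple chord, so neither of your detours reaches $\D_l$. Worse, tracing the BCFW step that introduces $\D_m$ --- necessarily a BCFW product whose right factor carries the loop --- \cref{it:BCFW2DOM_BCFW_chord} makes $(\gamma_m,\delta_m)$ fresh BCFW parameters up to sign, while \cref{it:BCFW2DOM_BCFW_right} leaves the blue $(\gamma_l,\delta_l)$ untouched by that step. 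So $\delta_m/\gamma_m=\hdel_m/\hgam_m$ ranges over all of $\R_+$ with $\delta_l/\gamma_l$ held fixed, and the asserted inequality $\delta_l/\gamma_l<\delta_m/\gamma_m$ need not hold. This suggests the observation as literally stated overreaches: it should be restricted to those $(m,l)$ joined by a chain of strict same-end parent--child links, possibly rerouted through $\D_\star$ --- which covers $\eta_l$, $\theta_m$, $\eta_{m\star}$, and the monochromatic chains actually invoked elsewhere --- rather than to all pairs for which $\eta_{ml}$ is formally defined. You should either prove the missing inequality (which I doubt is possible from the sign rules alone) or make this restriction explicit, not defer it.
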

The following simple observation describes domino matrices purely in terms of equalities and inequalities instead of parameterizations, and will be useful when we study limit of domino matrices.
\begin{obs}\label{obs:poly_ineqs}
Fix $\D\in\CD_{n,k}^\ell.$ The collection of domino matrices is defined as $(k+2\ell)$ rows indexed $[k+1]\cup\{\star\},$ such that certain $\D-$dependent entries and $2\times 2$ minors vanish, and other $\D-$dependent entries and $2\times 2$ minors have prescribed signs. 
\end{obs}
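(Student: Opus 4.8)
The plan is to read off, directly from Definitions~\ref{def:domino_entries} and~\ref{def:L=1domino_signs}, a finite list of conditions on a loopy matrix $C\vdots D$ --- each of which is either the vanishing of one entry, the vanishing of one $2\times 2$ minor, a sign constraint on one entry, or a sign constraint on one $2\times 2$ minor --- and then to show that the domino matrices of $\D$ are exactly the loopy matrices (of the appropriate size, with rows indexed by $[k+1]\cup\{\star\}$) that satisfy this list. The inclusion ``domino matrix $\Rightarrow$ list'' will be immediate from the construction; the converse will follow from an explicit reconstruction of the domino entries from the matrix.

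First I would record the support pattern of a domino matrix. From rules (a)--(e) of Definition~\ref{def:domino_entries}, row $l$ is nonzero only at $\{a_l,b_l\}$, at $\{c_l,d_l\}$ when $\D_l$ is black, purple or yellow, at $\{a_{\p(l)},b_{\p(l)}\}$ (or at $\{n\}$ when $\D_l$ is a top chord), and, when $\D_l$ is red or blue, also at the support of $D_0$ and/or $D_\star$. The chord-diagram axioms --- no two chords start in the same segment, a chord starts strictly before it ends unless it is very short, a child lies strictly inside its parent --- guarantee that inside a given row these position sets overlap only in the explicitly named ways: sticky pairs with $a_l=b_{\p(l)}$, head-to-tail pairs with $(a_l,b_l)=(c_m,d_m)$, and same-end pairs with $(c_l,d_l)=(c_m,d_m)$. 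Every pair $(i,j)$ outside the resulting fixed support gives a ``vanishing entry'' condition. The ``vanishing minor'' conditions are precisely the proportionalities built into the construction: rule (b) forces the restriction of row $l$ to $\{a_{\p(l)},b_{\p(l)}\}$ to be a scalar multiple of the restriction of row $\p(l)$ there (one vanishing $2\times 2$ minor), and rules (d), (e) force the restriction of a red row to $\{c_\star,d_\star\}$ and to $\{a_0,b_0\}$ to be proportional to $D_\star$ and to $D_0$ respectively, with the analogue for blue rows --- again finitely many vanishing $2\times 2$ minors. Using these equalities one recovers each domino entry as an explicit function of $C\vdots D$: $\gamma_\star,\delta_\star,\alpha_0,\beta_0$ and the generic $\alpha_l,\beta_l,\gamma_l,\delta_l$ are single entries; each $\varepsilon_l$ is the column-$n$ entry for a top chord and the ratio $(C_l)_{a_{\p(l)}}/(C_{\p(l)})_{a_{\p(l)}}$ otherwise; in a sticky overlap the affected $\alpha_l$ equals the corresponding entry minus a determined multiple of a parent entry; the proportionality constants supply the remaining $\gamma_l,\delta_l$ on red and blue rows. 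A direct check then shows that a matrix supported on the fixed pattern and satisfying all the vanishing minors coincides with the domino matrix built from the recovered entries, which gives the converse inclusion on the level of equalities.

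Next I would convert each sign rule of Definition~\ref{def:L=1domino_signs} into a sign condition on an entry or a $2\times 2$ minor. The rules $\alpha_l,\beta_l>0$, the signs of $\gamma_l,\delta_l$, and the sign of $\varepsilon_l$ at a top chord are conditions on single entries. When a sticky overlap occurs the relevant domino entry is $\pm 1$ times one honest $2\times 2$ minor of the matrix --- for instance the minor on rows $l,\p(l)$ and columns $a_{\p(l)}<a_l$ equals $-\alpha_l\alpha_{\p(l)}$ --- so the sign rule on $\alpha_l$ becomes a sign rule on that minor together with the already-recorded sign of $\alpha_{\p(l)}$; similarly the sign of $\varepsilon_l$ for a non-top chord, and its iteration down a sticky chain, is read off from a $2\times 2$ minor. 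The rules of items~\ref{it:for_eta} and~\ref{it:for_theta} are, by the definitions~\eqref{eq:eta} and~\eqref{eq:theta}, exactly sign conditions on the $2\times 2$ minors $\eta_{ml}$ (rows $m,l$, columns $c_l=c_m<d_l=d_m$) and $\theta_{ml}$ (rows $l,m$, columns $a_l=c_m<b_l=d_m$). Collecting all of this produces the required list.

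I expect the main obstacle to be exactly this overlap bookkeeping: one must check that in \emph{every} coincidence of positions permitted by the chord-diagram axioms --- sticky chains, head-to-tail meetings, same-end chains, and the interaction of the $D_0$ and $D_\star$ contributions with the ordinary and inherited dominoes on the red, blue and purple rows --- the affected domino entry is still a $\pm 1$ multiple of a single $2\times 2$ minor of $C\vdots D$, rather than a more complicated polynomial, so that every sign rule (in particular those involving $\varepsilon$'s inherited down a sticky chain) is genuinely a sign condition on one minor. Once this is verified, the remainder is a routine, if lengthy, case analysis following the recursive construction of the domino matrix.
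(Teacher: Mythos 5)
Your approach is sensible in outline, but it contains a genuine gap at exactly the ``overlap bookkeeping'' you defer. The claim that ``the affected domino entry is still a $\pm 1$ multiple of a single $2\times 2$ minor of $C\vdots D$'' fails once sticky chords stack. Take the sticky same-end chain $\D_1 = (1,2,c,d)$, $\D_2 = (2,3,c,d)$, $\D_3 = (3,4,c,d)$. Rule~(b) of Definition~\ref{def:domino_entries} gives $C_{3,2}=\varepsilon_3\alpha_2$ and $C_{3,3}=\alpha_3+\varepsilon_3\beta_2$, while $\alpha_2$ is itself not an entry but $(C_{2,2}C_{1,1}-C_{2,1}C_{1,2})/C_{1,1}$ because $\D_2$ is sticky under $\D_1$. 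Hence $\alpha_3 = C_{3,3}-(C_{3,2}/\alpha_2)C_{2,3}$, and the sign rule $\alpha_3>0$ unwinds to
\[
C_{3,3}\bigl(C_{2,2}C_{1,1}-C_{2,1}C_{1,2}\bigr)-C_{3,2}C_{2,3}C_{1,1}>0,
\]
a degree-$3$ polynomial in the entries of $C\vdots D$, not the sign of an entry or a $2\times 2$ minor. (The $2\times 2$ minor at rows $2,3$, columns $2,3$ equals $\alpha_3C_{2,2}+\varepsilon_2\varepsilon_3\beta_1\beta_2$, which is positive even when $\alpha_3<0$.)

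Similarly, the statement that rule~(b) ``forces the restriction of row $l$ to $\{a_{\p(l)},b_{\p(l)}\}$ to be a scalar multiple of the restriction of row $\p(l)$ there (one vanishing $2\times 2$ minor)'' is false whenever $\D_{\p(l)}$ is itself sticky: the $a_{\p(l)}$-entry of row $\p(l)$ is then $\alpha_{\p(l)}+\varepsilon_{\p(l)}\beta_{\p(\p(l))}$, not $\alpha_{\p(l)}$, so row $l$ is proportional at those columns to the \emph{starting domino} $(\alpha_{\p(l)},\beta_{\p(l)})$ but not to the raw row $\p(l)$. Concretely, with $\D_1=(1,2,c,d)$, $\D_2=(2,3,c,d)$ sticky, and $\D_3=(4,5,c,d)$ a non-sticky child of $\D_2$, no $2\times 2$ minor of the raw matrix vanishes at columns $\{2,3\}$, yet $(C_{3,2},C_{3,3})\parallel(\alpha_2,\beta_2)$ imposes a genuine cubic equality. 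So the set cut out by vanishing entries and $2\times 2$ minors is strictly larger than the set of domino matrices here (it has one extra dimension), and sign conditions cannot close that gap since they are open conditions. A correct argument must recursively recover the starting dominoes (not read them off the raw rows), at which point the constraints are rational in the entries, and one must explain why the observation's phrasing in terms of entries and $2\times 2$ minors is still the right bookkeeping to record them. This is not ``routine, if lengthy''; it is precisely where the statement needs a careful inductive formulation.
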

\subsubsection{From BCFW representation to domino - and back}
\begin{prop}\label{prop:BCFWndDominoEquiv}
Let $S$ be a $\ell-$loop BCFW cell, and $S_\D$ be the set of points in $\Gr_{k,n;\ell}^{\geq}$ which have a $\D-$domino representative, where $\D$ is the chord diagram which corresponds to $S$ by the bijection of \cref{obs:1loopCDvs1loopBCFW}. Then $S=S_\D.$
\end{prop}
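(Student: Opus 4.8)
The plan is to prove the equality $S = S_\D$ by induction on the generation sequence of the BCFW cell, matching the recursive construction of $S$ (via $\pre$, $\bcfw$, and $\FL$ steps) against the recursive structure of the domino matrix $(C\vdots D)_\D$. The two sides are both parameterized spaces, so the strategy is to exhibit, at each step, an explicit change of variables between the BCFW coordinates $\{\halp_i,\ldots,\heps_i\}$ (or the reduced coordinates $\BBCFWV$) and the domino entries $\tVar_\D$, compatible with the respective gauge equivalences of \cref{def:BCFW coords} and \cref{obs:gauge_domino}, and then check that the positivity constraints match: the sign rules of \cref{def:L=1domino_signs} must be equivalent, under this change of variables, to the positivity $\R_+^{\BCFWV}$ together with the gauge-fixing from \cref{lem:scale_inv}.

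First I would handle the base case $k=0$ (trivial cell) and then dispatch the $\pre_p$ step, where both constructions simply insert a zero column and nothing changes. For the $\bcfw$ step, the key is to compare the BCFW form vectors $v^t_{\hzet,i}$ built in Section~\ref{subsec:BCFW_form} against the domino rows: the BCFW recursion applies $y_{a_l}(\halp_l/\hbet_l)$ on the left and $y_{c_l}(\hgam_l/\hdel_l)y_{d_l}(\hdel_l/\heps_l)$ on the right, and one must verify that the cumulative effect of these row operations on a domino-form input produces again a domino-form output, with the new chord $\D_l$ contributing the starting domino $(\alpha_l,\beta_l) = (\halp_l,\hbet_l)$ at $(a_l,b_l)$ and the parent-inherited $\varepsilon_l$-terms as in parts (b),(d),(e) of \cref{def:domino_entries}. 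The sign bookkeeping here — the $(-1)^{k_R^t}$ and $(-1)^{\below},(-1)^{\after},(-1)^{\Between}$ factors — is exactly what \cref{def:L=1domino_signs} is designed to absorb, and one checks the $\eta_{ml},\theta_{ml}$ inequalities arise from the requirement that the evolving row stays inside the cell (these are the head-to-tail and same-end geometric constraints).

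The genuinely new content is the $\FL$ step. Here I would use \cref{obs:parameterizing_S_check_D} and \cref{rmk:bcfw_vecs_fl}: the space $S_{\check\D}$ before chopping columns $A,B$ is parameterized by the $\check v^{t_\FL}_i$, and the forward-limit construction (apply $\inc_l,\scale_l,y_c,y_d,x_A,x_l$, then $\addL_{AB}$, then $\rem$) must be shown to send a domino form for the pre-$\FL$ diagram $\D'$ to the domino form for $\D$ — in particular, the $\addL_{AB}$ operation, which uses the $\tr(\D)$ row together with the yellow row to clear columns $A,B$, is precisely what produces the red-chord rule (d) $\delta_l D_0 + \gamma_l D_\star$ and the blue-chord rule (e). The reduced BCFW coordinates $\BBCFWV$ (with $\bgam_i = \hgam_i/\halp_\star$, $\bdel_i = \hdel_i/\hdel_0$ for red chords, etc.) are exactly tailored so that the gauge of \cref{def:BCFW coords} matches the domino gauge of \cref{obs:gauge_domino}; I would make this correspondence explicit and invoke \cref{cor:BCFW_up_2_gauge} so that every point of $S_\D$ indeed comes from a positive reduced-coordinate tuple.

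**The hard part will be** the $\FL$ step's sign analysis: verifying that the positivity of all relevant minors after $\addL_{AB}$ and $\rem_l(x_l(1)\cdots)$ translates exactly into the domino sign rules for red and blue rows and the $\eta_l,\theta_l$ inequalities — in particular ruling out that the domino space is strictly larger than $S_\D$ (surjectivity of the parameterization onto $S_\D$ is the easy direction; injectivity-up-to-gauge and the reverse containment $S_\D \subseteq S$ require knowing the sign constraints are not merely necessary but sufficient). I expect this to rely on \cref{lem:bcfw_after_fl} and \cref{cor:matroid_after_fl} to control which Plücker coordinates vanish, combined with a careful tracking of how the $\addL_{AB}$ row operations on $C$ (clearing rows $\geq 3$ in columns $A,B$) interact with the already-triangular domino structure. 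Once the change of variables is established in both directions and shown gauge-compatible and sign-preserving, $S = S_\D$ follows since both are the image of $\R_+^{\BBCFWV}$ under the same map modulo the same gauge.
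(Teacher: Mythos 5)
Your proposal matches the paper's approach: the paper reduces the proposition to \cref{lem:L=1_BCFW2Domino}, which establishes by induction over the generation sequence (split into $\pre$, $\bcfw$, and $\FL$ cases) an explicit gauge-compatible change of variables between the (reduced) BCFW coordinates and the domino entries, together with a two-directional positivity analysis. One small misprediction on the ``hard part'': the paper's $\FL$ sign verification is a direct row-level computation tracking how $\addL_{AB}$ clears columns $A,B$ via the top-red and yellow rows, and does not invoke \cref{lem:bcfw_after_fl} or \cref{cor:matroid_after_fl} (those matroid lemmas are used elsewhere, for uniqueness in \cref{thm:dominoAndBCFWparams} and \cref{lem:4biden}, not for this equivalence).
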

The proposition follows immediately from the next lemma.
\begin{lemma}\label{lem:L=1_BCFW2Domino}
The domino variables for $\D\in\CD_{n,k}^\ell$ can be written as
\[\alpha_i=\alpha_i^\D(\BCFWV_\D),\ldots,\varepsilon_i=\varepsilon_i^\D(\BCFWV_\D),\]
where $\alpha_i^\D,\ldots,\varepsilon_i^\D$ are Laurent polynomials in $\BCFWV$ defined using the following recursive process:
\begin{enumerate}
\item If $\D=\pre_p(\D'),$ where $p$ is the penultimate marker of $N$ then 
\[\alpha_i^\D=\alpha_i^{\D'},\ldots,\varepsilon_i^\D=\varepsilon_i^{\D'}.\]
\item If $\D=\D_L\bcfw\D_R.$ Write $k_L,k_R$ for the number of chords of $\D_L,\D_R$ respectively.
In this case, \begin{enumerate}
\item\label{it:BCFW2DOM_BCFW_chord} $\alpha_k^\D=\halp_k,~\beta_k^\D=\hbet_k,~\gamma_k^\D=(-1)^{k_R}\hgam_k,~\delta_k^\D=(-1)^{k_R}\hdel_k,~\varepsilon_k^\D=(-1)^{k_R}\heps_k.$
\item\label{it:BCFW2DOM_BCFW_left}  If $\D_i$ for belongs to the left subdiagram $\D_L$ of $\D$ (we do not exclude the case $i=0,\star$) then
\[\alpha_i^\D=\alpha_i^{\D_L},~\beta_i^\D=\beta_i^{\D_L},~\delta_i^\D=\delta_i^{\D_L},~\varepsilon_i^\D=(-1)^{k_R+1}\varepsilon_i^{\D_L}.\]
If the $i$th chord is not head-to-tail with the $k$th chord then also $\gamma_i^\D=\gamma_i^{\D_L}.$ Otherwise
\[\gamma_i^\D = \gamma_i^{\D_L}+\frac{\halp_k}{\hbet_k}\delta_i^{\D_L}.\]
In case $i=0$ the same rules apply, only that there are no $\gamma_i,\delta_i.$ In case $i=\star$ the same rules apply only that there are no $\alpha_i,\beta_i.$
\item\label{it:BCFW2DOM_BCFW_right} If $\D_i$ belongs to the right subdiagram $\D_R$ of $\D$ then
\[\alpha_i^\D=\alpha_i^{\D_R},~\beta_i^\D=\beta_i^{\D_R},~\delta_i^\D=\delta_i^{\D_R}.\]
If $\D_i$ is not a child of $\D_k$ then also $\varepsilon_i^\D=\varepsilon_i^{\D_R}.$ Otherwise 
\[\varepsilon_i^\D=-\varepsilon_i^{\D_R}/\varepsilon^\D_k=(-1)^{k_R}\varepsilon_i^{\D_R}/\heps_k.\]
If $\D_i$ is a same-end descendant of $\D_k$ which is not red or blue \emph{at the time of this BCFW step}
\[\gamma_i^\D=\gamma_i^{\D_R}+\frac{\hgam_k}{\hdel_k}\delta_i^{\D_R}.\] Otherwise $\gamma_i^\D=\gamma_i^{\D_R}.$
\end{enumerate}
\item If $\D=\FL(\D'),$ then $\D_\star,\D_0$ are top chords. In this case, even though $\D_0$ was labelled $\tr(\D)$ in $\D',$ we refer to it as $\D'_0$ to reduce confusion.
\begin{enumerate}
\item\label{it:BCFW2DOM_FL_yellow} \[\gamma_\star^\D=\hgam_\star,\qquad\delta_\star^\D=\hdel_\star,\qquad\varepsilon_\star^\D=\heps_\star.\]
\item\label{it:BCFW2DOM_FL_topred}
\[\alpha_0^\D=(-1)^{k'_R+1}\alpha_0^{\D'}=(-1)^{k'_R+1}\halp_0^{\D'},\qquad\beta_0^\D=(-1)^{k'_R+1}\beta_0^{\D'}=(-1)^{k'_R+1}\hbet_0^{\D'},\qquad\varepsilon_0^\D=(-1)^{k'_R}\varepsilon_0^{\D'}=\heps_0^{\D'},\]where $k'_R$ is the number of chords in the right subdiagram of $\D'$
\item\label{it:BCFW2DOM_FL_general}For every $i\in [k+1]_\D$ \[\alpha_i^\D=\alpha_i^{\D'},~\beta_i^\D=\beta_i^{\D'}.\]
\[\varepsilon_i^\D=\begin{cases}
    -\varepsilon_i^{\D'},&\D'_i~\text{is a top chord of }\D'\\
    (-1)^{1+k'_R}\varepsilon_i^{\D'},&\p(i)=0\\
    \varepsilon_i^{\D'},&\D'_i~\text{otherwise.} 
\end{cases}\]
\item For a black chord $\D_i,$ $\gamma_i^\D=\gamma_i^{\D'},~\delta_i^\D=\delta_i^{\D'}$
\item\label{it:BCFW2DOM_FL_purple} For a purple chord $\D_i$  
\[\gamma_i^\D = \gamma_i^{\D'}+\frac{\gamma_\star}{\delta_\star}\delta_i^{\D'},~~\delta_i^\D=\delta_i^{\D'}.\]
\item\label{it:BCFW2DOM_FL_blue} For a blue chord $\D_i$ 
\[\gamma_i^\D=\frac{\gamma_i^{\D'}}{\hdel_\star},~~\delta_i^\D=\frac{\delta_i^{\D'}}{\halp_\star}.\]
\item\label{it:BCFW2DOM_FL_red} For a red chord $\D_i,~i\neq 0$
\[\gamma_i^\D = \frac{\gamma_i^{D''}}{\halp_\star},~~\delta_i^\D = \frac{\delta_i^{D'}}{\hdel_0} 
,\]
where ${D''}$ is the right subdiagram of $\D',$ obtained by restricting to $\D_0$'s descendants.
\end{enumerate}
\end{enumerate}
The positivity of the BCFW coordinates implies the sign rules for the domino coordinates.

The reduced BCFW coordinates are uniquely determined, up to the gauge freedom, from the domino variables, where the sign rules for the domino variables imply the positivity of the BCFW variables.
\end{lemma}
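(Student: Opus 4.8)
The plan is to argue by induction on the generation sequence of $S_\D$ — equivalently, on the recursive structure of $\D$ via the bijection of \cref{obs:1loopCDvs1loopBCFW} — with one case for each elementary step $\pre_p$, $\bcfw$, $\FL$; the base case $K=\emptyset$ is vacuous. The inductive statement is the whole assertion: the displayed system of Laurent polynomials, the implication ``positivity of the BCFW coordinates $\Rightarrow$ domino sign rules'', and its converse. The conceptual point I would state at the outset is that the domino matrix $(C\vdots D)_\D$ of \cref{def:domino_entries} is \emph{not} the matrix produced by the BCFW recursion on representatives (the matrix assembled from the vectors $v^t_{\hzet,i}$ of \cref{def:bcfw_matrix_form}): it is obtained from that matrix, after the change of variables, by a sequence of \emph{row operations} that push each row into the prescribed domino shape. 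Turning this into a precise per-step statement is the body of the proof, and everything else is bookkeeping on top of it.

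For the $\pre_p$ step there is nothing to do beyond relabelling the index set, which is case~(1). For a $\bcfw$ step $\D=\D_L\bcfw\D_R$ introducing the rightmost top chord $\D_k=(a,b,c,d)$: the new row is literally the vector $v$ of \cref{def:bcfw-map}, giving (2a); for a left-subdiagram row I would apply $y_a$ together with the column-$n$ sign and read off (2b), noting that the head-to-tail correction $\gamma_i^\D=\gamma_i^{\D_L}+(\halp_k/\hbet_k)\,\delta_i^{\D_L}$ is exactly the effect of $y_a(\halp_k/\hbet_k)$ on an ending domino sitting in the segment $(a,b)$ (\cref{lem:effect_ops}); for a right-subdiagram row I would apply $y_c,y_d$ (\cref{lem:effect_ops}) to get the generic case, and then, \emph{when $\D_i$ is a child of $\D_k$}, subtract from its row the multiple $\varepsilon_i/\varepsilon_k$ of the row $v$: this cancels the columns $c,d,n$ and reinstalls entries in columns $a,b$, which is precisely the relation $\varepsilon_i^\D=-\varepsilon_i^{\D_R}/\varepsilon_k^\D$ together with the ``starting domino inherited from the parent'' of rule~(b), and the sign $(-1)^{k_R}$ on $\gamma_k,\delta_k,\varepsilon_k$ is what makes this cancellation clean. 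For the $\FL$ step $\D=\FL(\D')$ I would walk through \cref{def:forward_limit}: $\inc_l,\scale_l$ and the $x_\star,y_\star$ string act rowwise via \cref{lem:effect_ops}; $\addL_{AB}$ extracts the yellow row $D_\star$ and leaves the top-red row $D_0$; and the final cancellation of columns $A,B$ against $\tilde v_0,\tilde v_\star$ is again a row operation, which turns each remaining row into its domino shape and, according to the colour the chord acquires in $\D$, produces cases (3a)--(3g), with the $\pr_A$- and $\pr_B$-coefficients becoming the reciprocals $1/\halp_\star$, $1/\hdel_0$, $1/\hdel_\star$ and $u_\star=\pr_{c_\star d_\star}D_\star$ appearing in the blue rule~(e); here I would carry the $\check v$-vectors of \cref{rmk:bcfw_vecs_fl} (which reinstate $\hbet_\star,\hgam_0$) to keep the $x_A,y_A$ contributions transparent, specialising $\hbet_\star=\hgam_0=0$ at the end, and I would keep \cref{obs:parameterizing_S_check_D} at hand since it names the space $S_{\check\D}$ on which the pre-cancellation matrix lives.

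Granting the change of variables, the sign statement is pure bookkeeping: $\below$, $\after$, $\Between$ each change in a controlled way when a chord is inserted or two diagrams are glued, and combining this with the signs $(-1)^{k_R}$, $(-1)^{k'_R}$, $(-1)^{k_L^{t-1}}$ that appear in the recursion and with $\halp_i,\ldots,\heps_i>0$ yields exactly the prescribed signs of $\alpha_l,\ldots,\varepsilon_l$ and of the determinants $\eta_{ml},\theta_{ml}$ of \cref{def:L=1domino_signs}; one also checks that the three gauge scalings of \cref{def:BCFW coords} are carried onto the three scalings of \cref{obs:gauge_domino}, which is why it is the \emph{reduced} coordinates that get pinned down. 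For the converse I would process the chords in the ``ahead'' order of \cref{obs:bcfw_form} and observe that the passage from the reduced BCFW variables to the domino variables is triangular — each introduced domino variable is an invertible monomial in a single fresh reduced BCFW variable corrected by monomials in previously introduced variables — hence invertible; reversing the sign bookkeeping then shows the domino sign rules force positivity of all reduced BCFW variables, uniquely up to the matching gauge (the reduction $\BCFWV\to\BBCFWV$ killing the surplus gauge is already recorded in \cref{def:BCFW coords} and \cref{cor:BCFW_up_2_gauge}).

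The routine but voluminous part is the sheer number of sub-cases (left/right/new chord; top/non-top; same-end/head-to-tail; and the four colours). The point I expect to fight is the $\FL$ step: verifying that $\addL_{AB}$ followed by the two-row cancellation reproduces \emph{exactly} rules (d) and (e) for red and blue chords — including the reference in (3g) to the restricted subdiagram $D''$ obtained by keeping only $\D_0$'s descendants — and that the signs introduced by ordering the rows by $a_i$ inside $\inc_l$ (as in the construction of the BCFW form preceding \cref{rmk:bcfw_vecs_fl}) are consistent with the $\below/\after/\Between$ signs of the domino rules. Once the $\FL$ step is nailed down, the $\pre$ and $\bcfw$ steps, and the two sign directions, are comparatively mechanical.
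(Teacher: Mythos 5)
Your proposal matches the paper's own proof in strategy and in all essential mechanisms: induct on the generation sequence, split into $\pre$/$\bcfw$/$\FL$ cases, realize the passage from BCFW form to domino form as row operations (in particular subtracting a multiple of the newly-added row $v$ from each child row to kill its $n$-entry and reinstall the inherited starting domino, and for $\FL$ following the $\inc_l,\scale_l,x_*,y_*,\addL_{AB}$ chain with the $A,B$-cancellations via $\tilde v_0,\tilde v_\star$ after gauge-fixing $\hgam_0=\hbet_\star=0$), then establish the sign rules by the same inductive bookkeeping on $\below$, $\after$, $\Between$ and the matched gauge freedoms, and get the converse by reversing/triangularizing the substitution. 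The one imprecision worth flagging — the phrase ``this cancels the columns $c,d,n$'' overstates matters, since the subtraction only removes the $\varepsilon$-spread in $c,d$ while the ending domino survives there — is exactly the computation the paper writes out explicitly and would evaporate once you carry out the arithmetic you describe.
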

\begin{proof}
The proof for the first item is immediate. The second item also follows easily from the description of the domino form and the recursive description of the BCFW cell: we apply the BCFW step to the domino forms, and correct what is needed in order to obtain a domino form again.
If we work in the notations of \cref{nn:bcfwmap}, the only domino variables which are affected by a BCFW step are $\varepsilon_i,$ for a top chord of the left or right subdiagram; $\gamma_i$ for a chord $\D_i$ which either ends at $(a,b)$ or at $(c,d);$ and $\delta_i$ if $\D_i$ ends at $(c,d).$
$\varepsilon_i$ for a top chord of the left subdiagram gains only the prescribed sign.
If $\D_i$ ends at $(a,b)$ it is easy to see that $\gamma_i$ is affected as in \cref{it:BCFW2DOM_BCFW_left}. If $\D_i$ ends at $(c,d)$ but is not a top chord of the right subdiagram then $\gamma_i$ is affected as in \cref{it:BCFW2DOM_BCFW_right}. If $\D_i$ a top chord of the right subdiagram then its $(c,d,n)$ entries transform under the BCFW step to
\[(x, y,\varepsilon_i^{\D_R})\to
(x+\frac{\hgam_k}{\hdel_k}y+\frac{\hgam_k}{\heps_k}\varepsilon_i^{\D_R}, y+\frac{\hdel_k}{\heps_k}\varepsilon_i^{\D_R},\varepsilon_i^{\D_R}).
\]
In order to pass to the domino form we must set the $n$th entry of this row to $0$ by subtracting a multiple of the $k$th row. This multiple is $\frac{\varepsilon_i^{\D_R}}{(-1)^{k_R}\heps_k},$ and after subtracting it the $(c,d,n)$ component becomes
\[(x+\frac{\hgam_k}{\hdel_k}y,y,0).\]
This agrees with our formula for $\gamma_i,\delta_i,\varepsilon_i$ of \cref{it:BCFW2DOM_BCFW_right}, where we have exclude the red and blue case in \cref{it:BCFW2DOM_BCFW_right} since in the definition of the domino form in this case $\gamma_i,\delta_i$ are coefficients of vectors that can be determined from the yellow and top red row, and we consider the column operations as performed on these rows rather than on $\gamma_i,\delta_i$.

Moving to the third item, the forward limit. Note that the last operation before the forward limit has to be a BCFW step, with respect to the chord $k+1=\tr(\D')$, which for convenience we index $0.$ Its BCFW parameters are $(\halp_0,\ldots,\heps_0)\in\R_+^5.$ The $\FL$ parameters are $(\hgam_\star,\ldots,\heps_\star)\in\R_+^5.$ We use \cref{lem:scale_inv} to set $\hgam_0=\hbet_\star=0.$
In order to perform the forward limit we make the $\star,0$ rows being topmost, and fix signs so that both the resulting matrix, and the submatrix made of the bottom $k$ rows, will be nonnegative. The change of sign of the row labeled $0$ induces a change of signs in the definition of $\varepsilon_i$ whenever $\p(i)=0.$ We then use the rows $\star,0$ to cancel the $A,B$ entries of every row in which at least one of them is non zero.


Now, a chord $\D'_i$ which ends at $(d,A)$ can be corrected only using the row $\star.$ We correct such a chord by erasing $\delta_i^{\D'}/\halp_\star$ times the row $\star.$ Such chords will become blue in $\D.$

The red chords in $\D$ are those whose counterpart in $\D'$ ends at $(A,B).$
Consider the $i$th such row. Its $(A,B)$ entries are $(\gamma_i^{\D'},\delta_i^{\D'}).$ Note that since $\gamma_0^{\D'}=0,$ $\gamma_i^{\D'}=\gamma_i^{\D''}.$
We wish so subtract a linear combination of the $0$th and $\star$-th rows to cancel the $(A,B)$ part.
The $(A,B)$ part of the $0$th row is $(\gamma_0^{\D'},\delta_0^{\D'}).$ 
The $(A,B)$ part of row $\star$ is $(\halp_\star,0).$ 
Note that, from \cref{it:BCFW2DOM_BCFW_right},
\[(\gamma_i^{\D'},\delta_i^{\D'})=(\gamma_i^{\D''}+\frac{\gamma_0^{\D'}}{\delta_0^{\D'}}\delta_i^{\D''},\delta_i^{\D''}).\]
From here it is easy to see that 
\[(\gamma_i^{\D'},\delta_i^{\D'})=\frac{\gamma_i^{\D''}}{\halp_\star}(\halp_\star,0)+\frac{\delta_i^{D'}}{\hdel_0}(\gamma_0^{\D'},\delta_0^{\D'}).\]


We now turn to discuss signs. The tree case was proven in \cite[Corollary 7.1]{even2021amplituhedron}. We will follow the same logic of studying how do signs evolve under the three different types of steps that appear in the BCFW cell construction.

We will prove by induction on the lexicographically ordered $(k,n)$ that the signs rules hold. 
We split into cases.
\begin{enumerate}
\item$S_\D=\pre_p S_{\D'}:$ The sign rules for $S_\D$ and $S_{\D'}$ are the same, and $\pre_p$ does not affect the signs.

\item$\D=\D_L\bcfw\D_R:$ By \cref{def:bcfw-map},
the sign constraints on the domino entries of $\D_k,$ the rightmost top chord, hold by definition, as they agree with the BCFW coordinates up to exactly the required signs.

The signs of entries in the left component do not change, unless they are the $n$th entries of rows, and these change by $(-1)^{k_R+1}.$ Since also $\after(\D_i)=\after((\D_L)_i)+k_R+1$ the sign rule for these entries persists to hold.

If $\D_k$ is head-to-tail after $\D_i,$ then from the above we know that
\[\gamma_i^{\D}=\gamma_i^{\D_L}+\frac{\alpha_k}{\beta_k}\delta_i^{\D_L}.\]From the induction $\gamma_i^{\D_L},\delta_i^{\D_L}$ have the same sign.
Thus, 
\[\gamma_i^\D/\delta_i^\D=\gamma_i^{\D_L}/\delta_i^{\D_L}+\alpha_k/\beta_k>\alpha_k/\beta_k.\]

The signs of entries in the right side are not affected, except than those of $\varepsilon_i$ for children of $\D_k,$ whose signs change, as explained in the above analysis of canceling their $n$th entry, by $(-1)\sgn(\varepsilon_i^\D)=(-1)^{k_R}.$ It follows from the definition of $\Between,$ that the resulting sign is just $\Between(\D_i).$ For same end children $\D_i$ of $\D_k$ which are either yellow, top red, black or purple, we have
\[\gamma_i^\D=\gamma_i^{\D_R}+\frac{\gamma_k}{\delta_k}\delta_i^{\D_L}.\]Since $\gamma_i^{\D_R},\delta_i^{\D_R}$ have the same sign, from induction, we have
\[\gamma_i^\D/\delta_i^\D=\gamma_i^{\D_R}/\delta_i^{\D_R}+\gamma_k/\delta_k>\gamma_k/\delta_k.\]
\item$\D=\FL(\D'):$
The discussion of the BCFW form \cref{subsec:BCFW_form} explains the signs of the top red and yellow rows \cref{it:BCFW2DOM_FL_topred},~\cref{it:BCFW2DOM_FL_yellow}.
The signs of $\alpha_0,\beta_0,\varepsilon_0,\gamma_\star,\delta_\star,\varepsilon_\star$ are automatic from the constructions and \cref{it:BCFW2DOM_FL_topred},~\cref{it:BCFW2DOM_FL_yellow} above. The treatment in $\gamma_l/\delta_l$ versus $\gamma_l/\delta_l$ for $\D_l,\D_m$ being purple, yellow, or both red but neither top red, follows from \cref{it:BCFW2DOM_FL_purple},~\cref{it:BCFW2DOM_FL_red},~\cref{it:BCFW2DOM_FL_yellow} and is similar to the treatment of same end chords above. For blue chords the same analysis shows a reverse inequality, because of \cref{it:BCFW2DOM_FL_blue}.
The remaining signs are automatic.
\end{enumerate}
For the last statement, note that the above process of writing the domino variables in terms of the BCFW coordinates is easily reversed, exactly up to the gauge freedom in the BCFW representation, by following the same recursive steps, and the domino sign constraints are precisely what allows for this reversal of operations to preserve the positivity, and to finally impose the positivity of the BCFW coordinates.
\end{proof}
The recursive description of \cref{lem:L=1_BCFW2Domino} is easily made non recursive, and in terms of the reduced BCFW variables:
\begin{cor}\label{cor:BCFW2dominoExplicit}
Let $\D\in\CD_{n,k}^\ell$ be a chord diagram. 
Then, the domino variables of $\D_i,$ up to scaling of the row, are given by
\[\alpha_i=\begin{cases}\balp_i,&i\in[k+1]_\D\\
(-1)^{1+\below(\D_0)}\balp_0,&i=0\end{cases},\qquad\qquad\beta_i=\begin{cases}\bbet_i,&i\in[k+1]_\D\\
(-1)^{1+\below(\D_0)}\bbet_0,&i=0\end{cases}\]
\[\delta_i=\begin{cases}(-1)^{\below(\D_i)}\bdel_i,&i\neq 0,\star,
\\
\bdel_\star,&i=\star
\end{cases}.\]
\[\varepsilon_i=\begin{cases}(-1)^{\after(\D_i)}\beps_i>0,&\D_i~\text{is a top chord, including }i=0,\star\\
(-1)^{\Between(\D_i)+\below(\D_0)+1}\frac{\beps_i}{\beps_{\p(i)}},&\p(i)=0\\
(-1)^{\Between(\D_i)}\frac{\beps_i}{\beps_{\p(i)}},&\text{otherwise }.\end{cases}\]
The rule for $\gamma_i$ is as follows. For $i\in\Black(\D)\cup\Purp(\D)$
\[\gamma_i=(-1)^{\below(\D_i)}\left\{\bgam_i+\bdel_i[\frac{\balp_j}{\bbet_j}]+\bdel_i\sum_{\substack{\D_l~\text{is a strict}\\\text{same end ancestor of }\D_i}}\frac{\bgam_l}{\bdel_l}\right\},
\]
where the $[\cdot]$ term appears only $\D_i$ has a sibling $\D_j$ which is head to tail after it. Note that if $\D_i$ is purple then the summation is only over purple, black and yellow chords. The rule for blue chords is the same, only that we need to further divide the right hand side by $\bdel_\star,$ and in this case the summation is only over blue chords, and there is never a sibling term. For the yellow chord the same formula holds, only without the sign in the beginning. For a red chord which is not the top red chord
\[\gamma_i=(-1)^{\below(\D_i)}
[\bgam_i+\bdel_i(\sum_{\substack{l\neq 0~\text{and }\D_l~\text{is a strict}\\\text{same end ancestor of }\D_i}}\frac{\bgam_l}{\bdel_l})].\] 
Writing the reduced BCFW coordinates in terms of the domino variables, up to the gauge again, is immediate from the above, for all variables except $\heps_i,\hgam_i.$
For $\heps_i$ we have
\[\beps_i=
    |\varepsilon_i\prod_{\D_j~\text{is an ancestor of $\D_i$}}\varepsilon_j|.\]
By direct calculations we obtain
that in cases $\eta_i$ is defined it holds that
\[\bgam_i=\begin{cases}|\frac{\eta_i}{\delta_j}|,&i\notin\Blue_\D\\
|\frac{\delta_\star\eta_i}{\delta_j}|&i\in\Blue_\D\end{cases}.\]
Similarly, when $\theta_i$ is defined we have\[
\bgam_i=|\frac{\theta_i}{\beta_j}|.
\]
Otherwise $\bgam_i=\gamma_i.$
\end{cor}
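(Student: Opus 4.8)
The plan is to derive the closed formulas by ``unrolling'' the recursion of \cref{lem:L=1_BCFW2Domino}, that is, by tracking a single chord $\D_i$ along the generation sequence and accumulating the effects of the three step types ($\pre$, $\bcfw$, $\FL$) on its domino entries. The key organizing observation is that a given domino entry of $\D_i$ is only modified at steps whose associated chord is an \emph{ancestor} of $\D_i$ or is $\D_0,\D_\star$; all other steps act as the identity on row $i$ (this is exactly \cref{it:BCFW2DOM_BCFW_left,it:BCFW2DOM_BCFW_right,it:BCFW2DOM_FL_general} of \cref{lem:L=1_BCFW2Domino}, together with the $\pre$ case). So each formula is really a product/sum over the ancestors of $\D_i$ (and possibly the top chords $\D_0,\D_\star$), and the task is to bookkeep the signs and the Laurent monomials correctly.

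First I would handle $\alpha_i,\beta_i$: along the recursion these are never altered except for an overall sign at the $\FL$ step when $i=0$ (\cref{it:BCFW2DOM_FL_topred}), giving the $(-1)^{1+\below(\D_0)}$ factor — note $\below(\D_0)$ records how many descendants $\D_0$ accumulates, which is exactly the number of $\bcfw$ steps after the one creating $\D_0$ that put $\D_0$ on the $\pre/\text{left}$ side; I would check this matches the $k_R'$-type exponents. Next $\delta_i$: by \cref{it:BCFW2DOM_BCFW_left,it:BCFW2DOM_BCFW_right} the value $\bdel_i$ is untouched but each time a $\bcfw$ step adds a descendant below $\D_i$ the sign flips by $(-1)^{k_R}$ where at that moment $\D_i$ sits on the left; summing these flips gives $(-1)^{\below(\D_i)}$. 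For $\varepsilon_i$: the base value is $\heps_i=\beps_i$ up to sign; whenever $\D_i$ becomes a child (via a $\bcfw$ or $\FL$ step adding its parent) its $\varepsilon$ gets divided by the parent's $\varepsilon$ (\cref{it:BCFW2DOM_BCFW_right,it:BCFW2DOM_FL_general}), and further $\bcfw$ steps flipping sides flip the sign; iterating up the ancestor chain yields the telescoping $\varepsilon_i=\pm\beps_i/\beps_{\p(i)}$ and the exponent-counting identity $\beps_i=|\varepsilon_i\prod_{\D_j\text{ anc.}}\varepsilon_j|$. The sign exponents $\after,\Between,\below$ are then read off by comparing the running count of side-swaps with their combinatorial definitions in \cref{def:behind_above_etc} — this is essentially the argument of \cite[Corollary 7.1]{even2021amplituhedron} extended by the $\FL$ bookkeeping already recorded in \cref{lem:L=1_BCFW2Domino}.

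The substantive case is $\gamma_i$, and I expect this to be the main obstacle. Here \cref{it:BCFW2DOM_BCFW_left} (head-to-tail sibling, contributing $\frac{\halp_k}{\hbet_k}\delta_i$), \cref{it:BCFW2DOM_BCFW_right} (same-end descendant of a non-red/blue chord, contributing $\frac{\hgam_k}{\hdel_k}\delta_i$), and the $\FL$ rules \cref{it:BCFW2DOM_FL_purple,it:BCFW2DOM_FL_blue,it:BCFW2DOM_FL_red,it:BCFW2DOM_FL_yellow} all feed into $\gamma_i$ additively, with each term carrying a factor of $\delta_i$. Since $\delta_i$ itself is inert (up to sign) along the recursion, these contributions simply \emph{accumulate}: one term per strict same-end ancestor $\D_l$ of $\D_i$ (of the appropriate color class, per \cref{terminology:cd}), contributing $\bdel_i\frac{\bgam_l}{\bdel_l}$ after converting hatted to barred variables via \eqref{eq:passage_from_v_hat_to_v_bar} and \cref{def:BCFW coords}, plus at most one sibling/head-to-tail term $\bdel_i\frac{\balp_j}{\bbet_j}$, plus the overall sign $(-1)^{\below(\D_i)}$ inherited from $\delta_i$. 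The color restrictions on which ancestors enter the sum come from the fact that a red or blue chord's $\gamma,\delta$ are stored as coefficients against the yellow/top-red rows rather than as honest column entries, so the $\bcfw$/$\FL$ column operations ``pass through'' them — this is flagged in the proof of \cref{lem:L=1_BCFW2Domino} and I would cite it. The blue case picks up the extra $1/\bdel_\star$ from \cref{it:BCFW2DOM_FL_blue}, and the red-non-top case loses the sibling term because a red chord (other than $\D_0$) is never head-to-tail after a black/yellow/purple sibling in a $1$-loop diagram by \cref{def:chord_L=1}(4). Finally, the formulas $\bgam_i=|\eta_i/\delta_j|$ (resp.\ with $\bdel_\star$ for blue) and $\bgam_i=|\theta_i/\beta_j|$ when $\eta_i$ or $\theta_i$ is defined follow by solving the single $2\times 2$ determinant definitions \eqref{eq:eta}, \eqref{eq:theta} for $\bgam_i$, using the just-derived expressions for $\gamma_l,\delta_l,\alpha_l,\beta_l$ and the sign rules of \cref{obs:iterated_etas}; I would just record this short computation. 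The whole argument is an induction on the lexicographically ordered $(k,n)$, exactly mirroring the induction already run inside the proof of \cref{lem:L=1_BCFW2Domino}.
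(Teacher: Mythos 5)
Your plan — unroll the recursion of \cref{lem:L=1_BCFW2Domino} by tracking a fixed chord $\D_i$ through the generation sequence and accumulating the step-by-step effects — is exactly what the paper intends when it calls the passage ``easily made non recursive''; the paper itself gives no further argument, so the approaches coincide. One small inaccuracy worth fixing: the sign $(-1)^{\below(\D_i)}$ for $\delta_i$ is installed in a single step, at the creation of $\D_i$ via \cref{it:BCFW2DOM_BCFW_chord} (where $k_R$ at that moment equals $\below(\D_i)$), rather than by accumulating flips at later $\bcfw$ steps — indeed items \cref{it:BCFW2DOM_BCFW_left} and \cref{it:BCFW2DOM_BCFW_right} leave $\delta_i$ unchanged, and the descendants of $\D_i$ are added \emph{before} $\D_i$ is created, so they cannot act on its row; this does not affect the stated formula, only the bookkeeping narrative.
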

\begin{prop}\label{prop:pos_poly_plucker_rep}
Let $\D\in\CD_{n,k}^\ell,~\ell\in\{0,1\}$ be a chord diagram. Every maximal minor of a domino pair $C\vdots D$ can be written as a polynomial with non negative coefficients in the absolute values of the elements of $\tVar.$
\end{prop}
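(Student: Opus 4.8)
The plan is to induct on the lexicographically ordered pair $(k,n)$, following exactly the recursive structure used to build the domino matrix and mirroring the proof of \cref{lem:L=1_BCFW2Domino}. At each stage we pick an arbitrary maximal minor $\lr{C}_I$ (or $\lr{C+D}_I$ in the one-loop case) of the domino pair and track how it expands in terms of minors of the domino pairs of the subdiagrams appearing in the generation sequence. By induction those smaller minors are already polynomials with nonnegative coefficients in $|\tVar|$ of the relevant subdiagram, so the task reduces to checking that the combinatorial operation relating $\D$ to its subdiagram introduces only nonnegative coefficients and only the \emph{absolute values} of the newly appearing domino variables (together with the auxiliary quantities $\eta_{ml},\theta_{ml}$).

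The base case is $k=0$, where the matrix is empty (the single Pl\"ucker coordinate is $1$), and for one loop, the smallest diagrams, which one can check directly from \cref{ex:domino_matrices}. For the inductive step I would split into the three cases of \cref{def:treeBCFWfromCD}. The $\pre_p$ step is trivial: by \cref{lem:effect_ops} a minor of $\pre_p\D'$ is either a minor of $\D'$ or zero. For the BCFW product step $\D=\D_L\bcfw\D_R$, I would use the Laplace/Cauchy--Binet expansion of $\lr{C}_I$ along the block structure of the matrix in \cref{def:bcfw-map}: each nonzero term is a product of a minor of (a $y$-deformed) $C_L$ and a minor of (a $y$-deformed) $C_R$, possibly times one of the entries $\alpha,\beta,\gamma,\delta,\varepsilon$ of the middle row $v$. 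The key point, already embedded in \cref{prop:matroid_under_bcfw} and the sign analysis of \cref{lem:L=1_BCFW2Domino}, is that the $y$-operators $y_a,y_c,y_d$ act on Pl\"ucker coordinates by \emph{adding} a multiple (namely a ratio of two domino entries of the new chord $\D_k$, which up to sign is $\halp_k/\hbet_k$ etc.) — and by \cref{obs:iterated_etas} and the domino sign rules, each such ratio, and the entries of $v$, are products of $\pm$ signs times monomials in $|\tVar_\D|$; the sign ambiguity is exactly absorbed because the transition from $\D_L$- or $\D_R$-domino entries to $\D$-domino entries in \cref{lem:L=1_BCFW2Domino} only multiplies by signs. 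So each term of the expansion is a nonnegative-coefficient polynomial in $|\tVar_\D|$ times a common monomial, and the whole minor is too. For the one-loop minors of $C+D$ one applies the same argument to the matrix with two extra columns $A,B$ carrying an invertible $2\times 2$ block in the loopy rows, exactly as in the proof of \cref{prop:matroid_under_bcfw}.

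For the forward limit step $\D=\FL(\D')$ I would follow the four operations of \cref{def:forward_limit}. The $\inc_l,\scale_l$ and the $x,y$ operations again act on Pl\"ucker coordinates by the rules of \cref{lem:effect_ops}, i.e.\ by adding nonnegative multiples (ratios of domino entries of the top red and yellow rows, which by \cref{obs:iterated_etas} carry controllable signs), so they preserve the property. The operation $\addL_{AB}$, by \cref{lem:effect_ops}, simply relabels minors (selecting those whose index set contains or avoids $\{A,B\}$) and thus cannot create negative coefficients. The final $\rem_{A,B}$ and $\rem_l\circ x_l(1)$ again only add or relabel minors. The one subtlety is that when a blue or red chord is produced we must cancel the $A,B$ entries of its row using the yellow and top-red rows; the resulting coefficients are the ratios $\gamma_i^{\D'}/\halp_\star,\delta_i^{\D'}/\hdel_0$ of \cref{lem:L=1_BCFW2Domino}, and one must verify — using the sign rules of \cref{def:L=1domino_signs} and the identities $\bgam_i=|\eta_i/\delta_j|,\,|\theta_i/\beta_j|$ of \cref{cor:BCFW2dominoExplicit} — that these are, up to an overall sign that cancels against the row relabeling, monomials in $|\tVar_\D|$. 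I expect the main obstacle to be precisely this bookkeeping of signs in the forward-limit case: ensuring that every place where the recursion of \cref{lem:L=1_BCFW2Domino} inserts a factor $(-1)^{\text{something}}$ is matched by a corresponding sign in the Laplace expansion, so that only absolute values survive; once the sign cancellations are organized (they are the same ones already verified for the domino \emph{entries}), the nonnegativity of coefficients is automatic from Cauchy--Binet.
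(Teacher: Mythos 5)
The paper does not prove this proposition from scratch: it points to \cite[Proposition~7.3]{even2021amplituhedron} and says the argument is ``almost identical''. That referenced argument is a direct expansion of the domino determinant (grouping the terms that combine into $\eta,\theta$ and tracking signs via \cref{def:L=1domino_signs}), not an induction along the $\pre/\bcfw/\FL$ generation sequence. So your proposal is a genuinely different route. It is a natural one given the rest of this paper, but as written it has a real gap, and the gap is not only ``bookkeeping of signs in the forward-limit case'' as you suggest — it is already present at every BCFW step.

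Two specific problems. First, your statement that ``the transition from $\D_L$- or $\D_R$-domino entries to $\D$-domino entries in \cref{lem:L=1_BCFW2Domino} only multiplies by signs'' is false for exactly the chords the $\eta,\theta$ variables were invented for: if $\D_i$ is head-to-tail with, or same-ended under, the new chord $\D_k$, then $\gamma_i^{\D_L}=\gamma_i^{\D}-\tfrac{\alpha_k}{\beta_k}\delta_i^{\D}=-\theta_{ik}/\beta_k$ (resp.\ $\gamma_i^{\D_R}=-\eta_{ki}/\delta_k$) — a shear, not a sign flip, and it drags in a denominator. Second, the $y$-operator shear parameters $\alpha_k/\beta_k,\gamma_k/\delta_k,\delta_k/\varepsilon_k$ (and the $\FL$-step analogues) are \emph{Laurent} monomials, not monomials, in $|\tVar_\D|$, contrary to what you assert. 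After Cauchy--Binet each term of the expansion therefore carries such ratios. The denominators do cancel in aggregate — the domino minor is a polynomial by construction — but that aggregate cancellation is not the same as each term being nonnegative, and ``the nonnegativity of coefficients is automatic from Cauchy--Binet'' is exactly the step that is not automatic. To make your induction close you would need to (i) clear the appropriate powers of $\beta_k,\delta_k,\varepsilon_k$, (ii) rewrite the resulting $\gamma$-monomials into the $\eta,\theta$ alphabet, and (iii) verify that every term in the cleared expression has the sign predicted by \cref{obs:iterated_etas}; none of this is carried out in the proposal. This is precisely the content of the direct argument the paper is invoking, so in practice the recursion route does not buy you a shortcut past it.
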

The proof is almost identical to the proof of the analogous $\ell=0$ claim \cite[Proposition 7.3]{even2021amplituhedron}.
\begin{rmk}\label{rmk:inclusion_of_Var}
If $\D'$ is a subdiagram of $\D$ then $\BBCFWV^{\D'}\setminus\{\hgam_{\tr(\D)},\hdel_{\tr(\D)}\}\hookrightarrow\BCFWV^\D,$ naturally via $\bzet_i^{\D'}\to\bzet_i^\D.$ There is a similar natural injection of domino variables.
\end{rmk}

\subsection{Uniqueness of the domino and BCFW forms}\label{sec:domino_param}
In this section we prove that, up to the global scaling freedom, every point in a $\ell=1$ BCFW cell has a unique domino representative, and that also the BCFW coordinates are determined up to scaling.
\begin{definition}\label{def:weak_sub_mfld}
Let $X$ be a manifold. A subset $Y$ is a \emph{weak (smooth) submanifold (with boundary)} if $Y$ is the injective image of a continuous (smooth) map from a (smooth) manifold to $X.$
$Y$ is a \emph{topological submanifold (with boundary)} of $X$ if $Y$ is the homeomorphic image of a manifold (with boundary) to $X.$
\end{definition}
\begin{thm}\label{thm:dominoAndBCFWparams}
Let $\D\in\CD^\ell_{N,K}$ for $\ell\in\{0,1\},$ be a chord diagram. Then every element in $S_\D$ has unique domino and BCFW forms, up to gauge. 
In particular, $S_\D$ is a $4(k+\ell)$ dimensional manifold. $S_\D$ is moreover a submanifold of $\Gr_{K,N;\ell}$ both in the weak smooth sense, and in the topological sense. 
\end{thm}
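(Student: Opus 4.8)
The strategy is to prove uniqueness of the domino form first, and then deduce uniqueness of the reduced BCFW form from \cref{lem:L=1_BCFW2Domino}, which already exhibits the reduced BCFW coordinates as gauge-equivalence-class-valued functions of the domino variables (and conversely). Thus the whole theorem reduces to: \emph{given $U\vdots V\in S_\D$, the domino entries $\alpha_i,\ldots,\varepsilon_i$ producing it are determined up to the gauge of \cref{obs:gauge_domino}}. Once this is known, the map from the gauge quotient of the positive domino-variable space to $S_\D$ is a continuous bijection; since by \cref{obs:poly_ineqs} the domino parameter space is a product of open half-lines and open half-spaces cut out by sign conditions on entries and $2\times 2$ minors, modulo a free action of $(\R_+)^{k+2\ell}$, it is a smooth manifold of dimension $(\text{number of free domino entries}) - (k+2\ell)$; a dimension count using \cref{obs:bcfw_form} (or \cref{rmk:cardinality}) identifies this with $4(k+\ell)$. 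The parameterization map is smooth with the BCFW/domino formulas being Laurent-polynomial, so $S_\D$ is the injective smooth image of this manifold, giving the weak smooth submanifold claim; and a continuous bijection from the quotient manifold, once shown to be a homeomorphism onto its image, gives the topological submanifold claim.

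The reconstruction of the domino variables from $U\vdots V$ proceeds by induction on the lexicographically ordered $(\ell,k,n)$, following the three cases of \cref{def:treeBCFWfromCD}. If $\D=\pre_p\D'$, the penultimate column is zero and deleting it reduces to $\D'$. If $\D=\D_L\bcfw\D_R$ with rightmost top chord $\D_k=(a,b,c,d)$, the point lies in the positroid cell of \cref{prop:matroid_under_bcfw}; using the matroid coindependence statements of \cref{cor:matroid_after_bcfw} and \cref{lem:bcfw_after_fl} one locates, inside an arbitrary matrix representative, the row supported on $\{a,b,c,d,n\}$ — this is the row of $\D_k$ — reads off $\alpha_k,\beta_k,\gamma_k,\delta_k,\varepsilon_k$ up to a common scalar (the row-scaling gauge), and then inverts the $y_a,y_c,y_d,\pre$ operations of \cref{def:bcfw-map} to recover matrix representatives for the $\D_L$ and $\D_R$ parts, to which the inductive hypothesis applies. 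The one-loop subtlety is that for the loopy side one must work with $C_L\vdots D_L$ (resp.\ $C_R\vdots D_R$) simultaneously, and the coindependence of $\{a,b\}$ or $\{c,d\}$ for the loop part (\cref{lem:bcfw_after_fl}, items \ref{it:bcfw_after_fl_yellow}, \ref{it:bcfw_after_fl_reds}) is what guarantees the inversion is well-defined on the loop rows too. If $\D=\FL(\D')$, one runs the forward-limit construction of \cref{def:forward_limit} in reverse: \cref{prop:matroid_under_FL} and \cref{cor:matroid_after_fl} (together with \cref{lem:bcfw_before_fl_tree}) identify the yellow row $D_\star$ and top red row $D_0$ inside a representative of $U+V$ via their supports and the coindependence of $\{c_\star,d_\star,n\}$, fixing $\gamma_\star,\delta_\star,\varepsilon_\star$ and $\alpha_0,\beta_0,\varepsilon_0$ up to the two row-scaling gauges on $D_\star,D_0$; then, using $D_\star$ and $D_0$ to restore the $A,B$ columns (the unique way to do so given the sign rules and the coindependence statements of \cref{lem:bcfw_after_fl}\ref{it:bcfw_after_fl_reds}), one recovers a representative of the pre-$\FL$ tree cell $S_{\D'}$ and invokes the inductive hypothesis on $\D'$, which has one fewer loop.

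The main obstacle is the forward-limit step: unlike $\pre$ and $\bcfw$, the map $\FL$ is not simply a sequence of invertible row/column operations but involves the genuinely lossy $\addL_{AB}$ (passing from a $(k+1)$-dimensional space to a $k$-dimensional tree part plus a new loop) and the deletion of the columns $A,B$. Showing that the pre-image data — i.e.\ the representative of $S_{\D'}$ — is nevertheless recoverable up to gauge requires carefully tracking, via the explicit formulas \eqref{eq:v_in_terms_check_v} and \cref{rmk:bcfw_vecs_fl}, that the only ambiguity introduced is exactly the gauge freedom (the row scalings on $D_\star$ and $D_0$ and the freedom in $\hbet_\star,\hgam_0$ that \cref{lem:scale_inv} already declared irrelevant), and that the sign rules of \cref{def:L=1domino_signs}, established in \cref{lem:L=1_BCFW2Domino}, pin down all remaining sign choices in reinserting the $A,B$ columns. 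A secondary, more technical point is upgrading ``continuous bijection'' to ``homeomorphism'' for the topological-submanifold claim: this follows because the reconstruction procedure above is itself continuous on $S_\D$ (it is built from choosing specific minors, which are continuous, and the coindependence guarantees the relevant minors are nonvanishing throughout $S_\D$), so the inverse of the parameterization is continuous.
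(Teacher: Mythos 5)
Your proposal takes essentially the same route as the paper: recurse along the generation sequence, identify the row of the rightmost top chord as the (one-dimensional, by coindependence) intersection of $U$ with a coordinate subspace, invert the $\pre$, $\bcfw$, or $\FL$ operation to land in the smaller cell, and then use the resulting smooth parameterization for the dimension and submanifold claims. The paper packages the coindependence work into a dedicated lemma (\cref{lem:4biden}) and does the dimension count by gauge-fixing $\varepsilon_i=1$ rather than by quotient-counting, but these are organizational differences rather than a genuinely different argument.
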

One can actually show that $S_\D$ is a smooth submanifold, but we do not need this result, and its proof requires additional calculations.

We will rely on the following lemma
\begin{lemma}\label{lem:4biden}
\begin{enumerate}
\item\label{it:4bidenC}
Let $\D$ be a chord diagram whose rightmost top chord $\D_k=(a,b,c,d)$ is not red. Let $(C\vdots D)$ be a BCFW representative of an element in $(U\vdots V)\in S_\D.$ Let $(U'_L\vdots V'_L),~(U'_R\vdots V'_R)$ be the (maybe trivial) linear spans of the rows corresponding to the left and right subdiagrams respectively.
Then $U'_L=U\cap\mathbb{R}^{1,2,\ldots,a,b,n},~U'_R=U\cap\mathbb{R}^{b,b+1,\ldots,c,d,n}$ and neither $U'_L$ nor $U'_R$ contain a vector supported on positions $\{a,b,c,d,n\}.$ In particular $U\cap \R^{\{a,b,c,d,n\}}$ is one dimensional spanned by $C_k,$ and there is non zero vector in $V$ supported on a proper subset of $\{a,b,c,d,n\}.$ 

In addition, if $V'_L\neq 0$ then it has a representative made of two vectors contained in $\mathbb{R}^{1,2,\ldots,a,b,n},$ unique up to $\GL_2$ action and addition of elements from $U'_L.$ If $V'_R\neq 0$ then it has a representative made of two vectors contained in $\mathbb{R}^{b,b+1,\ldots,c,d,n},$ unique up to $\GL_2$ action and addition of elements from $U'_R.$
\item\label{it:4bidenD} Let $\D=\FL(\D'),$ where $\D'$ has a rightmost top chord supported on positions $(a,b,A,B),$ and no chord supported on $c,d,A,B,$ where $c,d,A,B$ are consecutive in the marker set of $\D'$.
Let $(C\vdots D)\in S_\D$ be a BCFW representative. Then $C+D$ contains a unique vector, up to scalar multiplication, supported on positions $a,b,n$ and a unique vector, up to scalar multiplication, supported on the positions $c\lessdot d\lessdot n.$
\end{enumerate}
\end{lemma}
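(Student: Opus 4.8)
The plan is to prove \cref{lem:4biden} by induction on the generation sequence of $\D$, tracking how the BCFW and forward-limit steps interact with the ``supported-on'' subspaces, and using the positroid coindependence facts already established in \cref{lem:bcfw_before_fl_tree} and \cref{lem:bcfw_after_fl}. For part \eqref{it:4bidenC}, the rightmost top chord $\D_k=(a,b,c,d)$ is not red, so $\D=\D_L\bcfw\D_R$ by \cref{def:treeBCFWfromCD}. From the explicit shape of the BCFW matrix in \cref{def:bcfw-map}, the rows of the left block $(C'_L\vdots D'_L)$ are supported on $N_L=\{1,\dots,a,b,n\}$ (after the $\pre$ and $y_a$ operations, which do not enlarge the support past $n$) and the rows of the right block on $N_R=\{b,\dots,c,d,n\}$, while the extra row $v$ is exactly supported on $\{a,b,c,d,n\}$. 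The inclusions $U'_L\subseteq U\cap\R^{N_L}$ and $U'_R\subseteq U\cap\R^{N_R}$ are therefore immediate; the reverse inclusions follow from the fact that the full-rank condition forces $U$ to be the internal direct sum $U'_L\oplus\langle v\rangle\oplus U'_R$ as subspaces of $\R^n$ with disjoint ``interior'' supports, so any vector of $U$ supported inside $N_L$ cannot use $v$ (its $c,d$ coordinates would be nonzero) nor $U'_R$ (its $b{+}1,\dots,c{-}1$ coordinates), hence lies in $U'_L$; symmetrically for $U'_R$.

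The claim that neither $U'_L$ nor $U'_R$ contains a vector supported on $\{a,b,c,d,n\}$ is where I would invoke coindependence. A vector of $U'_L$ supported on $\{a,b,c,d,n\}\cap N_L=\{a,b,n\}$ would, via \cref{obs:ind_coind}, contradict the $\{a,b,n\}$-coindependence of the tree part of $\Pos(S_{\D_L})$, which is \cref{condition:BCFW} (verified for BCFW cells in \cref{prop:fixed_positroids}, and promoted through the recursion by \cref{cor:matroid_after_bcfw}\eqref{it:matroid_after_bcfw_chord}). Likewise a vector of $U'_R$ supported on $\{a,b,c,d,n\}\cap N_R=\{b,c,d,n\}$ contradicts the $\{b,c,d,n\}$-coindependence of the tree part of $\Pos(S_{\D_R})$. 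Consequently $U\cap\R^{\{a,b,c,d,n\}}$ is at most one-dimensional; since it contains $C_k=v$ (which is honestly supported on all five positions, using $\alpha,\dots,\varepsilon\neq 0$), it is exactly $\langle C_k\rangle$. That $V$ contains a nonzero vector supported on a proper subset of $\{a,b,c,d,n\}$ follows because the two-dimensional $V$ space lives in $\R^n/U$ and, after choosing the representative rows $D'_L$ or $D'_R$, those rows are supported on $N_L$ or $N_R$; projecting out $U$ and restricting to the columns $\{a,b,c,d,n\}$ gives a vector supported on the proper subset $\{a,b,n\}$ or $\{b,c,d,n\}$ — this uses \cref{lem:bcfw_after_fl}\eqref{it:bcfw_after_fl_reds} to guarantee it is actually nonzero, i.e. that the loop part is $\{a_i,b_i\}$-coindependent so the restriction does not vanish. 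The uniqueness of the two-vector representatives of $V'_L,V'_R$ inside $\R^{N_L},\R^{N_R}$ up to $\GL_2$ and addition of $U'_L,U'_R$ is just the definition of a loopy vector space: two lifts of the same plane in $\R^{N_\bullet}/U'_\bullet$ differ by an element of $U'_\bullet$, and the $\GL_2$ freedom is the choice of basis of the plane.

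For part \eqref{it:4bidenD}, by \cref{def:treeBCFWfromCD} we have $\D=\FL(\D')$ with $\D'$ a tree diagram whose rightmost top chord is supported on $(a,b,A,B)$ and which has no chord supported on $(c,d,A,B)$. By \cref{lem:bcfw_before_fl_tree}\eqref{it:bcfw_before_fl_reds}, applied to the chord of $\D'$ ending at $(A,B)$, the positroid of $S_{\D'}$ has a basis avoiding $\{a,b,B,n\}$ (and avoiding the intermediate markers), and by \cref{lem:bcfw_before_fl_tree}\eqref{it:bcfw_before_fl_chord} a basis avoiding $\{c,d,B,n\}$ and more. I would then trace these through the forward-limit construction of \cref{def:forward_limit}: after $\inc_l$, the scalings, the $x$ and $y$ operations, and $\addL_{AB}$, the space $C+D$ of $\FL(S_{\D'})$ has its loop part's bases described by \cref{prop:matroid_under_FL}; one reads off that $C+D$ has a nonzero vector supported on $\{a,b,n\}$ (coming from the red-chord structure, \cref{lem:bcfw_after_fl}\eqref{it:bcfw_after_fl_reds} gives $\{a,b,n\}$-coindependence failing only in codimension, i.e. there is a vector there) and one supported on $\{c,d,n\}$ (from the yellow chord, \cref{lem:bcfw_after_fl}\eqref{it:bcfw_after_fl_yellow}). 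Uniqueness up to scalar is the statement that $(C+D)\cap\R^{\{a,b,n\}}$ and $(C+D)\cap\R^{\{c,d,n\}}$ are one-dimensional, which follows from the coindependence statements in \cref{lem:bcfw_after_fl}\eqref{it:bcfw_after_fl_yellow}--\eqref{it:bcfw_after_fl_reds}: a two-dimensional intersection would force the existence of a nonzero vector supported on a $2$-element subset of $\{a,b,n\}$ or of $\{c,d,n\}$, contradicting those coindependencies, or one can argue directly from the $(k+2)\times n$ rank being maximal.

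The main obstacle I anticipate is the bookkeeping of signs and supports through the forward-limit step in part \eqref{it:4bidenD}: the operations $\inc_l,\scale_l,x_A,x_l,y_c,y_d,\addL_{AB},\rem_{AB},\rem_l$ are applied in a specific order and each slightly perturbs the support of the relevant rows (e.g. $x_A$ mixes the $A$ and $l$ columns, $\addL_{AB}$ kills the $A,B$ columns of the lower rows at the cost of modifying their spans), so verifying that the two distinguished vectors survive with exactly the claimed supports requires carefully identifying which rows of the domino/BCFW form are the ``red'' and ``yellow'' rows and checking the effect of each operator on precisely those. The cleanest route is probably to not recompute from scratch but to cite the domino form of \cref{def:domino_entries} together with \cref{cor:BCFW2dominoExplicit}: the yellow row $D_\star$ in the domino matrix is visibly supported on $\{c_\star,d_\star,n\}=\{c,d,n\}$, and after the column operations that put $C+D$ in the canonical upper-triangular form of \cref{def:bcfw_matrix_form} one exhibits a combination of the top-red row $D_0$ and the other rows supported on $\{a_0,b_0,n\}$; then coindependence gives uniqueness. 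Everything else is a routine, if lengthy, direct check.
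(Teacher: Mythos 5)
Your argument uses essentially the same ingredients as the paper's proof (decompose $U$ according to the BCFW-product structure, rule out spurious vectors supported on $\{a,b,c,d,n\}$ via coindependence, cite \cref{lem:bcfw_after_fl} for part~(2)), but the step where you claim the reverse inclusion $U\cap\R^{N_L}\subseteq U'_L$ has a genuine gap. You argue that a vector $w\in U$ supported in $N_L$ ``cannot use $v$ (its $c,d$ coordinates would be nonzero) nor $U'_R$ (its $b{+}1,\dots,c{-}1$ coordinates).'' But decomposing $w=w_L+\lambda v+w_R$ and looking at coordinates, the $b{+}1,\dots,c{-}1$ positions only constrain $w_R$ to be supported on $\{b,c,d,n\}$, not to vanish; and the $c,d$ coordinates of $\lambda v+w_R$ can cancel each other without forcing $\lambda=0$ or $w_R=0$. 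So the ``disjoint interior supports'' observation by itself is insufficient. What kills the bad case is precisely the $\{b,c,d,n\}$-coindependence of $U'_R$, which you do invoke further down for the ``neither contains'' claim --- but it must come \emph{before} the reverse inclusion, not after, since the latter depends on it. The paper's proof is organized in exactly this order: it first rules out $v_L\in U'_L\cap\Span\{\ee_a,\ee_b,\ee_n\}$ and $v_R\in U'_R\cap\Span\{\ee_b,\ee_c,\ee_d,\ee_n\}$ by pushing the coindependence of $U_L,U_R$ through the $\pre$ and $y_i(t)$ operations via \cref{lem:effect_ops}, using \cref{lem:bcfw_after_fl},~\cref{it:bcfw_after_fl_bcdn} (or \cref{lem:4coind_tree} in the tree case) applied to $S_{\D_L}$, $S_{\D_R}$; only then does it deduce $U'_L=U\cap\R^{N_L}$ and $U'_R=U\cap\R^{N_R}$.

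Two smaller points. First, once the equalities $U'_\bullet=U\cap\R^{N_\bullet}$ are in hand, your shortcut for the ``in addition'' part is valid and slightly cleaner than the paper's direct decomposition: two vectors in $U+V$ supported in $\R^{N_L}$ with the same image in $V'_L$ differ by an element of $U\cap\R^{N_L}=U'_L$, which is exactly the claimed freedom. Second, the ``bookkeeping obstacle'' you anticipate for part~(2) does not materialize: the paper's proof is a one-liner --- existence is read off from the BCFW/domino form (the rows $D_\star$ and $D_0$), and uniqueness is exactly \cref{lem:bcfw_after_fl},~\cref{it:bcfw_after_fl_yellow} and \cref{it:bcfw_after_fl_reds}, which already encapsulate the positroid bookkeeping from \cref{lem:bcfw_before_fl_tree} and \cref{prop:matroid_under_FL}, so there is no need to re-trace the $\inc_l,\scale_l,x_\bullet,y_\bullet,\addL_{AB}$ operations. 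Also note that \cref{cor:matroid_after_bcfw},~\cref{it:matroid_after_bcfw_chord} is a statement about the positroid of the product $S_\D$ itself, not the subcells, so it is not the right citation for the coindependence of $U'_L$ and $U'_R$.
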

\begin{proof}
For the first item, note first that the form of the BCFW map shows that $U$ contains a vector $v$ supported on the entries $\{a,b,c,d,n\},$ and not on a proper subset of them. The BCFW form also shows that $U=U'_L+U'_R+v.$
Let $N_L=\{\min N,\ldots,a,b,n\},~N_R=\{b,\ldots,c,d,n\}$ be their respective index sets. 
Observe that $U'_L\subseteq\R^{N_L},~U'_R\subset\R^{N_R}$ and that $N_R\cap N_L = \{b,n\}.$ 
Thus, if $v$ were not unique up to scaling, then there must have been a linear combination of vectors in $U'_L+U'_R$ with support contained in $\{a,b,c,d,n\}.$ But this would have implied the existence of vectors $v_L\in U'_L\cap\Span\{\ee_a,\ee_b,\ee_n\},~v_R\in U'_R\cap\Span\{\ee_b,\ee_c,\ee_d,\ee_n\},$ at least one of which is nonzero. Thus, the 'In particular' part indeed follows from the first part of the statement. 

To rule out the existence of $v_L,v_R$, note that the existence of a non zero $v_L$ would imply that $\{a,b,n\}$ is not coindependent for $U'_L$, and the existence of a non zero $v_R$ would imply that $\{b,c,d,n\}$ is not coindependent for $U'_R$. Now, $(U'_L\vdots V'_L)$ and $(U'_R\vdots V'_R)$ are obtained from the inputs $(U_L\vdots V_L)$ and $(U_R\vdots V_R)$ to the BCFW map, by applying certain $\pre$ and $y_i(t)$ operations for $t>0$, see \cref{def:bcfw-map}. Thus, by \cref{lem:effect_ops} the existence of a non zero $v_L$ or $v_R$ would imply that either $\{a,b,n\}$ is not coindependent for $U_L$, or $\{b,c,d,n\}$ is not coindependent for $U_R.$ But $U_L,U_R$ are the tree parts of elements in the BCFW cells $S_{\D_L},S_{\D_R}$ respectively. Thus, this is impossible by \cref{lem:bcfw_after_fl},~\cref{it:bcfw_after_fl_bcdn}. This also shows $U'_L=U\cap\mathbb{R}^{1,2,\ldots,a,b,n},~U'_R=U\cap\mathbb{R}^{b,b+1,\ldots,c,d,n}.$ Indeed, in both cases the left hand side is contained in the right hand side. If $U'_L\subsetneq U\cap\mathbb{R}^{1,2,\ldots,a,b,n}$ then there must have been a linear combination of $v$ and a vector $v_R\in U'_R$ supported on positions $a,b,n$, which is impossible by the above. The case of $U'_R$ is treated similarly.

For the 'In addition' part, the BCFW representation yields one such matrix representation. If $V'_L\neq 0$ then if the representation were not unique up to $\GL_2$ and addition an elements of $U'_L,$ it would imply the existence of two vectors $u,u'\in U+V ,$ supported on $\R^{1,\ldots,a,b,n}$ with $u-u'\in U\setminus U'_L.$ Thus $u-u' = v_L+v_R+\lambda v$ with $v'_L\in V_L,v'_R\in V_R,\lambda\in\R$ where at least one of $v_R,\lambda\neq 0.$ Since the support of $u-u'-v_L$ is contained in $\{1,\ldots,a,b,n\}$ and the support of $v_R+\lambda v$ is contained in $\{a,b,\ldots,c,d,n\}$ the non zero vector $u-u'-v_L=v_R+\lambda v$ must be supported on $a,b,n.$ This means that $v_R$ cannot be $0,$ and must be supported on $b,c,d,n$, but the existence of such a vector was ruled out above. The same argument works for the right component.

For the second item, the existence of the vectors follows from the definition of the BCFW form, and the uniqueness follows from \cref{lem:bcfw_after_fl},~\cref{it:bcfw_after_fl_reds},~\cref{it:bcfw_after_fl_yellow}. 
\end{proof}
\begin{proof}[Proof of Theorem \ref{thm:dominoAndBCFWparams}]
We will prove the uniqueness part of the theorem by induction on $|K|+|N|.$
We note first that for $\ell=0$ the uniqueness of the domino form is a consequence of \cite[Proposition 3.15,~Lemma 3.24]{even2021amplituhedron}, while the uniqueness of the BCFW form is \cite[Proposition 6.22]{even2023cluster}. Assume now $\ell=1.$ There are three possibilities either $\D=\pre_p \D',$ where $p$ is the penultimate marker of $N,$ or $\D=\D_L\bcfw\D_R$ or $\D=\FL(\D').$
\begin{enumerate}
\item$\D=\pre_p \D':$ From the induction the elements of $S_{\D'}$ have unique BCFW and domino forms. Since every $(U\vdots V)\in S_\D$ equals $\pre_p (U'\vdots V')$ for $(U'\vdots V')\in S_{\D'}$, it is easy to see that also $(U\vdots V)$ has unique BCFW and domino forms.

\item $\D=\D_L\bcfw\D_R:$
Let $\D_k=(a,b,c,d)$ be the rightmost top chord. Every element in $S_\D$ has a form 
\[(U\vdots V) = \mbcfw((U_L\vdots V_L),[\alpha_k:\beta_k:\gamma_k:\delta_k:\varepsilon_k],(U_R\vdots V_R)),\]where exactly one of $V_L,V_R$ is a trivial vector space, and 
$(U_L\vdots V_L)\in S_{\D_L},~(U_R\vdots V_R)\in S_{\D_R}$.
The proof of this case is relatively similar to the proof of \cite[Lemma 3.24]{even2021amplituhedron}.

By \cref{lem:4biden},~\cref{it:4bidenC} the line $v_k,$ corresponding to $\D_k$ can be uniquely recovered as $U\cap \mathbb{R}^{a,b,c,d,n},$ and $U'_L,U'_R$ which are the rows spans of the rows corresponding to the chords of the left and right components are also uniquely determined. Knowing $v_k$ we can reverse the operations used in \cref{def:bcfw-map} to retrieve $U_L,U_R$ from $U'_L,U'_R$ respectively. Moreover, if we take the representation of the non trivial $V'_L$ or $V'_R$ as given in the 'Moreover' part of \cref{lem:4biden},~\cref{it:4bidenC}, we may recover the non trivial $V_L$ or $V_R$ by again reversing the BCFW map operations. 
From the induction the BCFW parameters of $U_L\vdots V_L,~U_R\vdots V_R$ are uniquely determined, hence also those of $U\vdots V.$ Using \cref{prop:BCFWndDominoEquiv} the same can be said about the domino representation.
\item $\D=\FL(\D'):$
Let $(C\vdots D)$ be a domino representation of an element of $(U\vdots V)\in S_\D.$ We would like to show it is unique. We will do that by first identifying uniquely, up to scaling, the rows corresponding to $\D_0,\D_\star,$ then inverting the forward limit operation, which we will also be able to do uniquely up to the irrelevant degrees of freedom. Then the result will follow by induction the on $\D'.$

For convenience we assume $N=[n],$ and write $(a,b)$ for $(a_0,b_0)$ The rows $D_\star,D_0$ are contained respectively in $(U+V)\cap\R^{n-2,n-1,n},~(U+V)\cap\R^{a,b,n}.$ They are uniquely specified by this requirement, as \cref{lem:4biden},~\cref{it:4bidenD} shows. 
Denote the non zero entries of $D_\star$ in positions $n-2,n-1,n$ respectively by $\gamma_\star,\delta_\star,\varepsilon_\star.$ 
Denote the non zero entries of $D_0$ in positions $a,b,n$ respectively by $\alpha_0,\beta_0,\varepsilon_0.$ 

Let $(\hat{C}\vdots \hat{D})$ be obtained from $(C\vdots D)$ by adding two new columns, $A,B$ between $n-1$ and $n.$ In $\hat{C}$ we fill these rows by $0$s, while in $\hat{D}$ we embed the $2\times 2$ identity matrix.
Let ${V'}$ be the vector space spanned by the vectors $C'_i,~i\in[k+1]$ defined as follows:
\begin{itemize}
\item Choose an arbitrary positive $\gamma_{k+1}.$
\item $C'_0$ is $y_{n-2}(\gamma_{k+1})(\hat{D}_0).$
\item For $i\in [k],$ if $\D_i$ is black or purple,  $C'_i = y_{n-2}(-\frac{\alpha_\star}{\beta_\star})\hat{C}_i.$
\item For $i\in [k]$ with $\D_i$ blue, $C_i$ is obtained from $\hat{C}_i$ by first subtracting the multiple of $\hat{D}_\star$ required to make the $n$ entry $0,$ and then applying $y_{n-2}(-\frac{\alpha_\star}{\beta_\star}).$
\item For $i\in[k]$ with $\D_i$ red, $C'_i$ is obtained by first subtracting from $\hat{C}_i$ the multiple of $\hat{D}_\star$ required to make the $n-1$th coordinate $0,$ and then subtracting the multiple of $C'_0$ required to make the $n$th coordinate zero as well.
\end{itemize}
These operations precisely invert the forward limit operation, and they determine uniquely, up to the choice of $\gamma_{k+1}$ a space $V'\in S_{\D'}$ which gave rise to $(V\vdots U)$ under the forward limit operation. Different choices of $\gamma_{k+1}$ do not affect the domino form of $U\vdots V$ or the BCFW form.
\end{enumerate}

$V'$ has a unique domino and BCFW form, by induction, hence also $U\vdots V$ have unique domino and BCFW form.

Regarding the dimension, we may fix the scaling of each row of a domino pair by setting $\varepsilon_i=1$ for $i\in\{0,\star,1,\ldots,k\},$ and obtain $4k+4\ell$ parameters. These parameters map injectively, by the above analysis, and smoothly, by construction, to $\Gr_{k,k+4;\ell}$, hence parameterizing a manifold of dimension $4k+4\ell.$

The parameterization via BCFW or domino coordinates is clearly smooth, hence $S_\D$ is a smooth weak submanifold  of $\Gr_{K,N;\ell}.$ Moreover, the process of finding the BCFW or domino coordinates from the data of $U\vdots V\in S_\D$ is clearly reversible, and in a continuous manner, thus $S_\D$ is also a topological submanifold of $\Gr_{K,N;\ell}.$\end{proof}

\section{The boundaries of the $1-$loop BCFW cells}
It turns out that boundaries of BCFW cells can be obtained by weakening the sign constraints of \cref{def:L=1domino_signs}. This weakening may result in losing rank or losing the uniqueness of the domino representation.
This section studies the boundaries of BCFW cells. 
\cref{subsec:SA} defines an important subspace of the nonnegative loop Grassmannian that will turn out to be the preimage of the amplituhedron's boundary.
\cref{subsec:weak_domino} defines weak domino forms, which are a way to parameterize boundary strata. \cref{subsec:mathc_bdries} shows that certain codimension $1$ boundaries of BCFW cells are also boundaries of other BCFW cells. The way to detect which cells share a codimension $1$ boundary involves a combinatorial operation on chord diagrams, which we call a \emph{shift}. Two special boundaries of BCFW cells, the red and blue boundary, are studied in \cref{subsec:redblue_bdries}. Finally \cref{subse:mfld_str} studies the manifold structure on boundaries.
These detailed analyses will be used later to show that the amplituhedron is tiled by the BCFW cells.
\subsection{The space $\SA$}\label{subsec:SA}
\begin{definition}\label{def:strata}
Let $\D$ be a chord diagram. A maximal subspace of $\overline{S}_\D,$ the Hausdorff topological closure of $S_\D$, in which all points have the same $1-$loop positroid is called a \emph{positroid stratum}.
\end{definition}
\begin{definition}
\label{def:SA}
For nonnegative integers $k,n$ with $n\geq k+4,$ write
\begin{align*}
\SA_C 
&= \left\{(C\vdots D) \in \Gr_{k,n;\ell}^{\geq} \;:\; \text{There exist } i \text{ and } j\text{ such that }\langle C^J\rangle =0~\text{for every }J\subseteq[n]\setminus \{i,i',j,j'\}\right\}\\
&= \left\{C\vdots D \in \Gr_{k,n;\ell}^{\geq} \;:\; C \cap \Span\left\{\ee_i,\ee_{i'},\ee_j,\ee_{j'}\right\} \neq \{0\} \text{ for some } i \text{ and } j \right\},
\end{align*}
where $i'=i+1\mod n,~j'=j+1\mod n,$ and we consider only $i,j$ such that $\{i,i',j,j'\}$ are four different elements. If $\ell=0$ then as usual by $C\vdots D$ we just mean $C.$
Similarly, write
\begin{align*}
\SA_D 
&= \left\{(C\vdots D) \in \Gr_{k,n;1}^{\geq} \;:\; \text{There exists } i\text{ such that }\langle (C+D)^J\rangle =0~\text{for every }J\subseteq[n]\setminus \{i,i'\}\right\}\\
&= \left\{(C\vdots D) \in \Gr_{k,n;1}^{\geq} \;:\; (C+D) \cap \Span\left\{\ee_i,\ee_{i'}\right\} \neq \{0\} \text{ for some } i \right\}, 
\end{align*}and let $\SA=\SA_C\cup\SA_D.$
\end{definition}
The following observation is an immediate consequence of the definition. 
\begin{obs}\label{obs:SA_strata}
For every $\ell-$loop chord diagram $\D,$ $\overline{S}_\D\cap\SA$ is a union of positroid strata of  $\overline{S}_\D.$
\end{obs}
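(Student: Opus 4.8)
The claim \cref{obs:SA_strata} asserts that $\overline{S}_\D\cap\SA$ is a union of positroid strata of $\overline{S}_\D$, i.e.\ a union of maximal subspaces of $\overline{S}_\D$ on which the $1$-loop positroid is constant. The plan is to show that whether a point $(C\vdots D)\in\overline{S}_\D$ lies in $\SA$ depends only on its $1$-loop positroid $(\PosC,\PosD)$, since then the union of those positroid strata whose positroid satisfies the membership criterion is exactly $\overline{S}_\D\cap\SA$, and a union of positroid strata is by definition what we want.

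First I would unwind \cref{def:SA}: the condition ``$(C\vdots D)\in\SA_C$'' says that there exist cyclically consecutive pairs $\{i,i'\}$ and $\{j,j'\}$ (with the four indices distinct) such that $C$ contains a nonzero vector supported on $\{i,i',j,j'\}$. By \cref{obs:ind_coind}, this is equivalent to saying that $\{i,i',j,j'\}$ is \emph{not} coindependent for $C$ (equivalently, for the underlying vector space $U$), which by the definition of coindependence in \cref{def:4biden} is equivalent to: every basis $I\in\PosBC$ meets $\{i,i',j,j'\}$, i.e.\ there is no $I\in\PosBC$ with $I\cap\{i,i',j,j'\}=\emptyset$. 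This is manifestly a condition on the positroid $\PosBC$ alone. Likewise, $(C\vdots D)\in\SA_D$ is equivalent to the existence of a cyclically consecutive pair $\{i,i'\}$ that is not coindependent for $C+D$, i.e.\ that meets every basis in $\PosBD$ — again a condition on $\PosBD$ alone. Hence membership in $\SA=\SA_C\cup\SA_D$ is a function of the $1$-loop positroid $(\PosBC,\PosBD)$ only.

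The one subtlety to address carefully is that $\SA$ as defined in \cref{def:SA} is a subset of $\Gr^\geq_{k,n;\ell}$ and we are intersecting with $\overline{S}_\D$; I should note that the coindependence reformulation above is valid for an arbitrary element of the nonnegative loopy Grassmannian, not just for BCFW cells, because \cref{obs:ind_coind} and \cref{def:4biden} are stated at that level of generality. With that in hand, the argument is: fix a positroid stratum $T\subseteq\overline{S}_\D$, so all points of $T$ share a common $1$-loop positroid $(\PosBC,\PosBD)$; by the above, either every point of $T$ lies in $\SA$ or none does; therefore $\overline{S}_\D\cap\SA$ is the union of those strata $T$ whose positroid satisfies the (purely combinatorial) criterion, which is precisely a union of positroid strata.

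There is no serious obstacle here — this really is ``an immediate consequence of the definition'' as the paper states; the only thing worth writing down explicitly is the translation from the ``contains a supported vector'' form of \cref{def:SA} to the ``not coindependent'' form via \cref{obs:ind_coind}, together with the observation that coindependence of a fixed set depends only on the positroid. I would present the proof in three or four sentences: recall the two equivalent descriptions of $\SA_C$ and $\SA_D$ from \cref{def:SA}; invoke \cref{obs:ind_coind} to rephrase them as coindependence statements about $C$ and $C+D$; observe that coindependence of a fixed index set is determined by the positroid; and conclude that on each positroid stratum of $\overline{S}_\D$ the indicator of $\SA$ is constant, so $\overline{S}_\D\cap\SA$ is a union of positroid strata.
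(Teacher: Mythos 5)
Your proposal is correct and coincides with what the paper intends when it calls this "an immediate consequence of the definition": you unwind $\SA_C$ and $\SA_D$ into coindependence conditions on $\PosBC$ and $\PosBD$ via \cref{obs:ind_coind}, observe that these depend only on the $1$-loop positroid, and conclude that the indicator of $\SA$ is constant on each positroid stratum of $\overline{S}_\D$. Nothing is missing and nothing diverges from the paper's (implicit) argument.
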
We will show later that $\SA$ is the preimage of the boundary of the amplituhedron $\Ampl_{n,k,4}^\ell(Z)$.
\subsection{Weak domino forms}\label{subsec:weak_domino}
Recall \cref{obs:poly_ineqs}.
\begin{definition}\label{def:almost_and_weak_domino}
Let $\D$ be a chord diagram. 
An \emph{almost $\D-$domino pair} $C\vdots D$ is a pair of matrices $C,D$ without zero rows, whose non zero entries are contained in the non-zero entries dictated by the $\D-$domino forms, and every $2\times 2$ minor which vanishes in a domino form also vanishes for $C\vdots D.$ Note that we do not require that the remaining entries or $2\times 2$ minors have the prescribed signs of a $\D-$domino form. 
An almost $\D-$domino pair $C\vdots D$ is said to be an \emph{almost domino pair} for $U\vdots V$ if the row spans of $C,C+D$ are contained in $U,V$ respectively. 
We index the rows of $C,D$ by the same indexing conventions we use for $\D-$domino forms, e.g $C_i,D_0,D_\star.$

If $C\vdots D$ is an almost domino pair,  
then, up to the gauge group action of \cref{obs:gauge_domino}, the following expressions are well defined functions of $C,D$ just like in the usual $\D-$domino case: 
\begin{itemize}
\item $\alpha_h(C\vdots D),\beta_h(C\vdots D),\delta_h(C\vdots D).$ 
\item \begin{itemize}
\item[--]$\gamma_h(C\vdots D),$ if $\D_h\notin\Blue_\D$; \item[--]$\gamma_h(C\vdots D)\pr_{c_\star d_\star}D_\star,$ if $\D_h$ is blue. 
\item[--]Moreover, if $\pr_{c_\star d_\star}D_\star=(\gamma_\star,\delta_\star)\neq (0,0)$ then also $\gamma_h$ is defined. 
\item[--]If $(\gamma_\star,\delta_\star)=(0,0)$ but $\gamma_h(C\vdots D)\pr_{c_\star d_\star}D_\star\neq (0,0),$ we set $\gamma_h=\infty.$ Otherwise $\gamma_h$ is ill defined.
\end{itemize}
\item \begin{itemize}
    \item[--]$\varepsilon_h(C\vdots D)$ if $\D_h$ is a top chord; 
\item[--]$\varepsilon_h\pr_{a_{\p(h)}b_{\p(h)}}C_{\p(h)}$ otherwise, where if $\p(h)=0$ then by $C_{\p(h)}$ we mean $D_0.$ \item[--]Moreover, if $\pr_{a_{\p(h)}b_{\p(h)}}C_{\p(h)}=(\alpha_{\p(h)},\beta_{\p(h)})\neq (0,0)$ then also $\varepsilon_h$ is defined. 
\item[--]If $(\alpha_{\p(h)},\beta_{\p(h)})=(0,0)$ but $\varepsilon_h\pr_{a_{\p(h)}b_{\p(h)}}C_{\p(h)}\neq(0,0)$ then $\varepsilon_h=\infty.$ Otherwise $\varepsilon_h$ is ill defined.
\end{itemize}
\item $\eta_h$ unless $\gamma_h$ or $\gamma_{\p(h)}$ is infinite or ill defined, and $\theta_h$ unless $\gamma_h$ is infinite or ill defined.
\end{itemize}
If for $C\vdots D$ all domino variables are finite or ill defined, and the inequalities implied by \cref{obs:poly_ineqs} are satisfied in the weak sense (replacing a definite sign by that sign or $0$),
we say that $C\vdots D$ has a \emph{weak domino form}.
If in addition both $C,C+D$ are of full rank we say that $C\vdots D$ is a \emph{weak $\D-$domino representative} of $U\vdots V$.

We can similarly define for a subset of rows of $C\vdots D$ the notions of almost and weak domino forms, by requiring the above properties only for that subset of rows. In particular,
recall \cref{def:behind_above_etc}. For a chord $\D_h$, we say that $U\vdots V$ 
has a \emph{weak (almost) domino form weakly outside $\D_h$} if there is a set of vectors $v_f\in U,~\D_f\in\woutside(\D_h)$ and possibly $v_0,v_\star\in U+V$ if $0\in\woutside(h)$ such that the collection of $\{v_f\},$ satisfy the above conditions for defining a weak (almost) domino form, when they are restricted only to the set $\woutside(\D_h)$ of rows. If $h\in\{0,\star\}$ . $U\vdots V$ has a weak (almost) domino form \emph{strictly outside} $\D_h$ if the same holds only for the rows in $\outside(\D_h).$ 
If there is a unique up to scalings collection of vectors realizing the weak (almost) domino form (weakly/strictly) outside $\D_h$, we say that $U\vdots V$ has a \emph{unique} weak (almost) $\D-$domino form (weakly/strictly) outside $\D_h.$ 

Recall the set $\tVar_\D$ defined in \cref{def:domino_entries}. For $A\subseteq \tVar_D,$ denote by $\partial_AS_\D\subseteq\overline{S}_\D,$ where $\overline{S}_\D$ is the Hausdorff topological closure of $S_\D,$ the subspace of $\overline{S}_\D$ consists of points which have a weak $\D-$domino representative in which the elements of $A$ vanish, and the other elements of $\tVar_\D\setminus A$ do not. 
\end{definition}
\begin{rmk}\label{rmk:inf_gamma_eps}
Note that $\gamma_h=\infty$ or ill defined only if $\pr_{c_\star d_\star}D_\star=0.$ Similarly $\varepsilon_h=\infty$ or ill defined only if $\pr_{a_{\p(h)}b_{\p(h)}}D_{\p(h)}=0.$ 
\end{rmk}
\begin{definition}\label{def:domino_limit}
Let $(U\vdots V)\in \overline{S}_{\D}$ an element in the Hausdorff topological closure of $S_\D.$

A \emph{domino limit pair} for $U\vdots V$ is any pair $C\vdots D,$ where neither $C$ nor $D$ has a zero row, that can be written as the entry-wise limit
\[C\vdots D=\lim_{i\to\infty}C^i\vdots D^i,\] where $C^i\vdots D^i$ is a $\D-$domino representative for $U^i\vdots V^i,$ where $(U^i\vdots V^i)\in S_\D,~i=1,2,\ldots$ is a sequence converging to $U\vdots V.$
\end{definition}
\begin{obs}\label{obs:domino_limits equal almost domino}
Every $\D-$domino limit is an almost $\D-$domino pair. 
We can thus define the domino expressions of \cref{def:almost_and_weak_domino} also for a domino limit pair. 
\end{obs}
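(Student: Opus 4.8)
The plan is to read off from \cref{def:almost_and_weak_domino} that, apart from the requirement of having no zero rows, the conditions defining an almost $\D$-domino pair are all \emph{equalities} — a prescribed, $\D$-dependent collection of entries must be zero, and a prescribed, $\D$-dependent collection of $2\times 2$ minors must be zero — and these cut out a closed subset of the space of matrix pairs of the appropriate shape. The no-zero-row requirement is not a closed condition, but it is imposed by hand in the definition of a domino limit pair (\cref{def:domino_limit}), so nothing needs to be checked for it. Thus the statement is essentially a continuity argument applied to a fixed finite family of polynomial functions.

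Concretely, let $C\vdots D$ be a $\D$-domino limit pair. By \cref{def:domino_limit}, neither $C$ nor $D$ has a zero row, and $C\vdots D=\lim_{i\to\infty}C^i\vdots D^i$ entry by entry, where each $C^i\vdots D^i$ is a genuine $\D$-domino representative of some $U^i\vdots V^i\in S_\D$. By \cref{obs:poly_ineqs}, being a $\D$-domino matrix for the fixed chord diagram $\D$ is characterized by a fixed list of entries and of $2\times2$ minors that vanish, together with a further list of entries and $2\times2$ minors that carry prescribed signs; I would simply discard the sign constraints, which almost domino pairs need not satisfy. Each matrix entry is a linear, hence continuous, function of the pair, and each $2\times2$ minor is a polynomial, hence continuous, in the entries. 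Therefore, for every position forced to vanish in a $\D$-domino form, the corresponding entry of $C\vdots D$ equals $\lim_i 0=0$, so the support of $C\vdots D$ is contained in the common support pattern of the $\D$-domino forms; and every $2\times2$ minor of $C\vdots D$ that is forced to vanish in a $\D$-domino form is the limit of the corresponding vanishing minors of the $C^i\vdots D^i$, hence is $0$. Together with the no-zero-row hypothesis, these are exactly the defining conditions of an almost $\D$-domino pair, which proves the first assertion. The second assertion is then immediate, since the domino expressions $\alpha_h,\beta_h,\gamma_h,\delta_h,\varepsilon_h,\eta_h,\theta_h$ of \cref{def:almost_and_weak_domino} are defined for any almost $\D$-domino pair.

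There is no genuine obstacle in this argument; the only point that deserves a word is that the combinatorial indexing of the vanishing conditions — which entries, which $2\times2$ minors — is the \emph{same} for every approximant $C^i\vdots D^i$, since it depends only on $\D$. This is exactly what \cref{obs:poly_ineqs} records, and it is what allows continuity to be applied coordinate by coordinate to a fixed finite family of functions. I would also remark that the same reasoning does not upgrade to the conclusion that a domino limit pair is always a \emph{weak} $\D$-domino representative: although the weak forms of the sign inequalities do pass to the limit, the auxiliary quantities $\gamma_h,\varepsilon_h,\eta_h,\theta_h$ can become infinite or ill-defined and the ranks of $C$ and $C+D$ can drop, which is precisely why only the weaker ``almost'' conclusion is asserted here.
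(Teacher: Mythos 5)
Your proof is correct and matches the paper's approach: the paper dismisses this observation as "immediate from ... continuity," and your argument is exactly that continuity argument spelled out — the vanishing entries and vanishing $2\times 2$ minors are a fixed, $\D$-dependent family of polynomial conditions preserved under entrywise limits, while the no-zero-row condition is built into \cref{def:domino_limit}. Your closing remark about why the conclusion cannot be upgraded to a weak domino representative is also consistent with \cref{rmk:weak_vs_almost}.
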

\begin{obs}\label{obs:domino_limits}
Every $(U\vdots V)\in \overline{S}_{\D}$ has a domino limit pair $C\vdots D$. Every row of $C,D$ corresponds to a chord, and is a vector in $U,~U+V$ respectively. If $C$ is of full rank then $V$ is the row span of $C,$ if $C+D$ is of full rank then its row span is $U+V.$ 
Every entry or a $2\times 2$ minor of the limit, which is identically zero for a $\D-$domino pair is identically zero for the limit. Every entry or a $2\times 2$ minor of the limit, which has a definite sign for a $\D$-domino pair, has that sign, or is $0$, in the limit.
In particular, every $\zeta_h(C\vdots D),$ for $\zeta\in\{\alpha,\beta,\gamma,\delta,\varepsilon,\theta,\eta\},$ if not ill defined, satisfies the sign rules of \cref{def:L=1domino_signs} in the weak sense.
\end{obs}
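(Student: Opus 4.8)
The plan is to obtain the domino limit pair by a compactness argument and then read off every remaining assertion from continuity. First I would choose, using the definition of the Hausdorff closure, a sequence $(U^i\vdots V^i)\in S_\D$ converging to $(U\vdots V)$, and note that $(U\vdots V)\in\Gr_{k,n;1}^\geq$ because this space is compact by \cref{lem:connected}; in particular $\dim U=k$ and $\dim(U+V)=k+2$. By \cref{thm:dominoAndBCFWparams} each $(U^i\vdots V^i)$ has a $\D$-domino representative $C^i\vdots D^i$, unique up to the gauge of \cref{obs:gauge_domino}, which is generated by positive rescalings of the individual rows. I would normalize by dividing each of the $k+2$ rows of $C^i\vdots D^i$ by its Euclidean norm; the point is that all the defining conditions of a $\D$-domino representative — the prescribed zero entries, the prescribed vanishing $2\times2$ minors, and the strict sign rules of \cref{def:L=1domino_signs} — are invariant under positive row scalings, so the normalized pairs $\hat M^i$ are again $\D$-domino representatives of $(U^i\vdots V^i)$. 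They lie in the compact product of spheres $(S^{n-1})^{k+2}$, so after passing to a subsequence they converge entrywise to a pair $C\vdots D$ all of whose rows have norm $1$; in particular no row is zero, so $C\vdots D$ is a domino limit pair for $(U\vdots V)$ in the sense of \cref{def:domino_limit}.

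Next I would verify the structural claims. The labeling of rows by chords is structural, hence inherited in the limit. For the assertion that each row of $C$ lies in $U$ and each row of $C+D$ (the stacked loopy matrix) lies in $U+V$, I would use that $U\mapsto P_U$, the orthogonal projection, is continuous on the Grassmannian: if $w^i\in U^i$ and $w^i\to w$ then $w=\lim P_{U^i}w^i=P_U w\in U$, and similarly for $U^i+V^i$. Applying this to the rows of $\hat M^i$ gives the claim. Then, if $C$ has full rank $k$, its row span is a $k$-dimensional subspace of $U$, hence equals $U$ since $\dim U=k$; the argument for $C+D$ and $U+V$ is identical with $k+2$ in place of $k$.

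The last two sentences are pure continuity. Every entry and $2\times 2$ minor of a matrix is a polynomial in its entries; an entry or minor vanishing identically on $\D$-domino pairs vanishes on every $\hat M^i$, hence on $C\vdots D$, and an entry or minor that is everywhere positive (resp.\ negative) on $\D$-domino pairs is positive (resp.\ negative) on every $\hat M^i$, hence $\ge0$ (resp.\ $\le 0$) on $C\vdots D$. For the ``in particular'' clause I would invoke \cref{obs:poly_ineqs}: the sign rules of \cref{def:L=1domino_signs}, including the ratio inequalities — which are encoded via \eqref{eq:eta} and \eqref{eq:theta} as sign conditions on the minors $\eta_{ml},\theta_{ml}$ — are precisely sign conditions on entries and $2\times2$ minors of the domino matrix, and these conditions are gauge-invariant; hence whenever $\zeta_h(C\vdots D)$ (for $\zeta\in\{\alpha,\beta,\gamma,\delta,\varepsilon,\theta,\eta\}$) is not ill defined, it satisfies the sign rule in the weak sense.

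The only genuine obstacle I anticipate is in the first paragraph: checking that the normalization step keeps us inside the class of $\D$-domino representatives — that the gauge group of \cref{obs:gauge_domino} really acts through positive row rescalings and that all the defining conditions (in particular the $\D$-dependent zero and vanishing-minor patterns of \cref{obs:poly_ineqs}) survive such rescalings — so that the resulting entrywise limit genuinely qualifies as a domino limit pair. Everything after that is routine linear algebra and continuity.
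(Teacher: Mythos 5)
Your proposal is correct and takes exactly the route the paper intends: the paper's entire proof is the single sentence ``The proofs are immediate from the rows' scaling freedom and continuity,'' and your normalization of each row to unit norm (legitimate because the gauge of \cref{obs:gauge_domino} acts on the matrix level by positive row rescalings, which preserve the defining vanishings and sign conditions of \cref{obs:poly_ineqs}), followed by compactness of the product of spheres and entrywise continuity of the entries, $2\times2$ minors, and Grassmannian incidence relation, is precisely that argument spelled out.
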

The proofs are  immediate from the rows' scaling freedom and continuity.

\begin{rmk}\label{rmk:weak_vs_almost}
The difference between weak $\D$ domino, a domino limit and an almost domino form, is that the last two allow for the starting domino of a row, or the ending domino of the yellow row, to vanish, while other rows whose domino forms should have included a multiple of these dominos still contain such a domino. This may happen when the domino $(\zeta_h(C^i\vdots D^i),\sigma_h(C^i\vdots D^i))$ of the parent or yellow chord goes to $0$ as $i\to\infty$ but its contribution to the child or blue row, $\xi_j(C^i\vdots D^i)(\zeta_h(C^i\vdots D^i),\sigma_h(C^i\vdots D^i))$ does not tend to zero, which implies that $\lim_{i\to\infty}\xi_j(C^i\vdots D^i)\to\infty,$ justifying our definition.

An almost domino form differs from a domino limit in that the entries and $2\times2$ minors of the former that are not identically zero in a domino form, see \cref{obs:poly_ineqs}, have no sign constraints. In a domino limit case they are constrained to have prescribed signs, at least in the weak sense.
\end{rmk}

\begin{definition}\label{def:Sred_blue}
    Let $\D$ be a chord diagram. 
    The subset $\Sred_\D\subset \overline{S}_{\D}$ is defined as the set $(U\vdots V)\in\overline{S}_{\D}$ such that $U$ contains a vector which is a linear combination of $D_0,D_\star$ for some almost $\D-$domino pair for $U\vdots V.$  

    The subset $\Sblue_\D\subset \overline{S}_{\D}$ is defined as the set $(U\vdots V)\in\overline{S}_{\D}$ such that for some almost domino pair $C\vdots D$ for $U\vdots V$, and some blue chord $\D_h,$ $U$ contains a vector in $\Span(\ee_{a_h},\ee_{b_h},D_\star,\pr_{c_\star d_\star}D_\star).$

Set \[\Srem_\D = 
\overline{S}_\D\setminus(\SA\cup\Sred_\D\cup\Sblue_\D).\]
\end{definition}
\begin{nn}
In light of the previous definition,  
by slight abuse of notations, if $C\vdots D$ is an almost $\D-$domino pair which witnesses that $U\vdots V$ belongs to $\Sblue_\D,~\Sred_\D,~\Sblue,~\Sred,$ we sometimes write that $C\vdots D$ belongs to the corresponding set.  
\end{nn}
For the treatment in $\Sred_\D,\Sblue_\D$ below we shall need certain distinguished $\ww_h\in U+V$, for every red or blue chord $\D_h$. 
\begin{lemma}\label{lem:ww}
For every domino limit $C\vdots D$ for $U\vdots V\in\overline{S}_\D$, and each red or blue chord $\D_h,$ there exists a non zero vector \[\ww_h\in (U+V)\cap\Span(\ee_{a_h},\ee_{b_h},\ee_{c_\star},\ee_{d_\star})\]whose projection on $(a_h,b_h)$ is proportional to the starting domino of $h$'th row of $C\vdots D$, and to the dominoes $C_h$'s children inherit from it, and its projection on $(c_\star,d_\star)$ is proportional to $u_\star$.
\end{lemma}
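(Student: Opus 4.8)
\textbf{Proof plan for Lemma \ref{lem:ww}.}

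The plan is to prove the existence of $\ww_h$ by induction on the recursive construction of $\D$ (the $\pre$, $\bcfw$, and $\FL$ steps), tracking simultaneously a candidate vector $\ww_h^t \in U^t + V^t$ at each stage $t$ of the generation sequence, and showing that under a limit it survives with the asserted support and projections. The key structural input is \cref{lem:bcfw_after_fl}, specifically \cref{it:bcfw_after_fl_reds} and \cref{it:bcfw_after_fl_yellow}: for a red or blue chord $\D_h$, the set $\{a_h,b_h\}$ is coindependent for $\PosD$, and $\{c_\star,d_\star\}$ is coindependent for $\PosD$, so a generic element of $S_\D$ has a $(k+2)$-dimensional loop part $U+V$ in which $\Span(\ee_{a_h},\ee_{b_h},\ee_{c_\star},\ee_{d_\star})$ meets $U+V$ only in $\{0\}$ — but at the level of domino forms and their limits this intersection can and does become nontrivial exactly along the loci we are interested in. Concretely, first I would identify the explicit vector in the BCFW/domino form: for a red chord $\D_h$ (with $h\ne 0$), \cref{cor:BCFW2dominoExplicit} and the definition of the domino matrix \cref{def:domino_entries}(d) show that the $h$-th row of $C\vdots D$, namely $C_h$, is built from the starting domino $\alpha_h\ee_{a_h}+\beta_h\ee_{b_h}$, the parent domino $\varepsilon_h\pr_{a_{\p(h)}b_{\p(h)}}C_{\p(h)}$, and $\delta_h D_0 + \gamma_h D_\star$; for a blue chord the analogous decomposition has $\delta_h D_\star + \gamma_h\pr_{c_\star d_\star}D_\star$ in place of the last term. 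In both cases, $U+V$ contains a vector whose only non-domino coordinates lie in positions $\{a_h,b_h,c_\star,d_\star,n\}$ together with possibly the parent's support, and one uses the rows $D_0, D_\star$ (whose supports are $\{a_0,b_0,n\}$ and $\{c_\star,d_\star,n\}$) to cancel everything outside $\{a_h,b_h,c_\star,d_\star\}$.

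The substantive step is to show this cancellation can always be carried out and produces a \emph{nonzero} vector with the stated proportionality of projections. I would argue as follows. In the loop part $U+V$ of a domino limit $C\vdots D$, consider the row $C_h$ (red case) or $C_h$ with its $D_\star$-multiple already absorbed (blue case). By \cref{lem:4biden}\cref{it:4bidenD} — or rather its iteration through the subsequent $\pre$ and $\bcfw$ steps that build $\D$ from its $\FL$-subdiagram, combined with \cref{lem:bcfw_after_fl}\cref{it:bcfw_after_fl_reds} and \cref{it:bcfw_after_fl_yellow} — the space $U+V$ contains a vector $\ww_h'$ supported on $\{a_h,b_h,n\}$ and a vector supported on $\{c_\star,d_\star,n\}$, namely (a multiple of) $D_\star$. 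The vector $C_h$ has the form (some combination of starting domino inherited by children) $+ \delta_h D_0 + \gamma_h D_\star + (\text{parent-inherited domino})$; since $D_0$ is itself supported on $\{a_0,b_0,n\}$ and — using that $\D_h$ is a descendant of $\D_0$, so $a_0 < a_h$ and the parent chain leads up to $\D_0$ — one can subtract the appropriate multiples of $D_0$ (and, in the blue case, $D_\star$) to eliminate the $n$-th coordinate and the parent-inherited entries, leaving a vector $\ww_h$ supported on $\{a_h,b_h,c_\star,d_\star\}$. That its $(a_h,b_h)$-projection is the starting domino of the $h$-th row (up to scalar) is immediate from the construction since $D_0, D_\star$ have zero $a_h, b_h$ coordinates; that the children's inherited starting domino is the same vector is precisely \cref{def:domino_entries}(b); and the $(c_\star,d_\star)$-projection being proportional to $u_\star = \pr_{c_\star d_\star} D_\star$ follows because the only contribution to those positions comes from the $\gamma_h D_\star$ term (red case) or the $\gamma_h \pr_{c_\star d_\star} D_\star$ term (blue case). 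Nonvanishing: if $\ww_h$ were zero then the starting domino $(\alpha_h,\beta_h)$ would vanish, but $\alpha_h,\beta_h$ are never ill defined and, by \cref{obs:domino_limits}, satisfy $\alpha_h,\beta_h > 0$ strictly in any domino limit — this is the point where one uses that the starting domino entries never degenerate.

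I expect the main obstacle to be the bookkeeping in the blue case when $\pr_{c_\star d_\star} D_\star$ itself degenerates, i.e.\ when $(\gamma_\star,\delta_\star) = (0,0)$ in the limit; there $u_\star = 0$ and the phrase ``proportional to $u_\star$'' must be read correctly (the $(c_\star,d_\star)$-projection of $\ww_h$ is then forced to be zero, which is consistent). One must check that in this degenerate situation $\ww_h$ is still nonzero — which it is, being supported on $\{a_h,b_h\}$ with positive starting domino — and that the vector is still honestly in $U+V$ rather than only in its closure. This is handled by working with a fixed domino limit pair $C\vdots D$ (whose rows by \cref{obs:domino_limits} do lie in $U$, resp.\ $U+V$) rather than passing to limits again: all the operations described are linear combinations of the finitely many rows of $C\vdots D$ with coefficients that are ratios of domino entries, and one only takes such a ratio when the relevant denominator ($\hdel_0$ for the $D_0$-cancellation, or the parent's domino for the parent-entry cancellation) is nonzero, which is guaranteed along the part of $\overline{S}_\D$ where $\ww_h$ is being defined — and where a denominator does vanish, the corresponding term is already absent, so no cancellation is needed. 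A secondary, more routine obstacle is making the induction hypothesis precise enough that it is preserved by $\pre$ (trivial), by $\bcfw$ (the $y_i$ and $\pre$ operations of \cref{def:bcfw-map} move the support of $\ww_h$ but keep it inside the relevant coordinate subspace, by \cref{lem:effect_ops}), and by $\FL$ (the step that actually creates the red and blue chords, handled by the explicit analysis above).
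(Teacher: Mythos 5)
Your overall framework — build the vector inside $U+V$ by subtracting off the $D_0$, $D_\star$ and parent-inherited pieces from $C_h$ — is the same idea as the paper, but the paper executes it differently in a way that matters. The paper does \emph{not} operate on the limit pair $C\vdots D$ directly. Instead it fixes a sequence $C^i\vdots D^i\in S_\D$ converging to $C\vdots D$, defines $\ww^i_h$ recursively in parent-to-child order ($\ww^i_0$ from $D^i_0,D^i_\star$, then $\ww^i_h = (C^i_h - \gamma^i_h D^i_\star - \delta^i_h D^i_0) - \varepsilon^i_h\ww^i_{\p(h)}$ in the red case), and only then takes a \emph{properly scaled subsequence limit} of $\ww^i_h$ to get $\ww_h$. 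That scaling is what guarantees $\ww_h\ne 0$ even when the domino entries that would have made it nonzero all degenerate together.

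Your nonvanishing argument is the concrete gap. You write that ``$\alpha_h,\beta_h>0$ strictly in any domino limit — this is the point where one uses that the starting domino entries never degenerate.'' That is false: \cref{lem:Srem_cases} explicitly records that for $\D_h\in\Red_\D\setminus\{0\}$ both $(\alpha_h,\beta_h)$ and $\varepsilon_h$ may vanish in a domino limit (this is precisely the $\Sred_\D$ locus). In that situation your candidate for $\ww_h$, built as a linear combination of $C_h$, $D_0$, $D_\star$ and parent rows of the \emph{limit} matrix, can come out identically zero. The paper sidesteps this by never dividing or cancelling inside the limit: every operation is done at finite $i$ where the domino variables are strictly positive, and the scaling in the limit is chosen to keep $\ww_h$ nonzero.

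Two secondary issues. First, the appeal to \cref{lem:4biden}~\cref{it:4bidenD} to claim that $U+V$ contains a vector supported on $\{a_h,b_h,n\}$ for each red or blue $\D_h$ is misapplied: that lemma gives such a vector only for the rightmost top chord of $\D'$ (i.e.\ for $\D_0$), and only when $\D=\FL(\D')$. For a deeper chord $\D_h$ you do not get a vector on $\{a_h,b_h,n\}$; you need the recursively-built $\ww_{\p(h)}$, which in general is a combination of $D_0$, $D_\star$ \emph{and} the intermediate red/blue rows $C_{\p(h)},C_{\p^2(h)},\dots$, not just of $D_0,D_\star$ as your cancellation step seems to assume. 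Second, the dichotomy you use to dismiss the degeneration issue (``one only takes such a ratio when the relevant denominator is nonzero, \dots where a denominator does vanish, the corresponding term is already absent'') is not correct: when the parent's starting domino vanishes, the child can still carry a nontrivial inherited contribution at $(a_{\p(h)},b_{\p(h)})$ (this is exactly the $\varepsilon_h=\infty$ or ill-defined case in \cref{def:almost_and_weak_domino}), and the term is not ``already absent.'' The limit-of-rescaled-vectors argument is not optional here.
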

\begin{proof}
We can find domino pairs $C^i\vdots D^i\in S_\D$ which tend to $C\vdots D.$ For $h=0$ there is a unique linear combination of $D^i_0,D^i_\star$ which cancels the $n$th entry or the starting domino of $\p(0).$ Call it $\ww^i_0.$ 
Suppose we have defined $\ww^i_{\p(h)}.$ If $C_h$ is red then the vector
\[C_h+\gamma_h D_\star+\delta_h D_0
\in (U+V)\cap\Span(\ee_{a_h},\ee_{b_h},\pr_{a_{\p(h)b_{\p(h)}}}\ww^i_{\p(h)}).\]
By subtracting from it $\varepsilon^i_h\ww^i_{\p(h)}$ we obtain $\ww^i_h.$

Similarly, if $\D_h$ is blue then 
\[C_h+\delta_h D_\star\in (U+V)\cap\Span(\ee_{a_h},\ee_{b_h},\pr_{a_{\p(h)b_{\p(h)}}}C^i_{\p(h)},u_\star),\]where if $\p(h)=0$ then by $C_{\p(h)}$ we refer to $D_0.$
Again, subtracting $\varepsilon_h^i\ww^i_{\p(h)}$ from this vector, we obtain $\ww^i_h.$

We may take a subsequence limit of the properly scaled  $\ww^i_h$ to obtain the required $\ww_h.$

The last property of the vectors $\ww_h$ holds for $C^i\vdots D^i$ before taking the limits, hence also holds after the limit, even though some of the projections might vanish. 
\end{proof}
The recursive process which describes the vectors $\ww_h$ for $U\vdots V\in S_\D$ can be solved to give
\begin{obs}\label{obs:ww}For $U\vdots V\in S_\D$ it holds that
\[\ww_h=\alpha_h\ee_{a_h}+\beta_h \ee_{b_h}+\Gamma_h\pr_{c_\star d_\star}D_\star,\]
where \[\Gamma_h = \begin{cases}\frac{(-\varepsilon_h)(-\varepsilon_{\p(h)})\cdots(-\varepsilon_0)}{\varepsilon_\star} &\D_h\in\Red_\D\\
\gamma_h-\gamma_{\p(h)}\varepsilon_h+\cdots\pm\gamma_{\p^m(h)}\varepsilon_h\cdots\varepsilon_{\p^{m-1}(h)}\mp\frac{(-\varepsilon_h)(-\varepsilon_{\p(h)})(-\varepsilon_{\p^2(h)})\cdots(-\varepsilon_0)}{\varepsilon_\star}&\D_h\in\Blue_\D\\
\end{cases}\]
where in the blue case $m$ is such that $\D_{\p^m(h)}\in\Blue_\D,\D_{\p^{m+1}(h)}\in\Red_\D.$
  In particular all summands in $\Gamma_h$ have the same sign.
\end{obs}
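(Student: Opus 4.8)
The plan is to solve the recursion for $\ww_h$ from the proof of \cref{lem:ww} explicitly, for points $U\vdots V\in S_\D$ where all domino variables are genuine (finite, with the prescribed signs), so that no limits or projections collapse. In that regime every vector in the construction has a well-defined domino decomposition, and the recursion $\ww^i_h = (C_h + \text{(correction)}) - \varepsilon_h\ww_{\p(h)}$ can be unfolded down the ancestral chain $h, \p(h), \p^2(h),\ldots$ until we reach the top red chord $\D_0$, and then one more step to absorb $\ww_0$ into $D_\star$. First I would record the base case: from the proof, $\ww_0$ is the linear combination of $D_0,D_\star$ killing the $n$th entry (equivalently the starting domino of $\D_{\p(0)}=\D_\emptyset$, which is $\ee_n$); since the $n$-entries of $D_0,D_\star$ are $\varepsilon_0,\varepsilon_\star$ respectively (in domino form), this gives $\ww_0 \propto \varepsilon_\star D_0 - \varepsilon_0 D_\star$, and after fixing scale its projection to $(c_\star,d_\star)$ is a nonzero multiple of $u_\star = \pr_{c_\star d_\star}D_\star$; its projection to $(a_0,b_0)$ is $\varepsilon_\star(\alpha_0,\beta_0)$. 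So $\ww_0$ is of the claimed form with $\Gamma_0 = (-\varepsilon_0)/\varepsilon_\star$ up to the overall rescaling that normalizes $\ww_0 = \alpha_0\ee_{a_0}+\beta_0\ee_{b_0}+\Gamma_0\,\pr_{c_\star d_\star}D_\star$.

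Next I would run the induction on the depth of $\D_h$ below $\D_0$ (equivalently on $\below(\D_0)-\below(\D_h)$ along the chain), treating the red and blue cases separately exactly as in \cref{lem:ww}. For a red chord $\D_h$ (with $h\neq0$), the proof exhibits $C_h + \gamma_h D_\star + \delta_h D_0 - \varepsilon_h \ww_{\p(h)}$ as $\ww_h$; the first three terms have $(a_h,b_h)$-projection $(\alpha_h,\beta_h)$ and no $(c_\star,d_\star)$ component beyond what $D_\star,D_0$ contribute, and one checks that the $D_\star,D_0$ contributions are precisely arranged (using $\delta_h D_0$) so that the surviving $(c_\star,d_\star)$ part is a multiple of $u_\star$; subtracting $\varepsilon_h\ww_{\p(h)}$ replaces $\Gamma_{\p(h)}$ by $-\varepsilon_h\Gamma_{\p(h)}$, giving the telescoping product $\Gamma_h = (-\varepsilon_h)(-\varepsilon_{\p(h)})\cdots(-\varepsilon_0)/\varepsilon_\star$. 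For a blue chord $\D_h$, the vector is $C_h + \delta_h D_\star - \varepsilon_h\ww_{\p(h)}$, but now $C_h$ itself (by the definition of the blue row in \cref{def:domino_entries}(e) and the blue entry of \cref{cor:BCFW2dominoExplicit}) already carries a genuine $\gamma_h\,\pr_{c_\star d_\star}D_\star$ term; so $\Gamma_h = \gamma_h - \varepsilon_h\Gamma_{\p(h)}$. Unwinding this linear recursion along the chain $h\to\p(h)\to\cdots$ down to the first ancestor $\p^{m+1}(h)$ that is red — where $\Gamma$ switches to the pure product formula — yields the alternating sum $\gamma_h - \gamma_{\p(h)}\varepsilon_h + \cdots \pm \gamma_{\p^m(h)}\varepsilon_h\cdots\varepsilon_{\p^{m-1}(h)} \mp (-\varepsilon_h)\cdots(-\varepsilon_0)/\varepsilon_\star$, which is the stated expression. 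The final "same sign" claim then follows by feeding in the domino sign rules of \cref{def:L=1domino_signs} and \cref{obs:iterated_etas}: each $\varepsilon_i$ carries a sign governed by $\Between(\D_i)$, each $\gamma_i$ by $\below(\D_i)$, and a bookkeeping check shows all terms of $\Gamma_h$ acquire a common sign.

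The main obstacle I expect is the sign bookkeeping in the blue case: one must verify that the telescoped alternating sum genuinely has all summands of one sign, which requires tracking how the $(-1)^{\Between}$ and $(-1)^{\below}$ exponents interact along the ancestral chain and cancel against the alternating $\pm$ signs built into the recursion. This is purely combinatorial — it reduces to checking parities of $\Between$ and $\below$ for consecutive chords in a sticky-ish chain of same-end blue descendants — but it is the only place where genuine care is needed; the red case is a clean telescoping product and is essentially immediate. Everything else (the $(a_h,b_h)$-projections being $(\alpha_h,\beta_h)$, the $(c_\star,d_\star)$-projections collapsing onto multiples of $u_\star$) is forced by the structure of the domino form and was already used implicitly in the proof of \cref{lem:ww}, so I would state it and defer the one-line verifications.
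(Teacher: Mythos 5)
Your proposal follows the paper's argument: establish $\ww_0=D_0-(\varepsilon_0/\varepsilon_\star)D_\star$ in the normalization where its $(a_0,b_0)$-projection is $(\alpha_0,\beta_0)$, unfold the recursions $\Gamma_h=-\varepsilon_h\Gamma_{\p(h)}$ (red) and $\Gamma_h=\gamma_h-\varepsilon_h\Gamma_{\p(h)}$ (blue) along the ancestral chain, and then read off the common sign $(-1)^{\below(\D_h)}$ of the summands from the $\Between/\below$ parity rules of \cref{def:L=1domino_signs} --- exactly the route the paper takes, with the sign bookkeeping you flag being precisely the one-line parity computation the paper records. The only nit is a stray sign in your red step (you write $C_h+\gamma_hD_\star+\delta_hD_0$ where the domino form (d) makes clear one must \emph{subtract} $\gamma_hD_\star+\delta_hD_0$ from $C_h$ to strip the yellow and top-red contributions); this does not affect the telescoped $\Gamma_h$ you obtain.
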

\begin{proof}
For $h=0$, $\Gamma_0=-\frac{\varepsilon_0}{\varepsilon_\star}<0.$
For other $\D_h\in\Red_\D,$ $\ww_h$ is obtained from starting from $C_h$ by subtracting $\gamma_hD_\star+\delta_hD_0+\varepsilon_h\ww_{\p(h)},$ hence $\Gamma_h=-\varepsilon_h\Gamma_{p(h)},$ and this recursion is solved to yield the required $\Gamma_h$, whose sign is $-1$ to the power
\[(1+\Between(\D_h))+(1+\Between(\D_{\p(h)})+\ldots+(1+\Between(\D_{\p^r(h)})+\below(\D_0)+1)+1,\]where $r$ is the number of red ancestors of $\D_h$ which are not the  top red, and we have used the sign rules \cref{def:L=1domino_signs}. This sum equals $\below(\D_h),$ from the definitions of $\below,\Between.$

For $\D_h\in\Blue_\D,$ \begin{equation}\label{eq:ww_C_lin_comb}\ww_h=C_h-\delta_hD_\star-\varepsilon_h\ww_{\p(h)}\Rightarrow \Gamma_h=\gamma_h-\varepsilon_h\Gamma_{\p(h)}.\end{equation}Again this recursion yields the required expression, and the sign is seen to be $(-1)^{\below(\D_h)},$ from the same reasoning as above.
\end{proof}
\begin{obs}\label{obs:cell_not_red_blue}For a chord diagram $\D,$\[S_{\D}\cap(\Sred_\D\cup \Sblue_\D)=\emptyset.\]\end{obs}
Note that it is not yet clear that $\Srem_\D$ is non empty and contains $S_\D,$ but it will be shown below.
\begin{proof}
Consider $U\vdots V\in S_\D.$ The proof of \cref{thm:dominoAndBCFWparams}, which calculates the domino or BCFW variables row-by-row, easily extends to showing that $U\vdots V$ has a unique almost domino pair, hence, it is a domino pair. Denote it by $C\vdots D.$ Since it is of full rank, the span of $D_0,D_\star,$ must intersect the row span of $C$ trivially, implying $S_\D\cap\Sred_\D=\emptyset.$

Suppose there is a non trivial linear combination $v$ of $\ee_{a_h},\ee_{b_h},u_\star,D_\star$ in $U,$ where $\D_h\in\Blue_\D,$ then there are three possibilities. Either its projection on the entries $(a_h,b_h)$ is $0,$ or it is nonzero but proportional to the starting domino of $C_h,$ or it is not proportional to the starting domino of $C_h$. In the first case, since $D_\star$ is not in the row span of $C$ then we can find a non zero linear combination of $D_\star,v$ supported on $c_\star,d_\star,$ contradicting \cref{lem:bcfw_after_fl},~\cref{it:bcfw_after_fl_yellow}.

In the second case, adding a small multiple of $v$ to $C_h$ does not change any domino sign, and preserves the rank and the domino form, contradicting the uniqueness of the domino form \cref{thm:dominoAndBCFWparams}.

In the last case, there is a non zero linear combination $v'$ of $v$ and $D_\star$ which lies in the linear span of $\ee_{a_h},\ee_{b_h}$ and $u_\star.$ It must be different from $\ww_h$ by \cref{lem:ww}. Thus, we can find a linear combination of $v',\ww_h$ supported only on $a_h,b_h$, contradicting \cref{lem:bcfw_after_fl},~\cref{it:bcfw_after_fl_reds}. 
\end{proof}

The following key lemma will be useful for our boundary analysis below. 
\begin{lemma}\label{lem:uniqueness_or_SA}
\begin{enumerate}
    \item Every $U\vdots V\in\Srem_\D$ contains a unique almost $\D-$domino pair $C\vdots D$, up to the gauge equivalence of \cref{obs:gauge_domino}. $C\vdots D$ is a weak $\D-$domino representative.

\item If $U\vdots V\in\overline{S}_\D\setminus\SA$ then it has a unique almost domino form weakly outside $\D_0.$ This form is a weak domino form weakly outside $\D_0.$ We recall the reader that 'weakly' means including $D_0,D_\star.$ 
In this case also the vectors $\ww_h$ defined in \cref{lem:ww} are unique up to scaling. Moreover, the matrix obtained from any domino limit for $U\vdots V,$ by replacing the rows corresponding to $\D_h$ for $h\in\Red_\D\cup\Blue_\D$ by $\ww_h,$ is unique up to row scalings, and is of full rank.
\item\label{rmk:Sred_without_red_chords}
If $C\vdots D$ is a domino limit for $U\vdots V\in\overline{S}_\D\setminus\SA$, then if $U\cap\Span(D_0,D_\star)\neq0$ then there is a non top red row $C_h\in\Span(D_0,D_\star)$. In particular, when $|\Red_\D|=1$ $\Sred_\D\subseteq\SA.$\end{enumerate}
\end{lemma}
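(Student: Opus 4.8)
For parts (1) and (2) the plan is to rerun the row‑by‑row reconstruction of the domino (equivalently BCFW) data from the proof of \cref{thm:dominoAndBCFWparams}, now for an arbitrary $U\vdots V\in\overline{S}_\D$, while tracking which degenerations the hypothesis $U\vdots V\notin\SA$ — and, for part (1), $U\vdots V\notin\Sred_\D\cup\Sblue_\D$ — rules out. First fix a domino limit pair $C\vdots D$ for $U\vdots V$: by \cref{obs:domino_limits} one exists, it is an almost $\D$-domino pair by \cref{obs:domino_limits equal almost domino}, and its domino expressions satisfy the rules of \cref{def:L=1domino_signs} in the weak sense by \cref{obs:domino_limits}; so only the \emph{uniqueness} and \emph{full rank} assertions remain. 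I would induct on $|K|+|N|$, splitting on the last step of the generation of $\D$.

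If $\D=\pre_p\D'$ the claim is immediate. If $\D=\D_L\bcfw\D_R$ with rightmost top chord $\D_k=(a,b,c,d)$, I recover the line $\R C_k$ as $U\cap\R^{\{a,b,c,d,n\}}$ and the subspaces $U'_L,U'_R$ (plus the loop‑space representative of the nontrivial factor) exactly as in \cref{lem:4biden},~\cref{it:4bidenC}. The proof there used only the coindependence statements of \cref{lem:bcfw_after_fl},~\cref{it:bcfw_after_fl_bcdn}, and these persist on $\overline{S}_\D$ off $\SA$: by \cref{obs:ind_coind}, a failure of coindependence gives a vector of $U$ supported on a quadruple $\{i,i',j,j'\}$ as in \cref{def:SA}, or of $U+V$ on a pair $\{i,i'\}$, i.e. a point of $\SA$. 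One then inverts the BCFW map of \cref{def:bcfw-map} and applies the inductive hypothesis to the two factors. The case $\D=\FL(\D')$ is analogous, with $D_\star,D_0$ pinned down as $(U+V)\cap\R^{\{c_\star,d_\star,n\}}$ and $(U+V)\cap\R^{\{a_0,b_0,n\}}$ via \cref{lem:4biden},~\cref{it:4bidenD} (valid off $\SA$), after which $\FL$ is inverted as in \cref{thm:dominoAndBCFWparams}. The only place this reconstruction can break on the closure is at the red and blue rows — hence part (2) claims uniqueness only \emph{weakly outside} $\D_0$: the recursion never uses the individual red or blue rows, only $D_0$, $D_\star$ and the rest. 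For the remaining assertions of part (2), once $D_0,D_\star$ are fixed the recursive construction of \cref{lem:ww} produces the $\ww_h$ uniquely up to scaling — each step intersects $U+V$ with a three‑dimensional coordinate subspace, one‑dimensional off $\SA$ — and replacing the red and blue rows of $C\vdots D$ by the corresponding $\ww_h$ yields a matrix of full rank: its rows span $U+V$, and linear independence follows from \cref{lem:bcfw_after_fl},~\cref{it:bcfw_after_fl_reds},~\cref{it:bcfw_after_fl_yellow} (no vector of $U+V$ on $\{a_h,b_h\}$ or on $\{c_\star,d_\star\}$) together with $U\vdots V\notin\SA$. For part (1), the hypotheses $U\vdots V\notin\Sred_\D$, $U\vdots V\notin\Sblue_\D$ say exactly that $U$ contains no combination of $D_0,D_\star$ and no vector in $\Span(\ee_{a_h},\ee_{b_h},D_\star,\pr_{c_\star d_\star}D_\star)$ for a blue $\D_h$, which are the two ways the red, respectively blue, rows could fail to be reconstructed uniquely — compare the three‑case analysis in the proof of \cref{obs:cell_not_red_blue} — so the whole almost‑domino pair is recovered uniquely, and, carrying the weak sign rules, it is a weak $\D$-domino representative.

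For part (3) I argue directly from a domino limit $C\vdots D$. If $U\cap\Span(D_0,D_\star)$ were two‑dimensional then $D_0\in U$, but $D_0$ is supported on $\{a_0,b_0,n\}$, contained in a quadruple of the form in \cref{def:SA}, so $U\vdots V\in\SA_C$ — excluded. Hence the intersection, if nonzero, is a line $\Span(\lambda D_0+\mu D_\star)$; $\lambda=0$ or $\mu=0$ would put a multiple of $D_\star$ or of $D_0$ into $U$, again a vector of $U$ supported on such a quadruple, so $\lambda,\mu\neq0$. Since $\lambda D_0+\mu D_\star$ lies in the span of the rows of $C$, those rows together with $D_0,D_\star$ are linearly dependent, i.e. the domino limit has lost rank. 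Tracing this dependence through the recursion of \cref{lem:ww} and the limiting form of \cref{obs:ww} — which writes each red row $C_h$ modulo $\Span(D_0,D_\star)$ in terms of the $\ww_{h'}$ of its red ancestors, the coefficients $\varepsilon_h$ attached to red chords being the only ones along which a domino form can blow up in the limit, cf. \cref{rmk:weak_vs_almost} — one sees the dependence must arise from a non‑top red row of the limit collapsing into $\Span(D_0,D_\star)$, every other outcome producing a small‑support vector in $U$ or $U+V$, i.e. a point of $\SA$. When $|\Red_\D|=1$ there is no non‑top red row to absorb the collapse, so the intersection is $0$; hence $\Sred_\D\subseteq\SA$.

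I expect the main obstacle to be precisely this interaction of the forward‑limit step with the closure: one must control what happens when the domino limit $C\vdots D$ loses rank in $C$ or in $C+D$, and show that substituting the $\ww_h$ for the degenerate red and blue rows restores full rank. This is where the characterization of $\SA$ by forbidden small supports has to be combined most carefully with the coindependence bookkeeping of \cref{lem:bcfw_after_fl} and the explicit descriptions in \cref{lem:ww} and \cref{obs:ww}; the support‑tracking behind part (3) is of the same flavour.
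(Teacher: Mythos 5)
The paper's proof of this lemma rests entirely on \cref{lem:Srem_cases}, a detailed parent-to-child case analysis of which domino variables can degenerate in a domino limit, followed by an explicit triangularity ("weak BCFW form") argument for the rank statements. Your proposal instead reruns the BCFW-inversion induction from \cref{thm:dominoAndBCFWparams}, upgrading it to the closure by noting that coindependence failures force a small-support vector and hence a point of $\SA$, and that the red/blue failure modes are exactly $\Sred_\D,\Sblue_\D$. This is a genuinely different route, and the observation that $\SA$-membership is equivalent to violated coindependence is the right starting point. But as written the argument has real gaps.

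First, the inductive step needs more than coindependence. To invert the BCFW map at a point of $\overline{S}_\D\setminus(\SA\cup\Sred_\D\cup\Sblue_\D)$ you must actually produce the BCFW parameters $\alpha_k,\ldots,\varepsilon_k$ of the rightmost top chord and apply $y_a(\alpha_k/\beta_k)^{-1}$, $y_d(\delta_k/\varepsilon_k)^{-1}$, etc.; these inversions fail if $\beta_k$, $\delta_k$, or $\varepsilon_k$ vanish. That these cannot vanish off $\SA$ is exactly what \cref{lem:Srem_cases} establishes, and you do not supply an independent argument. You must also verify that the recovered $U_L\vdots V_L$, $U_R\vdots V_R$ themselves lie in $\overline{S}_{\D_L}\setminus(\SA\cup\Sred_{\D_L}\cup\Sblue_{\D_L})$ (resp.\ for $R$), else the inductive hypothesis does not apply; $\SA$ is checked, but the descent of $\Sred,\Sblue$-avoidance to the factors is not.

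Second, the full-rank claim in part (2) for the matrix with red/blue rows replaced by $\ww_h$ is asserted from coindependence and $\notin\SA$, but that does not immediately give linear independence of $k+2$ vectors — the paper needs the explicit triangularity argument (cancel inherited dominoes in parent-to-child order and exhibit a triangular minor), which in turn uses the nonvanishing statements of \cref{lem:Srem_cases}.

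Third, and most seriously, part (3) is not proved. After correctly reducing to $U\cap\Span(D_0,D_\star)$ being a line $\lambda D_0+\mu D_\star$ with $\lambda,\mu\neq0$, your sentence "Since $\lambda D_0+\mu D_\star$ lies in the span of the rows of $C$" is unjustified — the rows of a domino limit $C$ lie in $U$ but need not span $U$ if $C$ has lost rank — and the conclusion "one sees the dependence must arise from a non-top red row collapsing into $\Span(D_0,D_\star)$" is precisely the nontrivial content, stated as an observation rather than argued. The paper proves this by adjoining $v=\lambda D_0+\mu D_\star$ to $C$, performing explicit row reduction using the red/yellow rows, and showing the resulting $(k+1)\times n$ matrix is upper triangular with nonzero diagonal (relying again on \cref{lem:Srem_cases}), which contradicts $v\in U$. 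Some concrete argument of this kind is needed; the recursion of \cref{lem:ww} and \cref{obs:ww} gives the right picture but not the proof. You flag this yourself as the expected obstacle, which is accurate — it is exactly the gap.
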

\begin{nn}\label{nn:ww}
In what follows, for a domino limit $C\vdots D$ of $U\vdots V\in\overline{S}_\D\setminus\SA$ we will write $\ww_h=\ww_h(C\vdots D)$ for the vectors $\ww_h$ defined in \cref{lem:ww}.  They are well defined up to scaling by Item (2) above. if $U\vdots V\notin \SA$, and then we denote them by $\ww_h=\ww_h(U\vdots V)$.
\end{nn}
An important consequence of the lemma is that in order to determine whether a point which does not belong to $\SA$ is in $\Sred$ or $\Sblue$, it is enough to take one domino limit and check if the point satisfies the conditions for belonging to $\Sred, \Sblue$ with respect to it.

As intermediate steps we will need the following two claims. The first is a simple criterion for belonging to $\SA,$ and the second is a list of cases in which almost domino pairs fall into $\SA.$
\begin{obs}\label{lem:2SA^D_cases}
Let $\D\in\CD_{n,k}^1$ be a chord diagram, $C\vdots D$ be a $\D-$ domino limit, not necessarily of full rank contained in $U\vdots V$. Assume that the row span of $C$ contains a vector $v=\gamma D_\star +\delta D_0 +u$ where $u\neq0$ is supported on two cyclically consecutive indices. Then $U\vdots V\in\SA_\D.$
\end{obs}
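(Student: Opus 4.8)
The statement to prove is \cref{lem:2SA^D_cases}: given a $\D$-domino limit $C\vdots D$ inside $U\vdots V$ whose row span contains a vector $v=\gamma D_\star+\delta D_0+u$ with $u\neq 0$ supported on two cyclically consecutive indices, we must conclude $U\vdots V\in\SA_\D$. Since $v$ lies in the row span of $C$, it lies in $U$. Recall that $\SA_C$ consists of $(C\vdots D)$ for which $U$ contains a nonzero vector supported on $\{i,i',j,j'\}$ with $i'=i+1,\ j'=j+1$ (four distinct cyclically consecutive-in-pairs indices). So the natural plan is: the vector $v\in U$ is already supported on (the supports of $u$, $D_0$, $D_\star$); the support of $u$ is a pair of cyclically consecutive indices, the support of $D_0$ is $\{a_0,b_0,n\}$ with $b_0=a_0+1$, and the support of $D_\star$ is $\{c_\star,d_\star,n\}$ with $d_\star=c_\star+1$. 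The total support of $v$ is thus contained in $\{a_0,a_0+1\}\cup\{c_\star,c_\star+1\}\cup\{n\}\cup(\text{support of }u)$, which is a priori too large. The key observation is that for an almost $\D$-domino pair the top red row $D_0$ and the yellow row $D_\star$ share a common nonzero entry, namely column $n$ (by construction in \cref{def:domino_entries}(b), $\varepsilon_0$ sits in column $n$ of $D_0$ and $\varepsilon_\star$ in column $n$ of $D_\star$); and in a domino limit these signs persist weakly. So the strategy is to choose the coefficient in $v$ so that the column-$n$ contributions cancel, reducing the support.

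\textbf{Key steps.} First, I would reduce to the case $u$ not supported on $n$: if $u$ is supported on $\{n-1,n\}$ or $\{n,1\}$, one handles it directly since then $v$ itself may already be supported on few enough columns, or one absorbs column $n$ into the $u$ piece. Generically, assume the support of $u$ is a consecutive pair $\{i,i+1\}$ disjoint from $n$. Next, examine the column-$n$ entry of $v$: it equals $\gamma\varepsilon_\star+\delta\varepsilon_0+(u\text{'s }n\text{-entry})=\gamma\varepsilon_\star+\delta\varepsilon_0$. If this is zero we are done quickly: then $v$ is supported on $\{a_0,a_0+1,c_\star,c_\star+1\}\cup\{i,i+1\}$ — but wait, that is still potentially six columns. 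So the reduction must be more careful: I want to show that in fact at least one of $\gamma,\delta$ must vanish, or that the supports overlap. The cleaner route: since $v\in U$ and $D_0,D_\star$ have their $(a_0,b_0)$ and $(c_\star,d_\star)$ dominoes lying in $V$ but \emph{not} in $U$ (in the full-rank case) — actually, in the limit $C\vdots D$ need not be full rank, so I should not assume that. Instead, I think the right move is: consider the vector $v':=v-\frac{\delta}{\varepsilon_0}D_0$ if $\varepsilon_0\neq 0$ (it weakly has a sign, so if nonzero this is legitimate up to the limit), killing the $D_0$ contribution; then $v'=\gamma' D_\star+u$ with $u$ supported on a consecutive pair; then kill the $D_\star$ contribution similarly via its $n$-entry if $\varepsilon_\star\neq0$, leaving a vector supported on $\{c_\star,c_\star+1\}\cup\{i,i+1\}$ plus whatever residue — this gives a vector in $U$ supported on two pairs of cyclically consecutive indices, which is exactly membership in $\SA_C\subseteq\SA_\D$. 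One then separately treats the degenerate subcases where $\varepsilon_0=0$ or $\varepsilon_\star=0$: if $\varepsilon_0=0$ then $D_0$ is already supported on $\{a_0,a_0+1\}$, a consecutive pair, so $v=\gamma D_\star+\delta D_0+u$ is supported on three consecutive pairs; but combining with $u$ being possibly mergeable, or noting that then we can take $\gamma=0$ to land in $\SA_C$ directly with support on $\{a_0,a_0+1\}\cup\{i,i+1\}$ once the $D_\star$ piece is removed, and so on. A small case analysis on which of $\gamma,\delta$ and which of $\varepsilon_0,\varepsilon_\star$ vanish finishes it.

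\textbf{Main obstacle.} The subtle point is bookkeeping the supports so that the final vector in $U$ really is supported on at most two cyclically-consecutive pairs, since $D_0$ and $D_\star$ a priori carry three-element supports each; the clean resolution is that their third columns are both $n$, so eliminating column $n$ simultaneously handles both. I expect the delicate part to be the edge cases where $u$'s support touches $\{a_0,b_0\}$, $\{c_\star,d_\star\}$, or $n$ (so the pairs coincide or merge), and the cases where $\varepsilon_0$ or $\varepsilon_\star$ vanishes in the limit (forcing $D_0$ or $D_\star$ to be supported on just a consecutive pair, which actually \emph{helps}). In all these subcases one can still exhibit a nonzero vector of $U$ supported on some $\{i,i+1,j,j+1\}$, possibly with $i+1=j$ or other collisions — and $\SA_C$ is defined only requiring the support be contained in such a set with $\{i,i',j,j'\}$ four distinct elements, so one must double-check that after the cancellations the support is genuinely small enough and, if it collapses to a single consecutive pair, that such a pair still embeds into a valid $\{i,i',j,j'\}$ (choosing $j$ far away with $C^j=0$, using that the domino rows have restricted support). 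I would verify this last point using the explicit supports of domino matrices (\cref{obs:poly_ineqs}, \cref{def:domino_entries}) together with \cref{obs:ind_coind}.
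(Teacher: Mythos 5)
Your proposal misses the key idea completely and embarks on an approach that cannot work as written. The crucial point is that the conclusion should be read as $U\vdots V\in\SA_D$ (the subscript $\D$ in the statement is a typo; the paper's own proof exhibits a witness for $\SA_D$), and membership in $\SA_D$ requires only a nonzero vector of $U+V$ supported on a cyclically consecutive \emph{pair}. You were instead trying to land in $\SA_C$, which requires a nonzero vector of $U$ supported on four indices $\{i,i+1,j,j+1\}$, and that is both the wrong target and considerably harder.

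Once the target is $\SA_D$, the proof is a one-liner: since $v$ lies in the row span of $C$ it lies in $U\subseteq U+V$, and the rows $D_\star, D_0$ of $D$ lie in $U+V$ as well; therefore $u = v-\gamma D_\star-\delta D_0\in U+V$. But $u$ is nonzero and supported on two cyclically consecutive indices, so $u$ itself witnesses $U\vdots V\in\SA_D\subseteq\SA$. No cancellation of the $n$-column, no sign bookkeeping, and no case analysis is needed.

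There is also a structural flaw in your plan that would make it fail even if $\SA_C$ were the right target: you propose to form $v':=v-\frac{\delta}{\varepsilon_0}D_0$ and similar, and then argue about $v'\in U$. But $D_0, D_\star$ need not lie in $U$ (generically they lie in $U+V$ but not $U$), so these linear combinations leave $U$, and the whole strategy of producing a small-support vector \emph{in $U$} by subtracting multiples of $D_0,D_\star$ from $v\in U$ is invalid. The hypothesis $v=\gamma D_\star+\delta D_0+u$ with $v\in U$ is precisely what lets you \emph{extract} $u\in U+V$; it does not let you keep the residue in $U$.
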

\begin{proof}
$u=v-\gamma D_\star -\delta D_0\in \Span{C+D}$ is non zero, and is thus a witness for belonging to $\SA_D.$
\end{proof}
\begin{lemma}\label{lem:Srem_cases}
Let $C\vdots D$ be an almost $\D$ domino pair contained in $U\vdots V\in\overline{S}_\D\setminus\SA.$ 
\begin{itemize}
    \item Assume that $U\vdots V\in \Srem_\D.$ Then the following hold:
    \begin{enumerate}
        \item For every chord neither the starting domino $(\alpha_h,\beta_h)$,nor the ending domino $(\gamma_h,\delta_h),$ nor $\varepsilon_h$ vanish. In particular every $\varepsilon_h,\gamma_h$ are defined and finite.
        \item $\gamma_h\neq 0$ for a top chord, which may be the yellow chord, ending at $n-1.$
        \item $\gamma_h\neq 0$ for $\D_h$ being the red sticky child of $\D_0,$ if $\D_0$ starts at $(1,2)$ or is itself a sticky child.
        \item $\delta_h\neq 0$ for a short black or purple $C_h$ which is a sticky child or starts at $(1,2).$
        \item $\alpha_h\neq 0$ for a chord with a sticky child.
        \item $\beta_h\neq 0$ for a chord with a sticky parent or which starts at $(1,2).$
        \item $\alpha_h\neq 0$ for a short black chord which ends at $(n-2,n-1),$ or a very short chord.
        \item $\eta_i\neq0$ if $\D_i$ is a sticky same end child of $\D_j$ for which $\eta_i$ is defined, and $\D_j$ is either a sticky child or starts at $(1,2).$
    \end{enumerate}
    Moreover, the almost domino pair is unique up to row scalings.
    The same conclusions and uniqueness hold for every $U\vdots V\in\overline{S}_\D\setminus\SA$ for rows corresponding to $\D_h\in\woutside(\D_0).$
    \item Assume $C\vdots D$ is a domino limit and no red row of $C$ lies in $\Span(D_0,D_\star).$ Then all the above conclusions hold, except that possibly $\varepsilon_h$ may vanish, for $\D_h\in\Blue_\D,$ if $\D_h$ is not very short ending at $(n-2,n-1).$
    If $\varepsilon_h$ vanishes for a blue chord, then for this chord $\delta_h\neq 0.$
    \item If $C\vdots D$ is a domino limit, and no further assumptions are made, then the conclusions of the previous item hold with the following exceptions.
    \begin{itemize}
    \item[--] Both $(\alpha_h,\beta_h)$ and $\varepsilon_h$ may vanish (or $(\alpha_h,\beta_h)=(0,0)$ and $\varepsilon_h$ ill defined), but not only one of them, for $\D_h\in\Red_\D\setminus\{0\}$. 
    \item[--] $\alpha_h$ may vanish if $C_h\in\Red_\D\setminus\{0\}$ has a sticky child or is very short.
    \item[--] $\beta_h$ may vanish if $C_h\in \in\Red_\D\setminus\{0\}$ is a sticky child.
        \item[--] $\eta_h$ may vanish for $C_h\in\Red_\D\setminus\{0\}$, if $C_h$ is a sticky-same end child of another red chord.
    \end{itemize}
    In all these cases it holds that $C_h\in\Span(D_0,D_\star),$ hence $U\vdots V\in\Sred_\D.$
    Moreover, if $\ww_h,~h\in\Red_\D\cup\Blue_\D\cup\{\star\}$ are as in \cref{lem:ww} then
     $\pr_{a_hb_h}\ww_h\neq 0$, and in addition $\pr_{a_h}\ww_h\neq 0$ if $\D_h$ has a sticky child, and $\pr_{b_h}\ww_h\neq0$ if $\D_h$ is a sticky child. 
\end{itemize}
In the last two items the black, purple and yellow rows, and the vectors $\ww_h$ are unique up to scaling.
\end{lemma}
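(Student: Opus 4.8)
\textbf{Proof plan for Lemma \ref{lem:Srem_cases}.}

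The plan is to reduce everything to a single induction on the generation sequence of $\D$, tracking how the domino entries of a given row are constrained by the ancestors of that row. The key structural input is that a domino limit $C\vdots D$ is an almost $\D$-domino pair (Observation \ref{obs:domino_limits equal almost domino}), so each row $C_h$ (or $D_0,D_\star$) has the prescribed support pattern, and the ``contribution'' of a parent/yellow domino to a child/blue row is precisely the quantity $\varepsilon_h$ (resp.\ $\gamma_h$) times that domino. The first step is to set up the dichotomy: if a starting domino $(\alpha_h,\beta_h)$ vanishes in the limit while the child still carries a nonzero inherited domino, then $\varepsilon_h$ (or $\eta_h$, $\gamma_h$) must be infinite or ill-defined — so under the hypothesis ``all domino variables finite or ill-defined'' of the weak domino form (and in the $\Srem_\D$ case, under the strict sign inequalities of \cref{obs:poly_ineqs}) this forces the inherited domino to vanish too, which we then propagate. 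This is exactly the mechanism described in \cref{rmk:weak_vs_almost}.

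Next I would handle the individual non-vanishing claims. For the $\Srem_\D$ statements (items 1--8 of the first bullet), the idea is: if the relevant entry or $2\times2$ minor were $0$, then the point would acquire a nonzero vector in $U$ or in $U+V$ supported on a set of the form $\{i,i',j,j'\}$ or $\{i,i'\}$, contradicting $U\vdots V\notin\SA$ — this is where \cref{obs:SA_strata}, \cref{def:SA}, and especially \cref{lem:2SA^D_cases} do the work. Concretely: a vanishing starting domino $(\alpha_h,\beta_h)=0$ together with the sticky-child structure forces $C_h$ (or the appropriate $\ww$-combination) to be supported on few columns; the ``sticky child'' and ``starts at $(1,2)$'' and ``ends at $(n-2,n-1)$'' hypotheses are precisely the combinatorial conditions under which the remaining support is small enough ($\le 4$ columns for $C$, $\le 2$ for $C+D$) to land in $\SA$. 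The coindependence facts from \cref{lem:bcfw_after_fl} (items \cref{it:bcfw_after_fl_bcdn}, \cref{it:bcfw_after_fl_chords}, \cref{it:bcfw_after_fl_yellow}, \cref{it:bcfw_after_fl_reds}) are used to rule out the ``non-$\SA$'' escape routes, just as in the proof of \cref{obs:cell_not_red_blue}. The uniqueness-up-to-scaling assertions follow from the same row-by-row recovery argument as in the proof of \cref{thm:dominoAndBCFWparams}: the black/purple/yellow rows are recovered as intersections of $U$ or $U+V$ with explicit coordinate subspaces, hence determined, and $\ww_h$ is recovered via \cref{lem:ww}.

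For the second and third bullets (general domino limits, possibly with red rows falling into $\Span(D_0,D_\star)$), the plan is to redo the same analysis but now allowing the degeneracies that \cref{lem:ww} and \cref{obs:ww} permit. The point is that when $\pr_{a_hb_h}\ww_h\neq0$ — which \cref{obs:ww} guarantees in $S_\D$ and which persists in the limit by continuity unless the whole relevant combination degenerates — the vector $\ww_h$ still has the right support and the same $\SA$-avoidance argument applies with $C_h$ replaced by $\ww_h$; the only entries allowed to vanish are those whose vanishing is absorbed into the passage $C_h\rightsquigarrow\ww_h$ (i.e.\ the red/blue rows when $C_h\in\Span(D_0,D_\star)$, giving $U\vdots V\in\Sred_\D$ by definition), and for blue chords the $\varepsilon_h$ that can be killed while keeping $\delta_h\neq0$ by \cref{obs:iterated_etas}. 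The main obstacle I anticipate is the bookkeeping in the third bullet: cleanly separating ``the starting domino and $\varepsilon_h$ vanish together'' from ``only one vanishes'' for a non-top red chord requires chasing \eqref{eq:ww_C_lin_comb} and the recursion $\Gamma_h=-\varepsilon_h\Gamma_{\p(h)}$ of \cref{obs:ww} carefully, and verifying that in every leftover case one genuinely lands in $\Span(D_0,D_\star)$ rather than in $\SA$ — this is the one place where the combinatorial case list (sticky child, very short, sticky-same-end child of a red chord) must be matched exactly against which coordinates of $\ww_h$ can degenerate, and where I expect most of the work to be.
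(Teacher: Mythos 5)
Your overall strategy — parent-to-child induction, identifying small-support vectors to land in $\SA$, and reusing the $\ww_h$ machinery when red/blue rows degenerate — is the same as the paper's. But the proposal underestimates, and in places misidentifies, the two genuinely hard steps, and these are not merely "bookkeeping."

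First, for a non-top blue chord $\D_h$ with $\alpha_h=\beta_h=0$, the paper does \emph{not} argue by raw "continuity." It uses the explicit formula in \cref{obs:ww}: $\Gamma_h$ is a sum whose terms all have the same sign, so $|\Gamma_h|\geq|\gamma_h|$. Combined with $\alpha_h^i/\gamma_h^i,\beta_h^i/\gamma_h^i\to 0$ along the approximating sequence and $\varepsilon_\star\neq 0$ (forced by $U\vdots V\notin\SA$), this gives $\pr_{a_hb_h}\ww_h=0\Rightarrow U\vdots V\in\SA_D$. That same-sign-summands estimate is the crucial quantitative input that your plan does not mention; a generic continuity argument does not yield it, and without it the first bullet's claim (1) already fails for blue chords.

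Second, the claims $\pr_{a_h}\ww_h\neq 0$ (sticky child) and $\pr_{b_h}\ww_h\neq 0$ (sticky parent) for $h\in\Red_\D\cup\Blue_\D\setminus\{0\}$ are proved in the paper by a genuinely new limit-and-sign argument: one forms the $2$-plane $W^i_{h,l}=\Span(\ww^i_h,\ww^i_l)$, records that certain $2\times 2$ Plücker coordinates $\lr{W^i_{h,l}}_{a_hb_h},\lr{W^i_{h,l}}_{a_hb_l},\lr{W^i_{h,l}}_{b_hb_l}$ and the $(c_\star,d_\star)$-coordinates all have coherent definite signs on $S_\D$ (via \cref{obs:ww} and the domino sign rules), and then in the limit extracts a vector $v\in W_{h,l}$ that forces a sign contradiction unless $x''$ or $y''$ vanishes, landing back in $\SA$. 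This is not captured by "the same $\SA$-avoidance argument applies with $C_h$ replaced by $\ww_h$"; the obstruction is precisely that $\ww_h$ and $\ww_l$ might become proportional in the limit, and ruling that out requires the $W_{h,l}$ sign bookkeeping.

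Finally, the claim that $(\delta_h,\varepsilon_h)\neq(0,0)$ for blue chords is not a consequence of \cref{obs:iterated_etas}; it is an immediate support computation ($C_h$ would be supported on $\{a_h,b_h,c_\star,d_\star\}$, hence $U\vdots V\in\SA_C$). With these three corrections — the $|\Gamma_h|\geq|\gamma_h|$ inequality, the $W_{h,l}$ limit-and-sign argument, and the correct reason for $(\delta_h,\varepsilon_h)\neq(0,0)$ — your plan aligns with the paper's proof.
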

\begin{rmk}This lemma and \cref{lem:uniqueness_or_SA} generalize corresponding results of \cite{even2021amplituhedron} for the tree level. The analysis of the tree level is much simpler as $\Sred=\Sblue=\emptyset$ in this case, and these spaces are responsible for most of the complication. 
\end{rmk}
\begin{proof}
We will go over the different types of vanishings, and see what is implied from them. In all the cases we will find a vector whose support guarantees that $U\vdots V\in\SA.$ It will sometimes be straightforward, and sometimes involve some work. We will study the chords in parent-to-child order, and induct, in this order, on the various uniqueness claims.
\begin{itemize}
\item\textbf{The (non) vanishing of the starting, ending, or inherited domino for a chord $\D_h$, for $\ww_h$ and uniqueness:} If the chord is black or purple the support of the resulting row is $(a_{\p(0)},b_{\p(0)},c_h,d_h),$ $(a_{\p(h)},b_{\p(h)},a_h,b_h)$ or $(a_h,b_h,c_h,d_h)$ respectively, where we set $a_{\p(h)}=b_{\p(h)}=n$ if $\p(h)=\emptyset.$ In case $\gamma_h=\delta_h=0$ we obtain the same support also for $\D_h\in\Red_\D\cup\Blue_\D\setminus\{0\}.$ In these cases $U\vdots V\in \SA_C.$ Similarly, for the top red or yellow chord we get supports of the form $(a_0,b_0)$, $(c_\star,d_\star)$ or $(a_{\p(h)},b_{\p(h)})$ which imply $U\vdots V\in\SA_D.$ 

Consider $\D_h\in\Red_\D\setminus\{0\}.$ Assume $\alpha_h=\beta_h=0$ but $\varepsilon_h$ is defined and non zero, or $(\alpha_h,\beta_h)\neq(0,0)$ but $\varepsilon_h=0$ or ill defined. We may apply \cref{lem:2SA^D_cases} 
to $C_h,$ with \[u = \begin{cases}\alpha_{\p(h)}\ee_{a_{\p(h)}}+\beta_{\p(h)}\ee_{b_{\p(h)}}&~\text{if $\varepsilon_h\neq 0$ and $\D_h$ is not a top chord:}\\
\alpha_{h}\ee_{a_{h}}+\beta_{h}\ee_{b_{h}}&~\text{if $\varepsilon_h= 0$}\end{cases}\]
$u\neq 0$ and hence $(U\vdots V)\in \SA_D.$ If $\alpha_h=\beta_h=\varepsilon_h=0$ then $C_h\in\Span(D_0,D_\star).$

Consider $\D_h\in \Blue_\D.$ If $\alpha_h=\beta_h=\varepsilon_h=0$ then $C_h$ has the same support as $D_\star$ which shows $U\vdots V\in\SA_C.$

If $\varepsilon_h=0$ then by definition $U\vdots V\in\Sblue_\D.$ If $\D_h$ is in addition very short and ends at $(n-2,n-1)=(c_\star,d_\star)$ then the support of $C_h$ becomes $(c_\star,d_\star,a_{\p(0)},b_{\p(0)}),$ which shows that in this case $U\vdots V\in\SA_C.$ 

Before we treat the case $\alpha_h=\beta_h=0$ for a non top blue chord, we turn to uniqueness.
Assuming we know, by induction, the uniqueness up to scaling of the parent of $\D_h$, which is trivial if $\p(h)=\emptyset.$ If there is another weak domino form whose $h$th row $C'_h$ ($D'_h$) is not proportional to $C_h$ ($D_h$) then a linear combination of them that cancels the domino inherited from the parent would witness that $U\vdots V\in\SA,$ if $\D_h$ is black, purple, top red or yellow. If $\D_h\in \Red_\D\setminus\{0\}$ it shows, as above, $U\vdots V\in\SA\cup\Sred_\D.$ If $\D_h\in\Blue_\D$ it shows $U\vdots V\in\Sblue_\D,$ 
This proves the uniqueness in $\Srem_\D,$ and the uniqueness for $\D_h\in\woutside(\D_0).$ It shows, in particular, that all domino limits are the same in $\Srem_\D,$ and their restrictions to chords in $\woutside(\D_0)$ are the same on $\overline{S}_\D\setminus\SA.$

Returning to $\alpha_h=\beta_h=0$ for $\D_h\in\Blue_\D.$
We will show that the starting domino of $\D_h$ is never zero on a $\D$-domino limit for $U\vdots V\notin\SA:$
If also $\gamma_h=0$ the $C_h$ is a linear combination of its inherited domino and $D_\star.$ Using \cref{lem:2SA^D_cases} this implies $U\vdots V\in\SA.$
Assume now that $\gamma_h\neq 0.$  
Now, if $C^i\vdots D^i$ is a sequence of properly scaled domino pairs for $U^i\vdots V^i$ which tend to $U\vdots V,$ such that $\lim_{i\to\infty}C^i\vdots D^i=C\vdots D$ entry-wise as matrices, consider the vectors $\ww^i_h$ defined in \cref{nn:ww}, and a limit $\ww_h$ of a properly scaled subsequence. Note that by our assumption \[\lim_{i\to\infty}\alpha^i_h/\gamma_h^i=\lim_{i\to\infty}\beta^i_h/\gamma_h^i=0.\]  $\varepsilon_\star\neq0$ since $U\vdots V\notin\SA.$ Using \cref{obs:ww} and \cref{lem:ww} we see that $|\Gamma_h|\geq|\gamma_h|,$ hence\[\pr_{a_hb_h}\ww_h=0\Rightarrow U\vdots V\in\SA_D.\]
Note that the above shows that in $\Srem_\D$ all $\varepsilon_h$ are defined and finite, since otherwise the starting domino of $\D_{\p(h)}$ vanishes. On $\overline{S}_\D\setminus\SA$ all $\varepsilon_h\neq 0$, except possibly for children of red chords with $C_{\p(h)}\in\Span(D_0,D_\star).$ Blue $\gamma_h$ are well defined and finite, since otherwise the domino $u_\star$ of the yellow chord vanishes.

The above argument shows that also the projection of $\ww_h$ on $(a_h,b_h)$ is non zero. They must be unique, since by the same inductive argument, as above, if they were not unique there would have been a linear combination of them supported on a consecutive pair of indices, hence $U\vdots V\in\SA_D.$ Moreover, by \cref{lem:ww} we can deduce the uniqueness of all black and purple rows, even those that descend from the top red chord, by the same inductive argument we used above for uniqueness, but using $\ww_h$ to keep track of the starting domino of $C_h,$ in case the latter vanishes. 
\item\textbf{$\gamma_h\neq 0$ for a top chord, including the yellow chord, ending at $n-1.$} Otherwise, if the chord is black then it is supported on $(a_h,b_h,n-1,n).$ If it is yellow then on $(n-1,n).$ In both cases $U\vdots V\in\SA.$
\item\textbf{If $\D_h$ is the red sticky child of $\D_0,$ and $\D_0$ starts at $(1,2)$ or is itself sticky then $\gamma_h\neq 0$.}
Otherwise the support of $C_h$ is $(1,2,3,n)$ or $(a_{h}-2,a_h-1,a_h,b_h).$ Both show $U\vdots V\in\SA.$
\item\textbf{$\delta_h\neq 0$ for a short black or purple $C_h$ which is a sticky child or starts at $(1,2).$}
Otherwise $C_h$ is supported on four cyclically consecutive indices, hence $U\vdots V\in\SA_C$.
\item\textbf{$\alpha_h$ for a chord with a sticky child. $\pr_{a_h}\ww_h\neq 0.$}
Assume $\alpha_h=0.$ If the child $\D_l$ is black, purple, top red or yellow and $\varepsilon_l$ is not infinite then the support of $\D_l$ is either its two dominoes, in the first two cases, or the single domino, in the last two. In both cases $U\vdots V\in\SA.$ $\varepsilon_l$ may be infinite only if $C_h\in\Span(D_0,D_\star),$ which implies in particular that $U\vdots V\in\Sred_\D.$

We now rule out also the cases when the child is blue or red, and when the parent is red. The case that $\D_h\in\Red_\D\setminus\{0\}$ and also $\beta_0=0$ was analyzed above. Otherwise, by \cref{lem:ww} it is enough to show that $\pr_{a_h}\ww_h\neq0$ for $h\in\Red_\D\cup\Blue_\D\setminus\{0\}.$ Assume this is not the case. We know that the projection of $\ww_h$ on $(a_h,b_h)$ is non zero, and similarly its projection on the $(c_\star,d_\star)$ entries is non zero or else $U\vdots V\in\SA_D.$
If $\pr_{a_h}\ww_h=0$, by \cref{lem:ww}, $\ww_h$ has the form $x\ee_{b_h}+z\pr_{c_\star d_\star}D_\star.$ If the sticky child $\D_l$ of $\D_h$ is purple or black, the proof is as above. Otherwise $\D_l$ is blue or red. In these cases, consider the vector
\[\ww_l = x'\ee_{b_h}+y'\ee_{b_l}+z'\pr_{c_\star d_\star}D_\star.\]Again if either $(x',y')=0$ or $z'=0$ then $U\vdots V\in\SA_D.$ If $\ww_h,\ww_l$ non proportional 
\[z\ww_l-z'\ww_h\in\Span(\ee_{b_h},\ee_{b_l})\Rightarrow U\vdots V\in\SA.\]
We are left with ruling out the possibility that $\ww_h,\ww_l$ are proportional. Let $U^i\vdots V^i,~i=1,2,\ldots$ be a sequence of $1$-loop vector spaces tending to $U\vdots V.$ Then the uniquely defined $\ww_h^i,\ww_l^i$ tend to $\ww_h,\ww_l$ respectively.
Let $W_{h,l}^i$ be the two plane spanned by $\ww_l,\ww_i.$ Its limit $W_{h,l}\subset U+V$ contains $\ww_h.$ Note that from the domino sign rules and \cref{obs:ww} it holds that
\begin{align*}&\lr{W_{h,l}^i}_{a_hb_h}>0,~\lr{W_{h,l}^i}_{a_hb_l}>0~\lr{W_{h,l}^i}_{b_hb_{l}}>0,~\text{and}\\&\sgn(\lr{W_{h,l}^i}_{a_hc_\star})=\sgn(\lr{W_{h,l}^i}_{a_hd_\star})=\sgn(\lr{W_{h,l}^i}_{b_lc_\star})=\sgn(\lr{W_{h,l}^i}_{b_ld_\star}).
\end{align*}
We may find a vector $v\in W_{h,l}$ completing $\ww_h$ to a basis, while keeping the above inequalities weakly satisfied. $v\in\Span(\ee_{a_h},\ee_{b_h},\ee_{b_l},\pr_{c_\star d_\star} D_\star).$ If its support, or $\ww_h$'s support, are contained in a consecutive pair of entries, $U\vdots V\in\SA.$ Otherwise, by subtracting an appropriate multiple of $\ww_h$ we may assume $v$'s $b_h$'s entry is $0,$ that $v=x''\ee_{a_h}+b''\ee_{b_l}+z''\pr_{c_\star d_\star}D_\star.$ If $x'',y''\neq 0$ then the limiting inequalities of the upper row above imply $\sgn(x'')=-\sgn(y''),$ but the bottom row's limiting inequalities imply $\sgn(x'')=-\sgn(y'').$ Thus, at least one of $x'',y''$ is $0.$ This implies that a nonzero linear combination of $\ww_h,v$ is supported on $c_\star,d_\star,$ hence $U\vdots V\in\SA.$
\item\textbf{$\beta_h\neq 0$ for a chord with a sticky parent or which starts at $(1,2).$ $\pr_{b_h}\ww_h\neq 0.$}
If the chord is black, purple, top red or yellow the support of the chord shows immediately that $\beta_h=0$ implies $U\vdots V\in\SA.$ If it is non top red, or blue, the proof is similar to the previous item, using combinations of $\ww_h,\ww_{\p(h)}.$ When $\D_h\in\Red_\D\setminus\{0\}$ if $C_h\notin\Span(D_0,D_\star)$ then we may apply \cref{lem:2SA^D_cases} as in the first item to show $U\vdots V\in\SA.$
\item\textbf{$\alpha_h\neq 0$ for a short black chord which ends at $(n-2,n-1),$ or a very short chord.
} 
In case $\D_h$ is black and $\alpha_h=0$, if it is a top chord then the corresponding row has support $(n-3,n-2,n-1,n).$ Otherwise, by performing row operations in a child-to-parent order, with the ancestors of $\D_h$ to cancel the inherited dominoes, as in the passage from the domino form to the BCFW form, which are possible since all $\varepsilon_l$ variables of its ancestors are finite and well defined, we obtain a row of such support.

If the chord is very short and blue then by the above we may assume that $\beta_h\neq 0.$ Thus, the corresponding $\ww_h$ is supported on $(c_\star,d_\star)$ and $U\vdots V\in\SA$. If $\D_h$ is very short red then if also $\beta_h=0$ then either $C_h\in\Span(D_0,D_\star)$ or, by the above, $U\vdots V\in \SA.$ If $\beta_h\neq 0$ then again $\ww_h$ has support $(c_\star,d_\star)$ showing $U\vdots V\in\SA.$
\item\textbf{$\eta_i\neq0$ if $\D_i$ is a sticky same end child of $\D_j$ for which $\eta_i$ is defined, and $\D_j$ is either a sticky child or starts at $(1,2).$}
Otherwise a linear combination of $C_i,C_j$ is supported on $(1,2,3,n)$ or $(a_{\p(\p(i))},a_{\p(i)},a_i,b_i).$
If this vector is non zero then it witnesses $U\vdots V\in\SA.$ If it is zero then $\beta_i$ must be $0,$ which implies, from previous items, that if $\D_i$ is black or purple or blue, $U\vdots V\in\SA.$ If $\D_i$ is red then it also implies $U\vdots V\in\SA$ unless $C_i\in\Span(D_0,D_\star).$ 
\item\textbf{$\varepsilon_h\neq 0$ for a very short chord ending at $(n-2,n-1).$}
If $\p(0)=\emptyset$ the support of $C_h$ is $(n-3,n-2,n-1,n)$ showing that $U\vdots V\in \SA_C.$ Otherwise we can add to $C_h$ a combination of $C_{\p(0)},C_{\p(\p(0))},\ldots$ to obtain such a support.
\item\textbf{$(\delta_h,\varepsilon_h)\neq(0,0)$ for blue chords. }Otherwise we obtain a vector of support $(a_h,b_h,c_\star,d_\star)$ which shows $U\vdots V\in\SA.$
\end{itemize}
\end{proof}
\begin{proof}
[Proof of \cref{lem:uniqueness_or_SA}]
Let $C\vdots D$ be an 
$\D$-domino limit for $U\vdots V.$ There is always at least one such limit by \cref{obs:domino_limits}. It follows from \cref{lem:Srem_cases} that if it is not in a weak $\D-$domino form then $U\vdots V\in \SA\cup\Sred\cup\Sblue.$ \cref{lem:Srem_cases} also implies the uniqueness claims of this lemma.

Turning to ranks, we now show that if $U\vdots V\notin \SA\cup\Sred\cup\Sblue$ has a weak domino form $C\vdots D$ then $C+D$ has full rank, hence also $C$, meaning that this is a weak domino representative, which by the above will also be unique. By the assumption on $U\vdots V$ and \cref{lem:Srem_cases} $\varepsilon_h$ is defined and finite for every $h.$ Consider the \emph{weak BCFW form} of $C+D$, which is the matrix $M$ obtained from $C+D$ by 
subtracting, in a parent-to-child order, $\varepsilon_h$ times the row labeled $\p(h)$ from the row labeled $h.$ Since $U\vdots V\in\Srem,$ $\varepsilon_h$ is defined and finite, hence also these operations are defined. Observe that the row $M_h$ corresponding to $\D_h$ has zeroes in all indices smaller than $a_h$ or $c_\star$ if $h=\star$. 

Let $f_h,~h\in[k+1]\cup\{\star\}$ be the index of the first non zero entry of $M_h.$ Then by \cref{lem:Srem_cases} and the assumption $U\vdots V\in\Srem$ the submatrix $(M)_{h\in[k+1]}^{\{f_h|~h\in[k+1]\}},$ if its rows are reordered in an increasing order of $a_h$, is upper triangular with non zero diagonal entries. Thus, the rank of $M$ is at least $k+1.$ 

To finish we need to show that $M_\star$ is not in the linear span of the rows of $M_{[k+1]}.$ By the triangularity above, if $M_\star$ were a linear combination of these rows, then only rows with $f_i\geq f_\star$ could have appeared with non zero coefficients. Now, \[\{\D_i|~f_i\geq f_\star\}\subseteq\outside(\D_\star).\]
In particular $\D_0$ is not in this set.
The passage from $C_i$ to $M_i$ involves only the row $C_i$ and its ancestors. The same holds for $D_\star,M_\star.$ Thus, 
\[\Span(M_i|\D_i\in\outside(\D_\star))\subseteq\Span(C_i|\D_i\in\outside(\D_\star)).\]
Hence, if $M_\star\in\Span(M_i|\D_i\in\outside(\D_\star))$ then 
\[D_\star\in\Span(C_i|\D_i\in\outside(\D_\star))\subseteq\Span(C)\subseteq U.\]
But this implies a vector of support $(c_\star,d_\star,a_{\p(\star)},b_{\p(\star)})$ in $U,$ hence $U\vdots V\in\SA.$ A contradiction.

The proof that the matrix $M'$ made of the rows of $C+D$ for $h\notin\Red\cup\Blue,$ and $\{\ww_h\}_{h\in\Red\cup\Blue}$ has full rank is identical to the above, only that we use $\ww_h$ to cancel the inherited dominoes of children of $C_h,~h\in\Red_\D\cup\Blue_\D,D_0$.

For \cref{rmk:Sred_without_red_chords}, assume towards contradiction that $U\vdots V\notin\SA,$ there is no non top red row in the well defined $\Span(D_0,D_\star)$ but $U$ contains a linear combination $v$ of $D_0,D_\star.$ If this combination is $D_0$ or $D_\star$, or if one of the dominoes or inherited dominos of $D_0,D_\star$ vanishes then $U\vdots V\in\SA.$ Otherwise, let  $C\vdots D$ be a domino limit for $U\vdots V.$ By \cref{lem:Srem_cases}, in this setting, the starting domino of every chord is non zero, hence every $\varepsilon_h$ is defined and finite. Let $C'$ be the matrix obtained from $C$ by adding the row $v$. We perform row operations as above to cancel the inherited domino of each chord, where for a child of the top red chord we subtract a multiple of $v,$ for other red rows we cancel both the inherited domino using the parent, and the starting domino of $D_0$ using $v$, and for $v$ we subtract a multiple of $C_{\p(0)}$, if exists. The above argument which showed that $C+D_0$ has rank $k+1$, shows the same for the linear span of $v$ and the rows of $C$. Thus, $v$ is linearly independent of $C,$ which is a contradiction.
\end{proof}
\subsection{Matching boundaries}\label{subsec:mathc_bdries}
We now define several boundary strata that will be later understood to be precisely the codimension $1$ boundaries of $S_\D$ which map to codimension $1$ boundaries in the image of $S_\D$ under the amplituhedron map, but avoid the external boundary of the amplituhedron. We will also show that each such boundary is also a boundary of another BCFW cell.
\begin{definition}\label{def:shift_ops}
Let $\D\in\CD_{n,k}^\ell,$ and $\D_i=(a_i,b_i,c_i,d_i)$ be a chord. We define certain other diagrams by \emph{shifting} the chord $\D_i$ in various ways. In all cases chords we have not described their change - do not change, and the labels of chords may change according to the labeling convention of the shifted diagram.
\\\textbf{$\shift_{\alpha_i}\D$ - the shift  of $\D_i$ at $\alpha_i$ 
$i\neq\star$:} 
\\Assume $\D_i$ has no sticky child, is not very short, and if it is short then $d_i\neq n-1.$ Then 
\begin{itemize}
\item if $c_i>b_{i}+1$ $\shift_{\alpha_i}\D$ is the same as $\D$ only $\D_i$ is replaced by $(b_i,b_i+1,c_i,d_i).$ 
\item If $\D_i$ is short and purple, it becomes very short and blue; if it is short and blue it becomes very short and red; if it is short and red it becomes very short and red. 
\item If it is short and black then every ancestor of $\D_i$ which ends at $(c_i,d_i)$ is replaced by a chord with the same start and color, but which ends at $(b_i,c_i).$ If no chord starts at $(c_i,d_i)$ then $\D_i$ is replaced by $(b_i,c_i,d_i,d_i+1).$ Otherwise, if there exists such a chord, which ends at $(e_i,f_i)$ then $\D_i$ is replaced by $(b_i,c_i,e_i,f_i).$
\end{itemize}
\textbf{$\shift_{\beta_i}\D$ 
- the shift of $\D_i$ at $\beta_i,~i\neq\star$:} 
\\Assume $\D_i$ has no sticky parent, and does not start at $(1,2)$ or at the ending domino of another chord, then $\shift_{\beta_i}\D$ is the chord diagram having the same chords as $\D,$ only that $\D_i$ is replaced by $(a_i-1,a_i,c_i,d_i).$
\\\textbf{$\shift_{\gamma_i}\D$ - 
the shift of $\D_i$ at $\gamma_i,~i\neq 0$:} 
\begin{itemize}
\item\underline{$\gamma_i\in\Var_\D,~\D_i\notin\Red_\D$ and $d_i\neq n-1$:}\\If $i\neq\star$, the shifted diagram $\shift_{\gamma_i}\D$ is the same as $\D,$ only that the ending domino of $\D_i$ is shifted to $(d_i,d_i+1)$ if $\D_i$ is black or yellow. Otherwise it is unchanged, but if 
$\D_i$ is (the top) blue the new chord is colored red. If $i=\star$ then the yellow, red and blue chords are shifted to end at $(d_\star,d_\star+1),$ keeping their colors. The purple chords are declared black.
\item\underline{$\theta_i\in\Var_\D:$} Let $\D_h$ be the sibling of $\D_i$ which starts where $\D_i$ ends.
If $i\neq\star$ then $\D_i$ is replaced by $(a_i,b_i,c_h,d_h)$ and gets the color of $\D_h,$ unless $\D_h$ is the top red chord, and then $\D_i$ remains black. 
If $i=\star$ then $\D_\star$ is replaced by $(c_h,d_h)$, all red and blue chords of $\D$ change their end to $(c_h,d_h)$ without changing their start, and keep their colors. $D_h$ is colored blue.
\item\underline{$\eta_i\in\Var_\D,~\D_i$ is the top purple chord:}\\The shifted diagram $\shift_{\gamma_i}\D$ is the same as $\D,$ only that $\D_i$ becomes blue.
\item\underline{$\D_i$ is the second topmost red chord or $\eta_i\in\Var_\D,$ for $\D_i$ which is not the top purple chord:}
\\Write $\D_h$ for the same end parent of $\D_i.$ In order for this shift to be defined we require that if $\D_i$ is a sticky child, then $\D_h$ is not a sticky child and does not start at $(1,2).$ 
\\In the shifted diagram $\D_h$ is replaced by the chord $(a_h,b_h,a_i,b_i)$ unless $b_h=a_i,$ which means $\D_i$ is a sticky child. In this case if $a_h\neq c_\star,$ $\D_h$ is replaced by a black chord $(a_h-1,a_h,a_i,b_i).$ 
In both cases, if $h=0$ then $\D_i$ is declared the top red chord.
If $a_h=c_\star$ then $\D_h$ is replaced by a very short blue $(a_h,b_h,b_h+1).$

If there is a chain of same end chords ending at $(a_h,b_h),$ and we include the case $a_h=c_\star,$ in which this chain contains the yellow chord, then all non purple chords ending at $(a_h,b_h)$ are replaced by chords with the same starting point, but which end at $(b_h,b_h+1),$ and they keep their colors. If $a_h=c_\star$ then the purple chords are declared black.
\end{itemize}
\textbf{$\shift_{\delta_i}\D$ - 
the shift of $\D_i$ at $\delta_i,~i\neq 0$:} 
\\This shift is defined when $\delta_i\in\Var_\D,$ that is, $\D_i$ has no strict same end child, and, if $\D_i$ is short black, purple, or very short, we also assume it is not a sticky child and does not start at $(1,2)$. Note that for $i=\star$ the absence of strict same end children means that there are no purple chords.
\begin{itemize}
    \item If $\D_i\in\Black_\D\cup\Purp_\D$ is not short, then it is replaced by a black chord $(a_i,b_i,c_i-1,c_i).$
    \item If $\D_i$ is red and not very short, it just changes color to blue; if $\D_i$ is blue and not very short, then it changes color to purple.
    \item If $i=\star$, and there are no purple chords, then in the shifted diagram the end of every yellow, red and blue chord is shifted one to the left, that is, becomes $(c_\star-1,c_\star),$ and their colors are unchanged. Chords of $\D$ which end at $(c_\star-1,c_\star),$ are declared purple in the new diagram.
    \item If $\D_i$ is short black or purple and $a_i\neq c_\star$, then in the shifted diagram it is replaced by a chord $(a_i-1,a_i,a_i+1,a_i+2)$ and the end of every chord in $\D$ which ends at $(a_i,a_i+1)$ is shifted to $(a_i+1,a_i+2).$ All colors are the same, except for $\D_i$ which becomes black. 
    
    If $a_i=c_\star$ then $\D_i$ is replaced by a very short blue $(a_i,a_i+1,a_i+2).$ The yellow, red and blue chords of $\D$ change their end to $(a_i+1,a_i+2),$ but keep their colors. Purple chords are declared black.
     \item If $\D_i$ is very short, then it is replaced by $(c_\star-2,c_\star-1,c_\star,d_\star)$, and the end of every chord of $\D$ which ends at $(c_\star-1,c_\star)$ is shifted to end at $(c_\star,d_\star)$ in the shifted diagram. Other chords are untouched. If $\D_i$ is blue then it becomes purple, as well as all the other chords which have changed. If $\D_i$ is red, then it and the other chords which have changed become blue.
\end{itemize}
\textbf{$\shift_{\varepsilon_h}\D$ - 
the blue shift, or the shift of $\D_h$ at $\varepsilon_h,$ where $\D_h$ is the lowest blue chord:}
\\This diagram is obtained from $\D$ by declaring $\D_h$ and all the purple chords as black. The red and blue chords other than $\D_h$, and if $\D_h$ is very short then also the black chords ending at $(c_h,d_h)=(c_\star,d_\star)$, are replaced by chords, of the same color and starting domino, which end at $(a_h,b_h).$ Their descendants which end at $(a_h,b_h)$ are declared purple and $\D_\star=(a_h,b_h)$. 
If $\D_h$ is not very short it is untouched. Otherwise $b_h=c_\star.$ If there is a chord $\D_l$ starting at $(c_\star,d_\star)$ then $\D_h$ is replaced by $(a_h,b_h,c_l,d_l).$ If no such $\D_l$ exists, $\D_h$ is replaced by $(a_h,b_h,b_{h}+1,b_h+2).$
\\\textbf{$\shift_{\abe}\D$ - the red shift: }
\\This shift is defined only when $|\Red_\D|>1.$ The resulting diagram is obtained from $\D$ by changing the color of the top red chord to be black, and declaring its red child $\D_h$ to be the new top red chord. Other chords are untouched.

If $\D_1,\D_2$ differ by a shift, we refer to $\D_1,\D_2$ as \emph{neighboring chord diagrams}, and to $S_{\D_1},S_{\D_2}$ as \emph{neighboring BCFW} cells.
\end{definition}
\begin{obs}\label{obs:bdry_matching_comb}
    If $\shift_{\zeta_i}\D=\D'$ then there exists a unique $\sigma_j\in\tVar_{\D'}\cup\{\abe\}$ with $\shift_{\sigma_j}\D'=\D.$ In this case we write $\match(\D,\zeta_i)=(\D',\sigma_j),$ and say that $(\D,\zeta_i)$ is \emph{matched} to $(\D',\sigma_i).$ In particular, $\shift_{\zeta_i}\D\neq\shift_{\sigma_j}\D$ whenever the left and right side are defined and $\zeta_i\neq\sigma_j.$
\end{obs}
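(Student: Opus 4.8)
# Proof proposal for Observation~\ref{obs:bdry_matching_comb}

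The plan is to verify the claim by a finite (if somewhat tedious) case analysis, organized by the type of the shift $\shift_{\zeta_i}$ applied to $\D$. For each of the shift operations in Definition~\ref{def:shift_ops} — $\shift_{\alpha_i},\shift_{\beta_i},\shift_{\gamma_i}$ (itself split into the sub-cases $\gamma_i\in\Var_\D$ with $\D_i\notin\Red_\D$, $\theta_i\in\Var_\D$, $\eta_i\in\Var_\D$ for the top purple chord, and the second-topmost-red / general-$\eta_i$ case), $\shift_{\delta_i}$ (with its sub-cases by color and shortness of $\D_i$), $\shift_{\varepsilon_h}$ the blue shift, and $\shift_{\abe}$ the red shift — I will exhibit the candidate inverse $\sigma_j$ on the shifted diagram $\D'=\shift_{\zeta_i}\D$ and check two things: (i) $\sigma_j\in\tVar_{\D'}\cup\{\abe\}$, i.e.\ the relevant domino entry or $2\times2$ minor is actually in the variable set of $\D'$ (this is where the side conditions in Definition~\ref{def:shift_ops} — ``no sticky child'', ``does not start at $(1,2)$'', etc.\ — get used, since those are exactly the conditions guaranteeing membership in $\Var$, cf.\ Definition~\ref{def:L=1domino_signs} and the description of $\Var_\D$), and (ii) $\shift_{\sigma_j}\D'=\D$. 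Uniqueness of $\sigma_j$ then follows because the shift operations are ``local'' — each modifies a controlled set of chords (one chord and, at most, its same-end chain or a single sibling) — so the positions and colors of the modified chords in $\D'$ versus $\D$ determine which chord must be shifted back and in which direction, hence which $\sigma_j$ can possibly work.

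The key observation that makes the matching work is that the shifts come in natural dual pairs according to which geometric degree of freedom they ``kill'': a shift at $\alpha_i$ (moving the start of $\D_i$ to abut its own segment, $a_i\mapsto b_i$) is inverted by a shift at $\beta$ of the new chord (moving its start back left), and symmetrically; a shift at $\gamma_i$ of a black chord (moving its end right, $c_i\mapsto d_i$) is inverted by a shift at $\delta$ of the new chord; a shift at $\gamma_i$ that recolors blue$\to$red is inverted by the shift at $\abe$ or the $\eta$-type shift that recolors red$\to$blue$\to$purple; the blue shift $\shift_{\varepsilon_h}$, which collapses the whole red/blue/yellow same-end structure onto $(a_h,b_h)$ and recolors, is inverted by the shift at $\gamma_\star$ or $\delta_\star$ (the $i=\star$ cases) that re-expands it; and the red shift $\shift_{\abe}$ (top red chord $\to$ black, its red child $\to$ top red) is inverted by the $\gamma_i$-shift of the former top red chord that recolors a black same-end chord back to red. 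In each pair one checks that applying both in succession restores every chord's start, end and color exactly — this is the content of ``$\shift_{\sigma_j}\D'=\D$'' and is a direct unwinding of the definitions. The final sentence of the observation, $\shift_{\zeta_i}\D\neq\shift_{\sigma_j}\D$ when $\zeta_i\neq\sigma_j$, is then immediate: if they were equal, then by the matching applied to $\D'':=\shift_{\zeta_i}\D=\shift_{\sigma_j}\D$ we would get $\match(\D'',\cdot)$ sending it back to $\D$ via \emph{two} different variables, contradicting the uniqueness just established (here one also uses that $\D''\neq\D$, which holds because every shift changes at least one chord's position or color).

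I expect the main obstacle to be the bookkeeping in the $\gamma_i$ and $\delta_i$ cases when $\D_i$ belongs to a long same-end chain, and in the blue shift $\shift_{\varepsilon_h}$ when $\D_h$ is very short: in those situations the shift touches all the chords ending in a common segment (or starting at $(c_\star,d_\star)$), recoloring purple chords to black and so on, and one has to check that the inverse shift reconstructs the full chain with the correct colors, including the delicate cases where $a_h=c_\star$ and the yellow chord participates in the chain. The verification that the side conditions are preserved or correctly swapped under the shift — e.g.\ ``$\D_i$ has no sticky child'' in $\D$ becoming ``the new chord does not start at the ending domino of another chord'' or ``is not a sticky child'' in $\D'$ — is the part that requires genuine care rather than routine symbol-pushing, and I would handle it by tabulating, for each shift, the precise before/after local picture (positions of $\D_i$, its parent, its same-end neighbours, and the marker $c_\star$) and reading off both directions from that table.
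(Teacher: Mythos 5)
Your plan — enumerate the shift types from Definition~\ref{def:shift_ops}, exhibit the candidate inverse $\sigma_j$ on $\D'=\shift_{\zeta_i}\D$, and verify that the side conditions (no sticky child, not starting at $(1,2)$, etc.) guarantee $\sigma_j\in\tVar_{\D'}\cup\{\abe\}$ — is the same as the paper's, which simply lists the matched pairs explicitly and refers to the diagrams of Remark~\ref{rmk:diag_shifts} for the verification. Your identification of the dual structure ($\alpha\leftrightarrow\beta$, $\gamma\leftrightarrow\delta$ for black chords, blue shift $\leftrightarrow\gamma_\star/\delta_\star$, red shift $\leftrightarrow$ a $\gamma$-type recoloring) matches the paper's table.

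One step in your derivation of the ``in particular'' is off. Suppose $\shift_{\zeta_i}\D=\shift_{\sigma_j}\D=\D''$ with $\zeta_i\neq\sigma_j$. You claim the matching then sends $\D''$ back to $\D$ via two \emph{different} variables; in fact the uniqueness part of the statement, applied once to $(\D,\zeta_i)$ and once to $(\D,\sigma_j)$, produces the \emph{same} $\tau$ both times, since $\tau$ is characterised intrinsically as the unique variable with $\shift_\tau\D''=\D$. So no contradiction arises at that point. The correct argument applies the uniqueness once more in the reverse direction: since $\shift_\tau\D''=\D$, the first part of the observation (now with $\D''$ and $\tau$ in the roles of $\D$ and $\zeta_i$) gives a unique $\rho$ with $\shift_\rho\D=\D''$, and because both $\zeta_i$ and $\sigma_j$ satisfy this, $\zeta_i=\sigma_j$. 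The conclusion holds, but by a slightly different route than the one you wrote down.
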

\begin{proof}
We list the matched pairs, and in all cases we assume that the shift is defined. 
\begin{itemize}
\item[--]$\match(\D,\alpha_i)=(\shift_{\alpha_i}\D,\zeta_j),$ where $\zeta_j$ is $\beta_i$ if $c_i>b_i+1.$ If $\D_i$ is short  $\zeta_j=\delta_i$ if no chord starts where it ends, or $\gamma_j$ is there is a chord which starts where $\D_i$ ends, and $j$ is the label of that chord \emph{after} the shift. 
\item[--]$\match(\D,\beta_i)=(\shift_{\beta_i}\D,\alpha_i).$
\item[--]$\match(\D,\gamma_i)=(\shift_{\gamma_i}\D,\zeta_j),$ where $\zeta_j$ is defined as follows. 
First, if $i=\star$ then if $\D_{\p(\star)}$ does not end at $(c_\star,d_\star)$ and no chord starts at $(c_\star,d_\star)$ then $\zeta_j=\delta_\star.$ If $\D_{\p(\star)}$ ends at $(c_\star,d_\star)$ then $\zeta_\star=\delta_0,$ and if a sibling of $\D_\star$ starts at $(c_\star,d_\star)$ then $\zeta_j=\varepsilon_j,$ where $j$ is the lowest blue chord.

If $i\neq\star,$ then $\zeta_j=\delta_i$ if no chord starts and no ancestor of $\D_i$ of the same color ends at $(c_i,d_i).$ If $\D_i$ has a same end parent whose label \emph{after} the shift is $j$ then if this parent is sticky $\zeta_j=\alpha_j,$ and otherwise $\zeta_j=\gamma_j.$ 
If $\D_i$ has sibling which starts at $(c_i,d_i)$, whose label after the shift is $j$ then $\zeta_j=\gamma_j.$ 
\item[--]$\match(\D,\delta_i)=(\shift_{\delta_i}\D,\zeta_j),$ where $\zeta_j=\gamma_i$ if $i=\star,$ or if it is black or purple and not short, or if it is red or blue and not very short.

$\zeta_j=\alpha_j$ if $\D_i$ is short black or purple and does not start at $(c_\star,d_\star),$ or if it is very short. In these cases $j$ is the label of the shifted chord. If $\D_i$ is short and starts at $(c_\star,d_\star)$ then $\zeta_j=\varepsilon_j,$ where $j$ is the index of the shifted chord, which is the lowest blue chord in the shifted diagram.  
\item[--]$\match(\D,\varepsilon_h)=(\shift_{\varepsilon_h}\D,\zeta_j),$ where $\zeta_j=\gamma_\star$ if the lowest blue chord $\D_h$ is not very short. If it is very short and no chord starts at $(c_\star,d_\star)$ then $\zeta_j=\delta_j,$ where $j$ is the index of the shifted blue chord which now ends at $(d_\star,d_{\star}+1).$ If there is a chord which starts at $(c_\star,d_\star)$ and its label after the shift is $j$ then $\zeta_j=\gamma_j.$
 \item[--]$\match(\D,\delta_0)=(\shift_{\abe}\D,\gamma_\star).$
\end{itemize}
The verification is straight forward, using the diagrams below.
\end{proof}
\begin{rmk}\label{rmk:diag_shifts}
The diagrams below include all possible shifts.
In every diagram the top and bottom rows represent the shifts which are matched to each other, their types are written beneath.

In diagrams (1)-(3), (6)-(9) and the top rows of (4),(5),(10),(11) we illustrate $\D_i$ as the chord in black, but it does not necessarily mean $\D_i\in\Black_\D$: we allow all different colors which are allowed in \cref{def:shift_ops}. 

The diagrams are self explanatory, but we make the following comments:
In diagrams (4),(5) if the top row is blue the bottom is red, and if it is purple the bottom is blue. In diagram (13), bottom row, the very short chord indicates that $(i+2,i+3)=(c_\star,d_\star)$ for that row, and $\D_i=\D_\star$.
Similarly, in diagrams (10),(11) top row $(i+1,i+2)=(c_\star,d_\star),$ while in the bottom rows $(i+2,i+3)=(c_\star,d_\star).$ The $\eta$ cases in diagrams (6),(7),(9)
 also include the possibility of second top most red chord $\D_i.$

The cases which do not involve colored chords, that is, (1)-(3),(6)-(9), are identical to the different possibilities for shifts in \cite[Section 7]{even2021amplituhedron}, though our notations slightly differ. \begin{center}
\begin{tabular}{ccc}
\tikz[line width=1]{
\def\dh{1.5}
\foreach \h in {0,\dh}{
    \draw[dashed] (0.25,\h) -- (0.75,\h);
    \draw (0.75,\h) -- (2.25,\h);
    \draw[dashed] (2.25,\h) -- (3.25,\h);
    \draw (3.25,\h) -- (4.25,\h);
    \draw[dashed] (4.25,\h) -- (4.75,\h);
    \foreach \i/\j in {2/i{-}1,3/i,4/i{+}1,7/j,8/j{+}1}{
        \def\x{\i/2}
        \draw (\x,\h-0.1)--(\x,\h+0.1);
        \node at (\x,\h-0.25) {$\scriptscriptstyle\j$};
    }
}
\foreach \i/\j/\h in {3/7/0,2/7/\dh}{
    \def\x{\i/2+0.25}
    \def\y{\j/2+0.25}
    \draw[line width=1.5,-stealth] (\x,\h) -- (\x,\h+0.15) to[in=90,out=90] (\y,\h+0.15) -- (\y,\h);
}
}
& \hspace{0.75cm}
\tikz[line width=1]{
\def\dh{1.5}
\foreach \h in {0,\dh}{
\draw[dashed] (0.25,\h) -- (0.75,\h);
\draw (0.75,\h) -- (1.75,\h);
\draw[dashed] (1.75,\h) -- (2.75,\h);
\draw (2.75,\h) -- (4.25,\h);
\draw[dashed] (4.25,\h) -- (4.75,\h);
\foreach \i/\j in {2/i,3/i{+}1,6/j{-}1,7/j,8/j{+}1}{
\def\x{\i/2}
\draw (\x,\h-0.1)--(\x,\h+0.1);
\node at (\x,\h-0.25) {$\scriptscriptstyle\j$};}}
\foreach \i/\j/\h in {2/6/0,2/7/\dh}{
\def\x{\i/2+0.25}
\def\y{\j/2+0.25}
\draw[line width=1.5,-stealth] (\x,\h) -- (\x,\h+0.15) to[in=90,out=90] (\y,\h+0.15) -- (\y,\h);}
}
\hspace{0.75cm} & 
\tikz[line width=1]{
\def\dh{1.5}
\foreach \h in {0,\dh}{
\draw[dashed] (0.25,\h) -- (0.75,\h);
\draw (0.75,\h) -- (3.25,\h);
\draw[dashed] (3.25,\h) -- (3.75,\h);
\foreach \i/\j in {2/i{-}1,3/i,4/i{+}1,5/i{+}2,6/i{+}3}{
\def\x{\i/2}
\draw (\x,\h-0.1)--(\x,\h+0.1);
\node at (\x,\h-0.25) {$\scriptscriptstyle\j$};}}
\foreach \i/\j/\h in {3/5/0,2/4/\dh}{
\def\x{\i/2+0.25}
\def\y{\j/2+0.25}
\draw[line width=1.5,-stealth] (\x,\h) -- (\x,\h+0.25) to[in=90,out=90] (\y,\h+0.25) -- (\y,\h);}
}
\\[1em]
(1) $\shift_\alpha$ & 
(2) $\shift_\delta$ & 
(3) $\shift_\alpha$ short \\
$\shift_\beta$ &
$\shift_\gamma$ &
$\shift_\delta$ short\\[1em]
\end{tabular}
\end{center}

\begin{center}
\begin{tabular}{ccc}
\hspace{0.5cm}
\tikz[line width=1]{
\def\dh{1.5}
\foreach \h in {0,\dh}{
\draw[dashed] (0.25,\h) -- (0.75,\h);
\draw (0.75,\h) -- (3.25,\h);
\draw[dashed] (3.25,\h) -- (3.75,\h);
\foreach \i/\j in {2/
,3/
,4/c_\star,5/d_\star,6/
}{
\def\x{\i/2}
\draw (\x,\h-0.1)--(\x,\h+0.1);
\node at (\x,\h-0.25) {$\scriptscriptstyle\j$};}}
\foreach \i/\j/\h/\c in {3/4/0/gray,2/4/\dh/black}{
\def\x{\i/2+0.25}
\def\y{\j/2+0.25}
\draw[line width=1.5,-stealth,\c] (\x,\h) -- (\x,\h+0.25) to[in=90,out=90] (\y,\h+0.25) -- (\y,\h);}
}
\hspace{0.5cm} & \hspace{0.5cm}
\tikz[line width=1]{
\def\dh{1.5}
\foreach \h in {0,\dh}{
\draw[dashed] (0.25,\h) -- (0.75,\h);
\draw (0.75,\h) -- (1.75,\h);
\draw[dashed] (1.75,\h) -- (2.75,\h);
\draw (2.75,\h) -- (4.25,\h);
\draw[dashed] (4.25,\h) -- (4.75,\h);
\foreach \i/\j in {2/i,3/i{+}1,6/c_\star,7/d_\star}{
\def\x{\i/2}
\draw (\x,\h-0.1)--(\x,\h+0.1);
\node at (\x,\h-0.25) {$\scriptscriptstyle\j$};}}
\foreach \i/\j/\h/\c in {2/6/0/gray,2/6/\dh/black}{
\def\x{\i/2+0.25}
\def\y{\j/2+0.25}
\draw[line width=1.5,-stealth,\c] (\x,\h) -- (\x,\h+0.15) to[in=90,out=90] (\y,\h+0.15) -- (\y,\h);}
}\hspace{0.5cm}
\\[1em]
(4) $\shift_\alpha$, short blue/purple & 
(5) 
$\shift_\gamma$ for top blue/top purple  \\
$\shift_\delta$ very short red/blue&
 $\shift_\delta$ for bottom red/blue\\[1em]
\end{tabular}
\end{center}

\begin{center}
\begin{tabular}{ccc}
\hspace{0.5cm}
\tikz[line width=1]{
\def\dh{1.5}
\foreach \h in {0,\dh}{
\draw[dashed] (0.25,\h) -- (0.75,\h);
\draw (0.75,\h) -- (1.75,\h);
\draw[dashed] (1.75,\h) -- (2.75,\h);
\draw (2.75,\h) -- (3.75,\h);
\draw[dashed] (3.75,\h) -- (4.75,\h);
\draw (4.75,\h) -- (5.75,\h);
\draw[dashed] (5.75,\h) -- (6.75,\h);
\foreach \i/\j in {2/i,3/i{+}1,6/j,7/j{+}1,10/l,11/l{+}1}{
\def\x{\i/2}
\draw (\x,\h-0.1)--(\x,\h+0.1);
\node at (\x,\h-0.25) {$\scriptscriptstyle\j$};}}
\foreach \i/\j/\h/\c in {2/5.8/0/black,6.2/10/0/gray,6/9.8/\dh/black,2/10.2/\dh/gray}{
\def\x{\i/2+0.25}
\def\y{\j/2+0.25}
\draw[line width=1.5,-stealth,\c] (\x,\h) -- (\x,\h+0.15) to[in=90,out=90] (\y,\h+0.15) -- (\y,\h);}
}
\hspace{0.5cm} & \hspace{0.5cm}
\tikz[line width=1]{
\def\dh{1.5}
\foreach \h in {0,\dh}{
\draw[dashed] (1.25,\h) -- (1.75,\h);
\draw (1.75,\h) -- (3.75,\h);
\draw[dashed] (3.75,\h) -- (4.75,\h);
\draw (4.75,\h) -- (5.75,\h);
\draw[dashed] (5.75,\h) -- (6.75,\h);
\foreach \i/\j in {4/i,5/i{+}1,6/i{+}2,7/i{+}3,10/j,11/j{+}1}{
\def\x{\i/2}
\draw (\x,\h-0.1)--(\x,\h+0.1);
\node at (\x,\h-0.25) {$\scriptscriptstyle\j$};}}
\foreach \i/\j/\h/\c in {4/5.8/0/black,6.2/10/0/gray,6/9.8/\dh/black,5/10.2/\dh/gray}{
\def\x{\i/2+0.25}
\def\y{\j/2+0.25}
\draw[line width=1.5,-stealth,\c] (\x,\h) -- (\x,\h+0.15) to[in=90,out=90] (\y,\h+0.15) -- (\y,\h);}
} \hspace{0.5cm}
\\[1em]
(6) $\shift_\gamma$, an $\eta$ case & 
(7) $\shift_\gamma,$ an $\eta$ case, sticky \\
$\shift_\gamma$, a $\theta$ case&
$\shift_\gamma,$ a $\theta$ case, short \\[1em]
\end{tabular}
\end{center}

\begin{center}
\begin{tabular}{ccc}
\hspace{1cm}
\tikz[line width=1]{
\def\dh{1.7}
\foreach \h in {0,\dh}{
\draw[dashed] (1.25,\h) -- (1.75,\h);
\draw (1.75,\h) -- (4.25,\h);
\draw[dashed] (4.25,\h) -- (4.75,\h);
\foreach \i/\j in {4/i,5/i{+}1,6/i{+}2,7/i{+}3,8/i{+}4}{
\def\x{\i/2}
\draw (\x,\h-0.1)--(\x,\h+0.1);
\node at (\x,\h-0.25) {$\scriptscriptstyle\j$};}}
\foreach \i/\j/\h/\c in {4/5.8/0/black,5.2/7/\dh/black}{
\def\x{\i/2+0.25}
\def\y{\j/2+0.25}
\draw[line width=1.5,-stealth,\c] (\x,\h) -- (\x,\h+0.15) to[in=90,out=90] (\y,\h+0.15) -- (\y,\h);}
\foreach \y in {0.5,0.7,0.9}{
\draw[line width=1.5,-stealth,gray] (2-\y,\y) to[in=90,out=0] (6.35/2+\y/4,0.1) -- (6.35/2+\y/4,0);
\draw[line width=1.5,-stealth,gray] (2-\y,\y+\dh) to[in=90,out=0] (5/2+\y/4,\dh+0.1) -- (5/2+\y/4,\dh);}
}
\hspace{1cm} & \hspace{1cm}
\tikz[line width=1]{
\def\dh{1.7}
\foreach \h in {0,\dh}{
\draw[dashed] (1.25,\h) -- (1.75,\h);
\draw (1.75,\h) -- (3.75,\h);
\draw[dashed] (3.75,\h) -- (4.75,\h);
\draw (4.75,\h) -- (5.75,\h);
\draw[dashed] (5.75,\h) -- (6.75,\h);
\foreach \i/\j in {4/i,5/i{+}1,6/i{+}2,7/i{+}3,10/j,11/j{+}1}{
\def\x{\i/2}
\draw (\x,\h-0.1)--(\x,\h+0.1);
\node at (\x,\h-0.25) {$\scriptscriptstyle\j$};}}
\foreach \i/\j/\h/\c in {4/5.7/0/black,6.3/10/0/gray,6/9.8/\dh/black,5.2/10.2/\dh/gray}{
\def\x{\i/2+0.25}
\def\y{\j/2+0.25}
\draw[line width=1.5,-stealth,\c] (\x,\h) -- (\x,\h+0.15) to[in=90,out=90] (\y,\h+0.15) -- (\y,\h);}
\foreach \y in {0.5,0.7,0.9}{
\draw[line width=1.5,-stealth,gray] (2-\y,\y) to[in=90,out=0] (6.2/2+\y/4,0.1) -- (6.2/2+\y/4,0);
\draw[line width=1.5,-stealth,gray] (2-\y,\y+\dh) to[in=90,out=0] (5/2+\y/4,\dh+0.1) -- (5/2+\y/4,\dh);}
}
\hspace{1cm}
\\[1em]
(8) $\shift_\delta$, short & 
(9) $\shift_\gamma$ an $\eta$ case, sticky \\
$\shift_\alpha$, short &
$\shift_\alpha$, short \\[1em]
\end{tabular}
\end{center}
\begin{center}
\begin{tabular}{ccc}
\hspace{1cm}
\tikz[line width=1]{
\def\dh{1.7}
\foreach \h in {0,\dh}{
\draw[dashed] (1.25,\h) -- (1.75,\h);
\draw (1.75,\h) -- (4.25,\h);
\draw[dashed] (4.25,\h) -- (4.75,\h);
\foreach \i/\j in {4/i,5/i{+}1,6/i{+}2,7/i{+}3,8/i{+}4}{
\def\x{\i/2}
\draw (\x,\h-0.1)--(\x,\h+0.1);
\node at (\x,\h-0.25) {$\scriptscriptstyle\j$};}}
\foreach \i/\j/\h/\c in {4.8/5.8/0/blue,5.2/7/\dh/black}{
\def\x{\i/2+0.25}
\def\y{\j/2+0.25}
\draw[line width=1.5,-stealth,\c] (\x,\h) -- (\x,\h+0.15) to[in=90,out=90] (\y,\h+0.15) -- (\y,\h);}
\foreach \y in {0.5,0.7,0.9}{
\draw[line width=1.5,-stealth,gray] (2-\y,\y) to[in=90,out=0] (6.35/2+\y/4,0.1) -- (6.35/2+\y/4,0);
\draw[line width=1.5,-stealth,gray] (2-\y,\y+\dh) to[in=90,out=0] (5/2+\y/4,\dh+0.1) -- (5/2+\y/4,\dh);}
}
\hspace{1cm} & \hspace{1cm}
\tikz[line width=1]{
\def\dh{1.7}
\foreach \h in {0,\dh}{
\draw[dashed] (1.25,\h) -- (1.75,\h);
\draw (1.75,\h) -- (3.75,\h);
\draw[dashed] (3.75,\h) -- (4.75,\h);
\draw (4.75,\h) -- (5.75,\h);
\draw[dashed] (5.75,\h) -- (6.75,\h);
\foreach \i/\j in {4/i,5/i{+}1,6/i{+}2,7/i{+}3,10/j,11/j{+}1}{
\def\x{\i/2}
\draw (\x,\h-0.1)--(\x,\h+0.1);
\node at (\x,\h-0.25) {$\scriptscriptstyle\j$};}}
\foreach \i/\j/\h/\c in {4.8/5.7/0/blue,6.3/10/0/gray,6/9.8/\dh/black,5.2/10.2/\dh/gray}{
\def\x{\i/2+0.25}
\def\y{\j/2+0.25}
\draw[line width=1.5,-stealth,\c] (\x,\h) -- (\x,\h+0.15) to[in=90,out=90] (\y,\h+0.15) -- (\y,\h);}
\foreach \y in {0.5,0.7,0.9}{
\draw[line width=1.5,-stealth,gray] (2-\y,\y) to[in=90,out=0] (6.2/2+\y/4,0.1) -- (6.2/2+\y/4,0);
\draw[line width=1.5,-stealth,gray] (2-\y,\y+\dh) to[in=90,out=0] (5/2+\y/4,\dh+0.1) -- (5/2+\y/4,\dh);}
}
\hspace{1cm}
\\[1em]
(10) $\shift_\delta$, short & 
(11) $\shift_\gamma$, an $\eta$ case, sticky \\
$\shift_\varepsilon$, very short &
$\shift_\varepsilon$, very short \\[1em]
\end{tabular}
\end{center}

\begin{center}
\begin{tabular}{ccc}
\hspace{0.5cm}
\tikz[line width=1]{
\def\dh{1.5}
\foreach \h in {0,\dh}{
\draw[dashed] (0.25,\h) -- (0.75,\h);
\draw (0.75,\h) -- (1.75,\h);
\draw[dashed] (1.75,\h) -- (2.75,\h);
\draw (2.75,\h) -- (3.75,\h);
\draw[dashed] (3.75,\h) -- (4.75,\h);
\draw (4.75,\h) -- (5.75,\h);
\draw[dashed] (5.75,\h) -- (6.75,\h);
\foreach \i/\j in {2/a_0,3/b_0,6/j,7/j{+}1,10/l,11/l{+}1}{
\def\x{\i/2}
\draw (\x,\h-0.1)--(\x,\h+0.1);
\node at (\x,\h-0.25) {$\scriptscriptstyle\j$};}}
\foreach \i/\j/\h/\c in {2/5.8/0/red,6.2/10/0/black,6/9.8/\dh/blue,2/10.2/\dh/red}{
\def\x{\i/2+0.25}
\def\y{\j/2+0.25}
\draw[line width=1.5,-stealth,\c] (\x,\h) -- (\x,\h+0.15) to[in=90,out=90] (\y,\h+0.15) -- (\y,\h);}
}
\hspace{0.5cm} & \hspace{0.5cm}
\tikz[line width=1]{
\def\dh{1.5}
\foreach \h in {0,\dh}{
\draw[dashed] (1.25,\h) -- (1.75,\h);
\draw (1.75,\h) -- (3.75,\h);
\draw[dashed] (3.75,\h) -- (4.75,\h);
\draw (4.75,\h) -- (5.75,\h);
\draw[dashed] (5.75,\h) -- (6.75,\h);
\foreach \i/\j in {4/i,5/i{+}1,6/i{+}2,7/i{+}3,10/j,11/j{+}1}{
\def\x{\i/2}
\draw (\x,\h-0.1)--(\x,\h+0.1);
\node at (\x,\h-0.25) {$\scriptscriptstyle\j$};}}
\foreach \i/\j/\h/\c in {5/5.8/0/black,6.2/10/0/blue,6/9.8/\dh/blue,5/10.2/\dh/black}{
\def\x{\i/2+0.25}
\def\y{\j/2+0.25}
\draw[line width=1.5,-stealth,\c] (\x,\h) -- (\x,\h+0.15) to[in=90,out=90] (\y,\h+0.15) -- (\y,\h);}
} \hspace{0.5cm}
\\[1em]
(12) $\shift_\varepsilon$, nonsticky & 
(13) $\shift_{\varepsilon}$, sticky \\
$\shift_{\gamma_\star}$, a $\theta$ case&
$\shift_{\gamma_\star},$ a $\theta$ case, very short\\[1em]
\end{tabular}
\end{center}

\begin{center}
\begin{tabular}{cc}
\hspace{1cm}
\tikz[line width=1]{
\def\dh{1.7}
\foreach \h in {0,\dh}{
\draw[dashed] (1.25,\h) -- (1.75,\h);
\draw (1.75,\h) -- (4.25,\h);
\draw[dashed] (4.25,\h) -- (4.75,\h);
\foreach \i/\j in {4/i,5/i{+}1,6/i{+}2,7/i{+}3,8/i{+}4}{
\def\x{\i/2}
\draw (\x,\h-0.1)--(\x,\h+0.1);
\node at (\x,\h-0.25) {$\scriptscriptstyle\j$};}}
\foreach \y in {0.5}{
\draw[line width=1.5,-stealth,blue] (2-\y,\y) to[in=90,out=0] (6.35/2+\y/4,0.1) -- (6.35/2+\y/4,0);
\draw[line width=1.5,-stealth,blue] (2-\y,\y+\dh) to[in=90,out=0] (5/2+\y/4,\dh+0.1) -- (5/2+\y/4,\dh);}
\foreach \y in {0.7,0.9}{
\draw[line width=1.5,-stealth,red] (2-\y,\y) to[in=90,out=0] (6.35/2+\y/4,0.1) -- (6.35/2+\y/4,0);
\draw[line width=1.5,-stealth,red] (2-\y,\y+\dh) to[in=90,out=0] (5/2+\y/4,\dh+0.1) -- (5/2+\y/4,\dh);}
}
\hspace{1cm} & \hspace{1cm}
\tikz[line width=1]{
\def\dh{1.7}
\foreach \h in {0,\dh}{
\draw[dashed] (1.25,\h) -- (1.75,\h);
\draw (1.75,\h) -- (3.75,\h);
\draw[dashed] (3.75,\h) -- (4.25,\h);
\foreach \i/\j in {4/i,5/i{+}1,6/i{+}2,7/i{+}3}{
\def\x{\i/2}
\draw (\x,\h-0.1)--(\x,\h+0.1);
}
\foreach \y in {0.5}{
\draw[line width=1.5,-stealth,blue] (2-\y,\y) to[in=90,out=0] (5/2+\y/4,0.1) -- (5/2+\y/4,0);
\draw[line width=1.5,-stealth,blue] (2-\y,\y+\dh) to[in=90,out=0] (5/2+\y/4,\dh+0.1) -- (5/2+\y/4,\dh);}
\foreach \y in {0.7}{
\draw[line width=1.5,-stealth,red] (2-\y,\y) to[in=90,out=0] (5/2+\y/4,0.1) -- (5/2+\y/4,0);
\draw[line width=1.5,-stealth,red] (2-\y,\y+\dh) to[in=90,out=0] (5/2+\y/4,\dh+0.1) -- (5/2+\y/4,\dh);}
\foreach \y in {0.9}{
\draw[line width=1.5,-stealth,red] (2-\y,\y) to[in=90,out=0] (5/2+\y/4,0.1) -- (5/2+\y/4,0);
\draw[line width=1.5,-stealth,black] (2-\y,\y+\dh) to[in=90,out=0] (5/2+\y/4,\dh+0.1) -- (5/2+\y/4,\dh);}
}}
\hspace{1cm} 
\\[1em]
(14) $\shift_{\gamma_\star}$& 
(15) $\shift_{\gamma_\star}$ an $\eta$ case \\
$\shift_{\delta_\star}$&
$\shift_\abe$ \\[1em]
\end{tabular}
\end{center}
\end{rmk}
\begin{definition}\label{def:internal_bdries_domino}
Let $\D$ be a chord diagram, and $\bullet\in\{\alpha_i,\ldots,\varepsilon_i\}_{i\in[k+1]_\D}\cup\{\gamma_\star,\delta_\star,\alpha_0,\beta_0\}$ is such that $\shift_\bullet\D$ is defined. Then, if $\bullet\neq\delta_0,\gamma_i$ we define $S_{\partial_\bullet\D}$ to be the subspace of $\Gr_{k,n;\ell}$ whose element have a weak domino representative for which all inequalities implied in \cref{obs:poly_ineqs} hold, except that $\bullet=0.$ If $\bullet = \gamma_i$ then exactly one of $\gamma_i,\eta_i,\theta_i\in\Var_\D$. Denote it by $\zeta_i.$ Then if the shift at $\gamma_i$ is defined, $S_{\partial_{\gamma_i}\D}$ is defined to be the subspace of $\Gr_{k,n;\ell}$ whose element have a weak domino representative for which all inequalities of \cref{obs:poly_ineqs} hold, except that $\zeta_i=0.$
If $\bullet=\delta_0,$ $S_{\partial_\bullet\D}$ is defined to be $S_{\partial_{\gamma_\star}\shift_{\abe}\D}.$   
We also use the notation $\partial_\bullet S_\D=S_{\partial_\bullet\D}.$

For reasons that will be clarified later, we refer to the above strata as the \emph{matchable} boundary strata of $S_\D.$ $S_{\partial_{\varepsilon_h}}\D,$ where $\D_h$ is the lowest blue chord is called the \emph{blue boundary of $S_\D.$} $S_{\partial_{\delta_0}}\D,$ is called the \emph{red boundary of $S_\D.$} The other matchable boundaries are called \emph{uncolored}.
\end{definition}
\begin{rmk}
In \cite{even2021amplituhedron} both the notations for shifts and for boundaries were slightly different. In particular, the notation there for shifts have indicated which part of chord is being shifted. The notations we use here are more compact, and allow stating claims more uniformly, but are slightly less suggestive of the chord diagrammatic meaning. 
\end{rmk}
\begin{nn}\label{nn:reg_match}Matchable boundaries which are not contained in $\SA$ 
are called the \emph{internal matchable boundaries}. The uncolored internal matchable boundaries are called \emph{regular matchable boundaries}. 
\end{nn}
\begin{rmk}\label{rmk:all_matchable_reg}
It can be shown, using the techniques of this text, that all matchable boundaries of a BCFW cell $S_\D,$ except the red and blue boundaries, are regular, but we will not rely on it.
\end{rmk}
\subsubsection{A BCFW-like description of $S_{\partial\bullet}$}
It is evident from the definition of the matchable boundaries, the description of the domino form given in Section \cref{subsec:BCFW_form} and the passage between domino and BCFW forms \cref{cor:BCFW2dominoExplicit} that an alternative description of the space $S_{\partial\zeta_i}$ is as the subspace of $\Gr_{k,n;1}^\geq$ consisting of points $C\vdots D$ having a BCFW form, with the exception that the parameter $\bzet_i$ is set to $0.$
It will also be instrumental to present a construction of this space which is similar to that of BCFW cells. To this end we define degenerate versions of the BCFW step and the forward limit.
\begin{definition}\label{def:degenerate_ops}
Using \cref{nn:bcfwmap} and the notations of \cref{def:bcfw-map}, we define the \emph{BCFW map degenerated at $\zeta,$ for $\zeta\in\{\alpha,\ldots,\varepsilon\}$} as follows. 
\begin{itemize}
\item \underline{$\zeta=\gamma:$} We define
\[\mbcfw^{\gamma} :\;{\Mat}_{k_L, N_L;\ell_L}\times\; \R^4\;\times\;{\Mat}_{k_R,N_R,\ell_R}\rightarrow\;\Mat_{k,n;\ell}\]
where $((C_L\vdots\{D_p\}_{p\in{\Loop}_L}\}), (\alpha, \beta, \delta,\varepsilon), (C_R\vdots\{D_q\}_{q\in\Loop_R}))$ is mapped to
$(C\vdots\{D'_r\}_{r\in\Loop_L\cup\Loop_R}),$
defined as in \cref{def:bcfw-map}, only that we require \[C'_R=\pre_{N_L\setminus\{a\}}.y_d(\frac{\delta}{\varepsilon}).C_R,~D'_r=\pre_{N_L\setminus\{a\}}.y_d(\frac{\delta}{\varepsilon}).D_r,~\text{for }r\in\Loop_R,\]and the $c$th entry of $v$ is put to $0.$
\item \underline{$\zeta=\alpha$:} We define 
    \[\mbcfw^{\alpha} :\;{\Mat}_{k_L, N_L;\ell_L}\times\; \R^4\;\times\;\pre_{b}{\Mat}_{k_R,N_R\setminus\{b\},\ell_R}\rightarrow\;\Mat_{k,n;\ell}\]
where $((C_L\vdots\{D_p\}_{p\in{\Loop}_L}\}), ( \beta,\gamma ,\delta,\varepsilon), (C_R\vdots\{D_q\}_{q\in\Loop_R}))$ is mapped to
$(C\vdots\{D'_r\}_{r\in\Loop_L\cup\Loop_R}),$
defined as in \cref{def:bcfw-map} only that \[C'_L=\pre_{N_R\setminus\{a,b,n\}}.C_L,~D'_r=\pre_{N_R\setminus\{a,b,n\}}.D_r,~\text{for }r\in\Loop_L,\]
and the $a$th entry of $v$ is put to $0.$
\item\underline{$\zeta=\beta:$}
We define 
    \[\mbcfw^{\beta} :\;\pre_{b}{\Mat}_{k_L, N_L\setminus\{b\};\ell_L}\times\; \R^4\;\times\;{\Mat}_{k_R,N_R,\ell_R}\rightarrow\;\Mat_{k,n;\ell}\]
where $((C_L\vdots\{D_p\}_{p\in{\Loop}_L}\}), (\alpha,\gamma, \delta,\varepsilon), (C_R\vdots\{D_q\}_{q\in\Loop_R}))$ is mapped to
$(C\vdots\{D'_r\}_{r\in\Loop_L\cup\Loop_R}),$
defined as in \cref{def:bcfw-map} only that 
\[C'_L=\pre_{N_R\setminus\{a,b,n\}}.C,~D'_r=\pre_{N_R\setminus\{a,b,n\}}.D_r,~\text{for }r\in\Loop_L,\]and the $b$th entry of $v$ is put to $0.$
\item\underline{$\zeta = \delta:$}
We define 
    \[\mbcfw^{\delta} :\;{\Mat}_{k_L, N_L;\ell_L}\times\; \R^4\;\times\;\pre_d{\Mat}_{k_R,N_R\setminus\{d\},\ell_R}\rightarrow\;\Mat_{k,n;\ell}\]
where $((C_L\vdots\{D_p\}_{p\in{\Loop}_L}\}), (\alpha,\beta,\gamma,\varepsilon), (C_R\vdots\{D_q\}_{q\in\Loop_R}))$ is mapped to
$(C\vdots\{D'_r\}_{r\in\Loop_L\cup\Loop_R}),$
defined as in \cref{def:bcfw-map} only that 
\[C'_R=\pre_{N_L\cup\{d\}\setminus\{a\}}.y_c(\frac{\gamma}{\varepsilon}).\rem_d(C_R),~D'_r=\pre_{N_L\setminus\{a\}}.y_d(\frac{\gamma}{\varepsilon}).\rem_d(D_r),~\text{for }r\in\Loop_R,\] and the $d$th entry of $v$ is put to $0.$
\item\underline{$\zeta=\varepsilon:$}
We define 
    \[\mbcfw^{\varepsilon} :\;{\Mat}_{k_L, N_L;\ell_L}\times\; \R^4\;\times\;\pre_d{\Mat}_{k_R,N_R\setminus\{d\},\ell_R}\rightarrow\;\Mat_{k,n;\ell}\]
where $((C_L\vdots\{D_p\}_{p\in{\Loop}_L}\}), (\alpha,\beta,\gamma,\delta), (C_R\vdots\{D_q\}_{q\in\Loop_R}))$ is mapped to
$(C\vdots\{D'_r\}_{r\in\Loop_L\cup\Loop_R}),$
defined as in \cref{def:bcfw-map} with the following difference: 
\[C'_R = \pre_ny_{c}(\frac{\gamma}{\delta})(\rem_n(y_d(\delta).C_R)),~~D'_r=\pre_ny_{c}(\frac{\gamma}{\delta})(\rem_n(y_d(\delta).C_R)),~r\in\Loop_R.\]
The $n$th entry of $v$ is put to $0.$
    \end{itemize}
For $\zeta\in\{\gamma,\delta\}$ we define the \emph{forward limit degenerated at $\zeta$}, and denote it by $\FL^\zeta,$ as follows. For $\zeta=\gamma$ it is just the map $\FL'$ defined in \cref{rmk:last_y_c}.
For $\zeta=\delta$ we use the notations of \cref{def:forward_limit}, and define
\[\FL^{\delta}:\Gr'_{k+1,N}\times\Gr_{1,4}\dashrightarrow\Gr_{k,[n];1}\]
as follows.
Write $M_0'=\rem_d(y_{c}(\gamma)M_0)$,
    and set $M'_1$ to be    \[M'_1=\pre_d.y_c(\frac{\gamma_\star}{\varepsilon_\star}).x_A(\frac{\beta_\star}{\alpha_\star}).x_l(\frac{\alpha_\star}{\varepsilon_\star}).M'_0.\] The construction continues as in \cref{def:forward_limit}:\[\FL^\delta(M)=\rem_l(x_l(1)
(\rem_{A,B}(\addL_{AB}M'_1)))).\]
\end{definition}
In analogy to \cref{prop:BCFW_and_pos},~\cref{prop:fl_domain} and with similar proofs, we have 
\begin{prop}\label{prop:degen_ops_domain}
If $(\ell_L,\ell_R)\in\{(0,0),(0,1),(1,0)\},$ and $k_L\leq |N_L|-2,~k_R\leq |N_R|-2,$ then
the maps $\mbcfw^\zeta,~\zeta\in\{\alpha,\ldots,\varepsilon\}$ descend to rational maps (denoted using the same notations) between the corresponding Grassmannians and nonnegative Grassmannians. They are called the \emph{BCFW product degenerated at $\zeta$}.
If if $S_L\subseteq \Gr_{k_L,N_L;\ell_L}^{\geq},S_R\subseteq \Gr_{k_R,N_R;\ell_R}^{\geq}$ we write $S_L\bcfw^\zeta S_R$ for the image of
$\mbcfw(S_L,\Gr_{1,5}^{>},S_R)$ whenever it is defined.

The map $\FL^\zeta$ is generically defined, and is independent of choices.
Moreover, it restricts to a rational map, and preserves non negativity.
\end{prop}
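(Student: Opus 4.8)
The statement asserts that each degenerate BCFW map $\mbcfw^\zeta$ and each degenerate forward limit $\FL^\zeta$ descends to the Grassmannian level, is generically defined, and preserves nonnegativity. The key observation is that each of these degenerate operators is, by construction, a composition of the same elementary operations that appear in \cref{def:bcfw-map} and \cref{def:forward_limit} — namely $\pre_I$, $\rem_I$, $y_i(t)$, $x_i(t)$, $\scale_i(t)$, $\inc_l$, and $\addL_{AB}$ — together with the formation of a ``chord vector'' $v$ (which in the degenerate case is simply the old $v$ with one coordinate deleted) and, for $\FL^\zeta$, the $\addL_{AB}$ step. Thus the proof will follow the template of \cref{prop:BCFW_and_pos} and \cref{prop:fl_domain} essentially verbatim, and the task is to check that the degeneracy does not break the two properties that made those proofs work: (i) genericity of the rank conditions \eqref{eq:ranks}, and (ii) nonnegativity of the output, which for the tree case is inherited from \cite{even2023cluster} and in the one-loop case is bootstrapped by applying the tree statement to $C_L+D$, $C_R$ or $C_L$, $C_R+D$.

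\textbf{Step 1: descent to the Grassmannian.} First I would check that each $\mbcfw^\zeta$ and $\FL^\zeta$ is equivariant for the appropriate parabolic group actions, exactly as in \cref{prop:BCFW_and_pos}: the operations $\pre,\rem,x_i,y_i,\scale_i,\inc_l,\addL_{AB}$ all descend to the loopy Grassmannian by \cref{lem:effect_ops} and \cref{def:ops}, and deleting one coordinate of $v$ does not interfere with this, since $v$ is inserted as a single new row whose span is unaffected by the $\GL_k$-type gauge on the remaining rows. So the composite descends to a rational map between the stated loopy Grassmannians.

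\textbf{Step 2: generic definedness.} Next I would argue that the rank conditions \eqref{eq:ranks} hold generically. Here the argument is the same as the first sentence of the proof of \cref{prop:BCFW_and_pos} and of \cref{prop:fl_domain}: the relevant maximal minors of the output are, by \cref{lem:effect_ops} and \cref{prop:pos_poly_plucker_rep}-type expansions, polynomials (with coefficients monomial in the parameters and their inverses) in the minors of $C_L\vdots D_L$, $C_R\vdots D_R$ and the five (now four) scalars, which do not vanish identically — one exhibits a single nonzero specialization, e.g.\ using the Vandermonde trick from \cref{lem:connected}, or simply by noting that the degenerate map is the restriction of the generically-defined $\mbcfw$, $\FL$ to the hyperplane $\{\zeta=0\}$ in $\R^5$, which is not contained in the indeterminacy locus. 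The only point requiring a line of care is the $\FL^\delta$ case, where an extra $\rem_d$ appears before $\addL_{AB}$; one must check that after this removal the $A,B$ columns of $M'_1$ still have rank $2$, which holds generically because the rank-$2$ property of the $A,B$ columns in \cref{def:forward_limit} is a consequence of $\inc_l$ followed by the $x,y$ operations and does not use column $d$.

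\textbf{Step 3: nonnegativity.} Finally, for nonnegativity I would invoke the same two-step bootstrap as in \cref{prop:BCFW_and_pos}: by \cref{lem:effect_ops} every constituent operation preserves nonnegativity, and the insertion of $v$ with one coordinate zeroed is covered by the tree-level statement of \cite{even2023cluster} applied to the truncated chord vector (zeroing a coordinate of $v$ only makes the relevant minors ``degenerate'' versions of the ones already known to be sums of nonnegative monomials in nonnegative minors). For the one-loop case, apply this tree fact to $C_L+D,\ C_R$ when $\ell_L=1$, or to $C_L,\ C_R+D$ when $\ell_R=1$, using $k_L\le|N_L|-2$, $k_R\le|N_R|-2$ so that the relevant augmented matrices are nonnegative; the $\FL^\zeta$ case is handled as in \cref{prop:fl_domain}, noting $\addL_{AB}$ preserves nonnegativity by \cref{lem:effect_ops}.

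\textbf{Main obstacle.} I expect the genuinely delicate point to be the bookkeeping in the degenerate forward limits $\FL^\gamma=\FL'$ and $\FL^\delta$: one must verify that zeroing out $\gamma_\star$ (resp.\ removing the $d$ column and degenerating the $y_c,y_d$ steps) still leaves the $A,B$ columns of rank exactly $2$ so that $\addL_{AB}$ is defined, and that the final $\rem_l\circ x_l(1)\circ\rem_{A,B}$ cleanup behaves as in the nondegenerate case. Everything else is a routine transcription of the proofs of \cref{prop:BCFW_and_pos} and \cref{prop:fl_domain}, which is why I would state the result ``with similar proofs'' and only spell out the $A,B$-rank check for $\FL^\delta$ in detail.
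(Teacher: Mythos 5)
Your proposal is correct and takes essentially the same approach as the paper, which explicitly introduces this proposition with the phrase ``In analogy to \cref{prop:BCFW_and_pos},~\cref{prop:fl_domain} and with similar proofs'' and gives no further argument. Your additional care about the rank of the $A,B$ columns after $\rem_d$ in the $\FL^\delta$ case, and your observation that the degenerate maps are restrictions of the nondegenerate ones to a coordinate hyperplane (so generic definedness is inherited), are both sound and make the ``analogy'' explicit.
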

\begin{prop}\label{prop:boundaries_domain}
Let $\CD_{n,k}^\ell$ be a chord diagram. Every matchable boundary $S_{\partial_{\zeta_i}\D}$, but the red one, if exists, can be constructed via the following algorithm. We construct $\partial_{\zeta_i}S_\D$ by the same recipe of $\pre,\bcfw,\FL$ as $S_\D,$ only that at the $i$th BCFW step corresponding to $\D_i,$ if $i\neq\star,$ or in the $\FL$ step, if $i=\star,$ we perform the BCFW map degenerated at $\hzet_i,$ or the forward limit map degenerated at $\hzet_\star,$ respectively. These spaces are contained in the closure $\overline{S}_\D$ in the Hausdorff topology.
\end{prop}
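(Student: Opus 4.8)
The plan is to prove the statement by induction on the generation sequence of $\D$, mirroring the inductive structure used throughout \cref{subsec:BCFW_form} and the proof of \cref{thm:dominoAndBCFWparams}, while tracking two things simultaneously: first, that the ``degenerate recipe'' (the one where we replace the $i$th BCFW step by $\mbcfw^{\zeta}$, or the $\FL$ step by $\FL^{\zeta}$) is generically well-defined in the same sense that the ordinary recipe is (full rank of $C$ and $C+D$ at every stage); and second, that the resulting space coincides, as a subset of $\Gr_{k,n;\ell}^{\geq}$, with $S_{\partial_{\zeta_i}\D}$ as defined via weak domino forms in \cref{def:internal_bdries_domino}, and lies in $\overline{S}_\D$.

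The key observation that makes the identification work is the one recorded just before \cref{def:degenerate_ops}: by the explicit passage between BCFW and domino coordinates in \cref{cor:BCFW2dominoExplicit}, the locus $S_{\partial_{\zeta_i}\D}$ is exactly the image of the BCFW parametrization of $S_\D$ with the single reduced coordinate $\bzet_i$ set to $0$. Thus I would first carry out a direct, bookkeeping-level comparison: unwind what $\mbcfw^{\zeta}$ (resp.\ $\FL^{\zeta}$) does to matrix representatives, and check entry-by-entry that it produces precisely the loopy matrix obtained from the ordinary BCFW step by substituting $\hzet_i = 0$ (after accounting for the $\pre$/$\rem$ adjustments on the degenerate side, which are there exactly because the column that would have carried $\hzet_i$ becomes zero). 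This is the analogue, in the degenerate setting, of \cref{obs:bcfw_form}, and it is routine given \cref{lem:effect_ops}. Combined with the previous paragraph's observation this gives the equality of the two descriptions of the space, \emph{provided} the degenerate recipe is defined; the latter I would get by the same induction as in \cref{prop:fixed_positroids}, invoking \cref{prop:degen_ops_domain} at the degenerate step and \cref{prop:matroid_under_pre}, \cref{prop:matroid_under_bcfw}, \cref{prop:matroid_under_FL} at the other steps, noting that setting one coordinate to zero only shrinks the positroid, so the rank-genericity statements survive.

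For the containment $S_{\partial_{\zeta_i}\D}\subseteq\overline{S}_\D$: a point of $S_{\partial_{\zeta_i}\D}$ is, by the first paragraph, the limit of the BCFW-parametrized family in which $\bzet_i\to 0^{+}$ while all other coordinates are held fixed and positive; each member of the family lies in $S_\D$, so the limit lies in $\overline{S}_\D$. One must check this limit exists and is the stated point, i.e.\ that the Laurent-polynomial entries of the domino/BCFW matrix (\cref{obs:bcfw_form}, \cref{cor:BCFW2dominoExplicit}) extend continuously to $\bzet_i=0$ --- they do, because in the reduced BCFW coordinates no entry has $\bzet_i$ in a denominator (this is precisely the role of the reduced coordinates; $\bzet_i$ for $i\neq 0,\star$ multiplies only $v^{\D_i}_{\bzet}$ and appears with positive powers elsewhere), and convergence in matrix representatives gives convergence in $\Gr_{k,n;\ell}$ by \cref{lem:submersion2Grass}. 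The red boundary is excluded from the statement because $\shift_{\delta_0}$ is defined through $\shift_{\abe}$ and the degenerate-$\FL$ machinery does not directly produce it; that case is handled separately later.

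I expect the main obstacle to be the forward-limit step $\D=\FL(\D')$ with $i=\star$, i.e.\ verifying that $\FL^{\zeta}$ (for $\zeta\in\{\gamma,\delta\}$, i.e.\ $\hgam_\star$ or $\hdel_\star$ set to $0$) reproduces the $\bgam_\star=0$ or $\bdel_\star=0$ locus of the domino form. The difficulty is that the $\FL$ construction (\cref{def:forward_limit}) involves the noncommuting operations $y_c, y_d, x_A, x_l, \addL_{AB}, \rem_l, x_l(1)$, and degenerating one parameter interacts nontrivially with the $\addL_{AB}$ step --- one has to check that the column operations used to cancel the $A,B$ entries of the other rows (which in the nondegenerate case use the $\tr(\D)$ and yellow rows, \cref{subsec:BCFW_form}) still make sense and still land in the claimed weak domino form when $u_\star$ or a component of it degenerates. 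This is exactly the kind of subtlety \cref{lem:scale_inv} and \cref{rmk:last_y_c} are designed to control, and I would lean on \cref{rmk:last_y_c} (the ``$\FL$ degenerated at $c$'' is $\FL$ with $\gamma_\star=0$) to handle $\zeta=\gamma$ cleanly, and on a careful reading of \cref{rmk:bcfw_vecs_fl} for $\zeta=\delta$. Everything else --- the $\pre$ step, and BCFW steps with $i\ne\star$ --- follows the tree-level template of \cite{even2021amplituhedron} with only cosmetic changes, so I would dispatch those quickly and concentrate the write-up on the $\FL$ case.
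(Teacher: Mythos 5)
Your overall plan---compare the degenerate recipe with the $\bzet_i=0$ locus of the BCFW/domino parametrization, then take a limit inside $S_\D$ to show containment in $\overline{S}_\D$---matches the structure of the paper's proof. The paper also first matches the descriptions (noting that for the regular cases this is a matter of unwinding \cref{cor:BCFW2dominoExplicit} and \cref{subsec:BCFW_form}), then writes the limit explicitly.

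However, there is a genuine gap in your treatment of the limit. You assert that the direct limit $\bzet_i\to 0^+$ with all other reduced BCFW coordinates held fixed always works, ``because in the reduced BCFW coordinates no entry has $\bzet_i$ in a denominator.'' This is not true. \cref{obs:bcfw_form} only asserts that the vectors $v^{\D_j}_{\bzet}$ are \emph{Laurent} polynomials in the BCFW coordinates of chords ahead of $\D_j$, and the explicit construction in \cref{subsec:BCFW_form} applies the column operation $y_{d_l}\bigl(\tfrac{\hdel_l}{\heps_l}\bigr)$ to rows of the right subcomponent at the step adding $\D_l$; so $\heps_l$ genuinely occurs in denominators of entries of rows corresponding to descendants of $\D_l$. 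Similarly, the blue dominoes are divided by $\hdel_\star$ (see \cref{lem:L=1_BCFW2Domino},~\cref{it:BCFW2DOM_FL_blue}). Consequently the naive limit diverges precisely in two cases: $\zeta_i=\hdel_\star$, and $\zeta_i=\heps_h$ for $\D_h$ the lowest blue chord (the latter being an ordinary BCFW step, not the $\FL$ step $i=\star$). The paper's proof singles these out and writes the limit as a \emph{scaled} limit, e.g.\ for the blue boundary it takes $\tilde\varepsilon_h\to 0$ while simultaneously $\tilde\varepsilon_j/\tilde\varepsilon_h\to\heps_j$ for descendants $\D_j$ of $\D_h$, and similarly rescales $\gamma_j$ for blue $\D_j$ in the $\hdel_\star$ case. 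You identify $\delta_\star$ as needing care but anchor the difficulty in the noncommutativity of the $\FL$ operations; you do not notice the blue-$\varepsilon$ case at all, and the justification you give for why it should be easy is exactly the false claim about denominators. Fixing this requires replacing the fixed-coordinate limit by the paper's scaled limit in these two cases.
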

\begin{proof}
In all these cases the weak domino form differs from the $\D$-domino form by setting $\tilde\zeta_i=0,$ where $\tilde\zeta_i=\zeta_i,$ unless $\zeta_i=\gamma_i$ and $\eta_i$ or $\theta_i$ exist - in which case $\tilde\zeta_i$ is this $\eta_i$ or $\theta_i.$ 
In all regular cases, except the case $\zeta_i=\delta_\star,$ from \cref{cor:BCFW2dominoExplicit} and the discussion of \cref{subsec:BCFW_form}, that setting $\tilde\zeta_i=0$ is equivalent to putting $\bzet_i=\hzet_i=0$ in the BCFW form, which is equivalent to constructing the cell using the usual BCFW recipe, except that the operation corresponding to $\D_i$ is degenerated at $\zeta_i.$ In the remaining two cases $\zeta_i=\delta_\star$ and $\zeta_i=\varepsilon_h,$ where $\D_h$ is the lowest blue chord, it is easily checked by hand that the two descriptions for the spaces agree, by the same argument used in \cref{subsec:BCFW_form}.

Denote by $(U\vdots V)_{\partial_{\zeta_i}\D}(\{\hgam_\star,\hdel_\star,
\heps_\star\}\cup\{\halp_0,\hbet_0,
\heps_0\}\cup\bigcup_{j=1}^k\{\halp_j,\hbet_j,\hgam_j,\hdel_j,\heps_j\}\setminus\{\hzet_i\})$ the loopy vector space in $S_{\partial_{\zeta_i}\D}$ obtained by performing the suggested algorithm for the space $\partial_{\zeta_i}S_\D$ with the written inputs. 
Note that the set of inputs for which this vector loopy space is of full rank is open.
Denote by $(U\vdots V)_{\D}(\{\hgam_\star,\hdel_\star,
\heps_\star\}\cup\{\halp_0,\hbet_0,
\heps_0\}\cup\bigcup_{j=1}^k\{\halp_j,\hbet_j,\hgam_j,\hdel_j,\heps_j\})$ the loopy vector space in $S_\D$ obtained by the standard BCFW process, with the same inputs and $\hzet_i.$ Then in the regular cases other than $\zeta_i=\delta_\star$ we can take the limit directly in coordinates and get
\begin{align*}(U\vdots V)_{\partial_{\zeta_i}\D}&(\{\hgam_\star,\hdel_\star,
\heps_\star\}\cup\{\halp_0,\hbet_0,
\heps_0\}\cup\bigcup_{j=1}^k\{\halp_j,\hbet_j,\hgam_j,\hdel_j,\heps_j\}\setminus\{\hzet_i\})
\\&=
\lim_{\hzet_i\to 0}(C\vdots D)_{\D}(\{\hgam_\star,\hdel_\star,
\heps_\star\}\cup\{\halp_0,\hbet_0,
\heps_0\}\cup\bigcup_{j=1}^k\{\halp_j,\hbet_j,\hgam_j,\hdel_j,\heps_j\}),\end{align*}as long as the left hand side is of full rank.
In the remaining two cases we should take the limit more carefully. Still, for $\hdel_\star$ with $\Purp_\D=\emptyset$, a simple calculation shows
\begin{align*}
(U\vdots V)_{\partial_{\delta_\star}\D}&(\{\hgam_\star,
\heps_\star\}\cup\{\halp_0,\hbet_0,
\heps_0\}\cup\bigcup_{j=1}^k\{\halp_j,\hbet_j,\hgam_j,\hdel_j,\heps_j\})
\\&=
\lim_{\substack{\hdel_\star\to 0\\\tilde\gamma_j/\delta_\star\to\hgam_j,~\text{if $\D_j$ is blue}\\\tilde\gamma_j=\hgam_j~\text{otherwise}}}(U\vdots V)_{\D}(\{\hgam_\star,\hdel_\star,
\heps_\star\}\cup\{\halp_0,\hbet_0,
\heps_0\}\cup
\bigcup_{j=1}^k
\{\halp_j,\hbet_j,\hgam_j,\hdel_j,
\heps_j
\}).
\end{align*}
When $\zeta_i=\varepsilon_h$ where $\D_h$ is the lowest blue chord, a direct calculation reveals that
\begin{align*}
(U\vdots V)_{\partial_{\varepsilon_h}\D}&(\{\hgam_\star,\hdel_\star,
\heps_\star\}\cup\{\halp_0,\hbet_0,
\heps_0\}\cup\bigcup_{j=1}^k\{\halp_j,\hbet_j,\hgam_j,\hdel_j,\heps_j\}\setminus\{\heps_h\})
\\&=
\lim_{\substack{\tilde{\varepsilon}_i\to 0\\\tilde\varepsilon_j/\tilde{\varepsilon}_i\to\heps_j,~\text{if $\D_j$ is a descendent of $\D_h$}\\\tilde\varepsilon_j=\heps_j~\text{otherwise}}}
(U\vdots V)_{\D}(\{\hgam_\star,\hdel_\star,
\heps_\star\}\cup\{\halp_0,\hbet_0,
\heps_0\}\cup
\bigcup_{j=1}^k
\{\halp_j,\hbet_j,\tilde{\gamma}_j,\hdel_j,
\heps_j
\}).
\end{align*}
Again the two equalities above hold as long as the left hand sides are of full ranks.
Thus, these spaces are indeed contained in the closure of the BCFW cell.
\end{proof}
\begin{rmk}
    It requires some work, but one can actually show that whenever the parameters we use are positive the outputs of the degenerate BCFW algorithms always have full rank.
\end{rmk}
\cref{prop:boundaries_domain} and the definition of neighboring boundaries have the following consequence.
\begin{obs}\label{obs:minimal_neighboring}
Let $\D_1,\D_2$ be two neighboring chord diagrams. Then there exist two proper subdiagrams $\D'_1,\D'_2$ of $\D_1,\D_2$ respectively, which are also neighboring, no proper subdiagrams of $\D'_1$ and $\D'_2$ are neighboring, and the generation sequence for passing from $\D'_1$ to $\D_1$ involves exactly the same steps and the generation sequence for passing from $\D'_2$ to $\D_2$. The last property is equivalent to requiring that diagrams obtained from $D_i$ by erasing the chords of $\D'_i$ are the same (including colors). 
Moreover, the rightmost top chord of $\D'_1.\D'_2$ is different.
\end{obs}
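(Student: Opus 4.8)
\textbf{Proof plan for \cref{obs:minimal_neighboring}.}

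The plan is to extract from the data of a shift $\shift_{\zeta_i}\D_1 = \D_2$ (with matched shift $\shift_{\sigma_j}\D_2 = \D_1$, using \cref{obs:bdry_matching_comb}) the minimal pair of neighboring subdiagrams, by locating the ``active region'' of the shift inside each diagram. Every shift operation in \cref{def:shift_ops} alters only a bounded collection of chords: in every case listed in \cref{obs:bdry_matching_comb} and illustrated in the diagrams of \cref{rmk:diag_shifts}, the chords whose support or color changes are $\D_i$ together with certain chords that are same-end ancestors of $\D_i$, a sibling of $\D_i$ that is head-to-tail with it, the parent of $\D_i$, or (in the colored cases) the yellow/red/blue chords ending at a common segment. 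In all cases these are finitely many chords, and they are all either $\D_i$ itself, ancestors of $\D_i$, or descendants of $\D_{\p(i)}$ that lie ``between'' $\D_{\p(i)}$ and $\D_i$. First I would make precise, case by case from the list in \cref{obs:bdry_matching_comb}, the statement: there is a proper subdiagram $\D'_1 \subseteq \D_1$ (in the sense of \cref{def:L=0_subdiags} — closed under taking descendants and containing $n$), minimal with the property that it contains all the chords that the shift $\shift_{\zeta_i}$ touches, and such that $\D_i$ (equivalently, the chord of $\D_2$ created by $\sigma_j$) is the rightmost top chord of $\D'_1$. Because the shift is an entirely local operation, $\D'_2 := \shift_{\zeta_i}\D'_1$ is a well-defined chord diagram and is a proper subdiagram of $\D_2$, and $\D'_1, \D'_2$ are neighboring by construction.

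The next step is to verify minimality: no proper subdiagram of $\D'_1$ is neighboring to a proper subdiagram of $\D'_2$. This follows because the subdiagram $\D'_1$ is chosen as the \emph{smallest} proper subdiagram containing all chords moved by the shift; any strictly smaller proper subdiagram omits at least one moved chord, so the shift simply does not make sense there — there is nothing to shift, or the shift target lies outside the index set. One needs to be slightly careful because ``proper subdiagram'' is defined recursively via pre/left/right/pre-FL steps (\cref{def:L=0_subdiags}); the point is that the chords touched by the shift always form a set of the form ``$\D_i$, plus chords between $\D_i$ and some ancestor, plus that ancestor'', and this set is exactly the chord set of $\D^{(m)}$ for the subdiagram obtained at the BCFW (or FL) step that introduces $\D_i$ — precisely the $\D^{(i)}$ construction from \cref{def:L=1sub_diags}. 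So $\D'_1$ is the subdiagram obtained after that generation step, possibly after further restriction to make $\D_i$ the rightmost top chord, and this is manifestly minimal.

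Then I would record the ``same remaining steps'' claim. By \cref{prop:boundaries_domain}, $S_{\partial_{\zeta_i}\D_1}$ is built by exactly the generation sequence of $S_{\D_1}$ with the single step corresponding to $\D_i$ replaced by its degenerate version; since the shift only rearranges the chords inside $\D'_1$ and leaves all the other chords (with their colors) untouched, the portion of the generation sequence of $\D_1$ that comes \emph{after} the step producing $\D'_1$ — i.e., the pre, left, right, pre-FL steps that attach the remaining chords — is literally the same sequence of operations as the corresponding portion for $\D_2$ built from $\D'_2$. Equivalently, erasing the chords of $\D'_1$ from $\D_1$ and erasing the chords of $\D'_2$ from $\D_2$ yields the same (colored) diagram, because shifting is supported inside $\D'_1$. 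Finally, the claim that the rightmost top chord of $\D'_1$ differs from that of $\D'_2$ is immediate: in every case of \cref{obs:bdry_matching_comb} the shift changes the rightmost top chord of the minimal active region — its support moves (diagrams (1)–(3),(6)–(9)), or its color changes (diagrams (4),(5),(10)–(15)), or it is replaced by a different chord entirely — which is exactly why $\D_i \ne \shift_{\zeta_i}$-image as rightmost top chords.

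The main obstacle I expect is the bookkeeping in identifying $\D'_1$ uniformly across the roughly fifteen shift types and their colored sub-cases: for the colored shifts (the $\shift_{\varepsilon_h}$ blue shift, the $\shift_{\abe}$ red shift, and the cases where a whole same-end chain or the yellow chord participates) one must check that the ``touched'' chords really do form a proper subdiagram in the recursive sense of \cref{def:L=0_subdiags}, i.e. that they are closed under descendants and that the construction $\D^{(i)}$ captures them. This is where most of the case analysis lives, but it is a finite check against the explicit list in \cref{obs:bdry_matching_comb} and the diagrams in \cref{rmk:diag_shifts}; there is no genuine conceptual difficulty, only the need to be systematic.
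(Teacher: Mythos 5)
The paper presents this observation as an immediate consequence of \cref{prop:boundaries_domain} together with the definition of shifts and gives no explicit proof, so your proposal is filling a genuine gap. Your overall strategy — locate the finite ``active region'' of a shift, close it up to a proper subdiagram in the recursive sense of \cref{def:L=1sub_diags}, and observe that everything outside it is unchanged — is the right one, and the last two paragraphs (same remaining steps; rightmost top chords differ) are fine once the active region is correctly identified.

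However, your identification of the active region with $\D^{(i)}$ ``for the subdiagram obtained at the BCFW (or FL) step that introduces $\D_i$'' is wrong in the $\theta$ cases. Recall from \cref{def:L=1sub_diags} that $\D^{(i)}$ contains only chords $\D_j$ with $(a_j,b_j)<(c_i,d_i)$; it therefore \emph{excludes} the sibling $\D_h$ that starts exactly where $\D_i$ ends, since $(a_h,b_h)=(c_i,d_i)$. But $\D_h$ is moved by the $\theta$-shift (\cref{def:shift_ops}, the case $\theta_i\in\Var_\D$: $\D_i$ is replaced by $(a_i,b_i,c_h,d_h)$ and inherits $\D_h$'s colour, and in the matched diagram $\D_h$ becomes a descendant of the new chord). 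So $\D_h$ must lie in $\D'_1$, yet it is absent from the subdiagram produced at the BCFW step that introduces $\D_i$: that step's left part stops at $b_i$ and its right part consists of descendants of $\D_i$, neither of which contains a sibling beginning at $(c_i,d_i)$. The correct minimal proper subdiagram in these cases is the one obtained at the step introducing the \emph{later} chord $\D_h$ — i.e.\ $\D^{(h)}$, which does contain $\D_i$ since $(a_i,b_i)<(c_h,d_h)$ and $\D_i$ is a descendant of $\p(h)=\p(i)$. More generally, the active region should be anchored at the chord with the latest ending segment among those the shift touches, which for the $\eta$ and colour shifts is an ancestor of $\D_i$ as you say, but for the $\theta$ shifts is a sibling that comes after $\D_i$. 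Your own phrasing (``descendants of $\D_{\p(i)}$ that lie `between' $\D_{\p(i)}$ and $\D_i$'') betrays the slip: a head-to-tail sibling is a descendant of $\D_{\p(i)}$ but is not between $\D_{\p(i)}$ and $\D_i$. Fixing the anchor to be the rightmost touched chord (and verifying that the resulting $\D^{(\cdot)}$ — possibly further trimmed by $\pre$ steps on markers outside the touched region — is indeed the smallest proper subdiagram containing the active set) closes the gap.
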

\begin{definition}\label{def:minimal_neighboring}
    In the scenario of \cref{obs:minimal_neighboring}, the subdiagrams $\D'_1,\D'_2$ are called the \emph{minimal neighboring subdiagrams of $\D_1,\D_2$}.
\end{definition}
\subsubsection{Matching strata}
The shift operation and the matching are justified by the following lemma.
\begin{lemma}\label{lem:bdry_matching_geo}
    Let $\partial_{\zeta_i}S_\D$ be a matchable boundary. Assume that $\match(\D,\zeta_i)=(\D',\zeta'_j).$ Then $\partial_{\zeta_i}S_\D=\partial_{\zeta'_j}S_{\D'}.$
\end{lemma}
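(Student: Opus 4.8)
\textbf{Proof strategy for Lemma \ref{lem:bdry_matching_geo}.}

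The plan is to reduce the statement to the ``minimal'' situation via \cref{obs:minimal_neighboring} and then check the base cases by hand. First I would invoke \cref{obs:minimal_neighboring} and \cref{def:minimal_neighboring} to produce minimal neighboring subdiagrams $\D'_1\subseteq\D=\D_1$ and $\D'_2\subseteq\D'=\D_2$ whose rightmost top chords differ, with the property that the generation recipes taking $\D'_1\to\D_1$ and $\D'_2\to\D_2$ consist of the identical sequence of $\pre$, $\bcfw$ and $\FL$ steps. Since $\zeta_i$ lives on a chord already present in $\D'_1$ (by the definition of the shift operations, a shift only touches the shifted chord and some of its same-end ancestors, all of which lie in the minimal subdiagram), the boundary stratum $\partial_{\zeta_i}S_\D$ is obtained from $\partial_{\zeta_i}S_{\D'_1}$ by applying exactly this common tail of $\pre,\bcfw,\FL$ steps — this is precisely the content of \cref{prop:boundaries_domain}, which constructs each matchable boundary by degenerating exactly one step of the generation recipe. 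The same holds on the other side for $\partial_{\zeta'_j}S_{\D'_2}$. Hence it suffices to prove the equality $\partial_{\zeta_i}S_{\D'_1}=\partial_{\zeta'_j}S_{\D'_2}$ at the minimal level, after which applying the common sequence of (honest, non-degenerate) operations to both sides preserves the equality.

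At the minimal level, the two subdiagrams $\D'_1,\D'_2$ differ by a single shift, and their rightmost top chords are different. I would then go through the list of shift types in \cref{def:shift_ops} — equivalently, through diagrams (1)--(15) of \cref{rmk:diag_shifts} and the matched pairs catalogued in the proof of \cref{obs:bdry_matching_comb} — and in each case identify the weak domino description of both $\partial_{\zeta_i}S_{\D'_1}$ and $\partial_{\zeta'_j}S_{\D'_2}$ using \cref{def:internal_bdries_domino}. The key observation is that \cref{def:internal_bdries_domino} characterizes each matchable boundary purely by which single element $\tilde\zeta_i\in\tVar_{\D'_1}$ is set to zero (with the rest of the domino inequalities imposed weakly), and \cref{obs:poly_ineqs} tells us that a weak domino form is defined by a $\D$-dependent list of vanishing entries/minors and sign conditions on the remaining entries/minors. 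So the claim becomes: the degenerate locus of $\D'_1$ at $\zeta_i$ and the degenerate locus of $\D'_2$ at $\zeta'_j$ carve out the \emph{same} collection of loopy matrices — the same vanishing set, the same (weak) sign pattern. In each of the finitely many cases this is a direct comparison of the two domino templates: setting $\zeta_i=0$ in the $\D'_1$-template produces a matrix whose nonzero support and minor-sign data coincides with that of the $\D'_2$-template with $\zeta'_j=0$. For the uncolored cases (diagrams (1)--(3),(6)--(9)) this comparison is essentially the one already carried out in \cite[Section 7]{even2021amplituhedron}; the genuinely new work is in the colored cases (diagrams (4),(5),(10)--(15)), where one must track how the red/blue/purple/yellow color assignments and the $\eta,\theta$ substitutions behave under the degeneration.

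The main obstacle I expect is bookkeeping in the colored shifts — particularly $\shift_{\varepsilon_h}$ (the blue shift), $\shift_{\abe}$ (the red shift), and the $\shift_{\gamma_\star}/\shift_{\delta_\star}$ family — where a single degeneration simultaneously recolors several chords and moves several same-end endpoints, so that the ``one variable set to zero'' on one side corresponds on the other side to a variable attached to a chord that has changed both its endpoint and its color. Here I would lean on \cref{cor:BCFW2dominoExplicit} (the explicit non-recursive formulas for the domino variables in terms of reduced BCFW coordinates) and on the alternative BCFW-style description of $S_{\partial_\bullet}$ via the degenerate operations $\mbcfw^\zeta,\FL^\zeta$ of \cref{def:degenerate_ops}: the equality $\mbcfw^\zeta$-image $=\mbcfw^{\zeta'}$-image for the appropriate matched pair can be verified by a direct (if tedious) matrix computation, exactly as in \cref{prop:boundaries_domain}'s proof, and this computation is colorblind in the sense that it only cares about which column of $v$ is zeroed and which $\pre/y/\rem$ corrections are applied. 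One subtlety to flag: \cref{prop:boundaries_domain} explicitly excludes the red boundary $\partial_{\delta_0}S_\D$ from its BCFW-like description, but $\partial_{\delta_0}S_\D$ is \emph{defined} in \cref{def:internal_bdries_domino} as $\partial_{\gamma_\star}S_{\shift_\abe\D}$, so the red-shift case of the lemma ($\match(\D,\delta_0)=(\shift_\abe\D,\gamma_\star)$) is true essentially by definition and needs no separate argument, while its partner case — recognizing $\partial_{\gamma_\star}S_{\shift_\abe\D}$ as a matchable boundary of the shifted diagram — must be checked against diagram (15). Closing the loop, once all cases are checked, \cref{obs:bdry_matching_comb} guarantees the matching is an involution, so the equality is symmetric and the proof is complete.
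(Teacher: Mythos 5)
Your proposal matches the paper's approach at the structural level: the reduction to minimal neighboring subdiagrams via \cref{obs:minimal_neighboring} and \cref{prop:boundaries_domain}, the case-by-case analysis along the shift list of \cref{def:shift_ops}, the observation that the red case is true by definition, and the idea of invoking \cref{obs:bdry_matching_comb} for symmetry. The paper itself notes that both routes you mention — commutation relations of degenerate operators, or direct weak-domino-form manipulation — are viable, and it distributes the labor the opposite way from you (commutation relations for the bulk, domino forms only for the blue and $\gamma_\star$-versus-$\delta_\star$ cases).

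The one place you are slightly too optimistic is the claim that the two degenerate loci ``carve out the same collection of loopy matrices — the same vanishing set, the same (weak) sign pattern.'' In the simplest uncolored cases (diagram (1), say) this is literally true: after setting the relevant entry to zero, the two templates have identical entries. But in the cases involving $\eta$ or $\theta$ (diagrams (6)--(9)) and in all the colored cases, the two boundary strata coincide only as subsets of the Grassmannian (same row spans), not as sets of literal matrices: one must exhibit an explicit reparameterization identifying the $\D$-domino variables on the boundary with the $\D'$-domino variables on the boundary. This is precisely what the tilde reparameterizations in the paper's \cref{lem:boundary_matching_operators} (items (4)--(8)) and the $\check C\vdots\check D$ constructions in the blue case accomplish, and it is genuine algebra, not a comparison of supports and signs. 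You do implicitly acknowledge this by falling back on ``a direct (if tedious) matrix computation exactly as in \cref{prop:boundaries_domain}'s proof,'' so executing your plan would surface the issue and you would then write down essentially the paper's commutation relations — but as stated, ``direct comparison of the two domino templates'' undersells the work. A minor additional point: for the $\gamma_\star$-matched-to-$\delta_\star$ case you would also need to observe (as the paper does) that it suffices to work after the minimal reduction has forced $d_\star\lessdot n-1$ with column $n-1$ identically zero, which is what makes the two descriptions visibly identical.
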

This lemma generalizes \cite[Lemma 7.17]{even2021amplituhedron}. It provides another evidence for how chord diagrams are suited for labeling BCFW cells, and how the combinatorics of the former govern the geometry of the latter.
\begin{proof}
    We will prove the lemma by direct calculation. One can do it either in the level of degenerate operations, where it is implied by several commutation relations, or in the level of weak domino forms. We will treat all cases except for the blue boundary and the case that a $\gamma_\star$-boundary is matched to a $\delta_\star$-boundary by showing commutation relations between the degenerate operators, and treat the remaining two cases by working in the domino form, since the calculations will be shorter there. There is nothing to prove in the case of the red boundary, since there this equality is the definition.
    \\\textbf{Commutations of degenerate operators.}
We will rely on the following lemma, whose proof is given below.
\begin{lemma}\label{lem:boundary_matching_operators}
\begin{enumerate}
\item\label{it:unobstracted start non short}
Let $b'$ be the element which succeeds $b.$
\[(\mbcfw^{\alpha})_{a,c}(L,[\zeta:\gamma:\delta:\varepsilon],\pre_b.R)=(\mbcfw^\beta)_{b,c}(\pre_{b'}.L,[\zeta:\gamma:\delta:\varepsilon],R).\]
\item\label{it:unobstracted end non short}
\[\pre_{n-1}(\mbcfw^{\gamma})_{a,c}(L,[\alpha:\beta:\zeta:\varepsilon],R)=(\mbcfw^\beta)_{a,d}(\pre_b.L,[\alpha:\beta:\zeta:\varepsilon],\pre_p.R).\]
\item\label{it:unobstracted short}
Assume now that $a\lessdot b\lessdot c\lessdot d.$ In this case the third input of $\mbcfw^\zeta$ is null.
\[\pre_{n-1}(\mbcfw^{\alpha})_{a,c}(L,[0:\beta:\gamma:\delta:\varepsilon])=(\mbcfw^\delta)_{b,d}(pre_c.L,[\beta:\gamma:\delta:\varepsilon]).\]
\end{enumerate}
In the cases below the passage between the left and right hand sides require an invertible rational transformation that takes positive data to positive data, and is explicated in the proof below. We denote the variables on the right hand, if they do not agree with those on the left, by a tilde notation. 
\begin{enumerate}[resume]
\item\label{it:obstructed end 2 step}
\begin{align*}&(\mbcfw)_{ce}\big((\mbcfw^\gamma)_{ac}(L,[\alpha_1:\beta_1:\delta_1:\varepsilon_1],M),~[\alpha_2:\beta_2:\gamma_2:\delta_2:\varepsilon_2],~R\big)=\\
&\quad\qquad=
(\mbcfw)_{ae}\big(L,~[{\talpha}_1:\tbeta_1:\tgamma_1:\tdelta_1:\tepsilon_1],~(\mbcfw^\gamma)_{ce}(\tilde{M},[\talpha_2:\tbeta_2:\tdelta_2:\tepsilon_2],R)\big)
\end{align*}\item\label{it:obstructed short 2 step}
Assume $a\lessdot b \lessdot c \lessdot d.$
\begin{align*}&(\mbcfw)_{ce}\big((\mbcfw^\alpha)_{ac}(L,[0:\beta_1:\gamma_1:\delta_1:\varepsilon_1]),~[\alpha_2:\beta_2:\gamma_2:\delta_2:\varepsilon_2],~R\big)=\\
&\quad\qquad=
(\mbcfw)_{be}\big(\pre_c.L,[\talpha_1:\tbeta_1:\tgamma_1:\tdelta_1:\tepsilon_1],(\mbcfw^\gamma)_{ec}([\talpha_2:\tbeta_2:\tdelta_2:\tepsilon_2],R)\big)
\end{align*}
\item\label{it:intermideate_short_case}Assume $a<b<c\lessdot d \lessdot e \lessdot f.$
\begin{align*}&(\mbcfw)_{ce}^\delta\big((\mbcfw)_{a,c}\big(L, [\alpha_1:\beta_1:\gamma_1:\delta_1:\varepsilon_1],R),
[\alpha_2:\beta_2:\gamma_2:0:\varepsilon_2]
\big)=
\\&=\pre_f.(\mbcfw)_{ad}\big(L,[\talpha_1:\tbeta_1:\tgamma_1:\tdelta_1:\tepsilon_1],((\mbcfw)_{ce}^\delta(\tilde{R},[\talpha_2:\tbeta_2:\tgamma_2:0:\tepsilon_2])_{\setminus f})
\big)
\end{align*}
where the subscript $_{\setminus f}$ means erasing the (zero) column $f.$
\item\label{it:intermediate_short_2}
Assume $a\lessdot b \lessdot c \lessdot d<e<f\lessdot n.$
\begin{align*}
&(\mbcfw)_{be}\big((\mbcfw^\delta)_{ac}(L,[\alpha_1:\beta_1:\gamma_1:0:\varepsilon_1])_{\setminus d},~[\alpha_2:\beta_2:\gamma_2:\delta_2:\varepsilon_2],~R\big)=\\
&\quad\qquad=
(\mbcfw)_{ae}\big(L,[\talpha_1:\tbeta_1:\tgamma_1:\tdelta_1:\tepsilon_1],(\mbcfw^\gamma)_{be}([\talpha_2:\tbeta_2:0:\tdelta_2:\tepsilon_2],R)\big)
\end{align*}
\item\label{it:chain_with_short_1}
Assume $a_h\lessdot b_h\leq a_{h-1}\lessdot b_{h-1}\ldots\leq a_1\lessdot b_1\leq c'\lessdot c\lessdot d\lessdot e\lessdot f<n,$ and let \[L_h\in\Mat_{k_h,[b_h]\cup\{n\};\ell_h},~L_{h-1}\in\Mat_{k_{h-1},[b_{h-1}]\cup\{n\};\ell_{h-1}},\ldots, L_1\in \Mat_{k_1,[b_1]\cup\{n\};\ell_1},~R\in\Mat_{k_0,\{b_1,\ldots,c,n\};\ell_0},\]
where every $\ell_i\in\{0,1\}$ and also $\sum\ell_i\in\{0,1\}.$
It holds that
\begin{align*}
(\mbcfw^\delta)_{ce}&\big\{(\mbcfw)_{a_hc}\big(
L_h,[\alpha_h:\beta_h:\gamma_h:\delta_h:\varepsilon_h],\\
&\quad(\mbcfw)_{a_{h-1}c}\big(
L_{h-1},[\alpha_{h-1}:\beta_{h-1}:\gamma_{h-1}:\delta_{h-1}:\varepsilon_{h-1}],
\cdots
,\pre_dR\big)\cdots
\big),
[\alpha_0:\beta_0:\gamma_0:0:\varepsilon_0]\\\
&=(\mbcfw)_{a_hd}\big(
\tilde{L}_h,[\talpha_h:\tbeta_h:\tgamma_h:\tdelta_h:\tepsilon_h],\\&\quad
(\mbcfw)_{a_{h-1}d}\big(
\tilde{L}_{h-1},[\talpha_{h-1}:\tbeta_{h-1}:\tgamma_{h-1}:\tdelta_{h-1}:\tepsilon_{h-1}],
\cdots
,(\mbcfw^\alpha)_{c'd}(\tilde{R},[0:\tbeta_0:\tgamma_0:\tdelta_0:\tepsilon_0])\big)\cdots
\big)
\end{align*}
\item\label{it:chain_with_short_2}
Assume $a_h\lessdot b_h\leq a_{h-1}\lessdot b_{h-1}\ldots\leq a_1\lessdot b_1\leq c'\lessdot c\lessdot d\lessdot e<f\lessdot g<n,$ and let \[L_h\in\Mat_{k_h,[b_h]\cup\{n\};\ell_h},~L_{h-1}\in\Mat_{k_{h-1},[b_{h-1}]\cup\{n\};\ell_{h-1}},\ldots, L_1\in \Mat_{k_1,[b_1]\cup\{n\};\ell_1},\]\[M\in\Mat_{k_0,\{b_1,\ldots,c,n\};\ell_0},~R\in\Mat_{k_{h+1},\{e,\ldots,f,g,n\};\ell_{h+1}}\]
where every $\ell_i\in\{0,1\}$ and also $\sum\ell_i\in\{0,1\}.$
It holds that
\begin{align*}
&(\mbcfw)_{cf}\big\{(\mbcfw)_{a_hc}\big(
L_h,[\alpha_h:\beta_h:\gamma_h:\delta_h:\varepsilon_h],\\
&(\mbcfw)_{a_{h-1}c}\big(
L_{h-1},[\alpha_{h-1}:\beta_{h-1}:\gamma_{h-1}:\delta_{h-1}:\varepsilon_{h-1}],
\cdots
,\pre_dM\big)\cdots
\big),\\&\qquad\qquad
[\alpha_0:\beta_0:\gamma_0:0:\varepsilon_0]
,(\mbcfw^\gamma)_{df}([\alpha_{h+1}:\beta_{h+1}:0:\delta_{h+1}:\varepsilon_{h+1}],R)
\big\}=\\
&=(\mbcfw)_{df}\big\{
(\mbcfw)_{a_hd}\big[
\tilde{L}_h,[\talpha_h:\tbeta_h:\tgamma_h:\tdelta_h:\tepsilon_h],
(\mbcfw)_{a_{h-1}d}\big(
\tilde{L}_{h-1},[\talpha_{h-1}:\tbeta_{h-1}:\tgamma_{h-1}:\tdelta_{h-1}:\tepsilon_{h-1}],
\cdots
\\&\qquad\qquad\qquad\qquad\qquad\qquad\qquad\qquad\cdots(\mbcfw^\alpha)_{c'd}(\tilde{M},[0:\tbeta_0:\tgamma_0:\tdelta_0:\tepsilon_0])\big)\cdots
\big)\big],
[\talpha_{h+1}:\cdots:\tepsilon_{h+1}],\tilde{R}
\big\}
\end{align*}
\end{enumerate}
\end{lemma}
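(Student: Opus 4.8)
The statement to prove is \cref{lem:boundary_matching_operators}, which is the technical heart of \cref{lem:bdry_matching_geo}. It asserts a long list of commutation relations between ordinary BCFW maps and BCFW maps degenerated at various coordinates, some of which hold on the nose and some only after an invertible positive rational change of variables.

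\textbf{Overall strategy.} All these relations are identities of rational maps between matrix spaces, so the plan is to verify each one by a direct computation on matrix representatives, writing out the output $C\vdots D$ explicitly via the schematic path-matrix of \cref{def:bcfw-map} and its degenerate analogues in \cref{def:degenerate_ops}. Since every operation in sight is built from the elementary column/row operations $\pre_h,\inc_{h|l},\rem_h,\scale_h,x_h(t),y_h(t)$ (together with $\addL_{AB}$ in the forward-limit items \cref{it:intermideate_short_case} and the $\FL$-related cases), the whole calculation reduces to a bookkeeping exercise in how these elementary operations commute past each other. The key underlying facts are: (a) $x_h(t),y_h(t)$ act on disjoint pairs of adjacent columns commute; (b) $\pre_h$ inserts a zero column and hence passes freely through any $x,y$ supported away from $h$; (c) $\rem_h$ and $\pre_h$ on the same index are mutually inverse; and (d) products like $y_c(\tfrac\gamma\delta)y_d(\tfrac\delta\varepsilon)$ telescope, which is exactly what produces the rational changes of variables $\zeta_i\mapsto\tilde\zeta_i$ in items \cref{it:obstructed end 2 step}--\cref{it:chain_with_short_2}. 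I would organize the proof as: first dispatch the ``clean'' commutations \cref{it:unobstracted start non short}, \cref{it:unobstracted end non short}, \cref{it:unobstracted short} (no tilde), then the ``two-step'' obstructed cases \cref{it:obstructed end 2 step}, \cref{it:obstructed short 2 step}, then the mixed degenerate/forward-limit cases \cref{it:intermideate_short_case}, \cref{it:intermediate_short_2}, and finally the two ``chain'' identities \cref{it:chain_with_short_1}, \cref{it:chain_with_short_2} by induction on the length $h$ of the chain, using the previous items as the base and inductive steps.

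\textbf{Concretely, for each item.} For the unobstructed items I would exhibit the common value: e.g. for \cref{it:unobstracted start non short}, both sides equal the loopy matrix whose row $\tr$ is $\beta\,\ee_b+(-1)^{k_R}\gamma\,\ee_c+(-1)^{k_R}\delta\,\ee_d+(-1)^{k_R}\varepsilon\,\ee_n$ (the $a$-entry having been killed on the left by degenerating at $\alpha$, and on the right the chord simply starts at $b$), with the left block $\pre$-padded appropriately and the right block acted on by $y_d(\tfrac\delta\varepsilon)$; the point is that padding $L$ at $b'$ versus $\pre_b$-padding $R$ produces the same ambient index set and the same column placements. For \cref{it:obstructed end 2 step} I would track the column $c$ (resp. $e$): applying $\mbcfw^\gamma$ at $ac$ kills the $c$-entry of the inner chord's special vector, and then the outer $\mbcfw$ at $ce$ acts by $y_c(\tfrac{\gamma_2}{\delta_2})y_d(\tfrac{\delta_2}{\varepsilon_2})$ on the inner output; on the right side one first does $\mbcfw$ at $ae$ and then $\mbcfw^\gamma$ at $ce$ inside $M$, and the discrepancy in how the $y$'s compose with the already-present $y_d(\tfrac{\delta_1}{\varepsilon_1})$ is absorbed by setting $\tilde\delta_1,\tilde\varepsilon_1,\ldots$ via explicit telescoping formulas, which I would write down and check are invertible and positivity-preserving (ratios of the original positive parameters). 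The forward-limit items \cref{it:intermideate_short_case}, \cref{it:intermediate_short_2} are handled the same way, now also commuting $\addL_{AB}$ past the relevant $\pre/\rem$ using \cref{lem:effect_ops}, which tells us exactly what $\addL_{AB}$ does to Pl\"ucker coordinates so that the identity can be checked coordinate-wise.

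\textbf{Main obstacle.} The genuinely hard part is the combinatorial-bookkeeping explosion in the chain identities \cref{it:chain_with_short_1} and \cref{it:chain_with_short_2}: there one must commute a degenerate operation past an entire nested tower of $h$ ordinary BCFW products, and at each level of the tower the change of variables $\zeta_i\mapsto\tilde\zeta_i$ gets modified by the level below, so the explicit tilde-formulas become recursive. The plan is to prove these two by induction on $h$, where the base case $h=1$ is (a minor variant of) \cref{it:obstructed short 2 step}/\cref{it:intermediate_short_2}, and the inductive step peels off the outermost $\mbcfw_{a_h c}$ using the already-proven two-step commutation \cref{it:obstructed end 2 step} (or \cref{it:intermideate_short_case}) together with the inductive hypothesis applied to the inner tower; the compatibility of the resulting composite change of variables with positivity follows because at each step it is a composition of the elementary invertible positive substitutions from the lower items. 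The bulk of the writing effort — which I would largely relegate to "a direct though lengthy computation" — is in pinning down these recursive substitution formulas precisely enough to make the induction go through, and in checking the several short/very-short edge cases ($a\lessdot b\lessdot c\lessdot d$, etc.) where a column coincides with $n-1$ or with $c_\star,d_\star$ and an extra $\pre$ or $\rem$ appears.
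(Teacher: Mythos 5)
Your proposal follows essentially the same path as the paper: direct matrix verification for items (1)--(7) (the paper writes out explicit invertible positive change-of-variables formulas for items (4)--(7) and checks item (4) in full, the others being analogous), and then items (8) and (9) are obtained by iterating item (6), and item (7) followed by item (8), respectively, exactly the recursive ``peel-off'' you describe. One small correction: items (6) and (7) are commutations involving $\mbcfw^\delta$ (a degenerate \emph{BCFW} product), not the forward limit, so no $\addL_{AB}$ or $\cref{lem:effect_ops}$ bookkeeping for it enters there --- they are of the same elementary nature as items (4), (5) --- and the base of the iteration for item (8) is a trivial rewriting closer in spirit to item (3) than to items (5) or (7).
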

We now use the previous lemma to prove the claim regarding uncolored matchable boundaries whose matched boundary is also uncolored. Denote by $\emptyset$ the empty diagram. Suppose $\check{\D}=\shift_{\zeta_i}\D$ and write  $\D',\check{\D}'$ for their minimal neighboring subdiagrams~\cref{obs:minimal_neighboring}. Then $\check{\D}'=\shift_{\zeta_i}\D'$, if we label chords according to their label in $\D.$ Since the steps of the generation sequences of $\D,\check\D$ which comes after generating $\D',\check\D'$ respectively, are the same, if the claim holds for ${\D}',\check{\D}'$ it will also hold for $\D,\check\D.$ Thus we may assume that in all cases $\D,\check\D$ are their own minimal neighboring subdiagrams. We will follow the cases of
\cref{obs:bdry_matching_comb}.
\begin{itemize}
\item[--] $\match(\D,\alpha_i)=(\shift_{\alpha_i}\D,\beta_i),$ where $c_i>b_i+1.$ $\partial_{\alpha_i}S_\D=S_{\D_L}\bcfw^\alpha_{a,c}S_{\D_R}$ where $\D_R$ has index set $\{b,b+1,\ldots,n\}$ and no chord starts at $b,$ and $\D_L$ has index set $[b]\cup\{n\}.$ By \cref{it:unobstracted start non short} this space equals $S_{\D_L}\bcfw^\beta_{b,c}S_{\D'_R}=\partial_{\beta_i}S_{\check{\D}},$ where the chord diagram $\D'_R$ is the same as $\D_R$ but starts at $b+1.$
\item[--]$\match(\D,\alpha_i)=(\shift_{\alpha_i}\D,\delta_i),$ $\D_i$ is short and no chord starts where it ends. In this case, since the shift is defined, there is no chord which descends from $\D_i.$ 
There are two possibilities, either no other chord ends where $\D_i$ ends. In this case we can write $\partial_{\alpha_i}S_\D=\pre_{n-1}S_{\D_L}\mbcfw^\alpha_{a,c}S_{\D_R},$ where $\D_R$ is the empty diagram on indices $b,c,d,n$ and $d=n-2.$ By \cref{it:unobstracted short} this space equals 
$\pre_c.S_{\D_L}\mbcfw^\delta_{b,d}S_{\D_R}=\partial_{\delta_i}S_{\check{\D}}.$

In the other case there is a chain of same-end ancestors of $\D_i$ of the same color.
We use the indices of \cref{it:chain_with_short_1}. We have
\[\partial_{\alpha_i}S_\D=
S_{\D_h}\bcfw_{a_hd}\left(S_{\D_{h-1}}\bcfw_{a_{h-1}d}(S_{\D_R}\bcfw_{c'd}^\alpha\emptyset)\cdots\right),\]where $\emptyset$ denotes the empty diagram.
By that item this space equals
\[\left(S_{\D_h}\bcfw_{a_hc}(S_{\D_{h-1}}\bcfw_{a_{h-1}c}(\cdots\bcfw_{a_1c}S_{\pre_d\D_R})\cdots)\right)\bcfw_{ce}^\delta\emptyset=
\partial_{\delta_j}S_{\shift_{\alpha_i}\D}.
\]
\item[--]$\match(\D,\alpha_i)=(\shift_{\alpha_i}\D,\gamma_j).$
In this case there is a chord which starts where $\D_i$ ends. Again we split into the case where no ancestor of $\D_i$ ends where $\D_i$ ends, and the case where there is a chain of same end ancestors of the same color.
For the first case, we use the indices of \cref{it:obstructed short 2 step}. 
\[\partial_{\alpha_i}S_\D=(S_{\D_L}\bcfw_{ac}^\alpha\emptyset)\bcfw_{ce}S_{\D_R}.
\]
By that item this space equals \[S_{\pre_c\D_L}\bcfw_{be}(\emptyset\bcfw^\gamma_{ce}S_{\D_R})=\partial_{\gamma_j}S_{\shift_{\alpha_i}\D}.\]

In the second case we use the indices of \cref{it:chain_with_short_2}.
\[\partial_{\alpha_i}S_\D=
\left\{S_{\D_h}\bcfw_{a_hd}\left(S_{\D_{h-1}}\bcfw_{a_{h-1}d}(\cdots(S_{\D_M}\bcfw_{c'd}^\alpha\emptyset)\cdots\right)\right\}\bcfw_{df}S_{\D_R}.\]
By that item this space equals
\[\left(S_{\D_h}\bcfw_{a_hc}(S_{\D_{h-1}}\bcfw_{a_{h-1}c}(\cdots S_{\pre_d\D_M})\cdots)\right)\bcfw_{cf}(\emptyset\bcfw_{df}^\gamma S_{\D_R})=
\partial_{\gamma_j}S_{\shift_{\alpha_i}\D}.
\]

\item[--]$\match(\D,\gamma_i)=(\shift_{\gamma_i}\D,\delta_i),~i\neq \star$ This case is similar to the case $\match(\D,\alpha_i)=(\shift_{\alpha_i},\beta_i),$ and follows by the same argument using \cref{it:unobstracted end non short}.
\item[--]$\match(\D,\gamma_i)=(\shift_{\gamma_i}\D,\gamma_j).$ This case is similar to the first case $\match(\D,\alpha_i)=(\shift_{\alpha_i},\delta_j),$ where $\D_i$ has no same-end, same-color ancestor, and follows by the same argument using \cref{it:obstructed end 2 step}.
\end{itemize}
We turn to the two remaining cases.
\\\textbf{The blue boundary and its matched partner}
Let $C\vdots D$ be a weak domino representative of $U\vdots V\in\partial_{\varepsilon_h}S_\D,$ where $h$ is the lowest blue chord. Assume first it is not very short.
Define $\check C\vdots \check D$ from $C\vdots D$ as follows. For every $j\in\{\star\}\Red_\D\cup\Blue_\D\setminus\{h\}$ we subtract from the corresponding row of $C$ or $D$ a multiple of $C_{h}$ which cancels its $d_{\star}$ entry. Since $\D_h$ is not very short this also puts the $c_\star$ entry to $0$. For $D_\star$ we also multiply by $-\sgn(\gamma_{h})=(-1)^{\below(\D_{h})+1},$ to fix signs. $\check D_0=(-1)^{\below(\D_{h})+1}D_0.$ Other rows of $\check C\vdots\check D$ are the same as the corresponding rows of $C\vdots D.$ The claim will follow from
\begin{lemma}\label{lem:blue_chane_of_coords_verification}
    $\check C\vdots \check D$ is a weak $\check\D=\shift_{\varepsilon_h}\D-$domino representative of $U\vdots V,$ in which $\theta^{\check\D}_\star=0,$ and other variables have the required sign, that is, an element of $\partial_{\gamma_\star}S_{\check\D}.$ Moreover, this procedure yields a diffeomorphism between $\partial_{\gamma_\star}S_{\check\D}$ and $\partial_{\varepsilon_h}S_\D.$ 
\end{lemma}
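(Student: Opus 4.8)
\textbf{Proof plan for \cref{lem:blue_chane_of_coords_verification}.}
The plan is to verify, purely by hand, that the linear change of rows $C\vdots D \mapsto \check C\vdots\check D$ described just above carries weak $\D$-domino representatives of $U\vdots V$ with $\varepsilon_h=0$ to weak $\check\D$-domino representatives with $\theta^{\check\D}_\star=0$, and that it is invertible with a smooth inverse. First I would check that $\check C\vdots\check D$ is an \emph{almost} $\check\D$-domino pair: one must confirm that each modified row has support (and vanishing $2\times 2$ minors) consistent with the $\check\D$-domino pattern. This is where the combinatorics of $\shift_{\varepsilon_h}\D$ from \cref{def:shift_ops} enter --- in $\check\D$ the chord $\D_h$ and all purple chords become black, the other red/blue chords are re-routed to end at $(a_h,b_h)$, their former same-end descendants become purple, and $\D_\star=(a_h,b_h)$. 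The row operation ``subtract the multiple of $C_h$ that kills the $d_\star$ (hence also $c_\star$, using that $\D_h$ is not very short) entry'' is exactly the operation that moves the yellow domino of these rows from $(c_\star,d_\star)$ to $(a_h,b_h)$, so after the operation their ending dominoes sit where $\shift_{\varepsilon_h}\D$ predicts. I would track each of the affected rows ($D_\star$, $D_0$, the red rows, the other blue rows, and the descendants that turn purple) against the corresponding clause of \cref{def:domino_entries} to see that the new matrix is indeed in almost $\check\D$-domino form.

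Next I would compute the new domino variables in terms of the old ones and read off the signs. Using \cref{obs:ww} and the explicit formulas of \cref{cor:BCFW2dominoExplicit}, the variable $\theta^{\check\D}_\star=\det\begin{pmatrix}\alpha_h&\beta_h\\\gamma^{\check\D}_\star&\delta^{\check\D}_\star\end{pmatrix}$ should come out identically $0$: this is precisely the statement that, after the subtraction, the $\check\D$-yellow row $\check D_\star$ is proportional to the starting domino of $\D_h$ on its $(a_h,b_h)$ coordinates, which is forced because we used $C_h$ itself to perform the cancellation and $\varepsilon_h=0$ means the $(a_h,b_h)$ part of $C_h$ is not contaminated by an inherited domino. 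The sign corrections (multiplying $\check D_\star$ by $(-1)^{\below(\D_h)+1}$ and $\check D_0$ by the same factor) are dictated by \cref{obs:iterated_etas} and the domino sign rules \cref{def:L=1domino_signs}, exactly so that all remaining $\check\D$-domino entries and $\eta,\theta$'s acquire the signs required of a weak $\check\D$-domino form; I would check these case by case, following the bookkeeping of signs under $\shift_{\varepsilon_h}$ already done in the proof of \cref{lem:L=1_BCFW2Domino}. Since $U\vdots V\notin\SA$ (matchable interior boundary) and we are on $\partial_{\varepsilon_h}S_\D$, \cref{lem:uniqueness_or_SA} and \cref{lem:Srem_cases} give full rank of $C,C+D$, and the operation is an invertible linear change on the row spans, so $\check C,\check C+\check D$ retain full rank; hence $\check C\vdots\check D$ is a genuine weak $\check\D$-domino representative.

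Finally, to get the diffeomorphism: the map $C\vdots D\mapsto\check C\vdots\check D$ is given by explicit rational (in fact, on the relevant open locus, polynomial) formulas in the domino entries, and its inverse --- re-add the appropriate multiple of the $\check\D$-black row that used to be $\D_h$ to recover the $(c_\star,d_\star)$-dominoes, and undo the sign twists --- is visibly of the same type. Passing to the gauge-fixed slices (e.g. $\varepsilon_i=1$ as in the proof of \cref{thm:dominoAndBCFWparams}), both $\partial_{\varepsilon_h}S_\D$ and $\partial_{\gamma_\star}S_{\check\D}$ are parameterized by the surviving reduced domino variables, and the change of coordinates between the two parameterizations is the smooth invertible map just described; this gives the diffeomorphism. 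The main obstacle I anticipate is not any single computation but the sheer number of configurations to check in the first two paragraphs: one has to separately handle whether $\D_0$ is a top chord, how long the chain of same-end ancestors above $\D_h$ is, which chords turn purple, and the interaction with the ``weak'' (rather than strict) sign inequalities --- i.e. making sure that when some old domino entry was allowed to vanish, the corresponding new entry is still allowed to vanish and nothing that should be nonzero is forced to $0$. The non-very-short hypothesis on $\D_h$ is used exactly to ensure the single cancellation kills both the $c_\star$ and $d_\star$ entries at once; the very short case is the separate computation alluded to in the sentence ``Assume first it is not very short'' and would be handled analogously but with $\D_h$ re-routed to end at $(b_h,b_h+1)$ or at the start of a chord following $(c_\star,d_\star)$, as in diagram (13) of \cref{rmk:diag_shifts}.
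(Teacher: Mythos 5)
Your proposal follows essentially the same route as the paper's proof: verify the almost $\check\D$-domino form, read $\theta^{\check\D}_\star=0$ off the proportionality of $\pr_{a_hb_h}\check D_\star$ with the starting domino of $\D_h$, check the signs via the explicit formulas against \cref{def:L=1domino_signs}, and note that the construction is visibly invertible and smooth. One small calibration: you attribute $\theta^{\check\D}_\star=0$ partly to ``$\varepsilon_h=0$ means the $(a_h,b_h)$ part of $C_h$ is not contaminated by an inherited domino,'' but the proportionality on the $(a_h,b_h)$ entries is automatic from the cancellation (the inherited domino sits at $(a_{\p(h)},b_{\p(h)})$, not at $(a_h,b_h)$); what $\varepsilon_h=0$ actually buys is that subtracting a multiple of $C_h$ from $D_\star$ does not dump unwanted entries at $(a_{\p(h)},b_{\p(h)})$, so that $\check D_\star$ is still supported where the $\check\D$-yellow row is allowed to be. Also, full rank of $C,C+D$ is already part of the hypothesis (being a weak $\D$-domino \emph{representative}), so the extra citations to \cref{lem:uniqueness_or_SA} and \cref{lem:Srem_cases} are harmless but unnecessary.
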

\begin{proof}
It is immediate from the construction that the resulting $1-$loop matrix is an almost $\check\D-$domino pair, with all domino variables finite and defined, and since $C\vdots D$ has full rank, also $\check C\vdots \check D$ has. $\theta^{\check\D}_\star=0$ is also immediate from the construction, since the $(c_\star^{\check\D},d_\star^{\check\D})$ entries of $\check D_\star$ are proportional to the starting domino of $\check C_{h_*}.$
The only part of the claim which requires verification is the domino signs. 
For this we first need to write the expressions for the new rows.
\[\check D_\star = (-1)^{\below(\D_{h})+1}(D_\star-\frac{1}{\gamma_{h}+\delta_{h}}C_{h}).\] $\gamma_\star^{\check\D}, \delta_\star^{\check\D},\varepsilon_\star^{\check\D}$ have the correct signs by construction. 
Since the ending domino of $\check \D_\star$ is positively proportional to the starting domino of $D_{h_*},$ if $\D_{h_*}$ starts where a sibling $\D_i$ ends, then this chord becomes a top purple chord. The weak sign of $\theta_i^{\D}$ induces the correct weak sign of $\eta_i^{\check\D}.$ Similarly, since the ending dominoes of $D_\star,\check{C}_{h}=C_{h}$ are proportional with proportionality constant of sign $(-1)^{\below(\D_{h})},$ if $\D_i$ is the top purple chord (if exists) in $\D,$ then $\check\D_i$ is a same end child of $\check\D_{h}.$ The weak sign of $\eta^{\D}_i$ induces the correct weak sign of $\eta^{\check\D}_i.$

A non top red row $C_i$ has the form \[\alpha_i\ee_{a_i}+\beta_i\ee_{b_i}+\gamma_iD_\star+\delta_iD_0+\varepsilon_i(\alpha_{\p(i)}\ee_{a_{\p(i)}}+\beta_{\p(i)}\ee_{b_{\p(i)}}),\]and hence
\begin{align*}\check C_i &= \alpha_i\ee_{a_i}+\beta_i\ee_{b_i}+\gamma_iD_\star+\delta_iD_0+\varepsilon_i([\alpha_{\p(i)}\ee_{a_{\p(i)}}+\beta_{\p(i)}\ee_{b_{\p(i)}})-\frac{\gamma_i}{\gamma_{h}+\delta_{h}}C_{h}=\\&=
\alpha_i\ee_{a_i}+\beta_i\ee_{b_i}+(-1)^{\below(\D_{h})+1}\gamma_i\check{ D}_\star+(-1)^{\below(\D_{h})+1}\delta_i\check{D}_0+\varepsilon_i([\alpha_{\p(i)}\ee_{a_{\p(i)}}+\beta_{\p(i)}\ee_{b_{\p(i)}}).
\end{align*}
The signs of $\gamma^{\check\D}_i,\delta^{\check\D}_i$ should indeed be $(-1)^{\below(\D_{h})+1}$ times those of $\gamma^{\D}_i,\delta^{\D}_i,$ by the domino signs and the definition of $\below(-).$ The signs of $\eta_i$ do not change by this new scaling.

Finally, a blue chord $C_i$ has the form
\[\alpha_i\ee_{a_i}+\beta_i\ee_{b_i}+\gamma_iu_\star+\delta_iD_\star+\varepsilon_i(\alpha_{\p(i)}\ee_{a_{\p(i)}}+\beta_{\p(i)}\ee_{b_{\p(i)}}),\]and hence $\check C_i$ equals
\begin{align*}&
\alpha_i\ee_{a_i}+\beta_i\ee_{b_i}+\gamma_iu_\star+\delta_iD_\star+\varepsilon_i([\alpha_{\p(i)}\ee_{a_{\p(i)}}+\beta_{\p(i)}\ee_{b_{\p(i)}})-\frac{\gamma_i+\delta_i}{\gamma_{h}+\delta_{h}}C_{h}=\\
&
\alpha_i\ee_{a_i}+\beta_i\ee_{b_i}+(-1)^{\below(\D_{h_*})+1}\gamma^{\check\D}_iu^{\check\D}_\star+\delta^{\check\D}_i\check{D}_\star+
\varepsilon_i([\alpha_{\p(i)}\ee_{a_{\p(i)}}+\beta_{\p(i)}\ee_{b_{\p(i)}})
,
\end{align*}
where \[u^{\check\D}_\star=\pr_{a_{h}b_{h}}\check{D}_\star = \frac{(-1)^{\below(\D_{h})}}{\gamma_{h}+\delta_{h}}\alpha_{h}\ee_{a_{h}}+\beta_{h}\ee_{b_{h}},\]and\[\delta^{\check\D}_i=(-1)^{\below(\D_{h})+1}\frac{\delta_i\gamma_{h}-\gamma_i\delta_{h}}{\gamma_{h}},\qquad\qquad\gamma^{\check\D}_i=(-1)^{\below(\D_{h})+1}\gamma_i(1+\frac{\delta_{h}}{\gamma_{h}}).\]
It is immediate from the domino sign rules that $\gamma_i^{\check\D}$ has the correct sign, and a simple calculation to see that $\delta_i^{\check\D}$ and $\eta_i^{\check\D},$ when $\p(i)\in\Blue_\D$, have the correct signs. Since no other signs can change, this finishes the verification.

This construction is clearly invertible, since we can reverse each step and the verification of signs is the same. The correspondence and its inverse are smooth.
\end{proof}
If $\D_h$ is very short, we act similarly, only that if there is a chord $\D_l$ which start at $(c_\star,d_\star)$ we also replace $C_h,C_l$ by $C'_{h},D'_h$ defined as
\[C'_h=C_h-\frac{\pr_{d_\star}C_h}{\pr_{d_\star}C_l}C_l,\qquad C'_l=C_l-\frac{\pr_{b_{\p(l)}}C_l}{\pr_{b_{\p(l)}}C'_h}C'_h.\]
These substitutions correspond to replacing $\D_h$ by a chord of support $(b_h,c_\star,c_l,d_l)$ and making it a parent to the chord that used to be labeled $\D_l.$ As usual $b_{\p(l)}=n$ if $\p(l)=\emptyset.$ 
Similarly, if there are black chords ending at $(c_\star,d_\star)$ we subtract a multiple of $C_h$ to set their $d_\star$ entry to $0.$ This operation corresponds to shifting their ends one marker to the left.

We then relabel the chords to agree with our labeling scheme for $\shift_{\varepsilon_h}\D.$
By construction the supports of chords agree with the expected supports for $\shift_{\varepsilon_h}\D.$ Most signs have been verified in the proof of \cref{lem:blue_chane_of_coords_verification} and the few remaining signs are easily verified.  
\\\textbf{The case $\match(\D,\gamma_\star)=(\shift_{\gamma_\star}\D,\delta_\star).$}
By our assumption on the minimality of the subdiagrams $\D_\star$ has no ancestors, no chord starts to its right and $d_\star\lessdot n-1$. Moreover, when the shift is defined the $(n-1)$th column is a zero column. In the level of matrices, the matrices of $\partial_{\gamma_\star}S_\D$ differ from those of $S_\D$ by setting the $c_\star$ entry of the yellow, red and blue chords to $0.$ The shifted diagram has no purple chords, and the yellow chord is $(d_\star,n-1).$ Matrices in $\partial_{\delta_\star}S_{\shift_{\gamma_\star}\D}$ differ from those of $S_{\shift_{\gamma_\star}\D}$ by setting the $n-1$th entry of the yellow, red and blue chords to $0,$ thus obtaining matrices of the exact same form, and satisfying the same sign rules, as those of $\partial_{\gamma_\star}S_\D.$

We have covered all cases, and the proof follows.
\end{proof}
\begin{proof}[Proof of \cref{lem:boundary_matching_operators}]
The first three items follow immediately from the definitions and direct substitution. 

The proofs of \cref{it:obstructed end 2 step},~\cref{it:obstructed short 2 step},~\cref{it:intermideate_short_case} and \cref{it:intermediate_short_2} are similar, so we write the explicit transformation for them, but prove only one case \cref{it:obstructed end 2 step}. All substitutions are easily seen to be invertible in the level of projective vectors with positive entries. 
The substitutions are:
\\\textbf{\cref{it:obstructed end 2 step}: }$M=\scale_n(\frac{\tdelta_2\tepsilon_1}{\tdelta_2\tepsilon_1+\tdelta_1\tepsilon_2}).\tilde{M}$ and 
\begin{align*}
    &\alpha_1=(\tdelta_2+\frac{\tdelta_1}{\tepsilon_1}\tepsilon_2)\talpha_1,~ \beta_1=(\tdelta_2+\frac{\tdelta_1}{\tepsilon_1}\tepsilon_2)\tbeta_1,\delta_1 =\tdelta_1\tbeta_2 
    ~\varepsilon_1=\tdelta_2\tepsilon_1
    \\&\alpha_2=\talpha_2,~\beta_2=\tbeta_2,~\gamma_2=\frac{\tgamma_1}{\tdelta_1}(\tdelta_2+\frac{\tdelta_1}{\tepsilon_1}\tepsilon_2),~\delta_2=\tdelta_2+\frac{\tdelta_1}{\tepsilon_1}\tepsilon_2,~\varepsilon_2=\tepsilon_2.
\end{align*}
\textbf{\cref{it:obstructed short 2 step}: }
 \begin{align*}
    &\beta_1=(\tdelta_2+\frac{\tdelta_1}{\tepsilon_1}\tepsilon_2)\talpha_1,~ \gamma_1=(\tdelta_2+\frac{\tdelta_1}{\tepsilon_1}\tepsilon_2)\tbeta_1,\delta_1 =\tdelta_1\tbeta_2 
    ~\varepsilon_1=\tdelta_2\tepsilon_1
    \\&\alpha_2=\talpha_2,~\beta_2=\tbeta_2,~\gamma_2=\frac{\tgamma_1}{\tdelta_1}(\tdelta_2+\frac{\tdelta_1}{\tepsilon_1}\tepsilon_2),~\delta_2=\tdelta_2+\frac{\tdelta_1}{\tepsilon_1}\tepsilon_2,~\varepsilon_2=\tepsilon_2.
\end{align*}
\textbf{\cref{it:intermideate_short_case}: }$R=\scale_n(\frac{\tdelta_2\tepsilon_1}{\tdelta_2\tepsilon_1+\tdelta_1\tepsilon_2}).\tilde{R}$ and
\begin{align*}
    &\alpha_1=(\tgamma_2+\frac{\tdelta_1}{\tepsilon_1}\tepsilon_2)\talpha_1,~ \beta_1=(\tgamma_2+\frac{\tdelta_1}{\tepsilon_1}\tepsilon_2)\tbeta_1,\delta_1 =\tdelta_1\tbeta_2 
    ~\varepsilon_1=\tgamma_2\tepsilon_1
    \\&\alpha_2=\talpha_2,~\beta_2=\tbeta_2+\frac{\tgamma_1\tgamma_2}{\tdelta_1}+\frac{\tgamma_1\tepsilon_2}{\tepsilon_1},~\gamma_2=\tgamma_2+\frac{\tdelta_1}{\tepsilon_1}\tepsilon_2,~\varepsilon_2=\tepsilon_2.
\end{align*}
\\\textbf{\cref{it:intermediate_short_2}: }
 \begin{align*}
    &\alpha_1=(\tdelta_2+\frac{\tdelta_1}{\tepsilon_1}\tepsilon_2)\talpha_1,~ \beta_1=(\tdelta_2+\frac{\tdelta_1}{\tepsilon_1}\tepsilon_2)\tbeta_1,\gamma_1 =\tdelta_1\tbeta_2 
    ~\varepsilon_1=\tdelta_2\tepsilon_1
    \\&\alpha_2=\talpha_2,~\beta_2=\tbeta_2,~\gamma_2=\frac{\tgamma_1}{\tdelta_1}(\tdelta_2+\frac{\tdelta_1}{\tepsilon_1}\tepsilon_2),~\delta_2=\tdelta_2+\frac{\tdelta_1}{\tepsilon_1}\tepsilon_2,~\varepsilon_2=\tepsilon_2.
    \end{align*}
\\\textbf{A proof of \cref{it:obstructed end 2 step}: }
We treat the tree part of \cref{it:obstructed end 2 step}. The loopy part is handled in the exact same way, and does not add difficulty. We write $[r,s]$ for the interal of integers $r\leq i\leq s.$
The tree part of the LHS of \cref{it:obstructed end 2 step} can be written as
$L'+v'+M'+u+R',$ with
\begin{align*}
    &L'=\pre_{[b+1,f]}.y_a(\frac{\alpha_1}{\beta_1}).L,\\
    &M'=\pre_{[1,a]\cup[d+1,f]}.y_c(\frac{\alpha_2}{\beta_2}).y_d(\frac{\delta_1}{\varepsilon_1}).M\\
    &R'=\pre_{[1,c]}.y_e(\frac{\gamma_2}{\delta_2}).y_f(\frac{\delta_2}{\varepsilon_2}).R,
    \\
    &v'=[\alpha_2 \ee_a+\beta_2 \ee_b+(-1)^{k_M} \frac{\alpha_2}{\beta_2}\delta_1 +(-1)^{k_M} \delta_1 \ee_d +(-1)^{k_M}\varepsilon_1 \ee_n],\\
    &u=\alpha_2 \ee_c+\beta_2 \ee_d+(-1)^{k_R}\gamma_2 \ee_e+(-1)^{k_R}\delta_2 \ee_f +(-1)^{k_R}\varepsilon_2 \ee_n
\end{align*}where as usual we will identify a matrix with its row span, the summation is in the level of vector spaces, $k_M,k_R$ are the number of rows of $M,R$ respectively.

Similarly, the right hand side can be written as
$\tilde{L}'+w+\tilde{M}'+\tilde{u}+\tilde{R}',$ with
\begin{align*}
    &\tilde{L}'=\pre_{[b+1,f]}.y_a(\frac{\talpha_1}{\tbeta_1})L,\\
    &\tilde{M}'=y_e(\frac{\tgamma_1}{\tdelta_1}).y_f(\frac{\tdelta_1}{\tepsilon_1}).\pre_{[1,a]\cup[d+1,f]}.y_c(\frac{\talpha_2}{\tbeta_2}).\tilde M\\
    &\tilde{R}'=\pre_{[1,c]}.y_e(\frac{\tgamma_1}{\tdelta_1}).y_f(\frac{\tdelta_1}{\tepsilon_1}+\frac{\tdelta_2}{\tepsilon_2}).\tilde R,
    \\
    &w=\talpha_1\ee_a+\tbeta_1 \ee_b+(-1)^{k_M+k_R+1}\tgamma_1 \ee_e+(-1)^{k_M+k_R+1}\tdelta_1 \ee_f +(-1)^{k_M+k_R+1} \tepsilon_1 \ee_n\\
    &\tilde{u}'=\talpha_2 \ee_c+\tbeta_2 \ee_d+(-1)^{k_R}\frac{\tgamma_1}{\tdelta_1}(\tdelta_2+\frac{\tdelta_1} {\tepsilon_1}\tepsilon_2)\ee_e +(-1)^{k_R} (\tdelta_2+\frac{\tdelta_1} {\tepsilon_1}\tepsilon_2)\ee_f
    +(-1)^{k_R}\tepsilon_2 \ee_n
\end{align*}
From the relation between the variables $\zeta_i,~\zeta\in\{\alpha,\ldots,\varepsilon\},~i\in\{1,2\}$ and $\tzeta_i,~\tzeta\in\{\talpha,\ldots,\tepsilon\},~i\in\{1,2\}$ we see that
\begin{itemize}
\item $u'=\tilde{u}'.$
\item $v'=(\tdelta_2+\frac{\tdelta_1}{\tepsilon_1}\tepsilon_2)w-\tdelta_1\tilde{u}',$ hence $u',v'$ and $w,\tilde{u}'$ have the same linear span.
\item $L'=\tilde{L}'.$
\item After subtracting from the matrix $\tilde{M}$ the matrix $\frac{\tdelta_1}{\tdelta_1\tepsilon_2+\tdelta_2\tepsilon_1}\tilde{M}_n\cdot\tilde{u}'$ we obtain $M'.$ Thus the linear spans of $\tilde{M}',\tilde{u}'$ and that of $M',u'$ are the same.

\item $R'=\tilde{R}'.$
\end{itemize}
Thus, the two vector spaces are the same.
\\\textbf{\cref{it:chain_with_short_1}: }Observe that 
\[(\mbcfw^\alpha)_{c'd}(\tilde{R},[0:\tbeta_0:\tgamma_0:\tdelta_0:\tepsilon_0]) = (\mbcfw)_{ce}^\delta(\pre_d\tilde{R},[\tbeta_0:\tgamma_0:\tdelta_0:0:\tepsilon_0])_{\setminus f}.\]
Set
\[\tilde{R}^{(0)}=\tilde R,~[\alpha^{(0)}_0:\beta^{(0)}_0:\gamma^{(0)}_0:0:\varepsilon^{(0)}_0]=[\tbeta_0:\tgamma_0:\tdelta_0:0:\tepsilon_0].
\]
For $i=1,\ldots, h$ suppose we have defined $R^{(i-1)},$ and $[\alpha^{(i-1)}_0:\beta^{(i-1)}_0:\gamma^{(i-1)}_0:0:\varepsilon^{(i-1)}_0].$
We use \cref{it:intermideate_short_case} to write an equality of the form
\begin{align*}&(\mbcfw)_{ce}^\delta\big((\mbcfw)_{a_i,c}\big(\tilde{L}_i, [\alpha^{(i)}_i:\beta^{(i)}_i:\gamma^{(i)}_i:\delta^{(i)}_i:\varepsilon^{(i)}_i],{R}^{'(i)}),
[\alpha_0^{(i)}:\beta_0^{(i)}:\gamma_0^{(i)}:0:\varepsilon_0^{(i)}]
\big)=
\\&=\pre_f.(\mbcfw)_{a_id}\big(\tilde{L}_i,[\talpha_i:\tbeta_i:\tgamma_i:\tdelta_i:\tepsilon_i],((\mbcfw)_{ce}^\delta(\tilde{R}^{(i-1)},[\talpha_0^{(i-1)}:\tbeta_0^{(i-1)}:\tgamma_0^{(i-1)}:0:\tepsilon_0^{(i-1)}])_{\setminus f})
\big).
\end{align*}
This defines $[\alpha_0^{(i)}:\beta_0^{(i)}:\gamma_0^{(i)}:0:\varepsilon_0^{(i)}],~[\alpha^{(i)}_i:\beta^{(i)}_i:\gamma^{(i)}_i:\delta^{(i)}_i:\varepsilon^{(i)}_i].$ Note that $R^{(i)},R^{'(i)}$ differ by an explicit scaling of the $n$th column, by \cref{it:intermideate_short_case}. Set 
\[L_i^{(i)}=\tilde{L}_i,\qquad\tilde{R}^{(i)}=(\mbcfw)_{a_i,c}\big(L_i^{(i)},[\alpha^{(i)}_i:\beta^{(i)}_i:\gamma^{(i)}_i:\delta^{(i)}_i:\varepsilon^{(i)}_i],{R}^{'(i)}).\]
Note that, recursively, $\tilde{R}^{(i)}$ can be written as
\begin{align*}
(\mbcfw)_{a_ic}\left(
L^{(i)}_h,[\alpha^{(i)}_i:\beta^{(i)}_i:\gamma^{(i)}_i:\delta^{(i)}_i:\varepsilon^{(i)}_i],(\mbcfw)_{a_{h-1}c}\left(
L^{(i)}_{i-1},[\alpha_{i-1}^{(i)}:\beta_{i-1}^{(i)}:\gamma^{(j)}_{i-1}:\delta_{i-1}^{(i)}:\varepsilon_{i-1}^{(i)}],
\cdots
,\pre_dR^{(i)}\right)\cdots
\right)
\end{align*}
where each $L_j^{(i)}$ equals the previously defined ${L}_j^{j}$ up to scaling of the $n$th row by an expression determined from the iterations of \cref{it:intermideate_short_case}, the same relation holds between ${R}^{(i)}$ and $\tilde{R}^{(0)},$ and by the projective vectors $[\alpha_{j}^{(i)}:\beta_{j}^{(i)}:\gamma_{j}^{(i)}:\delta_{j}^{(i)}:\varepsilon_{j}^{(i)}]$ and the corresponding vectors $[\alpha_{j}^{(j)}:\beta_{j}^{(j)}:\gamma_{j}^{(j)}:\delta_{j}^{(j)}:\varepsilon_{j}^{(j)}]$ so that only the $\varepsilon-$terms differ. 

Thus, for
\[[\alpha_{j}:\beta_{j}:\gamma_{j}:\delta_{j}:\varepsilon_{j}]=[\alpha_{j}^{(h)}:\beta_{j}^{(h)}:\gamma_{j}^{(h)}:\delta_{j}^{(h)}:\varepsilon_{j}^{(h)}],~L_j=L_j^{h},R=R^{(h)}\]the equality of \cref{it:chain_with_short_1} holds. The invertibility of the transformation of \cref{it:intermideate_short_case} implies the invertibility in this case as well. 
\\\textbf{\cref{it:chain_with_short_2}: } The proof is similar to the previous case, but requires an additional single application of \cref{it:intermediate_short_2}:
Use \cref{it:intermediate_short_2} to write\begin{align*}
&(\mbcfw)_{cf}\left((\mbcfw)_{a_hc}\left\{
L_h,[\alpha_h:\beta_h:\gamma_h:\delta_h:\varepsilon_h],\right.\right.\\
&\quad\qquad\left.\left.(\mbcfw)_{a_{h-1}c}\left(
L_{h-1},[\alpha_{h-1}:\beta_{h-1}:\gamma_{h-1}:\delta_{h-1}:\varepsilon_{h-1}],
\cdots
,\pre_dM\right)\cdots
\right\}\right.,\\&\quad\qquad\quad\qquad\quad\qquad\left.
[\alpha_0:\beta_0:\gamma_0:0:\varepsilon_0]
,(\mbcfw^\gamma)_{df}([\alpha_{h+1}:\beta_{h+1}:0:\delta_{h+1}:\varepsilon_{h+1}],R)
\right)=
\\&(\mbcfw^\delta)_{c,e}
(\big\{
(\mbcfw)_{a_hc}\left(
L_h,[\alpha_h:\beta_h:\gamma_h:\delta_h:\varepsilon_h],\right.\\
&\quad\qquad\left.(\mbcfw)_{a_{h-1}c}\left(
L_{h-1},[\alpha_{h-1}:\beta_{h-1}:\gamma_{h-1}:\delta_{h-1}:\varepsilon_{h-1}],
\cdots
,\pre_dM\right)\cdots
\right)|_{\setminus e+2}\big\},\\&\quad\qquad\quad\qquad\quad\qquad
[{\alpha'}_0:{\beta'_0}:{\gamma'}_0:0:{\varepsilon'}_0],
(\mbcfw)_{df}([{\alpha}'_{h+1}:{\beta}'_{h+1}:{\gamma}'_{h+1}:{\delta}'_{h+1}:{\varepsilon}'_{h+1}],\bar{R}).
\end{align*}
We now apply the previous item to the right hand side of the above equality, to obtain the right hand side of the statement of the item. Again the map is invertible, thanks to the invertibility of the transformations of \cref{it:intermideate_short_case} and \cref{it:intermediate_short_2}.
\end{proof}
\subsection{The spaces $\Sred_\D$ and $\Sblue_\D$}\label{subsec:redblue_bdries}
\begin{lemma}\label{cor:wwRedBlueRanks}Recall \cref{lem:uniqueness_or_SA}. Let $\D$ be a chord diagram, $U\vdots V\in\overline{S}_\D\setminus\SA$. Then
\begin{enumerate}
    \item The well defined lines $\ww_i,~i\in\Red_\D\cup\Blue_\D$ and $D_\star$ are linearly independent and span a space $W_{\Red\cup\Blue}$. Also the well defined vectors $C_h,~h\notin\Red_\D\cup\Blue_\D$ are linearly independent and moreover the space they span intersects $W_{\Red\cup\Blue}$ trivially. If $U^i\vdots V^i\in S_\D,~i=1,2,\ldots$ is a sequence with $\lim_{i\to\infty}U^i\vdots V^i=U\vdots V$ then \begin{align*}&\lim_{i\to\infty}\Span(D^i_\star,\ww^i_h,~h\in\Red_\D\cup\Blue_\D)=W_{\Red\cup\Blue},\\&\lim_{i\to\infty}\Span(C_h^i,~h\notin\Red_\D\cup\Blue_\D)=\Span(C_h,~h\notin\Red_\D\cup\Blue_\D).\end{align*}
    \item If $C\vdots D$ is a domino limit for $U\vdots V$ then the vectors $C_h,~h\notin\Red$ are linearly independent and their linear span intersects trivially the vectors space $W_{\Red}:=\Span(\ww_h,~h\in\Red_\D)+D_\star.$
    If $C^i\vdots D^i\in S_\D,~i=1,2,\ldots$ is a sequence of domino pairs of elements in $S_\D$ whose limit is $U\vdots V,$ then \[\lim_{i\to\infty}\Span(D^i_\star,\ww^i_h,~h\in\Red_\D)=W_{\Red},\qquad\lim_{i\to\infty}\Span(C_h^i,~h\notin\Red_\D)=\Span(C_h,~h\notin\Red_\D).\] 
    If $U\vdots V\notin \Sred_\D$ then also the red and yellow rows are linearly independent, and they span $W_\Red.$
    \item The three vectors $D_0,D_\star$ and $\ww_0$ are linearly dependent.\end{enumerate}
\end{lemma}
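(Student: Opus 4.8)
\textbf{Plan for the proof of \cref{cor:wwRedBlueRanks}.}

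The three items form a hierarchy, and I would prove them roughly in the order stated, since item (1) is the strongest rank statement and items (2)--(3) can be read off its proof (or extracted as degenerations). The common engine is \cref{lem:uniqueness_or_SA}, specifically the part asserting that for $U\vdots V\in\overline S_\D\setminus\SA$ the matrix $M'$ obtained from a domino limit by replacing each row $C_h$, $h\in\Red_\D\cup\Blue_\D$, with $\ww_h$ has full rank $k+2$. The key observation is that this full-rank matrix $M'$ has exactly two distinguished rows ($D_0,D_\star$ together with the yellow/top-red structure), and the vectors $\ww_h$ for $h\in\Red_\D\cup\Blue_\D$ are constructed in \cref{lem:ww} so that $\ww_h\in\Span(\ee_{a_h},\ee_{b_h},\ee_{c_\star},\ee_{d_\star})$. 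So the plan is: (i) take a domino limit $C\vdots D$ for $U\vdots V$ and form $M'$; (ii) use \cref{lem:uniqueness_or_SA}\cref{rmk:Sred_without_red_chords}-style support arguments to show that the sub-collection $\{D_\star\}\cup\{\ww_h : h\in\Red_\D\cup\Blue_\D\}$ is linearly independent inside $M'$, and that the complementary collection $\{C_h : h\notin\Red_\D\cup\Blue_\D\}$ is independent and meets the first span only in $0$; (iii) deduce the limit statements for sequences from the fact that $M'$ has full rank (an open condition) and the continuity of the construction of $\ww^i_h$ recorded in \cref{lem:ww}. The independence in (ii) I would extract from the ``weak BCFW form'' triangularity argument already carried out inside the proof of \cref{lem:uniqueness_or_SA}: there one cancels inherited dominoes using parents (and $\ww_h$ in place of red/blue rows) to get a matrix that is upper triangular with nonzero diagonal after reordering by $a_h$; the rows $\ww_h$ and $D_\star$ occupy distinct pivot columns $f_h$, whence independence, and a vector in their span that also lies in the span of the $C_h$, $h\notin\Red_\D\cup\Blue_\D$, would force a row supported on a cyclically consecutive pair or quadruple, i.e.\ $U\vdots V\in\SA$, a contradiction.

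For item (2) I would rerun the same argument but only replacing the red rows by $\ww_h$, $h\in\Red_\D$, leaving the blue rows as honest domino rows $C_h$; \cref{lem:Srem_cases} (the ``domino limit, no red row in $\Span(D_0,D_\star)$'' bullet) guarantees the relevant dominoes do not vanish badly, so the triangularity persists and $W_\Red=\Span(D_\star,\ww_h:h\in\Red_\D)$ is a genuine plane complement to $\Span(C_h:h\notin\Red_\D)$. The ``if $U\vdots V\notin\Sred_\D$'' addendum then follows because, by \cref{lem:uniqueness_or_SA}\cref{rmk:Sred_without_red_chords}, $U\cap\Span(D_0,D_\star)=0$, so the red and yellow rows, which live in $\Span(D_0,D_\star)+\Span(\ee_{a_h},\ee_{b_h})$, are independent and span the same plane $W_\Red$ as the $\ww_h$'s (they differ from the $\ww_h$ by subtracting multiples of $D_0,D_\star$ and ancestor rows). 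The sequence statement again follows from openness of full rank plus continuity of the $\ww^i_h$.

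Item (3) is the easy one and I would prove it last and directly: from \cref{lem:ww} (or \cref{obs:ww} applied in the limit), for $h=0$ one has $\ww_0\in(U+V)\cap\Span(\ee_{a_0},\ee_{b_0},\ee_{c_\star},\ee_{d_\star})$ and $\ww_0$ is obtained from $D_0$ precisely by subtracting the multiple of $D_\star$ that kills the starting domino of $\D_{\p(0)}$ (equivalently the $n$th entry when $\D_0$ is a top chord). In other words $\ww_0$ is by construction a specific linear combination of $D_0$ and $D_\star$, so $\{D_0,D_\star,\ww_0\}$ spans a space of dimension at most $2$ and the three are linearly dependent; no rank input is needed beyond the defining recursion.

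\textbf{Main obstacle.} The delicate point is item (1): showing not merely that $\{D_\star\}\cup\{\ww_h\}_{h\in\Red_\D\cup\Blue_\D}$ is independent, but that the span of the $C_h$ for $h\notin\Red_\D\cup\Blue_\B$ meets $W_{\Red\cup\Blue}$ in $0$, and simultaneously identifying the \emph{limit} of $\Span(D^i_\star,\ww^i_h)$ as exactly $W_{\Red\cup\Blue}$ (a priori the limit could be a larger coincidence if independence degenerated). This is where one must be careful that the pivot columns $f_h$ of the weak BCFW form are genuinely distinct for the $\ww$-rows — which relies on the non-vanishing conclusions of \cref{lem:Srem_cases} ($\pr_{a_h}\ww_h\neq0$ when $\D_h$ has a sticky child, $\pr_{b_h}\ww_h\neq0$ when $\D_h$ is a sticky child, starting dominoes nonzero) — and that no accidental linear relation across the two blocks sneaks in, which is exactly the ``otherwise $U\vdots V\in\SA$'' dichotomy. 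Once that bookkeeping is done, the limit statements are automatic from continuity and the fact that $\dim$ is lower-semicontinuous but here pinned to $k+2$ by the open full-rank condition.
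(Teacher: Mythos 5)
Your plan tracks the paper's proof closely in spirit — same inputs (\cref{lem:uniqueness_or_SA}, \cref{lem:ww}, \cref{lem:Srem_cases}), and item (1) is indeed read off the full-rank assertion for $M'$ exactly as you say. Two comments are in order, one where you improve on the paper and one where your plan is too quick.

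For item (3), your direct argument is cleaner than the paper's: the paper argues by contradiction (if $D_0,D_\star,\ww_0$ were independent one could produce a nonzero vector supported on $\{c_\star,d_\star\}$, hence $U\vdots V\in\SA_D$), whereas you observe that $\ww_0^i$ is \emph{by construction} a combination of $D_0^i,D_\star^i$ and pass to the limit. Your route is valid, but do note that it requires the coefficient $\mu^i=\varepsilon_0^i/\varepsilon_\star^i$ to stay bounded away from $0$ and $\infty$; otherwise the properly-scaled limit $\ww_0$ could collapse onto $D_\star$ or $D_0$ alone and the statement would be vacuous rather than proved. That boundedness comes from \cref{lem:Srem_cases} (inherited dominoes of $D_0,D_\star$ do not vanish off $\SA$), which you should invoke explicitly; equivalently, $\dim\Span(D_0,D_\star)=2$ so $\lim\Span(D_0^i,D_\star^i)=\Span(D_0,D_\star)\ni\ww_0$.

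For item (2), your key sentence — \emph{``leaving the blue rows as honest domino rows $C_h$ \dots the triangularity persists''} — is the weak point, and it is precisely where the paper does something different. The whole reason \cref{lem:uniqueness_or_SA} replaces blue rows $C_h$ by $\ww_h$ is to control their support: $\ww_h$ lives in $\Span(\ee_{a_h},\ee_{b_h},\ee_{c_\star},\ee_{d_\star})$, so its pivot is $a_h$ (or $b_h$). The raw blue row $C_h$, however, carries the term $\gamma_hu_\star+\delta_hD_\star$, and $D_\star$ has nonzero entries at $a_{\p(0)},b_{\p(0)}$ (or $n$) from its inherited domino. Since $a_{\p(0)}<a_0<a_h$ for every blue $h$, the would-be pivot of a blue row is not $a_h$ but $a_{\p(0)}$, colliding with the pivot of $C_{\p(0)}$ (a non-red row in your matrix). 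Simply ``rerunning'' the weak BCFW cancellation does not dispose of this; one must also cancel against the $D_\star$ row and the ancestors of $\D_0$, i.e.\ perform additional row operations outside the parent-to-child schema. The paper sidesteps this entirely: it first establishes that the non-red rows of $C$ are independent (a separate, smaller triangularity), and then proves the trivial intersection $W_{\Red^c}\cap W_\Red=0$ by reducing (via item (1)) to the question of whether a combination of \emph{blue} $C_h$'s can lie in $W_\Red$, and ruling that out using the nonvanishing of the $\ww_h$-coefficient supplied by \cref{lem:Srem_cases}. Your conclusion is correct and can be recovered from $M'$ being full rank by noting that replacing $\ww_h$ by $C_h$ for blue $h$ is an invertible row operation within the row span of $M'$ (since $C_h=\ww_h+\varepsilon_h\ww_{\p(h)}-\delta_hD_\star$), but that is a different argument from ``triangularity persists,'' and as written your step (2) has a gap.
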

\begin{proof}
The first item and the claims on $W_\Red$ in the second item are immediate from \cref{lem:uniqueness_or_SA} and the proof of \cref{lem:ww}, which recursively constructs $\ww^i_h$ as a linear combination of $D^i_\star,D^i_0$ and $C^i_l$ for $\Red\cup\Blue\ni l\geq h.$

Note that the same rank argument used in \cref{lem:uniqueness_or_SA}, only without the complication caused by the yellow chord, shows that, relying on 
\cref{lem:Srem_cases}, the matrix obtained from $C$ by restricting to non red rows, is of full rank, since, by canceling the dominoes inherited to non top rows, except the top blue row, from their parents, the resulting matrix has an invertible triangular minor obtained by taking the columns corresponding to the first non zero entries of each row. The claim that this vector space is the limit of the corresponding subvectors spaces of $U^i$ also follows.

Next, assume towards contradiction that $W_{\Red^c}:=\Span(C_h:~h\notin\Red)$ intersects $W_\Red$ non trivially. We can similarly define the spaces $W_{\Red^c}^i.$ Note that  every blue row $C^i_h\in\Span(\ww^i_{\p(h)},\ww^i_h,D^i_\star)$ and the same holds for the limit vectors. Thus, by the first item, $W_{\Red^c}\cap W_\Red\neq 0$ only if $\Span(C_h:~h\in\Blue_\D)\cap W_\Red\neq 0.$ But again by 
\cref{lem:Srem_cases}, the coefficient of $\ww_h$ in the combination of it with $\ww_{\p(h)},D_\star$ which yields $C_h,$ is non zero. This implies that any non zero combination of $C_h$ for $h\in \Blue_\D$ must be a linear combination of $\ww_h,~h\in\Blue_\D, D_\star$ and $\ww_l$ for the lowest red chord $\D_l,$ such that at least one $\ww_h,~h\in\Blue_\D$ has non zero coefficient. By the definition of $W_\Red$ and the first item, no such combination exists.

Finally, the span of the yellow and red rows of $C\vdots D$ is always contained in $W_\Red.$ If in addition $U\vdots V\notin \Sred_\D$ then by \cref{lem:Srem_cases} the starting domino of every red row is non zero, and also $\varepsilon_r$ is defined and finite for every such row. Thus, it is straightforward to see that the alogrithm \cref{lem:ww} for producing the vectors $\ww_r$ from the red and yellow rows can still be applied, hence $W_\Red$ is contained in the span of the red and yellow rows.

For the third item, note that otherwise we can find a non zero linear combination of the three vectors supported only on $c_\star,d_\star,$ which implies $U\vdots V\in\SA_D.$
\end{proof}
We now show that if $U\vdots V\in\Sred$ or in $\Sblue$ then this can be seen from domino limits.
\begin{lemma}\label{lem:Srem_and_domino_limits}
Assume $U\vdots V\in\overline{S}_\D\setminus\SA.$ Let $C\vdots D$ be a domino limit pair for $U\vdots V.$ By \cref{lem:uniqueness_or_SA} the rows $D_0,D_\star$ of $D$ are independent of the limit pair, up to scaling. Then, 
  if $U\vdots V\in\Sblue_\D\setminus\Sred_\D,$ then there is at least one blue row which is a linear combination of $\ee_{a_h},\ee_{b_h},D_\star$ and $u_\star=\pr_{c_\star d_\star}D_\star.$ 
  Moreover, if $h\in\Blue_\D$ is the maximal index for which a nonzero vector in $U\cap\Span(\ee_{a_h},\ee_{b_h},D_\star,u_\star)$ exists, then also $C_h$ belongs to this intersection, and for every blue descendant $\D_j$ of $\D_h$ the intersection $U\cap\Span(\ee_{a_j},\ee_{b_j},u_\star,D_\star)$ contains a vector $u_j,$ with $\pr_{a_jb_j}u_j\neq 0,$ such that the $1-$loop matrix $\tilde C\vdots \tilde D$ obtained from $C\vdots D$ by replacing $C_j,$ for blue $\D_j$ with $j\leq h$ by $u_j$ is a weak domino representative for $U\vdots V.$
\end{lemma}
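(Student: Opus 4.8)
The strategy is to reduce the claim to what is already known from \cref{lem:uniqueness_or_SA}, \cref{lem:Srem_cases}, \cref{cor:wwRedBlueRanks}, and \cref{obs:ww}, by carefully organizing the blue chords from the top down. First I would fix a domino limit $C\vdots D$ for $U\vdots V$; since we assume $U\vdots V\in\Sblue_\D\setminus\Sred_\D$, by \cref{lem:Srem_cases} (the second bullet, the ``no red row in $\Span(D_0,D_\star)$'' case) all domino variables are finite and defined except possibly $\varepsilon_h$ for non--very--short blue chords, and moreover for such a blue chord $\delta_h\neq 0$. In particular every $\ww_h$, $h\in\Blue_\D$, is well defined up to scaling, and by \cref{obs:ww} it has the form $\alpha_h\ee_{a_h}+\beta_h\ee_{b_h}+\Gamma_h\pr_{c_\star d_\star}D_\star$ with $|\Gamma_h|\ge|\gamma_h|$, so $\pr_{a_hb_h}\ww_h\ne0$ (the argument appearing in the proof of \cref{lem:Srem_cases}). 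By definition of $\Sblue_\D$ applied to the almost domino pair $C\vdots D$, there is some blue chord and a nonzero vector in $U\cap\Span(\ee_{a_h},\ee_{b_h},D_\star,u_\star)$; let $h$ be the maximal index among blue chords for which this intersection is nontrivial.

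The first substantive step is to show $C_h$ itself lies in this intersection for this maximal $h$. The blue row has the form $C_h=\alpha_h\ee_{a_h}+\beta_h\ee_{b_h}+\gamma_h u_\star+\delta_h D_\star+\varepsilon_h(\alpha_{\p(h)}\ee_{a_{\p(h)}}+\beta_{\p(h)}\ee_{b_{\p(h)}})$, so $C_h\in\Span(\ee_{a_h},\ee_{b_h},D_\star,u_\star)$ \emph{unless} $\varepsilon_h\ne0$ and $\D_h$ is not a top chord, in which case the inherited domino of the parent appears. I would rule out the latter: if some nonzero $v\in U\cap\Span(\ee_{a_h},\ee_{b_h},D_\star,u_\star)$ exists but $C_h$ does not lie there, then $v$ and $C_h$ are not proportional; their projections to $(c_\star,d_\star)$ are controlled (a nonzero combination supported on $c_\star,d_\star$ would put $U\vdots V\in\SA_D$, contradiction via \cref{obs:SA_strata}/\cref{lem:bcfw_after_fl}), so subtracting the right multiple of $v$ from $C_h$ kills the $\ee_{a_h},\ee_{b_h},u_\star,D_\star$ contributions and leaves a nonzero multiple of $\alpha_{\p(h)}\ee_{a_{\p(h)}}+\beta_{\p(h)}\ee_{b_{\p(h)}}$ in $U+V$ — but actually the cleaner route, mirroring the ``$\alpha_h=\beta_h=0$ for blue'' analysis in \cref{lem:Srem_cases}, is to observe that the existence of $v$ forces $\pr_{a_hb_h}\ww_h$ and $\pr_{a_hb_h}v$ to span at most a line, and then combine $v$, $\ww_h$ and $D_\star$ to extract a vector supported on $\{c_\star,d_\star\}$, again contradicting $U\vdots V\notin\SA$. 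So $C_h\in U\cap\Span(\ee_{a_h},\ee_{b_h},D_\star,u_\star)$; consequently $\varepsilon_h=0$ (the inherited domino is absent), which by the sign analysis forces $\delta_h\ne0$ as noted above.

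The second step is to produce $u_j$ for each blue descendant $\D_j$ of $\D_h$ and check the replacement yields a weak domino representative. Since $\D_j$ is a blue descendant of $\D_h$ and $h$ was chosen maximal, the whole chain of blue ancestors of $\D_j$ up to $\D_h$ has vanishing $\varepsilon$ at $\D_h$; I would take $u_j:=\ww_j$, which by \cref{lem:ww} lies in $(U+V)\cap\Span(\ee_{a_j},\ee_{b_j},\ee_{c_\star},\ee_{d_\star})$, but I need $u_j\in U$, not just $U+V$ — this is where the hypothesis $U\vdots V\notin\Sred_\D$ and the structure of $\Gamma_j$ from \cref{obs:ww} enter: the ``tail'' of the recursion for $\ww_j$ terminates at $D_0$ only through a red ancestor, and with no red row in $\Span(D_0,D_\star)$ one shows the $D_0$--component of $\ww_j$ can be traded so that $\ww_j$ (suitably rescaled) represents the same line as a genuine element of $U$; alternatively, directly: $C_h\in U$ together with the recursive formula \eqref{eq:ww_C_lin_comb} expresses each $\ww_j$ as an honest $U$--combination of $C_h$ and the intermediate blue rows $C_l$, $h\ge l\ge j$, hence $\ww_j\in U$. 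Replacing $C_j$ by $u_j=\ww_j$ for all blue $\D_j$ with $j\le h$ then gives a $1$-loop matrix $\tilde C\vdots\tilde D$; it is an almost $\D$-domino pair with all variables finite (the $\ww_j$ carry the starting domino and $u_\star$ in exactly the positions a blue row allows), the sign rules hold in the weak sense by \cref{obs:ww} (all summands of $\Gamma_j$ have one sign), and it has full rank by the rank argument of \cref{lem:uniqueness_or_SA} adapted to $\ww$ (the ``$M'$ has full rank'' paragraph), so it is a weak domino representative.

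\textbf{Main obstacle.} I expect the delicate point to be the interplay between ``$u_j\in U$'' versus ``$u_j\in U+V$'': \cref{lem:ww} only guarantees $\ww_h\in U+V$, and pinning it inside $U$ requires exploiting that $\D_h$'s $\varepsilon$ has collapsed together with $U\vdots V\notin\Sred_\D$ (so that the $D_0,D_\star$ parts of the $\ww$--recursion do not leak out of $\Span(C))$. Untangling which of the projections $\pr_{a_h}\ww_h$, $\pr_{b_h}\ww_h$, $\pr_{c_\star}\ww_h$, $\pr_{d_\star}\ww_h$ may vanish — governed by whether $\D_h$ is a sticky child, very short, etc., exactly as catalogued in \cref{lem:Srem_cases} — and checking that in every such case the replaced matrix still satisfies the weak sign rules and retains full rank, is the computational heart; but it is entirely parallel to the case analysis already carried out in the proof of \cref{lem:Srem_cases}, so it should go through without genuinely new ideas.
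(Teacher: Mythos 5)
Your proposal captures the right overall shape — locate the maximal blue $h$ with $U\cap\Span(\ee_{a_h},\ee_{b_h},D_\star,u_\star)\neq 0$, show $C_h$ is in it, then build $u_j$ down the chain — but there is a genuine gap in the construction of the vectors $u_j$.

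You correctly flag the issue yourself: $\ww_j$ lives in $(U+V)\cap\Span(\ee_{a_j},\ee_{b_j},\ee_{c_\star},\ee_{d_\star})$, not in $U$, so you cannot simply take $u_j:=\ww_j$. But neither of your proposed fixes closes the gap. The ``direct'' route fails concretely: unfolding $\ww_j = C_j - \delta_j D_\star - \varepsilon_j\ww_{\p(j)}$ up the blue chain and terminating at $\ww_h = C_h - \delta_h D_\star$ (using $\varepsilon_h=0$) writes $\ww_j$ as a combination of $C_j,\ldots,C_h\in U$ \emph{plus a multiple of $D_\star$}, whose coefficient is $-\delta_j + \varepsilon_j\delta_{\p(j)} - \cdots \pm (\prod\varepsilon)\delta_h$ — and this has no reason to vanish. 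In fact $\delta_h\neq 0$ forces $\ww_h=C_h-\delta_h D_\star\notin U$ already at the top of the chain. The lemma's $u_j$ are required to lie in $U\cap\Span(\ee_{a_j},\ee_{b_j},u_\star,D_\star)$ — i.e., they are allowed (and generically required) to have a $D_\star$ component, precisely the one $\ww_j$ lacks — so the correct object is not $\ww_j$ but rather a vector differing from it by a multiple of $D_\star$.

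Beyond that, your outline silently assumes that \emph{every} blue chord between the lowest blue and $\D_h$ has a nonzero vector in the corresponding intersection, with nonvanishing projection $\pr_{a_jb_j}$. This is exactly the claim the paper proves by the inductive limit argument on the spaces $W^i_t=\Span(C^i_h,\ldots,C^i_{h-t+1})$, using Pl\"ucker-coordinate sign computations on $\lr{W^i_{t+1}}_{\cdots}$ to rule out the ``bad'' configuration $x\neq 0$. This is the real technical content of the lemma; ``it is entirely parallel to the case analysis of \cref{lem:Srem_cases}'' underestimates it, since here one must control a limit of \emph{subspaces}, not just a limit of individual rows, and the sign bookkeeping enters in a genuinely different way. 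The final verification that $\tilde C\vdots\tilde D$ has full rank and satisfies the weak sign rules also rests on the conclusions of that induction, not just on \cref{obs:ww}.
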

\begin{proof}We prove by arriving to contradiction.
Let $h$ be the index of the topmost red or blue chord for which $\Span(\ee_{a_h},\ee_{b_h},u_\star,D_\star)\cap U\neq 0,$ and let $v\neq0$ be a vector in this intersection. If $v$ is not unique up to scaling then we can find a non zero vector in $\Span(\ee_{a_h},\ee_{b_h},u_\star)\cap U$ which would imply $U\vdots V\in\SA.$ Also, since a linear combination of $v,D_\star$ lies in $\Span(\ee_{a_h},\ee_{b_h},u_\star),$ and cannot be $0$ or else $U\vdots V\in\SA_C,$ then this combination must be proportional to $\ww_h$ by \cref{lem:ww} and the second item of \cref{lem:uniqueness_or_SA}. In particular, $\pr_{a_hb_h}v$ is proportional to the starting domino the row labeled $h.$ Since $C\vdots D$ is a domino limit pair, 
\cref{lem:Srem_cases} shows that the starting domino of $C_h$ does not vanish, and the proportionality constant is not $0.$ 
Thus, $v\in\Span(\ww_h,D_\star),$ or else we can find a non zero linear combination of it with $\ww_h,D_\star$ supported on $c_\star,d_\star,$ which implies $U\vdots V\in\SA_D.$ Let $l$ be the index of the topmost red or blue chord with the property that $\Span(\ww_l,D_0,D_\star)\cap U\neq\{0\}.$ Then $\D_l$ is either $\D_{h}$ or an ancestor. Let $v'$ be a nonzero vector in the intersection.

Assume first $\D_h$ is red. Then also $\D_l$ is red.
If $l=0,$ then by the third item of \cref{cor:wwRedBlueRanks}, $v'$ is a linear combination of $D_0,D_\star,$ implying $U\vdots V\in\Sred_\D.$ If $l\neq 0,$ the starting domino of $C_l$ is nonzero, as well as the domino $C_l$ inherits from its parent, by \cref{lem:Srem_cases} and the assumption that $U\vdots V\notin\Sred_\D$. Again we may add an appropriate multiple of $v'$ to $C_l$ to obtain a non zero vector in $U\cap\Span(D_0,D_\star,u_\star,\ee_{a_\p(l)},\ee_{b_{\p(l)}})$. As above, by our assumptions on $U\vdots V,$ and \cref{lem:uniqueness_or_SA} this vector is a linear combination of $\ww_{\p(l)},D_0,D_\star,$ contradicting the choice of $l.$ 

Thus, $\D_h$ is blue. Assume towards contradiction $v$ is not proportional to $C_h.$ By \cref{lem:Srem_cases} the starting domino of $C_h$ does not vanish, and the proportionality constant between this domino and $\pr_{a_hb_h}v$ is nonzero. Thus, a linear combination of $C_h$ and $v$ may cancel it, thus obtaining a nonzero vector $v'\in \Span(\ee_{a_{\p(h)}},\ee_{b_{\p(h)}},u_\star,D_\star),$ contradicting the choice of $h.$ Thus $C_h$ is proportional to $v$ and belongs to $ U\cap\Span(\ee_{a_{h}},\ee_{b_h},D_\star,u_\star).$

For the last part, let $J\subset\Blue\cap[h]$ be the set of indices $j$ for which there exists a blue row $u_j\in U\cap\Span(\ee_{a_j},\ee_{b_j},u_\star,D_\star).$ Note that such $u_j$ must be in the linear span of $\pr_{a_jb_j}\ww_j, D_\star$ and $u_\star,$ hence also of $\ww_j,D_\star$ and $u_\star,$ since otherwise we can use a combination of $u_j,\ww_j,u_\star,D_\star$ to find a non-zero vector in $U+V$ with support in $a_j,b_j,$ which would imply $U\vdots V\in\SA_D.$ 
 \begin{claim}$J$ is a chain of child-to-parent blue chords which starts at the lowest blue and ends at $\D_h$.
 \end{claim}
 \begin{proof}
 Let $C^i\vdots D^i\to C\vdots D$ be a sequence of domino forms of $U^i\vdots V^i$ in $S_\D$, converging to $U\vdots V.$
Write $m=|\Blue_\D\cap[h]|.$ For $t\in[m],$ 
set $W_t^i=\Span(C^i_h,C^i_{h-1},\ldots,C^i_{h-t+1})\subset U^i,$ and let $W_t\subset U$ be the limiting vector space, whose dimension is $t.$ 

We claim that $W_t$ has a basis of vectors $u_h,\ldots,u_{h-t+1}$ such that $u_j\in\Span(\ww_j,u_\star,D_\star),$ and the coefficient of $\ww_j$ is positive. For $t=1$ this has been shown above, with $u_h=C_h.$

Assume, by induction, we have shown it for $t<m-1,$ let us now deduce it for $t+1.$ $W_{t+1}$ cannot contain a vector spanned by $u_\star,D_\star,$ or else $U\vdots V\in\SA_C.$ By induction $W_{t+1}$ contains the vectors $u_h,\ldots, u_{h-t+1}$ with the required properties. Any vector in $W_{t+1}$ that can complete them to a basis must be a linear combination of \[\pr_{a_{\p(h)}b_{\p(h)}}\ww_{\p(h)},\pr_{a_{h}b_{h}}\ww_{h},\pr_{a_{h-1}b_{h-1}}\ww_{h-1},\ldots,\pr_{a_{h-t+1}b_{h-t+1}}\ww_{h-t+1},\pr_{a_{h-t}b_{h-t}}\ww_{h-t},u_\star,D_\star.\]
If $\D_{h-t}$ is the lowest blue chord, and if it is very short, then we should replace $\pr_{a_{h-t}b_{h-t}}\ww_{h-t}$ by the combination of $\ww_{h-t}$ and $u_\star$ in which the $d_\star$ entry vanishes, but it does not affect the proof below.
We can use the vectors $u_{h},\ldots,u_{h-t+1}$ to find such a non zero vector which is of the form
\[u_{h-t}=x\pr_{a_{\p(h)}b_{\p(h)}}\ww_{\p(h)}+y\pr_{a_{h-t}b_{h-t}}\ww_{h-t}+r u_\star + sD_\star,~~y\geq 0.\]
Let $f_i,~i\in\Blue\cup\{\star\}$ be the first non zero entry of $\pr_{a_ib_{i}}\ww_{i}.$
By \cref{lem:Srem_cases} these $f_i$ are all different.
If $y=0$ then $u_{h-t}$ is a witness that $U\cap\Span(\pr_{a_{\p(h)}b_{\p(h)}}\ww_{\p(h)},u_\star,D_\star)\neq0$, contradicting the choice of $h.$ 
Thus we may assume $y>0.$ 
Note that the Plucker coordinates of $W^i_{t+1},$ after proper scaling, tend to those of $W_{t+1}$. 
If $x=0$ $u_{h-t}$ has the required properties.
Assume $x\neq 0.$
Now, it is easy to see, from the domino sign rules, that \[\sgn(\lr{W^i_{t+1}}_{f_{p(h)},\ldots,f_{h-t+1}}/\lr{W^i_{t+1}}_{f_{h,\ldots,f_{h-t}}})=\prod_{j=0}^t\sgn(\varepsilon_{h-j})=:\sigma.\]
In the limit $i\to\infty$ we may calculate the same ratio with the ordered basis of rows given by $u_h,\ldots,u_{h-t}$,
 and the sign is readily seen to be $(-1)^t\sgn(x)\sgn(y),$ thus $\sgn(x)=(-1)^t\sigma.$ Similarly, 
 \[\sgn(\lr{W^i_{t+1}}_{f_{p(h)},f_{h},\ldots,f_{h-t+2},f_\star}/\lr{W^i_{t+1}}_{f_{h},\ldots,f_{h-t+2},f_{h-t},f_\star})= -\sgn(\gamma_{h-t})\sgn(\gamma_{h-t+1})\prod_{j=0}^{t-1}\sgn(\varepsilon_{h-j)}.\]
 By the domino sign rules $-\sgn(\gamma_{h-t}\gamma_{h-t+1})=\sgn(\varepsilon_{h-t}),$ hence the above ratio is again $\sigma.$
 Calculating the same ratio in the limit, on the other hand, yields $(-1)^{t+1}\sgn(x)=-\sigma$. This contradiction shows that $x$ must be $0,$ and the induction follows. 
 \end{proof}
It follows from the construction that the  $1$-loop matrix $\tilde C\vdots D$ obtained by replacing the rows $C_j,~j\in J$ by the corresponding $u_j$ is an almost $\D-$domino pair.
From the definition of $h,$ the fact $U\vdots V\notin \SA\cup\Sred_{\D}$ and \cref{lem:Srem_cases} it also follows that $\varepsilon_j,\gamma_j$ is always defined and finite for all chords $\D_j,~j\notin J.$
Since $\tilde C\vdots  D$ is an almost domino pair, we can consider $\tilde\alpha_j,\tilde\beta_j,\tilde\gamma_j,\tilde\delta_j$ for $j\in J$ as the coefficients of $\ee_{a_j},\ee_{b_j},u_\star,D_\star$ respectively in the expression for $u_j.$ By \cref{lem:Srem_cases} and the above discussion they are finite and well defined. $\tilde C\vdots D$ is of full rank by \cref{cor:wwRedBlueRanks}, and the fact that the only rows of $\tilde C$ which differ from the corresponding rows of $C$ are blue, but from their construction the blue rows of $C,\tilde C$ span the same subspace of $U.$ 
We are left with proving that the domino signs hold in a weak sense.
$\tilde\alpha_j,\tilde\beta_j$ have the correct sign by construction. We can verify that $\tilde\gamma_j,\tilde\delta_j$ have the correct signs by considering the limits $i\to\infty$ of
\[\lr{W^i_t}_{f_h,\ldots,f_{h-t+2},c_\star}/\lr{W^i_t}_{f_h,\ldots,f_{h-t+1}},~\text{and }, \lr{W^i_t}_{f_h,\ldots,f_{h-t+2},n'}/\lr{W^i_t}_{f_h,\ldots,f_{h-t+1}}~\text{respectively},\]
where $j=h-t+1,$ and $n'=n$ if $\p(0)=\emptyset,$ and the first non zero entry of the starting domino of $C_{\p(0)}$ otherwise.
$\eta_h$ has the correct (weak) sign since $u_h=C_h.$ For blue $j=h-t+1$ we can verify that $\eta_j$ has the correct (weak) sign by considering the limit of $\lr{W^i_t}_{f_h,\ldots,f_{h-t+3},c_\star,n'}/\lr{W^i_t}_{f_h,\ldots,f_{h-t+1}},$ where $n'$ is as above. The proof follows.  
\end{proof}
\begin{lemma}\label{lem:from SREDSBLUE to smaller diags in diags order}
Define a partial order $<$ on chord diagrams in $\CD_{n,k}^1$ such that $\check{\D}<\D$ if the total number of red and blue chord in $\check{\D}$ is smaller than that of $\D.$

Let $\D$ be a chord diagram.
If $U\vdots V\in\Sred_\D\cup\Sblue_\D\setminus\SA$ then $U\vdots V\in\Srem_{\check\D}$ for some $\check\D<\D.$
\end{lemma}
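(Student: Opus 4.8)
The plan is to analyze separately the two ways $U\vdots V$ can fail to lie in $\Srem_\D$: either $U$ contains a nonzero linear combination of $D_0,D_\star$ (the $\Sred_\D$ case), or $U$ contains a nonzero vector in some $\Span(\ee_{a_h},\ee_{b_h},D_\star,u_\star)$ for a blue chord $\D_h$ (the $\Sblue_\D$ case). We fix a domino limit $C\vdots D$ for $U\vdots V$, which exists by \cref{obs:domino_limits}; since $U\vdots V\notin\SA$, by \cref{lem:uniqueness_or_SA} the rows $D_0,D_\star$ and the vectors $\ww_h$ of \cref{lem:ww} are well defined up to scaling, and by \cref{rmk:Sred_without_red_chords} in the $\Sred$ case there is in fact a \emph{non top} red row $C_h\in\Span(D_0,D_\star)$ (so $|\Red_\D|>1$). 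In the $\Sblue$ case, by \cref{lem:Srem_and_domino_limits} we may take $h$ to be the maximal blue index with $U\cap\Span(\ee_{a_h},\ee_{b_h},D_\star,u_\star)\neq 0$, and then $C_h$ itself lies in this intersection, and every blue descendant of $\D_h$ can be replaced by a suitable $u_j$ so that the modified matrix is still a weak $\D$-domino representative.

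The key step is then to identify the smaller chord diagram $\check\D$. In the $\Sred$ case, the natural candidate is $\shift_{\delta_0}\D=\shift_{\abe}\D$ (or an iterate), which recolors the top red chord black and promotes its red child to top red: the relation $C_h\in\Span(D_0,D_\star)$ is exactly the degeneracy $\delta_0=0$, i.e. membership in $\partial_{\delta_0}S_\D=S_{\partial_{\gamma_\star}\shift_{\abe}\D}$ by \cref{def:internal_bdries_domino}. In the $\Sblue$ case the candidate is $\shift_{\varepsilon_h}\D$ for $\D_h$ the lowest blue chord (the blue shift of \cref{def:shift_ops}), again because the degeneracy $\varepsilon_h=0$ places the point in $\partial_{\varepsilon_h}S_\D$. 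In both cases the shifted diagram has strictly fewer red-plus-blue chords than $\D$: the red shift turns a red chord black, and the blue shift turns the lowest blue chord and all purple chords black while keeping the count of the others, with the net effect reducing $|\Red|+|\Blue|$ — so $\check\D<\D$ in the order of the statement. One then argues that $U\vdots V$, viewed via the modified weak domino representative, satisfies the (strict) sign rules for $\check\D$, i.e. lies in $S_{\check\D}$, hence in $\overline{S}_{\check\D}$, and actually in $\Srem_{\check\D}$ since it avoids $\SA$ and — by the maximality in the choice of $h$ and \cref{lem:Srem_and_domino_limits}, \cref{cor:wwRedBlueRanks} — there is no further red/blue degeneracy.

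Concretely I would proceed as follows. First, reduce to the case where exactly one such degeneracy occurs "at the top": if $U\vdots V$ lies in both $\Sred_\D$ and $\Sblue_\D$, or if the $\Sred$ degeneracy involves several red rows, peel off shifts one at a time, using induction on $|\Red_\D|+|\Blue_\D|$; each shift strictly decreases this quantity, and the point remains outside $\SA$ (this needs checking, but follows from the fact that the modified domino representative is of full rank by \cref{cor:wwRedBlueRanks} and has the same row spans). Second, in the single-degeneracy case, invoke the explicit change-of-coordinates computations already carried out in the proof of \cref{lem:bdry_matching_geo} — specifically \cref{lem:blue_chane_of_coords_verification} for the blue shift and the definitional identification $\partial_{\delta_0}S_\D=S_{\partial_{\gamma_\star}\shift_{\abe}\D}$ for the red shift — to see that the modified matrix $\check C\vdots\check D$ is a weak $\check\D$-domino representative with all the $\check\D$ sign rules holding \emph{strictly} (the only inequality that became an equality, namely the one governing the degenerate variable of $\D$, corresponds to a \emph{variable of $\D$ that is no longer a domino variable of $\check\D$}; all surviving $\check\D$-variables inherit strict signs from the strict signs of the $\D$-variables plus the nonvanishing established in \cref{lem:Srem_cases} and \cref{lem:Srem_and_domino_limits}). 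Finally, conclude $U\vdots V\in S_{\check\D}$; since $S_{\check\D}\subseteq\Srem_{\check\D}$ once we know $U\vdots V\notin\SA\cup\Sred_{\check\D}\cup\Sblue_{\check\D}$, and the latter is guaranteed by the maximality in the choice of $h$ (there is no higher red/blue row of $\check\D$ lying in the relevant span) together with \cref{obs:cell_not_red_blue} applied to interior points, the result follows.

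\textbf{Main obstacle.} The delicate point is the bookkeeping in the $\Sblue$ case: after replacing the blue rows $C_j$, $j\in J$, by the vectors $u_j$, one must verify that \emph{all} the $\check\D$-sign rules (including the $\eta$ and $\theta$ inequalities among the recolored/reshifted chords, and the $\varepsilon$-signs of descendants whose parent changed) hold strictly, and that no \emph{new} blue or red row of $\check\D$ has been created that again lies in an $\SA$-type or $\Sred/\Sblue$-type span — otherwise the induction would not terminate at $\Srem_{\check\D}$. This is exactly the content that \cref{lem:Srem_and_domino_limits} and \cref{cor:wwRedBlueRanks} were set up to control (the chain structure of $J$, the positivity of the $\ww_j$-coefficients, the limiting Plücker-sign computations), so the work is to assemble these inputs rather than to prove something genuinely new; still, checking that the strictness of every inequality is preserved — and that the degenerate variable really is the \emph{unique} one that degenerated — is where the proof has to be written with care.
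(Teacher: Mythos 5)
Your overall plan — using $\shift_{\abe}\D$ in the $\Sred$ case and $\shift_{\varepsilon_h}\D$ (for $\D_h$ the lowest blue chord) in the $\Sblue$ case, invoking \cref{lem:Srem_and_domino_limits}, \cref{cor:wwRedBlueRanks} and the change-of-coordinates from \cref{lem:bdry_matching_geo}, and iterating on $|\Red_\D|+|\Blue_\D|$ — is exactly the paper's approach, and the identification of the smaller diagrams is correct.

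However, there is a concrete error in the step where you "conclude $U\vdots V\in S_{\check\D}$" and in the supporting claim that the degenerate variable of $\D$ ``corresponds to a variable of $\D$ that is no longer a domino variable of $\check\D$.'' That claim is false: by \cref{obs:bdry_matching_comb}, $\match(\D,\delta_0)=(\shift_{\abe}\D,\gamma_\star)$ and $\match(\D,\varepsilon_h)=(\shift_{\varepsilon_h}\D,\gamma_\star)$ (or another uncolored variable in the very-short subcase), so the degenerate variable of $\D$ is matched to a domino variable of $\check\D$ that again vanishes. Consequently $U\vdots V$ lands in a boundary stratum such as $\partial_{\gamma_\star^{\check\D}}S_{\check\D}$, not in the open cell $S_{\check\D}$, and the assertion that ``all the $\check\D$ sign rules hold strictly'' cannot be correct. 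What one can conclude, and what the paper's argument actually establishes (via \cref{obs:for_red_limits} in the red case and via realizing $U\vdots V\in\overline{\partial_{\varepsilon_{h_*}}S_\D}=\overline{\partial_{\gamma^{\check\D}_\star}S_{\check\D}}$ in the blue case), is the weaker statement $U\vdots V\in\overline{S}_{\check\D}\setminus\SA$ for some $\check\D<\D$. From there the iteration you gesture at does the rest: if $U\vdots V\in\Sred_{\check\D}\cup\Sblue_{\check\D}\setminus\SA$, apply the lemma again with $\check\D$ in place of $\D$; this terminates since $|\Red|+|\Blue|$ strictly decreases and any diagram with a single red chord and no blue chord has $\Sred\subseteq\SA$ (\cref{lem:uniqueness_or_SA}) and $\Sblue=\emptyset$. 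The ``maximality of $h$'' argument you propose in its place would require a separate verification that the new red/blue spans of $\check\D$ (which are built from the modified vectors $\check D_0,\check D_\star,\ww_j$, not the original ones) avoid $U$ — this is exactly the check the iteration lets you skip.
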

\begin{proof}
We treat the cases of $U\vdots V\in\Sred_\D,~U\vdots V\in\Sblue_\D\setminus\Sred_\D$ differently.
While the proof is somehow lengthy, it can be explained easily. We will show that in these cases $U\vdots V$ is contained in $\overline{\partial_{\abe}S_\D}$, or $\overline{\partial_{\varepsilon_h}S_\D}$, where $\D_h$ is the lowest blue chord, respectively. We will then use the fact that these boundaries are also boundaries of other BCFW cells~\cref{lem:bdry_matching_geo}, with less red and blue chords. Iterating this will yield the claim. We split the proof into red and blue cases.
\\\textbf{Treating red chords:}
\\Assume $U\vdots V\in\Sred_\D$ has a domino limit $C\vdots D.$ By \cref{lem:uniqueness_or_SA},~\cref{rmk:Sred_without_red_chords} $|\Red_\D|>1,$ and by \cref{lem:uniqueness_or_SA},~\cref{rmk:Sred_without_red_chords} we can find a red row $C_{j}$ in the linear span of $D_0,D_\star.$ $\Span(D_0,D_\star)\cap U$ must be one dimensional since otherwise a linear combination of two non proportional vectors from the intersection would have witnessed that $U\vdots V\in\SA_C.$ Let $h_*$ be the maximal index of such a row. Write $\Red_\D=\{k'-r+1,\ldots,k'-1,k',k'+1=\tr(\D)\leftrightarrow0\},~\Blue_\D =\{k'-r,\ldots,k'-s\}$. Recall that by the second item of \cref{lem:uniqueness_or_SA} the vectors $\ww_j,~j\in\Red_\D\cup\Blue_\D\cup\{\star\}$ of \cref{nn:ww} are linearly independent, and moreover the matrix made of these rows and the rows $C_j,~j\notin(\Red_\D\cup\Blue_\D)$ is of full rank $k+2.$ Let $U^i\vdots V^i\in S_{\D},~i=1,2,\ldots$ be a sequence which tends to $U\vdots V,$ and let $C^i\vdots D^i$ be domino pairs for $U^i\vdots V^i$ with limit $C\vdots D.$ 
Every $C^i_j$, for $j\in\Red_\D\cup\Blue_\D$ is a linear combination of $D_0,D_\star,\pr_{a_jb_j}\ww_j$ and $\pr_{a_{\p(j)}b_{\p(j)}}\ww_{\p(j)}.$
Write 
\[u^i_{k'+1}=\sigma_{h_*}C^i_{h_*},~\text{and } \check{u}^i_{k'+1}=\sigma_{h_*}(\gamma_{h_*}D^i_\star+\delta_{h_*}D^i_0),~\text{where }\sigma_{h_*}=(-1)^{\below(\D^i_{h_*})+\below(\D^i_{0})+1}.\]For $j\in\Red_\D,~j<k'$ define $u^i_j$ from $C^i_{j}$ by subtracting a multiple of $C^i_{k'}$ so that the result is a linear combination of $\pr_{a_jb_j}\ww^i_j,\pr_{a_{\p(j)}b_{\p(j)}}\ww^i_{\p(j)},\pr_{a_{0}b_{0}}\ww^i_{0},\pr_{a_{k'}b_{k'}}\ww^i_{k'},$ and $u_\star^i=\pr_{c_\star d_\star} D^i_\star.$ 
These properties uniquely determine the vector, given the matrix $C^i\vdots D^i.$ 
For blue $\D_j$ define $u^i_j~(\check{u}^i_j)$ by subtracting from $C^i_j$ the unique multiple of $u^i_{k'+1}=C^i_{h_*}~(\check{u}^i_{k'+1})$ which sets the projection on the starting domino of $\D_{\p(0)}$, which are as usual considered to be just $\ee_n$ if $\p(0)=\emptyset$ to $0.$ Again this condition determines uniquely $u^i_j.$ We also define $\check D^i_0$ by adding to $\ww^i_{k'}$ the unique multiple of $\ww^i_0$ which sets the projection on the starting entries of $\p(0)$ to $0.$ Also write $\check D^i_\star=-\frac{\varepsilon^i_\star}{\varepsilon^i_0}\ww^i_0.$  

Let $\check\D=\shift_{\abe}\D.$ We recall that this is the chord diagram obtained from $\D$ by declaring the top red chord to be black, thus making its red child the new top red chord, that is $\tr(\check{\D})=k'.$ 
 Finally, define the $1-$loop matrices $\check C^i\vdots \check D^i,~\hat C^i\vdots \check{D}^i$ with $\check{C}^i,\hat{C}^i$ having rows labeled by $[k+1]\setminus{k'},$ where \[\check{C}^i_j=\begin{cases}\check{u}^i_j,&\text{if defined, else}\\
    u^i_j,&\text{if $u^i_j$ is defined}\\
    C^i_j,&\text{otherwise}.\\\end{cases}\qquad\qquad \hat{C}^i_j=\begin{cases}{u}^i_j,&\text{if defined}\\ C^i_j,&\text{otherwise}.\\\end{cases}\]
We write $\check{C},\hat{C}$ for the limiting matrices, properly scaled so that now row vanishes.

The red case of the lemma follows immediately from the next observation.
\begin{obs}\label{obs:for_red_limits}
    \begin{itemize}
    \item For large enough $i$ $\check C^i\vdots \check D^i$ is a weak $\check{\D}$-domino representative of an element in $\partial_{\gamma^{\check\D}_\star}S_{\check\D}.$ 
    \item Let $\check{U}^i\vdots \check{V}^i$ be $1$-loop vector space spanned by $\check{C}^i\vdots\check{D}^i.$ Then $\lim_{i\to\infty}\check{U}^i\vdots\check{V}^i\to U\vdots V,$ hence $U\vdots V\in\overline{S}_{\check\D}.$
    \end{itemize}
\end{obs}
\begin{proof}
Let us first write the formulas for the rows of $\check{C}\vdots\check{D}.$ \cref{obs:ww} implies
\begin{align*}
\check{D}^i_0 = (-1)^{\below(\D_{k'})+1}(\ww^i_{k'}+\varepsilon^i_{k'}\ww^i_0),\qquad\qquad\check{D}^i_\star=-\frac{\varepsilon^i_\star}{\varepsilon^i_0}\ww^i_0=u^i-\frac{\varepsilon^i_\star}{\varepsilon^i_0}(\alpha^i_0\ee_{a_0}+\beta^i_0\ee_{b_0}).
\end{align*}
For red $j<k',$
\begin{align*}
    \check{C}_j^i = &u^i_j=C_j^i-\frac{\varepsilon^i_0\delta^i_j+\varepsilon^i_\star\gamma^i_j}{\varepsilon^i_0\delta^i_{k'}+\varepsilon^i_\star\gamma^i_{k'}}C^i_{k'}=\\
    =&\alpha^i_{j}\ee_{a_j}+\beta^i_j\ee_{b_j}+\varepsilon^i_j(\alpha_{\p(j)}\ee_{a_{\p(j)}}+\beta_{\p(j)}\ee_{b_{\p(j)}})+\frac{\varepsilon^i_0(\gamma^i_j\delta^i_{k'}-\delta^i_j\gamma^i_{k'})}{\varepsilon^i_0\delta^i_{k'}+\varepsilon^i_\star\gamma^i_{k'}}\check{D}^i_\star+(-1)^{\below(\D_{k'})}\frac{\varepsilon^i_0\delta^i_j+\varepsilon^i_\star\gamma^i_j}{\varepsilon^i_0\delta^i_{k'}+\varepsilon^i_\star\gamma^i_{k'}}\check{D}^i_0.
\end{align*}
In addition,
\begin{align*}
    &\check{C}^i_{k'+1}=\check{u}^i_{k'+1}=\sigma_{h_*}(\gamma^i_{h_*}\D^i_\star+\delta^i_{h_*}D^i_0),\\
    &{u}^i_{k'+1}=\varepsilon^i_{h_*}(\alpha^i_{\p(h_*)}\ee_{a_{\p(h_*)}}+\beta^i_{\p(h_*)}\ee_{b_{\p(h_*)}}+\alpha^i_{h_*}\ee_{a_{h_*}}+\beta^i_{h_*}\ee_{b_{h_*}})+
    \sigma_{h_*}(\gamma^i_{h_*}D^i_\star+\delta^i_{h_*}D^i_0).
\end{align*}
Finally, for $j\in\Blue_\D$
\begin{align*}
    \check{C}^i_j&=\check{u}^i_j = C^i_j-\sigma_{h_*}\frac{\delta^i_j\varepsilon^i_\star}{\gamma^i_{h_*}\varepsilon^i_\star+\delta^i_{h_*}\varepsilon^i_0}\check{u}^i_{k'+1}=\\&=
    \alpha^i_{j}\ee_{a_j}+\beta^i_j\ee_{b_j}+\varepsilon^i_j(\alpha_{\p(j)}\ee_{a_{\p(j)}}+\beta_{\p(j)}\ee_{b_{\p(j)}})+\gamma^i_ju+\frac{\delta^i_j\delta^i_{h_*}\varepsilon^i_0}{\gamma^i_{h_*}\varepsilon^i_\star+\delta^i_{h_*}\varepsilon^i_0}\check{D}_\star
\end{align*}
and\[u^i_j=\check{u}^i_j-\frac{\delta^i_j\varepsilon^i_\star}{\gamma^i_{h_*}\varepsilon^i_\star+\delta^i_{h_*}\varepsilon^i_0}(\alpha^i_{\p(h_*)}\ee_{a_{\p(h_*)}}+\beta^i_{\p(h_*)}\ee_{b_{\p(h_*)}}+\alpha^i_{h_*}\ee_{a_{h_*}}+\beta^i_{h_*}\ee_{b_{h_*}}).\]
It is immediate from the construction that $\check{C}^i\vdots\check{D}^i$ is an almost $\check{\D}-$domino pair with all variables defined and finite. The signs of $\check{D}_0,\check{D}_\star,\check{C}^i_{k'+1}$ are correct by their construction, where for $\check{C}^i_{k'+1}$ we use for the signs of its $\gamma^{\check\D},\check\delta^{\check\D}$ coordinates that $1+\below(\D_0)=\below(\check\D_{k'+1})$, and for the sign of its $\varepsilon^{\check\D}$ coordinate we use \[1+\below(\D_0)+\after(\D_\star)=\after(\check\D_{k'+1}),\qquad\qquad1+\below(\D_0)+\Between(\D_\star)=\Between(\check\D_{k'+1}).\]
Moreover, because of these signs, and since the ending domino of $C^{i}_{k'+1}$ is proportional to that of $\check\D_\star,$ it follows that $\eta^{\check\D}_{k'+1},\theta^{\check\D}_{k'+1}$, if exist, have the correct sign as well. The $\eta_\star^{\check\D}$ variable is $0$ by construction.

For red $\check{C}^i_j$ the $\check\D-$domino entries have the correct signs, by \cref{def:L=1domino_signs} and \cref{obs:iterated_etas}. The last observation also shows that $\eta_j^{\check\D},$ for non top red chords, also have the correct signs.
All other $\check{\D}-$domino signs of $\check{C}^i\vdots\check{D}^i$ but that of $\eta_\star$ are correct, since they agree with the corresponding signs for $C^i\vdots D^i.$ 

The $\gamma^{\check\D}$ and $\delta^{\check\D}$ coordinates of blue chords are the same as the corresponding coordinates for $C^i,$ up to scaling by $1,\frac{\delta^i_{h_*}\varepsilon^i_0}{\gamma^i_{h_*}\varepsilon^i_\star+\delta^i_{h_*}\varepsilon^i_0}>0$, respectively. Since $\below(\D_j)=\below(\check\D_j)$ in these cases, their signs are correct, as well as of $\eta_j^{\check\D}$ variables for non top blue chords, by \cref{def:L=1domino_signs}.

For red and blue chords of $\check\D$ the $\varepsilon^{\check\D}$ coordinates are the same as for the corresponding chords in $\D$, up to  $(-1)^{\below(\D_{k'})+1}$ sign for the new top red chord, and the same holds for $\gamma^{\check\D}_h,~h\in\Blue_{\check\D}=\Blue_\D.$ For the yellow chord $\varepsilon^{\check\D}_\star=(-1)^{1+\below(\D_0)}\frac{\varepsilon^i_\star}{\varepsilon^i_0}.$ It follows from \cref{obs:ww} that \[\check{\ww}^i_{j}=\ww^i_j,~j\in\Red_\D\cup\Blue_\D\setminus\{k'+1\},\]where the vectors $\check\ww_j$ are as in \cref{nn:ww}, but for $\check\D.$
Moreover, it is easy to see that  
\[\lim_{i\to\infty}\Span(\check{C}^{i}_{k'+1},\check{D}^i_\star)=\Span(D_0,D_\star)=\Span(\ww_0,\D_\star).\]
We thus learn that 
\[\lim_{i\to\infty}\check{U}^i+\check{V}^i=\Span(C^j~|~j\notin\Red_\D\cup\Blue_\D\cup\{\star\})+\Span(\ww_j~|j~\in\Red_\D\cup\Blue_\D)+\Span(D_\star)=U+V,\]
by the second item of \cref{lem:uniqueness_or_SA}.
This also shows that for $i$ large enough $\check{U}^i\vdots\check{V}^i$ has full rank, hence it is an element of $\overline{S}_{\check\D},$ and even of $\partial_{\gamma^{\check\D}_\star}{S}_{\check\D},$ and $\check{C}^i\vdots\check{D}^i$ is $\check\D-$domino representative.

We need to show that $\lim_{i\to\infty}\check{U}^i=U.$
First, note that for every $h\notin\Red_{\check\D}$ $\check{C}_h=\hat{C}_h.$ Now, by \cref{cor:wwRedBlueRanks} the rows $C_l,~l\notin\Red_{\D}$ are linearly independent and intersect trivially 
$\Span(D_\star,\ww_h:h\in\Red_\D).$ By construction, the linear span of $\hat{C}^i_h,~h\notin\Red_{\check\D}$ equals $\Span(C^i_{h_*})+\Span(C^i_h,~h\notin\Red_\D).$ Thus, also the limiting vectors $\check{C}_h,~h\notin\Red_{\check\D}$ are linearly independent vectors in $U.$ Write $U_{\Red^c_{\check\D}}$ for their span.

In addition, for $h\in\Red_{\check\D},$ $\check{C}^i_h=\hat{C}^i_h,$ hence the limiting vector spaces spanned by these collections of vectors is the same subspace $U_{\Red_{\check\D}}$ of $U,$ whose dimension is $k-\dim(U_{\Red^c_{\check\D}}).$ To finish we must show that $U_{\Red^c_{\check\D}}\cap U_{\Red_{\check\D}}=\{0\},$ since it will imply $\lim_{i\to\infty}\check{U}^i=\check{U}.$ 

$U_{\Red^c_{\check\D}},$ by its construction is contained in \[\Span(D_0,D_\star)+\Span(C_h,~h\notin\Red_\D)=\Span(\ww_0,D_\star)+\Span(C_h,~h\notin\Red_\D).\] With the notations of \cref{cor:wwRedBlueRanks} (applied to $\D$ and $C\vdots D$) $U_{\Red_{\check\D}}$ is contained in $W_\Red$, hence, \cref{cor:wwRedBlueRanks} implies that if $U_{\Red^c_{\check\D}}\cap U_{\Red_{\check\D}}\neq\{0\}$ then it must be contained in $\Span(\ww_0,D_\star)$. Moreover, $\ww_0,D_\star\notin U,$ since otherwise $U\in\SA_C.$ Thus the intersection must be a linear combination with non zero coefficients of $\ww_0,D_\star.$ But then its projection on the starting domino of $\D_{\p(0)}.$ 
For every element of $ U_{\Red_{\check\D}},$ however, this projection is $0,$ yielding the required contradiction.
\end{proof}
\textbf{Treating blue chords:}
If $U\vdots V\in\Sblue_\D\setminus\Sred_\D,$ then by \cref{lem:Srem_and_domino_limits} we can find a weak domino representative $C\vdots D$ for this space. By \cref{lem:Srem_cases} and 
the construction of \cref{lem:Srem_and_domino_limits}, the starting domino of every row is non zero, and all $\varepsilon,\gamma$ variables are defined and finite. 
 For $C\vdots D$ all $\varepsilon_j\neq 0,$ except for a nonempty subset $J\subseteq\Blue_\D,$ which is an ascending chain of indices of blue chords, starting from $h_*,$ the lowest blue, to some $h^*\in\Blue_\D.$ 
It is easy to show, by changing the domino entries slightly, in a way that does not contradict the sign rules, that $C\vdots D$ is a $\D-$domino limit, and moreover, we can even write $C\vdots D=\lim _{i\to\infty}C^i\vdots D^i,$ where $C^i\vdots D^i\in\partial_{\varepsilon_{h_*}}S_\D.$ Let $\check\D=\shift_{\varepsilon_{h_*}}<\D.$ Then, by \cref{lem:blue_chane_of_coords_verification}
\[U\vdots V\in\overline{\partial_{\varepsilon_{h_*}}S_\D}=\overline{\partial_{\gamma^{\check\D}_\star}S_{\check\D}}.\]
\end{proof}
\begin{rmk}\label{rmk:iterative boundary}In the above analysis, if $h_*\neq k',$ $h_*\neq h^*$ or $U\vdots V\in \Sred_\D\cap\Sblue_\D$ the argument is easily elaborated to show that $U\vdots V$ belongs to a corner of $\overline{S}_{\D'},$ where $\D'$ is obtained by iterating several red and blue shift moves.
\end{rmk}
\subsection{Manifold structures of strata}\label{subse:mfld_str}
We now wrap up earlier results to study the manifold structure of strata.
We begin with two corollaries of \cref{lem:uniqueness_or_SA},~\cref{rmk:Sred_without_red_chords} and \cref{lem:Srem_and_domino_limits}.
\begin{cor}\label{cor:reg_domino_strata}
Let $\D$ be a chord diagram, and $A\subset\tVar_\D.$ 
Then if $\partial_AS_\D\cap \Srem_\D\neq\emptyset$ then $\partial_AS_\D\subseteq \Srem_\D.$ In particular, every element of $\partial_AS_\D$ have unique up to gauge weak $\D-$domino form, and for $A\neq A'$ with $\partial_AS_\D,~\partial_{A'}S_\D\subseteq \Srem_\D,$
$\partial_AS_\D\cap \partial_{A'}S_\D=\emptyset.$ Thus, $\Srem_\D$ is the disjoint union of subspaces $\partial_A S_\D.$ If the blue boundary of $S_\D$ is not contained in $\SA$ then it does not intersect it. 
\end{cor}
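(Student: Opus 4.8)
\textbf{Proof plan for Corollary \ref{cor:reg_domino_strata}.}

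The plan is to deduce everything from \cref{lem:uniqueness_or_SA} together with \cref{rmk:Sred_without_red_chords} and the domino-limit analysis of \cref{lem:Srem_cases}. The key point to establish first is that $\Srem_\D$ is exactly the locus of points of $\overline{S}_\D$ which possess a (unique, up to gauge) \emph{weak} $\D$-domino representative, i.e.\ a domino limit in which every domino variable is finite and well-defined. Indeed, by \cref{lem:uniqueness_or_SA}(1), every $U\vdots V\in\Srem_\D$ has a unique weak $\D$-domino representative; conversely, if $C\vdots D$ is a weak $\D$-domino representative for $U\vdots V\in\overline{S}_\D$, then by \cref{lem:Srem_cases} all the vanishings that would force membership in $\SA$, $\Sred_\D$ or $\Sblue_\D$ are excluded (every starting, ending and inherited domino is nonzero, hence no red row lies in $\Span(D_0,D_\star)$ and no blue row lies in the span witnessing $\Sblue_\D$; and the full-rank conclusion of \cref{lem:uniqueness_or_SA} puts it outside $\SA$), so $U\vdots V\in\Srem_\D$. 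This gives the ``iff'' characterization of $\Srem_\D$ in terms of weak domino forms.

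Granting this, the first assertion is immediate: if $U\vdots V\in\partial_AS_\D\cap\Srem_\D$, then by definition of $\partial_AS_\D$ (\cref{def:almost_and_weak_domino}) \emph{every} point of $\partial_AS_\D$ has a weak $\D$-domino representative in which precisely the variables in $A$ vanish — in particular a weak domino representative with all variables finite and defined — so by the characterization above $\partial_AS_\D\subseteq\Srem_\D$. Uniqueness up to gauge of the weak domino form on $\partial_AS_\D$ is then just \cref{lem:uniqueness_or_SA}(1) (or equivalently the uniqueness clause of \cref{lem:Srem_cases}) applied pointwise. For disjointness: if $U\vdots V\in\partial_AS_\D\cap\partial_{A'}S_\D$ with both contained in $\Srem_\D$, then $U\vdots V$ has two weak $\D$-domino representatives, one with vanishing set exactly $A$ and one with vanishing set exactly $A'$; by the uniqueness these representatives agree up to gauge (which does not change which entries of $\tVar_\D$ vanish, since the gauge acts by positive row scalings), forcing $A=A'$. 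Hence the subspaces $\{\partial_AS_\D : \partial_AS_\D\subseteq\Srem_\D\}$ are pairwise disjoint; and since by \cref{lem:uniqueness_or_SA}(1) every point of $\Srem_\D$ has a weak $\D$-domino representative, which has \emph{some} vanishing set $A\subseteq\tVar_\D$, each such point lies in the corresponding $\partial_AS_\D$, so $\Srem_\D=\bigsqcup_A\partial_AS_\D$.

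For the final sentence, recall from \cref{def:internal_bdries_domino} that the blue boundary is $\partial_{\varepsilon_h}S_\D$ with $\D_h$ the lowest blue chord. By \cref{prop:boundaries_domain} this is a subspace of $\overline{S}_\D$; and every one of its points has, by construction, a weak $\D$-domino representative in which $\varepsilon_h=0$ but the other variables keep their (weak) domino signs. The hypothesis is that this boundary is not contained in $\SA$, so at least one point of it lies in $\overline{S}_\D\setminus\SA$; we must see it lies entirely outside $\SA$. Here one invokes \cref{lem:Srem_cases}: for a domino limit with $\varepsilon_h=0$ at the lowest blue chord $\D_h$, the second bullet of that lemma (``no red row of $C$ lies in $\Span(D_0,D_\star)$'') applies verbatim, since the only exception allowed there is precisely a vanishing $\varepsilon$ at a non-very-short blue chord, with $\delta_h\neq0$ guaranteed; so none of the $\SA$-forcing degeneracies occur, and the whole blue boundary avoids $\SA$. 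The one point of care — which I expect to be the main obstacle — is matching the definition of $\partial_{\varepsilon_h}S_\D$ via weak domino forms (setting $\varepsilon_h=0$) against the behavior of genuine domino \emph{limits} along sequences in $S_\D$ tending to such a boundary point, i.e.\ checking that a weak domino form with $\varepsilon_h=0$ at the lowest blue chord really is realized as such a limit and that \cref{lem:Srem_cases}'s ``domino limit'' hypotheses are met; this is handled by the limit computation already carried out in the proof of \cref{prop:boundaries_domain} (the $\zeta_i=\varepsilon_h$ case) together with \cref{obs:domino_limits}.
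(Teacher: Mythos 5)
Your proposed ``iff'' characterization of $\Srem_\D$ — that a point of $\overline S_\D$ lies in $\Srem_\D$ exactly when it admits a weak $\D$-domino representative — is false, and this gap is fatal to the argument. A weak domino form (as in \cref{def:almost_and_weak_domino}) only requires the domino variables to be finite and the inequalities to hold in the \emph{weak} sense; vanishing is allowed. In particular, every point of the blue boundary $\partial_{\varepsilon_h}S_\D$ (with $\D_h$ the lowest blue chord) that avoids $\SA$ has a weak $\D$-domino representative with $\varepsilon_h=0$, yet by \cref{cor:regular_matchable_and_S_D_in_SREM} such a point lies in $\Sblue_\D$, not in $\Srem_\D$. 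Your ``conversely'' step misreads \cref{lem:Srem_cases}: that lemma's conclusions are \emph{conditional} on $U\vdots V\in\Srem_\D$ (or on a red/blue hypothesis), so it cannot be used to deduce membership in $\Srem_\D$ from the existence of a weak domino form — that is precisely what is to be proved. Also, ``full rank'' does not put a point outside $\SA$; $\SA$ is defined by the existence of short-supported vectors in $C$ or $C+D$, a condition orthogonal to rank.

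The missing idea is \cref{prop:pos_poly_plucker_rep}: every Pl\"ucker coordinate of $U$ and of $U+V$ is a polynomial with nonnegative coefficients in $|\tVar_\D|$. This is what lets one propagate the $\SA$-avoidance hypothesis across the whole stratum. The paper's proof goes: take a point of $\partial_AS_\D\cap\Srem_\D$; by \cref{lem:uniqueness_or_SA}(1) its unique domino limit is a weak domino representative with vanishing set exactly $A$; since this point avoids $\SA$, certain Pl\"ucker minors are nonzero there; by the positivity of the polynomial formulas these same minors are nonzero at \emph{every} point of $\partial_AS_\D$ (all having the same vanishing set $A$ of domino variables), so $\partial_AS_\D\cap\SA=\emptyset$. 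Full rank of the weak domino representative then rules out $\Sred_\D$ via \cref{lem:uniqueness_or_SA}(3), and \cref{lem:Srem_and_domino_limits} rules out $\Sblue_\D$: a point of $\Sblue_\D\setminus\SA$ would force some blue $\varepsilon_j=0$, i.e.\ $\varepsilon_j\in A$, forcing $\partial_AS_\D\subseteq\Sblue_\D$, contradicting $\partial_AS_\D\cap\Srem_\D\neq\emptyset$. The same positivity argument handles the final sentence about the blue boundary. Without this propagation step, nothing in your proposal shows that a single point of $\partial_AS_\D$ avoiding $\SA$ (or $\Sred_\D$, or $\Sblue_\D$) implies that every point does, which is the entire content of the first assertion.
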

\begin{proof}
By the first item of \cref{lem:uniqueness_or_SA}, every point $U\vdots V$ in $\partial_AS_\D\cap \Srem_\D\neq\emptyset$ has a unique domino limit, which is a weak domino representative. Hence it must be the weak domino pair with the elements of $A$ being $0.$ Since it is a weak domino representative, it must be of full rank. Since it is not in $\SA,$ for every quadruple $\{i,i+1,j,j+1\}$where addition is taken cyclically modulo $n,$ there is a non zero minor of $U$ not using these columns, and for every $\{i,i+1(\mod n)\}$,there is a nonzero minor of $U+V$ avoiding these columns. 
By \cref{prop:pos_poly_plucker_rep} all Pl\"ucker coordinates of $U,U+V$ can be written as polynomials with nonnegative coefficients in the elements of $\tVar.$ Thus, if a minor is non zero for one point of $\partial_AS_\D,$ then the same minor will be non zero for all points of $\partial_A S_\D.$ Thus, $\partial_AS_\D\cap\SA=\emptyset.$ 

If $U\vdots V$ belongs to the blue boundary of $S_\D$ then it has a weak domino representative in which only $\varepsilon_h$ for the lowest blue chord vanish. 
If $U\vdots V\notin\SA,$ the argument of the last paragraph shows that the same holds for all points of that boundary.

In addition, since the weak domino form $C\vdots D$ of an arbitrary point of $\partial_A S_\D$ is of full rank, then in particular no non zero combination of $D_0,D_\star$ falls in $C,$ showing $\partial_AS_\D\cap\Sred_\D=\emptyset,$ by \cref{lem:uniqueness_or_SA},~\cref{rmk:Sred_without_red_chords}.

Finally, by \cref{lem:Srem_and_domino_limits}, if $C\vdots D$ is a weak domino representative of $U\vdots V\notin\SA\cup\Sred_\D$ then it must have a blue row $C_j$ with vanishing $\varepsilon_j.$ But this means that $\varepsilon_j\in A,$ hence all $\partial_A S_\D\subseteq\Sblue_\D.$ Since this containment is false it implies that $\partial_AS_\D\cap\Srem_\D=\emptyset.$ 
Thus, $\partial_AS_\D\subseteq\Srem_\D.$ The 'in particular' part now follows from \cref{lem:uniqueness_or_SA}, which guarantees a unique weak domino form for elements of $\Srem_\D.$ The final statement follows from the previous statements and the first item of \cref{lem:uniqueness_or_SA}.
\end{proof}
\begin{cor}\label{cor:regular_matchable_and_S_D_in_SREM}
    For every chord diagram $\D$, $S_\D$ and all its regular matchable boundaries are pairwise disjoint. The regular matchable boundaries are contained in $\Srem_\D$. The blue boundary, if not in $\SA,$ is contained in $\Sblue_\D\setminus(\Sred_\D\cup\SA),$ hence is also disjoint from them.
\end{cor}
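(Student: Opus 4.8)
The plan is to assemble the statement from the two preceding corollaries, \cref{cor:reg_domino_strata} and \cref{cor:regular_matchable_and_S_D_in_SREM}'s own ingredients, together with \cref{cor:regular_matchable_and_S_D_in_SREM}... wait — more precisely, the statement to prove \emph{is} \cref{cor:regular_matchable_and_S_D_in_SREM}. So the plan is the following. First recall from \cref{nn:reg_match} that the regular matchable boundaries are exactly the uncolored matchable boundaries $\partial_{\zeta_i}S_\D$ that are not contained in $\SA$. By \cref{prop:boundaries_domain} each such boundary is a nonempty (generically-full-rank) subspace of $\overline{S}_\D$ parameterized by a degenerate-BCFW recipe, and hence consists of points having a weak $\D$-domino representative with exactly one distinguished variable $\tilde\zeta_i$ vanishing; in the notation of \cref{cor:reg_domino_strata} this is $\partial_A S_\D$ for $A=\{\tilde\zeta_i\}$. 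The key point is that such a boundary meets $\Srem_\D$: by definition it is not in $\SA$, and I claim it is not in $\Sred_\D\cup\Sblue_\D$ either. For $\Sred_\D$ this is immediate since a weak domino representative is of full rank, so no nonzero combination of $D_0,D_\star$ lies in the row span of $C$ (invoking \cref{lem:uniqueness_or_SA}, \cref{rmk:Sred_without_red_chords}). For $\Sblue_\D$: if a point of an uncolored boundary lay in $\Sblue_\D\setminus\Sred_\D$, then by \cref{lem:Srem_and_domino_limits} it would have a weak domino representative with a vanishing $\varepsilon_j$ for a blue chord $\D_j$ — but the only vanishing variable on an uncolored regular matchable boundary is $\tilde\zeta_i\ne\varepsilon_h$ for the lowest blue chord (the $\varepsilon_h$-boundary is precisely the blue boundary, which is not uncolored). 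So the boundary meets $\Srem_\D$, and then \cref{cor:reg_domino_strata} upgrades this to containment: $\partial_{\zeta_i}S_\D\subseteq\Srem_\D$.

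Second, once every regular matchable boundary and $S_\D$ itself all sit inside $\Srem_\D$, their pairwise disjointness follows from the uniqueness of weak $\D$-domino forms on $\Srem_\D$ (\cref{lem:uniqueness_or_SA}, item (1), as repackaged in \cref{cor:reg_domino_strata}): a point $U\vdots V$ determines its weak domino pair up to gauge, hence determines exactly which variables in $\tVar_\D$ vanish. $S_\D$ itself is the locus where \emph{none} of the sign-constrained domino variables vanish (this is \cref{obs:cell_not_red_blue} together with the strict sign rules of \cref{def:L=1domino_signs}), while $\partial_{\zeta_i}S_\D$ is the locus where precisely $\tilde\zeta_i$ vanishes; distinct choices of the vanishing set give disjoint strata, so $S_\D$ and the various $\partial_{\zeta_i}S_\D$ are mutually disjoint. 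Concretely: if $A\ne A'$ with $\partial_A S_\D,\partial_{A'}S_\D\subseteq\Srem_\D$, then $\partial_A S_\D\cap\partial_{A'}S_\D=\emptyset$, and $S_\D=\partial_\emptyset S_\D$ fits this framework with $A=\emptyset$.

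Third, the blue boundary: by hypothesis assume it is not contained in $\SA$. By \cref{cor:reg_domino_strata} (the statement ``If the blue boundary of $S_\D$ is not contained in $\SA$ then it does not intersect it'') it is disjoint from $\SA$. A point of the blue boundary has, by \cref{prop:boundaries_domain} and \cref{lem:blue_chane_of_coords_verification}, a weak domino representative with $\varepsilon_h=0$ for the lowest blue chord $\D_h$ and all other sign-constrained variables honoring their (strict) signs; in particular $\varepsilon_h=0$ together with the defining witness of \cref{def:Sred_blue} — namely the blue row $C_h$ has a vanishing inherited domino and therefore lies in $\Span(\ee_{a_h},\ee_{b_h},D_\star,\pr_{c_\star d_\star}D_\star)$ — shows the point lies in $\Sblue_\D$. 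It does not lie in $\Sred_\D$: the weak domino representative is of full rank (\cref{cor:reg_domino_strata}), so again no nonzero combination of $D_0,D_\star$ lies in the row span of $C$, whence by \cref{lem:uniqueness_or_SA}, \cref{rmk:Sred_without_red_chords}, it is not in $\Sred_\D$. Thus the blue boundary is contained in $\Sblue_\D\setminus(\Sred_\D\cup\SA)$, and since $S_\D$ and the regular matchable boundaries all lie in $\Srem_\D=\overline{S}_\D\setminus(\SA\cup\Sred_\D\cup\Sblue_\D)$, the blue boundary is disjoint from each of them.

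The main obstacle I anticipate is bookkeeping rather than conceptual: one must be careful that the ``distinguished vanishing variable'' $\tilde\zeta_i$ attached to each matchable boundary is well-defined as an element of $\tVar_\D$ (recall $\tilde\zeta_i$ is $\zeta_i$ unless $\zeta_i=\gamma_i$ and $\eta_i$ or $\theta_i$ is the relevant variable, per \cref{def:internal_bdries_domino}) and that no two distinct uncolored matchable boundaries share the same vanishing set $A$. This last point is where one leans on \cref{obs:bdry_matching_comb} and the case analysis of the shift operations, but it has effectively already been packaged into \cref{cor:reg_domino_strata}'s conclusion that $\Srem_\D$ is the \emph{disjoint} union of the subspaces $\partial_A S_\D$; so the residual work is just to check that each regular matchable boundary is one of these $\partial_A S_\D$ with $A$ a singleton, which is immediate from the degenerate-BCFW description in \cref{prop:boundaries_domain}.
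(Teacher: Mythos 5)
Your argument follows the paper's proof closely: regular matchable boundaries avoid $\SA$ by definition, avoid $\Sred_\D$ via full rank of the weak domino representative together with \cref{lem:uniqueness_or_SA}~and \cref{rmk:Sred_without_red_chords}, avoid $\Sblue_\D$ via \cref{lem:Srem_and_domino_limits}, and then disjointness comes from the uniqueness of almost domino forms on $\Srem_\D$ (\cref{cor:reg_domino_strata}). The blue-boundary discussion is also essentially the same as the paper's.

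There is one genuine slip you should fix. In steps two and three you assert that ``$S_\D$ itself sits inside $\Srem_\D$'' (equivalently, that $S_\D=\partial_\emptyset S_\D\subseteq\Srem_\D$). That containment requires $S_\D\cap\SA=\emptyset$, which at this point in the paper has \emph{not} been established --- it is exactly \cref{cor:cell_is_not_SA}, proved much later using injectivity and openness of the amplituhedron map. The coindependence results of \cref{lem:bcfw_after_fl} give coindependence only for particular quadruples, not for every cyclic $\{i,i',j,j'\}$, so they do not yield $S_\D\cap\SA=\emptyset$ directly, and you cannot invoke $\partial_\emptyset S_\D\subseteq\Srem_\D$ inside the $\partial_A$--$\partial_{A'}$ machinery of \cref{cor:reg_domino_strata}. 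The paper sidesteps this deliberately: it only claims that the \emph{boundaries} lie in $\Srem_\D$, obtains $S_\D\cap(\Sred_\D\cup\Sblue_\D)=\emptyset$ separately from \cref{obs:cell_not_red_blue}, and then derives disjointness by a one-sided uniqueness argument. The fix for your version is short: if $U\vdots V\in S_\D\cap\partial_{\zeta_i}S_\D$ for a regular matchable boundary, then $U\vdots V\in\Srem_\D$ because the \emph{boundary} is contained there, and \cref{lem:uniqueness_or_SA}(1) gives a unique-up-to-gauge almost domino pair; the domino representative coming from $S_\D$ has $\tilde\zeta_i\neq 0$ while the weak representative coming from the boundary has $\tilde\zeta_i=0$, a contradiction. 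Likewise for the blue boundary, $S_\D$ avoids it directly because $S_\D\cap\Sblue_\D=\emptyset$ (\cref{obs:cell_not_red_blue}), not because $S_\D\subseteq\Srem_\D$. With these two substitutions your argument is complete and matches the paper.
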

\begin{proof}
$S_\D$ does not intersect $\Sred_\D,\Sblue_\D$ by \cref{obs:cell_not_red_blue}.
We will show that its regular matchable boundaries are in $\Srem_\D,$ since then by \cref{lem:uniqueness_or_SA} all their points have a unique up to gauge weak domino representative, and since different spaces have different vanishing domino coordinate, they must be disjoint. Note that we have not yet ruled out the case that $\Srem_\D$ is empty, but even if it were the case, our argument would have held in a trivial sense.

These boundary strata are not in $\SA$ by the definition of regular boundaries, see \cref{nn:reg_match}.

By 
definition every element of these spaces has a weak domino representative $C\vdots D$. In particular $D_0,D_\star,$ which are well defined by the second item of \cref{lem:uniqueness_or_SA} are linearly independent of $C,$ which by \cref{lem:uniqueness_or_SA},~\cref{rmk:Sred_without_red_chords} implies that these spaces avoid $\Sred_\D.$

Since every point of $S_\D$ and its regular matchable boundaries is a domino limit, obtained by setting the relevant domino coordinate to $0,$ and this coordinate is not $\varepsilon_h$ for $\D_h\in\Blue_\D,$ \cref{lem:Srem_and_domino_limits} shows that these spaces do not intersect $\Sblue_\D$ as well. 

If the blue boundary is regular then by \cref{cor:reg_domino_strata} it avoids $\SA$. Every point $U\vdots V$ of the blue boundary has a weak $\D-$domino representative $C\vdots D$ in which the lowest blue $\varepsilon$ variable vanishes. $C\vdots D$ is clearly a domino limit, which can be taken over a sequence of elements in $S_\D$ with the same domino variables as $C\vdots D,$ only that the lowest blue $\varepsilon$ is non zero. This weak domino limit is of full rank, since it is a domino representative. 
In particular, $D_0,D_\star,$ which are uniquely determined by the second item of this lemma, are linearly independent over $C.$ Thus, by \cref{lem:uniqueness_or_SA},~\cref{rmk:Sred_without_red_chords} $U\vdots V\notin\Sred_\D,$ as needed. 
\end{proof}
\cref{cor:reg_domino_strata} has the following corollary, which uses the same notations.
\begin{cor}\label{cor:strata_and_dim}
If $\partial_A S_\D\subseteq\Srem_\D$ for $A\neq\emptyset$ then it is a weak submanifold of $\Gr_{k,n;\ell}$ of dimension at most $4(k+\ell)-1.$ The dimension equals $4(k+\ell)-1$ if and only if it is one of the regular matchable boundaries. The blue and red matchable boundaries, if avoid $\SA$, are manifolds of dimension $4(k+\ell)-1.$ 
\end{cor}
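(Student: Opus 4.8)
The plan is to reduce the statement about $\partial_A S_\D$ to the domino parameterization established earlier, and then argue dimension by counting free domino variables modulo gauge. First I would invoke \cref{cor:reg_domino_strata}: since we assume $\partial_A S_\D\subseteq\Srem_\D$, every point of $\partial_A S_\D$ has a unique-up-to-gauge weak $\D$-domino representative, namely the one in which exactly the coordinates in $A$ vanish and no other element of $\tVar_\D$ vanishes. By \cref{obs:poly_ineqs} and \cref{def:internal_bdries_domino}, $\partial_A S_\D$ is precisely the locus of weak domino pairs (of full rank) with the prescribed vanishing set $A$ and the prescribed strict inequalities on the remaining variables. Fixing the row-scaling gauge of \cref{obs:gauge_domino} (e.g.\ normalizing one entry per row as in the proof of \cref{thm:dominoAndBCFWparams}), the remaining domino variables give a smooth injective parameterization into $\Gr_{k,n;\ell}$; this exhibits $\partial_A S_\D$ as the injective continuous (indeed smooth) image of an open subset of a Euclidean space cut out by the constraints $A=0$, hence a weak submanifold in the sense of \cref{def:weak_sub_mfld}. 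Its dimension is the number of free coordinates, which is $4(k+\ell)-|A|$ when the constraints in $A$ are independent, so at most $4(k+\ell)-1$.

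Next I would pin down when the dimension is exactly $4(k+\ell)-1$. Since $S_\D$ itself has dimension $4(k+\ell)$ by \cref{thm:dominoAndBCFWparams}, equality forces $|A|=1$, i.e.\ $A=\{\zeta\}$ for a single domino coordinate (more precisely a single element of $\tVar_\D$), and the shift $\shift_\zeta\D$ must be defined. Conversely, for each $\zeta$ for which $\shift_\zeta\D$ is defined, $\partial_\zeta S_\D$ is exactly one of the matchable boundaries of \cref{def:internal_bdries_domino}, and \cref{cor:regular_matchable_and_S_D_in_SREM} tells us the regular matchable boundaries are contained in $\Srem_\D$. So among the codimension-one strata $\partial_\zeta S_\D$, those landing in $\Srem_\D$ are precisely the regular matchable boundaries, proving the ``if and only if''. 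The independence of a single constraint is automatic (a nonconstant coordinate function cutting out a hypersurface), so no genuine transversality computation is needed here — the uniqueness of the weak domino form from \cref{lem:uniqueness_or_SA} is doing the real work, since it guarantees the parameterization is a genuine chart rather than a possibly-degenerate map.

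For the blue and red boundaries I would argue separately, since these lie in $\Sblue_\D$ resp.\ $\Sred_\D$ rather than $\Srem_\D$. For the blue boundary $\partial_{\varepsilon_h}S_\D$ (with $\D_h$ the lowest blue chord), assuming it avoids $\SA$, \cref{lem:bdry_matching_geo} together with \cref{lem:blue_chane_of_coords_verification} gives a diffeomorphism $\partial_{\varepsilon_h}S_\D\cong\partial_{\gamma_\star}S_{\check\D}$ where $\check\D=\shift_{\varepsilon_h}\D$; the latter is an \emph{uncolored} matchable boundary of $\check\D$, hence by the regular case already handled (and \cref{cor:reg_domino_strata}, noting it avoids $\SA$ since the diffeomorphic image does) it is a manifold of dimension $4(k+\ell)-1$, so $\partial_{\varepsilon_h}S_\D$ is too. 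For the red boundary $\partial_{\delta_0}S_\D$, recall it is \emph{defined} in \cref{def:internal_bdries_domino} to be $\partial_{\gamma_\star}S_{\shift_\abe\D}$, which is again an uncolored matchable boundary of $\shift_\abe\D$, so the same argument applies (with $|\Red_{\shift_\abe\D}|=|\Red_\D|-1$, and the diagram still having a red chord since $\shift_\abe$ is only defined when $|\Red_\D|>1$). The main obstacle I anticipate is bookkeeping: one must check that ``$\partial_{\varepsilon_h}S_\D$ avoids $\SA$'' really does transport across the diffeomorphism of \cref{lem:blue_chane_of_coords_verification} to ``$\partial_{\gamma_\star}S_{\check\D}$ is a regular matchable boundary of $\check\D$'' (so that \cref{cor:regular_matchable_and_S_D_in_SREM} and \cref{cor:reg_domino_strata} apply to $\check\D$), and similarly that no element of $A$ is $\varepsilon_{h'}$ for a blue chord in the uncolored cases — but this is exactly the content of \cref{nn:reg_match} and \cref{cor:regular_matchable_and_S_D_in_SREM}, so it is a matter of citing carefully rather than new work.

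\begin{proof}
Suppose $A\neq\emptyset$ and $\partial_A S_\D\subseteq\Srem_\D$. By \cref{cor:reg_domino_strata} every point of $\partial_A S_\D$ has a unique-up-to-gauge weak $\D$-domino representative, namely the one in which exactly the coordinates of $A$ vanish. Fixing the row-scaling gauge of \cref{obs:gauge_domino} by normalizing one entry in each row (as in the proof of \cref{thm:dominoAndBCFWparams}), the remaining elements of $\tVar_\D$ not in $A$ give a smooth injective parameterization of $\partial_A S_\D$ by an open subset of a Euclidean space; by \cref{prop:pos_poly_plucker_rep} and \cref{thm:dominoAndBCFWparams} this parameterization lands in $\Gr_{k,n;\ell}$ and is a weak submanifold chart in the sense of \cref{def:weak_sub_mfld}. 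Since $\dim S_\D=4(k+\ell)$ and the constraints $A=0$ cut the parameter count down by $|A|\geq 1$, the dimension of $\partial_A S_\D$ is at most $4(k+\ell)-1$.

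If moreover $\dim\partial_A S_\D=4(k+\ell)-1$, then $|A|=1$, say $A=\{\zeta\}$; by \cref{def:internal_bdries_domino} the locus $\partial_\zeta S_\D$ is a matchable boundary, and by \cref{nn:reg_match} it is regular precisely when it avoids $\SA$, which holds since $\partial_\zeta S_\D\subseteq\Srem_\D$. Conversely, any regular matchable boundary of $S_\D$ is of the form $\partial_\zeta S_\D$ with $|A|=1$, is contained in $\Srem_\D$ by \cref{cor:regular_matchable_and_S_D_in_SREM}, and hence by the previous paragraph is a weak submanifold of dimension $4(k+\ell)-1$. This proves the claimed characterization.

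Finally, consider the blue boundary $\partial_{\varepsilon_h}S_\D$, where $\D_h$ is the lowest blue chord, and assume it avoids $\SA$. By \cref{lem:bdry_matching_geo} and \cref{lem:blue_chane_of_coords_verification} there is a diffeomorphism $\partial_{\varepsilon_h}S_\D\cong\partial_{\gamma_\star}S_{\check\D}$ with $\check\D=\shift_{\varepsilon_h}\D$. The boundary $\partial_{\gamma_\star}S_{\check\D}$ is an uncolored matchable boundary of $\check\D$, and it avoids $\SA$ since its diffeomorphic image does; hence it is regular, so by \cref{cor:regular_matchable_and_S_D_in_SREM} and \cref{cor:reg_domino_strata} it is contained in $\Srem_{\check\D}$ and, by the characterization above applied to $\check\D$, is a manifold of dimension $4(k+\ell)-1$. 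Therefore $\partial_{\varepsilon_h}S_\D$ is as well. For the red boundary, \cref{def:internal_bdries_domino} defines $\partial_{\delta_0}S_\D=\partial_{\gamma_\star}S_{\shift_\abe\D}$; since $\shift_\abe$ is defined only when $|\Red_\D|>1$, the diagram $\shift_\abe\D$ still contains a red chord, and $\partial_{\gamma_\star}S_{\shift_\abe\D}$ is an uncolored matchable boundary of $\shift_\abe\D$. Assuming it avoids $\SA$, the same argument as for the blue boundary shows it is a manifold of dimension $4(k+\ell)-1$.
\end{proof}
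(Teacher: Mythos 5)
Your overall strategy (unique weak domino form, gauge-fix, count free coordinates) is the right one and the treatment of the blue and red boundaries via the diffeomorphism to an uncolored matchable boundary of a shifted diagram is essentially the paper's route. But the dimension-counting step has a genuine gap. You claim that after gauge fixing, ``the remaining elements of $\tVar_\D$ not in $A$ give a smooth injective parameterization of $\partial_A S_\D$ by an open subset of a Euclidean space,'' with dimension $4(k+\ell)-|A|$. This is not correct: $\tVar_\D$ is not a set of independent coordinates, because the $\eta_{ml}$ and $\theta_{ml}$ variables are $2\times 2$ determinants of the domino entries, so the set $\tVar_\D\setminus A$ does not parameterize an open Euclidean domain and the codimension is not simply $|A|$. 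The paper handles this by choosing genuine independent coordinates (one of $\alpha_i,\beta_i$; one of $\gamma_i,\delta_i$; one representative $\phi_i=\delta_i/\gamma_i$ or $\psi_i=\beta_i/\alpha_i$ from each equivalence class induced by the $\eta,\theta$-vanishings) and tracking how each type of vanishing removes coordinates.

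The characterization ``$\dim=4(k+\ell)-1$ iff regular matchable'' does not follow from $|A|=1$ alone, in either direction. The paper needs \cref{lem:not_codim_1} to show that if $A$ is a single \emph{entry} $\gamma_i$ (or $\delta_j$, $\beta_j$) in a position where $\eta_i$ or $\theta_i$ is defined, then the vanishing $\gamma_i=0$ forces the $\eta$ or $\theta$ to vanish too \emph{and} forces a second entry to vanish, so such $A$ does not give codimension $1$; and conversely, that $A=\{\eta_i\}$ or $\{\theta_i\}$ genuinely cuts only one $\phi/\psi$-equivalence relation. Your assertion that ``the independence of a single constraint is automatic'' elides exactly this interdependence. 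Finally, in the forward direction your argument cites \cref{def:internal_bdries_domino} to deduce $\partial_\zeta S_\D$ is a matchable boundary, but that definition only applies when $\shift_\zeta\D$ is \emph{defined}; the paper verifies this by case analysis through \cref{lem:Srem_cases}. You would need to add \cref{lem:not_codim_1} and the explicit coordinate system to close the gap.
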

\begin{proof}
By \cref{cor:reg_domino_strata} every point of $\partial_A S_\D$ has a unique up to gauge weak domino representative. Moreover, the proof of that corollary shows that every weak domino form in which only the elements of $A$ vanish, is a weak domino representative, as long as $\partial_A S_\D\subseteq\Srem_\D$ is non empty. Since $\partial_A S_\D\subseteq\Srem_\D$, by \cref{lem:Srem_cases} and the definition of $\Sblue_\D,\Sred_\D$ no $\varepsilon_i$ variable vanishes. We use the gauge freedom to fix them to $1.$ 
It is evident that if some domino variable $\zeta_i=0$ then the inequality of every $\eta$ or $\theta$ variable involving it becomes trivial, in the sense that it follows from the domino sign constraints \cref{def:L=1domino_signs} for the domino \emph{entries}.

For every $i$ where $\alpha_i,\beta_i\neq 0,$ write $\psi_i=\beta_i/\alpha_i$ and for every $i$ with $\gamma_i,\delta_i\neq 0$ write $\phi_i=\delta_i/\gamma_i.$ The inequalities of \cref{def:L=1domino_signs} which involve $\eta$ or $\theta$ variables, and do not become trivial by vanishings of variables, as explained in the previous paragraph, are equivalent, modulo the sign rules for domino entries, to inequalities of the form $\phi_i\leq\phi_j$ ($\D_j$ is a strict same end ancestor of $\D_i$), $\phi_i\leq\psi_j$ ($\D_j$ starts where $\D_i$ ends) 
The vanishings of $\theta$ or $\eta$ variables which is not implied from vanishings of single domino entries are thus equivalent to setting $\phi_i=\phi_j$ or $\psi_i=\phi_j$ for some of the pairs $i,j.$ These equalities divide the set of $\phi$ and $\psi$ variables into equivalence classes. The class of a $\phi_i$ or $\psi_i$ variable is a singleton if there is no equality involving that variable.

Thus, the set made of elements
\begin{itemize}
\item $\alpha_i$ if $\alpha_i\notin A,$ and otherwise $\beta_i,$ if it is not in $ A.$
\item $\gamma_i$ if $\gamma_i,\delta_i\notin A,$ and otherwise $\delta_i,$ if it is not in $ A.$
    \item a single not ill defined, nonzero  $\phi_i$ or $\psi_i$ variable from every equivalence class. This condition means that the numerator and denominator defining it are not in $A$.
\end{itemize}
Then these variables are independent, are subject to induced inequalities of the form $\sgn(\zeta_i)=\pm,$ $0\leq \phi_i\leq \phi_j$ or $0\leq\phi_i\leq\psi_j$ and yield a smooth parameterization of $\partial_A S_\D$. This smooth parameterization also shows that these manifolds are weak smooth submanifolds of $\Gr_{k,n;\ell}.$

For $S_\D$ we have precisely $4(k+\ell)$ independent variables. For every boundary $\partial_A S_\D\subseteq\Srem_\D$ we have at most $4(k+\ell)-1:$ if some domino entry vanishes then it will not appear in the parameterization, and neither the corresponding $\psi_i,\phi_i.$

If two domino entries vanish, or there is more than one equality between the $\phi_i,\psi_i$ variables, or there is at least one vanishing of an entry and at least one equality of $\phi_i,\psi_i$ variables, then it is evident that there are at most $4(k+\ell)-2$ independent variables. \cref{lem:not_codim_1} below shows that whenever $\D_j$ is a strict same end parent of $\D_i$ and $\{\gamma_i,\delta_j\}\cap A\neq\emptyset,$ or  $\D_i$ has a sibling $\D_j$ which starts where $\D_i$ ends and $\{\gamma_i,\beta_j\}\cap A\neq\emptyset,$ then also $\phi_j=\phi_j$ or $\phi_i=\psi_j$, respectively, and again there are at most $4(k+\ell)-2$ independent variables. 

Thus, the only strata of $\partial_A S_\D\subseteq\Srem_\D$ which are of codimension $1$ are obtained when $A$ consists of a single domino entry, not one of those treated in the above paragraph, or that there is a single equivalence classes of $\psi_i,\phi_i$ variable which is of size $2$ and the other are of size $1.$ If $\phi_i=\phi_j$ where $\D_j$ is a strict same end ancestor of $\D_i,$ then also all intermediate $\phi_h$ are equal, and the same is true if $\phi_i=\psi_j$ but $\D_i,\D_j$ are not siblings. Thus, the case of a single equivalence class which is of size greater than $1,$ and that size is $2,$ is precisely when some $\eta_i$ or $\theta_i$ vanishes.
So only the vanishing of a single $\zeta_i\in\Var_\D$ may give rise to a codimension $1$ boundary. If this boundary is contained in $\Srem_\D$ then going over the cases of \cref{lem:Srem_cases}, we see it must be a regular matchable boundary. For such a boundary indeed there are $4(k+\ell)-1$ independent coordinates, and by \cref{cor:reg_domino_strata} they yield a parameterization of that dimension.

If the red or blue boundaries are not contained in $\SA,$ then by \cref{obs:bdry_matching_comb} they are regular matchable boundaries of another BCFW cell, hence the conclusion of the previous paragraph holds for them.
\end{proof}
\begin{obs}\label{lem:not_codim_1}
Let $C\vdots D$ be a loopy matrix in a weak-$\D-$domino form.
By \eqref{eq:eta},~\eqref{eq:theta} and \cref{def:L=1domino_signs}\[\gamma_i\delta_j=\eta_i+\gamma_j\delta_i,\]
and
\[\gamma_i\beta_j=\theta_i+\delta_i\alpha_j\]
and the three summands in each formula have the same sign.
Thus, if either one of the domino variables in the left hand side of one of the equation vanishes, also the two terms on the right vanish. 
\end{obs}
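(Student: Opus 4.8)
The final statement is \cref{lem:not_codim_1} (labeled as \texttt{lem:not\_codim\_1}), which asserts that for a loopy matrix $C\vdots D$ in weak-$\D$-domino form, one has the identities $\gamma_i\delta_j=\eta_i+\gamma_j\delta_i$ and $\gamma_i\beta_j=\theta_i+\delta_i\alpha_j$, with the three summands in each formula having the same sign, so that the vanishing of a left-hand-side variable forces the vanishing of both right-hand terms.

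My plan is to prove both identities by a direct expansion of the relevant $2\times 2$ determinants, and then handle the sign claim via the sign rules already recorded in \cref{def:L=1domino_signs} and \cref{obs:iterated_etas}. For the first identity, I would start from the definition $\eta_i=\det\begin{pmatrix}\gamma_j&\delta_j\\\gamma_i&\delta_i\end{pmatrix}=\gamma_j\delta_i-\gamma_i\delta_j$ wait — one must be careful with which row is which; from \eqref{eq:eta} the convention is $\eta_{ml}=\det\begin{pmatrix}\gamma_m&\delta_m\\\gamma_l&\delta_l\end{pmatrix}$ with $\D_m$ the strict same-end ancestor. So with $\D_j$ the strict same-end ancestor of $\D_i$, $\eta_i=\eta_{ji}=\gamma_j\delta_i-\delta_j\gamma_i$, which rearranges immediately to $\gamma_i\delta_j=\gamma_j\delta_i-\eta_i$. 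Hmm, this has a sign discrepancy with the claimed $\gamma_i\delta_j=\eta_i+\gamma_j\delta_i$; the resolution is that one should check the sign conventions carefully against \cref{obs:iterated_etas}, and the statement is presumably to be read up to the consistent orientation of $\eta_i$ fixed there — I would verify the exact signs and, if needed, note that $\eta_i$ in the lemma denotes the quantity with the orientation making all three terms positive (equivalently, one may replace $\eta_i$ by $-\eta_i$). Similarly $\theta_i=\theta_{ji}=\det\begin{pmatrix}\alpha_i&\beta_i\\\gamma_j&\delta_j\end{pmatrix}=\alpha_i\delta_j-\beta_i\gamma_j$, giving $\gamma_i\beta_j$ — here one needs the right pair of indices: $\theta_{ml}$ is defined when $\D_l$ starts where $\D_m$ ends, so with $\D_i$ starting where $\D_j$ ends, $\theta_i=\theta_{ji}=\alpha_i\delta_j-\beta_i\gamma_j$; this is the identity $\gamma_j\beta_i=\delta_j\alpha_i-\theta_i$, i.e. after matching the lemma's index names, $\gamma_i\beta_j = \theta_i+\delta_i\alpha_j$ once the roles of $i$ and $j$ in the lemma are aligned with \eqref{eq:theta}. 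So the bulk of the work is simply a careful bookkeeping of the two determinant expansions.

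For the sign claim I would invoke the sign rules directly. By \cref{def:L=1domino_signs}, item \eqref{it:for_eta}, when $\D_j$ is a strict same-end ancestor of $\D_i$ one has $\delta_i/\gamma_i<\delta_j/\gamma_j$, and by items (1)--(3) the individual domino entries $\gamma_h,\delta_h$ have controlled signs; combining these shows $\gamma_j\delta_i$, $\gamma_i\delta_j$ and $\eta_i$ all carry the same sign (this is essentially the content of \cref{obs:iterated_etas}, which I would cite). Likewise, item \eqref{it:for_theta} gives $\beta_j/\alpha_j>\delta_i/\gamma_i$ in the head-to-tail case, and with the sign rules for $\alpha,\beta,\gamma,\delta$ this forces $\gamma_i\beta_j$, $\delta_i\alpha_j$ and $\theta_i$ to share a sign. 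The conclusion about vanishings is then immediate: if any single term in a sum of three equal-signed terms equals zero, and the sum of two of them equals the third, then all three vanish.

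The main (and really only) obstacle here is not mathematical depth but the sign/orientation bookkeeping: the conventions for $\eta_{ml}$ versus $\eta_l$ and $\theta_{ml}$ versus $\theta_m$ in \eqref{eq:eta}--\eqref{eq:theta} and \cref{obs:iterated_etas} must be threaded through consistently, and the lemma's bare index names $i,j$ must be reconciled with the "ancestor/descendant" and "head-to-tail" roles those conventions presuppose. I would therefore open the proof by fixing, once and for all, which chord plays the ancestor role and which the descendant role in each of the two identities, state the determinant expansion in that notation, and only then read off the sign statement from \cref{obs:iterated_etas}. Since the paper already says "The proofs are immediate from the rows' scaling freedom and continuity" for nearby observations and explicitly flags this as an \texttt{obs}, a short proof of two lines of algebra plus a citation to the sign rules is all that is expected.

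\begin{proof}
Both identities are instances of the cofactor-type expansion of a $2\times 2$ determinant. Recall from \eqref{eq:eta} that when $\D_j$ is a strict same-end ancestor of $\D_i$ one has $\eta_i=\eta_{ji}=\det\begin{pmatrix}\gamma_j&\delta_j\\\gamma_i&\delta_i\end{pmatrix}=\gamma_j\delta_i-\gamma_i\delta_j$, which we rewrite as
\[
\gamma_i\delta_j=\gamma_j\delta_i-\eta_i.
\]
Up to replacing $\eta_i$ by its orientation as normalized in \cref{obs:iterated_etas} (so that the three terms below carry a common sign), this is the first displayed identity. For the second, recall from \eqref{eq:theta} that when $\D_i$ starts where $\D_j$ ends, $\theta_i=\theta_{ji}=\det\begin{pmatrix}\alpha_i&\beta_i\\\gamma_j&\delta_j\end{pmatrix}=\alpha_i\delta_j-\beta_i\gamma_j$, which rearranges to
\[
\gamma_j\beta_i=\alpha_i\delta_j-\theta_i,
\]
again matching the second displayed identity after aligning the index names with the head-to-tail convention and the orientation of $\theta_i$ of \cref{obs:iterated_etas}.

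It remains to check that the three summands in each identity share a common sign. In the first case, the hypothesis that $\D_j$ is a strict same-end ancestor of $\D_i$ together with \cref{def:L=1domino_signs}\eqref{it:for_eta} gives $\delta_i/\gamma_i<\delta_j/\gamma_j$, and \cref{def:L=1domino_signs}(1)--(2) fixes the signs of $\gamma_i,\delta_i,\gamma_j,\delta_j$; combining these is exactly the computation recorded in \cref{obs:iterated_etas}, which shows $\gamma_j\delta_i$, $\gamma_i\delta_j$ and $\eta_i$ all have the same sign. In the second case, the head-to-tail hypothesis and \cref{def:L=1domino_signs}\eqref{it:for_theta} give $\beta_i/\alpha_i>\delta_j/\gamma_j$, and together with the sign rules for $\alpha_i,\beta_i,\gamma_j,\delta_j$ this is the computation of \cref{obs:iterated_etas} for $\theta$, showing $\gamma_j\beta_i$, $\alpha_i\delta_j$ and $\theta_i$ all share a sign.

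Finally, suppose one of the three domino variables appearing in (either) identity vanishes for the weak-domino pair $C\vdots D$. If a left-hand factor vanishes, the left-hand side is $0$; if a right-hand factor vanishes, that right-hand term is $0$. In a weak-domino form the sign rules hold in the weak sense (replacing a strict sign by that sign or $0$), so the two right-hand terms still have the same weak sign and their sum has that sign; since this sum equals the left-hand side, the vanishing of any one term forces the left-hand side to be $0$, and then the equality of a sum of two equal-signed terms with $0$ forces each of them to be $0$. Thus all three summands vanish.
\end{proof}
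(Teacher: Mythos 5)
The paper states this as an \texttt{obs} with no proof block, so there is no proof to compare against, but your proposal is the right intended argument: expand the defining $2\times 2$ determinant and invoke the sign rules.

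Two bookkeeping points worth pinning down. First, as you noticed, with the convention of \eqref{eq:eta}, if $\D_j$ is the strict same-end parent of $\D_i$ then $\eta_i=\eta_{ji}=\gamma_j\delta_i-\delta_j\gamma_i$, so the cofactor expansion gives $\gamma_i\delta_j=\gamma_j\delta_i-\eta_i$, not $\gamma_i\delta_j=\eta_i+\gamma_j\delta_i$ as displayed in the statement. Moreover \cref{obs:iterated_etas} shows $\eta_i$ has sign $(-1)^{1+\below(i)+\below(j)}$ while $\gamma_j\delta_i$ and $\gamma_i\delta_j$ both have sign $(-1)^{\below(i)+\below(j)}$, so the three quantities that actually share a common (weak) sign are $\gamma_i\delta_j$, $\gamma_j\delta_i$, and $-\eta_i$. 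Your instinct to ``replace $\eta_i$ by its orientation'' is correct, but you should say this explicitly, since the vanishing argument needs the two right-hand terms to carry the same weak sign. Second, a small indexing slip: you wrote $\theta_i=\theta_{ji}$, but the variables $\gamma_i,\delta_i,\alpha_j,\beta_j$ in the paper's formula force $\D_j$ to be head-to-tail after $\D_i$, so by \eqref{eq:theta} the relevant quantity is $\theta_i=\theta_{ij}=\alpha_j\delta_i-\beta_j\gamma_i$, giving $\gamma_i\beta_j=\delta_i\alpha_j-\theta_i$, with the same sign remark as before.

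Once these are fixed, the vanishing step is correct and is exactly what the observation is used for: the right-hand side is a sum of two quantities of the same weak sign (equal to the left-hand side, which also has that sign), so if either factor on the left vanishes the whole left-hand side is $0$, forcing each right-hand summand, and in particular $\eta_i$ (resp. $\theta_i$), to vanish.
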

\begin{prop}\label{prop:mfld_w_bdry}
The union of $S_\D$ and its matchable boundaries that avoid $\SA$ is a smooth manifold with boundary. It is moreover a topological and a weak smooth submanifold with boundary of $\Gr_{k,n;\ell}.$
\end{prop}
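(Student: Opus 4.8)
The strategy is to glue together the smooth (weak submanifold) parameterizations already at our disposal: the $4(k+\ell)$-dimensional chart on $S_\D$ from \cref{thm:dominoAndBCFWparams}, and the $4(k+\ell)-1$-dimensional charts on each matchable boundary that avoids $\SA$ from \cref{cor:strata_and_dim}. I would first enumerate the matchable boundaries of $S_\D$ that avoid $\SA$: by \cref{cor:regular_matchable_and_S_D_in_SREM} these are exactly the regular matchable boundaries together with possibly the red and blue boundaries, and in all cases they lie in $\Srem_\D$ (for the red and blue boundaries, via \cref{obs:bdry_matching_comb} they are regular matchable boundaries of a neighboring diagram). By \cref{cor:reg_domino_strata}, $\Srem_\D$ is a \emph{disjoint} union of the strata $\partial_A S_\D$, so the set in question — call it $\overline{S}_\D^{\circ}$ — is genuinely a disjoint union of $S_\D$ and finitely many codimension-$1$ pieces, and there is no overlap to worry about among the boundary strata themselves.

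\textbf{Building the chart near a boundary point.} The heart of the argument is local: fix a point $U\vdots V$ on a matchable boundary $\partial_{\zeta_i}S_\D$ avoiding $\SA$. By \cref{lem:uniqueness_or_SA} and \cref{cor:reg_domino_strata} it has a unique (up to gauge) weak $\D$-domino representative, in which exactly the variable $\tilde\zeta_i$ (the element of $\Var_\D$ governing this boundary) vanishes and all $\varepsilon$'s are nonzero. Normalizing $\varepsilon_i=1$ as in \cref{cor:strata_and_dim}, the remaining $4(k+\ell)-1$ domino/$\psi$/$\phi$ parameters give a chart on the boundary; adjoining $\tilde\zeta_i$ itself as a new coordinate, now allowed to range over $[0,\infty)$ rather than $(0,\infty)$, produces a map from a half-space neighborhood $\R^{4(k+\ell)-1}\times[0,\infty)$. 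I would check that (i) this map lands in $\overline{S}_\D^{\circ}$: for $\tilde\zeta_i>0$ it hits $S_\D$ by \cref{thm:dominoAndBCFWparams}, and for $\tilde\zeta_i=0$ it hits $\partial_{\zeta_i}S_\D$ by the definition of the matchable boundary (\cref{def:internal_bdries_domino}) — one must verify that no \emph{other} domino variable is forced to vanish as $\tilde\zeta_i\to 0$, which is where the case analysis of \cref{lem:Srem_cases} and the genericity of nonvanishing Plücker coordinates (\cref{prop:pos_poly_plucker_rep}) for a fixed stratum are used; (ii) the map is injective, by the uniqueness of weak domino forms on $\Srem_\D$ (\cref{lem:uniqueness_or_SA}, \cref{cor:reg_domino_strata}); (iii) it is smooth into $\Gr_{k,n;\ell}$, since the domino matrix entries are (Laurent) polynomials in these parameters that stay finite and of full rank on the closed half-space; and (iv) the inverse — reading off the domino/$\psi$/$\phi$ coordinates from $U\vdots V$ — is continuous, by the row-by-row reconstruction in the proof of \cref{thm:dominoAndBCFWparams}, which extends to the weak setting by \cref{lem:uniqueness_or_SA}. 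Combined with the interior chart on $S_\D$, whose transition to the boundary chart is just the identity on the shared $4(k+\ell)-1$ coordinates times the inclusion $(0,\infty)\hookrightarrow[0,\infty)$, this exhibits $\overline{S}_\D^{\circ}$ as a smooth manifold with boundary, and simultaneously as a weak smooth submanifold (the parameterization is a smooth injective immersion into $\Gr_{k,n;\ell}$) and a topological submanifold with boundary (the parameterization is a homeomorphism onto its image, by continuity of the inverse).

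\textbf{Where the difficulty lies.} The routine parts are smoothness and injectivity of the charts — these are inherited almost verbatim from \cref{thm:dominoAndBCFWparams} and \cref{cor:strata_and_dim}. The genuine obstacle is point (i) together with the claim that the charts actually \emph{cover} a full neighborhood of each boundary point inside $\overline{S}_\D^{\circ}$, i.e. that a point of $\overline{S}_\D$ lying outside $\SA$ and near $\partial_{\zeta_i}S_\D$ either lies on that boundary or on $S_\D$, and not on some other lower-dimensional stratum. This requires knowing that the only codimension-$\le 1$ strata of $\overline{S}_\D$ off $\SA$ are $S_\D$ and the matchable boundaries — precisely the dimension count carried out in \cref{cor:strata_and_dim}, combined with \cref{lem:from SREDSBLUE to smaller diags in diags order} to handle points that happen to fall in $\Sred_\D\cup\Sblue_\D$ (such points, if codimension $1$, are interior points of the red or blue boundary viewed as a regular matchable boundary of a neighboring diagram, by \cref{lem:bdry_matching_geo}). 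Assembling these inputs carefully — in particular making sure the half-space charts at red/blue boundary points, obtained through the change of coordinates of \cref{lem:blue_chane_of_coords_verification} and \cref{obs:for_red_limits}, agree on overlaps with the interior chart of $S_\D$ — is the one place where real care, rather than bookkeeping, is needed.
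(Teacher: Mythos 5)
Your overall strategy matches the paper's — both arguments glue the domino parameterization of $S_\D$ to the $4(k+\ell)-1$-dimensional parameterizations of each boundary stratum, using the disjointness and uniqueness facts from $\Srem_\D$ and the row-by-row reconstruction for continuity of the inverse. However, there is a genuine gap in your treatment of the red boundary $\partial_{\delta_0}S_\D$, and a related misstatement in your opening paragraph.

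Your second paragraph builds the boundary chart by starting from a \emph{weak $\D$-domino representative} of a boundary point, normalizing some $\varepsilon_i$, and adjoining $\tilde\zeta_i$ as the collar coordinate. This works for the regular matchable boundaries (they are in $\Srem_\D$, where \cref{lem:uniqueness_or_SA} gives a unique weak representative), and with a small modification of which domino entry to gauge-fix it works for the blue boundary. But points on the red boundary lie in $\Sred_\D$, which means some red row $C_h$ lies in $\Span(D_0,D_\star)$; consequently the rows of $C$ together with $D_0,D_\star$ are \emph{not} linearly independent, so $C+D$ is not of full rank, and there is no weak $\D$-domino \emph{representative} of such a point at all — only an almost/weak domino \emph{form} of deficient rank. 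The chart you describe simply does not exist near red boundary points. You gesture at a ``change of coordinates of \cref{obs:for_red_limits}'' in the third paragraph, but the main construction you lay out is the one that fails. The paper's fix is substantive: it introduces the \emph{nearly $\check\D$ form} $\hat C\vdots\check D$ built from the auxiliary vectors $\ww_h$, $\check D_0,\check D_\star$ and $u_j,\check u_j$ (with $\check\D=\shift_{\abe}\D$), identifies the collar coordinate as $t=(-1)^{\below(k')}/\delta_{k'}$ (not a $\D$-domino variable), and shows via \eqref{eq:check u minus u 1}--\eqref{eq:check u minus u 2} that the dependence on $t$ is smooth with the correct limit at $t=0$. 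This construction also incidentally establishes $\partial_{\delta_0}S_\D\subseteq\Sred_\D$, which is needed for the disjointness of the pieces.

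This leads to the related error in your first paragraph: the assertion that the matchable boundaries avoiding $\SA$ ``in all cases lie in $\Srem_\D$'' is false. The blue boundary lies in $\Sblue_\D\setminus(\Sred_\D\cup\SA)$ (by \cref{cor:regular_matchable_and_S_D_in_SREM}) and the red boundary lies in $\Sred_\D$, both of which are \emph{disjoint} from $\Srem_\D$ by definition. Invoking \cref{cor:reg_domino_strata} therefore does not show the red and blue boundaries are disjoint from the others; it only shows disjointness of the regular ones among themselves. Disjointness of the blue boundary follows from \cref{cor:regular_matchable_and_S_D_in_SREM}, but disjointness of the red boundary from $S_\D$ and from the other boundary pieces needs to be derived from $\partial_{\delta_0}S_\D\subseteq\Sred_\D$ (together with \cref{obs:cell_not_red_blue} and the definitions), and that containment is exactly what requires the nearly-$\check\D$-form argument that your proposal elides. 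Appealing to the red boundary being a regular matchable boundary of the neighboring diagram $\check\D$ (via \cref{obs:bdry_matching_comb}, \cref{lem:bdry_matching_geo}) gives information about $\Srem_{\check\D}$, not about its relation to $S_\D$ and $\Srem_\D$, and using the general separation of BCFW cells here would be circular since that is proved later.
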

\begin{proof}
$S_\D$ is a smooth manifold by \cref{thm:dominoAndBCFWparams}.
We will show below that the red boundary, if exists, is contained in $\Sred_\D.$ This will imply, using \cref{cor:regular_matchable_and_S_D_in_SREM} and \cref{obs:cell_not_red_blue}, that $S_\D$ and all its matchable boundaries which avoid $\SA$ are distinct. Hence, it will be enough to show to each matchable boundary $\partial_{\zeta_i}S_\D$ which avoids $\SA$ separately that $S_\D\cup\partial_{\zeta_i}S_\D$ is a manifold with boundary. 

For every regular matchable boundary $\partial_{\zeta_i}S_\D$, we can gauge fix some of the domino parameters, e.g. every $\varepsilon_h,$ so that exactly $4k$ remain unfixed, including $\zeta_i$, which is never $\varepsilon_h$ in the regular matchable case. Let $T$ denote this set, excluding $\zeta_i.$ Then, after possibly fixing the signs, these coordinates yield a parameterization $\R_+^T\times\R_{\geq 0}\to S_\D\cup\partial_{\zeta_i}S_\D$ where the boundary component is the image of $\R_+^T\times\{0\}$.

The proof for the blue boundary is similar, where we use the uniqueness of the weak domino representation of this boundary, even though it is not in $\Srem_\D,$ obtained in the end of the previous proof, as long as the blue boundary avoids $\SA$. The only change is that we gauge fix to $1$ another domino entry in the lowest blue row $\D_h$ other than $\varepsilon_h$, to keep $\varepsilon_h$ unfixed, 

Turning to the red boundary, assuming the red boundary avoids $\SA,$ 
we will act similarly to the proof of the red case of \cref{lem:from SREDSBLUE to smaller diags in diags order}. We assume $|\Red_\D|>1,$ since otherwise there is no red boundary, and write $\check{\D}=\shift_{\abe}\D.$
For a $U\vdots V\in S_\D$ we define a \emph{nearly $\check\D$ form} $\hat{C}\vdots\check{D}$ from its domino form $C\vdots D$ as follows. Let $k'$ be the index of the second-to-top red chord, so that $\tr(\D)=k'+1.$ Note that  $\tr(\check{\D})=k'.$ We can write $\Red_\D=\{k'-r+1,\ldots,k'-1,k',k'+1\},~\Blue_\D =\{k'-s,\ldots,k'-r\}$. By the second item of \cref{lem:uniqueness_or_SA} the vectors $\ww_j,~j\in\Red_\D\cup\Blue_\D\cup\{\star\}$ of \cref{nn:ww} are linearly independent, and moreover the matrix made of these rows and the rows $C_j,~j\notin(\Red_\D\cup\Blue_\D)$ is of full rank $k+2.$ 
Every $C_j$, for $j\in\Red_\D\cup\Blue_\D$ is a linear combination of $D_0,D_\star,\pr_{a_jb_j}\ww_j$ and $\pr_{a_{\p(j)}b_{\p(j)}}\ww_{\p(j)}.$
Write
\[u_{k'+1}=(-1)^{\below(\D_{0})+1}\frac{C_{k'}}{\delta_{k'}},~\text{and } \check{u}_{k'+1}=(-1)^{\below(\D_{k'})+\below(\D_{0})+1}\frac{\gamma_{k'}D_\star+\delta_{k'}D_0}{\delta_{k'}}.\]For $j\in\Red_\D,~j<k'$ define $u_j$ from $C_{j}$ by subtracting a multiple of $C_{k'}$ so that the result is a linear combination of $\pr_{a_jb_j}\ww_j,\pr_{a_{\p(j)}b_{\p(j)}}\ww_{\p(j)},\pr_{a_{0}b_{0}}\ww_{0},\pr_{a_{k'}b_{k'}}\ww_{k'},$ and $u=\pr_{c_\star d_\star} D_\star.$ 
These properties uniquely determine the vector, given the domino pair $C\vdots D.$ 
For blue $\D_j$ define $u_j~(\check{u}_j)$ by subtracting from $C_j$ the unique multiple of $u_{k'+1}=C_{k'}~(\check{u}_{k'+1})$ which sets to $0$ the projection on the starting entries of $\p(0)$, which are as usual considered to be just $n$ if $\p(0)=\emptyset.$
In other words, 
\begin{equation}\label{eq:passage to u and check u}
u^j= C_j-\frac{\delta_{k'}\delta_j\varepsilon_\star}{\gamma_{k'}\varepsilon_\star+\delta_{k'}\varepsilon_0}u_{k'},\qquad \check{u}^j= C_j-\frac{\delta_{k'}\delta_j\varepsilon_\star}{\gamma_{k'}\varepsilon_\star+\delta_{k'}\varepsilon_0}\check{u}_{k'}
\end{equation}
Again this condition determines uniquely $u_j.$ We define $\check D_0$ by adding to $\ww_{k'}$ the unique multiple of $\ww_0$ which sets the projection on the starting entries of $\p(0)$ to $0.$ Finally, we set $\check D_\star=-\frac{\varepsilon_\star}{\varepsilon_0}\ww_0.$  

Define the $1-$loop matrices $\check C\vdots \check D,~\hat C\vdots \check{D}$ where $\check{D}$ has rows $\check{D}_0,\check{D}_\star$ and $\check{C},\hat{C}$ has rows labeled by $[k+1]\setminus{k'},$ where \[\check{C}_j=\begin{cases}\check{u}_j,&\text{if defined, else}\\
    u_j,&\text{if $u_j$ is defined}\\
    C_j,&\text{otherwise}.\\\end{cases}\qquad\qquad \hat{C}_j=\begin{cases}{u}_j,&\text{if defined}\\ C_j,&\text{otherwise}.\\\end{cases}\]
By \cref{obs:for_red_limits} $\check{C}\vdots\check{D}$ is a weak domino representative of an element in $\partial_{\gamma_\star}S_{\check\D},$ which is contained in $\Srem_{\check\D},$ hence has a unique-up-to-gauge weak $\check\D-$domino form by \cref{cor:regular_matchable_and_S_D_in_SREM} and \cref{lem:uniqueness_or_SA}. It is easily seen then every element of $\partial_{\gamma_\star}S_{\check\D},$ can be obtained this way from elements of $S_\D.$ It is also straight forward to see that the nearly $\check\D$ form is unique up to gauge, since the $\D-$domino form is unique up to gauge. 

Note that for $j\in\Blue_\D$, using \cref{eq:passage to u and check u} it holds that
\begin{equation}\label{eq:check u minus u 1}\check{u}_j=\alpha_j\ee_{a_j}+\beta_j\ee_{b_j}+\varepsilon_j(\alpha_{\p(j)}\ee_{a_{\p(j)}}+\beta_{\p(j)}\ee_{b_{\p(j)}})+\gamma^{\check\D}_j u^{\check\D}_\star+\delta_j^{\check\D}\check{D}_\star\qquad u_j=\check{u}_j-\delta_j^{\check\D}\frac{\check{D}_0}{\delta_{k'}}\end{equation}
where
\begin{equation*}\gamma^{\check\D}_j =\gamma_j,~u^{\check\D}_\star=u_\star,~\text{and }\delta_j^{\check\D}=\frac{\delta_j\delta_{k'}\varepsilon_0}{\gamma_{k'}\varepsilon_\star+\delta_{k'}\varepsilon_0},\end{equation*}and when we don't put superscript we mean the variables of $\D.$
Also for $j=k'+1$
\begin{equation}\label{eq:check u minus u 2}u_{k'+1}=\check{u}_{k'+1}+\frac{\check{D}_0}{\delta_{k'}}.\end{equation}
We will now use this to write a smooth parameterization of $S_\D\cup\partial_{\delta_0}S_\D:$ parameterize $\partial_{\delta_0}S_\D$ using the weak $\check{\D}-$parameterization obtained from gauge fixing and picking the correct number of domino coordinates. Let $T$ be this set of coordinates, whose signs are fixed to be taken in $\R_+$. Add a parameter $t,$ which is secretly $(-1)^{\below(k')}\frac{1}{\delta_{k'}}.$ And write a parametrization $F:\R_+^T\times\R_{\geq 0}\to S_\D\cup\partial_{\delta_0}S_\D$ which takes $(t,(\zeta_i)_{i\in A})$ to $(\hat{C}\vdots\check{D})(t,(\zeta_i)_{i\in A}),$
where $(\check{C}\vdots\check{D})((\zeta_i)_{i\in A})\in\partial_{\delta_0}S_\D=\partial_{\gamma_\star^{\check\D}}S_{\check\D}$ is the weak $\check{D}-$domino form associated to the parameters $(\zeta_i)_{i\in A},$ and $\hat{C}(t,(\zeta_i)_{i\in A})$ is obtained from $\check{C}(t,(\zeta_i)_{i\in A})$ by changing the rows $\{k'+1\}\cup\Blue_{\D}$ by the transformations \eqref{eq:check u minus u 1},~\eqref{eq:check u minus u 2} with $\frac{(-1)^{\below(k')}}{\delta_{k'+1}}:=t.$ 
Clearly the limit $t\to0$ yields the red boundary.

Note that when $t=0$ $u_{k'+1}=\check{u}_{k+1}\in\hat{C}=\check{C}.$ But $u_{k'+1}\in\Span(D_0,\D_\star),$ hence $\partial_{\delta_0}S_\D\subseteq\Sred_\D.$

For the 'moreover' part it is enough to show that the union of $S_\D$ and one such matchable boundary $\partial_{\zeta_i}S_\D$ is a weak smooth and a topological submanifold with boundary. The above parameterizations, which continue to the boundaries, show the weak smooth submanifold property. 
In order to show that the union is a topological submanifold, it is enough again to show this property for $S_\D\cup\partial_{\zeta_i}S_\D.$ This means showing that the inverse map from this space to $\R^{4(k+\ell)}$ is continuous. The proof is a minor modification of the explicit construction of the preimage done in the coarse of proof of \cref{thm:dominoAndBCFWparams}, using the row-by-row construction of preimage via \cref{lem:4biden}. We omit the details. 
\end{proof}
We write the next corollary of the above proof for later reference.
\begin{cor}\label{cor:red_bdry_in_sred}
    Let $\D$ be a chord diagram. If the red boundary exists and is not contained in $\SA$ then it is contained in $\Sred_\D.$
\end{cor}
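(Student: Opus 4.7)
The content of this corollary has essentially been extracted during the proof of the preceding Proposition \ref{prop:mfld_w_bdry}; my plan is just to make that extraction explicit and record the tiny bit of bookkeeping it requires.

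First I would note that the red boundary $\partial_{\delta_0}S_\D$ is defined only when $|\Red_\D|>1$, since $\shift_\abe$ itself requires this. Under the hypothesis that this boundary is not contained in $\SA$, I would invoke the parametrization
\[F: \mathbb{R}_+^T \times \mathbb{R}_{\geq 0} \longrightarrow S_\D \cup \partial_{\delta_0}S_\D\]
constructed in the proof of Proposition \ref{prop:mfld_w_bdry}, where the parameter $t\in\mathbb{R}_{\geq 0}$ secretly stands for $(-1)^{\below(k')}/\delta_{k'}$ (with $k'+1 = \tr(\D)$), and $t=0$ corresponds precisely to the red-boundary stratum. The image $(\hat{C}\vdots \check{D})$ is an almost $\D$-domino pair for an element of $S_\D\cup\partial_{\delta_0}S_\D$ whose tree part $U$ is spanned by the rows of $\hat{C}$.

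Next I would read off from the transition formulas \eqref{eq:check u minus u 1}--\eqref{eq:check u minus u 2} what happens at $t=0$. Specifically, at $t=0$ the row $u_{k'+1}$ coincides with $\check{u}_{k'+1}$, which by construction is
\[\check{u}_{k'+1} \;=\; (-1)^{\below(\D_{k'})+\below(\D_0)+1}\,\frac{\gamma_{k'}D_\star + \delta_{k'}D_0}{\delta_{k'}}\;\in\; \Span(D_0,D_\star),\]
and this vector is nonzero because $\delta_{k'}\neq 0$ in the weak $\check\D$-form witnessing a point of $\partial_{\gamma_\star^{\check\D}}S_{\check\D}=\partial_{\delta_0}S_\D$ (all $\delta$-entries of rows outside the offending variable are nonzero on $\Srem_{\check\D}\setminus \SA$). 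The other rows of $\hat{C}$ adjust by a multiple of $\check{D}_0$, and $\check{D}_0\in\Span(D_0,D_\star)$ as well by its definition, so the tree part of the limit point contains the nonzero vector $u_{k'+1}\in\Span(D_0,D_\star)$ inside an almost $\D$-domino pair for the point in question.

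By \cref{def:Sred_blue}, having such a vector in the tree part for some almost $\D$-domino pair is exactly the condition defining $\Sred_\D$, so $\partial_{\delta_0}S_\D\subseteq \Sred_\D$, as claimed. I do not expect any genuine obstacle; the only minor care needed is to check that $\check{u}_{k'+1}\neq 0$ at the boundary, which follows from the nonvanishing assertions in \cref{lem:Srem_cases} applied to the $\check\D$-domino form on $\partial_{\gamma_\star^{\check\D}}S_{\check\D}\subseteq \Srem_{\check\D}$ (guaranteed by the hypothesis that $\partial_{\delta_0}S_\D$ avoids $\SA$, together with \cref{cor:reg_domino_strata}).
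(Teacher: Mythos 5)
Your proposal is correct and follows exactly the route the paper intends: the corollary is extracted from the end of the proof of \cref{prop:mfld_w_bdry}, where at $t=0$ one observes $u_{k'+1}=\check{u}_{k'+1}\in\Span(D_0,D_\star)$ lies in the tree part, which is precisely the defining condition for $\Sred_\D$. Your added check that $\check{u}_{k'+1}\neq 0$ via \cref{lem:Srem_cases} is a reasonable bit of extra care but does not change the argument.
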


\section{Twistors, functionaries and the $B$-amplituhedron}
\subsection{Twistor coordinates}
\begin{definition}\label{def:cones}
Recall \cref{def:pluckers_and_cone}. The amplituhedron map can be naturally lifted, by using the same formula, to a map between cones
\[\tZ:\CC\Gr_{k,n;\ell}\to\CC\Gr_{k,k+4;\ell}.\]We refer to the image of the map as the \emph{cone} $\CC\Ampl_{n,k,4}^\ell$ over $\Ampl_{n,k,4}^\ell.$ It is endowed with a natural projection $\Pi:\CC\Ampl_{n,k,4}^\ell\to \Ampl_{n,k,4}^\ell$ which is the restriction of the map $\Pi:\CC\Gr_{k,k+4;\ell}\to\Gr_{k,k+4;\ell}$ and
\[\forall Y\vdots L\in\CC\Ampl_{n,k,4}^1,~\tZ(\Pi(Y\vdots L))=\Pi(\tZ(Y\vdots L)),\qquad \forall Y\in\CC\Ampl_{n,k,4}^0,~\tZ(\Pi(Y))=\Pi(\tZ(Y)).\] 
\end{definition}
\begin{definition}[Twistor coordinates]
Fix
  $Z \in \Mat_{n,k+4}^{>}$
   with rows  $Z_1,\dots, Z_n \in \R^{k+4}$.
   Consider $Y \in \CC\Gr_{k,k+4}$ or $(Y\vdots L)\in \CC\Gr_{k,k+4;1},$ and bases $y_1, \dots,y_k$ for (a representative of) $Y,$ and $l_1,l_2$ for (a representative of) $L.$
   For $\{i_1,\dots, i_4\} \subset [n]$
   we define the \emph{(momentum) twistor coordinate}, denoted
	$$\llrr{Y Z_{i_1} Z_{i_2} Z_{i_3} Z_{i_4}} \text{ or }
	\llrr{{i_1} {i_2} i_3 {i_4}}$$
        to be  the determinant of the
        matrix with rows
        $y_1, \dots,y_k, Z_{i_1}, \dots, Z_{i_4}$.
Similarly, for $i_1,i_2\in [n]$ we write
$$\llrr{Y Z_AZ_B Z_{i_1} Z_{i_2} },~\llrr{YLZ_{i_1}Z_{i_2}},~\llrr{(Y+L)Z_{i_1}Z_{i_2}} \text{ or }
	\llrr{AB{i_1} {i_2}}$$
        to be the determinant of the
        matrix with rows
        $y_1, \dots,y_k, l_1,l_2, Z_{i_1}, Z_{i_2}$. For reasons that will be clarified later, we collectively refer to these two types of twistors as \emph{permissible}.  This notation will be justified in \cref{rmk:justification} and the analysis of the forward limit operation in the next subsections.
        We sometimes refer to twistors which involve the loop as \emph{loopy twistors}. We extend the notations to allow $A,B $ to be in various places in the twistors by requiring antisymmetry with respect to permuting the indices.        
         We say that an index $i$ \emph{participates} or \emph{appears} in a twistor coordinate $\llrr{YZ_{i_1}\ldots Z_{i_4}},$ if $i\in\{i_1,\ldots,i_4\}.$ In this case we will also say that the twistor contains $i.$ 
 
        These definitions extend naturally to more general index sets.
\end{definition}
We will often be slightly inaccurate and consider the twistors as functions on $\Gr_{k,k+4;\ell}.$ We will also abuse notations and refer to twistors of $\tZ(C\vdots D)$ as twistors of the point $C\vdots D$ in $\CC\Gr_{k,n;\ell}$ or $\Gr_{k,n;\ell}.$ 

Note that using Laplace expansion, we can express 
the twistor coordinates of $Y, ~Y+L$ as linear functions in the Pl\"ucker coordinates of $Y$ and $Y+L$ resp. whose coefficients are the $4 \times 4$ and $2\times 2$ minors of $Z$ resp. Thus, twistors of the form $\llrr{I},~I\in\binom{[n]}{4}$ are functions on $\CC\Gr_{k,k+4;1}$ which factor through the natural forgetful map $\CC\Gr_{k,k+4;1}\to\CC\Gr_{k,k+4}.$ 

We will use intensively the following Cauchy-Binet formula for twistors.
\begin{lemma}
	[The Cauchy-Binet expression for twistors]
	\label{lem:Cauchy-Binet} Let $(C\vdots D) \in \Gr_{k,n;\ell}^{\geq},~\ell\in\{0,1\}$ and set $Y:=\tZ(C)=CZ,$ and if $\ell=1$ also $L:=\tZ(D)=DZ$.
Then
 \begin{align}\label{eq:cauchy-binet}
	 \llrr{Y~Z_{i_1} \dots Z_{i_4}} &=
                \sum_{J=\{j_1<\dots<j_k\} \in {[n] \choose k}} \lr{C}_J \, \lr{Z}_{j_1, \dots, j_k, i_1,\dots, i_4}\\ 
		\notag& =\sum_{J \in {[n] \choose k}, \  J \cap I = \emptyset} \lr{C}_J \, 
		\lr{Z}_{j_1, \dots, j_k, i_1,\dots, i_4},
        \end{align}
	where the second equality follows from the first by noting that 
		$\lr{Z}_{j_1, \dots, j_k, i_1,\dots, i_4} = 0$
		if $\{j_1,\dots,j_k\}$
		and $\{i_1,\dots,i_4\}$ intersect.
        \begin{align}\label{eq:cauchy-binet_loop}
	 \llrr{Y~Z_{i_1} Z_{i_2}Z_AZ_B} :=\llrr{(Y+L)Z_{i_1}Z_{i_2}}=&
                \sum_{J=\{j_1<\dots<j_{k+2}\} \in {[n] \choose {k+2}}} \lr{C+D}_J \, \lr{Z}_{j_1, \dots, j_{k+2}, i_1, i_2}\\ 
		\notag& =\sum_{J \in {[n] \choose k+2}, \  J \cap I = \emptyset} \lr{C+D}_J \, 
		\lr{Z}_{j_1, \dots, j_{k+2}, i_1,i_2},
        \end{align}
	where again the second equality follows from the first by noting that 
		$\lr{Z}_{j_1, \dots, j_k, j_{k+2} i_1,i_2} = 0$
		if $\{j_1,\dots,j_{k+2}\}$
		and $\{i_1,i_2\}$ intersect.
\end{lemma}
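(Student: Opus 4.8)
The statement is the Cauchy--Binet expansion of twistor coordinates in terms of Pl\"ucker coordinates, and it follows from the ordinary Cauchy--Binet formula for determinants of products of matrices, applied row-block by row-block. First I would set up notation: write $Y = CZ$ with $C \in \Mat_{k,n}$ and $Z \in \Mat_{n,k+4}$, so that a representative for $Y$ is the $k\times(k+4)$ matrix whose rows are $y_a = \sum_{j} C_{a,j} Z_j$. To compute $\llrr{Y Z_{i_1}\dots Z_{i_4}}$ one forms the $(k+4)\times(k+4)$ matrix $M$ whose first $k$ rows are $y_1,\dots,y_k$ and whose last four rows are $Z_{i_1},\dots,Z_{i_4}$. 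Observe that $M = \widehat{C}\,Z$, where $\widehat{C}$ is the $(k+4)\times n$ matrix whose first $k$ rows are the rows of $C$ and whose last four rows are the standard basis vectors $\ee_{i_1},\dots,\ee_{i_4}$ of $\R^n$ (here I would use the paper's convention $\ee_i$ for the standard basis).

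\textbf{Main computation.} Then I would invoke the classical Cauchy--Binet formula: for $M = AB$ with $A$ of size $m\times n$ and $B$ of size $n\times m$ (with $m \le n$), $\det M = \sum_{J \in \binom{[n]}{m}} \det(A^J)\det(B_J)$, where $A^J$ denotes the columns of $A$ indexed by $J$ and $B_J$ the rows. Applied with $A = \widehat C$, $B = Z$, $m = k+4$, this gives
\[
\llrr{Y Z_{i_1}\dots Z_{i_4}} = \det(\widehat C\, Z) = \sum_{J \in \binom{[n]}{k+4}} \lr{\widehat C}_J\, \lr{Z}_J.
\]
Now I would analyze the minor $\lr{\widehat C}_J$: expanding along the last four rows (which contain only the standard basis vectors $\ee_{i_1},\dots,\ee_{i_4}$), this minor vanishes unless $\{i_1,\dots,i_4\}\subseteq J$, and when $\{i_1,\dots,i_4\}\subseteq J$ it equals (up to the sign absorbed into the twistor's antisymmetry convention) $\lr{C}_{J\setminus\{i_1,\dots,i_4\}}$. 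Writing $J = \{j_1<\dots<j_k\}\cup\{i_1,\dots,i_4\}$ with the $j$'s disjoint from the $i$'s, and $\lr{Z}_J = \lr{Z}_{j_1,\dots,j_k,i_1,\dots,i_4}$ with the matching sign, collapses the sum to exactly \eqref{eq:cauchy-binet}. The second equality in \eqref{eq:cauchy-binet} is then automatic since $\lr{Z}_{j_1,\dots,j_k,i_1,\dots,i_4}=0$ whenever the index sets overlap (a repeated column). The $\ell=1$ case \eqref{eq:cauchy-binet_loop} is identical in structure: one simply takes the representative of $Y+L$ to be the $(k+2)\times(k+4)$ matrix with rows $y_1,\dots,y_k,l_1,l_2$, which equals $\overline{C}\,Z$ where $\overline C$ has rows $C_1,\dots,C_k,D_1,D_2$ (a representative of $C+D$ in the paper's sense), and runs Cauchy--Binet with $m = k+4$ again, now with the bottom two rows being $\ee_{i_1},\ee_{i_2}$.

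\textbf{Expected obstacle.} There is no real mathematical difficulty here — the only thing requiring care is bookkeeping of signs, namely checking that the sign picked up when reordering $J$ to place $\{i_1,\dots,i_4\}$ (or $\{i_1,i_2\}$) last matches the antisymmetry convention built into the twistor bracket $\llrr{\cdot}$, and that the definition of $\lr{C+D}_J$ given in \cref{def:pluckers_and_cone} (minors of a chosen loopy-matrix representative) is exactly what appears. I would note explicitly that the identity is an identity of sections of the line bundle $\TTT_\ell$ — i.e., it holds for any fixed loopy matrix representative $C\vdots D$, and both sides scale the same way under the $\SL_{k;\ell}$ action by \eqref{eq:effect_of_grp_on_plckr}, so it descends correctly to the cone. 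Since the paper already signals this is routine (the analogous statement in the tree case is standard), I would keep the write-up short, essentially just the two displayed lines above plus the sign remark, and not belabor the minor expansion.
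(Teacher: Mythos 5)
Your proof is correct, and it is the standard Cauchy--Binet argument: form the $(k+4)\times n$ matrix $\widehat C$ whose rows are the rows of $C$ followed by $\ee_{i_1},\dots,\ee_{i_4}$, note $\llrr{Y Z_{i_1}\dots Z_{i_4}} = \det(\widehat C\,Z)$, apply Cauchy--Binet, and observe that the minor $\lr{\widehat C}_J$ vanishes unless $\{i_1,\dots,i_4\}\subseteq J$, in which case it is $\pm\lr{C}_{J\setminus\{i_1,\dots,i_4\}}$ with the sign cancelling against the reordering sign in $\lr{Z}_J$. The paper states the lemma without proof, treating it as routine; your write-up supplies precisely the expected argument, and the key point you flag — that the two reordering signs cancel so the formula holds with both ordered tuples $(j_1,\dots,j_k,i_1,\dots,i_4)$ — is exactly the bookkeeping that makes it work. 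The $\ell=1$ case goes through verbatim with $\widehat C$ replaced by the $(k+4)\times n$ matrix whose rows are those of $C$, $D$, and $\ee_{i_1},\ee_{i_2}$, as you say.
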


\subsection{The $B-$amplituhedron}
 In \cite{karpwilliams} Karp and Williams define the $B-$amplituhedron, a space which is isomorphic to the amplituhedron, and that under this isomorphism twistor coordinates are mapped to Pl\"ucker coordinates. This construction can be performed also in the loopy case.
\begin{definition}\label{def:B_amp}
Fix non negative integers $k,m,n,\ell$ with $n\geq k+m.$
For $W\in \Gr_{k+m,n}^{>}$ we define the \emph{$B$-loop amplituhedron} $\mathcal{B}_{n,k,m}^\ell(W)$ to be
\[\{(\bz,\bw_1,\ldots,\bw_\ell)\in\Gr_{m,n}\times(\Gr_{2,n})^{\times\ell}|~\exists (U\vdots V_1,\ldots,V_\ell)\in\Gr_{k,n;\ell}^{\geq}~\text{s.t }\bz= U^\perp\cap W,~\bw_i=(U+V_i)^\perp\cap W,~i\in[\ell]\}.\]
$\mathcal{B}_{n,k,m}^\ell(W)$ is the image of $\Gr_{k,n;\ell}^\geq$ under the map $\tZ_B$ which maps
\[(U\vdots V_1,\ldots,V_\ell)\mapsto( U^\perp\cap W,(U+V_1)^\perp\cap W,\ldots, (U+V_\ell)^\perp\cap W),\]where $W$ is the column span of $Z.$
We denote elements of $\mathcal{B}_{n,k,m}^\ell(W)$ by $(\bz,\bw_1,\ldots,\bw_\ell),$ and think of $\bz$ as a matrix with \emph{columns} $z_1,\ldots,z_n\in\R^m,$ up to the left action of $\GL_m$, and of each $\bw_i$ as a matrix with columns $(\bw_i)_1,\ldots,(\bw_i)_n\in\R^2,$ up to the right action of $\GL_2.$ Note that each $\bw_i\subset\bz.$
The reason we denote the columns of $\bz$ by $z_i$ is that they can be thought of as projections of the rows $Z_i$ of $Z.$
\end{definition}
\begin{rmk}
If $Z^T$ is a positive matrix representation for $W$ then by \cite[Section 4]{karp2017sign} $CZ$ is of dimension $\dim(C)$ whenever $C$ is nonnegative of dimension smaller or equal $k+m.$ This implies that for such $C$ \[\dim(C^\perp\cap W)=k+m-\dim(C).\]
We now restrict to the case $m=4,\ell=1$. In this case $\mathcal{B}_{n,k,4}^1(W)$ is a subspace of $\FL_{2,4;n}$, the two step flag variety of  flags of vector spaces of dimensions $2$ and $4$ in $\R^n,$ and denote its elements by $(\bz,\bw).$ 
We define the \emph{cones} $\CC\FL_{2,4;n}$ and $\CC\mathcal{B}_{n,k,4}^1(W)$ over $\FL_{2,4;n}$ and $\mathcal{B}_{n,k,4}^1(W)$ respectively in complete analogy to the corresponding \cref{def:pluckers_and_cone}, and denote the elements of these cones by the same notation $(\bz,\bw).$

For elements $(\bz,\bw)\in\CC\FL_{2,4;n}$ we have two types of Pl\"ucker coordinates, which we will sometimes refer to as \emph{permissible Pl\"ucker coordinates}. 
For every distinct $a,b,c,d\in[n]$ we write 
\begin{align*}\lr{(\bz,\bw);ab}=\lr{\bw;ab}:&=\lr{ab}:=\lr{w_aw_b},\\\lr{\bz;abcd}:&=\lr{abcd}:=\lr{z_az_bz_cz_d}\end{align*}
for the Pl\"ucker coordinates of $\bw,\bz$ respectively. We shall sometimes consider them as functions over the partial flag variety, in which case each of the two sets of Pl\"ucker coordinates is only defined up to a common scale: 
Changing the representatives scales all coordinates for each type by a scalar.
The relations between these coordinates are generated by the usual Pl\"ucker relations for $4-$spaces, for $2-$spaces and in addition
\begin{align*}\forall a,b,c,d,e,f\in[n],~~~~&\lr{abcd}\lr{ef}-\lr{abce}\lr{df}+\lr{abcf}\lr{de}=0\\&\lr{abcd}\lr{ef}-\lr{abce}\lr{df}+\lr{abde}\lr{cf}-\lr{acde}\lr{bf}+\lr{bcde}\lr{af}=0.\end{align*}
These functions generate the ring of functions for the flag variety, see \cite[Proposition 3.1.6]{weyman2003cohomology} or \cite{fulton1997young}.

$\mathcal{B}^1_{n,k,4}(W)$ is embedded in $\FL_{2,4}(W)\subseteq\FL_{2,4;n}=\Gr_{2,n;1},$ the two step flag variety of $2$-planes-inside $4-$planes-inside $W,$ and it is straightforward to see that as we let $W$ vary in $\Gr_{k+4,n}^>$, $\mathcal{B}_{n,k,4}^1(W)$ covers an Hausdorff-open, hence Zariski dense, subspace of $\FL_{2,4;n}.$ Thus, $W-$independent relations satisfied by the Pl\"ucker coordinates are exactly the same relations satisfied by the Pl\"ucker coordinates for the $2-$step flag variety written above.
\end{rmk}
The following theorem is essentially a translation of \cite[Lemma 3.10, Proposition 3.12]{karpwilliams} to our setting.
\begin{thm}\label{lem:B_amp_A_amp_dual}For $Z\in\Mat_{(k+4)\times n}^{>}$ denote by $W$ its row span. 
Then the map $f_Z:\FL_{2,4}(W)\dashrightarrow\Gr_{k,k+4;1}$ defined by
\[(\bz,\bw)\to (\bz^\perp Z\vdots(\bw^\perp Z)/(\bz^\perp Z))\]is generically defined, well defined, and restricts to a diffeomorphism: \[f_Z(\B_{n,k,4}^1(W))\simeq\Ampl_{n,k,4}^1(Z).\]
The inverse map is given by
\[(Y\vdots L)\mapsto f_Z^{-1}(Y\vdots L):=\left(\tZ^{-1}(Y)^\perp,\tZ^{-1}(Y+L)^\perp\right).\]
Moreover, the following relations hold between the projective vectors of Pl\"ucker and twistor coordinates:
\[(\lr{\bz;abcd})_{a,b,c,d\in[n]} = (\llrr{Y Z_aZ_bZ_cZ_d})_{a,b,c,d\in[n]},~~(\lr{\bw;ab})_{a,b\in[n]} = (\llrr{(Y+L) Z_aZ_b})_{a,b\in[n]}.\]
We denote the induced map on function rings by $\psi_Z.$
\end{thm}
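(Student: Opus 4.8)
The plan is to transcribe the known statements for the tree ($\ell=0$) $B$-amplituhedron, as proven in \cite{karpwilliams}, to the one-loop two-step flag setting, keeping track of the extra loop data. First I would establish that $f_Z$ is generically defined and well defined. The key point is Karp--Williams's observation (which we already recalled in the remark after \cref{def:B_amp}): multiplying a nonnegative representative of a subspace of dimension $\le k+m$ by a positive $Z$ is injective on dimensions, so $\dim(\bz^\perp Z)=n-\dim\bz=n-(n-k)=k$ when $\bz=U^\perp\cap W$ with $U\in\Gr_{k,n}^\geq$, and likewise $\dim(\bw^\perp Z)=k+2$ for the loopy part; moreover $\bw^\perp\subseteq\bz^\perp$ forces $\bw^\perp Z\supseteq\bz^\perp Z$, so the quotient $(\bw^\perp Z)/(\bz^\perp Z)$ is an honest $2$-plane in $\R^{k+4}/(\bz^\perp Z)$, giving a genuine element of $\Gr_{k,k+4;1}$. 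Generic definedness comes from the fact that these dimension counts hold on a dense open locus of $\FL_{2,4}(W)$.

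Next I would verify that $f_Z$ and the candidate inverse $(Y\vdots L)\mapsto(\tZ^{-1}(Y)^\perp,\tZ^{-1}(Y+L)^\perp)$ are mutually inverse, and that $f_Z$ maps $\B^1_{n,k,4}(W)$ onto $\Ampl^1_{n,k,4}(Z)$. For the surjection and the inverse simultaneously: given $(U\vdots V)\in\Gr_{k,n;1}^\geq$ with image $(\bz,\bw)=\tZ_B(U\vdots V)$, one checks using the tree-level statement (applied to $U$ and to $U+V$ separately) that $\tZ^{-1}(\bz^\perp Z)^\perp = \bz$ reconstructs $\bz$, hence $\bz^\perp Z = \tZ(U) = CZ = Y$ and similarly $\bw^\perp Z = \tZ(U+V) = (C+D)Z$, so $(\bw^\perp Z)/(\bz^\perp Z)=L$ modulo $Y$; this shows $f_Z\circ\tZ_B$ equals $\tZ$ on representatives, so $f_Z(\B^1)=\Ampl^1$. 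Conversely, running the tree-level inverse of \cite{karpwilliams} on $Y$ and on $Y+L$ shows $f_Z^{-1}$ as written is a genuine inverse to $f_Z$ on the relevant loci. The diffeomorphism claim then follows from smoothness of $f_Z$ and $f_Z^{-1}$ wherever defined (both are given by linear-algebraic orthogonal-complement and multiply-by-$Z$ operations, smooth on the open loci where ranks are maximal, exactly as in the tree case), together with compactness of $\Ampl^1$ from \cref{thm:ampli_well_def} to upgrade a continuous bijection where needed.

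Finally, for the twistor/Pl\"ucker dictionary: from $Y=\bz^\perp Z$ we get $\bz = \tZ^{-1}(Y)^\perp = Y^{\perp_Z}$ in the appropriate sense, and the identity $(\lr{\bz;abcd})=(\llrr{YZ_aZ_bZ_cZ_d})$ is precisely the tree-level Pl\"ucker--twistor correspondence of \cite[Proposition 3.12]{karpwilliams} applied to $Y$; likewise applying it to $Y+L$ (a $(k+2)$-plane, multiplied against $Z$ to produce $2\times2$ minors) gives $(\lr{\bw;ab})=(\llrr{(Y+L)Z_aZ_b})$. The map $\psi_Z$ on function rings is then just the pullback along $f_Z$, sending each permissible Pl\"ucker coordinate to the corresponding permissible twistor; compatibility with the flag Pl\"ucker relations is automatic since $f_Z$ is a morphism. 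The main obstacle I anticipate is not any single step but the bookkeeping of the quotient structure: making sure that ``$(\bw^\perp Z)/(\bz^\perp Z)$'' is the correct realization of the loop part $L\in\Gr_2(\R^{k+4}/Y)$ and that the corresponding loopy twistors $\llrr{(Y+L)Z_aZ_b}$ are well defined as claimed (equivalently, that the forgetful and quotient maps intertwine correctly), which is why the paper flags the ``permissible'' terminology for later justification; everything else is a faithful loop-decorated copy of the Karp--Williams argument.
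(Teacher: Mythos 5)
Your proposal takes essentially the same route as the paper: reduce to the tree-level Karp--Williams correspondence applied separately to $U$ (the $m=4$ case) and to $U+V$ (the $m=2$ case), verify the small extra linear-algebra facts about the loop quotient, and upgrade homeomorphism to diffeomorphism via smoothness. One slip worth fixing: your displayed dimension count $\dim(\bz^\perp Z)=n-\dim\bz=n-(n-k)=k$ is incorrect, since $\bz=U^\perp\cap W$ is a $4$-plane, so $\dim\bz=4$; the correct chain is $\bz^\perp=U+W^\perp$, hence $\bz^\perp Z=UZ$ because $W^\perp Z=0$, and then Karp's observation applied to the nonnegative $k$-plane $U$ gives $\dim(UZ)=k$ -- and the parallel argument with $U+V$ in place of $U$ gives $\dim(\bw^\perp Z)=k+2$.
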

Most of the proof follows from the cases $m=4,m=2$ of the corresponding results of \cite{karpwilliams}, and we refer the reader there for details. The parts which are not shown in \cite{karpwilliams} are that $\bw^\perp Z$ contains $\bz^\perp Z,$ and that the quotient is two dimensional, but this is an easy exercise in linear algebra. Also \cite{karpwilliams} only claim a homeomorphism, but since their maps are smooth they are in fact diffeomorphisms.

As a corollary of the above remark and theorem we have

\begin{cor}\label{obs:relations_for_functionaries}
\begin{itemize}
\item The twistors change sign when two consecutive indices are swapped.
\item The twistors satisfy the following Pl\"ucker relations:
\begin{align*}
\forall i_1,\ldots,i_3,j_1\ldots,j_5\in[n],\sum_{s=1}^5(-1)^s\llrr{i_1i_2i_3j_s}\llrr{j_1\ldots\widehat{j_s}\ldots j_5}=0\end{align*}
\begin{align*}
\forall a,b,c,d\in[n],\llrr{abAB}\llrr{cdAB}-\llrr{acAB}\llrr{bdAB}+\llrr{adAB}\llrr{bcAB}=0\end{align*}
and
\begin{align*}
    \forall a,b,c,d,e,f\in[n],\llrr{abcd}\llrr{efAB}-\llrr{abce}\llrr{dfAB}+\llrr{abcf}\llrr{deAB}&=0,\\\llrr{abcd}\llrr{efAB}-\llrr{abce}\llrr{dfAB}+\llrr{abde}\llrr{cfAB}-\llrr{acde}\llrr{bfAB}+\llrr{bcde}\llrr{afAB}&=0
\end{align*}
\item The above relations generate all relations between twistor coordinates which are not $Z$-dependent, and are homogeneous with respect to the multigrading.
\end{itemize}
\end{cor}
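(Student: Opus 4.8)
The plan is to derive all three bullet points by transporting everything through the diffeomorphism $f_Z$ of \cref{lem:B_amp_A_amp_dual}, which identifies twistor coordinates with the permissible Pl\"ucker coordinates of the two-step flag variety $\FL_{2,4;n}$, and then invoking the known presentation of the coordinate ring of $\FL_{2,4;n}$ from \cite{weyman2003cohomology} (recalled as \cref{prop:plucker_rels_flag} and in the remark preceding the theorem). The antisymmetry bullet is immediate: a twistor is a determinant whose rows include the $Z_{i_j}$, so swapping two of the $Z$'s (in particular two cyclically consecutive indices) flips the sign; this needs no serious argument beyond quoting the definition.

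For the second bullet, I would argue that each displayed relation is exactly one of the defining relations of $\FL_{2,4;n}$ pulled back under $\psi_Z$. Concretely: the relation $\sum_{s=1}^5(-1)^s\llrr{i_1i_2i_3j_s}\llrr{j_1\ldots\widehat{j_s}\ldots j_5}=0$ is the image under $\psi_Z$ of the three-term-type Pl\"ucker relation $\sum_s(-1)^s\lr{i_1i_2i_3j_s}\lr{j_1\ldots\widehat{j_s}\ldots j_5}=0$ for $4$-planes (the $r=s=2$, $k=0$ case of \cref{prop:plucker_rels_flag} applied on the $\bz$-side), using $\lr{\bz;abcd}=\llrr{YZ_aZ_bZ_cZ_d}$ from \cref{lem:B_amp_A_amp_dual}; the relation $\llrr{abAB}\llrr{cdAB}-\llrr{acAB}\llrr{bdAB}+\llrr{adAB}\llrr{bcAB}=0$ is the image of the three-term Pl\"ucker relation for the $2$-plane $\bw$, using $\lr{\bw;ab}=\llrr{(Y+L)Z_aZ_b}$; and the two mixed relations are the images of the mixed relations $\lr{abcd}\lr{ef}-\lr{abce}\lr{df}+\lr{abcf}\lr{de}=0$ and $\lr{abcd}\lr{ef}-\lr{abce}\lr{df}+\lr{abde}\lr{cf}-\lr{acde}\lr{bf}+\lr{bcde}\lr{af}=0$ displayed in the remark before the theorem. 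One should note here a small bookkeeping point — in the mixed relations the "loopy" columns $A,B$ play the role of the two-dimensional factor, so the $\lr{ef}$ in Weyman's relation becomes $\llrr{efAB}$ — and check that the sign conventions and antisymmetrizations match, which is a direct verification.

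For the third bullet (that these relations \emph{generate} all $Z$-independent homogeneous relations), I would use the density argument already sketched in the remark: as $W$ ranges over $\Gr_{k+4,n}^{>}$, the images $\B^1_{n,k,4}(W)$ cover a Hausdorff-open, hence Zariski-dense, subset of $\FL_{2,4;n}=\Gr_{2,n;1}$; therefore a polynomial in the permissible Pl\"ucker coordinates that vanishes on $\B^1_{n,k,4}(W)$ for all $W$ must vanish identically on $\FL_{2,4;n}$, i.e.\ it lies in the ideal of relations of the flag variety. By the cited presentation (\cite[Proposition 3.1.6]{weyman2003cohomology}, or \cite{fulton1997young}), that ideal is generated, in each multidegree, by the usual Pl\"ucker relations for $2$-planes, for $4$-planes, and the two families of mixed relations — which are precisely the relations listed. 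Translating through $\psi_Z$ (an isomorphism of graded rings onto its image by \cref{lem:B_amp_A_amp_dual}) gives the claim for twistors. The main obstacle I anticipate is not any single step but the care needed in the generation statement: one must be precise about what "$Z$-independent" and "homogeneous with respect to the multigrading" mean (twistors of the form $\llrr{I}$ with $I\in\binom{[n]}{4}$ factor through $\CC\Gr_{k,k+4}$, so the grading must track separately the $\bz$-degree and the $\bw$-degree), and one must make sure the density/Zariski-closure argument is applied to the correct ambient variety so that "vanishes for all $W$" really does force membership in the flag-variety ideal rather than some larger ideal.
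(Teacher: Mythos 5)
Your proof is correct and follows essentially the same route the paper takes: the paper states the corollary directly "as a corollary of the above remark and theorem," and your write-up simply unpacks what that entails — antisymmetry from the determinantal definition, transport of the Pl\"ucker/mixed relations of the partial flag variety $\FL_{2,4;n}$ through the isomorphism $\psi_Z$ of \cref{lem:B_amp_A_amp_dual} (with $A,B$ playing the role of the two extra columns), and the density-of-$\mathcal{B}_{n,k,4}^1(W)$ argument plus the cited presentation of the flag variety ring for the generation claim. One small slip: the first displayed relation (in $4$-index twistors only) corresponds to the $r=s=k+2$ case of \cref{prop:plucker_rels_flag} with the flag variety's tree rank equal to $2$, not to "$r=s=2$, $k=0$" as you wrote; this parenthetical does not affect the argument.
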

\subsection{Embedding loopy Grassmannians into Grassmannians}
\cite{arkani-hamed_trnka} considered a certain map between Grassmannians and loopy Grassmannians, to which they had referred as \emph{hiding particles}. We now explain this construction, for the case of one loop.
\begin{defobs}\label{nn:varphi}
Let $N$ be an index set $p\lessdot A\lessdot B\lessdot n=\max{N}.$
Define \[\varphi_{AB}:\Mat_{k,N\setminus\{A,B\};1}\to\Mat_{k+2,N},\qquad\qquad (C\vdots D)\mapsto \begin{pmatrix}&D_{1,1}&\cdots&D_{1,p}&1&0&D_{1,n}\\
&D_{2,1}&\cdots&D_{2,p}&0&1&D_{2,n}\\
&C_{1,1}&\cdots&C_{1,p}&0&0&C_{1,n}\\
&\vdots&\vdots&\vdots&\vdots&\vdots&\vdots\\
&C_{k,1}&\cdots&C_{1,p}&0&0&C_{k,n}
\end{pmatrix},\]
that is, first writing the matrix $C+D$ where $D$ is in the upper two entries and $C$ in the lower $k$ entries, and then adding two new columns, positioned at $A,B$ which have the identity matrix in the upper two rows, and zero elsewhere.

Note that, for every $I\in\binom{N}{k+2}$ with $A,B\in I$
\begin{equation}\label{eq:phi_AB_1}
\lr{C}_{I\setminus\{A,B\}}=\lr{\varphi_{AB}(C\vdots D)}_I,
\end{equation}
while if $A,B\notin I$
\begin{equation}\label{eq:phi_AB_2}
\lr{C+D}_{I}=\lr{\varphi_{AB}(C\vdots D)}_I.
\end{equation}
If $C\vdots D$ is another representative of the same element of $\Gr_{k,N\setminus\{A,B\};1}$ then $\varphi_{AB}(C'\vdots D')$ may not equal $\varphi_{AB}(C\vdots D),$ but we have equalities of \emph{projective vectors}
\[(\lr{\varphi_{AB}(C'\vdots D'))}_{A,B\notin I}=(\lr{\varphi_{AB}(C\vdots D))}_{A,B\notin I},\qquad(\lr{\varphi_{AB}(C'\vdots D'))}_{A,B\in I}=(\lr{\varphi_{AB}(C\vdots D))}_{A,B\in I}.\]
Let $s:U\to\Mat_{k,N\setminus\{A,B\};1},$ where $U$ is an open set in $\Gr_{k,N\setminus\{A,B\};1}$ or $\CC\Gr_{k,N\setminus\{A,B\};1},$ be a local section of the projection map $\Pi.$ We write
\[\varphi_{AB}^s=\varphi_{AB}^{s,U} = \Pi\circ\varphi_{AB}\circ s : U\to\begin{cases}
    \Gr_{k+2,N},&\text{if }U\subset\Gr_{k,N\setminus\{A,B\};1}\\
    \CC\Gr_{k+2,N},&\text{if }U\subset\CC\Gr_{k,N\setminus\{A,B\};1}
\end{cases}\]We will sometimes be slightly inaccurate and omit the superscript, and just write $\varphi_{AB}$ for that map.

In addition, if $C\vdots D=\Gr_{k,N;1}^{\geq}$ is a \emph{nonnegative representative}, that is, a representative in which all maximal minors of $C,~C+D$ are either zero or positive, then by construction all maximal minors of $\varphi_{AB}(C\vdots D)$ which either include or exclude both indices $A,B$ are nonnegative, by \eqref{eq:phi_AB_1},~\eqref{eq:phi_AB_2}.

We also define the map
\[\tilde\varphi_{AB}:\Mat_{(N\setminus\{A,B\})\times(k+4)}\to\Mat_{N\times (k+6)},\qquad\qquad Z\mapsto\begin{pmatrix}
&0&0&Z_{1,1}&\ldots&Z_{1,k+4}\\
&\vdots&\vdots&\vdots&\vdots&\vdots\\
&0&0&Z_{p,1}&\ldots&Z_{p,k+4}\\
&1&0&0&0&0\\
&0&1&0&0&0\\
&0&0&Z_{n,1}&\ldots&Z_{n,k+4}
\end{pmatrix}.\]For every $I\in\binom{N}{k+6}$ with $A,B\in I$
\begin{equation}\label{eq:tilde_phi_AB_1}
\lr{Z}_{I\setminus\{A,B\}}=\lr{\tilde{\varphi}_{AB}(Z)}_I,
\end{equation}
while if $\{A,B\}\nsubseteq I$
\begin{equation}\label{eq:tilde_phi_AB_2}
\lr{Z}_{I}=0.
\end{equation}
Thus, if $Z$ is nonnegative also $\tilde\varphi(Z).$
\end{defobs}
\begin{definition}
    Write $\check\Gr_{k,n;1}$ for the subspace $\Gr_{k+2,[n]\cup\{A,B\}}^{(AB)}$ of $\Gr_{k+2,[n]\cup\{A,B\}},$ where $A,B$ are new markers taken to satisfy $n-1\lessdot A\lessdot B\lessdot n,$ in which the columns $A,B$ are linearly independent. We write $\CC\check\Gr_{k,n;1}$ for the corresponding cone, as in \cref{def:pluckers_and_cone}.
    We denote points of either of these two spaces by $E=(C+D; v^A; v^B),$ a notation that will be justified later.
    We have the map $\check\varrho:\CC\check\Gr_{k,n;1}\mapsto \CC\Gr_{k,n;1}$
    obtained by sending a point with Pl\"ucker coordinates $(P_I(E))_{I\in\binom{[n]\cup\{A,B\}}{k+2}}$ to the point $C\vdots D$ with \[P_I(C+D)=P_I(E),~\text{for }I\in\binom{[n]}{k+2},\qquad P_I(C)=P_{I\cup\{A,B\}}(E),~\text{for }I\in\binom{[n]}{k}.\]
    This map descends to the Grassmannian level, and we also denote it by $\check\varrho:\check\Gr_{k,n;1}\mapsto\Gr_{k,n;1}.$ We write
    \[\check\Gr_{k,n;1}^\geq=\check\varrho^{-1}(\Gr_{k,n;1}^\geq),~\check\Gr_{k,n;1}^>=\check\varrho^{-1}(\Gr_{k,n;1}^>),~\CC\check{\Gr}_{k,n;1}^\geq=\check\varrho^{-1}(\CC\Gr_{k,n;1}^\geq),~\CC\check\Gr_{k,n;1}^>=\check\varrho^{-1}(\CC\Gr_{k,n;1}^>)\]
    While these maps are not invertible, the maps $\varphi_{AB}^s$ of \cref{nn:varphi} provide local sections for $\check\varrho.$

The amplituhedron map lifts to the space $\check\Gr_{k,n;1}$ using the map $\tilde\varphi_{AB}$ from \cref{nn:varphi} by 
\[\check{Z}:\check\Gr_{k,n;1}\dashrightarrow\check\Gr_{k,k+4;1},\qquad (C+D; v^A; v^B)\mapsto (C+D; v^A;v^B){\tilde\varphi_{AB}(Z)}.\]
As in the case of the amplituhedron this map is well defined over
$\check\Gr_{k,n;1}^\geq,$ and the $\check Z$-image of this space is denoted $\check\Ampl_{n,k,4}^1(Z).$ We similarly define the cone $\CC\check\Ampl_{n,k,4}^1(Z)$ as the respective image of $\CC\check\Gr_{k,n;1}$ by the analogous map of right multiplication by $\tilde\varphi_{AB}Z,$ which we also write as $\check Z.$ 
On $\CC\check\Ampl_{n,k,4}^1(Z)$ one can define \emph{twistor coordinates} for $I\in\binom{[n]\cup\{A,B\}}{4}$ by letting 
\[\llrr{abcd}=\llrr{\tilde Y\tilde\varphi_{AB}(Z_{a})\tilde\varphi_{AB}(Z_b)\tilde\varphi_{AB}(Z_c)\tilde\varphi_{AB}(Z_d)}\] be the determinant of the $(k+4)\times(k+4)$ matrix obtained by taking a matrix representative of $\tilde Y$ and completing it to a square matrix by adding the $a,b,c,d$ rows of $\tilde\varphi_{AB}(Z)$ as the last four rows.
We will sometimes be slightly inaccurate and refer to the twistor coordinates as functions over $\check\Ampl_{n,k,4}^1,$ and sometimes as functions on $\Gr_{k,n;1}^\geq,~\CC\Gr_{k,n;1}^\geq$ via composition with the map $\check Z.$ A twistor coordinate is \emph{permissible} if $A,B$ either both appear or neither appear in the twistor's entries.

We similarly define the map
\[\check{Z}_B:\check\Gr_{k,n;1}\dashrightarrow\check\Gr_{2,n;1},\qquad (C+D; v^A; v^B)\mapsto (C+D; v^A; v^B)^\perp\cap\widetilde W,\]
where $\widetilde W$ is the column span of ${\tilde\varphi_{AB}(Z)}.$
The image of $\check\Gr_{k,n;1}^\geq$ under this map is denoted $\check{\mathcal{B}}_{n,k,4}^1( W).$ A Pl\"ucker coordinate of an element of $\check{\mathcal{B}}_{n,k,4}^1(W)$ is \emph{permissible} if it either includes or excludes both $A$ and $B.$ 
\end{definition}
\cref{lem:B_amp_A_amp_dual} has the following immediate corollary
\begin{cor}\label{obs:_check_B_amp_A_amp_dual}With the above notations,
$f_{\tilde\varphi_{AB}Z}:\check{\mathcal{B}}_{n,k,4}^1( W)\simeq\check\Ampl_{n,k,4}^1(Z)$. Under this isomorphism the projective vector of Pl\"ucker coordinates of an element of $\check{\mathcal{B}}_{n,k,4}^1( W)$ equals the projective vector of the twistor coordinates of its image in $\check\Ampl_{n,k,4}^1(Z).$
\end{cor}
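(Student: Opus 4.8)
The final statement, \cref{obs:_check_B_amp_A_amp_dual}, asserts that $f_{\tilde\varphi_{AB}Z}$ restricts to a diffeomorphism $\check{\mathcal{B}}_{n,k,4}^1(W)\simeq\check\Ampl_{n,k,4}^1(Z)$, and that under this identification the (projective vector of) permissible Pl\"ucker coordinates of a point of $\check{\mathcal{B}}_{n,k,4}^1(W)$ coincides with the (projective vector of) twistor coordinates of its image. The plan is to deduce this directly from \cref{lem:B_amp_A_amp_dual} by instantiating that theorem with the matrix $\tilde\varphi_{AB}Z$ in place of $Z$, and then checking that the objects appearing there match the ``checked'' objects in the statement.

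\textbf{Step 1: Verify the hypotheses of \cref{lem:B_amp_A_amp_dual} hold for $\tilde\varphi_{AB}Z$.} The theorem requires a positive matrix in $\Mat_{(k+4)\times n}^{>}$ (in our case the index set is $[n]\cup\{A,B\}$ and the number of columns is $k+6$). By the last line of \cref{nn:varphi}, if $Z$ is nonnegative then so is $\tilde\varphi_{AB}(Z)$; I would note that in fact when $Z$ is \emph{positive} the matrix $\tilde\varphi_{AB}(Z)$ is nonnegative but not positive, since by \eqref{eq:tilde_phi_AB_2} every maximal minor omitting exactly one of $A,B$ vanishes. This is the only genuinely delicate point: \cref{lem:B_amp_A_amp_dual} as stated asks for a positive $Z$. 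The resolution is that the construction and the proof of \cref{lem:B_amp_A_amp_dual} (following \cite{karpwilliams}) only use positivity through the fact, cited from \cite[Section 4]{karp2017sign}, that multiplying a nonnegative matrix representative by $Z$ preserves dimension; and by the remark following \cref{def:B_amp}, this holds as long as $Z^T$ represents $W\in\Gr_{k+6,n\text{-ish}}^{>}$ — here one restricts attention to the subspace $\check\Gr_{k,n;1}^\geq=\check\varrho^{-1}(\Gr_{k,n;1}^\geq)$, on which the relevant matrices $C+D$, $C$ are nonnegative and of full rank by construction, so the dimension-preservation still applies to $(C+D;v^A;v^B)\tilde\varphi_{AB}(Z)$ and to its perpendicular intersection with $\widetilde W$. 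I would spell out that $\widetilde W$, the column span of $\tilde\varphi_{AB}(Z)$, is genuinely $(k+6)$-dimensional and that the two-step-flag and quotient claims in \cref{lem:B_amp_A_amp_dual} go through verbatim.

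\textbf{Step 2: Identify $\check Z$, $\check Z_B$, $\check\Ampl$, $\check{\mathcal B}$ with the $Z'$-versions of the maps in \cref{lem:B_amp_A_amp_dual}.} By definition $\check{Z}(C+D;v^A;v^B)=(C+D;v^A;v^B)\tilde\varphi_{AB}(Z)$ is just right multiplication by the matrix $Z':=\tilde\varphi_{AB}(Z)$, so $\check\Ampl_{n,k,4}^1(Z)$ is literally the image of $\check\Gr_{k,n;1}^\geq\subseteq\Gr_{k+2,[n]\cup\{A,B\}}^\geq$ under the tree-level ($\ell=0$, $m'=k+2$ playing the role of the Grassmannian rank, target $\Gr_{k+2,k+6}$-ish) amplituhedron-type map for $Z'$. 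Similarly $\check Z_B$ is exactly the $B$-amplituhedron map $\tZ_B$ for $W'=\widetilde W$. Once this dictionary is in place, \cref{lem:B_amp_A_amp_dual} applied to $Z'$ (in the appropriate rank/size, i.e. with $k+2$ in place of $k$ and with the flag structure it induces) gives a generically defined, well defined map $f_{Z'}$ restricting to a diffeomorphism $f_{Z'}(\check{\mathcal B}^1_{n,k,4}(W))\simeq\check\Ampl^1_{n,k,4}(Z)$, and the inverse is the stated $(\tilde Y)\mapsto(\tZ^{-1}(\tilde Y)^\perp\cap\widetilde W\text{-data})$. I would remark that the smoothness/diffeomorphism assertion is inherited verbatim, as in \cref{lem:B_amp_A_amp_dual}, because the maps are smooth and the proof of Karp--Williams constructs explicit smooth inverses.

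\textbf{Step 3: Match Pl\"ucker coordinates to permissible twistors.} The second part of \cref{lem:B_amp_A_amp_dual}, applied to $Z'$, gives the equality of projective vectors $(\lr{\bz;abcd})_{a,b,c,d} = (\llrr{\tilde Y\,\tilde\varphi_{AB}(Z_a)\tilde\varphi_{AB}(Z_b)\tilde\varphi_{AB}(Z_c)\tilde\varphi_{AB}(Z_d)})_{a,b,c,d}$, where now $a,b,c,d$ range over $[n]\cup\{A,B\}$. But by the very definition of twistor coordinates on $\CC\check\Ampl^1_{n,k,4}(Z)$, the right-hand side is exactly $(\llrr{abcd})_{a,b,c,d\in[n]\cup\{A,B\}}$. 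By \eqref{eq:tilde_phi_AB_2}, the twistor $\llrr{abcd}$ vanishes unless $\{a,b,c,d\}$ contains both $A$ and $B$ or neither — that is, unless it is permissible — and correspondingly the Pl\"ucker coordinate $\lr{\bz;abcd}$ of an element of $\check{\mathcal B}^1_{n,k,4}(W)$ is non-permissible precisely in the complementary case, where it equals the perpendicular-intersection Pl\"ucker coordinate which the map construction forces to vanish. Hence restricting both projective vectors to their permissible entries yields the claimed equality. The only bookkeeping here is to confirm that the definition of ``permissible Pl\"ucker coordinate of an element of $\check{\mathcal B}^1_{n,k,4}(W)$'' (includes or excludes both $A,B$) matches the nonvanishing pattern dictated by $\tilde\varphi_{AB}(Z)$ via \eqref{eq:tilde_phi_AB_1} and \eqref{eq:tilde_phi_AB_2}; this is immediate.

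\textbf{Main obstacle.} The one place requiring real care is Step 1 — reconciling the ``positive $Z$'' hypothesis of \cref{lem:B_amp_A_amp_dual} with the fact that $\tilde\varphi_{AB}(Z)$ is only nonnegative. I expect the right move is to observe that \cref{lem:B_amp_A_amp_dual}'s proof only uses \cite[Section 4]{karp2017sign} on the relevant nonnegative cells (here the images of $\check\Gr^\geq_{k,n;1}$), exactly as the remark after \cref{def:B_amp} already does for the amplituhedron itself, so no strict positivity of $Z'$ is actually needed; alternatively, one perturbs $\tilde\varphi_{AB}(Z)$ to a nearby positive matrix, applies \cref{lem:B_amp_A_amp_dual}, and passes to the limit, using that $\check\Gr^\geq_{k,n;1}$ is compact and the maps are continuous. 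Everything else is a direct translation with no new content beyond \cref{lem:B_amp_A_amp_dual}, \cref{nn:varphi}, and the definitions of the ``checked'' spaces.
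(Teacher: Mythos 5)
Your proposal is essentially the paper's (implicit) proof: the paper simply declares the corollary to be ``immediate'' from \cref{lem:B_amp_A_amp_dual} and gives no argument, whereas you instantiate that theorem at $Z'=\tilde\varphi_{AB}(Z)$ and check the dictionary between the ``checked'' objects and the objects of \cref{lem:B_amp_A_amp_dual}. Your Step~1 flags a genuine subtlety the paper glosses over: $\tilde\varphi_{AB}(Z)$ is only nonnegative, since by \eqref{eq:tilde_phi_AB_2} every maximal minor using exactly one of $A,B$ vanishes, so the hypothesis ``$Z$ positive'' of \cref{lem:B_amp_A_amp_dual} is not literally satisfied. Your first proposed resolution is the right one and can be made completely concrete: on the subspace $\check\Gr_{k,n;1}^\geq$, if $M=(C+D;v^A;v^B)$ then $M\tilde\varphi_{AB}(Z)$ has last $k+4$ columns equal to $(C+D)Z$, and $C+D$ is a $(k+2)$-row nonnegative matrix of full rank; so Karp's result \cite[Section 4]{karp2017sign} applied to $C+D$ and the \emph{positive} matrix $Z$ already guarantees rank $k+2$, and therefore $M\cap \widetilde W^\perp=0$ and $\dim(M^\perp\cap\widetilde W)=4$. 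That is, the dimension bookkeeping in \cref{lem:B_amp_A_amp_dual} only needs positivity of $Z$, not of $\tilde\varphi_{AB}(Z)$, exactly as you guessed.

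One small remark on Step~3: the corollary asserts equality of the \emph{full} projective vectors of (all) Pl\"ucker coordinates and (all) twistor coordinates, which is already what \cref{lem:B_amp_A_amp_dual}, applied at $Z'$, delivers; the restriction to permissible entries and the discussion of which non-permissible entries vanish is a (correct) consequence rather than a required step. The projective equality combined with the vanishing of non-permissible twistors via \eqref{eq:tilde_phi_AB_2} is what forces the corresponding Pl\"ucker coordinates of $\bz\in\check{\mathcal{B}}_{n,k,4}^1(W)$ to vanish — you have the right picture, just be sure to present it as a deduction from the projective-vector equality rather than as an independent fact to check.
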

\begin{rmk}\label{rmk:iso_between_different spaces}
For $n\geq k+2$ we also define the space $\hat\Gr_{k,n;1}$ as the space whose elements are $(U\vdots v_A\vdots v_B),$ where $U\in\Gr_{k,n}$ and $v_A,v_B$ are linearly independent vectors in $\R^n/U.$ This space is associated a map 
\[\hat\varrho:\hat\Gr_{k,n;1}\to\Gr_{k,n;1},\qquad (U\vdots v_A\vdots v_B)\mapsto (U\vdots \Span(v_A,v_B)).\]
There is a canonical diffeomorphism $\nu:\hat\Gr_{k,n;1}\simeq\check\Gr_{k,n;1}$ defined on representative by sending $(C\vdots D_A\vdots D_B)$ to the row span of $\varphi_{AB}(C\vdots D),$ where $D$ is a matrix whose rows are $D_A,D_B.$
This map is well defined since if $(C'\vdots D'_A\vdots D'_B)$ is another representative then $C' = g\cdot C,~g\in\GL_k$ and $D'_A-D_A,D'_B-D_B\in\Span(C),$ hence $(\varphi_{AB}(C\vdots D),\varphi_{AB}(C'\vdots D'))$ represent the same element of $\Gr_{k+2,[n]\cup\{A,B\}}.$ A similar argument shows that the map is invertible.

We can also similarly define the cone $\CC\hat\Gr_{k,n;1}$ and the analogues of the above statements hold between the cones, where we use the same notations for the maps.
We have that
\[P_I(\nu(U\vdots v_A\vdots v_B))=\begin{cases}P_I(C),&\text{if }A,B\in I\\
P_I(C+D_A+D_B),&\text{if }A,B\notin I\\(-1)^{k+1+|I\cap\{n\}|}P_I(C+D_B),&\text{if }A\in I\\
(-1)^{k+|I\cap\{n\}|}P_I(C+D_A),&\text{if } B\in I.
\end{cases}\]
We define the positive and nonnegative parts of $\hat\Gr_{k,n;1}$ as the $\hat\varrho-$preimages of $\Gr^>_{k,n;1},~\Gr_{k,n,;1}^\geq$ respectively, and similar definitions exist for the cones.
With this we can define the hat-version of the amplituhedron as follows.
Define the map \[\hat{Z}:\hat\Gr_{k,n;1}\to\hat\Gr_{k,k+4;1},\qquad (C\vdots D_A\vdots D_B)\mapsto (Y\vdots L_A\vdots L_B)=(C\vdots D_A\vdots D_B)Z.\]
The image of $\hat\Gr_{k,n;1}^\geq$ under this map is denoted $\hat\Ampl_{n,k,4}^1(Z).$
\end{rmk}
\begin{obs}\label{obs:commutative_diag}
The following diagram is commutative
\[
\begin{tikzcd}[row sep=3em, column sep=4em]
\hat\Gr_{k,n;1}^\geq \arrow[r, "\simeq"] \arrow[d] \arrow[rr, bend left=40] & \check\Gr_{k,n;1}^\geq \arrow[r] \arrow[d] \arrow[dd, bend right=30,, pos=0.4, yshift=4pt] & \Gr_{k,n;1}^\geq \arrow[d] \arrow[dd, bend right=30, pos=0.4, yshift=4pt] \\
\hat\Ampl_{n,k,4}^1(Z) \arrow[r, "\simeq"] \arrow[rr, bend left=40]\arrow[dr, "\simeq"', sloped] & \check\Ampl_{n,k,4}^1 (Z)\arrow[r] \arrow[d, "\simeq"', description] & \Ampl_{n,k,4}^1(Z) \arrow[d, "\simeq", description] \\
& \check{\mathcal{B}}_{n,k,4}^1(W) \arrow[r] & {\mathcal{B}}_{n,k,4}^1(W)
\end{tikzcd}
\]
where the top vertical maps are the amplituhedron maps, the vertical curvy maps are the $B-$amplituhedron maps, the bottom vertical maps are the isomorphisms of the amplituhedron and $B$-amplituhedron. The horizontal maps from the leftmost column to the middle one are $\nu,$ the horizontal maps from the middle column to the right one are $\check\varrho,$ and the curvy horizonal maps are $\hat\varrho,$ the map $\hat\Ampl_{n,k,4}^1(Z)\to\check{\mathcal{B}}^1_{n,k,4}(W)$ is $f_{\tilde\varphi_{AB}(Z)}^{-1}\circ\nu.$
The composed map $\hat\Gr_{k,n;1}^\geq\check\to{\mathcal{B}}_{n,k,4}(W)$ is given by
\[(C\vdots D_A\vdots D_B)\mapsto (\tZ_B(C);-\tZ_B(C)D_A^T;-\tZ_B(C)D_B^T).\]The same commutative diagram exists for the cones. 
\end{obs}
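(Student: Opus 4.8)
\textbf{Proof plan for \cref{obs:commutative_diag}.}
The claim is that the big diagram commutes, together with its cone analogue and the explicit formula for the composed map $\hat\Gr_{k,n;1}^\geq\to\check{\mathcal{B}}_{n,k,4}^1(W)$. My plan is to reduce everything to a handful of square-by-square verifications, each of which is a direct computation on matrix representatives, and to organize the bookkeeping around the Pl\"ucker/twistor dictionaries already recorded in \cref{nn:varphi} and \cref{rmk:iso_between_different spaces}. The only genuinely ``geometric'' input is \cref{lem:B_amp_A_amp_dual} (and its $\check{}$-version \cref{obs:_check_B_amp_A_amp_dual}); the rest is linear algebra.

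First I would check the leftmost square and the curvy arrows. The map $\nu:\hat\Gr^\geq_{k,n;1}\simeq\check\Gr^\geq_{k,n;1}$ is a diffeomorphism by \cref{rmk:iso_between_different spaces}, and on representatives it is $(C\vdots D_A\vdots D_B)\mapsto\text{rowspan}(\varphi_{AB}(C\vdots D))$; the amplituhedron maps $\hat Z$ and $\check Z$ are right multiplication by $Z$ and by $\tilde\varphi_{AB}(Z)$ respectively. So commutativity of the top-left square, $\check Z\circ\nu=\nu\circ\hat Z$, amounts to the identity $\varphi_{AB}(C\vdots D)\cdot\tilde\varphi_{AB}(Z)=\varphi_{AB}\big((C\vdots D)Z\big)$ up to the $\GL$-ambiguities, which one sees immediately by blocking the matrices: the two new unit columns of $\varphi_{AB}$ pair with the two new unit rows of $\tilde\varphi_{AB}$, and the remaining block product is exactly $(C+D)Z$ on the $[n]$-columns. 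Commutativity of the big outer curved triangle with $\hat\varrho$ then follows because $\hat\varrho=\check\varrho\circ\nu$ by construction and the amplituhedron maps are compatible with forgetting the extra data (the $\Pi$-compatibility is \cref{def:cones}, \cref{nn:varphi}); here I would spell out that $\check\varrho\circ\check Z=\tZ\circ\check\varrho$ using the Pl\"ucker formulas \eqref{eq:phi_AB_1}, \eqref{eq:phi_AB_2}, \eqref{eq:tilde_phi_AB_1}, \eqref{eq:tilde_phi_AB_2}.

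Next I would handle the two vertical isomorphisms on the right. The isomorphism $\Ampl^1_{n,k,4}(Z)\simeq\mathcal{B}^1_{n,k,4}(W)$ is \cref{lem:B_amp_A_amp_dual}, the isomorphism $\check\Ampl^1_{n,k,4}(Z)\simeq\check{\mathcal{B}}^1_{n,k,4}(W)$ is \cref{obs:_check_B_amp_A_amp_dual}, and the bottom square (commutativity of $\check\varrho$ with these isomorphisms) is a matter of checking that $f_Z$ intertwines $\check\varrho$ with the corresponding forgetful map on the $B$-side; since both isomorphisms send Pl\"ucker coordinates to twistor coordinates (as projective vectors), and $\check\varrho$ on the $B$-side is also just ``remember only the permissible Pl\"ucker coordinates'', this is formal once the dictionaries are laid side by side. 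The isomorphism $\hat\Ampl^1_{n,k,4}(Z)\simeq\hat{\mathcal{B}}$-side diagonal arrow $f_{\tilde\varphi_{AB}(Z)}^{-1}\circ\nu$ is then forced, and I would simply verify it is well defined and a diffeomorphism by composing the two established isomorphisms.

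Finally, for the explicit formula $(C\vdots D_A\vdots D_B)\mapsto(\tZ_B(C);-\tZ_B(C)D_A^T;-\tZ_B(C)D_B^T)$ I would trace a representative through the whole composite: apply $\nu$ to land in $\check\Gr$, then apply $\check Z_B=(\,\cdot\,)^\perp\cap\widetilde W$, then the bottom forgetful map to $\mathcal{B}^1_{n,k,4}(W)$. The computation of $\big(\varphi_{AB}(C\vdots D)\big)^\perp\cap\widetilde W$ is the standard perp-intersection computation already used in the proof of \cref{lem:B_amp_A_amp_dual}: the $[n]$-block of the orthocomplement restricted to $\widetilde W$ recovers $\tZ_B(C)=C^\perp\cap W$ for the $\bz$-part, while the two extra unit columns of $\varphi_{AB}$ force the $A,B$-entries of a perp vector to be minus the pairing of its $[n]$-part against $D_A,D_B$, which produces the two correcting blocks $-\tZ_B(C)D_A^T$ and $-\tZ_B(C)D_B^T$; the sign is the $(-1)$ coming from moving the identity block to the other side, and I would reconcile it with the sign conventions in \cref{rmk:iso_between_different spaces}. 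The cone version is identical verbatim since every map above is defined by the same formulas on the cones. The main obstacle I anticipate is purely clerical rather than conceptual: keeping the $\GL_m$-, $\GL_2$-, and $\SL$-scaling ambiguities and the alternating signs (those in the Pl\"ucker dictionary of \cref{rmk:iso_between_different spaces} and those hidden in $\inc$/$\pre$ conventions) consistent across all the squares simultaneously, so that the diagram commutes on the nose and not merely projectively. I would isolate this by first proving everything as an equality of projective vectors of Pl\"ucker/twistor coordinates, where signs are harmless, and only at the end upgrading to honest commutativity of maps using that all the vertical arrows are diffeomorphisms.
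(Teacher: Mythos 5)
Your proposal is correct and takes essentially the same route as the paper: the paper's proof is simply the remark that verification is "straightforward," together with the single displayed perp identity
\[\begin{pmatrix}D_A&1&0\\ D_B&0&1\\ C&0&0\end{pmatrix}^\perp=\begin{pmatrix}C^\perp;&-C^\perp D_A^T;&-C^\perp D_B^T\end{pmatrix},\]
which is exactly the computation you isolate at the end of your plan (intersecting with $\widetilde W$ then cuts $C^\perp$ down to $C^\perp\cap W=\tZ_B(C)$). The only caveat worth flagging: the block identity $\varphi_{AB}(C\vdots D)\cdot\tilde\varphi_{AB}(Z)=\varphi_{AB}\big((C\vdots D)Z\big)$ holds after matching up where the two new columns sit on each side (source puts $A,B$ between $n-1$ and $n$, while $\tilde\varphi_{AB}$ prepends them), but since you already plan to work projectively on Pl\"ucker/twistor coordinates and invoke the dictionary in \cref{rmk:iso_between_different spaces}, this reordering is absorbed automatically.
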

The proof is straightforward, where for the last statement we note that 
\[\begin{pmatrix}
    D_A&1&0\\
    D_B&0&1\\
    C&0&0
\end{pmatrix}^\perp = \begin{pmatrix}
    C^\perp; &-C^\perp D^T_A;&-C^\perp D^T_B
\end{pmatrix}\]
\begin{rmk}\label{rmk:justification}
A twistor coordinate on $\CC\hat\Ampl_{n,k,4}^1,\CC\check\Ampl_{n,k,4}^1$ is permissible if and only if it is pulled back from $\CC\Ampl_{n,k,4}^1$ by the maps of \cref{obs:commutative_diag}.

Explicitly, let $s:U\to \Mat_{k,n;1}$ be a local section of the projection map $\Mat_{k,n;1}\to\Gr_{k,n;1},$ where $U$ an open set. It induces local sections which we also denote by $s,$ ${s}:U\to\hat\Gr_{k,n;1},~{s}:U\to\check\Gr_{k,n;1}$ for $\hat\varrho,\check\varrho$ respectively.
If $(C\vdots D)\in U\subseteq\Gr_{k,n;\ell}^{\geq}$ then by plugging in the expansions \eqref{eq:cauchy-binet_loop},~\eqref{eq:cauchy-binet} the expressions \eqref{eq:phi_AB_1},~\eqref{eq:phi_AB_2},~\eqref{eq:tilde_phi_AB_1} and \eqref{eq:tilde_phi_AB_2}, we learn that independently of $s,$
\[\llrr{(Y+L)Z_{i_1}Z_{i_2}}=\llrr{\tilde Y\tilde{\varphi}_{AB}(Z)_{i_1}\tilde{\varphi}_{AB}(Z)_{i_2}\tilde{\varphi}_{AB}(Z)_{A}\tilde{\varphi}_{AB}(Z)_{B}},\]
while
\[\llrr{YZ_{i_1}Z_{i_2}Z_{i_3}Z_{i_4}}=\llrr{\tilde Y\tilde{\varphi}_{AB}(Z)_{i_1}\tilde{\varphi}_{AB}(Z)_{i_2}\tilde{\varphi}_{AB}(Z)_{i_3}\tilde{\varphi}_{AB}(Z)_{i_4}},\]
where $(Y\vdots L)=\tZ(C\vdots D)$ and $\tilde Y=\widetilde{\tilde{\varphi}_{AB}(Z)}(\varphi^s_{AB}(C\vdots D))=\varphi^s_{AB}(C\vdots D)\tilde{\varphi}_{AB}(Z)$.

A similar statement holds for $B-$amplituhedra and the corresponding permissible Pl\"ucker coordinates, using \cref{lem:B_amp_A_amp_dual} and \cref{obs:_check_B_amp_A_amp_dual}.
\end{rmk}
\begin{nn}\label{nn:extending notation l=0}
We extend the notations of the $\hat{}$ and $\check{}$ spaces to the case $\ell=0$ by defining $\check\Gr_{k,n;0}=\hat\Gr_{k,n;0}=\Gr_{k,n;0}=\Gr_{k,n},$ and make similar identifications in the level of nonnegative parts and the various amplituhedra. 
\end{nn}
\subsection{Functionaries}
\begin{definition}
\label{def:functionary}
Denote by $\mathcal{R}_{k,N}^{\ell}~(\mathcal{PR}_{k,N}^{\ell})$  for $\ell=0,1$ the polynomial ring generated by the (permissible) twistor coordinates of $\CC\check\Ampl_{N,k,4}^\ell$, and let $\mathcal{FR}_{k,N}^{\ell}~(\mathcal{FPR}_{k,N}^{\ell})$ be its field of fractions. In all cases, if $N=[n]$ we replace the subscript $N$ by $n$. By \cref{rmk:iso_between_different spaces} and \cref{obs:commutative_diag} these rings can be equivalently thought of as rings of functions over $\CC\hat\Ampl_{n,k,4}^\ell.$
These rings are graded by \emph{degree}, as usual for polynomial rings. A homogeneous element of $\mathcal{FR}_{k,N}^{\ell}~(\mathcal{FPR}_{k,N}^{\ell})$ is called a \emph{rational (permissible) functionary}. If it belongs to $\mathcal{R}_{k,N}^{\ell}~(\mathcal{PR}_{k,N}^{\ell})$ we will sometimes refer to it as a \emph{(permissible) polynomial functionary}. We will often omit the words rational or polynomial and just refer to the function as a functionary. By \cref{rmk:justification} a functionary is permissible if and only if it is descends to $\CC\Ampl_{n,k,4}^\ell$ via any local section $s$ from  $s:U\to \CC\check\Ampl_{n,k,4}^\ell$ for an open subset $U\subset \CC\Ampl_{n,k,4}^\ell,$ independently of the section. This is equivalent to requiring that it is invariant under the $\SL_2$ group action on $L$ (or $D$) acting by a change of basis. 

We can similarly define the rings $\mathcal{BR}_{k,N}^{\ell},\mathcal{BPR}_{k,N}^{\ell},\mathcal{BFR}_{k,N}^{\ell}\mathcal{BFPR}_{k,N}^{\ell}$ as the corresponding rings of rational functions in the Pl\"ucker coordinates of $\check{\mathcal{B}}_{n,k,4}^\ell$, and again permissibility is equivalent descending to $\CC\mathcal{B}_{n,k,4}^\ell$ via any local section from an open subspace of $\CC\mathcal{B}_{n,k,4}^\ell$ to $\CC\check{\mathcal{B}}_{n,k,4}^\ell,$ independently of the section. We sometimes refer to these functions as Pl\"ucker functionaries (which can be polynomial, rational permissible etc.). The $\mathcal{B}-$versions of the rings are canonically isomorphic to the $\mathcal{B}-$less versions by \cref{lem:B_amp_A_amp_dual} and \cref{obs:_check_B_amp_A_amp_dual}.

All these rings have also a finer multigrading defined as follows. A polynomial functionary $F\in \mathcal{R}_{k,N}^{\ell}$ is called \emph{pure} if for each index $i\in N\cup\{A,B\}$ there is a nonnegative integer $d_i$ such that for every monomial in $F$ $i$ appears in exactly $d_i$ of its factors. In this case we write $\deg_i(F)=d_i.$ This finer multidegree comes from the weights of the torus actions scaling the rows of $Z$ and the columns of $C\vdots D.$ Alternatively, this is the torus action scaling the columns of the elements of the associated $B-$amplituhedron. A rational functionary is pure if it can be presented as the quotient of two polynomial functionaries.

Throughout
this text we will only be considering functionaries which are pure. 
\end{definition}
Observe that for any pure rational functionary $F$
\[\sum_{i}\deg_i(F)=4\deg(F).\]
\begin{ex}
    The functionary \[F=\frac{\llrr{1234}\llrr{1256}-\llrr{1235}\llrr{1246}}{\llrr{1345}^2}\] is pure of degree $0,$ and has
    \[\deg_1(F)=0,~\deg_2(F)=2,\deg_3(F)=\deg_4(F)=\deg_5(F)=-1,\deg_6(F)=1.\]
    The functionary $\llrr{1234}\llrr{1256}-\llrr{1235}\llrr{1247}$ is not pure.
\end{ex}
Especially useful examples of functionaries are the following quadratics: 
\begin{nn}\label{nn:useful_quad}
Write $\llrr{abc|de|fgh}$ for $\llrr{abcd}\llrr{efgh}-\llrr{abce}\llrr{dfgh}.$
\end{nn}
\section{Promotions of functionaries and positivity}
\subsection{Promotions}
We now define certain maps between function fields that will be analogous to the three geometric maps used in the construction of a BCFW cell.  We extend the promotions defined in \cite{even2021amplituhedron,even2023cluster} to the loopy case, and include a promotion that was not discussed in these texts - the forward limit promotion.  
\begin{definition}\label{def:pre_promotion}
Let $N$ be an index set, and $i$ be a marker of $N.$
The map $\Psi^{\pre_i}:\mathcal{FR}_{k,N\setminus \{i\}}^{\ell}\to\mathcal{FR}_{k,N}^{\ell}$ is defined by sending every generator $\llrr{i_1,\ldots,i_4}$ of $\mathcal{FR}_{k,N\setminus \{i\}}^{\ell}$ to the generator $\llrr{i_1,\ldots,i_4}$ of $\mathcal{FR}_{k,N}^{\ell}.$ In the case $i=p,$ the penultimate marker of $N$ we write $\Psi^{\pre_i}$ simply as $\Psi^{\pre}.$

The map $\Psi^{\inc_i}:\mathcal{FR}_{k,N\setminus \{i\}}^{\ell}\to\mathcal{FR}_{k+1,N}^{\ell}$ is defined by sending every generator $\llrr{i_1,\ldots,i_4}$ of $\mathcal{FR}_{k,N\setminus \{i\}}^{\ell}$ to the generator $\llrr{i_1,\ldots,i_4}$ of $\mathcal{FR}_{k+1,N}^{\ell}.$ 
\end{definition}
\begin{definition}\label{def:bcfw_prom}
With the notations of \cref{nn:bcfwmap}, define the \emph{BCFW promotion}
\[\Psi^{\bcfw}_{ac}:\mathcal{FR}_{k_L,N_L}^{\ell_L}\times \mathcal{FR}_{k_R,N_R}^{\ell_R}\to \mathcal{FR}_{k,n}^{\ell}\]
defined by the following substitution:
\begin{align*}
&  \text{on $\mathcal{FR}_{k_L,N_L}^{\ell_L}$:} && b \;\mapsto\; b - \frac{\llrr{b\,c\,d\,n}}{\llrr{a\,c\,d\,n}}\,a 
\\
& \text{on $\mathcal{FR}_{k_R,N_R}^{\ell_R}$:} && n \;\mapsto\; \frac{\llrr{a\,c\,d\,n}}{\llrr{a\,b\,c\,d}}\,b - \frac{\llrr{b\,c\,d\,n}}{\llrr{a\,b\,c\,d}}\,a 
\\ 
& \text{on $\mathcal{FR}_{k_R,N_R}^{\ell_R}$:} && d \;\mapsto\; d - \frac{\llrr{a\,b\,d\,n}}{\llrr{a\,b\,c\,n}}\,c 
\end{align*} 
In the end case $a=1,$ the left factor becomes trivial, and the map restricts to the \emph{upper promotion}
\[\Psi^{\bcfw}_{ac}:\mathcal{FR}_{k_R,N_R}^{\ell_R}\to \mathcal{FR}_{k,n}^{\ell}\]defined using the two lower substitutions.
\end{definition}
The meaning of the above substitution is that whenever we see the index $b$ in a twistor coming from the left side, $\llrr{bi_1i_2i_3}$ we substitute $b - \frac{\llrr{b\,c\,d\,n}}{\lr{a\,c\,d\,n}}\,a$ instead of $b$ to obtain
\[\llrr{b,i_1,i_2,i_3}-\frac{\llrr{b\,c\,d\,n}}{\llrr{a\,c\,d\,n}}\llrr{a,i_1,i_2,i_3},\]and perform similar expansions with respect to $d,n$ for twistors coming from the right side. See \cite{even2023cluster} for more details.
\begin{rmk}\label{obs:BCFW_promotion_domino}
The following equality is an immediate consequence of the Pl\"ucker relations:
\[n - \frac{\llrr{a\,b\,c\,n}}{\llrr{a\,b\,c\,d}}\,d + \frac{\llrr{a\,b\,d\,n}}{\llrr{a\,b\,c\,d}}\,c  \;=\; \frac{\llrr{a\,c\,d\,n}}{\llrr{a\,b\,c\,d}}\,b - \frac{\llrr{b\,c\,d\,n}}{\llrr{a\,b\,c\,d}}\,a.\]
One can equivalently write the BCFW promotion \cref{def:bcfw_prom} using the same equations for $b,d$ but on $\mathcal{FR}_{k_R,N_R}^{\ell_R}$: make the following substitution for $n$\[n \;\mapsto\; =\;  n-\frac{\llrr{a\,b\,c\,n}}{\llrr{a\,b\,c\,d}}\,d + \frac{\llrr{a\,b\,d\,n}}{\llrr{a\,b\,c\,d}}\,c .\]
This definition was used in \cite{even2023cluster}, and the two definitions are of course equivalent. The convention we use here is more suited for domino forms, while the one of \cite{even2023cluster} is more natural when working with BCFW coordinates. 
\end{rmk}

\begin{definition}\label{def:FL_prom}
Let $N=\{\min N,\ldots,c,d,A,B,n\}$ be (an ordered) index set. Define the \emph{forward limit promotion}
\[\Psi^{\FL}:\mathcal{FR}_{k+1,N}^{0}\to \mathcal{FR}_{k,N
}^{1}\]
defined by the following substitution:
\begin{align*}
& A \;\mapsto\; A - \frac{\llrr{c\,d\,A\,n}}{\llrr{c\,d\,B\,n}}\,B 
\\
& d \;\mapsto\; d - \frac{\llrr{d\,A\,B\,n}}{\llrr{c\,A\,B\,n}}\,c .
\end{align*}
\end{definition}
\begin{rmk}\label{rmk:vector_promotion}
The above promotions are induced from geometric maps which we can think of as \emph{vectors promotions}, best seen in the level of $B$-amplituhedron using the isomorphism of \cref{def:functionary} between the ring of functionaries and the ring of Pl\"ucker functionaries.

The map $\Psi^{\pre_i}$, performed on Pl\"ucker functionaries as the canonical isomorphism of \cref{def:functionary} allows, is induced from the geometric map, which we denote with the same notation,
\[\Psi^{\pre_i}:\check\Gr_{4-2,N;\ell}\to\check\Gr_{4-2\ell,N\setminus\{i\};\ell},\qquad
\bz\mapsto\rem_i\bz,~\text{and if }\ell=1,~\text{also }\bw\mapsto\rem_i\bw.\]
The map descends to the (loopy) Grassmannians level.

$\Psi^{\bcfw}_{ac}$ is induced from the geometric map, also denoted
$\Psi^{\bcfw}_{ac}:\check\Gr_{4-2,N;\ell}\to\check\Gr_{4-2\ell_L,N_L;\ell_L}\times \check\Gr_{4-2\ell_R,N_R;\ell_R}$
\begin{align*}
&  \text{on the left component:} && z_b \;\mapsto\; z_b - \frac{\lr{b\,c\,d\,n}}{\lr{a\,c\,d\,n}}\,z_a 
\\
& \text{on the right component:} && z_n \;\mapsto\; \frac{\lr{a\,c\,d\,n}}{\lr{a\,b\,c\,d}}\,z_b - \frac{\lr{b\,c\,d\,n}}{\lr{a\,b\,c\,d}}\,z_a 
\\ 
&  && z_d \;\mapsto\; z_d - \frac{\lr{a\,b\,d\,n}}{\lr{a\,b\,c\,n}}\,z_c
\end{align*}
for all other indices in the left or right component $z_i\mapsto z_i.$ 
The map descends to the loopy Grassmannians level, where
\begin{align*}
&\text{if $\ell_L=1$:} &&
w_b \;\mapsto\; w_b - \frac{\lr{b\,c\,d\,n}}{\lr{a\,c\,d\,n}}\,w_a
\\&\text{if $\ell_R=1$:}
&&w_n \;\mapsto\; \frac{\lr{a\,c\,d\,n}}{\lr{a\,b\,c\,d}}\,w_b - \frac{\lr{b\,c\,d\,n}}{\lr{a\,b\,c\,d}}\,w_a 
\\ 
&  && w_d \;\mapsto\; w_d - \frac{\lr{a\,b\,d\,n}}{\lr{a\,b\,c\,n}}\,w_c
\end{align*}
and for every other $i,~w_i\mapsto w_i.$

The map $\Psi^{\FL}$ is induced from the geometric map
$\Psi^{\FL}:\check\Gr_{2,N\setminus\{A,B\};1}\to\Gr_{4,N}$ via
\begin{align*}
& z_A \;\mapsto\; z_A - \frac{\lr{c\,d\,A\,n}}{\lr{c\,d\,B\,n}}\,z_B 
\\
& z_d \;\mapsto\; z_d - \frac{\lr{d\,A\,B\,n}}{\lr{c\,A\,B\,n}}\,z_c,
\end{align*}
and again for all other indices in the target $z_i\mapsto z_i.$

In all these cases the induction of maps is by first moving to the rings of functions and then using the isomorphisms \cref{lem:B_amp_A_amp_dual},~\cref{obs:_check_B_amp_A_amp_dual}, and it holds that
\[\Psi^\bullet\llrr{{i_1}\ldots,{i_4}}=\Psi^\bullet\lr{{i_1}\ldots,{i_4}}=\lr{\Psi^\bullet(z_{i_1})\ldots\Psi^{\bullet}(z_{i_4})},~\bullet\in\{\pre,\bcfw,\FL\},\]
and moreover, for $\bullet\in\{\pre,\bcfw,\FL\},$ and functionary valued row $A$-vectors $v_1,\ldots,v_4$
\begin{equation}\label{eq:promotion_for_coords_prom}
    \Psi^\bullet\lr{{v_1}\cdot\bz,\ldots,{v_4}\cdot\bz}=\lr{\Psi^\bullet(v_1\cdot\bz),\ldots,\Psi^\bullet(v_4\cdot\bz)}
    ,
\end{equation}
where $v\cdot \bz$ denotes $\bz v^T,$ $v_i$ are assumed to be vectors for which this product makes sense, meaning $A=N\setminus\{i\}$ for $\bullet=\pre_i,~A=N_L$ or $N_R$ for $\bullet=\bcfw$, or $A=N$ for $\bullet=\FL.$ and both functionaries and columns $z_i$ are promoted by $\Psi^\bullet.$
\end{rmk}
The above remark implies
\begin{cor}\label{cor:purity_preserved_under_prom}
The promotions are well defined ring homomorphisms, which preserve purity of rational functionaries.
\end{cor}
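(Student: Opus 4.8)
\textbf{Proof plan for Corollary \ref{cor:purity_preserved_under_prom}.}
The plan is to derive everything from the explicit description given in \cref{rmk:vector_promotion}, in particular from \eqref{eq:promotion_for_coords_prom}. First I would recall that each promotion $\Psi^\bullet$, $\bullet\in\{\pre_i,\inc_i,\bcfw,\FL\}$, is defined on generators of the polynomial ring $\mathcal{R}^\ell_{k,N}$ (the twistor coordinates $\llrr{i_1i_2i_3i_4}$ and the loopy twistors $\llrr{i_1i_2AB}$) by a substitution sending $\llrr{i_1\ldots i_4}$ to the determinant $\lr{\Psi^\bullet(z_{i_1})\ldots\Psi^\bullet(z_{i_4})}$, where each $\Psi^\bullet(z_i)$ is an explicit $\mathcal{FR}$-linear combination of a bounded number of the $z_j$'s (at most two, by the formulas). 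To see that $\Psi^\bullet$ is a well-defined ring homomorphism $\mathcal{FR}^\bullet_{\mathrm{source}}\to\mathcal{FR}^\bullet_{\mathrm{target}}$, the key point is that it must respect the relations among the twistor generators listed in \cref{obs:relations_for_functionaries} (the Pl\"ucker relations for $4$-spaces, for $2$-spaces, and the mixed relations). But by \eqref{eq:promotion_for_coords_prom}, applying $\Psi^\bullet$ to a twistor is the same as first substituting the functionary-valued vectors $\Psi^\bullet(z_i)$ into the multilinear alternating form $\lr{\,\cdot\,,\ldots,\cdot\,}$ and then expanding; since these relations are precisely the universal relations satisfied by minors of \emph{any} matrix (they hold in the coordinate ring of $\check\Gr$ resp.\ $\FL$ over any commutative ring), they are automatically preserved. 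Thus $\Psi^\bullet$ extends uniquely to a ring homomorphism on the polynomial ring, and then to a homomorphism of the fraction fields once one checks that a nonzero denominator (a product of twistors) is sent to a nonzero element — which follows because the substitutions of \cref{rmk:vector_promotion} are induced by dominant rational maps on the geometric spaces (e.g.\ the maps $\Psi^\bullet$ on $\check\Gr$ are the ones appearing in \cref{prop:BCFW_and_pos}, \cref{prop:fl_domain}, \cref{lem:effect_ops}), so pullback of functions is injective on the locus where the map is defined.

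Next I would address permissibility — i.e.\ that $\Psi^\bullet$ sends $\mathcal{FPR}$ to $\mathcal{FPR}$ — although the statement of the corollary as phrased only asks for the maps between the $\mathcal{FR}$-rings to be well-defined and purity-preserving; still, the same argument handles it. Recall from \cref{def:functionary} that a functionary is permissible iff it is invariant under the $\SL_2$-action on the loop (equivalently, descends along the relevant $B$-amplituhedron maps). Since the promotions on the loopy Grassmannian level (\cref{rmk:vector_promotion}) commute with this $\SL_2$-action — the substitutions on the $w_i$'s are applied uniformly to both rows of $\bw$ — an $\SL_2$-invariant input is sent to an $\SL_2$-invariant output, so permissibility is preserved.

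Finally, for purity: a monomial $m=\prod_t \llrr{S_t}$ in the twistors has $\deg_i(m)=\#\{t: i\in S_t\}$, and a pure functionary is a ratio $P/Q$ of pure polynomials with compatible multidegrees. By \eqref{eq:promotion_for_coords_prom}, $\Psi^\bullet\llrr{i_1i_2i_3i_4}=\lr{\Psi^\bullet(z_{i_1}),\ldots,\Psi^\bullet(z_{i_4})}$, and expanding the determinant multilinearly gives a sum of twistors, each obtained by replacing each $z_{i_j}$ by one of the (at most two) $z$-vectors appearing in $\Psi^\bullet(z_{i_j})$. Crucially, the formulas in \cref{rmk:vector_promotion} are themselves pure: $z_b\mapsto z_b-\tfrac{\lr{bcdn}}{\lr{acdn}}z_a$ has the property that in every term the total $\deg_i$ over the twistor coefficient plus the substituted vector is constant in $i$ — e.g.\ $z_b$ contributes $(\deg_i$ of nothing, vector index $b)$ while $\tfrac{\lr{bcdn}}{\lr{acdn}}z_a$ contributes $\deg$ of $\{b,c,d,n\}-\{a,c,d,n\}$ plus vector index $a$, and $\{b\}=\{b,c,d,n\}\setminus\{a,c,d,n\}\cup\{a\}$ as multisets only after accounting for $a$ — so I would record this bookkeeping lemma: \emph{each of the elementary substitutions $z_i\mapsto z_i'$ in \cref{rmk:vector_promotion} satisfies, for every index $j$, that $\deg_j(\text{coefficient})+[\text{$j$ is the substituted vector index}]$ is independent of which term of $z_i'$ is chosen, and equals the ``weight'' that $z_i$ carried}. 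Granting this, multilinear expansion of $\Psi^\bullet\llrr{S}$ produces a sum of twistor monomials all of the same multidegree, hence $\Psi^\bullet$ sends pure polynomials to pure polynomials with the predictable shifted multidegree, and therefore pure ratios to pure ratios. The main obstacle here is purely notational: carrying out the degree bookkeeping of the previous sentence cleanly for all of $\pre$, $\bcfw$, and $\FL$ simultaneously, keeping track of the special roles of $a,b,c,d,n$ (resp.\ $c,d,A,B,n$) — but there is no conceptual difficulty, since all of it reduces to the observation that the substitution formulas are homogeneous of the correct multidegree, which one reads off directly from the displayed formulas.
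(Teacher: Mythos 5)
Your argument is correct and is essentially the paper's (which simply says the corollary is implied by \cref{rmk:vector_promotion}): both well-definedness and purity are read off from the fact that the promotions are pullbacks along the geometric vector-substitution maps and that those substitutions are homogeneous in the column multigrading. One small observation: your opening framing via preservation of the Pl\"ucker relations in \cref{obs:relations_for_functionaries} is both unnecessary and, by itself, slightly incomplete (those relations are stated to generate only the $Z$-independent ones); the geometric/pullback interpretation of \eqref{eq:promotion_for_coords_prom}, which you also invoke in the same paragraph, is what actually makes well-definedness automatic.
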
 
In practice will be interested in applying this map to rational functionaries that after the promotion become permissible.
\begin{definition}\label{def:pre_perm}
    A functionary $F\in\mathcal{FR}_{k+1,N}^{0}$ is \emph{pre-permissible of level $l$} if $F$ has a representation as a rational function $P/Q$ in twistors such that
\begin{itemize}
    \item[--] The is no twistor which includes $B$ but not $A$.
    \item[--] $\deg_A(P)-\deg_A(Q)=l.$
\end{itemize}
$F$ is \emph{pre-permissible} if it is pre-permissible of level $0.$
\end{definition}

\begin{obs}\label{obs:pre_permissible}
\begin{itemize}
\item If $F\in\mathcal{FR}^\ell_{k,N}$ is permissible then also $\Psi^\pre F$ is permissible.
\item Using \cref{nn:bcfwmap}, if $F$ is a permissible functionary on one of the index sets $N_L$ or $N_R$ then $\Psi^\bcfw_{ac}F$ is permissible.
\item
Let $F\in\mathcal{FR}_{k+1,N}^{0}$ be a pure rational functionary which is pre-permissible of level $l.$ 
Then the functionary obtained from $F$ by first substituting 
\[
 A \;\mapsto\; A - \frac{\llrr{c\,d\,A\,i}}{\llrr{c\,d\,B\,i}}\,B 
,~~~~~ d \;\mapsto\; d - \frac{\llrr{d\,A\,B\,i}}{\llrr{c\,A\,B\,i}}\,c .
\]where $i$ is an arbitrary index, and then multiplying by $\llrr{c\,d\,B\,i}^l$ is permissible. In particular, if $l=0$ then 
$\Psi^\FL(F)$ is permissible, and in particular descends to a function on $\CC\Ampl_{k,N\setminus\{A,B\},4}^1$.
\end{itemize}
\end{obs}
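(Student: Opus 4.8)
\textbf{Proof plan for \cref{obs:pre_permissible}.}

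The three bullets are all of the same flavor: a functionary that is ``permissible'' or ``pre-permissible'' is characterized intrinsically by \cref{rmk:justification} and \cref{def:functionary} as one that descends through any local section of $\check\varrho$ (equivalently, is invariant under the $\SL_2$ action rescaling $L$, i.e. rescaling $D$). So the strategy throughout is to translate ``permissible'' into this invariance property and then check that the promotion maps commute with, or interact correctly with, the relevant torus/$\SL_2$ structure. Concretely, I would use the vector-promotion description of \cref{rmk:vector_promotion}: each $\Psi^\bullet$ is pulled back from an explicit geometric map on the $B$-amplituhedron side (on the columns $z_i$, $w_i$), and \eqref{eq:promotion_for_coords_prom} lets me compute $\Psi^\bullet$ of any twistor as a twistor/Plücker coordinate of the promoted columns.

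For the first bullet ($\Psi^\pre$): the geometric map underlying $\Psi^{\pre_i}$ simply deletes the $i$-th column of $\bz$ and of $\bw$, so it visibly commutes with the $\SL_2$-action on $\bw$ (equivalently on $L$) and carries permissible Plücker functionaries to permissible ones; translating back through the isomorphism of \cref{def:functionary} gives the claim. For the second bullet ($\Psi^\bcfw_{ac}$): the BCFW geometric map of \cref{rmk:vector_promotion} acts on the $z_i$'s by substitutions whose coefficients $\llrr{bcdn}/\llrr{acdn}$ etc.\ are built only from $\llrr{\cdot}$'s with four ordinary indices — hence from functions pulled back from $\Gr_{k,k+4}$, i.e.\ $\SL_2$-invariant — and, when $\ell_L$ or $\ell_R$ equals $1$, it acts on the $w_i$'s by the \emph{same} substitutions with $z\rightsquigarrow w$; since these substitutions are linear with $\SL_2$-invariant coefficients, they intertwine the $\SL_2$-action, so a permissible functionary on $N_L$ or $N_R$ (an $\SL_2$-invariant rational function of the $w$'s and $z$'s) is carried to an $\SL_2$-invariant function on $[n]$, i.e.\ a permissible functionary. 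Here one must be a little careful that permissibility of the \emph{input} means: every loopy twistor has both $A,B$ (from the side carrying the loop's auxiliary columns), and the two lower substitutions $n\mapsto\cdots,\ d\mapsto\cdots$ do not introduce an $A$ or $B$ alone; this is immediate from the explicit form.

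The third bullet is the main point and the only one requiring genuine computation. Here $F\in\mathcal{FR}^0_{k+1,N}$ is pre-permissible of level $l$, i.e.\ $F=P/Q$ with no twistor containing $B$ but not $A$, and $\deg_A P-\deg_A Q=l$. After the substitution $A\mapsto A-\tfrac{\llrr{cdAi}}{\llrr{cdBi}}B$, $d\mapsto d-\tfrac{\llrr{dABi}}{\llrr{cABi}}c$, every occurrence of the column $z_A$ becomes the permissible combination $z_A-\tfrac{\llrr{cdAi}}{\llrr{cdBi}}z_B$ (a combination that, as in \cref{rmk:justification}/\eqref{eq:tilde_phi_AB_1}, lives in the span that makes twistors with $A$ replaced by this combination equal to permissible twistors after clearing $\llrr{cdBi}$), and $z_d$ becomes a genuinely permissible combination. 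The only source of non-permissibility is a leftover power of $\llrr{cdBi}$ in denominators coming from expanding the $z_A$-substitution: each twistor containing $A$ contributes one factor of $\llrr{cdBi}^{-1}$, and by hypothesis there are no twistors containing $B$-without-$A$, so the net power of $\llrr{cdBi}$ in the result is exactly $-(\deg_A P-\deg_A Q)=-l$. Multiplying by $\llrr{cdBi}^{l}$ therefore cancels it and yields a ratio of polynomials in permissible twistors, hence a permissible functionary; the purity is preserved by \cref{cor:purity_preserved_under_prom}. Specializing $i=n$ (so that $\llrr{cdBn}$ is exactly the coefficient-denominator appearing in $\Psi^\FL$ of \cref{def:FL_prom}) and $l=0$, the correction factor is trivial and we conclude $\Psi^\FL(F)$ is permissible, so by \cref{rmk:justification} it descends to $\CC\Ampl^1_{k,N\setminus\{A,B\},4}$. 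The step I expect to be fiddly rather than deep is the bookkeeping of the $\llrr{cdBi}$-powers: one must verify that the substitution for $d$ introduces no new $A$-or-$B$ dependence that would spoil the count, and that $\deg_A$ is additive over the monomials of a pure functionary so that the power of $\llrr{cdBi}$ is uniform across all terms — which is precisely why purity is assumed.
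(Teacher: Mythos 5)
For the third bullet, your argument follows the same line as the paper's, but you skip an observation the paper makes explicitly, and as a result your counting of the $\llrr{cdBi}$-powers is stated incorrectly. You write that "each twistor containing $A$ contributes one factor of $\llrr{cdBi}^{-1}$." That is not quite right: a twistor containing \emph{both} $A$ and $B$, such as $\llrr{efAB}$, is unchanged by the substitution $A\mapsto A-\tfrac{\llrr{cdAi}}{\llrr{cdBi}}B$, since the new term has a repeated column $z_B$ and vanishes; it contributes no $\llrr{cdBi}$ factor at all. The paper's proof opens with precisely this point ("only twistors which use $A$ and not $B$ are affected by the promotion") and then does the Pl\"ucker expansion on one such twistor to exhibit the permissible numerator over the single $\llrr{cdBi}$. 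The count of $\llrr{cdBi}^{-1}$ factors per monomial is therefore the number of $A$-but-not-$B$ twistors, which by purity and pre-permissibility equals $\deg_A-\deg_B$, not $\deg_A$. You should state the "only $A$-not-$B$ twistors are affected" observation explicitly and carry out the count with respect to it; your final appeal to the level hypothesis then matches the paper's. Your side remark about the $d$-substitution is correct and matches the paper in spirit: its coefficient $\llrr{dABi}/\llrr{cABi}$ is a ratio of permissible twistors and never introduces an uncancelled non-permissible factor.

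For the first two bullets, your argument via the $\SL_2$-action on $\bw$ and \cref{rmk:vector_promotion} is valid but heavier than what is needed. The paper's argument is essentially a one-liner: $A,B$ are not in the index set $N$, and the substitutions defining $\Psi^\pre$ and $\Psi^\bcfw$ only involve ordinary indices, so a permissible twistor (one containing both or neither of $A,B$) maps to a sum of twistors still containing both or neither, term by term, and the $\ell=0$ case is vacuous. Your $\SL_2$-intertwining route does give a more structural explanation of the same fact, but it is not required for the observation.
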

\begin{proof}
The first two items are not tautolofical only if $\ell=1,$ but in this case $A,B$ are not in the index sets, and hence they are evidently true.

For the third item we use the notations of \cref{def:pre_perm}. Note first that only twistors which use $A$ and not $B$ are affected by the promotion. Such a twistor $\llrr{efgA}$ promotes to \[\frac{\llrr{efgA}\llrr{cdBi}-\llrr{efgB}\llrr{cdAi}}{\llrr{cdBi}}=\frac{-\llrr{efgc}\llrr{dABi}+\llrr{efgd}\llrr{cABi}-\llrr{efgi}\llrr{cdAB}}{\llrr{cdBi}}.\] 
The numerator of this expression is permissible, while the denominator yields a non permissible factor of $\llrr{cdBi}.$ The constraint on the $A$-degrees of $P,Q$ implies that these factors cancel after multiplication by $\llrr{cdBi}^l$, implying  the result.
\end{proof}
\subsection{Signs of functionaries}
\begin{definition}\label{def:sign}
  For $S \subset \Gr_{k,n;\ell}$, we say a rational functionary $F$ has \emph{sign $s \in \{0, \pm 1\}$ on the image of $S$} if for all $Z\in \Mat_{n,k+4}^{>}$ and for all $Y$ or $(Y\vdots L)\in \tZ(S)$, $F(Y)$ has sign $s$.  If $F$ has sign 0, we also say it \emph{vanishes} on the image of $S$. If $F\geq 0$ on the intersection of the image of $S$ and the domain where $F$ is defined, or $F\leq 0$ on that set, we say that $F$ is \emph{weakly positive or negative} respectively, or that $F$ has a \emph{weak $\pm1$ sign} there.
	In these cases, we say that $F$ has \emph{fixed (weak) sign} on 
	$\tZ(S)$.

A rational functionary $F$ has \emph{strong sign} $+1$ or is \emph{strongly positive} on the image of $S$, if for all $(C\vdots D)\in S$, $F((C\vdots D)Z)$ can be written as a ratio of polynomials in the Pl\"ucker coordinates of $C,D$ and maximal minors of $Z$ with all coefficients of sign $+1$. We similarly define \emph{strongly negative} functionaries and strong sign $-1.$ A functionary which is either strongly positive, strongly negative, or vanishes identically, on the image of $S$ is said to have a strong sign on the image of $S$. 
\end{definition}
The importance of this definition comes from the following simple observation:
\begin{obs}\label{obs:strong_sign_and_strata}
If a permissible functionary $F$ has a strong nonzero sign on the BCFW cell $S_\D$ then 
\[V(F)\cap\overline{S}_D=\{U\vdots V\in \overline{S}_\D|~F(\tZ(U\vdots V),Z)=0\}\]is a union of (loopy) positroid strata, independent of the positive matrix $Z.$
\end{obs}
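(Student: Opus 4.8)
The statement to prove is \cref{obs:strong_sign_and_strata}: if a permissible functionary $F$ has a strong nonzero sign on a BCFW cell $S_\D$, then $V(F)\cap\overline{S}_\D$ is a union of $1$-loop positroid strata, independently of $Z$. The key point is that ``strong sign'' is exactly the property that lets one read off vanishing of $F$ from the vanishing pattern of the Pl\"ucker coordinates of the point in $\Gr_{k,n;\ell}^{\geq}$ alone --- i.e. from the positroid stratum --- rather than from the finer data of $(C\vdots D)$ and $Z$. So the plan is: first reduce the claim to showing that the vanishing locus $V(F)\cap\overline{S}_\D$ depends only on the $1$-loop positroid of a point; then combine this with the fact (\cref{def:strata}) that a positroid stratum of $\overline{S}_\D$ is precisely a maximal subspace on which the $1$-loop positroid is constant, which immediately gives that $V(F)\cap\overline{S}_\D$ is a union of such strata.

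First I would spell out what ``permissible'' and ``strong sign'' give us concretely. Since $F$ is permissible, by \cref{rmk:justification} and \cref{def:functionary} it descends to a genuine function on $\CC\Ampl_{n,k,4}^\ell$, so $F(\tZ(U\vdots V))$ is well-defined on $\overline{S}_\D$ (up to the global scaling which does not affect the vanishing locus). Since $F$ has strong nonzero sign on $S_\D$ --- say strongly positive --- \cref{def:sign} says that for every $(C\vdots D)\in S_\D$ we can write $F((C\vdots D)Z)$ as a ratio $P/Q$ of polynomials in the Pl\"ucker coordinates $\lr{C}_I,\lr{C+D}_J$ and the maximal minors $\lr{Z}_K$, with all coefficients of $P$ (and of $Q$) of the same sign, say $+1$. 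The same algebraic identity, being a polynomial identity in the entries of $C$, $D$, $Z$, continues to hold on all of $\overline{S}_\D$ wherever $Q$ does not vanish. Because $Z$ is a \emph{positive} matrix all its maximal minors $\lr{Z}_K$ are strictly positive, and because $(C\vdots D)\in\Gr_{k,n;\ell}^{\geq}$ all the Pl\"ucker coordinates are nonnegative; hence every monomial of $P$ is nonnegative, and $P$ vanishes if and only if every one of its monomials vanishes. The next step is to argue that $P$ is not identically zero as a polynomial (otherwise $F\equiv 0$ on $S_\D$, contradicting ``nonzero sign''), and similarly $Q\not\equiv 0$; then, on the dense open subset of $\overline{S}_\D$ where $Q\neq 0$, we have $F(\tZ(\cdot))=0$ iff $P$ evaluates to $0$ iff \emph{every} monomial of $P$ that actually appears evaluates to $0$, which is a condition depending only on which Pl\"ucker coordinates of $(C\vdots D)$ vanish --- i.e. only on the $1$-loop positroid $\Pos(U\vdots V)$. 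So, on this open dense set, the indicator of $V(F)$ is a function of the positroid alone.

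Then I would pass from the open dense subset to all of $\overline{S}_\D$. Here one uses that $V(F)\cap\overline{S}_\D$ is closed in $\overline{S}_\D$ (as the zero set of a continuous function on the compact space $\overline{S}_\D$, $F$ being well-defined and continuous after fixing a representative), and that each positroid stratum is connected enough / the whole $\overline{S}_\D$ is stratified by positroid strata; more precisely, by \cref{obs:SA_strata}-type reasoning $\overline{S}_\D$ is a union of positroid strata, and on the closure of any single stratum the ``generic'' vanishing behaviour of $P$ (on the part of the stratum where $Q\neq0$) determines the behaviour everywhere on that stratum by the same monomial-nonnegativity argument applied directly to points of the stratum, without needing $Q\neq0$: if the positroid forces some monomial of $P$ to be nonzero at one point it forces it at all points of the stratum, and conversely if the positroid kills all monomials of $P$ then $P\equiv0$ on the whole stratum. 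Hence membership in $V(F)$ is constant on each positroid stratum of $\overline{S}_\D$, and $V(F)\cap\overline{S}_\D$ is a union of positroid strata. Finally, since the $\lr{Z}_K$ enter the coefficients only with positive values for \emph{all} positive $Z$, and the vanishing criterion refers only to the Pl\"ucker coordinates, the resulting union of strata is independent of $Z$, as claimed.

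The main obstacle I anticipate is the case analysis needed to make the ``pass to the non-generic locus'' step rigorous: a priori $Q$ could vanish on some positroid stratum while $P$ does not, so that $F$ is undefined there, and one must check that the convention in \cref{obs:strong_sign_and_strata} (where $V(F)$ is defined via $F(\tZ(U\vdots V))=0$ on all of $\overline{S}_\D$) still behaves well --- i.e. that $F$ extends continuously, or that such bad strata are harmless because $F$ is permissible and hence already a regular function on $\CC\Ampl_{n,k,4}^\ell$ where it is genuinely defined. I expect the cleanest route is to work with the \emph{homogenized} description: write $F$ itself (not just $F\circ\tZ$) as a ratio of Pl\"ucker functionaries on $\check{\mathcal B}$ using \cref{lem:B_amp_A_amp_dual}, use \cref{obs:relations_for_functionaries} to clear the ambiguity, and then invoke strong positivity to conclude the numerator is a nonnegative combination of monomials in nonnegative quantities; the denominator, being likewise of fixed sign and nonvanishing on the open cell $S_\D$, controls the few remaining strata. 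This bookkeeping --- rather than any deep idea --- is where the real work lies.
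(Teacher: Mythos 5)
Your proposal is correct and is essentially the only natural argument here, which is why the paper labels this an ``observation'' and gives no proof: strong sign means $F((C\vdots D)Z)=P/Q$ with $P,Q$ polynomials in Pl\"ucker coordinates and $Z$-minors having all coefficients of one sign; $Z$-minors are strictly positive and Pl\"ucker coordinates of nonnegative points are nonnegative, so each monomial in $P$ (resp.\ $Q$) is nonnegative and vanishes iff one of its Pl\"ucker factors does; hence the vanishing of $P$ (resp.\ $Q$), and therefore of $F$, depends only on which Pl\"ucker coordinates are zero, i.e.\ only on the positroid, and so $V(F)\cap\overline{S}_\D$ is a union of positroid strata and the criterion is visibly $Z$-independent. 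Your concern about the locus $Q=0$ is legitimate but is a larger worry than the paper needs: the same argument shows $V(Q)\cap\overline{S}_\D$ is itself a union of positroid strata, and since $Q$ is nonvanishing on the open dense subset $S_\D$ it does not vanish identically on any stratum whose closure meets $S_\D$; moreover in every application in the paper (boundary twistors, chord twistors, the functionaries $F^\D_{\varepsilon_h}$) the relevant $F$ is in fact polynomial or has a fixed monomial denominator, so this bookkeeping never actually bites. Your write-up spends most of its length on this edge case, which you could safely compress; otherwise the approach matches the intended one.
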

The next lemmas provide examples of permissible twistors which have strong signs.
\begin{lemma}\label{lem:sign_bdry_twistors}
If $i\lessdot j$, and $i'\lessdot j'$ satisfy $|\{i,j,i',j'\}|=4,$ then $\llrr{ijAB},~\llrr{iji'j'}$ are nonnegative on $\Ampl^{\ell=1}_{n,k,4}$. If $i=\max N,j=\min N$ and $i'\lessdot j'$ satisfy $|\{i,j,i',j'\}|=4,$ then $(-1)^k\llrr{ijAB},~(-1)^k\llrr{iji'j'}$ have strong sign $+1$ on the image of $\Gr_{k,n;\ell}^>$ and are nonnegative on $\Ampl^{\ell=1}_{n,k,4}$.
\end{lemma}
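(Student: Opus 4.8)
\textbf{Proof plan for \cref{lem:sign_bdry_twistors}.}
The plan is to reduce every assertion to the Cauchy--Binet expansions of \cref{lem:Cauchy-Binet}, where the sign of a twistor coordinate becomes a sum over Pl\"ucker coordinates of $C$ (or $C+D$) times maximal minors of $Z$, and then to invoke the positivity of those minors together with the nonnegativity of the Pl\"ucker coordinates. Concretely, for $\llrr{iji'j'}$ with $i\lessdot j$, $i'\lessdot j'$ and the four indices distinct, \cref{eq:cauchy-binet} gives
\[
\llrr{Y\,Z_iZ_jZ_{i'}Z_{j'}}=\sum_{J\in\binom{[n]}{k},\ J\cap\{i,j,i',j'\}=\emptyset}\lr{C}_J\,\lr{Z}_{J\cup\{i,j,i',j'\}},
\]
and the point is that since $Z\in\Mat_{n,k+4}^{>}$, every minor $\lr{Z}_{J\cup\{i,j,i',j'\}}$, after reordering its columns into increasing order, has a \emph{fixed} sign depending only on the positions of $i,j,i',j'$ inside $J\cup\{i,j,i',j'\}$, not on $J$ itself --- this is exactly the sign-variation fact of Karp used in \cref{thm:ampli_well_def}. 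When $i\lessdot j$ and $i'\lessdot j'$ are pairs of cyclically consecutive elements, a short parity computation shows this sign is constant across all $J$ (the block $\{i,i+1\}$ contributes no extra inversions beyond those of a single letter, and similarly for $\{i',j'\}$), so the whole sum has one sign, and since each $\lr{C}_J\ge0$ on $\Gr_{k,n;\ell}^{\ge}$ the functionary $\llrr{iji'j'}$ has a \emph{strong} sign on the image of that cone; checking the base case $k=0$ (or a direct small computation) pins this sign down to be $+1$ when $j=i+1$ and $j'=i'+1$ with all four in $[n]$, which gives nonnegativity on $\Ampl^{\ell=1}_{n,k,4}$ since the amplituhedron is the image of the nonnegative cone. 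The loopy twistor $\llrr{ijAB}=\llrr{(Y+L)Z_iZ_j}$ is handled identically using \cref{eq:cauchy-binet_loop}, replacing $C$ by $C+D$ and $k$ by $k+2$, and noting $\lr{C+D}_J\ge0$ as well; the parity bookkeeping is the same since only two consecutive indices $i\lessdot j$ are inserted.

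The second statement, where $i=\max N$ and $j=\min N$, is the wrap-around case. Here $\{i,j\}=\{n,1\}$ is cyclically consecutive but not consecutive in the linear order, so when we expand $\lr{Z}_{J\cup\{1,n\}}$ and sort columns, the index $n$ must be carried past every element of $J$, contributing a sign $(-1)^{|J|}=(-1)^k$ (and, for the loopy twistor, $(-1)^{k+2}=(-1)^k$) that is again independent of $J$; the index $1$ contributes nothing since it is already minimal. Thus $(-1)^k\llrr{ijAB}$ and $(-1)^k\llrr{iji'j'}$ are sums of $\lr{C}_J\ge0$ (resp. $\lr{C+D}_J\ge0$) against positive $Z$-minors, i.e. they have strong sign $+1$ on the image of $\Gr_{k,n;\ell}^{>}$, hence nonnegativity on $\Ampl^{\ell=1}_{n,k,4}$ follows by passing to the closure $\Gr_{k,n;\ell}^{\ge}$ and using \cref{lem:connected}. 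I would also remark that permissibility of these twistors (so that the statement about $\Ampl$, as opposed to $\check\Ampl$, even makes sense) is immediate: $\llrr{iji'j'}$ involves neither $A$ nor $B$, and $\llrr{ijAB}$ involves both, so both are permissible by \cref{rmk:justification}.

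The only genuinely delicate point is the parity bookkeeping --- verifying that the sign of $\lr{Z}_{J\cup\{i,j,i',j'\}}$ really is independent of $J$, and extracting the correct global sign ($+1$ versus $(-1)^k$). I expect to organize this by writing the sorting permutation as a product of transpositions moving each of $i,j,i',j'$ to its sorted slot, counting inversions against $J$ modulo $2$, and observing that consecutive-pair insertions cancel in pairs while the single $n\mapsto$ max move costs $(-1)^{|J|}$. Once this sign lemma is isolated and proved, the rest of \cref{lem:sign_bdry_twistors} is a direct substitution into \cref{lem:Cauchy-Binet} and an appeal to $Z>0$ and $\lr{C}_J,\lr{C+D}_J\ge0$. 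I would phrase the conclusion in the language of \cref{def:sign}: these functionaries are ``strongly positive'' (up to the explicit $(-1)^k$) on the relevant cone, which is the form in which they will later feed into \cref{obs:strong_sign_and_strata}.
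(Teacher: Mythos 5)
Your overall approach is exactly the paper's: the proof in the text is a one-line appeal to \cref{lem:Cauchy-Binet}, observing that every term $\lr{Z}_{j_1,\dots,j_k,i_1,\dots,i_4}$ (or $\lr{Z}_{j_1,\dots,j_{k+2},i,j}$) in the expansion carries the same sign, which together with $\lr{C}_J\ge 0$, $\lr{C+D}_J\ge 0$ gives the claimed strong sign. Your write-up merely spells out the parity argument that the paper leaves implicit. The interior case ($i\lessdot j$, $i'\lessdot j'$ with all four in $[n]$) you handle correctly: the consecutive pairs contribute $2(\#\{j_m>i\})$ and $2(\#\{j_m>i'\})$ inversions, both even, so the sign is $J$-independent and equal to $+1$.

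The genuine gap is in your parity bookkeeping for the wrap-around case, and it produces the wrong parity. You claim \emph{``the index $n$ must be carried past every element of $J$, contributing $(-1)^{|J|}$''} and \emph{``the index $1$ contributes nothing since it is already minimal.''} But in the ordering that \cref{lem:Cauchy-Binet} actually produces, namely $(j_1,\dots,j_k,\,n,\,1,\,i',\,j')$ with $j_1<\dots<j_k$, the index $n$ already sits \emph{after} all of $J$: it only needs to cross $\{1,i',j'\}$, while it is $1$ that must cross all of $J$ \emph{and} $n$. Counting inversions directly in $(j_1,\dots,j_k,n,1,i',j')$: there are $k$ inversions of the form $(j_m,1)$, $2(k-a)$ from $(j_m,i')$ and $(j_m,j')$ where $j_a<i'<j_{a+1}$, and $3$ from $(n,1),(n,i'),(n,j')$, for a total of $3k-2a+3\equiv k+1\pmod 2$; for the loop twistor $(j_1,\dots,j_{k+2},n,1)$ the count is $k+3\equiv k+1$. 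So the $J$-independent sign is $(-1)^{k+1}$, not $(-1)^k$ --- your heuristic counts ``$n$ past $J$'' ($k$ swaps) where the correct count is ``$1$ past $J\cup\{n\}$'' ($k+1$ swaps), an off-by-one. You should redo this count carefully and, since $(-1)^{k+1}$ disagrees with the $(-1)^k$ written in the lemma, determine whether there is a hidden ordering convention in $\lr{Z}_{\,\cdot\,}$ that you have not accounted for, or whether the lemma's stated sign needs correcting; as written, the bookkeeping you describe does not support $(-1)^k$.

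Two smaller remarks: the direct check at $k=0$, $\llrr{n,1,i',j'}=\lr{Z}_{n,1,i',j'}=-\lr{Z}_{1,i',j',n}<0$, is a fast way to pin down this sign and is a worthwhile sanity check to include; and the permissibility remark at the end is correct but is not part of the paper's proof --- it is harmless to include but you could simply cite \cref{rmk:justification}.
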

\begin{proof}
By \cref{lem:Cauchy-Binet} it is easy to see that all terms $\lr{Z}_{j_1, \dots, j_{k+2}, i,j}, \lr{Z}_{j_1, \dots, j_k, i,j,i',j'},$ have the same sign, which is exactly the one described in the statement of the lemma.
\end{proof}
\begin{lemma}\label{lem:sign_of_chord_twistors}
\begin{enumerate}
\item\label{it:chord_BCFW}
Let $S_L\subseteq \Gr_{k_L,N_L;\ell_L}^{\geq},~S_R\subseteq \Gr_{k_R,N_R;\ell_R}^{\geq},$ where $k_L,\ldots,\ell_R$ are as in \cref{nn:bcfwmap} be subspaces which satisfy \cref{condition:BCFW}.
Let $S=S_L\bcfw S_R$ 
Then the twistor coordinates 
	\[(-1)^{k_R}\llrr{bcdn}, \quad (-1)^{k_R+1}\llrr{acdn}, \quad \llrr{abdn},\quad -\llrr{abcn}, \quad \llrr{abcd}\]
	have strong sign $+1$ on the image of $S$.
 \item\label{it:chord_FL}
Let $S=\FL(S')$ where $S'\subseteq\Gr_{k+1,[n]\cup\{A,B\}}^{\geq}$ is a positroid cell which satisfies the assumption of \cref{condition:FL} then
the twistor coordinates
 \[\llrr{d_\star nAB}, \quad -\llrr{c_\star nAB}, \quad \llrr{c_\star d_\star AB}\]are strongly positive on the image of $S.$
 If $\{a_0,b_0\}$ is coindependent for $\PosD(S),$ where $a_0,b_0$ are consecutive indices, then $\llrr{a_0b_0AB}$ is strongly positive on the image of $S.$
 \end{enumerate}
\end{lemma}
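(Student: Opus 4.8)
The plan is to prove both parts by the Cauchy--Binet expansion of Lemma~\ref{lem:Cauchy-Binet}, reducing each twistor to a signed sum of maximal minors of $Z$ and Pl\"ucker coordinates of the relevant loopy matrix, and then showing that all surviving terms share the same sign. For part \eqref{it:chord_BCFW}, I would proceed exactly as in the tree-level analogue \cite{even2021amplituhedron,even2023cluster}: take a nonnegative loopy matrix representative $(C\vdots D)$ of a point of $S=S_L\bcfw S_R$, built via $\mbcfw$ from nonnegative representatives of $S_L,S_R$ and a positive element of $\Gr_{1,5}^>$. By \cref{prop:matroid_under_bcfw} (and \cref{cor:matroid_after_bcfw}, using \cref{condition:BCFW}), the Pl\"ucker coordinates of $C$ and of $C+D$ are polynomials with nonnegative coefficients in the Pl\"ucker coordinates of $C_L,C_R$ (or $C_L+D$, $C_R+D$), the BCFW parameters and their inverses. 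Feeding this into \eqref{eq:cauchy-binet} (resp.\ \eqref{eq:cauchy-binet_loop} for a loopy twistor $\llrr{abcd}$ when one of $a,\dots,d$ is $A$ or $B$ --- but here all five indices $a,b,c,d,n$ are genuine columns so only \eqref{eq:cauchy-binet} and \eqref{eq:cauchy-binet_loop} with ordinary indices occur), every term $\lr{Z}_{j_1,\dots,j_k,i_1,\dots,i_4}$ that appears must avoid the index set of the twistor; the key computation is that, after the $\pre$ and $y$-operations appearing in \cref{def:bcfw-map} and the explicit form of $v=\alpha\ee_a+\beta\ee_b+(-1)^{k_R}\gamma\ee_c+(-1)^{k_R}\delta\ee_d+(-1)^{k_R}\varepsilon\ee_n$, the relevant minors of $Z$ all carry the sign $(-1)^{k_R}$, $(-1)^{k_R+1}$, $+1$, $-1$, $+1$ respectively for the five twistors. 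This is the same sign bookkeeping as \cite[Lemma 10.\ldots]{even2023cluster}; I would cite that and only redo the step where the loop rows $D$ enter, noting that they contribute only through the nonnegative Pl\"ucker coordinates of $C+D$ and do not affect the $Z$-minor signs.

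For part \eqref{it:chord_FL}, the strategy is to push everything through the $\varphi_{AB}$/$\check{Z}$ picture of \cref{nn:varphi} and \cref{rmk:justification}. Write $S=\FL(S')$ with $S'$ a positroid cell satisfying \cref{condition:FL}, i.e.\ $B$ is independent and $\{c,d,A,B,n\}$ is coindependent for $S'$ (here $c=c_\star=n-2$, $d=d_\star=n-1$). A loopy twistor $\llrr{d_\star n A B}=\llrr{(Y+L)Z_{d_\star}Z_n}$ expands via \eqref{eq:cauchy-binet_loop} as $\sum_J \lr{C+D}_J\,\lr{Z}_{j_1,\dots,j_{k+2},d_\star,n}$ over $J\in\binom{[n]}{k+2}$ disjoint from $\{d_\star,n\}$. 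Now by \cref{prop:matroid_under_FL} (and its consequence \cref{cor:matroid_after_fl}, items \eqref{it:matroid_after_FL_yellow} and \eqref{it:matroid_after_fl_reds}), $\PosBD$ is described purely in terms of $\PosB(S')$; and since $S'$ is a positroid cell, the Pl\"ucker coordinates $\lr{C+D}_J$ are strongly positive (they are polynomials with nonnegative coefficients in the Pl\"ucker coordinates of a nonnegative representative of $S'$, via the $\inc_l,\scale_l,x_\ast,y_\ast$ operations of \cref{def:forward_limit} and the effect-on-Pl\"uckers rules of \cref{lem:effect_ops}, together with the identity $\addL_{AB}$ and the final $\rem$ operations). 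So the task reduces to checking that all surviving minors $\lr{Z}_{\dots,d_\star,n}$, $\lr{Z}_{\dots,c_\star,n}$, $\lr{Z}_{\dots,c_\star,d_\star}$ have the constant signs $+1,-1,+1$; this is the same Cauchy--Binet sign computation as in Lemma~\ref{lem:sign_bdry_twistors} (indeed $\llrr{d_\star nAB}$ is precisely of the form $\llrr{ijAB}$ with $i\lessdot j$, so its nonnegativity is already Lemma~\ref{lem:sign_bdry_twistors}; the content here is upgrading ``nonnegative on $\Ampl$'' to ``strongly positive on the image of $S$'', which follows once we know $\lr{C+D}_J\geq0$ strongly). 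For the last claim, if $\{a_0,b_0\}$ is coindependent for $\PosD(S)$ with $a_0\lessdot b_0$, then $\llrr{a_0b_0AB}=\sum_J\lr{C+D}_J\lr{Z}_{j_1,\dots,j_{k+2},a_0,b_0}$ with all $\lr{C+D}_J\geq0$ strongly and all $\lr{Z}_{\dots,a_0,b_0}>0$ (consecutive indices, as in Lemma~\ref{lem:sign_bdry_twistors}), and at least one term is nonzero precisely because of the coindependence hypothesis, giving strong sign $+1$.

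The order I would carry this out: (1) state the Cauchy--Binet reduction and recall that positroid/BCFW Pl\"ucker coordinates are strongly signed via \cref{prop:matroid_under_bcfw}, \cref{prop:matroid_under_FL}, \cref{lem:effect_ops}; (2) do the $Z$-minor sign computation for the five $\bcfw$ twistors, citing \cite{even2023cluster} for the tree part and adding the one-line remark about the loop rows; (3) do the $Z$-minor sign computation for the three $\FL$ twistors $\llrr{d_\star nAB},\llrr{c_\star nAB},\llrr{c_\star d_\star AB}$ using consecutive-index positivity of $Z$-minors; (4) handle $\llrr{a_0b_0AB}$ by invoking the coindependence hypothesis to guarantee a nonzero term. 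The main obstacle I anticipate is step (2)--(3)'s bookkeeping of the $(-1)^{k_R}$ and $(-1)^k$ signs: one must be careful that the operations $\pre$, $y_i(t)$, $\inc_l$, $x_A$, $x_l$, $\addL_{AB}$, $\rem$ that build $C$ and $C+D$ out of the inputs introduce sign factors on the Pl\"ucker coordinates exactly matching \cref{lem:effect_ops} and \cref{def:ops} (note the $(-1)^{k-1}$ twists when $h=\max N$), so that the ``strongly positive'' claim is literally true with no stray sign. Since \cref{prop:pos_poly_plucker_rep} already packages precisely this for the domino form, and \cref{lem:L=1_BCFW2Domino} relates domino to BCFW coordinates, I expect I can largely outsource the sign tracking to those results and keep the proof short.
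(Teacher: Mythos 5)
Your approach matches the paper's: Cauchy--Binet expansion of each twistor, observation that the surviving $Z$-minors carry a uniform sign because the relevant positroid constrains which $J$ can appear, and the coindependence conditions (from \cref{cor:matroid_after_bcfw} / \cref{cor:matroid_after_fl}) to guarantee at least one nonzero term. The paper's actual proof is much terser — it cites the tree-level computation from \cite{even2023cluster} for the general cases and only spells out the yellow-chord case of item~\eqref{it:chord_FL}, where the essential observation is that the row $D_\star$ is supported on $\{c_\star,d_\star,n\}$, so any basis $J$ with $\lr{C+D}_J\neq 0$ and $J\cap\{i_\star,j_\star\}=\emptyset$ must contain exactly the one index $\{c_\star,d_\star,n\}\setminus\{i_\star,j_\star\}$; with that position fixed, all $Z$-minors in the sum share a sign.

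A few minor points on your sketch. First, you invoke considerably more machinery than needed: \cref{prop:pos_poly_plucker_rep} and \cref{lem:L=1_BCFW2Domino} are about domino forms and would restrict you to BCFW cells, whereas the statement is for general positroid cells satisfying \cref{condition:BCFW} or \cref{condition:FL}; also, expressing $\lr{C}_J$ and $\lr{C+D}_J$ as nonnegative polynomials in the Pl\"ucker coordinates of $C_L,C_R$ is not required — nonnegativity of the Pl\"ucker coordinates of $C+D$ is immediate because $(C\vdots D)\in\Gr^{\geq}_{k,n;\ell}$, and the strong-sign notion of \cref{def:sign} already allows writing the twistor directly in terms of them. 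Second, you claim that $\llrr{d_\star nAB}$ falls under \cref{lem:sign_bdry_twistors}; that is true for $\llrr{d_\star nAB}$ and $\llrr{c_\star d_\star AB}$ (both indices consecutive), but $\llrr{c_\star nAB}$ is \emph{not} of the form $\llrr{ijAB}$ with $i\lessdot j$ (since $d_\star$ sits between $c_\star$ and $n$), so you would have to redo the one-transposition sign computation there to get the $-1$. Third, your cited reference to an unnamed ``Lemma 10.\ldots'' of \cite{even2023cluster} should be \cite[Lemma 12.4]{even2023cluster}, which the paper cites. None of these affect the soundness of the plan.
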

\begin{proof}
The proofs for the general cases of three types of twistor coordinates from the two items are similar, and are identical to the corresponding proofs in \cite{even2021amplituhedron,even2023cluster}, see, for example, \cite[Lemma 12.4]{even2023cluster}, and rely on \cref{lem:bcfw_after_fl},~\cref{it:bcfw_after_fl_chords},~\cref{it:bcfw_after_fl_yellow},~\cref{it:bcfw_after_fl_reds}. We will therefore only prove the general case of second item for the yellow chord, leaving the details of the other cases to the reader.

Expand each twistor $\llrr{i_\star j_\star AB}$ as in \eqref{eq:cauchy-binet_loop}. We obtain a sum of products of Pl\"ucker coordinate of the matrix $C+D$ where $(C\vdots D) \in S_\D$, and the matrix $Z$. The sum is over all Pl\"ucker coordinates of $C+D$ which are non zero and intersect the set $\{c_\star,d_\star,n\}$ precisely in the singleton $\{c_\star,d_\star,n\}\setminus\{i_\star,j_\star\}.$ \cref{lem:bcfw_after_fl},~\cref{it:bcfw_after_fl_yellow} guarantees that this summation is non empty. 
\end{proof}
\begin{nn}\label{nn:bdry_twistors}We refer to the twistors appearing \cref{lem:sign_bdry_twistors} as \emph{boundary twistors}. This name is motivated by \cref{prop:CD bdries}. The boundary twistors which are defined using four indices are called \emph{tree boundary twistors}, and those defined using two indices are the \emph{loop boundary twistors}.

We refer to the twistors appearing in the first item of \cref{lem:sign_of_chord_twistors} as \emph{(BCFW) chord's twistors}. We refer to the $\star-$labeled  twistors appearing in the second item as the \emph{yellow chord's twistors}. The motivation for these names is explained in \cref{cor:5_3}.
\end{nn}

The twistors of \cref{lem:sign_of_chord_twistors} are closely related to the domino variables, as we will now see.
\begin{lemma}\label{lem:5_3}
\begin{enumerate}
    \item\label{it:5_3_BCFW}
    	Let $k \geq 1$ and let $Y \in \Mat_{k \times (k+4)}$ and $Z \in \Mat_{5 \times (k+4)}$,
	with row vectors $Y_1,\dots,Y_k$, and $Z_1,\dots,Z_5$, respectively. Define 
	$$u:= \llrr{2345} Z_1 - \llrr{1345} Z_2 + \llrr{1245} Z_3 - \llrr{1235} Z_4 + \llrr{1234} Z_5.$$
	Suppose at least one of the $5$ twistor coordinates
	$\llrr{2345}, \llrr{1345}, \llrr{1245}, \llrr{1235}, \llrr{1234} $ is nonzero. Then $\Span(Y_1,\ldots, Y_k) \cap \Span(Z_1,\ldots, Z_{5}) = \Span(u)$, and in particular is the trivial vector space if and only if $u=0$. \item\label{it:5_3_FL}
    Let $k \geq 1$ and let $(Y\vdots L) \in \Mat_{k \times (k+4);1}$ and $Z \in \Mat_{3 \times (k+4)}$,
	with row vectors $Y_1,\dots,Y_k,L_1,L_2$, and $Z_1,Z_2,Z_3$, respectively. Define 
	$$u:= \llrr{23AB} Z_1 - \llrr{13AB} Z_2 + \llrr{12AB} Z_3.$$
	Suppose at least one of the $3$ twistor coordinates
	$\llrr{23AB}, \llrr{13AB}, \llrr{12AB}$ is nonzero. Then $\Span(Y_1,\ldots, Y_k,L_1,L_2) \cap \Span(Z_1,Z_2, Z_{3}) = \Span(u)$, and in particular is the trivial vector space if and only if $u=0$. 
\end{enumerate}
\end{lemma}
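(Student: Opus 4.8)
The two items are formally parallel, so I would prove Item~\ref{it:5_3_BCFW} in detail and then indicate the minor modifications for Item~\ref{it:5_3_FL}. The key observation is a Cramer/cofactor identity: for any vectors $Y_1,\dots,Y_k,Z_1,\dots,Z_5\in\R^{k+4}$, expanding the $(k+5)\times(k+4)$ situation by cofactors along the $Z$-rows shows that the vector
\[
u = \sum_{j=1}^{5}(-1)^{j-1}\llrr{Z_1\cdots\widehat{Z_j}\cdots Z_5}\,Z_j
\]
lies in $\Span(Z_1,\dots,Z_5)$ tautologically, and one checks directly that $u$ is also orthogonal, in the appropriate determinantal sense, to every $Y_i$: concretely, for each fixed $i$ the number $\det(Y_1,\dots,Y_k,\text{(any }k+4\text{-subset of rows)})\cdot(\text{something})$ telescopes, so that $u$ can equivalently be written as $\sum_i(\pm)\det(Y_1,\dots,\widehat{Y_i},\dots,Y_k,Z_1,\dots,Z_5,\ldots)Y_i$ — i.e. $u\in\Span(Y_1,\dots,Y_k)$ as well. (This is exactly the standard ``$(k+5)\times(k+4)$ Plücker syzygy'' computation; I would phrase it via the fact that a $(k+5)$-tuple of vectors in a $(k+4)$-space has a one-dimensional space of linear relations, whose coefficients are the signed maximal minors.) Hence $u\in\Span(Y_1,\dots,Y_k)\cap\Span(Z_1,\dots,Z_5)=:I$.

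Next I would argue $\dim I\le 1$. Since $Y:=\Span(Y_1,\dots,Y_k)$ has dimension $\le k$ and $Z:=\Span(Z_1,\dots,Z_5)$ has dimension $\le 5$ inside $\R^{k+4}$, the dimension formula gives $\dim(Y\cap Z)=\dim Y+\dim Z-\dim(Y+Z)\ge (k)+(5)-(k+4)=1$ only when $Y,Z$ are both full-dimensional and in general position; but we do not want a lower bound, we want the upper bound $\dim I\le 1$. For that I would use the hypothesis that at least one of the five twistors $\llrr{2345},\dots,\llrr{1234}$ is nonzero: this says precisely that $\dim(Y+Z)\ge k+4$, i.e. $Y+Z=\R^{k+4}$ (because that twistor is, up to the Plücker embedding, a maximal minor of the $(k+4)\times(k+4)$ matrix with rows $Y_1,\dots,Y_k$ and four of the $Z_j$'s). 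Then $\dim I=\dim Y+\dim Z-(k+4)\le k+5-(k+4)=1$. Combining with the previous paragraph: if $u\ne 0$ then $I=\Span(u)$, and if $u=0$ then $I$ is a subspace of dimension $\le 1$ containing $0$; to conclude $I=\{0\}$ in the latter case I would note that any nonzero $v\in I$ would itself be expressible, by the same cofactor identity applied with $v$ in place of one of the $Y_i$ or by re-running the syzygy argument, forcing $v$ proportional to $u=0$, a contradiction. Equivalently: the syzygy coefficients are unique up to scale, so the intersection line is spanned by $u$ whenever the ambient span is everything, and $u=0$ then forces $I=0$.

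For Item~\ref{it:5_3_FL} the argument is identical with $k+5$ replaced by $k+5$ again but now the ``$Y$-part'' is the $(k+2)$-dimensional span $\Span(Y_1,\dots,Y_k,L_1,L_2)$ and the ``$Z$-part'' is the $3$-dimensional $\Span(Z_1,Z_2,Z_3)$; here $u=\llrr{23AB}Z_1-\llrr{13AB}Z_2+\llrr{12AB}Z_3$ is the signed-minor syzygy vector, the hypothesis that one of $\llrr{23AB},\llrr{13AB},\llrr{12AB}$ is nonzero says the total span is $\R^{k+4}$, and $(k+2)+3-(k+4)=1$ gives $\dim I\le 1$. I would also remark that by the Cauchy--Binet formula \eqref{eq:cauchy-binet_loop} these loopy twistors are the relevant maximal minors of $(Y+L)$ paired against $Z$, which is what makes the hypothesis natural in the amplituhedron context.

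\textbf{Main obstacle.} The only genuinely nontrivial point is the cofactor identity showing $u\in\Span(Y_1,\dots,Y_k)$ — i.e. that the vector built from the $Z$-twistors already lands in the $Y$-span. I expect this to follow cleanly from the statement that $k+5$ vectors in a $(k+4)$-dimensional space satisfy a unique-up-to-scalar linear relation whose coefficients are the signed $(k+4)\times(k+4)$ minors (a standard determinant expansion / Cramer's rule applied to the matrix with rows $Y_1,\dots,Y_k,Z_1,\dots,Z_5$), but writing it out carefully with the correct signs is the part that needs attention; everything else is the dimension count, which is routine. This is, in any case, the direct analogue of \cite[Lemma~5.3]{even2021amplituhedron}, so I would model the write-up on that proof.
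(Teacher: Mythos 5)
Your proof is correct and is the standard cofactor--syzygy argument. The paper's own ``proof'' of Item (1) is just a citation to \cite[Lemma 12.13]{even2023cluster}, with Item (2) noted to follow identically by reinterpreting $\llrr{ijAB}$; the argument there is the same one you give (and which you correctly identify as the analogue of \cite[Lemma 5.3]{even2021amplituhedron}), so your write-up simply makes explicit what the paper outsources.
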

\begin{proof}
The first item is \cite[Lemma 12.13]{even2023cluster}.  
The second is proven in an identical manner, keeping in mind that we interpret $\llrr{ijAB}$ as the determinant of the square matrix made whose rows are $Y_1,\ldots, Y_k,L_1,L_2,Z_i,Z_j.$
\end{proof}
\begin{cor}\label{cor:5_3}
\begin{enumerate}
    \item\label{it:cor5_3_BCFW}Let $S=S_L\bcfw S_R$ be as in \cref{lem:sign_of_chord_twistors},~\cref{it:chord_BCFW}. Then for every \[U\vdots V=\mbcfw(U_L\vdots V_L,[\alpha:\ldots:\varepsilon],U_R\vdots V_R)\in S\] one can recover the BCFW parameters $\alpha,\ldots,\varepsilon$ from $\tZ (U\vdots V)$ as
	\[(-1)^{k_R}\llrr{bcdn}, \quad (-1)^{k_R+1}\llrr{acdn}, \quad \llrr{abdn},\quad -\llrr{abcn}, \quad \llrr{abcd},\]
 respectively, uniquely up to a multiplication by a positive scalar.
    \item\label{it:cor5_3_FL}
    Let $S=\FL(S')$ as in \cref{lem:sign_of_chord_twistors},~\cref{it:chord_FL}.
    Then for every $U\vdots V\in S$ one can recover the forward limit parameters $\gamma_\star,\delta_\star,\varepsilon_\star$  as
    equal
 \[\llrr{d_\star nAB}, \quad -\llrr{c_\star nAB}, \quad \llrr{c_\star d_\star AB},\]
  respectively, up to a multiplication by a positive scalar. 
\end{enumerate}
\end{cor}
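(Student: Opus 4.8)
The plan is to deduce \cref{cor:5_3} directly from \cref{lem:5_3} together with the strong-sign statements of \cref{lem:sign_of_chord_twistors} and the explicit form of the BCFW and forward-limit maps. The two items are completely parallel, so I would write one argument and indicate the modifications for the other.

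\textbf{Item \cref{it:cor5_3_BCFW}.} Fix $U\vdots V=\mbcfw(U_L\vdots V_L,[\alpha:\dots:\varepsilon],U_R\vdots V_R)\in S$ and a positive $Z$. Pick a BCFW matrix representative $C\vdots D$ as in \cref{def:bcfw-map}, so that one of its rows is exactly the vector $v=\alpha \ee_a+\beta \ee_b+(-1)^{k_R}\gamma \ee_c+(-1)^{k_R}\delta \ee_d+(-1)^{k_R}\varepsilon \ee_n$, and the remaining rows of $C$ (together with $D$) are supported on $N_L\setminus\{n\}$ and $N_R$ in the schematic displayed after \cref{def:bcfw-map}. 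Apply $\tZ$: then $Y=CZ$ contains the row $vZ=\alpha Z_a+\beta Z_b+(-1)^{k_R}\gamma Z_c+(-1)^{k_R}\delta Z_d+(-1)^{k_R}\varepsilon Z_n$. Now I would invoke \cref{lem:4biden},~\cref{it:4bidenC}: the row $v$ spans $U\cap\R^{\{a,b,c,d,n\}}$ and is the unique such vector up to scaling, and the left/right rows span complementary subspaces; consequently, after applying $Z$ (which is injective on these nonnegative subspaces by Karp's result, as used in \cref{thm:ampli_well_def}), the intersection $\Span(Y_1,\dots,Y_k)\cap\Span(Z_a,Z_b,Z_c,Z_d,Z_n)$ is exactly the line spanned by $vZ$. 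On the other hand \cref{lem:5_3},~\cref{it:5_3_BCFW} (with $1,\dots,5$ replaced by $a,b,c,d,n$) identifies this intersection with the line spanned by
\[
u=\llrr{bcdn}Z_a-\llrr{acdn}Z_b+\llrr{abdn}Z_c-\llrr{abcn}Z_d+\llrr{abcd}Z_n,
\]
provided at least one of the five chord twistors is nonzero — which holds because by \cref{lem:sign_of_chord_twistors},~\cref{it:chord_BCFW} they all have a strong nonzero sign on $S$ (more precisely, the strong positivity gives that $\llrr{abcd}$, say, is a sum with positive coefficients of products of nonzero Plücker coordinates of $C+D$ or $C$ and minors of $Z$, and the BCFW form guarantees at least one such term survives; alternatively $\llrr{abcd}\neq 0$ because $Y$ is $k$-dimensional and generic). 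Since $\{Z_a,Z_b,Z_c,Z_d,Z_n\}$ are linearly independent (positivity of $Z$), comparing the two expansions of the same line $vZ=\lambda u$ forces the coefficient vectors to be proportional:
\[
(\alpha,\beta,(-1)^{k_R}\gamma,(-1)^{k_R}\delta,(-1)^{k_R}\varepsilon)\ \propto\ (\llrr{bcdn},-\llrr{acdn},\llrr{abdn},-\llrr{abcn},\llrr{abcd}),
\]
i.e.\ $\alpha,\beta,\gamma,\delta,\varepsilon$ are recovered, up to a common scalar, as $(-1)^{k_R}\llrr{bcdn}$, $(-1)^{k_R+1}\llrr{acdn}$, $\llrr{abdn}$, $-\llrr{abcn}$, $\llrr{abcd}$. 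Finally, since $\alpha,\dots,\varepsilon>0$ (they come from a positive element of $\Gr_{1,5}^{>}$) and these five twistors all have strong sign $+1$ by \cref{lem:sign_of_chord_twistors},~\cref{it:chord_BCFW}, the proportionality constant is positive, which is exactly the claim.

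\textbf{Item \cref{it:cor5_3_FL}.} Here I would argue identically using \cref{lem:5_3},~\cref{it:5_3_FL} with indices $c_\star,d_\star$ in place of $1,2$ and the loop-pair $Y\vdots L$. The role of the vector $v$ is played by the yellow row $D_\star$ of the domino/BCFW form, which by \cref{obs:bcfw_form} and the forward-limit step in \cref{subsec:BCFW_form} equals $\hgam_\star\ee_{c_\star}+\hdel_\star\ee_{d_\star}+\halp_\star\ee_A+\heps_\star\ee_n$ — but note that in the image under $\check Z$ (equivalently, after passing to $\tZ(U\vdots V)$ and using the permissibility identities of \cref{rmk:justification}), the $A$-direction is absorbed into the $\tilde\varphi_{AB}(Z)$ columns, so the relevant intersection is $\Span(Y_1,\dots,Y_k,L_1,L_2)\cap\Span(Z_{c_\star},Z_{d_\star},Z_n)$, which \cref{lem:4biden},~\cref{it:4bidenD} shows is one-dimensional, spanned by the image of $D_\star$ with $(c_\star,d_\star,n)$-coordinates $(\gamma_\star,\delta_\star,\varepsilon_\star)$ up to scale. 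Matching with the line spanned by $u=\llrr{d_\star d_\star\, \cdot}$… more precisely $u=\llrr{d_\star nAB}Z_{c_\star}-\llrr{c_\star nAB}Z_{d_\star}+\llrr{c_\star d_\star AB}Z_n$ (nonzero by \cref{lem:sign_of_chord_twistors},~\cref{it:chord_FL}) yields $\gamma_\star,\delta_\star,\varepsilon_\star\propto \llrr{d_\star nAB},-\llrr{c_\star nAB},\llrr{c_\star d_\star AB}$, and positivity of the constant again follows from the strong positivity of these three yellow-chord twistors on $S$ together with $\gamma_\star,\delta_\star,\varepsilon_\star>0$.

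\textbf{Main obstacle.} The genuinely delicate point is the first half of each argument: certifying that the intersection of the $\tZ$-image of the span of $U$ (resp.\ $U\vdots V$) with the span of the five (resp.\ three) relevant $Z$-rows is \emph{exactly} the line spanned by the image of $v$ (resp.\ $D_\star$), rather than something larger or the image of a different vector. This is where \cref{lem:4biden} and the coindependence facts from \cref{lem:bcfw_after_fl} and \cref{cor:matroid_after_fl} do the real work, and where one must be careful that applying the positive matrix $Z$ does not create or destroy incidences — this is handled by Karp's injectivity-on-nonnegative-subspaces result invoked in \cref{thm:ampli_well_def}, but it needs to be applied to the correct subspaces ($U$ and $U+V$, and the coordinate subspaces $\R^{\{a,b,c,d,n\}}$ etc.). Once that is in place, the comparison of coefficient vectors and the sign bookkeeping are routine.
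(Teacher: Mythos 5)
Your argument reaches the right conclusion and uses the same two ingredients the paper does---\cref{lem:5_3} and \cref{lem:sign_of_chord_twistors}---but you take a detour that the paper avoids, and that detour, as written, has a gap. You try to establish that $\Span(Y_1,\dots,Y_k)\cap\Span(Z_a,Z_b,Z_c,Z_d,Z_n)$ is \emph{exactly} the line spanned by $vZ$, deducing this from the statement (\cref{lem:4biden}) that $U\cap\R^{\{a,b,c,d,n\}}=\Span(v)$ by appealing to Karp's injectivity result. But Karp's theorem says only that $Z$ is injective on a nonnegative subspace such as $U$; it does not say that $Z$ preserves the intersection of $U$ with the coordinate subspace $\R^{\{a,b,c,d,n\}}$ (which is not nonnegative), and in general $UZ\cap\Span(Z_a,\dots,Z_n)$ could a priori be larger than $(U\cap\R^{\{a,b,c,d,n\}})Z$. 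So the step you flag as the ``main obstacle'' is not actually closed by the tools you cite.

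The fix is that this step is unnecessary, and the paper's proof skips it. One only needs that $vZ=\alpha Z_a+\cdots+(-1)^{k_R}\varepsilon Z_n$ is a \emph{nonzero} vector lying in both $UZ$ and $\Span(Z_a,\dots,Z_n)$; nonvanishing follows because the amplituhedron map does not lose rank on nonnegative subspaces (Karp, as used in \cref{thm:ampli_well_def}). Since by \cref{lem:sign_of_chord_twistors},~\cref{it:chord_BCFW} the relevant twistors do not all vanish, \cref{lem:5_3},~\cref{it:5_3_BCFW} says the intersection equals $\Span(u)$; a nonzero $vZ$ in $\Span(u)$ forces $u\neq0$ and $vZ\propto u$, which is exactly the claimed proportionality of the coefficient vectors. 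The sign bookkeeping (positivity of the scalar) is then immediate from the strong signs in \cref{lem:sign_of_chord_twistors}, as you say. The same remark applies to your item~(2): the row $D_\star$ after the forward limit is supported only on $\{c_\star,d_\star,n\}$, so $D_\star Z$ is a nonzero vector in $(Y+L)\cap\Span(Z_{c_\star},Z_{d_\star},Z_n)$ directly, and you do not need the discussion of the $A$-column being ``absorbed'' or \cref{lem:4biden},~\cref{it:4bidenD}.
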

\begin{proof}
The two items follow immediately from the corresponding items in \cref{lem:sign_of_chord_twistors},~\cref{lem:5_3}. In the case of the BCFW product, let $v\in U$ be as in the notations of \cref{def:bcfw-map}. Then \[vZ=\alpha Z_{a}+\ldots+\varepsilon Z_{n},\] 
is a non zero vector, since the amplituhedron map does not lose rank .On the other hand, it belongs to both $\Span(Z_a,Z_b,Z_c,Z_d,Z_n)$ and to $UZ.$ Using \cref{lem:5_3},~\cref{it:5_3_BCFW}, whose conditions are verified in  \cref{lem:sign_of_chord_twistors},~\cref{it:chord_BCFW}, we have equality of projective vectors
\[[\alpha:\beta:\gamma:\delta:\varepsilon]=[(-1)^{k_R}\llrr{bcdn}:(-1)^{k_R+1}\llrr{acdn}:\llrr{abdn}:-\llrr{abcn}:\llrr{abcd}].\]
The proof of the case of the forward limit is identical.
\end{proof}
\begin{lemma}\label{cor:epsilon_signs}
Let $\D\in\CD_{n,k}^1$ be a chord diagram. Then the following twistors are strongly positive on the image of $S_\D.$
\begin{enumerate}
\item\label{it:cor_eps_of_most_twistors} 
 $\llrr{a_ib_ic_id_i}$, where $\D_i$ be a black or purple chord.
 \item\label{it:cor_top_yellow} $\llrr{c_\star d_\star AB}$.

\item\label{it:cor_eps_red} $\llrr{a_ib_iAB}$ for a red chord $\D_i.$
\end{enumerate}
\end{lemma}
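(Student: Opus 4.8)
The three statements are of the same flavor: each asserts that a specific twistor coordinate of $\tZ(S_\D)$ is strongly positive, i.e. when expanded via the Cauchy--Binet formula \cref{lem:Cauchy-Binet} the sum is nonempty and all its summands carry the same sign. The strategy is to induct on the lexicographically ordered triple $(\ell,k,n)$, following the recursive structure of the chord diagram: $\D=\pre_p\D'$, $\D=\D_L\bcfw\D_R$, or $\D=\FL(\D')$. For each generator $\llrr{\cdot}$ in the statement, I would either verify that it is one of the ``chord's twistors'' or ``yellow chord's twistors'' whose strong positivity was already established in \cref{lem:sign_of_chord_twistors} (this handles the twistor that gets introduced at the step in which the relevant chord is born), or else track how the twistor is affected — via the promotion maps of \cref{rmk:vector_promotion} — when passing from the subdiagram(s) to $\D$, and use the induction hypothesis together with strong positivity of the chord's twistors of the current step.

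\textbf{Step 1 (reduce via Cauchy--Binet and coindependence).} For \cref{it:cor_eps_of_most_twistors}, expand $\llrr{a_ib_ic_id_i}$ using \eqref{eq:cauchy-binet}: the summands are $\lr{C}_J\lr{Z}_{J,a_i,b_i,c_i,d_i}$ over $J\in\binom{[n]}{k}$ disjoint from $\{a_i,b_i,c_i,d_i\}$. All $\lr{C}_J$ appearing are nonzero positroid Pl\"ucker coordinates, so each summand has the sign of $\lr{Z}_{J,a_i,b_i,c_i,d_i}$; since $a_i\lessdot b_i$, $c_i\lessdot d_i$, all these minors of $Z$ have a common sign (this is the content of the Vandermonde-type positivity already used in \cref{lem:sign_bdry_twistors}). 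Nonemptiness of the sum is exactly the statement that $\{a_i,b_i,c_i,d_i\}$ is coindependent for $\PosC_\D$, which is precisely \cref{lem:bcfw_after_fl},~\cref{it:bcfw_after_fl_chords} for black or purple chords. Similarly \cref{it:cor_top_yellow} uses \eqref{eq:cauchy-binet_loop} with $\{c_\star,d_\star\}$ coindependent for $\PosD_\D$ (\cref{lem:bcfw_after_fl},~\cref{it:bcfw_after_fl_yellow}), and \cref{it:cor_eps_red} uses \eqref{eq:cauchy-binet_loop} with $\{a_i,b_i\}$ coindependent for $\PosD_\D$ (\cref{lem:bcfw_after_fl},~\cref{it:bcfw_after_fl_reds}). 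In each case all the $Z$-minors involved share a sign because the pairs/quadruples in question are built from consecutive markers. So in fact the lemma follows directly from \cref{lem:bcfw_after_fl} together with the sign analysis of $Z$-minors, without any genuine induction --- the real work has already been done in establishing the coindependence statements.

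\textbf{Step 2 (assemble the sign).} It remains only to check that the common sign of the $Z$-minors is $+1$ (not $-1$) in each case. Here one uses that $a_i\lessdot b_i$ and $c_i\lessdot d_i$ with $b_i\le c_i$, so the columns $a_i,b_i,c_i,d_i$ of $Z$ appear in increasing order and the minor $\lr{Z}_{J,a_i,b_i,c_i,d_i}$ (with $J$'s entries all less than or greater than these as dictated by the support structure of the positroid's bases) has the sign fixed by the total positivity of $Z$; this is the same computation as in the proof of \cref{lem:sign_of_chord_twistors},~\cref{it:chord_FL} and \cref{lem:sign_bdry_twistors}. For \cref{it:cor_eps_red} and \cref{it:cor_top_yellow}, the two ``hidden'' columns $A,B$ sit at the end of the marker set $\{\ldots,c_\star,d_\star,A,B,n\}$ used in the $\hat{}/\check{}$ formalism, so again the ordering works out and the sign is $+1$; one must be slightly careful to invoke \cref{rmk:justification} / \cref{obs:commutative_diag} so that the permissible twistor $\llrr{a_ib_iAB}$, $\llrr{c_\star d_\star AB}$ is computed in $\check\Ampl$ where the Cauchy--Binet expansion \eqref{eq:cauchy-binet_loop} applies verbatim to $C+D$.

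\textbf{Main obstacle.} The only subtlety I anticipate is bookkeeping the \emph{sign} rather than merely the strong-positivity-up-to-sign: one needs to confirm that no extra factor of $(-1)^{k}$ or $(-1)^{k_R}$ creeps in, unlike in \cref{lem:sign_bdry_twistors} where $\llrr{ijAB}$ with $i=\max N, j=\min N$ picks up $(-1)^k$. Because all indices involved in \cref{it:cor_eps_of_most_twistors}--\cref{it:cor_eps_red} are ``interior'' consecutive markers (none wraps around from $n$ to $1$), this wrap-around sign does not occur, and the common sign is genuinely $+1$. I would spell this out by fixing, for each case, an explicit basis $J$ of the relevant positroid guaranteed by \cref{lem:bcfw_after_fl} and checking the sign of the single corresponding $Z$-minor, which determines the sign of the whole sum.
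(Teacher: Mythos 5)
Your proof is correct and follows essentially the same route as the paper: expand each twistor via Cauchy--Binet, observe that all summands share the sign fixed by \cref{lem:sign_bdry_twistors} (since $a_i\lessdot b_i$, $c_i\lessdot d_i$, $c_\star\lessdot d_\star$ involve only consecutive markers with no wrap-around), and obtain nonemptiness of the sum from the coindependence statements in \cref{lem:bcfw_after_fl}. The inductive scaffold you set up in the Plan is unnecessary — as you yourself note midway through Step~1 — because the relevant coindependence facts about $\PosC_\D$ and $\PosD_\D$ have already been proved once and for all in \cref{lem:bcfw_after_fl}.
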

\begin{proof}
For the proof for the first item, by \cref{lem:sign_bdry_twistors} it is enough to show that the expansion  \eqref{eq:cauchy-binet} contains at least one term. This follows from \cref{lem:bcfw_after_fl},~\cref{it:bcfw_after_fl_chords}.

For the second and third 
items, we need to show that the expansion \eqref{eq:cauchy-binet_loop} has contains at least one term. This follows from
\cref{lem:bcfw_after_fl},~\cref{it:bcfw_after_fl_yellow} and \cref{it:bcfw_after_fl_reds}, respectively.
\end{proof}
\subsection{Signs under promotion}
The next three propositions show that signs of functionaries behave well under promotions. Their proofs are postponed to the next subsection.
\begin{prop}\label{prop:sign_under_pre_promotion}
Assume $F$ is a functionary which has a fixed sign (strong sign, weak sign) $s$ on the image of $S\subseteq\Gr_{k,N\setminus\{\p\};\ell}^{\geq},$ where $p$ is the penultimate index in $N$. Then $\Psi^{\pre}(F)$ has a fixed (strong, weak) sign $s$ on $\pre_pS.$ 
\end{prop}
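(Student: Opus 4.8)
\textbf{Proof plan for \cref{prop:sign_under_pre_promotion}.}
The plan is to reduce the statement about $\Psi^{\pre}(F)$ on $\pre_p S$ to the statement about $F$ on $S$ by exploiting the explicit effect of the $\pre_p$ operation on Pl\"ucker coordinates (\cref{lem:effect_ops}) and on twistors via the Cauchy--Binet expansion (\cref{lem:Cauchy-Binet}). The key geometric input is that for $(C\vdots D)\in S$, the twistor coordinates $\llrr{i_1i_2i_3i_4}$ and $\llrr{i_1i_2AB}$ of $\tZ(\pre_p(C\vdots D))$ computed with respect to a positive $n$-point matrix $Z'$ (on the index set $N$) are, up to relabeling and up to nonzero-sign factors coming from minors of $Z'$, exactly the twistor coordinates of $\tZ(C\vdots D)$ computed with respect to a suitable positive matrix $Z$ on $N\setminus\{p\}$. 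Concretely, since $\pre_p$ inserts a zero column in position $p$, every Pl\"ucker coordinate $\lr{\pre_p(C\vdots D)}_I$ with $p\in I$ vanishes, and for $p\notin I$ it equals $\lr{C\vdots D}_I$; plugging this into \eqref{eq:cauchy-binet} and \eqref{eq:cauchy-binet_loop} shows that $\llrr{i_1\ldots i_4}$ evaluated on $\tZ(\pre_p(C\vdots D))$, for indices $i_j\neq p$, is a sum over $J\subseteq N\setminus\{p\}$ disjoint from $\{i_1,\ldots,i_4\}$ of $\lr{C\vdots D}_J\lr{Z'}_{J,i_1,\ldots,i_4}$, which is the same expansion defining the corresponding twistor of $\tZ(C\vdots D)$ with $Z$ the submatrix of $Z'$ on rows $N\setminus\{p\}$.

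The main steps, in order, would be: (1) Recall that $\Psi^{\pre}$ is defined on generators by $\llrr{i_1\ldots i_4}\mapsto\llrr{i_1\ldots i_4}$ (\cref{def:pre_promotion}), so as a rational function in twistors $\Psi^{\pre}(F)$ has literally the same expression as $F$, only reinterpreted over the larger index set. (2) Fix a positive $Z'\in\Mat_{n,k+4}^{>}$ on $N$ and let $Z$ be its restriction to the rows indexed $N\setminus\{p\}$; note $Z$ is still positive, since deleting a row of a positive matrix keeps all remaining maximal minors positive. (3) Show, via the Cauchy--Binet argument above, that for every permissible twistor appearing in the chosen representation of $F$, its value on $\tZ(\pre_p(C\vdots D))$ with respect to $Z'$ equals its value on $\tZ(C\vdots D)$ with respect to $Z$ --- here one must check none of the relevant twistors use the index $p$, which is automatic because $F$ is a functionary on $N\setminus\{p\}$ and $\Psi^{\pre}$ does not introduce $p$. (4) Conclude that $\Psi^{\pre}(F)(\tZ(\pre_p(C\vdots D)),Z')=F(\tZ(C\vdots D),Z)$ as real numbers, for every $(C\vdots D)\in S$ and every positive $Z'$. (5) For the fixed-sign and weak-sign cases, this identity of values immediately transfers the sign $s$. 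For the strong-sign case, observe additionally that the representation of $F$ as a ratio of polynomials in Pl\"ucker coordinates of $C,D$ and minors of $Z$, all coefficients of sign $s$, pulls back verbatim to a representation of $\Psi^{\pre}(F)$ in Pl\"ucker coordinates of $\pre_p C,\pre_p D$ (which are among those of $C,D$) and minors of $Z'$ (of which the minors of $Z$ are a subset, up to sign-preserving identification), so the coefficient signs are unchanged.

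The only mild subtlety --- and the step I expect to require the most care --- is the bookkeeping in step (3) and in the strong-sign part of step (5): one must make sure that the identification of minors of $Z'$ with minors of $Z$ does not secretly introduce a sign, and that the sum defining each twistor of $\tZ(\pre_p(C\vdots D))$ really is termwise equal (not merely equal up to a global sign) to the sum defining the corresponding twistor of $\tZ(C\vdots D)$. Since $\pre_p$ deletes no rows of $Z'$-relevant index but only forces $\lr{\cdot}_I=0$ whenever $p\in I$, the surviving terms are indexed by exactly the same sets $J$, paired with the identical minors $\lr{Z'}_{J,i_1,\ldots,i_4}=\lr{Z}_{J,i_1,\ldots,i_4}$, so no sign arises; this is a routine but necessary verification. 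Everything else is a direct transfer of the definitions, so the proof is short once this compatibility is recorded.
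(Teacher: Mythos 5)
Your proof is correct and is essentially the argument the paper has in mind (the paper itself only cites the analogous $\pre$-case of \cite[Theorem 12.6]{even2023cluster} rather than spelling it out). The core observation — that for any positive $Z'$ on $N$, putting $Z=\rem_p Z'$ gives a positive matrix on $N\setminus\{p\}$ with $\pre_p(C\vdots D)\cdot Z'=(C\vdots D)\cdot Z$, so every permissible twistor of $\widetilde{Z'}(\pre_p(C\vdots D))$ with indices in $N\setminus\{p\}$ literally equals the corresponding twistor of $\tZ(C\vdots D)$ — is exactly right, and since $\Psi^{\pre}$ is the identity on generators, the sign (fixed, weak, or strong, with the Pl\"ucker-level bookkeeping you describe) transfers immediately; one could even skip the Cauchy--Binet expansion entirely and use this matrix identity directly, but your step (3) is a valid alternative route to the same fact.
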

The proof of this claim is simple, and is identical to the proof of the '$\pre$' case in \cite[Theorem 12.6]{even2023cluster}. We include, of course, the case that the sign is $0.$

\begin{prop}\label{prop:sign_under_FL_promotion}
Recall \cref{def:pre_perm}.
Let $N=\{1,2,\ldots,n-1,A,B,n\},$ and let $S\subseteq\Gr_{k+1,N}^{\geq}$ be a positroid cell satisfying the assumptions of \cref{condition:FL}.
Let $F=\frac{f}{g},$ be a pure pre-permissible functionary which has a fixed (weak) sign $s\in\{-1,0,1\}$ on the image of $S$, where $f,g$ are polynomial functionaries. Then either $\Psi^{\FL}(F)$ has a fixed (weak) sign $(-1)^{\deg_A(F)+\deg_B(F)+\deg_n(F)}s$ on $\FL(S),$ or at least one of $\Psi^\FL(f),~\Psi^\FL(g)$ is identically $0.$
\end{prop}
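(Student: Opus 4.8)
The strategy is to reduce the claim about $\Psi^{\FL}$ to the concrete geometric description of the forward limit on the $B$-amplituhedron side, using \cref{rmk:vector_promotion} and \cref{obs:pre_permissible}. First I would pass to the $B$-amplituhedron picture via the isomorphism $\psi_Z$ of \cref{lem:B_amp_A_amp_dual} and its $\check{}$-version \cref{obs:_check_B_amp_A_amp_dual}, so that twistors become permissible Pl\"ucker coordinates and the promotion $\Psi^{\FL}$ becomes the substitution on the column vectors $z_i$ described in \cref{rmk:vector_promotion}, namely $z_A\mapsto z_A-\frac{\lr{cdAn}}{\lr{cdBn}}z_B$ and $z_d\mapsto z_d-\frac{\lr{dABn}}{\lr{cABn}}z_c$. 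By \eqref{eq:promotion_for_coords_prom} applied to the forward limit map, for a point $U\vdots V\in S$ with image $\FL(U\vdots V)$, and any representative, the value $\Psi^{\FL}(F)$ evaluated at $\tZ(\FL(U\vdots V))$ equals $F$ evaluated at a point that is (up to the explicit rational factors above) $\tZ$ applied to the preimage $U\vdots V$ of $\FL(U\vdots V)$ under $\FL$; here I use that $\FL(S)$ is traced out exactly by the $\FL$-images of points of $S$ (\cref{nn:FL_set}).

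Next, the pre-permissibility hypothesis is exactly what is needed to control the denominators introduced by the substitution. By \cref{obs:pre_permissible} (third item), since $F=f/g$ is pure and pre-permissible of level $0$, the functionary $\Psi^{\FL}(F)$ is permissible, i.e.\ it descends to a genuine function on $\CC\Ampl_{k,N\setminus\{A,B\},4}^1$; concretely, writing $F=f/g$ with $f,g$ polynomial functionaries, the substitution turns $f\mapsto \Psi^\FL(f)$ and $g\mapsto\Psi^\FL(g)$ where each of these is, after clearing the common $\lr{cdBn}$-type and $\lr{cABn}$-type powers (which cancel because $\deg_A(f)=\deg_A(g)$ and similarly the $B$-constraint), a permissible polynomial functionary on the target. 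So $\Psi^{\FL}(F)=\Psi^\FL(f)/\Psi^\FL(g)$ as functions on the amplituhedron, provided $\Psi^\FL(g)\not\equiv 0$ — which is exactly the escape clause in the statement.

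The sign bookkeeping is the heart of the matter. On $\FL(S)$ pick a point $Y\vdots L=\tZ(\FL(U\vdots V))$; its twistors and loopy twistors are, by \cref{cor:5_3}\cref{it:cor5_3_FL} and \cref{lem:sign_of_chord_twistors}\cref{it:chord_FL}, all of definite (strong) sign, and in particular the denominators $\llrr{cdBn}=\llrr{dnAB}$-type and $\llrr{cABn}=\llrr{cnAB}$-type quantities appearing in the substitution have the strong signs recorded there (they are, up to sign, the forward-limit parameters $\delta_\star,\gamma_\star,\varepsilon_\star$, which are positive on the interior and have a fixed sign on the closure). Then I would track how the substitution $z_A\mapsto z_A - \frac{\lr{cdAn}}{\lr{cdBn}}z_B$, $z_d\mapsto z_d-\frac{\lr{dABn}}{\lr{cABn}}z_c$ affects the value of $F$: plugging these into $f$ and $g$ and recalling that $F=f/g$ has fixed sign $s$ on the image of $S$, the value $F$ takes at the promoted point differs from the value it takes at a genuine point of $\tZ(S)$ only by an overall power of those denominator-twistors and by the multilinearity of determinants in the columns $z_A,z_d,z_n$. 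Carefully counting the multidegrees $\deg_A(F),\deg_B(F),\deg_n(F)$ (using purity, so these are well-defined, and the identity $\deg_A+\deg_B+\deg_d+\deg_n+\deg_{\text{rest}}=4\deg F$), the net sign picked up is $(-1)^{\deg_A(F)+\deg_B(F)+\deg_n(F)}$; this is where the factor in the statement comes from — it records the parity with which the indices $A,B,n$ (whose associated columns are the ones rearranged by $\FL$, together with the reflection-of-sign inherent in the $\FL$ construction, cf.\ the signs in the domino analysis of $\FL$) enter.

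\textbf{Main obstacle.} The routine but delicate part — the place I expect to spend the most effort — is the sign/degree accounting in the last paragraph: showing that the only thing the $\FL$-substitution does to the value of a pre-permissible pure functionary, beyond possibly killing $\Psi^\FL(f)$ or $\Psi^\FL(g)$, is multiply by a \emph{known} power of strongly-signed twistors whose total parity is $(-1)^{\deg_A+\deg_B+\deg_n}$. Concretely one must verify that after the substitution every permissible twistor $\llrr{efgh}$ not involving $A,B,d,n$ is unchanged, each twistor involving $d$ (but not $A,B$) becomes $\llrr{\cdots}-\tfrac{\lr{dABn}}{\lr{cABn}}\llrr{\cdots c\cdots}$ and so contributes a controlled $\lr{cABn}^{-1}$, each twistor involving $A$ (but not $B$) contributes a controlled $\lr{cdBn}^{-1}$ and, via the Pl\"ucker relation quoted in \cref{obs:pre_permissible}, a permissible numerator; and that the $\lr{cABn}$ and $\lr{cdBn}$ powers in numerator and denominator balance precisely because $\deg_A f=\deg_A g$ and there is no $B$-without-$A$ twistor. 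Combining these with the strong signs of $\llrr{cdBn}$-, $\llrr{cABn}$- and $\llrr{cdABn}$-type twistors from \cref{lem:sign_of_chord_twistors} and with the fixed sign $s$ of $F$ on $\tZ(S)$ then yields the fixed (weak) sign $(-1)^{\deg_A(F)+\deg_B(F)+\deg_n(F)}s$ on $\FL(S)$, which is the assertion. The weak-sign case is identical, working with the closure and allowing the strongly-signed twistors to vanish, and the $s=0$ case is immediate since then $f$ (or $g$) already vanishes on $\tZ(S)$ and hence $\Psi^\FL(f)$ (or $\Psi^\FL(g)$) vanishes on $\FL(S)$.
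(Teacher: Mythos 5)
Your proposal misidentifies the source of the sign factor, and this is a genuine gap in the argument. The paper proves this by decomposing $\FL$ into four elementary steps ($\scale_l\circ\inc_l$; the $x_*,y_*$ shears; $\rem_{AB}\circ\addL_{AB}$; and finally adding column $l$ to $n$ and erasing $l$), and tracks the effect on functionaries one step at a time. The factor $(-1)^{\deg_A(F)+\deg_B(F)+\deg_n(F)}$ comes entirely from the first step, $\inc_l$: the operator $\inc_{h|l}$ of \cref{def:ops} negates an entire block of the matrix, and the cited \cite[Lemma~4.21]{even2021amplituhedron} shows $\llrr{C'Z'\,Z'_I}=(-1)^{|I\cap\{A,B,n\}|}\llrr{CZ\,Z_I}$ — since $l$ sits between $d$ and $A$, only the indices $A,B,n$ lie on the negated side, and that is exactly why $n$ (which $\Psi^{\FL}$ never touches) shows up in the exponent. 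Your attribution of the sign to ``multilinearity of determinants in the columns $z_A,z_d,z_n$'' is wrong: the $\Psi^{\FL}$ substitution only modifies $z_A$ and $z_d$, so no amount of degree bookkeeping in the substituted twistors can produce a $(-1)^{\deg_n}$ contribution.

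More fundamentally, you assert the key relationship — that $\Psi^{\FL}(F)$ evaluated at $\tZ(\FL(U\vdots V))$ equals $F$ evaluated at $\tZ(U\vdots V)$ ``up to the explicit rational factors'' — as a consequence of \eqref{eq:promotion_for_coords_prom}, but that equation only says the promotion commutes with taking determinants of promoted column vectors; it does not by itself identify the promoted $\bz$ with the $B$-amplituhedron image of a preimage in $S$, nor does it account for the geometric reparametrizations ($\addL_{AB}$, adding $l$ to $n$) that separate $\FL(U\vdots V)$ from $U\vdots V$. Establishing that identification — which involves choosing a section $\varphi_{AB}^\sigma$, comparing to a limit $Z'\in\Mat'$ with $Z'_l=Z'_n$, and invoking \cite[Lemmas~4.22,~4.28,~4.35]{even2021amplituhedron} — is precisely the content of the paper's Steps~2--4 and is what you would need to supply. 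What you describe as the ``routine but delicate'' sign/degree accounting is thus not the right delicate part; the actual hard work is the stepwise geometric matching, of which the sign is a byproduct of one specific step. Your handling of the escape clause and the $s=0$ case is fine.
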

\begin{prop}\label{prop:sign_under_BCFW_promotion}
Let $S_L\subseteq\Gr_{k_L,N_L;\ell_L}^{\geq},~S_L\subseteq\Gr_{k_R,N_R;\ell_R}^{\geq}$ be subspaces satisfying \cref{condition:BCFW}, where $k_L,k_R,N_L,N_R,\ell_L,\ell_R$ are as in  \cref{nn:bcfwmap}.	Let $F$ be a pure rational functionary with indices contained in $N_L$ (resp. $N_R$). 
	\begin{enumerate}
		\item\label{it:vanishing} If $F$ vanishes on the image of $S_L$ (resp. $S_L$), then $\Psi_{a c}^{\bcfw} (F)$ vanishes on the image of $S_L\bcfw S_R$.
\item\label{it:strong_sign_BCFW} If 
$F$ has weak sign $s \in \{\pm 1\}$ on the image of $S_{L}$ and $\ell_L=1$ (resp. $S_R$,~$\ell_R=1$) then $\Psi_{ac}^{\bcfw} F$ has weak sign $(-1)^{(k_R+1) \deg_n F}s$ (resp. $s$) on the image of $S_L\bcfw S_R$. 

If $F$ has a strong sign $s\in\{\pm1\}$ on the image of $S_{L}$ and either $\ell_L=0$ (resp. $S_R$,~$\ell_R=0$) or $F$ does not use the loopy twistors $\llrr{ijAB}$, then $\Psi_{ac}^{\bcfw} F$ has strong sign $(-1)^{(k_R+1) \deg_n F}s$ (resp. $s$) on the image of $S_L\bcfw S_R$. 
\end{enumerate}
\end{prop}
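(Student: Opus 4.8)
\textbf{Proof proposal for \cref{prop:sign_under_BCFW_promotion}.}

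The plan is to reduce the statement to the Cauchy--Binet bookkeeping that underlies \cref{lem:sign_of_chord_twistors} and \cref{cor:5_3}, following the template of \cite[Theorem 12.6]{even2023cluster} while isolating the one genuinely new ingredient, the presence of loopy twistors $\llrr{ij AB}$. First I would record, via \cref{rmk:vector_promotion}, that $\Psi^\bcfw_{ac}$ acts on the level of geometric maps: for $(C\vdots D)=\mbcfw(C_L\vdots D_L,[\alpha:\cdots:\varepsilon],C_R\vdots D_R)$, applying $\tZ$ and then reading off a twistor of a left-side vector $v$ equals, up to the universally positive scalars $\llrr{acdn},\llrr{abcd},\llrr{abcn}$ of \cref{lem:sign_of_chord_twistors},~\cref{it:chord_BCFW}, the corresponding twistor of the promoted vector $\Psi^\bcfw v$; this is exactly \eqref{eq:promotion_for_coords_prom}. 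So for a monomial in the (non-loopy) twistors coming from the left or right side, $\Psi^\bcfw_{ac}$ of it, after clearing the denominators built from $\llrr{acdn},\llrr{abcd},\llrr{abcn}$, evaluates on $\tZ(S_L\bcfw S_R)$ to the same expression as $F$ evaluates on the appropriate factor, times a manifest product of the chord's twistors, whose signs are fixed and strong by \cref{lem:sign_of_chord_twistors},~\cref{it:chord_BCFW}. That immediately gives \cref{it:vanishing} (a vanishing numerator stays vanishing) and, for functionaries not using loopy twistors, the strong-sign half of \cref{it:strong_sign_BCFW}: the only sign that is not $+1$ comes from the substitution $n\mapsto \frac{\llrr{acdn}}{\llrr{abcd}}b-\frac{\llrr{bcdn}}{\llrr{abcd}}a$, where $(-1)^{k_R}\llrr{bcdn}$ is strongly positive, so each occurrence of the index $n$ contributes $(-1)^{k_R+1}$, whence the total factor $(-1)^{(k_R+1)\deg_n F}$; purity (\cref{cor:purity_preserved_under_prom}) guarantees $\deg_n F$ is well defined and the power behaves consistently across monomials.

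Next I would treat the genuinely loopy case: $\ell_L=1$ (the $\ell_R=1$ case being symmetric and in fact sign-free since the right side promotions of $w_d,w_n$ only involve the strongly positive $\llrr{abcd},\llrr{abcn}$ and the $(-1)^{k_R}\llrr{bcdn}$ is multiplied by the index $n$ which does not appear in $\ell_R=1$'s twistors in a way that flips a weak sign — I'd double check the parity claim there matches the statement's "$s$"). When $\ell_L=1$, a loopy twistor $\llrr{ij AB}$ of the left factor should be interpreted, as in \cref{lem:5_3},~\cref{it:5_3_FL} and the surrounding discussion, through the embedding $\varphi_{AB}$; under $\Psi^\bcfw_{ac}$ its columns $w_i$ promote by the same formulas as the $z$'s but on the $\bw$ component, so $\Psi^\bcfw\llrr{ij AB}=\llrr{\Psi^\bcfw(w_i\cdot\bw)\,\Psi^\bcfw(w_j\cdot\bw)\,AB}$ via \eqref{eq:promotion_for_coords_prom} applied to the $\bw$-part. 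The only index among $\{b,c,d,n\}$ that appears in a left-side loopy twistor is $b$ (since $c,d$ are right-side markers and $N_L=\{\min N,\dots,a,b,n\}$ so $n$ can appear), and the $b$-substitution $b\mapsto b-\frac{\llrr{bcdn}}{\llrr{acdn}}a$ is sign-controlled by the strongly positive $(-1)^{k_R}\llrr{bcdn}$ and strongly positive $(-1)^{k_R+1}\llrr{acdn}$, which have opposite strong signs, so each $b$ contributes no sign in the \emph{ratio} $\frac{\llrr{bcdn}}{\llrr{acdn}}$; the $n$-substitution on the right side is what it is. Since we only assume a \emph{weak} sign for $F$ here, I must be careful not to clear denominators in a way that introduces zeros: the correct formulation is that $\Psi^\bcfw F$, wherever defined on $\tZ(S_L\bcfw S_R)$, equals $F$ evaluated at a point of $\tZ(S_L)$ (reached by the geometric map $\Psi^\bcfw_{ac}$ of \cref{rmk:vector_promotion}, which \cref{prop:BCFW_and_pos} tells us lands in the right nonnegative Grassmannian) times a ratio of chord's twistors whose net sign is $(-1)^{(k_R+1)\deg_n F}$.

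The main obstacle, and the step I'd spend most care on, is verifying that the parenthetical "equals $F$ evaluated at a point of $\tZ(S_L)$" is legitimate \emph{after} the forward-limit/BCFW construction, i.e.\ that the geometric promotion map $\Psi^\bcfw_{ac}\colon \check\Gr_{4-2,N;\ell}\to\check\Gr_{4-2\ell_L,N_L;\ell_L}\times\check\Gr_{4-2\ell_R,N_R;\ell_R}$ of \cref{rmk:vector_promotion} genuinely inverts the geometric BCFW map on the cell (so that a point of $S_L\bcfw S_R$ is carried to the pair of factors that produced it), and that the scalar prefactor is exactly the stated monomial in chord's twistors with no stray signs. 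This is where I'd invoke \cref{cor:5_3} to pin down the factors $\alpha,\dots,\varepsilon$ as twistors with the exact stated signs, and \cref{obs:BCFW_promotion_domino} to reconcile the two equivalent forms of the $n$-substitution so that the domino-adapted convention of \cref{def:bcfw_prom} matches the vector promotion. A secondary subtlety is the $\ell_R=1$ clause: the proposition asserts the sign multiplier is trivially $s$ there, which I'd justify by noting that loopy twistors $\llrr{ij AB}$ on the right use indices from $N_R=\{b,\dots,c,d,n\}$ but the right-side promotion formulas for $w_d,w_n$ contribute only $\llrr{abcd},\llrr{abcn}$ (one strongly positive, one with strong sign $-1$) paired against the strongly positive $(-1)^{k_R}\llrr{bcdn}$ in the combination $\frac{\llrr{acdn}}{\llrr{abcd}}b-\frac{\llrr{bcdn}}{\llrr{abcd}}a$ — and then a parity count analogous to the left case, which I expect cancels to give exactly $s$; if it does not cancel cleanly I would need to re-examine whether the statement intends $\deg_n$ or $\deg_d$ there, but barring that the computation is routine and mirrors \cite[Lemma 12.4, Theorem 12.6]{even2023cluster}.
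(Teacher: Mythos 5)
The proposal takes a genuinely different route from the paper: you argue directly via a pointwise algebraic identity (vector promotion, in the style of \cite{even2021amplituhedron}), whereas the paper proves this by a squeeze/sandwich argument in the spirit of \cite[Theorem 11.3]{even2023cluster} — $S=S_L\bcfw S_R$ is sandwiched between the point cell $S_\bulR=S'\bcfw S_R$ and the big cell $S^\bulR = \Gr_{k_L,N_L}^>\bcfw S_R$, and an iterated degeneration lemma propagates the sign from $S^\bulR$ to $S$ and from $S$ to $S_\bulR.$ The direct-identity approach has a cleaner conceptual story (and is exactly what the paper does for $\Psi^\FL$ in \cref{prop:sign_under_FL_promotion}), but it only yields the \emph{weak} sign: the pointwise identity $F(\tZ_{Z'}(C_L\vdots D_L))=(\text{prefactor})\cdot\Psi^\bcfw_{ac}F(\tZ_Z(C\vdots D))$ involves a derived matrix $Z'$, while strong sign (\cref{def:sign}) is by definition a polynomial-with-positive-coefficients statement in the Pl\"uckers of $C,D$ and the maximal minors of the \emph{original} $Z.$ You explicitly claim the strong-sign half "immediately" for non-loopy $F$, but your route does not supply it; the paper gets strong sign precisely from the sandwich and the tree-level fact about $x_i(t),y_i(t)$ preserving positivity of Pl\"uckers, and has to note explicitly that this step breaks when loopy twistors enter. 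This is a genuine gap, not a detail you can wave at.

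A secondary problem is the sign bookkeeping. You locate the factor $(-1)^{(k_R+1)\deg_n F}$ in the $n$-substitution $n\mapsto\frac{\llrr{acdn}}{\llrr{abcd}}b-\frac{\llrr{bcdn}}{\llrr{abcd}}a$, but that substitution is performed only on \emph{right}-side functionaries, where the statement asserts the sign is $s$ with no flip. The actual source of the $(-1)^{(k_R+1)\deg_n F}$ for left-side functionaries is the $(-1)^{k_R+1}$ scaling of the $n$th column of $C_L$ in the BCFW map itself (visible in the schematic matrix of \cref{def:bcfw-map}), and this is a fact about the geometric map, not about the promotion formula. Your own hedging ("if it does not cancel cleanly I would need to re-examine $\deg_n$ vs.\ $\deg_d$") signals that this parity count was not actually carried through; as written it would produce the wrong factor on the right-side case.
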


\subsubsection{Proofs of \cref{prop:sign_under_FL_promotion} and \cref{prop:sign_under_BCFW_promotion}}
\begin{proof}[Proof of \cref{prop:sign_under_FL_promotion}]
Write $c=n-2,d=n-1.$
We will follow the sequence of operations which generates $\FL(S)$ from $S,$ and understand its effect on twistors.
\\\textbf{Step 1 - $\scale_l\circ\inc_l:$}
In this step we prove that if $F$ has a (weak) sign $s\in\{0,-1,+1\}$ on the image of $S$ then $\tilde{\Psi}^{\inc_l}(F)$ has a (weak) sign $(-1)^{\deg_A(F)+\deg_B(F)+\deg_n(F)}s$ on the image of $S_0:=\scale_l(\R_+)(\inc_l(S)).$

We start by showing that $\tilde{\Psi}^{\inc_l}(F)$ has this sign on the image of $\inc_l(S).$ This follows from \cite[Lemma 4.21]{even2021amplituhedron}. This lemma shows that for any positive $Z'$ and $C':=\inc_{i}(C)$, there is a $Z$ such that for each twistor $\llrr{I}$ appearing in $F$, $\llrr{C'Z'~Z'_I}= (-1)^{|I\cap\{A,B,n\}|}\llrr{CZ~Z_I}$. We have $\lr{C}_I= \lr{C'}_{I \cup \{i\}}$ and by the proof of \cite[Lemma 4.21]{even2021amplituhedron}, the same statement holds for $Z, Z'$. So adding $i$ to each Pl\"ucker coordinate and minor appearing in $f$ and $g$ gives a formula for $\Psi^{\inc_i}F$ on the image of $\inc_{i} (S)$.
Now, since
\[\tZ(\scale_l(\lambda).C)=\widetilde{Z'}(C), \] where $Z'$ is the scaling of $Z_l$ by $\lambda^{-1}$, and since the index $l$ does not appear in any twistor in the representation of $\Psi^{\inc_l}(F)$, it is immediate to verify that the scaling does not affect the value, hence also not the sign of the functionary.
\\\textbf{Step 2 - $x_*,y_*$ operations.}
In this step we prove that if $\Psi^{\inc_l}F$ has a (weak) sign $s'\in\{0,-1,+1\}$ on the image of $S_0$ then $\widetilde{\Psi}^{\FL}(F)$ has a (weak) sign $s'$ on the image of \[S_1:=y_c(\R_+).y_d(\R_+).x_A(\R_+).x_l(\R_+).S_0,\]
where
\[\widetilde{\Psi}^{\FL}:
\Psi^{\inc_l}(\mathcal{FR}_{k+1,N}^{0})\to\mathcal{FR}_{k+2,N\cup\{l\}}^{0}.
\]
defined on generators by the following substitution:
\begin{align*}
& A \;\mapsto\; A - \frac{\llrr{c\,d\,A\,l}}{\llrr{c\,d\,B\,l}}\,B 
\\
& d \;\mapsto\; d - \frac{\llrr{d\,A\,B\,l}}{\llrr{c\,A\,B\,l}}\,c .
\end{align*} 
Note that replacing $l$ by $n$ in the above substitution yields the substitution which defines $\Psi^{\FL}.$ 

	We rely on \cite[Section 4]{even2021amplituhedron}, and on the following lemma whose proof is given below.
 \begin{lemma}\label{lem:4biden_intermediate}
 On the image of $S_1$ every twistor $\llrr{i_1i_2i_3i_4}$ for $\{i_1,\ldots,i_4\}\in\binom{c,d,A,B,n}{4}$ has a fixed nonzero sign.
 \end{lemma}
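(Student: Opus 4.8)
\textbf{Proof plan for Lemma \ref{lem:4biden_intermediate}.}

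The statement concerns the five indices $c,d,A,B,n$ of $N$ and asks that every twistor $\llrr{i_1i_2i_3i_4}$ built from a $4$-subset of $\{c,d,A,B,n\}$ be sign-fixed and nonvanishing on the image of $S_1$. There are exactly five such twistors, namely those obtained by omitting one of the five markers. The plan is to trace through what the operations $\inc_l$, $\scale_l(\varepsilon_\star)$, and then $y_c,y_d,x_A,x_l$ do to the Pl\"ucker coordinates of a representative in $S$, via \cref{lem:effect_ops}, and to combine this with the Cauchy--Binet expansion \eqref{eq:cauchy-binet} for the four-index twistors, just as in the proof of \cref{lem:sign_of_chord_twistors} and \cref{cor:epsilon_signs}. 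Concretely: each such twistor $\llrr{I}$ with $I\in\binom{\{c,d,A,B,n\}}{4}$ expands as $\sum_{J}\lr{C_1}_J\lr{Z_1}_{J\cup I}$ where $C_1$ is a representative of the element of $S_1$ and $Z_1$ is the correspondingly transformed $Z$; the sum ranges over $J$ disjoint from $I$. Since $S_1$ is obtained from a nonnegative $S$ by the positivity-preserving operations of \cref{def:ops} (all of $\inc_l,\scale_l,x_*,y_*$ preserve nonnegativity by \cref{lem:effect_ops}), all nonzero Pl\"ucker coordinates $\lr{C_1}_J$ have a common sign, and the minors $\lr{Z_1}_{J\cup I}$ of the positive matrix $Z_1$ all have a common sign depending only on $I$; so the whole sum is sign-fixed, and the task reduces to showing the sum is nonempty, i.e.\ that some $J$ disjoint from $I$ has $\lr{C_1}_J\neq 0$.

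The nonemptiness claim is where the hypotheses on $S$ enter, and this is the step I expect to be the main obstacle. By \cref{condition:FL}, $B$ is independent for $S$ and $\{c,d,A,B,n\}$ is coindependent for $S$; since $\inc_l$ adds a new pivot row at $l$, the matroid $\mathcal{B}_0$ of $S_0$ consists of the sets $\{l\}\cup I$ with $I\in\mathcal{B}$, so $\{c,d,A,B,l,n\}$ is coindependent for $S_0$ (and $l,B$ are independent). The operations $y_c,y_d,x_A,x_l$ only enlarge the positroid by \cref{lem:effect_ops}, so coindependence of $\{c,d,A,B,l,n\}$ is preserved: thus $S_1$ is $\{c,d,A,B,l,n\}$-coindependent, meaning every point of $S_1$ has a basis avoiding $\{c,d,A,B,l,n\}$, in particular avoiding any $4$-subset $I$ of $\{c,d,A,B,n\}$. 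This basis is exactly the $J$ needed. (One should double check that the relevant inequalities needed to invoke \cref{condition:FL} are available here: $S$ in the statement of \cref{prop:sign_under_FL_promotion} is assumed to satisfy \cref{condition:FL}, and for the application inside \cref{prop:fixed_positroids} one uses \cref{lem:bcfw_before_fl_tree}, items \ref{it:bcfw_before_fl_yellow} and \ref{it:bcfw_before_fl_bcdn}, to guarantee the needed coindependence and independence; I would cite these directly rather than re-deriving them.)

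Finally, once nonemptiness is established, I would record that the common sign of each twistor is determined purely combinatorially: it is $\operatorname{sgn}\lr{C_1}_J$ (a single global sign for the nonnegative representative, which may be normalized to $+1$) times the sign of $\lr{Z_1}_{J\cup I}$, and the latter sign for a positive matrix depends only on the position of $I$ relative to the columns, so it is the same for all admissible $J$; this is the standard argument already used in \cref{lem:sign_bdry_twistors}. Since there is nothing further to check beyond these two points --- sign-fixedness from positivity plus Cauchy--Binet, and nonvanishing from $\{c,d,A,B,l,n\}$-coindependence --- the lemma follows. The only genuinely delicate bookkeeping is tracking which of $x_A,x_l,y_c,y_d$ might destroy coindependence of a given quadruple; but since all of them only add bases (\cref{lem:effect_ops}), coindependence can never be lost, so this bookkeeping collapses.
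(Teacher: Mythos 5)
Your overall strategy is the same as the paper's: expand each twistor via Cauchy--Binet, note that all summands share a sign because the underlying matrices are nonnegative and $Z$ is positive, and then exhibit one nonzero summand by producing a basis of the point of $S_1$ disjoint from $\{i_1,\ldots,i_4\}$. That is exactly the paper's argument.

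However, your nonemptiness step contains a genuine error. After $\inc_l$, the bases of $S_0$ are exactly $\{l\}\cup I$ with $I\in\mathcal{B}$, as you yourself observe; so \emph{every} basis of $S_0$ contains $l$, which means there is \emph{no} basis avoiding $\{c,d,A,B,l,n\}$, i.e.\ $\{c,d,A,B,l,n\}$ is \emph{not} coindependent for $S_0$ (nor, in general, for $S_1$, since the new bases created by $x_l,x_A,y_c,y_d$ all contain one of $A,B,c,d$). Your subsequent claim that this coindependence is preserved and then gives the desired $J$ rests on a false premise. The statement that is actually true, and is what the proof needs, is coindependence of $\{c,d,A,B,n\}$ (without $l$): a basis $I$ for $S$ avoiding $\{c,d,A,B,n\}$ gives the basis $I\cup\{l\}$ for $S_0$, which still avoids $\{c,d,A,B,n\}$ and hence any $4$-subset $\{i_1,\ldots,i_4\}\subseteq\{c,d,A,B,n\}$, even though it contains $l$; by \cref{lem:effect_ops} the operations $x_l,x_A,y_c,y_d$ can only enlarge the positroid, so $I\cup\{l\}$ remains a basis for the elements of $S_1$, and this $J=I\cup\{l\}$ is the nonzero Cauchy--Binet term. (The paper phrases this as $I\cup\{h\}$ for an $h\in\{c,d,l,A,B\}\setminus\{i_1,\ldots,i_4\}$; taking $h=l$ is the clean choice and is always available.) Once you replace your false intermediate assertion with this corrected version, the proof goes through.
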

 The proof of \cite[Lemma 4.28]{even2021amplituhedron} together with \cite[Lemma 4.22]{even2021amplituhedron} show that for every functionary $F$ on index set $N$ there is a functionary $G,$ on index set $N\cup\{l\},$ with the following property: For every $C \in S_1$ and a positive matrix $Z$ there are explicit elements $C' \in S$ and positive matrix $Z',$
 such that $F(C'Z', Z')= G(CZ,Z)$. \cite[Lemma 4.35]{even2021amplituhedron} implies that $G = \widetilde{\Psi}^{\FL}(F),$ up to scaling by some twistors coordinates appearing in \cite[Lemma 4.35]{even2021amplituhedron}. These twistor coordinates are precisely those of \cref{lem:4biden_intermediate}. By \cref{lem:4biden_intermediate}, these twistor coordinates are all nonzero on the image of $S_1$, so one may divide $G$ by them to obtain exactly the formulas defining $\tilde{\Psi}^{\FL}$.
 \\\textbf{Step 3 - $\rem_{AB}\circ\addL_{AB}:$}
 Note that by \cref{obs:pre_permissible} $\widetilde{\Psi}^{\FL}(F)$ has a permissible representation.  Note also that since $S$ is $B-$independent, for every element of the positroid cell $S_1$ the columns $A,B$ are linearly independent. 
 
 In this step we show that if $G(C',Z')\in \mathcal{FR}_{k+2,N\cup\{l\}}^{0}$ has a permissible representation and has (weak) sign $s'$ on the image of $S_1$, then another functionary $G'(C\vdots D,Z)\in\mathcal{FR}_{k,[n]\cup\{l\}}^{1}$ has either (weak) sign $s'$ or is identically $0$ on the image of $S_2=\rem_{AB}(\addL_{AB}(S_1))\subseteq\Gr_{k,[n]\cup\{l\};1}^{\geq}.$ 
 $G'$ is defined by taking any permissible representation of $G,$ and interpreting the twistor coordinates appearing there as in \eqref{eq:cauchy-binet},~\eqref{eq:cauchy-binet_loop}. This just means that we interpret twistors not including $A,B$ as usual, and twistors $\llrr{\tZ(C)Z_{i_1}Z_{i_2}Z_AZ_B}$ are interpreted as $\llrr{\tZ(C+D)Z_{i_1}Z_{i_2}}.$ It is easy to see that $G'$ is independent of the choice of the representation.
Thanks to \cref{rmk:justification} we have
\begin{equation}\label{eq:G2G'}G'(C\vdots D,Z)=G(\varphi_{AB}^\sigma(C\vdots D),\tilde\varphi_{AB}(Z)),\end{equation}
where $\sigma$ is an arbitrary local section of the projection $\Mat_{k,[n]\cup\{l\}\setminus\{A,B\};1}\to\Gr_{k,[n]\cup\{l\};1},$  and $\varphi_{AB}^\sigma$ is defined in \cref{nn:varphi}.
 Now, if $C\vdots D\in S_2$ then $\varphi^\sigma_{AB}(C\vdots D)\in S_1,$ for some section $\sigma.$ Indeed, $\varphi_{AB}^\sigma$ is a local inverse of $\rem_{AB}\circ\addL_{AB},$ and for every $V\in S_1$ we can find $\sigma=\sigma_V$ such that $V=\varphi_{AB}^\sigma(\rem_{AB}(\addL_{AB}(V)))$. Now, every $C\vdots D\in S_2$ equals $(\rem_{AB}(\addL_{AB}(V))$ for some $V\in S_1,$ thus, $\varphi^{\sigma_V}_{AB}(C\vdots D)\in S_1.$
 
 In addition, if $Z$ has positive minors, $\tilde{\varphi}_{AB}(Z)$ is a matrix with nonnegative maximal minors, hence is in the topological closure of the space of positive matrices.
 Thus, if $G$ has (weak) sign $s'$ on $S_1\times\Mat_{(N\cup\{l\})\times (k+4)}^{>},$ then by \eqref{eq:G2G'} and the above discussion, if $G'$ is defined on $S_2\times\Mat_{([n]\cup\{l\}))\times (k+4)}^{>}$, meaning that its denominator in a reduced representation is not identically $0,$ then it has a weak sign $s'$. If the numerator of $G'$ in that representation is not identically $0,$ then also $G'$ will not be identically $0$ on $S_2\times\Mat_{([n]\cup\{l\}))\times (k+4)}^{>}$. Moreover, by construction $G'$ is permissible.
\\\textbf{Step 4: adding column $l$ to column $n$ and erasing the former:}
Let $G'$ be a permissible functionary which has a weak sign $s'$ on the image of $S_2\times\Mat_{([n]\cup\{l\})\times (k+4)}^{>}.$
In this step we show that if the denominator of $G'$ does not vanish under the substitution $l\to n,$ which replaces each index $l$ in the twistor representation by the index $n$, then the functionary $G''$ obtained by this assignment has a weak sign $s'$ on the image of $\FL(S),$ or is identically $0.$

First, note that, by taking the limit $Z_l\to Z_n$ inside $\Mat_{([n]\cup\{l\})\times (k+4)}^{>}$ we obtain that $G'$ either has weak sign $s',$ or is identically zero/undefined, on the image of $S_2\times\Mat'_{([n]\cup\{l\})\times (k+4)}$, where $\Mat'_{([n]\cup\{l\})\times (k+4)}$ is the collection of nonnegative matrices with rows labeled $[n]\cup\{l\}$ and $k+4$ rows, where $n-1\lessdot l\lessdot n$, whose restriction to the $[n]\times(k+4)$ submatrix is positive, and the $l$th row equals the $n$th row. 
Note that for $Z'\in \Mat'_{([n]\cup\{l\})\times (k+4)}$ with $Z=\rem_l Z',$ 
\[\lr{Z'}_I=0~\text{if }l,n\in I,\]
and if exactly one of $l,n,$ say $l$ is in $I$ then
\[\lr{Z'}_I=\lr{Z}_{I\cup\{n\}\setminus\{l\}}.\]
Let $\psi:S_2\to \FL(S)$ be the map $(C\vdots D)\mapsto\rem_lx_l(1)(C\vdots D).$
We have, for $M=C,C+D$
\[\lr{\rem_lx_l(1)(M)}_I=\begin{cases}\lr{M}_I&\text{if }n\notin I\\
\lr{M}_I+\lr{M}_{I\cup\{l\}\setminus\{n\}}&\text{if }n\in I
\end{cases}\]
Let
$\tilde\psi:\Mat^{>}_{[n]\times(k+4)}\to\Mat'_{([n]\cup\{l\})\times (k+4)}$ be the map $Z
\mapsto y_l(1)\pre_lZ,$ where as usual $n-1\lessdot l\lessdot n$.
Note that $\psi(C\vdots D
)Z=(C\vdots D)\tilde\psi(Z
),$ hence
\begin{obs}\label{obs:twistors_psi}
For every permissible twistor given by indices $i_1,\ldots,i_4\neq l$ (which may include $A,B$) it holds that
\[\llrr{\tZ(\psi(C\vdots D
))Z_{i_1}\ldots Z_{i_4}}=\llrr{\widetilde{Z'}(C\vdots D)Z'_{i_1}\ldots Z'_{i_4}},\]
where $Z'=\tilde\psi(Z
).$
\end{obs}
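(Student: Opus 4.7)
The plan is to reduce the twistor identity to the underlying matrix identity $\psi(C\vdots D)Z=(C\vdots D)\tilde\psi(Z)$, and then exploit the Cauchy--Binet form of twistors together with the fact that no index $i_j$ equals $l$.

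First I would verify the matrix identity directly, since $\psi$ and $\tilde\psi$ act as ``dual'' operations on columns of $C\vdots D$ and rows of $Z$. On the left, $\rem_l x_l(1)$ replaces column $n$ of $C\vdots D$ by $(C\vdots D)^n+(C\vdots D)^l$ and deletes column $l$; on the right, $y_l(1)\pre_l$ inserts a zero row at position $l$ and then fills it with $Z_n$, leaving all other rows of $Z$ unchanged. Expanding $((C\vdots D)\tilde\psi(Z))_{ik}$ as a sum over $j\in[n]\cup\{l\}$ immediately gives
\[
\sum_{j\in[n-1]}(C\vdots D)_{ij}Z_{jk}+\bigl[(C\vdots D)_{in}+(C\vdots D)_{il}\bigr]Z_{nk},
\]
which matches $(\psi(C\vdots D)Z)_{ik}$. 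This is a routine computation, and the only care needed is to recall that, from the convention of \cref{def:ops}, $x_l$ and $y_l$ at the maximal marker $l$ are handled without sign, since here $l<n$.

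Given this identity, for a permissible tree-level twistor with indices $i_1,\dots,i_4\in[n]\setminus\{l\}$, both sides of the claimed equality are $(k{+}4)\times(k{+}4)$ determinants. The top $k$ rows come from $\tZ(\psi(C\vdots D))=\psi(C\vdots D)Z$ on the left and from $\widetilde{Z'}(C\vdots D)=(C\vdots D)Z'$ on the right; the matrix identity makes these agree as $k\times(k+4)$ matrices, so the spans (and hence the determinants) coincide on the upper block. The bottom four rows are $Z_{i_j}$ and $Z'_{i_j}$ respectively, and since $\tilde\psi$ modifies only row $l$ of $Z$, we have $Z_{i_j}=Z'_{i_j}$ for every $i_j\neq l$, which is guaranteed by hypothesis.

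For a loopy permissible twistor $\llrr{\cdots A B}$ I would use the interpretation of \cref{rmk:justification} together with \eqref{eq:cauchy-binet_loop}: such a twistor equals $\llrr{(Y+L)Z_{i_1}Z_{i_2}}$, i.e.\ a $(k{+}4)\times(k{+}4)$ determinant whose top $k+2$ rows come from $\psi(C\vdots D)Z$ (now using both $C$- and $D$-parts). The same matrix identity applied to this enlarged upper block gives the agreement of the top $k+2$ rows; the bottom two rows $Z_{i_1},Z_{i_2}$ (resp. $Z'_{i_1},Z'_{i_2}$) agree because $i_1,i_2\neq l$. Equivalently, one can observe that $\tilde\varphi_{AB}$ only touches the phantom rows indexed by $A,B$ and therefore commutes with $\tilde\psi$, so the argument of the tree case transports verbatim after precomposing with $\tilde\varphi_{AB}$.

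The main obstacle, such as it is, is purely notational: making sure the ``column'' operations $\rem_l, x_l$ on the index set of $C\vdots D$ are matched by the corresponding ``row'' operations $\pre_l, y_l$ on $Z$. Once this dictionary is written out, the observation is a one-line consequence of the matrix identity and the hypothesis $i_j\neq l$; there is no additional positivity or rank input needed.
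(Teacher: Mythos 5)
Your proposal is correct and follows exactly the paper's route: the paper's entire justification is the preceding line ``$\psi(C\vdots D)Z=(C\vdots D)\tilde\psi(Z)$'', from which the observation follows because $\tilde\psi$ alters only row $l$ of $Z$ and the twistors in question avoid $l$. You simply spell out the column/row bookkeeping (including the sign convention for $x_l$ with $l<n$) and the tree versus loopy block sizes, all of which checks out.
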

Every $C\vdots D\in \FL(S)$ equals $\psi(C'\vdots D')$ for $C'\vdots D'\in S_2.$ Thus, for $Z\in\Mat^>_{n\times(k+4)},$
\[G''(\tZ(C\vdots D),Z) = G'(\widetilde{Z'}(C'\vdots D'),\widetilde{Z'})\]with $Z'=\tilde\psi(Z
)\in\Mat'_{([n]\cup\{l\})\times(k+4)}.$
Hence, if $G'$ has weak $s'$ on $S_2\times\Mat_{([n]\cup\{l\})\times (k+4)}'$ then $G''$ has weak sign $s'$ on the image of $\FL(S).$ Following the steps of the argument, we see that the resulting function is identically $0$ or undefined precisely if the $\FL-$promotion of $f$ or $g$ is identically $0.$ 

\end{proof}
\begin{proof}[Proof of \cref{lem:4biden_intermediate}]
As in the proof of \cref{lem:sign_of_chord_twistors} it is easy to see that all summands in the Cauchy-Binet expansion \eqref{eq:cauchy-binet} of $\llrr{i_1\ldots i_4}$ have the same sign, and we just need to show that there is at least one nonzero summand. Using \cref{condition:FL}, we pick a basis $I\in\binom{N}{4}$ with $I\cap\{c,d,A,B,n\}=\emptyset.$ It is easy to see that $I\cup\{h\},$ for the unique element $\{h\}=\{c,d,l,A,B\}\setminus\{i_1,\ldots,i_4\}$ is a basis for the elements of $S_1.$ 
\end{proof}


\begin{proof}[Proof of \cref{prop:sign_under_BCFW_promotion}]
The proof is similar to the proof of \cite[Theorem 11.3]{even2023cluster}, which shows that in the tree case a functionary with a strong sign on the left or right component gets promoted under $\Psi^\bcfw$ to a functionary of strong sign. 
We recall the argument used there, and only list the changes required for our purposes.
In what follows we only treat the case that $F$ comes from the right component, to avoid heavy notation or repetitions. The other case is similar.

The argument in \cite{even2023cluster} was: Write $S=S_L\bcfw S_R.$ $S$ is sandwiched between two cells, 
\begin{equation}\label{eq:sandwich}S_\bulR\subset \overline{S}\subset\overline{S^\bulR}\end{equation}where
$S^{\bulR}=\Gr^{>}_{k_L,N_L} \bcfw S_{R}$, and $S_{\bulR}=S'\bcfw S_{R},$ where $S'$ is a positroid cell whose only basis is some $I\in\binom{N_L}{k_L}$. Note that such one can find such $S'$ in $\overline{S}$ thanks to \cref{lem:4coind_tree}. Then, iterated applications of \cite[Lemma 11.10]{even2023cluster} are used to show that $\Psi^{\bcfw}_{ac}(F)$ has the prescribed strong sign on $S^{\bulR},S_{\bulR},$ which implies it also has the prescribed sign on $S.$

We act similarly. Again for convenience we will only sketch the case that $F$ comes from the right side. We define $S^{\bulR},S_{\bulR}$ as above.
If $\ell_R=1$ then it is easy to see that again \eqref{eq:sandwich} holds, and we proceed as above, like in \cite[Theorem 11.3]{even2023cluster}. The only change is that the argument used in \cite[Theorem 11.10]{even2023cluster} to show that the sign is not only weak, but also strong may fail, if $F$ uses the loop twistors $\llrr{ijAB}$: in the tree case it was shown that if $V\in S_G$ is obtained from $V'\in S_{G'}$ by performing $x_i(t)$ or $y_i(t)$, where $S_G,S_{G'}$ are $r,r-1$ dimensional positroid cells respectively, then the Pl\"ucker coordinates of $V'$ are quotients of polynomials with positive coefficients in the Pl\"ucker coordinates of $V.$ This argument breaks in the presence of loops.
Thus, $\Psi^\bcfw_{ac}F$ will always have the prescribed weak sign, and if $F$ has strong sign and does not use the loopy twistors, then also the sign of $\Psi^\bcfw_{ac}F$ will be strong.

If $\ell_R=0$ and $F$ has strong sign $s$ then neither $F$ nor $\Psi^{\bcfw}_{ac}(F)$ involve the loop twistors $\llrr{i_1i_2AB}$. Thus, depends only on the tree part. In this case, even though \eqref{eq:sandwich} does not hold, its tree level analog holds
\[S'\bcfw \PosC(S_R)\subset \overline{\PosC(S)}\subset\overline{\Gr_{k_L,N_L}^>\bcfw \PosC(S_R)},\]where $S'$ is the above $0$ dimensional positroid cell. Observe that $S_R\subseteq\PosC(S_R)$ and $\{U|~U\vdots V\in S\}\subseteq\PosC(S),$ by \cref{prop:fixed_positroids}. In addition, by \cref{obs:strong_sign_and_strata}, since $F$ has strong sign on $S_R$ it also has a strong sign on $\PosC(S_R).$ 
The rest of the proof is  identical to that of \cite[Theorem 11.3]{even2023cluster}, including the part involving the strong sign. Thus, $\Psi^\bcfw_{ac}F$ has strong sign $F$ on $\PosC(S)$ and hence on $S$.
\end{proof}
\section{Boundaries of the $1-$loop amplituhedron}
\begin{prop}\label{prop:CD bdries}
If $C\vdots D\in\Gr_{k,n;1}^\geq$ is a preimage of a point from the zero locus of a loop boundary twistor $\llrr{ijAB},$ then $D$ contains a vector supported on positions $i,j,$ and vice versa. Thus, if $C\vdots D\in\Gr_{k,n;1}^\geq$ is a preimage of a point from the zero locus of a loop boundary twistor then $C\vdots D\in\SA_D,$ and vice versa.

If $C\vdots D\in\Gr_{k,n;1}^\geq$ is a preimage of a point from the zero locus of a tree boundary twistor $\llrr{iji'j'}$ then $C$ contains a vector supported on positions $i,j,i',j'$ and vice versa. Thus, if $C\vdots D\in\Gr_{k,n;1}^\geq$ is a preimage of a point from the zero locus of a tree boundary twistor then $C\vdots D\in\SA_C,$ and vice versa.

Moreover, in both cases the zero loci of these twistor coordinates in the amplituhedron are contained in its topological boundary.
\end{prop}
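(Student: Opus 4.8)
The plan is to prove the three assertions in order, with the ``moreover'' clause at the end. For the first assertion, I would use the Cauchy--Binet expansion \eqref{eq:cauchy-binet_loop}: $\llrr{ijAB} = \llrr{(C+D)Z_iZ_j} = \sum_{J} \lr{C+D}_J \lr{Z}_{J,i,j}$, the sum being over $(k+2)$-subsets $J$ disjoint from $\{i,j\}$. Since $i\lessdot j$ (or $i=\max N, j=\min N$) are cyclically consecutive, \cref{lem:sign_bdry_twistors} tells us all the terms $\lr{Z}_{J,i,j}$ have the same (nonzero) sign. Hence $\llrr{ijAB}=0$ iff $\lr{C+D}_J = 0$ for \emph{every} $J$ disjoint from $\{i,j\}$, which by \cref{obs:ind_coind} is exactly the statement that $C+D$ contains a vector supported on $\{i,j\}$, i.e. $C\vdots D\in\SA_D$. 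The second assertion is identical with $k+2$ replaced by $k$ and the pair $\{i,j\}$ replaced by the cyclically consecutive quadruple $\{i,j,i',j'\}$, using the expansion \eqref{eq:cauchy-binet} and the corresponding clause of \cref{lem:sign_bdry_twistors}; here $\SA_C$ is the relevant space by its definition in \cref{def:SA}. Care is needed only with the sign bookkeeping at the $i=\max N$ wrap-around, but \cref{lem:sign_bdry_twistors} already records the $(-1)^k$ there, and it does not affect the vanishing locus.

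\textbf{The ``moreover'' clause.} The remaining point is that these zero loci are contained in the topological boundary of $\Ampl_{n,k,4}^1(Z)$ (resp. $\Ampl_{n,k,4}^0(Z)$). The plan is: the twistors $\llrr{ijAB}$ and $\llrr{iji'j'}$ are, by \cref{lem:sign_bdry_twistors}, weakly positive (or of fixed sign $(-1)^k$) everywhere on the amplituhedron, yet by the first two assertions they are not identically zero --- e.g. the Vandermonde-type point constructed in \cref{lem:connected} has a full-rank, strictly positive representative, so every maximal minor of $C$ and of $C+D$ is strictly positive, forcing every such twistor to be strictly nonzero there (all Cauchy--Binet terms are nonzero of one sign). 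Therefore $\{\llrr{ijAB}=0\}$ is a proper subset of $\Ampl_{n,k,4}^1(Z)$ cut out by a function which is $\geq 0$ (up to a global sign) on all of it; a point where such a function vanishes cannot be interior, because any neighborhood of it contains points where the function takes the strictly positive value (the function is continuous and nonconstant, being a polynomial in the Pl\"ucker/twistor coordinates that is not identically zero on the irreducible-enough ambient space --- here I would invoke that $\Ampl$ has nonempty interior in $\Gr_{k,k+4;1}$ of the correct dimension, which follows from \cref{thm:ampli_well_def} together with \cref{lem:connected} via the amplituhedron map being a submersion on an open dense set). Hence the vanishing locus lies in the topological boundary.

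\textbf{Main obstacle.} The genuinely delicate step is the last one: promoting ``the amplituhedron's defining function $\llrr{ijAB}$ is $\geq 0$ and not identically zero'' to ``its zero set is in the boundary.'' The naive argument ``a nonnegative continuous function vanishing at an interior point of its domain'' is not by itself enough --- one must rule out that the function vanishes on a full-dimensional piece, which would make those zeros interior. The clean way is to observe that near a Vandermonde point the amplituhedron contains a full-dimensional open ball (images of positive loopy vector spaces under a submersion), and on that ball $\llrr{ijAB}$ is strictly positive, while the vanishing locus, being cut out by one nontrivial analytic (indeed polynomial) equation inside a manifold of dimension $\dim\Ampl$, has strictly smaller dimension; a point in a proper lower-dimensional subvariety of a manifold-with-boundary that is \emph{not} the manifold's interior must therefore lie on the boundary, since any interior point has a full-dimensional neighborhood inside $\Ampl$, contradicting local containment in a lower-dimensional set. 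I would write this out carefully, citing \cref{lem:connected} for the full-dimensionality and \cref{thm:ampli_well_def} for compactness, and using that the twistor is a polynomial hence its zero set is a proper subvariety once we know it is not identically zero --- which is exactly what the first two assertions of the proposition give us.
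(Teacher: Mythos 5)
Your proof of the first two assertions (characterizing the zero loci) is correct and matches the paper: both use the Cauchy--Binet expansion together with \cref{lem:sign_bdry_twistors} to deduce that all terms in the sum have the same sign, reducing vanishing of the twistor to vanishing of all relevant maximal minors, which by \cref{obs:ind_coind} is the support condition.

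However, your argument for the ``moreover'' clause has a genuine gap. You correctly identify the danger (``one must rule out that the function vanishes on a full-dimensional piece''), but the fix you propose does not close it. Knowing that the zero set is a proper subvariety of positive codimension does \emph{not} prevent a point of it from lying in the topological interior of $\Ampl$. Consider $F(x,y)=x^2+y^2$ on the closed unit disk: $F\geq 0$, $F$ is not identically zero, its zero set $\{0\}$ is $0$-dimensional, yet the origin is an interior point. Your sentence ``any interior point has a full-dimensional neighborhood inside $\Ampl$, contradicting local containment in a lower-dimensional set'' is not a contradiction --- a lower-dimensional set passes through a full-dimensional neighborhood without filling it. What you actually need is to exhibit, arbitrarily close to a zero of the twistor, points of $\Gr_{k,k+4;1}$ at which the twistor is strictly \emph{negative}; such points cannot be in the amplituhedron by \cref{lem:sign_bdry_twistors}, so the zero is a limit of exterior points and hence lies on $\partial\Ampl$. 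The paper produces such negative-sign perturbations by exploiting the \emph{multilinear} structure of the twistor as a function of the entries of a matrix representative of $C+D$ (it is degree $\leq 1$ in each entry, i.e.\ linear in the Plücker coordinates of $C+D$): one picks, in some ordering of the matrix entries, the first entry $(a',b')$ in which the twistor has a nontrivial linear coefficient on a generic perturbation, and then scales that entry to flip the sign. This is the missing ingredient --- without some argument that the function genuinely changes sign across its zero locus (e.g.\ by linearity or the absence of a local minimum), the nonnegativity and properness observations you cite do not suffice.
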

\begin{proof}
The statements which describe the zero loci are easy: From \cref{eq:cauchy-binet_loop}, and the fact that all summands have the same signs, we see that $\llrr{ijAB}$ is $0$ if and only if all minors of $C+D,\langle C+D\rangle_J$ with $i,j\notin J$ is $0.$ This is equivalent to the existence of a basis of the vector space $C+D$ with a row supported only on the positions $i,j.$ Similar argument works for $\llrr{iji'j'}.$ 

We will prove the last part for of the proposition for loop boundary twistors $\llrr{ijAB}.$ The proof for the tree boundary twistors is similar, and in fact, the case of a twisor $\llrr{iji'j'}$ is also identical to the proof of \cite[Proposition 8.2]{even2021amplituhedron}. We will assume $j=i+1,$ the case that $i=n$ is proven in a similar way. 

Let $p\in\Gr_{k,k+4;1}^{\geq}$ be a point in the zero locus of $\llrr{i,i+1,A,B}=0.$ We need to show that every open neighborhood of $p$ in $\Gr_{k,k+4;1}$ intersects $\Gr_{k,k+4;1}\setminus\Ampl^1_{n,k,4}(Z).$
It is enough to find points in $\Gr_{k,k+4;1}$ arbitrarily close to $p$ for which $\llrr{i,i+1,A,B}$ is negative, since, by the previous parts of the proposition, such points cannot belong to the amplituhedron. Moreover, by  
\eqref{eq:cauchy-binet_loop}, the dependence of $\llrr{i,i+1,A,B}$ on $C,D$ is through $C+D$ solely, and it is a continuous function of $C+D,$ as \cref{lem:Cauchy-Binet} shows. In addition, by \cref{lem:submersion2Grass} the map $(C\vdots D)\to C+D$ is open. Thus, it is enough to find, for an arbitrary preimage $(U\vdots V)\in\Gr_{k,n;1}^{\geq}$ of $p,$ a matrix $M$ close enough to a matrix representation of $U+V$ on which the twistor is negative. Indeed, such $M$ will have full rank, if it is close enough to $U+V$, and one can find a $k$-dimensional subvector space $U'$ of the row span $W$ of $M$ such that $U'$ is close to $U$ and $W/U'$ is close to $V$ in the natural topologies.

Let $M\in\Mat_{(k+2)\times n}$ be a matrix representing $C+D$ for a preimage $(C\vdots D)\in\Gr_{k,n;1}^{\geq}$ of $p.$ For $R=(R_{i}^j)_{i\in [k+2]}^{j\in [n]}\in\Mat_{(k+2)\times n},$ write $M(R)=M+R.$ The function $G(R)=\llrr{\tZ(C(R)) Z_{i}Z_{i+1}}$ is a multilinear function of the variables $R_{i}^{j},~i\in [k+2],~j\in [n].$  Order the pairs $(a,b)_{a\in [k],b\in [n]}$ arbitrarily. Let $(a',b')$ be the minimal element in that order such that 
\[G(R)
|_{R_{a}^{b}=0~\forall (a,b)>(a',b')}\not\equiv 0,\]
where $'>'$ is the aforementioned order on pairs. 
{In other words, $(a',b')$ is defined by the property that if we restrict $G(R)$ to the domain where $R_a^b=0,$ for all $(a,b)>(a',b')$, then it is not identically $0,$ but if we restrict it to the domain where $R_a^b=0,$ for all $(a,b)\geq (a',b')$ then it is. 
Such $(a',b')$ exists since the function $G(R)$ vanishes when all $R_{i}^{j}=0,$ but it is not identically zero, since any element of $\Mat_{k+2,n}$ can be represented by $M(R)$ for some $R,$ and $\llrr{i,i+1,A,B}$ is not identically zero.
We can write 
\begin{equation}\label{eq:multilinear_arg}G(R)|_{R_{a}^{b}=0~ \forall (a,b)>(a',b')}=H(R)+R_{a'}^{b'}F(R),\end{equation}
where $F(R),H(R)$ are multilinear functions in $(R_{a}^b)$ for $(a,b)<(a',b').$
By the choice of $(a',b')$ it follows that $H\equiv 0$ while $F\not\equiv0.$  We will find a sequence of $(k+2)\times n$ matrices $\{R(l)\}_{l=1,2,\ldots}$ with $\lim_{l\to\infty}R(l)= 0,$ or equivalently $M(R(l))\to M,$ for which \[\sgn(\llrr{\tZ(M(R(l))) Z_{i}Z_{i+1}})=-,\]thus establishing the claim.} To this end we write \[N_{(a',b')}:=\{\overline{R}\in \Mat_{(k+2)\times n}|\overline R_a^b=0~\forall (a,b)\geq (a',b')\}\simeq \R^{\{(a,b)|~(a,b)<(a',b')\}}.\]
Since $F$ is a multilinear function which is not identically $0,$ it is non zero on a dense subset of $N_{(a',b')}.$ Let $\overline{R}(l)\in N_{(a',b')}$ be a sequence of matrices on which $F(\overline{R}(l))\neq 0,$ and $\lim_{l\to\infty}\overline{R}(l)\to 0$. Define $R(l)\in \Mat_{(k+2)\times n}$ by \begin{equation}\label{eq:R_a^b}R(l)_a^b=\begin{cases}
 \overline{R}(l)_a^b,~~(a,b)<(a',b'),\\ 
 -\sgn(F(\overline{R}(l))/l,~~(a,b)=(a',b')\\
 0,~~(a,b)>(a',b')
\end{cases}.\end{equation}
Then $M(R(l))\to M,$ and, by \eqref{eq:multilinear_arg} and \eqref{eq:R_a^b}, \[\sgn(\langle \tZ(M(R(l))) Z_{i,i+1}\rangle)=\sgn\left(\frac{-\sgn(F(\overline{R}(l))}{l} F(\overline{R}(l))\right)=-.\]
\end{proof}
\section{Injectivity}
The main goal of this section is to prove
\begin{thm}\label{thm:inj}
Let $\D \in \CD_{n,k}^1$, be a chord diagram, $Z\in \Mat^{>}_{n \times(k+4)}$, and $Y \in \tZ(S_\D)$. Then, $Y$ has a unique preimage in $S_\D$ under $\tZ$. Thus, the restriction of $\tZ$ to $S_\D$ is injective.
\end{thm}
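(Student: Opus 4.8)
The strategy is induction on $(\ell,k,n)$ ordered lexicographically, following the recursive structure of $S_\D$ exactly as in the proof of \cref{thm:dominoAndBCFWparams}. The $\ell=0$ case is the injectivity theorem of \cite{even2021amplituhedron}, so assume $\ell=1$ and distinguish the three cases $\D=\pre_p\D'$, $\D=\D_L\bcfw\D_R$, and $\D=\FL(\D')$. In each case, given $Y$ or $(Y\vdots L)\in\tZ(S_\D)$, the idea is to use the strongly-positive twistor coordinates identified in \cref{cor:5_3} and \cref{cor:epsilon_signs} to read off, from the image point alone, the data that the last step of the recursion consumed — namely the BCFW parameters $\alpha,\ldots,\varepsilon$ (respectively the forward-limit parameters $\gamma_\star,\delta_\star,\varepsilon_\star$), up to a common positive scaling — and then to recover the images of the left/right (respectively the pre-$\FL$) subcells under $\tZ$, to which the inductive hypothesis applies.

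For the $\pre$ case this is immediate: $\tZ$ commutes with $\pre_p$ in the obvious sense (inserting a zero column of $Z$ changes nothing), so a preimage in $S_\D=\pre_pS_{\D'}$ is just a preimage in $S_{\D'}$ with a zero column reinserted, and uniqueness follows from the inductive hypothesis for $\D'$. For the BCFW case $\D=\D_L\bcfw\D_R$, the key geometric input is \cref{lem:4biden},~\cref{it:4bidenC}: in any representative $C\vdots D$ of a point of $S_\D$, the chord-$\D_k$ row spans $U\cap\R^{\{a,b,c,d,n\}}$ uniquely, and the left/right row spans are recovered as $U\cap\R^{N_L}$, $U\cap\R^{N_R}$. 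Applying $\tZ$, one obtains from $Y=\tZ(U)$ the subspaces $\tZ(U'_L)$, $\tZ(U'_R)$ (and the one nontrivial loopy part), together with the vector $\tZ(v)$; by \cref{cor:5_3},~\cref{it:cor5_3_BCFW} the ratios $[\alpha:\cdots:\varepsilon]$ equal the explicit chord twistors $[(-1)^{k_R}\llrr{bcdn}:\cdots:\llrr{abcd}]$ evaluated at $Y$, which pins them down up to the one positive scalar that the BCFW map already identifies (this is exactly the gauge freedom, cf. \cref{lem:scale_inv}). Inverting the explicit $y_i$ and $\pre$ operations of \cref{def:bcfw-map} then produces $\tZ_{N_L}$- and $\tZ_{N_R}$-images of points of $S_{\D_L}$, $S_{\D_R}$; here one must be slightly careful that the relevant positive matrices $Z_L,Z_R$ inherited from $Z$ are themselves positive, which is the same bookkeeping as in \cite{even2023cluster}. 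Inductive hypothesis on the (strictly smaller) subdiagrams finishes this case.

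The forward-limit case $\D=\FL(\D')$ is the delicate one and I expect it to be the main obstacle. The plan is to recover, from $(Y\vdots L)=\tZ(U\vdots V)$, the yellow and top-red rows: by \cref{lem:4biden},~\cref{it:4bidenD}, in a BCFW representative $C+D$ has a unique (up to scalar) vector supported on $\{c_\star,d_\star,n\}$ and a unique one on $\{a_0,b_0,n\}$, and via \cref{cor:5_3},~\cref{it:cor5_3_FL} together with \cref{lem:5_3},~\cref{it:5_3_FL} the forward-limit parameters $\gamma_\star,\delta_\star,\varepsilon_\star$ are determined by the yellow chord's twistors $\llrr{d_\star nAB}$, $-\llrr{c_\star nAB}$, $\llrr{c_\star d_\star AB}$ up to a positive scalar — which is precisely the gauge freedom remaining after \cref{lem:scale_inv}. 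One then has to invert the forward-limit construction of \cref{def:forward_limit}: reattach the columns $A,B$ (as in the $\varphi_{AB}$ construction of \cref{nn:varphi}), undo $\rem_l\circ x_l(1)$, undo $\addL_{AB}$, undo the $x_*,y_*$ operations, and undo $\scale_l\circ\inc_l$, recovering (up to the irrelevant choice of $\gamma_{k+1}$, exactly as in the proof of \cref{thm:dominoAndBCFWparams}) a unique point of $\tZ(S_{\D'})$ in the appropriate amplituhedron with the twisted matrix $\tilde\varphi_{AB}(Z)$. The subtlety is that this passage mixes $A$-non-permissible intermediate twistors, so I will need to track precisely which twistor coordinates of the intermediate spaces $\check\Gr$, $\hat\Gr$ are determined by $(Y\vdots L)$ — this is exactly the permissibility bookkeeping of \cref{rmk:justification} and \cref{obs:commutative_diag}, and the nonvanishing of the relevant denominators is supplied by \cref{lem:4biden_intermediate} and \cref{cor:epsilon_signs}. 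Once a unique preimage of $(Y\vdots L)$ is produced in $\tZ(S_{\D'})$ for the positive matrix $\tilde\varphi_{AB}(Z)$ (or rather a positive perturbation thereof, handled by a limiting argument as in the proof of \cref{prop:sign_under_FL_promotion}), the inductive hypothesis for $\D'$ — which has $\ell=0$ and is strictly smaller — gives uniqueness of its preimage in $S_{\D'}$, and pushing back through the (reversible) forward-limit construction yields the unique preimage in $S_\D$.
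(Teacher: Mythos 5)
Your route — induct on the generation sequence of $\D$ and invert each $\pre$, $\bcfw$, $\FL$ step — is genuinely different from what the paper does. The paper's proof of \cref{thm:inj} is one line: it is a corollary of \cref{prop:inj_bdries}, which is in turn a single, non-recursive sweep. By \cref{thm:dominoAndBCFWparams} every point of $S_\D$ has a unique domino representative; by \cref{lem:Sinj_strata} the functionaries $F^\D_{\varepsilon_h}$ of~\eqref{eq:def_F_eps} are all nonzero on $\tZ(S_\D)$; and \cref{cor:inj_when_all_epses_neq0} then reads off the domino rows from $Y\vdots L$ one at a time, in parent-to-child order, by repeated application of \cref{lem:inj_when_eps_neq0}. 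There is no induction on subdiagrams and no inversion of BCFW or forward-limit maps: each new domino row sits in a $5$-plane (or $3$-plane, for the top red and yellow rows) spanned by standard basis vectors and rows already recovered, and \cref{lem:5_3} reads off its coefficients. The price of this directness is that the non-vanishing of the red and blue $F^\D_{\varepsilon_h}$'s requires the auxiliary machinery of \cref{lem:quad_lemma}, \cref{lem:eps_red_vanishes_on_red}, \cref{lem:eps_blue_vanishes_on_blue} and \cref{cor:epsilon_signs}.

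Your BCFW step has a gap you should flag. You write that from $Y=\tZ(U)$ one ``obtains the subspaces $\tZ(U'_L)$, $\tZ(U'_R)$,'' but \cref{lem:4biden},~\cref{it:4bidenC} gives $U'_L=U\cap\R^{N_L}$ at the Grassmannian level, and this does \emph{not} transport to $\tZ(U'_L)=Y\cap\Span(Z_i:i\in N_L)$: when $n>k+4$, $Z$ has a nontrivial kernel, so a vector $uZ\in Y$ can lie in $\Span(Z_i:i\in N_L)$ without $u$ being supported on $N_L$. What \cref{cor:5_3} actually hands you from $Y$ is only the single new row $v_k$, not the left/right subspaces. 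To recover the subcell images you would need the heavier machinery of \cref{prop:bcfw_coords_promote_2_bcfw_coords} (BCFW promotion of coordinates), and you must also identify the positive matrices $Z_L,Z_R$ for the inductive hypothesis, which are not submatrices of $Z$. In the forward-limit step, $\tilde\varphi_{AB}(Z)$ is \emph{not} positive — it has zero entries — so the inductive hypothesis for $\D'$ does not directly apply; the limiting argument you cite from \cref{prop:sign_under_FL_promotion} establishes sign statements, which are open/closed conditions and behave well under limits, whereas the uniqueness you need does not automatically survive passing to a degenerate boundary $Z$. The paper's domino-reconstruction sidesteps both problems by never trying to recover whole subcells, which is why it is the shorter path.
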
We will actually prove much more - BCFW cells and their boundaries which do not intersect $\SA$ map injectively under the amplituheron map.
We will also show that the natural coordinates, domino and BCFW, are presented by functionaries, that these functionaries have fixed signs on the cell, and that moreover they evolve under the $\pre,\FL,\bcfw$ steps in the construction of BCFW cells by the corresponding promotions $\Psi^\pre,\Psi^\FL,\Psi^\bcfw.$ These results extend analogous results of \cite{even2021amplituhedron,even2023cluster}.
\subsection{A general framework for showing injectivity}
Our general method for showing injectivity is by finding an explicit preimage.
More precisely, given a chord diagram $\D,$ for a weak domino form $C\vdots D$ of $U\vdots V,$ with $\tZ(U\vdots V)=Y\vdots L,$ we first solve the rows corresponding to top chords by using \cref{lem:5_3}, and then solve descendant rows, again using \cref{lem:5_3}, but applied to vectors calculated in earlier steps of the recursion. In order to apply \cref{lem:5_3} we will need to verify that at least one of the five determinants which appear in the lemma are non zero. This will introduce some technical difficulty, that will be solved in later sections by restricting to $\Srem_\D.$

\begin{lemma}\label{lem:inj_when_eps_neq0}
Let $\D$ be a chord diagram.
Suppose that $U\vdots V\in\overline{S}_\D$ has a unique weak domino form weakly outside $\D_h=(a_h,b_h,c_h,d_h),$ and this form, with the exception of $\D_h$, can be calculated uniquely from $Y\vdots L=\tZ(U\vdots V).$ 
Define a functionary $F^\D_{\varepsilon_h}$ by
\begin{equation}\label{eq:def_F_eps}F^\D_{\varepsilon_h}=\begin{cases}
    \llrr{a_hb_hc_hd_h},&\D_h~\text{is black or purple}\\
    \llrr{a_0b_0AB},&h=0\\
    \llrr{c_\star d_\star AB},&h=\star\\
    \llrr{a_h,b_h,D_\star Z,D_0Z},&h\in\Red_\D\setminus\{0\}\\
    \llrr{a_h,b_h,u_\star Z,D_\star Z},&h\in\Blue_\D
\end{cases},\end{equation}
where for blue and non top red chords we use that we can uniquely read the chords above $\D_h.$ 
Then if $F^\D_{\varepsilon_h}\neq 0$ then also the row corresponding to $\D_h$ is uniquely determined: For $h\neq 0,\star,$ it equals $\alpha_hv_1+\beta_hv_2+\gamma_hv_3+\delta_hv_4+\varepsilon_h v_5,$ where
\[[\alpha_h:\ldots:\varepsilon_h]=[\llrr{v_2v_3v_4v_5}:-\llrr{v_1v_3v_4v_5}:\llrr{v_1v_2v_4v_5}:-\llrr{v_1v_2v_3v_5}:\llrr{v_1v_2v_3v_4}],\]
where \[v_1=\ee_{a_h}\qquad v_2=\ee_{b_h}\qquad v_5=\begin{cases}\ee_n,&\p(h)=\emptyset\\\alpha_{\p(h)}\ee_{a_{\p(h)}}+\beta_{\p(h)}\ee_{b_{\p(h)}},&\p(h)\neq\emptyset\end{cases}\]
and
\[v_3=\begin{cases}\ee_{c_h},&\D_h~\text{is black or purple}\\
D_\star,&h\in\Red_\D\\
\gamma_\star\ee_{c_\star}+\delta_\star \ee_{d_\star},&h\in\Blue_\D\end{cases} \qquad\qquad 
v_4=\begin{cases}
\ee_{d_h},&\D_h~\text{is black or purple}\\D_0,&h\in\Red_\D\\
D_\star,&h\in\Blue_\D\\
\end{cases}
\]
For $h=0,~\star,$ the row corresponding to $\D_h$ equals $\alpha_0v_1+\beta_0v_2+\varepsilon_0 v_3,$ $\gamma_\star v_1+\delta_\star v_2+\varepsilon_\star v_3,$ respectively, where the three coefficients equal 
\[[\llrr{v_2v_3}:-\llrr{v_1v_3}:\llrr{v_1v_2}],\]with 
\[v_1=\begin{cases}\ee_{a_0},&h=0\\\ee_{c_\star},&h=\star\end{cases}\qquad v_2=\begin{cases}\ee_{b_0},&h=0\\ \ee_{d_\star},&h=\star\end{cases}\qquad v_3=\begin{cases}\ee_n,&\p(h)=\emptyset\\\alpha_{\p(h)}\ee_{a_{\p(h)}}+\beta_{\p(h)}\ee_{b_{\p(h)}},&\p(h)\neq\emptyset\end{cases}\]
\end{lemma}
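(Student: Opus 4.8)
The plan is to prove \cref{lem:inj_when_eps_neq0} by a direct application of \cref{lem:5_3}, after identifying the correct five (or three) vectors whose span contains the sought-after row. First I would recall the structure of the weak domino form \cref{def:almost_and_weak_domino}: the row of $C\vdots D$ corresponding to a chord $\D_h$ is, by construction, a fixed linear combination of certain explicit vectors. For a black or purple chord the row has the form $\alpha_h\ee_{a_h}+\beta_h\ee_{b_h}+\gamma_h\ee_{c_h}+\delta_h\ee_{d_h}+\varepsilon_h v_5$, where $v_5=\ee_n$ if $\D_h$ is a top chord and $v_5=\alpha_{\p(h)}\ee_{a_{\p(h)}}+\beta_{\p(h)}\ee_{b_{\p(h)}}$ otherwise (the \emph{starting domino inherited from the parent}, cf. \cref{def:domino_entries}). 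For a red chord $\D_h$, item (d) of \cref{def:domino_entries} gives the row as $\alpha_h\ee_{a_h}+\beta_h\ee_{b_h}+\gamma_hD_\star+\delta_hD_0+\varepsilon_hv_5$; and for a blue chord, item (e) gives $\alpha_h\ee_{a_h}+\beta_h\ee_{b_h}+\gamma_h u_\star+\delta_hD_\star+\varepsilon_hv_5$, where $u_\star=\pr_{c_\star d_\star}D_\star$. For $h=0$ and $h=\star$ the ending domino is absent and the row lies in a $3$-dimensional space. In each case these are exactly the vectors $v_1,\dots,v_5$ (or $v_1,v_2,v_3$) listed in the statement.

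Next I would argue that all of $v_1,\dots,v_5$ are already determined from $Y\vdots L=\tZ(U\vdots V)$: this is precisely the hypothesis that the weak domino form \emph{weakly outside $\D_h$} is unique and readable from $Y\vdots L$ — the vectors $\ee_{a_h},\ee_{b_h},\ee_{c_h},\ee_{d_h},\ee_n$ are tautologically fixed, $D_0,D_\star$ are read off as rows \emph{outside} $\D_h$ when $h\in\Red_\D\cup\Blue_\D$ (so $0,\star\in\woutside(\D_h)$), and $v_5$ involves only the parent's starting domino, which is a chord in $\woutside(\D_h)$ as well. Then I would apply the amplituhedron map: the image of the row of $\D_h$ under $\tZ$ lies in $\Span(v_1Z,\dots,v_5Z)$ and also in the $k$-plane $Y$ (or the $(k+2)$-plane $Y+L$ when $\D_h$ is red, because $D_\star,D_0$ map into $Y+L$; or likewise for blue chords, since $u_\star,D_\star$ map into $Y+L$). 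Now \cref{lem:5_3}, item \cref{it:5_3_BCFW} for black/purple/red chords and item \cref{it:5_3_FL} for $h=0,\star$ (and a small variant for blue, where the ambient $Z$-span is $\Span(u_\star Z, D_\star Z, v_5Z)$ glued against $Y+L$), tells us that this intersection is one-dimensional, spanned by the vector $u$ whose twistor-coordinate expansion is exactly the displayed $[\alpha_h:\dots:\varepsilon_h]$, \emph{provided at least one} of the relevant $4\times4$ (resp. $2\times2$) determinants is nonzero — and that determinant is precisely $F^\D_{\varepsilon_h}$ as defined in \eqref{eq:def_F_eps}. So the hypothesis $F^\D_{\varepsilon_h}\neq0$ licenses the application, and the row of $\D_h$ is pinned down up to the overall row scaling, which is the gauge freedom of \cref{obs:gauge_domino}.

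For the cases $h\in\{0,\star\}$ the argument is the same with the $5$-term Cauchy–Binet/Plücker expansion replaced by the $3$-term one, using that $\tZ$ does not drop rank (so $v_1Z,v_2Z,v_3Z$ are genuinely in a $3$-space meeting $Y+L$ in the expected line), and the twistor coordinates $\llrr{v_iv_j}$ are the $2\times2$ minors whose nonvanishing is again governed by $F^\D_{\varepsilon_h}=\llrr{a_0b_0AB}$ or $\llrr{c_\star d_\star AB}$. I expect the main obstacle to be purely bookkeeping rather than conceptual: matching the indices $\{v_1,\dots,v_5\}$ to the Cauchy–Binet setup of \cref{lem:5_3} correctly in the red and blue cases, where the $Z$-span is a mix of standard twistor rows $Z_i$ and the ``composite rows'' $D_\star Z$, $D_0Z$, $u_\star Z$, and verifying that the notation $\llrr{a_h,b_h,D_\star Z,D_0 Z}$ in \eqref{eq:def_F_eps} means exactly the determinant of the matrix with rows $Y_1,\dots,Y_k,L_1,L_2,\dots$ that \cref{lem:5_3} item \cref{it:5_3_FL} requires. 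Once that identification is made, the proof is a one-line invocation of \cref{lem:5_3} in each case, together with the observation that the resulting $u$ is nonzero (because $\tZ$ preserves rank) so the coefficients are well-defined projectively.
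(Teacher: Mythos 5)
Your overall strategy is the same as the paper's: write the row corresponding to $\D_h$ as an explicit linear combination of the vectors $v_1,\dots,v_5$ (or $v_1,v_2,v_3$), observe that its image under $\tZ$ lies in an intersection of the form covered by \cref{lem:5_3}, and read off the coefficients from the twistor determinants. That is indeed how the proof goes.

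However, there is a genuine confusion in your handling of the red and blue cases. The paper treats \emph{all} chords $\D_h$ with $h\neq 0,\star$ — black, purple, red, \emph{and} blue — uniformly by applying \cref{lem:5_3}, item \cref{it:5_3_BCFW}, with the $k$-plane $Y$ and the five vectors $v_1Z,\dots,v_5Z$ playing the roles of $Y$ and $Z_1,\dots,Z_5$. The key point is that \cref{lem:5_3} is agnostic about the provenance of $Z_1,\dots,Z_5$: they need not be rows of the $Z$ matrix, and for red/blue chords some of them are the composite rows $D_\star Z$, $D_0Z$, $u_\star Z$. In particular $F^\D_{\varepsilon_h}=\llrr{a_h,b_h,D_\star Z,D_0 Z}$ (red) and $\llrr{a_h,b_h,u_\star Z,D_\star Z}$ (blue) are $4$-twistors computed with respect to the $k$-plane $Y$ — the determinant of the matrix with rows $y_1,\dots,y_k$ followed by four (possibly composite) rows — \emph{not} determinants involving $l_1,l_2$ as required by item \cref{it:5_3_FL}. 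Your last paragraph asserts the opposite, and your suggestion of ``a small variant for blue, where the ambient $Z$-span is $\Span(u_\star Z,D_\star Z,v_5 Z)$ glued against $Y+L$'' does not work: the blue row's image $C_hZ$ also has components along $Z_{a_h}$ and $Z_{b_h}$, so a $3$-dimensional ambient $Z$-span cannot contain it, and intersecting with $Y+L$ rather than $Y$ loses uniqueness (since $D_\star Z\in L\subset Y+L$ the intersection would no longer be a line). Likewise, your parenthetical proposing $Y+L$ for red chords ``because $D_\star,D_0$ map into $Y+L$'' is a non sequitur — the point is that $C_h$ is a row of $C$, so $C_hZ\in Y$, and that is the plane whose intersection with $\Span(v_1Z,\dots,v_5Z)$ is a line. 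The space $Y+L$ is used only for $h\in\{0,\star\}$, via item \cref{it:5_3_FL}, which you do get right.
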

\begin{rmk}\label{rmk:EPSES}Note that while \eqref{eq:def_F_eps} depends on the unique domino form \emph{strictly} outside $\D_h,$ and on $Y\vdots L,Z$ for $h\notin\Red\cup\Blue\setminus\{0\},$ the expression for the \eqref{eq:def_F_eps} can be calculated directly from the knowledge of $Y\vdots L, Z,$ without the additional assumption on uniquely recovering other rows. For non top red or blue chords this expression involves only $Y\vdots L,Z$ and the top red and yellow rows, which are assumed to be calculated. 
In addition, for a top chord, if $F^\D_{\varepsilon_h}\neq 0$, the lemma implies that we can calculate the corresponding row in the preimage directly, without the need to recover any other row. 
\end{rmk}
\cref{lem:inj_when_eps_neq0} has the following corollary.
\begin{cor}\label{cor:inj_when_all_epses_neq0}
Assume $U\vdots V\in\overline{S}_\D$ has a unique weak $\D$-domino representative $C\vdots D,$ and that $\tZ(U\vdots V)=Y\vdots L.$ Assume that all expressions \eqref{eq:def_F_eps} for $h\in[k+1]$ are non zero.
    Then $C\vdots D$ is recovered, uniquely up to scaling the rows, via iterated application of \cref{lem:inj_when_eps_neq0}, performed in parent-to-child order. Moreover, $U\vdots V$ is the unique preimage of $Y\vdots L$ among the elements of $\overline{S}_\D$ which have a unique weak domino form and the expressions \eqref{eq:def_F_eps} do not vanish for them.  
\end{cor}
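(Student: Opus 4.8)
The statement to prove is \cref{cor:inj_when_all_epses_neq0}, which asserts that if $U\vdots V\in\overline{S}_\D$ has a unique weak $\D$-domino representative $C\vdots D$, and all the functionaries $F^\D_{\varepsilon_h}$ from \eqref{eq:def_F_eps} are nonzero at $Y\vdots L=\tZ(U\vdots V)$, then $C\vdots D$ is recovered uniquely (up to row scaling) by iterating \cref{lem:inj_when_eps_neq0} in parent-to-child order, and $U\vdots V$ is the unique such preimage of $Y\vdots L$.

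The plan is to argue by induction on the partial order of chords given by the ancestor relation, processing chords from parents to children (and handling the roots $\D_0,\D_\star$ as the "topmost" rows according to the conventions of \cref{terminology:cd}, since the top red chord and yellow chord are treated as sharing relations). First I would set up the induction hypothesis: after processing all strict ancestors of $\D_h$, all rows $C_i$ (or $D_i$ for $i\in\{0,\star\}$) corresponding to chords $\D_i\in\ahead(\D_h)$ or strictly outside, in particular the parent row $C_{\p(h)}$ (or $D_0$ when $\p(h)=0$), and the rows $D_0,D_\star$ themselves, have been determined uniquely up to scaling from $Y\vdots L$ and $Z$. This is exactly the hypothesis "$U\vdots V$ has a unique weak domino form weakly outside $\D_h$, recoverable (except $\D_h$) from $Y\vdots L$" demanded by \cref{lem:inj_when_eps_neq0}; I would note that the topmost rows (top chords, $\D_0$, $\D_\star$) serve as the base case, where the relevant vector $v_5$ or $v_3$ is simply $\ee_n$ and \cref{lem:inj_when_eps_neq0} applies directly with no dependence on other rows, per \cref{rmk:EPSES}. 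The inductive step then invokes \cref{lem:inj_when_eps_neq0} with $h$: its hypothesis is met by the induction hypothesis together with uniqueness of the weak domino form (given), and its conclusion — $F^\D_{\varepsilon_h}\neq 0$, which we are assuming — gives that the row for $\D_h$ is uniquely determined up to scaling by the explicit twistor formula.

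A subtlety I would address carefully is that \eqref{eq:def_F_eps} for $h\in\Red_\D\setminus\{0\}$ or $h\in\Blue_\D$ involves $D_\star$, $D_0$ (and $u_\star = \pr_{c_\star d_\star}D_\star$), i.e., the top red and yellow rows; by \cref{rmk:EPSES} these only require the already-recovered rows $D_0,D_\star$, so they sit consistently in the parent-to-child (equivalently, ancestors-first) order, since every red/blue chord other than $\D_0$ has the yellow/top red chord among its ancestors by the convention in \cref{terminology:cd}. I would also note that for black and purple chords the formula uses only $\ee_{a_h},\ee_{b_h},\ee_{c_h},\ee_{d_h}$ and the parent's starting domino, all available. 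Thus by induction every row is pinned down uniquely up to its individual scaling freedom, which is precisely the gauge of \cref{obs:gauge_domino} (the $\varepsilon$-gauge along the parent-child tree, and the $D_0,D_\star$ scalings). Assembling these rows gives back $C\vdots D$ up to gauge, and hence $U\vdots V = \PROJ(C\vdots D)$ is determined.

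For the uniqueness ("Moreover") clause: suppose $U'\vdots V'\in\overline{S}_\D$ also has a unique weak domino representative $C'\vdots D'$ with all \eqref{eq:def_F_eps} nonvanishing and $\tZ(U'\vdots V')=Y\vdots L$. Since the recovery procedure above takes as input only $Y\vdots L$ and $Z$ and produces a unique output up to gauge, and since $C'\vdots D'$ must be that output (it satisfies all the hypotheses at each inductive stage), we get $C'\vdots D'$ equals $C\vdots D$ up to gauge, hence $U'\vdots V'=U\vdots V$. The main obstacle — which is in fact already discharged by the standing assumption of the corollary — is verifying that the recursion can be started and continued, i.e., that at each stage enough twistors are nonzero to invoke \cref{lem:inj_when_eps_neq0}; here that is handed to us as a hypothesis, so the proof of the corollary itself is a short bookkeeping argument, the real work having been done in \cref{lem:5_3}, \cref{lem:inj_when_eps_neq0}, and the forthcoming analysis (in later sections) guaranteeing nonvanishing on $\Srem_\D$.
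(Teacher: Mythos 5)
Your proof is correct and takes exactly the route the paper intends (the paper in fact leaves \cref{cor:inj_when_all_epses_neq0} without an explicit proof, treating it as immediate from \cref{lem:inj_when_eps_neq0}): a parent-to-child induction whose base case is the top chords (including $\D_0$, $\D_\star$, where the relevant vector is $\ee_n$), whose inductive step invokes \cref{lem:inj_when_eps_neq0} with the hypothesis supplied by the earlier stages, and which uses \cref{rmk:EPSES} to see that the red/blue formulas only consult already-recovered rows. The only point worth flagging is a notational wrinkle in the corollary's statement: the index range ``$h\in[k+1]$'' does not literally include $\star$, yet the recovery of $D_\star$ also needs $F^\D_{\varepsilon_\star}\neq 0$; your proof silently (and correctly) treats $\star$ as in scope, consistent with the range $[k+1]\cup\{\star\}$ used in the paper's \cref{lem:Sinj_strata}.
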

Using the second item of \cref{lem:uniqueness_or_SA} and \cref{prop:CD bdries} we also have the following corollary
\begin{cor}\label{cor:inj_when_epses_above_yellow_red_neq0}
If $U\vdots V\in\overline{S}_\D\setminus\SA$ or $U\vdots V\in S_\D,$ then the rows $D_0,D_\star$ and $C_h,$ for every $\D_h$ which does not descend from $\D_0$ are uniquely determined from $Y\vdots L,Z.$ Moreover, when $\ell=1$, for all these chords, but $\D_0,\D_\star$, the indices $A,B$ do not appear in the functionaries which describe the solutions.
\end{cor}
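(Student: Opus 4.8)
The plan is to prove \cref{cor:inj_when_epses_above_yellow_red_neq0} as a consequence of \cref{cor:inj_when_all_epses_neq0} (and \cref{lem:inj_when_eps_neq0}) applied only to the chords that are \emph{weakly outside} the top red chord, together with \cref{lem:uniqueness_or_SA} and \cref{prop:CD bdries}. First I would recall that by the second item of \cref{lem:uniqueness_or_SA}, an element $U\vdots V\in\overline S_\D\setminus\SA$ has a unique almost domino form \emph{weakly outside} $\D_0$, which is in fact a weak domino form weakly outside $\D_0$; when $U\vdots V\in S_\D$ this follows instead from \cref{thm:dominoAndBCFWparams} (or \cref{lem:Srem_cases}). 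In particular the rows $D_0,D_\star$ and the rows $C_h$ for $\D_h\in\outside(\D_0)$ (i.e.\ not descending from $\D_0$) are uniquely determined by $U\vdots V$ up to scaling, and so their spans are intrinsic to the point; it remains to show they can be read off from $Y\vdots L$ and $Z$.

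Next I would run the recursion of \cref{lem:inj_when_eps_neq0} in parent-to-child order, but restricted to the sub-poset $\woutside(\D_0)$ of chords (including $0$ and $\star$). At each such chord $\D_h$ the input hypothesis of \cref{lem:inj_when_eps_neq0} — that the weak domino form \emph{weakly outside} $\D_h$ is unique and, except for $\D_h$ itself, already computed from $Y\vdots L$ — is guaranteed by the previous paragraph together with the induction hypothesis. So the only thing to check is that the relevant functionary $F^\D_{\varepsilon_h}$ of \eqref{eq:def_F_eps} is nonzero at $Y\vdots L$. Here is where I would invoke \cref{cor:epsilon_signs}: for a black or purple chord $F^\D_{\varepsilon_h}=\llrr{a_hb_hc_hd_h}$, for $h=\star$ it is $\llrr{c_\star d_\star AB}$, and for a red chord it is $\llrr{a_hb_hAB}$, all of which \cref{cor:epsilon_signs} shows are strongly positive (hence nonzero) on the image of $S_\D$, and by continuity weakly positive — but in fact the Cauchy–Binet argument in the proof of \cref{cor:epsilon_signs}, relying on \cref{lem:bcfw_after_fl} (items \cref{it:bcfw_after_fl_chords}, \cref{it:bcfw_after_fl_yellow}, \cref{it:bcfw_after_fl_reds}), produces a nonzero summand for \emph{every} point whose positroid is $\Pos_\D$, in particular for every point of $\overline S_\D\setminus\SA$ by \cref{obs:SA_strata} and \cref{prop:CD bdries}. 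For $h=0$ one uses $\llrr{a_0b_0AB}$, which is strongly positive by \cref{lem:sign_of_chord_twistors},~\cref{it:chord_FL} using that $\{a_0,b_0\}$ (equivalently $\{a_{\p(0)},b_{\p(0)}\}$ in the non-top case, via \cref{lem:bcfw_after_fl},~\cref{it:bcfw_after_fl_reds}) is coindependent for $\PosD(S_\D)$; and for a blue chord $F^\D_{\varepsilon_h}=\llrr{a_h,b_h,u_\star Z,D_\star Z}$ is handled by the same Cauchy–Binet expansion using \cref{lem:bcfw_after_fl},~\cref{it:bcfw_after_fl_reds} again. Thus \cref{lem:inj_when_eps_neq0} applies and computes the row of $\D_h$ from $Y\vdots L,Z$ (and the already-recovered ancestor rows), completing the induction over $\woutside(\D_0)$.

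Finally, for the last assertion — that for the chords $\D_h\in\outside(\D_0)$ (i.e.\ all of $\woutside(\D_0)$ except $\D_0,\D_\star$) the indices $A,B$ do not appear in the functionaries describing the solution — I would simply note that for such chords $\D_h$ is black or purple: a red or blue chord by definition descends from $\D_0$ (\cref{terminology:cd}), so it is not in $\outside(\D_0)$. For a black or purple chord the vectors $v_1,\dots,v_5$ of \cref{lem:inj_when_eps_neq0} are $\ee_{a_h},\ee_{b_h},\ee_{c_h},\ee_{d_h}$ and $\ee_n$ or $\alpha_{\p(h)}\ee_{a_{\p(h)}}+\beta_{\p(h)}\ee_{b_{\p(h)}}$, none of which involves $A$ or $B$, and inductively the ancestor data feeding into $v_5$ is likewise free of $A,B$ since all strict ancestors of a black/purple chord in $\outside(\D_0)$ are again black or purple (if an ancestor were the top red or yellow chord, $\D_h$ would descend from $\D_0$). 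Hence every twistor $\llrr{v_iv_jv_kv_l}$ appearing in the solution for $\D_h$ is a twistor in indices from $N\setminus\{A,B\}$ (possibly combined with domino coefficients already computed without $A,B$), which is what the claim asserts. The main obstacle is the bookkeeping in the second paragraph: carefully verifying that the nonvanishing of each $F^\D_{\varepsilon_h}$ holds on all of $\overline S_\D\setminus\SA$ and not merely on $S_\D$, which requires tracking the Cauchy–Binet positivity through \cref{obs:strong_sign_and_strata}, \cref{obs:SA_strata} and \cref{prop:CD bdries} rather than just citing \cref{cor:epsilon_signs} verbatim.
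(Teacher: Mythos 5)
Your overall route matches the paper's: use \cref{lem:uniqueness_or_SA} (second item) for uniqueness of the weak domino form weakly outside $\D_0$ in the closure case, \cref{thm:dominoAndBCFWparams} for the $U\vdots V\in S_\D$ case, run the recursion of \cref{lem:inj_when_eps_neq0} over $\woutside(\D_0)$, and then observe that for chords in $\outside(\D_0)$ the twistor expressions involve no $A,B$. However, there is a gap in how you justify the non-vanishing of the $F^\D_{\varepsilon_h}$ for $U\vdots V\in\overline S_\D\setminus\SA$. You assert that the Cauchy–Binet argument "produces a nonzero summand for \emph{every} point whose positroid is $\Pos_\D$, in particular for every point of $\overline S_\D\setminus\SA$ by \cref{obs:SA_strata} and \cref{prop:CD bdries}." But points of $\overline S_\D\setminus\SA$ do \emph{not} all have positroid $\Pos_\D$: the closure contains many lower-dimensional boundary strata whose positroids are strict sub-positroids. \cref{obs:SA_strata} only says that $\overline S_\D\cap\SA$ is a union of positroid strata, and \cref{prop:CD bdries} characterizes the vanishing loci of boundary twistors; neither implies the positroid of a boundary stratum is $\Pos_\D$, so your "in particular" step does not go through. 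The Cauchy–Binet positivity derived from the bases in \cref{lem:bcfw_after_fl} only controls the sign on the positroid cell of $S_\D$.

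The fix is actually simpler and is what the paper does: observe that for every $\D_h\in\woutside(\D_0)$ the relevant $F^\D_{\varepsilon_h}$ is a \emph{boundary twistor} in the sense of \cref{nn:bdry_twistors}. For $h\notin\{0,\star\}$ the set $\{a_h,b_h,c_h,d_h\}$ is a union of two cyclically consecutive pairs, so $\llrr{a_hb_hc_hd_h}$ is a tree boundary twistor; and $\llrr{a_0b_0AB},\llrr{c_\star d_\star AB}$ are loop boundary twistors. By \cref{prop:CD bdries}, a boundary twistor vanishes at a point of $\Gr_{k,n;1}^\geq$ precisely when that point is in $\SA$; since $U\vdots V\notin\SA$, every such twistor is nonzero at $U\vdots V$, with no need to track Cauchy–Binet positivity through boundary strata. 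Reserve \cref{cor:epsilon_signs} for the second case $U\vdots V\in S_\D$, where it gives strong positivity on the image of the cell. One small further point: your second paragraph spends several lines on red and blue chords (and $\llrr{a_hb_hAB}$ for red $h$), which is moot — as you correctly note at the end, all red, blue, and purple chords other than $\D_0$ descend from $\D_0$ and are excluded from $\woutside(\D_0)$; indeed $\outside(\D_0)$ consists only of black chords (not "black or purple"), since every purple chord has $a_i>a_0$ and ends at $(c_\star,d_\star)$.
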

\begin{proof}
In the first case, by \cref{lem:uniqueness_or_SA} there is a unique weak domino form weakly outside the top red chord. Since $U\vdots V\notin\SA$, by \cref{prop:CD bdries} the boundary twistors $\llrr{a_hb_hc_hd_h},~h\notin\Red\cup\Blue,$ $\llrr{a_0b_0AB},\llrr{c_\star d_\star AB}$ are non zero. We can therefore apply \cref{lem:inj_when_eps_neq0} iteratively to uniquely recover all rows which do not descend from $\D_0$. In the second case we act in a similar manner, but use \cref{thm:dominoAndBCFWparams} for the uniqueness of the domino form, and \cref{cor:epsilon_signs} for the non-vanishing of the twistors. Since $A,B$ are not in the index set, looking at the expressions of \cref{lem:inj_when_eps_neq0}, we see that these indices are only introduced for the yellow and top red chord, and their descendants.
\end{proof}
\begin{proof}[Proof of \cref{lem:inj_when_eps_neq0}]
For $h\neq\{0,\star\}$ the domino row $C_h$ can be written as $\alpha_hv_1+\beta_hv_2+\gamma_hv_3+\delta_hv_4+\varepsilon_h v_5,$ where $\alpha_h,\ldots,\varepsilon_h$ are the domino variables and the vectors $v_1,\ldots,v_5$ are as in the statement of the lemma. Note that all of them have been calculated already. $C_hZ$ lies in the intersection of $Y$ and $\Span(v_1Z,\ldots,v_5Z).$
Note that $F^\D_{\varepsilon_h}=\llrr{v_1Z,v_2Z,v_3Z,v_4Z},$ and is non zero by assumption.

We now apply \cref{lem:5_3},~\cref{it:5_3_BCFW} with this $Y,$ and $v_1Z,\ldots,v_5Z$ playing the role of $Z_1,\ldots,Z_5,$ respectively. Since $F^\D_{\varepsilon_h}\neq 0,$ the coefficients $\alpha_h,\ldots,\varepsilon_h$ are uniquely recovered as in the statement of the lemma.

For $D_0,D_\star$ we act similarly, only that we apply \cref{lem:5_3},~\cref{it:5_3_FL} this time, with $Y+L$ playing the role of $Y$ in the lemma, and $v_1Z,v_2Z,v_3Z$ playing the role of $Z_1,Z_2,Z_3.$ Since $F^\D_{\varepsilon_h}=\llrr{v_1Z,v_2Z,A,B}\neq 0,$ we obtain the result.
\end{proof}
\subsection{Injectivity for BCFW cells}
By \cref{cor:inj_when_epses_above_yellow_red_neq0}, for every $U\vdots V\in\overline{S}_\D\setminus\SA$ we can define the vectors \[Z_\star=D_\star Z,~Z_0=D_0Z,~\Zu=u_\star Z=(\pr_{c_\star d_\star}D_\star)Z,\]uniquely up to scalings, where $D_0,D_\star$ are the rows of $D$ corresponding to the top red and yellow chord, for an arbitrary domino limit $C\vdots D.$
\begin{obs}\label{obs:well_def_red_blue_outside_SA}
Consider $\D\in\CD_{n,k}^1$. Then on $S_\D\cup(\overline{S}_\D\setminus\SA)$ the \emph{vector functionaries} describing $Z_\star, \Zu,Z_0$ are well defined up to scaling, and thus also the functionaries
\[\Fdelr^\D := \llrr{Z_{a_0}Z_{b_0}Z_\star Z_0},\qquad\qquad\Fepsh^\D=\llrr{Z_{a_h}Z_{b_h}\Zu Z_\star},~\D_h~\text{is blue}.\]
\end{obs}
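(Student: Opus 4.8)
The final statement to prove is \cref{obs:well_def_red_blue_outside_SA}, which asserts that on $S_\D\cup(\overline S_\D\setminus\SA)$ the vector functionaries for $Z_\star,\Zu,Z_0$ are well defined up to scaling, hence the scalar functionaries $\Fdelr^\D$ and $\Fepsh^\D$ (for blue $\D_h$) are well defined up to scaling.

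The plan is to reduce everything to the uniqueness statements already established and to the Cauchy--Binet dictionary between twistors and domino data. First I would recall from \cref{cor:inj_when_epses_above_yellow_red_neq0} that for $U\vdots V\in S_\D\cup(\overline S_\D\setminus\SA)$ the rows $D_0,D_\star$ of any domino limit $C\vdots D$ are determined uniquely up to row scaling: in the case $U\vdots V\notin\SA$ this is \cref{lem:uniqueness_or_SA}(2) combined with \cref{prop:CD bdries} to guarantee that the boundary twistors $\llrr{a_0b_0AB},\llrr{c_\star d_\star AB}$ do not vanish, while in the case $U\vdots V\in S_\D$ it follows from \cref{thm:dominoAndBCFWparams} (uniqueness of the domino form) together with \cref{cor:epsilon_signs} for non-vanishing. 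Once $D_0,D_\star$ are fixed up to scaling, $u_\star=\pr_{c_\star d_\star}D_\star$ is likewise determined up to scaling, and hence the vectors $Z_0=D_0Z$, $Z_\star=D_\star Z$, $\Zu=u_\star Z$ are determined up to scaling (here one uses that the amplituhedron map does not lose rank, so these images are nonzero, by the part of \cite{karp2017sign} quoted in the proof of \cref{thm:ampli_well_def}; actually one only needs that $Z_0,Z_\star$ are nonzero, which again follows from non-vanishing of the relevant boundary twistors via \cref{prop:CD bdries}).

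Next I would observe that $Z_0,Z_\star,\Zu$ should really be thought of as \emph{functionaries}: by \cref{lem:inj_when_eps_neq0} and \cref{rmk:EPSES}, the coordinates of $D_0$ and $D_\star$ in the distinguished positions $\{a_0,b_0,n\}$ and $\{c_\star,d_\star,n\}$ (or inherited-domino positions if $\D_0$ is not a top chord) are themselves $2\times 2$ twistor determinants of the form $\llrr{v_iv_j}$, which are permissible Plücker/twistor functionaries on $\CC\Ampl_{n,k,4}^1$, hence descend to the amplituhedron. Therefore $Z_\star=\gamma_\star\ee_{c_\star}+\delta_\star\ee_{d_\star}+\varepsilon_\star\ee_n$ (appropriately modified at the inherited domino) has its entries given by such functionaries, up to a common scale, and similarly for $Z_0$ and $\Zu=\gamma_\star\ee_{c_\star}+\delta_\star\ee_{d_\star}$. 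Since $\Fdelr^\D=\llrr{Z_{a_0}Z_{b_0}Z_\star Z_0}$ and $\Fepsh^\D=\llrr{Z_{a_h}Z_{b_h}\Zu Z_\star}$ are multilinear in these vectors (with the remaining slots filled by fixed rows $Z_{a_0},Z_{b_0},Z_{a_h},Z_{b_h}$ of $Z$), they are polynomial expressions in permissible twistors, hence permissible functionaries, and are well defined up to the scalar ambiguity coming from the row-scalings of $D_0,D_\star$. This is exactly the assertion.

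The main obstacle, and the only genuinely substantive point, is ensuring that the $Z_0,Z_\star,\Zu$ are unambiguously defined — i.e.\ that different domino limits for the same $U\vdots V$ give proportional $D_0,D_\star$ — across the whole set $S_\D\cup(\overline S_\D\setminus\SA)$, including boundary points where the domino form may degenerate or fail uniqueness for other rows. This is precisely where I would lean on \cref{lem:uniqueness_or_SA}(2), which says that even on $\overline S_\D\setminus\SA$ there is a \emph{unique} almost domino form weakly outside $\D_0$ (hence $D_0,D_\star$ are pinned down), and on the disjointness/uniqueness packaged in \cref{cor:reg_domino_strata}. A secondary subtlety is that $\Fepsh^\D$ is only asserted for blue $\D_h$, and there one must know the chords above $\D_h$ are recoverable — but \cref{cor:inj_when_epses_above_yellow_red_neq0} gives exactly that $D_0,D_\star$ and all non-descendants of $\D_0$ are recovered, and $\Fepsh^\D$ only involves $a_h,b_h$ and the yellow data $u_\star,D_\star$, so no further recovery is needed. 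Assembling these already-proved ingredients gives the statement with essentially no computation.
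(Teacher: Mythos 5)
Your argument is correct and matches the paper's (essentially one-line) justification: the observation is stated immediately after the sentence citing \cref{cor:inj_when_epses_above_yellow_red_neq0}, which already asserts that $D_0,D_\star$ — and hence $Z_0,Z_\star,\Zu$ — are uniquely determined up to scaling on $\overline S_\D\setminus\SA$, and the same corollary covers $S_\D$ too (via \cref{thm:dominoAndBCFWparams} and \cref{cor:epsilon_signs}); your expansion into functionary form via \cref{lem:inj_when_eps_neq0} and \cref{rmk:EPSES} is exactly the route implicitly intended. One small slip: you write $Z_\star=\gamma_\star\ee_{c_\star}+\delta_\star\ee_{d_\star}+\varepsilon_\star\ee_n$, which is the formula for $D_\star$ rather than $Z_\star=D_\star Z=\gamma_\star Z_{c_\star}+\delta_\star Z_{d_\star}+\varepsilon_\star Z_n$ (suitably adapted at the inherited domino); the intent is clear from context and the rest of the paragraph is unaffected.
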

\begin{obs}\label{lem:quad_lemma}On $S_\D\cup(\overline{S}_\D\setminus\SA)$ $\Fdelr^\D$ is permissible and
\[\llrr{\cdot,\cdot,Z_\star,Z_0}=\llrr{\cdot,\cdot,A,B}\llrr{\hat{n}a_0b_0|AB|c_\star d_\star \hat{n}}=
\frac{\llrr{\cdot,\cdot,A,B}}{\llrr{a_0b_0AB}}\Fdelr^\D,\]
where $Z_{\hat{n}}$ denoted the starting domino of $\D_{\p(0)}$ times $Z,$ that is, $Z_n,$ if $\p(0)=\emptyset,$ and otherwise it is $\alpha_{\p(0)}Z_{a_{\p(0)}}-\beta_{\p(0)}Z_{b_{\p(0)}}$.
\end{obs}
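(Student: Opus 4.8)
The statement to prove is \cref{lem:quad_lemma}: on $S_\D\cup(\overline{S}_\D\setminus\SA)$ the functionary $\Fdelr^\D$ is permissible, and one has the identity
\[
\llrr{\cdot,\cdot,Z_\star,Z_0}=\llrr{\cdot,\cdot,A,B}\,\llrr{\hat{n}a_0b_0|AB|c_\star d_\star\hat{n}}=\frac{\llrr{\cdot,\cdot,A,B}}{\llrr{a_0b_0AB}}\,\Fdelr^\D.
\]

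The plan is to first unpack what the vector functionaries $Z_\star=D_\star Z$ and $Z_0=D_0 Z$ are. By \cref{cor:inj_when_epses_above_yellow_red_neq0}, on $S_\D\cup(\overline S_\D\setminus\SA)$ the rows $D_0,D_\star$ of any domino limit are determined up to scaling, and moreover by the explicit formulas in \cref{lem:inj_when_eps_neq0} (the $h=0$ and $h=\star$ cases) they are given by $D_\star=\gamma_\star\ee_{c_\star}+\delta_\star\ee_{d_\star}+\varepsilon_\star v_3^\star$ and $D_0=\alpha_0\ee_{a_0}+\beta_0\ee_{b_0}+\varepsilon_0 v_3^0$, where $v_3^\star=v_3^0=Z_{\hat n}$ in the sense that $v_3$ is $\ee_n$ when $\p(0)=\emptyset$ and $\alpha_{\p(0)}\ee_{a_{\p(0)}}+\beta_{\p(0)}\ee_{b_{\p(0)}}$ otherwise. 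So after applying $Z$, up to an overall nonzero scalar on each row, $Z_\star\in\Span(Z_{c_\star},Z_{d_\star},Z_{\hat n})$ and $Z_0\in\Span(Z_{a_0},Z_{b_0},Z_{\hat n})$, with the coefficients of $Z_{c_\star},Z_{d_\star}$ in $Z_\star$ and of $Z_{a_0},Z_{b_0}$ in $Z_0$ all nonzero (the latter by the nonvanishing of $\llrr{a_0b_0AB}$ and $\llrr{c_\star d_\star AB}$ outside $\SA$, via \cref{prop:CD bdries} or \cref{cor:epsilon_signs}). In particular $\Fdelr^\D=\llrr{Z_{a_0}Z_{b_0}Z_\star Z_0}$; but $Z_0$ lies in the span of $Z_{a_0},Z_{b_0},Z_{\hat n}$, so in this determinant we may replace $Z_0$ by (a nonzero multiple of) $Z_{\hat n}$, giving $\Fdelr^\D=(\text{nonzero scalar})\cdot\llrr{Z_{a_0}Z_{b_0}Z_\star Z_{\hat n}}$, and similarly $Z_\star$ may be replaced by its projection $\Zu=\pr_{c_\star d_\star}D_\star\cdot Z$ since the $Z_{\hat n}$ component drops out against $Z_{\hat n}$. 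This reduces everything to determinants of the rows $y_1,\dots,y_k$ together with four explicit $Z$-rows $Z_{a_0},Z_{b_0},Z_{c_\star},Z_{d_\star},Z_{\hat n}$ (five indices, four slots), which is exactly the setting of a Plücker-type identity.

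Next I would carry out the linear-algebra identity itself. The quadratic $\llrr{\hat{n}a_0b_0|AB|c_\star d_\star\hat{n}}$ in the notation of \cref{nn:useful_quad} expands, via its definition $\llrr{abc|de|fgh}=\llrr{abcd}\llrr{efgh}-\llrr{abce}\llrr{dfgh}$, to a combination of permissible twistors of the form $\llrr{\hat n a_0 b_0 AB}$ and the like. On the other hand, by the Cauchy--Binet / Laplace expansion and the interpretation of $\llrr{\cdot\cdot AB}$ as a $(k+2)$-row determinant against $Y+L$, I would show $\llrr{\cdot,\cdot,Z_\star,Z_0}$ — a determinant of $y_1,\dots,y_k$ and four vectors, two of which are spanned by the five rows $Z_{a_0},Z_{b_0},Z_{c_\star},Z_{d_\star},Z_{\hat n}$ — equals $\llrr{\cdot,\cdot,A,B}$ times exactly that quadratic in twistors, using the Plücker relations for twistors from \cref{obs:relations_for_functionaries} (the loopy six-index relations $\llrr{abcd}\llrr{efAB}-\llrr{abce}\llrr{dfAB}+\cdots=0$ are precisely what controls the interplay between the $4$-index twistors and the $2$-index loopy twistors here). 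The final equality $\llrr{\hat{n}a_0b_0|AB|c_\star d_\star\hat{n}}=\frac{1}{\llrr{a_0b_0AB}}\Fdelr^\D$ is then just the identity applied with $(\cdot,\cdot)=(a_0,b_0)$, noting $\llrr{a_0 b_0 Z_\star Z_0}=\Fdelr^\D$ by definition, plus the fact that $\llrr{a_0b_0AB}\neq 0$ on $S_\D\cup(\overline S_\D\setminus\SA)$ so the division is legitimate. Permissibility of $\Fdelr^\D$ follows: the right-hand side $\llrr{a_0b_0AB}\cdot\llrr{\hat n a_0 b_0|AB|c_\star d_\star\hat n}$ is visibly a polynomial in permissible twistors (every twistor appearing either contains both of $A,B$ or neither, once one substitutes the two cases $\p(0)=\emptyset$ versus $\p(0)\neq\emptyset$ for $Z_{\hat n}$), and the $A$-degrees balance, so the quotient representation of $\Fdelr^\D$ is permissible in the sense of \cref{def:pre_perm} / \cref{def:functionary}.

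The main obstacle I anticipate is bookkeeping around $Z_{\hat n}$: when $\p(0)\neq\emptyset$, $Z_{\hat n}=\alpha_{\p(0)}Z_{a_{\p(0)}}-\beta_{\p(0)}Z_{b_{\p(0)}}$ is itself a combination whose coefficients have already been determined (and in particular are nonzero), so I must confirm that substituting it into the twistor notation $\llrr{\cdots\hat n\cdots}$ produces only permissible twistors and does not introduce a $B$-without-$A$ twistor or break the degree balance — this is really the point of writing $\hat n$ rather than $n$, and it matches the shape of the "starting domino of $\D_{\p(0)}$" that reappears throughout the earlier analysis (e.g.\ in \cref{lem:ww}, \cref{obs:ww}). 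The other delicate point is that the identity must be stated so that it holds not just pointwise for each $Z$ but as an identity of functionaries on the relevant subspace; but since both sides are polynomial (resp.\ rational) in Plücker coordinates of $Z$ and of $C\vdots D$, and the vector functionaries $Z_\star,Z_0,\Zu$ are well defined up to scaling there by \cref{obs:well_def_red_blue_outside_SA}, it suffices to verify the underlying multilinear-algebra identity once, which is the routine Plücker computation sketched above.
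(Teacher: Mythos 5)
Your approach is essentially the same as the paper's: use \cref{cor:inj_when_epses_above_yellow_red_neq0} and \cref{lem:inj_when_eps_neq0} to write the explicit expansions
\[
Z_\star=\llrr{d_\star\hat n AB}Z_{c_\star}-\llrr{c_\star\hat n AB}Z_{d_\star}+\llrr{c_\star d_\star AB}Z_{\hat n},\qquad
Z_0=\llrr{b_0\hat n AB}Z_{a_0}-\llrr{a_0\hat n AB}Z_{b_0}+\llrr{a_0 b_0 AB}Z_{\hat n},
\]
substitute into $\llrr{\cdot,\cdot,Z_\star,Z_0}$, collapse via Pl\"ucker relations, and then specialize $(\cdot,\cdot)=(a_0,b_0)$. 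The ``drop out'' observation you insert for computing $\Fdelr^\D$ is a harmless side-remark and does not change the structure; the general identity still requires the multilinear Pl\"ucker computation, as you acknowledge.

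There is, however, a concrete error in your permissibility argument. You claim that $\llrr{a_0b_0AB}\cdot\llrr{\hat n a_0 b_0|AB|c_\star d_\star\hat n}$ is \emph{visibly} a polynomial in permissible twistors. It is not: by \cref{nn:useful_quad}, $\llrr{\hat n a_0 b_0|AB|c_\star d_\star\hat n}=\llrr{\hat n a_0 b_0 A}\llrr{B c_\star d_\star\hat n}-\llrr{\hat n a_0 b_0 B}\llrr{A c_\star d_\star\hat n}$, and each of these four twistors contains \emph{exactly one} of $A,B$, hence none of them is permissible. (Substituting the two cases for $\hat n$, as you suggest, does not help --- $Z_{\hat n}$ never involves $A$ or $B$, so it does not balance the lone $A$ or lone $B$ already present.) You need an additional step: either apply a Pl\"ucker relation to rewrite the product in the form $-\llrr{\hat n a_0 b_0|c_\star d_\star|AB\hat n}$, whose twistor factors $\llrr{\hat n a_0 b_0 c_\star}$, $\llrr{d_\star AB\hat n}$, etc.\ have both or neither of $A,B$ (this is what the paper does), or, more directly, note that the \emph{un-factored} expansion $\Fdelr^\D=\llrr{Z_{a_0}Z_{b_0}Z_\star Z_0}$, obtained by plugging the displayed formulas for $Z_\star,Z_0$ into the determinant and expanding by multilinearity, is a polynomial in which every twistor factor is one of $\llrr{\cdot\cdot AB}$ (permissible) or $\llrr{a_0 b_0 e f}$ with $e,f\in\{c_\star,d_\star,\hat n,a_0,b_0\}$ (no $A,B$ at all). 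Either route closes the gap; as written, your justification of permissibility is false even though the conclusion is correct.
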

\begin{proof}By \cref{cor:inj_when_epses_above_yellow_red_neq0} and \cref{lem:inj_when_eps_neq0} we have that
\begin{align*}&Z_\star=
\llrr{d_\star \hat{n}AB}Z_{c_\star}-\llrr{c_\star \hat{n}AB}Z_{d_\star}+\llrr{c_\star d_\star AB}Z_{\hat{n}},\\
&Z_0=
\llrr{b_0\hat{n}AB}Z_{a_0}-\llrr{a_0\hat{n}AB}Z_{b_0}+\llrr{a_0b_0AB}Z_{\hat{n}}.\end{align*}
Substituting, and using the Pl\"ucker relations, we get\begin{align*}\llrr{Z_iZ_jZ_\star Z_0}=
\llrr{ijAB}\llrr{\hat{n}a_0b_0|AB|c_\star d_\star\hat{n}},~\text{in particular }\Fdelr^\D=\llrr{a_0b_0AB}\llrr{\hat{n}a_0b_0|AB|c_\star d_\star\hat{n}}.\end{align*}
The rightmost equality follows for combining the two equations. 

By \cref{cor:inj_when_all_epses_neq0} $Z_{\hat{n}}$ does not include the indices $A,B.$ Using the Pl\"ucker relations we can write $\llrr{ijAB}\llrr{\hat{n}a_0b_0|AB|c_\star d_\star\hat{n}}=-\llrr{\hat{n}a_0b_0|c_\star d_\star|AB\hat{n}}$, which shows the permissibility.\end{proof}
\begin{lemma}\label{lem:eps_red_vanishes_on_red}
Let $\D$ be a chord diagram. Then $\Fdelr^\D$ does not vanish on $S_\D,$ and for $U\vdots V\in \overline{S}_\D\setminus\SA$ 
\[\Fdelr^\D=0\Leftrightarrow U\vdots V\in\Sred_\D.\]
\end{lemma}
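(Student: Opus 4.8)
The plan is to prove the equivalence by analyzing the distinguished vector $Z_0 = D_0 Z$ and its relation to the tree part $Y$ of the amplituhedron image. The key observation from \cref{lem:quad_lemma} is that $\Fdelr^\D = \llrr{a_0 b_0 A B}\llrr{\hat n a_0 b_0 | AB | c_\star d_\star \hat n}$, so for $U\vdots V \notin \SA$ the twistor $\llrr{a_0 b_0 AB}$ is nonzero by \cref{prop:CD bdries} (since $\{a_0,b_0\}$ is coindependent for $\PosD$ by \cref{lem:bcfw_after_fl},~\cref{it:bcfw_after_fl_reds}), and hence $\Fdelr^\D = 0$ if and only if the second factor $\llrr{\hat n a_0 b_0 | AB | c_\star d_\star \hat n}$ vanishes. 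By the identity in \cref{lem:quad_lemma}, $\Fdelr^\D$ up to a nonzero scalar equals $\llrr{Z_{a_0} Z_{b_0} Z_\star Z_0}$, which is (essentially by \cref{lem:5_3},~\cref{it:5_3_FL} applied with a suitable interpretation, or directly) the determinant of the matrix whose rows span $Y$ together with $Z_{a_0}, Z_{b_0}, Z_\star, Z_0$.

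First I would establish the "only if" direction: if $U\vdots V \in \Sred_\D$ then some row $C_h$ (with $h\in\Red_\D$, possibly $h=0$ itself, though by \cref{lem:uniqueness_or_SA},~\cref{rmk:Sred_without_red_chords} there is always a non-top red row in $\Span(D_0,D_\star)$ when $|\Red_\D|>1$, and when $|\Red_\D|=1$ we have $\Sred_\D\subseteq\SA$ so this case is vacuous) lies in $\Span(D_0,D_\star) \subseteq U$. Then $Z_0 = D_0 Z$ and $Z_\star = D_\star Z$ both lie in $YZ$-span together with the fact that their combination $C_h Z$ lies in $Y$; more precisely $\Span(Z_0,Z_\star)\cap Y \neq \{0\}$ since the image of $C_h$ under $\tZ$ is nonzero (the amplituhedron map does not lose rank) and lies in both $Y$ and $\Span(Z_0,Z_\star)$. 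This forces the determinant $\llrr{Z_{a_0}Z_{b_0}Z_\star Z_0}$ to vanish, hence $\Fdelr^\D = 0$.

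For the "if" direction: suppose $U\vdots V\in\overline{S}_\D\setminus\SA$ with $\Fdelr^\D = 0$. Then $\Span(Z_0, Z_\star)$ meets $Y$ nontrivially, say $0\neq w = \lambda Z_0 + \mu Z_\star \in Y = CZ$. Pull this back: there is $v\in U$ with $vZ = w$; but $vZ = (\lambda D_0 + \mu D_\star)Z$, and since $\tZ$ is injective on the relevant spans (or more concretely since $Z$ is a positive matrix and the matrices involved are in the nonnegative Grassmannian, multiplication by $Z$ preserves rank/independence by \cite[Section 4]{karp2017sign}), we deduce $v = \lambda D_0 + \mu D_\star$ up to the ambiguity of adding elements of $U$ — but $v\in U$ already, so $\lambda D_0 + \mu D_\star \in U + (\text{something in }U)$; care is needed here because $D_0, D_\star$ are only defined modulo $U$ in general, but they are honest vectors for a domino limit. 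The cleanest route is to argue at the level of $C\vdots D$: pick a domino limit $C\vdots D$ for $U\vdots V$ (exists by \cref{obs:domino_limits}); then $\Fdelr^\D = \llrr{Z_{a_0}Z_{b_0}Z_\star Z_0}=0$ means the $(k+4)\times(k+4)$ matrix with rows $C_1Z,\dots,C_{k+1}Z, Z_{a_0}, Z_{b_0}, Z_\star, Z_0$ is singular; combined with the fact (from \cref{cor:inj_when_epses_above_yellow_red_neq0} and \cref{lem:eps_red_vanishes_on_red}-type non-vanishing of $\llrr{a_0b_0AB}$, $\llrr{c_\star d_\star AB}$) that the rows $Z_{a_0},Z_{b_0},Z_\star$ together with $Y$ are independent, singularity forces $Z_0 \in Y + \Span(Z_{a_0},Z_{b_0},Z_\star)$. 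Since $Z_0 = \llrr{b_0\hat n AB}Z_{a_0} - \llrr{a_0\hat n AB}Z_{b_0} + \llrr{a_0 b_0 AB}Z_{\hat n}$ by \cref{lem:inj_when_eps_neq0}, this says $\llrr{a_0 b_0 AB}Z_{\hat n} \in Y + \Span(Z_{a_0},Z_{b_0},Z_\star)$, i.e. (dividing by the nonzero $\llrr{a_0 b_0 AB}$) that $Y$ contains a vector in $\Span(Z_{\hat n}, Z_{a_0}, Z_{b_0}, Z_\star)$; translating back via $\tZ^{-1}$, $U$ contains a vector in $\Span(\text{starting domino of }\D_{\p(0)}, \ee_{a_0}, \ee_{b_0}, D_\star)$, which is a nonzero combination of $D_0$ and $D_\star$ (it must be, since a pure combination of $\ee_{a_0},\ee_{b_0}$ and the inherited domino with $D_\star$ yields $U\vdots V\in\SA_D$, excluded). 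Hence $U\vdots V\in\Sred_\D$ by definition.

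The main obstacle will be making the pullback step rigorous: showing that $\tZ^{-1}$ (or rather, the statement that $Y$ containing a vector in some coordinate subspace implies $U$ contains the corresponding vector) is valid. This requires invoking \cite[Section 4]{karp2017sign} carefully — that multiplication by the positive matrix $Z$ preserves non-vanishing of the relevant minors — together with the Cauchy–Binet expansion \cref{lem:Cauchy-Binet} to relate twistor vanishing to Pl\"ucker vanishing of $C$ (or $C+D$). In the tree-level case this type of argument appears in \cite{even2021amplituhedron}, and the loop case is analogous, but the interplay between $D_0, D_\star$ being only well-defined up to scaling (not the full domino limit) and the need to identify the specific linear combination requires the uniqueness results \cref{lem:uniqueness_or_SA},~\cref{rmk:Sred_without_red_chords}. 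I would also double-check the degenerate case $|\Red_\D|=1$ separately, where by \cref{lem:uniqueness_or_SA},~\cref{rmk:Sred_without_red_chords} we have $\Sred_\D\subseteq\SA$, so the claimed equivalence restricted to $\overline{S}_\D\setminus\SA$ reads "$\Fdelr^\D=0 \Leftrightarrow$ (impossible)", i.e. we must separately check $\Fdelr^\D\neq 0$ on all of $\overline{S}_\D\setminus\SA$ in that case — which follows from the same determinant being nonzero, using that no red row lies in $\Span(D_0,D_\star)$ there.
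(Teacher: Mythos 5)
Your approach and the paper's diverge in a way that creates a genuine gap in your proof of the "if" direction. The paper argues directly with the matrix $G$ whose rows are $C_1,\dots,C_k,D_\star,D_0,\ee_{a_0},\ee_{b_0}$ (so that $\Fdelr^\D = \pm\det(GZ)$), and it splits cleanly on whether the first $k+2$ rows $G' = (C;D_\star;D_0)$ have full rank. If not, $\Span(D_0,D_\star)\cap U\neq 0$ gives $\Sred_\D$ immediately. If so, $G'$ is a \emph{nonnegative representative of $U+V$}, so $Y'=G'Z$ has full rank by Karp, $\Fdelr^\D = \llrr{Y'Z_{a_0}Z_{b_0}}$ is a loop boundary twistor for $Y'$, and its vanishing forces $U\vdots V\in\SA_D$ by \cref{prop:CD bdries}, a contradiction. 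The crucial point is that $G'$ genuinely is a nonnegative matrix whose row span is $U+V$, so the Karp machinery applies without further work.

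Your route instead goes through the factorization in \cref{lem:quad_lemma} and tries to show that $Y,Z_{a_0},Z_{b_0},Z_\star$ are independent, then to push $Z_0$ into their span and translate back to a vector in $U$. Two things go wrong. First, the independence of $Y,Z_{a_0},Z_{b_0},Z_\star$ is not proven: you cite the nonvanishing of $\llrr{a_0b_0AB}$ and $\llrr{c_\star d_\star AB}$ and \cref{cor:inj_when_epses_above_yellow_red_neq0}, but those twistors say nothing about whether $Z_\star$ lies in $Y+\Span(Z_{a_0},Z_{b_0})$; you would need a separate argument (essentially the $\SA_D$ exclusion again), and the parenthetical reference to "\cref{lem:eps_red_vanishes_on_red}-type non-vanishing" is circular, since that is the lemma being proved. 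Second, the "translating back via $\tZ^{-1}$" step is not rigorous in the form you present it: you end up needing a Karp-type statement for the matrix with rows $C,v,\ee_{a_0},\ee_{b_0},D_\star$ (where $v$ is the inherited domino), and this matrix is \emph{not} a nonnegative representative of a subspace in $\Gr^{\geq}$, so \cite[Section 4]{karp2017sign} does not apply out of the box. You flagged this yourself as "the main obstacle," and indeed it is; the paper sidesteps it precisely by choosing the matrix $G'$ to be the nonnegative representative $(C;D_\star;D_0)$ of $U+V$, for which the pullback step is just an application of \cref{prop:CD bdries}. Your "only if" direction and your remark on the $|\Red_\D|=1$ degenerate case are fine and match the paper's logic, but the "if" direction needs to be restructured along the paper's lines (or the two gaps above need to be filled with substantial additional argument).
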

\begin{proof}
    If $U\vdots V\in\overline{S}_\D\setminus \SA$ then by \cref{lem:uniqueness_or_SA} the rows $D_0,D_\star$ of a domino limit $C\vdots D$ of $U\vdots V$ are well defined. 
    Using \cref{obs:well_def_red_blue_outside_SA} 
\[\Fdelr=\llrr{Z_{a_0}Z_{b_0}Z_\star Z_0}=\det(GZ) \]where $G$ is a matrix whose first $k$ rows are a basis of $C,$ the next two rows are $D_\star,D_0$ and the last two rows are $\ee_{a_0},\ee_{b_0}.$

By definition on $\Sred\setminus\SA$ we have a linear combination of $D_0,D_\star$ which belongs to $U,$ hence $G$ is not of full rank and thus $\Fdelr^\D=0.$

Conversely, if $\Fdelr^\D$ then $\det(GZ)=0.$ Write $G'$ for the first $k+2$ rows of $G.$ If $G'$ is not of full rank then $\Span(D_0,D_\star)\cap U\neq \{0\}.$ 
which implies $U\vdots V\in\Sred,$ by definition. 
Otherwise $G'$ is of full rank, and is a non negative matrix, since it is a representative of $U+V.$ $\det(GZ)$ is precisely the twistor $\llrr{Y'Z_{a_0}Z_{b_0}}$ where $Y'=G'Z,$ which by \cref{prop:CD bdries} can only vanish if $U+V$ contains a vector with support $a_0,b_0,$ but this implies $U\vdots V\in\SA_D,$ which is a contradiction.

If $U\vdots V\in S_\D$ then $D_0,D_\star$ are again well defined, by \cref{thm:dominoAndBCFWparams}, and $G'$ can be taken to be the domino matrix whose rows are the rows of $C,D,$ hence it is of full rank. The argument proceeds as above, only that we use \cref{cor:epsilon_signs} to show that the twistor $\llrr{(C+D)Z_{a_0}Z_{b_0}}\neq 0.$\end{proof}
An immediate corollary of the above proof is
\begin{cor}\label{cor:sign_del0_on_FL}$\llrr{i,i+1,D_\star,D_0}\geq 0$ for every $\D\in\CD^1_{n,k},i<n-1,$ in particular, $\Fdelr^\D\geq 0$ for every $\D\in\CD^1_{n,k}.$
\end{cor}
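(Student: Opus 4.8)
The plan is to read the bound off from the determinantal identity already used in the proof of \cref{lem:eps_red_vanishes_on_red}. Fix $\D\in\CD_{n,k}^1$ and $i<n-1$, and work on $S_\D\cup(\overline S_\D\setminus\SA)$, where — by \cref{thm:dominoAndBCFWparams} on $S_\D$, and by \cref{lem:uniqueness_or_SA} together with \cref{obs:well_def_red_blue_outside_SA} on $\overline S_\D\setminus\SA$ — the yellow and top red rows $D_\star,D_0$ of a domino limit of $U\vdots V$ are well defined up to positive scaling, so that $\llrr{i,i+1,D_\star,D_0}=\llrr{Z_iZ_{i+1}Z_\star Z_0}$ is a well-defined functionary up to a positive multiple.

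First I would redo the reordering step of \cref{lem:eps_red_vanishes_on_red}: since $Z_\star=D_\star Z$ and $Z_0=D_0Z$ lie in $\tZ(U+V)=Y+L$, antisymmetry of twistors gives $\llrr{Z_iZ_{i+1}Z_\star Z_0}=\det(GZ)$, where the top $k+2$ rows of $G$ form a representative of $U+V$ and the bottom two rows are $\ee_i,\ee_{i+1}$; choosing that representative to be a \emph{nonnegative} one $G'$ — exactly the move made in \cref{lem:eps_red_vanishes_on_red}, and when $U\vdots V\in\Sred_\D$ the determinant vanishes and there is nothing to prove — yields $\llrr{i,i+1,D_\star,D_0}=c\,\llrr{(U+V)Z_iZ_{i+1}}$ with $c>0$. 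Then I would invoke the Cauchy–Binet expansion \eqref{eq:cauchy-binet_loop} (equivalently \cref{lem:sign_bdry_twistors}): $\llrr{(U+V)Z_iZ_{i+1}}=\sum_{J}\lr{G'}_J\,\lr{Z}_{j_1,\dots,j_{k+2},i,i+1}$, where every $\lr{G'}_J\ge 0$ and all the minors $\lr{Z}_{j_1,\dots,j_{k+2},i,i+1}$ share one common positive sign, because $Z$ is positive and, as $i<n-1$, the indices $i\lessdot i+1$ are genuinely consecutive in $[n]$ with no cyclic wraparound. Hence $\llrr{i,i+1,D_\star,D_0}\ge 0$. Finally, specializing $i=a_0,\ i+1=b_0$ — legitimate, since $b_0\le c_\star\le n-2$ forces $a_0<n-1$ — gives $\Fdelr^\D=\llrr{Z_{a_0}Z_{b_0}Z_\star Z_0}\ge 0$.

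The one delicate point is the replacement, in the second paragraph, of the domino-assembled rows $[\text{basis of }C;\,D_\star;\,D_0]$ by a bona fide nonnegative representative $G'$ of $U+V$ with a \emph{positive} change-of-basis determinant. This is precisely the (tersely stated) step already present in the proof of \cref{lem:eps_red_vanishes_on_red}, and it is underwritten by the domino sign rules of \cref{def:L=1domino_signs}, the canonical upper-triangular representative of $U+V$ furnished by \cref{def:bcfw_matrix_form}, and \cref{prop:pos_poly_plucker_rep}, which together pin down the sign of the relevant staircase minor. Everything else is the routine Cauchy–Binet sign count.
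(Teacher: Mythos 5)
Your proof is correct and follows essentially the same route as the paper, which deduces the corollary directly from the proof of \cref{lem:eps_red_vanishes_on_red}: write $\llrr{Z_iZ_{i+1}Z_\star Z_0}=\det(GZ)$ with $G'$ (the top $k+2$ rows) a nonnegative representative of $U+V$, dispose of the degenerate case where $G'$ is not of full rank, and then read off the sign from \cref{lem:sign_bdry_twistors} (equivalently the Cauchy--Binet expansion). The one thing you flag as delicate is actually not: the domino-assembled matrix with rows $C_1,\dots,C_k,D_\star,D_0$ is obtained from the canonical nonnegative representative $\begin{pmatrix}D_\star\\ D_0\\ C\end{pmatrix}$ by a permutation of sign $(-1)^{2k}=+1$, so the change-of-basis constant is literally $c=1$ and no appeal to \cref{prop:pos_poly_plucker_rep} or the upper-triangular form is needed — the paper's terse "since it is a representative of $U+V$" already carries the full justification.
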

\begin{lemma}\label{lem:eps_blue_vanishes_on_blue}
 Let $\D$ be a chord diagram. Then for $U\vdots V\in\overline{S}_\D\setminus\SA$ 
\[\Fepsh^\D=0~\text{for some blue }\D_h\Leftrightarrow U\vdots V\in\Sblue_\D.\]No $\Fepsh^\D$ for $\D_h\in\Blue_\D$ vanishes on $S_\D.$
\end{lemma}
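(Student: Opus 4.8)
The plan is to mirror the structure of the proof of \cref{lem:eps_red_vanishes_on_red}, replacing the pair $(D_0,D_\star)$ there by the vector pair $(u_\star, D_\star)$ that governs blue boundaries, and using \cref{lem:5_3},~\cref{it:5_3_FL} to turn the twistor $\Fepsh^\D$ into a genuine determinant. First I would fix $U\vdots V\in\overline{S}_\D\setminus\SA$ and a domino limit $C\vdots D$ for it, and recall from \cref{lem:uniqueness_or_SA} (second item) and \cref{nn:ww} that the yellow row $D_\star$, hence $u_\star=\pr_{c_\star d_\star}D_\star$, and the vectors $\ww_h$ for $h\in\Red_\D\cup\Blue_\D$ are well-defined up to scaling. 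By \cref{obs:well_def_red_blue_outside_SA} this makes $Z_\star=D_\star Z$, $\Zu=u_\star Z$ well-defined up to scaling, so $\Fepsh^\D=\llrr{Z_{a_h}Z_{b_h}\Zu Z_\star}$ is a well-defined function (up to positive scaling) on $S_\D\cup(\overline S_\D\setminus\SA)$; this is exactly the content already noted in \cref{obs:well_def_red_blue_outside_SA}. Writing $G$ for the $(k+2)\times n$ matrix whose first $k$ rows are a basis of $U$, whose next two rows are $u_\star$ and $D_\star$ — wait, $u_\star$ is not in $U+V$ in general, so instead I take $G$ to have rows: a basis of $U$, then $D_\star$, then $\ee_{a_h},\ee_{b_h}$, and interpret $\Fepsh^\D$ via the Cauchy--Binet/\cref{lem:5_3} formalism as $\det$ of the $(k+4)\times(k+4)$ matrix built from $\tZ(U+V)$ together with $\Zu$ and $Z_\star$; more cleanly, $\Fepsh^\D$ is the determinant of $G'Z$ where $G'$ is the $(k+4)$-row matrix formed from a basis of $U$, the two rows $D_\star$, $\ww_h$, and $\ee_{a_h},\ee_{b_h}$, and I would first argue this is legitimate since $\ww_h\in U+V$.

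Next, the two implications. For $(\Leftarrow)$: if $U\vdots V\in\Sblue_\D$, then by definition (\cref{def:Sred_blue}) $U$ contains a vector in $\Span(\ee_{a_j},\ee_{b_j},D_\star,u_\star)$ for some blue chord $\D_j$. By \cref{lem:Srem_and_domino_limits}, taking $h$ to be the maximal such blue index, $C_h$ itself lies in $U\cap\Span(\ee_{a_h},\ee_{b_h},u_\star,D_\star)$, i.e. the five (really four) vectors $\ee_{a_h},\ee_{b_h},u_\star,D_\star$ together with a basis of $U$ span a space of dimension $\le k+2$ rather than $k+4$ — concretely the rows of the matrix $G'$ above are linearly dependent, so $\det(G'Z)=0$ for every $Z$, giving $\Fepsh^\D=0$. (Here I should be careful to index-match: the lemma's conclusion produces a vanishing for the specific maximal blue index $h$, which is exactly ``$\Fepsh^\D=0$ for some blue $\D_h$''.) For $(\Rightarrow)$: suppose $\Fepsh^\D=0$ for some blue $\D_h$. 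Then $G'Z$ is singular for all positive $Z$. If the top $k+2$ rows of $G'$ (a basis of $U$ together with $D_\star,\ww_h$) are already linearly dependent, then by \cref{cor:wwRedBlueRanks} (which says the $C_g$, $g\notin\Red\cup\Blue$, together with $\ww$'s and $D_\star$ are independent unless something degenerates) this forces $\ww_h\in\Span(\text{basis of }U, D_\star)$, and then chasing supports via \cref{lem:ww} one finds $U$ meets $\Span(\ee_{a_h},\ee_{b_h},u_\star,D_\star)$ nontrivially, i.e. $U\vdots V\in\Sblue_\D$. If instead those $k+2$ rows are independent, they form a nonnegative representative of a $(k+2)$-space, and $\det(G'Z)$ is a twistor $\llrr{Y''Z_{a_h}Z_{b_h}}$ with $Y''$ the image of that $(k+2)$-space; by \cref{prop:CD bdries} it can vanish only if that space contains a vector supported on $\{a_h,b_h\}$. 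Since $h$ is blue, $a_h,b_h$ are cyclically consecutive, so this would say $U\vdots V\in\SA_D\subseteq\SA$, contradicting $U\vdots V\notin\SA$. Thus $\Fepsh^\D=0$ implies $U\vdots V\in\Sblue_\D$.

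For the last sentence — that no $\Fepsh^\D$ with $\D_h\in\Blue_\D$ vanishes on $S_\D$ — I would run the same determinant argument but now using \cref{thm:dominoAndBCFWparams} for the uniqueness of the honest domino form on $S_\D$, so that $D_\star,u_\star,\ww_h$ are genuine (not limiting) vectors and $G'$ is built from the actual domino matrix, which has full rank $k+2$ in its top block; then $\det(G'Z)$ is again a twistor of a $(k+2)$-dimensional nonnegative space, and \cref{cor:epsilon_signs},~\cref{it:cor_eps_red}-type positivity — more precisely the strong positivity of $\llrr{a_hb_h\,\Zu\,Z_\star}$, which should follow from \cref{lem:bcfw_after_fl},~\cref{it:bcfw_after_fl_reds} together with a Cauchy--Binet expansion exactly as in \cref{cor:epsilon_signs} — guarantees it is strictly positive. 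Alternatively, and perhaps more simply, on $S_\D$ one has the explicit formula $\Fepsh^\D \doteq \varepsilon_h \cdot(\text{product of positive twistors})$ coming from \cref{cor:5_3} and \cref{obs:lem:quad_lemma}-style substitutions, together with $\varepsilon_h>0$ from the domino sign rules, so $\Fepsh^\D>0$. The main obstacle I anticipate is the bookkeeping in the $(\Rightarrow)$ direction: matching the ``maximal blue index'' produced by \cref{lem:Srem_and_domino_limits} with an arbitrary blue $h$ for which $\Fepsh^\D=0$, and handling the case where $\ww_h$ is not proportional to $C_h$ — this is precisely where \cref{cor:wwRedBlueRanks} and the support-chasing in \cref{lem:ww} must be invoked carefully, rather than the routine determinant manipulation, which is identical to the red case already carried out in \cref{lem:eps_red_vanishes_on_red}.
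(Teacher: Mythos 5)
Your proposal follows the same strategy as the paper: realize $\Fepsh^\D$ as a determinant $\det(GZ)$, split the $(\Rightarrow)$ direction on whether the top $(k+2)$ rows of $G$ are independent, use \cref{prop:CD bdries} in the full-rank case, and use \cref{lem:Srem_and_domino_limits} to locate the witnessing vector $C_h$ in the $(\Leftarrow)$ direction. The key lemmas you invoke are the right ones.

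Where you deviate is in the construction of $G$. Your worry that ``$u_\star$ is not in $U+V$'' is a red herring: $\Fepsh^\D=\llrr{Z_{a_h}Z_{b_h}\Zu Z_\star}$ is by definition just a determinant, so the paper freely builds $G$ with rows a basis of $U$, then $D_\star$, $u_\star$, $\ee_{a_h}$, $\ee_{b_h}$, without ever needing $u_\star\in U+V$. Your replacement of $u_\star$ by $\ww_h$ is not an innocuous rewording: since $\ww_h=\alpha_h\ee_{a_h}+\beta_h\ee_{b_h}+\Gamma_h u_\star$, and $\ee_{a_h},\ee_{b_h}$ are already rows, multilinearity gives $\det(G'Z)=\pm\Gamma_h\,\Fepsh^\D$. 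So the two determinants agree \emph{only up to the scalar $\Gamma_h$}, and your $(\Leftarrow)$ argument (showing $\det(G'Z)=0$ via the dependence involving $C_h$) does not give $\Fepsh^\D=0$ unless you also establish $\Gamma_h\neq 0$. This is true on $\overline{S}_\D\setminus\SA$ — if $\Gamma_h=0$ then $\ww_h=\alpha_h\ee_{a_h}+\beta_h\ee_{b_h}$ is a nonzero vector in $(U+V)\cap\Span(\ee_{a_h},\ee_{b_h})$ by \cref{lem:Srem_cases}, which would force $U\vdots V\in\SA_D$ — but this step must be spelled out and is absent from your argument.

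On the other hand, your $\ww_h$-version does simplify a step you did not even notice the paper needs: with $u_\star$ as a row, the row span of $G'$ (first $k+2$ rows) need not a priori be nonnegative, and the paper establishes this via a nontrivial degeneration $D_0\leadsto D_0^t$ (scaling the $(a_0,b_0)$-entries of $D_0$ to zero and using that $D_0^0$ is proportional to $D_\star-u_\star$). In your setup all $k+2$ rows lie in $U+V$, so in the full-rank case the row span \emph{is} $U+V$, which is nonnegative for free. That is a genuine simplification — but it trades against the $\Gamma_h$ verification above, and you should say so explicitly rather than silently asserting ``$\Fepsh^\D$ is the determinant of $G'Z$.''

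Finally, for the last sentence (non-vanishing on $S_\D$), the paper's argument is simply to run the same implication and then invoke $S_\D\cap\Sblue_\D=\emptyset$ from \cref{obs:cell_not_red_blue}. Your proposed alternative — factoring $\Fepsh^\D\doteq\varepsilon_h\cdot(\text{positive twistors})$ via \cref{cor:5_3}/\cref{lem:quad_lemma} — is plausible but you offer no verification of that factorization, and it is not needed; the determinant argument already in hand suffices. Note also the needed replacement of $\Span(u_\star,D_\star)\cap U\neq\{0\}$ being impossible on $S_\D$: the paper uses \cref{lem:bcfw_after_fl},~\cref{it:bcfw_after_fl_for_Sblue_argument} for this, a citation you omit.
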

\begin{proof}
If $U\vdots V\in\overline{S}_\D\setminus \SA$ then by \cref{lem:uniqueness_or_SA} the vectors $D_\star,u$ of a domino limit $C\vdots D$ of $U\vdots V$ are well defined. They are also well defined for $U\vdots V\in S_\D,$ by the argument used in the previous lemma.

    Using \cref{obs:well_def_red_blue_outside_SA} 
\[\Fepsh=\llrr{Z_{a_0}Z_{b_0}\Zu Z_\star}=\pm\det(GZ) \]where $G$ is a matrix whose first $k$ rows are a basis of $C,$ the next row is $D_\star,$ then $u_\star$ and the last two rows are $\ee_{a_h},\ee_{b_h}.$

By definition on $\Sblue\setminus\SA$ we have a linear combination of $\ee_{a_0}, \ee_{b_0},u,D_\star$ which belongs to $U,$ hence $G$ is not of full rank and thus $\Fepsh=0.$

Conversely, if $\Fepsh=0$ then $\det(GZ)=0.$ Write $G'$ for the first $k+2$ rows of $G.$ If $G'$ is not of full rank then $\Span(u,D_\star)\cap U\neq \{0\},$ but the support of $u$ is $c_\star,d_\star$ and the support of $D_\star-u$ is either $n$ or $a_{\p(0)},b_{\p(0)}.$ Thus, the existence of a non zero vector $v$ in the intersection is impossible under our assumptions: for $U\vdots V\in S_\D$ it cannot happen by \cref{lem:bcfw_after_fl},~\cref{it:bcfw_after_fl_for_Sblue_argument}, and for $U\vdots V\notin\SA$ it cannot happen since the existence of $v$ is witnesses that $U\vdots V\in\SA.$

So we may assume $G'$ is of full rank.
Note that the row span of $G'$ is non negative:
$D_\star-u$ is proportional to $\ee_n$ or to $\pr_{a_\p(0)b_{\p(0)}}C_{\p(0)},$ which is a well defined vector by \cref{lem:uniqueness_or_SA}. The row span of $G'$ can be taken as the limit $\lim_{t\to0}G'_t$ where $G'_t$ is a matrix whose first $k+1$ rows are those of $G',$ and its $(k+2)$th row is $D_0^t$ obtained from $D_0$ by scaling its $a_0,b_0$ entries by $t.$ Thus, $G'_1$ spans $U\vdots V,$ which is a nonnegative space, and it is easy to see that replacing $D_0$ by $D_0^t$ for $t>0$ does not change this property. Since $G'=G_0$ it must also be nonnegative. 

 $\det(GZ)$ is precisely the twistor $\llrr{Y'Z_{a_h}Z_{b_h}}$ where $Y'=G'Z,$ which by \cref{prop:CD bdries} can only vanish if the row span of $G'$ contains a vector with support $a_h,b_h,$ but this implies that\[\left(U+\Span(u,D_\star)\right)\cap\Span(\ee_{a_h},\ee_{b_h})\neq 0\Leftrightarrow U\cap\Span(\ee_{a_h},\ee_{b_h},u,D_\star)\neq 0.\]Hence, by definition, $U\vdots V\in\Sblue,$ and by \cref{obs:cell_not_red_blue} $U\vdots V$ cannot belong to $S_\D.$
\end{proof}
\begin{lemma}\label{lem:Sinj_strata}
For every positive matrix $Z,$ on $S_\D\cup\Srem_\D$ all functions $F^\D_{\varepsilon_h}$ of \eqref{eq:def_F_eps} are well defined and non zero.
\end{lemma}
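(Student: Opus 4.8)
The plan is to prove that the functionaries $F^\D_{\varepsilon_h}$ of \eqref{eq:def_F_eps} are both well defined and nonvanishing on $S_\D\cup\Srem_\D$ by reducing all cases, one chord at a time in parent-to-child order, to the non-vanishing statements of the previous lemmas together with the structural information from \cref{lem:Srem_cases}.

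First I would dispense with the \emph{well-definedness}. On $S_\D$ this is immediate from \cref{thm:dominoAndBCFWparams}, and on $\Srem_\D$ from \cref{lem:uniqueness_or_SA}(1): every point of $\Srem_\D$ has a unique weak $\D$-domino representative $C\vdots D$, which is in particular a weak domino representative. Hence the rows $D_0,D_\star$, the vector $u_\star=\pr_{c_\star d_\star}D_\star$, and all rows $C_h$ are determined up to scaling, so the defining expressions \eqref{eq:def_F_eps} make sense; moreover, by \cref{cor:regular_matchable_and_S_D_in_SREM} and the proof of \cref{cor:reg_domino_strata}, $\Srem_\D$ is disjoint from $\SA$, which is what I will need below to invoke \cref{prop:CD bdries}. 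For the case $h\in\Red_\D\setminus\{0\}$ or $h\in\Blue_\D$ I also note that \eqref{eq:def_F_eps} only uses $Y\vdots L$, $Z$ and the rows $D_0,D_\star$, which are recovered uniquely by \cref{cor:inj_when_epses_above_yellow_red_neq0}, so these functionaries are genuinely well defined in the sense of \cref{obs:well_def_red_blue_outside_SA}.

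Next, the \emph{non-vanishing}, split by the color of $\D_h$. For $\D_h$ black or purple, $F^\D_{\varepsilon_h}=\llrr{a_hb_hc_hd_h}$: on $S_\D$ this is \cref{cor:epsilon_signs}(1) (strong positivity), and on $\Srem_\D\subseteq\overline{S}_\D\setminus\SA$ we use \cref{prop:CD bdries}, since a preimage on the zero locus of $\llrr{a_hb_hc_hd_h}$ would force $C\vdots D\in\SA_C\subseteq\SA$, a contradiction. The cases $h=0$ ($F=\llrr{a_0b_0AB}$) and $h=\star$ ($F=\llrr{c_\star d_\star AB}$) are handled identically: \cref{cor:epsilon_signs}(2),(3) on $S_\D$, and \cref{prop:CD bdries} ($\SA_D$) on $\Srem_\D$. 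For $h\in\Red_\D\setminus\{0\}$, $F^\D_{\varepsilon_h}=\Fepsh^{\D}$-type; here I would invoke \cref{lem:eps_red_vanishes_on_red} applied not to the top red chord directly but adapted to $\D_h$ via the $\ww$-vectors — actually the cleaner route is: by \cref{lem:Srem_cases}, on $\Srem_\D$ the starting domino of $C_h$ is nonzero and $\varepsilon_h$ is finite, so $C_h\notin\Span(D_0,D_\star)$, hence $U\vdots V\notin\Sred_\D$, and then the determinant $\llrr{a_h,b_h,D_\star Z,D_0Z}$ is the twistor $\llrr{Y'Z_{a_h}Z_{b_h}}$ for a full-rank nonnegative representative $Y'$ of $U+V$ extended by $\ee_{a_h},\ee_{b_h}$, which by \cref{prop:CD bdries} vanishes only if $U+V$ contains a vector supported on $\{a_h,b_h\}$, i.e. only on $\SA_D$. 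The blue case $h\in\Blue_\D$ is the analogue with $\Zu$ replacing $D_0$, using \cref{lem:eps_blue_vanishes_on_blue} together with the fact that $\Srem_\D\cap\Sblue_\D=\emptyset$, plus \cref{prop:CD bdries} for the final $\SA_D$ obstruction; on $S_\D$ the non-vanishing is the last sentence of \cref{lem:eps_blue_vanishes_on_blue}.

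The main obstacle I anticipate is the red and blue cases: there the functionary is not literally a boundary twistor, so one cannot apply \cref{prop:CD bdries} directly. The resolution is exactly the dictionary established in \cref{lem:eps_red_vanishes_on_red} and \cref{lem:eps_blue_vanishes_on_blue}, which translate the vanishing of $\Fdelr^\D$ (resp.\ $\Fepsh^\D$) on $\overline{S}_\D\setminus\SA$ into membership in $\Sred_\D$ (resp.\ $\Sblue_\D$); combined with $\Srem_\D\cap(\Sred_\D\cup\Sblue_\D)=\emptyset$ from \cref{cor:regular_matchable_and_S_D_in_SREM} (and \cref{obs:cell_not_red_blue} for $S_\D$), this closes the argument. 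The only care needed is to verify that the full-rank, nonnegativity hypotheses required to invoke \cref{prop:CD bdries} inside those lemmas hold on $\Srem_\D$, which is precisely where \cref{lem:Srem_cases} — non-vanishing of starting dominoes and finiteness of $\varepsilon_h$ throughout $\Srem_\D$ — does the work; I would cite it rather than re-derive it.
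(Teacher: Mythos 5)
Your proposal is correct in essence and follows the same structural route as the paper: split $\Srem_\D$-avoidance of $\SA$, $\Sred_\D$, $\Sblue_\D$ by definition, apply \cref{prop:CD bdries} for the boundary-twistor cases, and invoke \cref{lem:eps_red_vanishes_on_red}/\cref{lem:eps_blue_vanishes_on_blue} for the red and blue chords, with \cref{cor:epsilon_signs} doing the work on $S_\D$ and \cref{cor:inj_when_epses_above_yellow_red_neq0} giving well-definedness. The one genuine difference is your treatment of $h\in\Red_\D\setminus\{0\}$: the paper reduces $F^\D_{\varepsilon_h}$ to $\Fdelr^\D$ via \cref{lem:quad_lemma}, writing $F^\D_{\varepsilon_h}=\frac{\llrr{a_hb_hAB}}{\llrr{a_0b_0AB}}\Fdelr^\D$, and then applies \cref{lem:eps_red_vanishes_on_red} to $\Fdelr^\D$ together with \cref{cor:epsilon_signs},~\cref{it:cor_eps_red}. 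You instead re-run the determinantal argument inside the proof of \cref{lem:eps_red_vanishes_on_red} directly with $a_h,b_h$ in place of $a_0,b_0$. Both work; the paper's route is slightly more economical because it reuses the already-proved lemma verbatim, while yours is more self-contained.

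Two smaller points to tighten. First, your detour through \cref{lem:Srem_cases} to conclude $U\vdots V\notin\Sred_\D$ from $C_h\notin\Span(D_0,D_\star)$ is both redundant and, as written, not quite valid: for $U\vdots V\in\Srem_\D$, $U\vdots V\notin\Sred_\D$ holds outright by the definition of $\Srem_\D$, and moreover $C_h\notin\Span(D_0,D_\star)$ for a single $h$ does not, by itself, rule out some other red row lying in $\Span(D_0,D_\star)$ (that criterion is \cref{lem:uniqueness_or_SA},~\cref{rmk:Sred_without_red_chords}, which quantifies over all non-top-red rows). Second, your argument explicitly addresses only $\Srem_\D$ for the non-top red case; for $S_\D$ you would need to observe either that the same determinantal argument goes through (with $\Sred_\D$-avoidance from \cref{obs:cell_not_red_blue} and the needed non-vanishing of $\llrr{a_hb_hAB}$ coming from \cref{cor:epsilon_signs},~\cref{it:cor_eps_red} rather than from $\SA$-avoidance, which is \cref{cor:cell_is_not_SA} and would be circular here), or, as the paper does, that $\Fdelr^\D\neq 0$ on $S_\D$ by \cref{lem:eps_red_vanishes_on_red} and then apply \cref{lem:quad_lemma}. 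You do cite \cref{obs:cell_not_red_blue} in your closing paragraph, so the ingredients are in view, but the $S_\D$ red case deserves its own explicit sentence.
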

\begin{proof}
By definition, 
\[\Srem_\D=\overline{S}_\D\setminus\left(\SA\cup\Sred_\D\cup\Sblue_\D\right).\]
By \cref{prop:CD bdries} a boundary twistor vanishes on an element of $\overline{S}_\D$ if and only if it belongs to $\SA.$ By \cref{lem:eps_red_vanishes_on_red} $U\vdots V$ belongs to  $\Sred_\D\setminus\SA$ if and only if $\Fdelr^\D=\llrr{Z_{a_0}Z_{b_0}Z_\star Z_0}=0,$ which combined with \cref{lem:quad_lemma} is equivalent to $F^\D_{\varepsilon_h}$ for some red $\D_h.$
By \cref{lem:eps_blue_vanishes_on_blue}, $U\vdots V\in \Sblue_\D\setminus\SA$ if and only if $\llrr{Z_{a_h}Z_{b_h}\Zu Z_\star}=0$ for some blue $\D_h.$  
Thus, no $F^\D_{\varepsilon_h}$ vanishes on $\Srem_\D$ for $h\in[k+1]\cup\{\star\},$ and they are all well defined by \cref{cor:inj_when_epses_above_yellow_red_neq0}.  

Similar argument works for $S_\D,$ only that in this case we cannot argue immediately about all boundary twistors, but using \cref{cor:epsilon_signs} we can argue for the boundary twistors appearing in the expressions for $F^\D_{\varepsilon_h}$ for $h\notin\Blue_\D\cup\Red_\D.$ For the remaining $F^\D_{\varepsilon_h}$ we act as above. We use the non vanishing of $\llrr{a_hb_hAB},$ for $\D_h\in\Red_\D$, obtained from \cref{cor:epsilon_signs},~\cref{it:cor_eps_red} to deduce from \cref{lem:eps_red_vanishes_on_red} and \cref{lem:quad_lemma} the non vanishing of $F^\D_{\varepsilon_h}$ for $\D_h\in\Red_\D.$ We use \cref{lem:eps_blue_vanishes_on_blue} to show the non vanishing of $F^\D_{\varepsilon_h}$ for $\D_h\in\Blue_\D.$
\end{proof}
\begin{prop}\label{prop:inj_bdries}
The map $\tZ|_{S_\D\cup\Srem_\D}
$ is injective, and the inverse map is smooth. \end{prop}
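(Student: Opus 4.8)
The plan is to show injectivity by constructing an explicit inverse, exploiting the machinery built up in the preceding subsections. The key point is that on $S_\D\cup\Srem_\D$ the weak domino form is unique up to gauge (by \cref{thm:dominoAndBCFWparams} and \cref{lem:uniqueness_or_SA}, noting $\Srem_\D\subseteq\overline{S}_\D\setminus\SA$ and $\Srem_\D$ avoids $\Sred_\D,\Sblue_\D$), so it suffices to reconstruct the rows of a weak domino representative $C\vdots D$ of $U\vdots V$ from $Y\vdots L=\tZ(U\vdots V)$ and $Z$. First I would invoke \cref{lem:Sinj_strata}: for any positive $Z$, all the functionaries $F^\D_{\varepsilon_h}$ of \eqref{eq:def_F_eps}, for $h\in[k+1]\cup\{\star\}$, are well defined and nonvanishing on $S_\D\cup\Srem_\D$. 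Then I would apply \cref{cor:inj_when_all_epses_neq0} (via iterated use of \cref{lem:inj_when_eps_neq0}), carried out in parent-to-child order: the rows corresponding to top chords, including $D_0$ and $D_\star$, are recovered first using \cref{lem:5_3} applied to standard basis vectors; then each descendant row is recovered using \cref{lem:5_3} applied to the already-reconstructed vectors $v_1,\dots,v_5$ (or $v_1,v_2,v_3$ for $D_0,D_\star$), the nonvanishing of the relevant $F^\D_{\varepsilon_h}$ being exactly the hypothesis needed to apply the lemma. This determines $C\vdots D$ uniquely up to scaling each row, i.e. up to the gauge of \cref{obs:gauge_domino}, and hence determines $U\vdots V$ uniquely. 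Since every point of $S_\D\cup\Srem_\D$ lies in $\overline{S}_\D$, has a unique weak domino form, and has all \eqref{eq:def_F_eps} nonzero, the uniqueness clause of \cref{cor:inj_when_all_epses_neq0} gives that $U\vdots V$ is the unique preimage of $Y\vdots L$ within this set; this is the injectivity statement.

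For smoothness of the inverse, the observation is that the entire reconstruction is given by an explicit formula: each domino row is a fixed $\mathbb{R}$-linear combination of explicit vectors with coefficients that are ratios of twistor coordinates (minors of $Z$ against the reconstructed vectors), as spelled out in \cref{lem:inj_when_eps_neq0}. These twistor coordinates are polynomial in the entries of $Y$, $L$ and $Z$, and the denominators appearing are precisely the $F^\D_{\varepsilon_h}$ together with the boundary twistors $\llrr{a_hb_hc_hd_h}$, $\llrr{a_0b_0AB}$, $\llrr{c_\star d_\star AB}$ — all of which are nonzero on (a neighbourhood in the amplituhedron of) $\tZ(S_\D\cup\Srem_\D)$ by \cref{lem:Sinj_strata} and \cref{prop:CD bdries}. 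Hence the map $Y\vdots L\mapsto C\vdots D\mapsto U\vdots V$ is a composition of rational maps with nonvanishing denominators followed by the (smooth) passage from a full-rank loopy matrix representative to its span in $\Gr_{k,k+4;1}$; it is therefore smooth. One should also remark that by \cref{cor:strata_and_dim} and \cref{prop:mfld_w_bdry} the domain $S_\D\cup\Srem_\D$ (which is $S_\D$ together with its regular matchable boundaries, and possibly the red and blue boundaries if they avoid $\SA$) is a manifold, so ``smooth inverse'' makes sense intrinsically, though for the statement it is enough that the inverse map to $\R^{4(k+\ell)}$ via domino coordinates is continuous, which also follows.

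The main obstacle is not any single hard estimate but rather bookkeeping: one must be careful that the recursive reconstruction of \cref{lem:inj_when_eps_neq0} is legitimately applicable at every step, which requires knowing the rows \emph{strictly} above a given chord before reconstructing it — this is exactly why the parent-to-child order matters, and why \cref{lem:Sinj_strata} is phrased to guarantee \emph{all} the $F^\D_{\varepsilon_h}$ are nonzero simultaneously. A secondary subtlety is the gauge: the reconstruction determines each row only up to scaling, so one must confirm that the resulting point of $\Gr_{k,k+4;1}$ is genuinely well defined, which is immediate since the span is gauge-invariant. I expect the proof to be short, essentially assembling \cref{lem:Sinj_strata}, \cref{cor:inj_when_all_epses_neq0}, \cref{cor:inj_when_epses_above_yellow_red_neq0} and the explicit formulas of \cref{lem:inj_when_eps_neq0}, with the smoothness claim following by inspection of those formulas. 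Since the special case $U\vdots V\in S_\D$ is included (there using \cref{thm:dominoAndBCFWparams} for uniqueness and \cref{cor:epsilon_signs} for the nonvanishing of the relevant boundary twistors), \cref{thm:inj} is an immediate consequence.
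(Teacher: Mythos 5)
Your proof matches the paper's proof in both structure and content: both establish uniqueness of the weak domino representative on $S_\D$ (via \cref{thm:dominoAndBCFWparams}) and on $\Srem_\D$ (via \cref{lem:uniqueness_or_SA}/\cref{cor:reg_domino_strata}), invoke \cref{lem:Sinj_strata} for the nonvanishing of all $F^\D_{\varepsilon_h}$, apply \cref{cor:inj_when_all_epses_neq0} for the row-by-row reconstruction, and read off smoothness of the inverse from the explicit rational formulas of \cref{lem:inj_when_eps_neq0}. The only inessential difference is that the paper also remarks that any point in $\Srem_\D\setminus S_\D$ is distinguished from $S_\D$ by the vanishing of at least one element of $\tVar_\D$ (since at this stage $S_\D\subseteq\Srem_\D$ has not yet been proved — that is \cref{cor:cell_is_not_SA}, which comes later), whereas you implicitly treat $S_\D\cup\Srem_\D$ as a single set on which uniqueness holds; this does not affect the argument.
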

\begin{proof}
By \cref{cor:reg_domino_strata} every element of $\Srem_\D$ has a unique weak $\D-$domino representative. The same holds for elements of $S_\D$ by \cref{thm:dominoAndBCFWparams}, and the spaces are disjoint, since for elements of $\Srem_\D\setminus S_\D$ at least one element of $\tVar_\D$ vanishes. By \cref{lem:Sinj_strata} on $S_\D\cup\Srem_\D$ no $F^\D_{\varepsilon_h}$ vanishes. Hence, by \cref{cor:inj_when_all_epses_neq0} $\tZ$ is injective on this space. The iterative construction of preimages via \cref{lem:inj_when_eps_neq0} is provides a functionary description of the domino coordinates, as long as the functions $F^\D_{\varepsilon_h}$ are non zero, thus the formula for the preimage of a point is smooth.


\end{proof}
\cref{prop:inj_bdries} has the following corollaries.
\begin{proof}[Proof of \cref{thm:inj}]
It follows trivially from \cref{prop:inj_bdries}. 
\end{proof}
\begin{cor}\label{cor:open_map}
    The map $\tZ:S_\D\to\Gr_{k,k+4;1}$ has a smooth inverse, hence is open.
\end{cor}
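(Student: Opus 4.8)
The final statement to prove is Corollary~\ref{cor:open_map}: the map $\tZ:S_\D\to\Gr_{k,k+4;1}$ has a smooth inverse, hence is open. The plan is to derive this directly from the work already in place, essentially as an immediate consequence of Theorem~\ref{thm:inj} (injectivity) together with the smoothness statements of Proposition~\ref{prop:inj_bdries} and Theorem~\ref{thm:dominoAndBCFWparams}.

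First I would recall that by Theorem~\ref{thm:dominoAndBCFWparams} the cell $S_\D$ is a smooth ($4(k+\ell)$-dimensional) manifold, parameterized smoothly by the domino (or BCFW) coordinates, and that $\tZ$ is a smooth map. By Theorem~\ref{thm:inj}, $\tZ|_{S_\D}$ is injective, so it is a smooth bijection onto its image $\tZ(S_\D)$. The content of the corollary is then that the set-theoretic inverse $\tZ(S_\D)\to S_\D$ is itself smooth. Here I would invoke Proposition~\ref{prop:inj_bdries}, which already asserts precisely that $\tZ|_{S_\D\cup\Srem_\D}$ is injective \emph{with smooth inverse}; restricting to $S_\D\subseteq S_\D\cup\Srem_\D$ gives the smoothness of the inverse on $\tZ(S_\D)$. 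Concretely, the inverse is assembled from the explicit formulas of Lemma~\ref{lem:inj_when_eps_neq0}: one recovers the domino rows $v^{\D_h}$ one by one, in parent-to-child order, as rational expressions in the twistor coordinates of $Y\vdots L$ (and the minors of $Z$), with denominators the functionaries $F^\D_{\varepsilon_h}$; by Lemma~\ref{lem:Sinj_strata} these denominators are nowhere-vanishing on $\tZ(S_\D)$, so the recovered domino coordinates depend smoothly on the point of $\tZ(S_\D)$. Composing with the smooth domino parameterization of $S_\D$ yields a smooth map $\tZ(S_\D)\to S_\D$ which is a two-sided inverse to $\tZ|_{S_\D}$.

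Finally, openness follows formally: a smooth injective map with smooth inverse is a diffeomorphism onto its image, and since $\dim S_\D = \dim\tZ(S_\D)$ and both the inverse and $\tZ$ are smooth, the differential $d\tZ$ is everywhere an isomorphism (its composition with the differential of the smooth inverse is the identity). Hence $\tZ$ is a local diffeomorphism on $S_\D$, and in particular an open map: for any open $U\subseteq S_\D$, $\tZ(U)$ is open in $\Gr_{k,k+4;1}$ because near each point it coincides with the image under a local diffeomorphism of an open set. I would phrase this via the inverse function theorem applied at an arbitrary point of $S_\D$, using that $\tZ$ is injective on a neighborhood (indeed globally on $S_\D$) so the local diffeomorphism and the global inverse agree.

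I do not expect a genuine obstacle here, since all the hard analysis (uniqueness of domino forms, non-vanishing of the $F^\D_{\varepsilon_h}$, the explicit smooth reconstruction) has been carried out in Proposition~\ref{prop:inj_bdries} and its supporting lemmas. The only point requiring a little care is matching dimensions to conclude that $d\tZ$ is an isomorphism rather than merely injective: one uses that $\tZ(S_\D)$ is itself a $4(k+\ell)$-dimensional manifold (being the image of $S_\D$ under the smooth map $\tZ$ with the smooth inverse exhibited above, it is a smooth submanifold of the same dimension), so $d\tZ$ is a linear map between spaces of equal dimension that admits a one-sided inverse, hence is bijective. With that, openness is immediate and the proof is essentially a one-line citation of the earlier results.
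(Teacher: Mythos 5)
Your proposal is correct and takes essentially the same approach as the paper, which states the corollary as an immediate consequence of Proposition~\ref{prop:inj_bdries} (restricting the global injective-with-smooth-inverse statement from $S_\D\cup\Srem_\D$ to $S_\D$). Your elaboration on the dimension-matching argument to upgrade a one-sided smooth inverse into a local diffeomorphism — and hence openness — is the correct way to make the final implication precise.
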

\begin{cor}\label{cor:cell_is_not_SA}
$S_\D\subseteq \Srem_\D,$ and in particular $S_\D\cap\SA=\emptyset.$ Thus, all boundary twistors have strong nonzero sign on the image of $S_\D.$
\end{cor}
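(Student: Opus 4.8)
The statement to prove is \cref{cor:cell_is_not_SA}: that $S_\D\subseteq\Srem_\D$, hence $S_\D\cap\SA=\emptyset$, and that consequently all boundary twistors have strong nonzero sign on the image of $S_\D$.

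The plan is to deduce this directly from the results that precede it, principally \cref{prop:inj_bdries}, \cref{lem:Sinj_strata}, and \cref{obs:cell_not_red_blue}. First I would recall that by definition $\Srem_\D=\overline{S}_\D\setminus(\SA\cup\Sred_\D\cup\Sblue_\D)$, and that $\overline{S}_\D$ is the Hausdorff closure of $S_\D$, so in particular $S_\D\subseteq\overline{S}_\D$. Thus to show $S_\D\subseteq\Srem_\D$ it suffices to show $S_\D$ is disjoint from each of $\SA$, $\Sred_\D$, and $\Sblue_\D$. Disjointness from $\Sred_\D\cup\Sblue_\D$ is exactly \cref{obs:cell_not_red_blue}. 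Disjointness from $\SA$ is the crux; for this I would invoke \cref{lem:Sinj_strata}, which asserts that all the functionaries $F^\D_{\varepsilon_h}$ of \eqref{eq:def_F_eps} are well defined and nonzero on $S_\D\cup\Srem_\D$ — but looking more carefully, the cleanest route is via \cref{prop:CD bdries} together with \cref{lem:eps_red_vanishes_on_red} and \cref{lem:eps_blue_vanishes_on_blue}: a point of $\overline{S}_\D$ lies in $\SA$ precisely when a boundary twistor vanishes on it; but by \cref{cor:epsilon_signs}, \cref{lem:sign_of_chord_twistors} and the strong-sign statements, together with \cref{obs:ind_coind} and the coindependence facts of \cref{lem:bcfw_after_fl}, the relevant boundary twistors are strongly positive, hence nonzero, on the image of $S_\D$. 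Actually the most economical argument is to note that \cref{prop:inj_bdries} already establishes injectivity of $\tZ$ on $S_\D\cup\Srem_\D$ via the construction of preimages through \cref{lem:inj_when_eps_neq0}, whose validity on $S_\D$ requires exactly the nonvanishing of the $F^\D_{\varepsilon_h}$ established in \cref{lem:Sinj_strata}; and in the proof of \cref{lem:Sinj_strata} it is shown that no boundary twistor appearing in these expressions vanishes on $S_\D$, together with \cref{lem:eps_red_vanishes_on_red} and \cref{lem:eps_blue_vanishes_on_blue} ruling out $\Sred_\D$ and $\Sblue_\D$. I would therefore simply assemble: $S_\D\cap(\Sred_\D\cup\Sblue_\D)=\emptyset$ by \cref{obs:cell_not_red_blue}, and $S_\D\cap\SA=\emptyset$ because every point of $\SA$ has some vanishing boundary twistor (\cref{prop:CD bdries}) while on $S_\D$ the boundary twistors $\llrr{iji'j'}$ and $\llrr{ijAB}$ appearing as obstructions are nonzero — this is contained in the proof of \cref{lem:Sinj_strata}, which handles exactly the boundary twistors entering $F^\D_{\varepsilon_h}$, combined with \cref{cor:epsilon_signs}. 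Hence $S_\D\subseteq\overline{S}_\D\setminus(\SA\cup\Sred_\D\cup\Sblue_\D)=\Srem_\D$.

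For the final clause — that all boundary twistors have strong nonzero sign on the image of $S_\D$ — I would argue as follows. By \cref{lem:sign_bdry_twistors} every boundary twistor has a strong weak sign (it is $\pm1$ times a sum of terms all of one sign in the Cauchy–Binet expansion). So it remains only to exclude the value $0$, i.e.\ to show the Cauchy–Binet expansion \eqref{eq:cauchy-binet} or \eqref{eq:cauchy-binet_loop} contains a nonzero summand, equivalently (by \cref{obs:ind_coind}) that the relevant quadruple $\{i,i',j,j'\}$ is coindependent for the tree part of the positroid of $S_\D$, or the relevant pair $\{i,i'\}$ is coindependent for its loop part. But by the first two parts of \cref{prop:CD bdries}, a boundary twistor vanishes at a point $C\vdots D$ exactly when $C$ (resp.\ $C+D$) contains a vector supported on the corresponding two or four consecutive positions, i.e.\ exactly when $C\vdots D\in\SA$; since $S_\D\cap\SA=\emptyset$ from the first part of the corollary, no boundary twistor vanishes on $S_\D$. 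Combining with the weak strong sign from \cref{lem:sign_bdry_twistors} gives a strong \emph{nonzero} sign.

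I do not expect any genuine obstacle here: the corollary is a bookkeeping assembly of \cref{obs:cell_not_red_blue}, \cref{prop:CD bdries}, \cref{lem:Sinj_strata} (and the lemmas it relies on, \cref{lem:eps_red_vanishes_on_red}, \cref{lem:eps_blue_vanishes_on_blue}, \cref{cor:epsilon_signs}), and \cref{lem:sign_bdry_twistors}. The one point to be slightly careful about is making sure that ``$S_\D\cap\SA=\emptyset$'' is derived without circularity — it must come from the nonvanishing of the specific boundary twistors proved inside \cref{lem:Sinj_strata} and \cref{cor:epsilon_signs}, not from \cref{prop:CD bdries}'s last clause about the amplituhedron boundary — and that the last clause of the corollary (all boundary twistors, not just those entering the $F^\D_{\varepsilon_h}$) follows because the $\SA$-membership criterion of \cref{prop:CD bdries} covers \emph{every} boundary twistor uniformly.
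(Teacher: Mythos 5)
Your argument for the last clause (deducing strong nonzero sign of all boundary twistors from $S_\D\cap\SA=\emptyset$ via \cref{prop:CD bdries} and \cref{lem:sign_bdry_twistors}) is correct, and your deduction of $S_\D\cap(\Sred_\D\cup\Sblue_\D)=\emptyset$ from \cref{obs:cell_not_red_blue} is exactly what the paper does. But the crux — showing $S_\D\cap\SA=\emptyset$ — is where you have a genuine gap, and the paper takes a different route there.

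Your route is to show that boundary twistors do not vanish on $S_\D$, citing the nonvanishing statements in \cref{lem:Sinj_strata} and \cref{cor:epsilon_signs}. However, those results control only the specific boundary twistors that arise as (or inside) the functionaries $F^\D_{\varepsilon_h}$ — twistors of the form $\llrr{a_hb_hc_hd_h}$ for black/purple chords, $\llrr{a_0b_0AB}$, $\llrr{c_\star d_\star AB}$, and so on. By \cref{prop:CD bdries}, membership in $\SA$ is witnessed by the vanishing of \emph{some} boundary twistor $\llrr{i,i{+}1,j,j{+}1}$ or $\llrr{i,i{+}1,A,B}$, where $\{i,i{+}1,j,j{+}1\}$ need not be the support of any chord of $\D$. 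Ruling out vanishing of the twistors tied to $F^\D_{\varepsilon_h}$ does not rule out vanishing of, say, $\llrr{5,6,10,11}$ at a point of $S_\D$ when no chord is supported there, and there is no blanket $4$-coindependence statement for $1$-loop BCFW cells in the paper analogous to \cref{lem:4coind_tree} at tree level. So the chain ``nonvanishing of the $F^\D_{\varepsilon_h}$-twistors $\Rightarrow S_\D\cap\SA=\emptyset$'' does not close. (Note also that ``$S_\D\cap\SA=\emptyset$'' and ``no boundary twistor vanishes on $S_\D$'' are equivalent by \cref{prop:CD bdries}, so arguing the former from a partial version of the latter risks circularity.)

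The paper instead argues topologically. By \cref{cor:open_map} (which rests on \cref{prop:inj_bdries}) the image $\tZ(S_\D)$ is open in $\Gr_{k,k+4;\ell}$. By \cref{prop:fixed_positroids} $S_\D$ lies in a single $1$-loop positroid cell, and by \cref{obs:SA_strata} $\overline{S}_\D\cap\SA$ is a union of positroid strata; hence $S_\D$ either misses $\SA$ or is contained in it. If it were contained, then since the $\SA$-membership criterion of \cref{prop:CD bdries} is a positroid-level condition, the same boundary twistor would vanish at every point of $S_\D$, so $\tZ(S_\D)$ would lie in the zero locus of a single twistor — but such a zero locus has empty interior, contradicting openness. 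This sidesteps the need to individually exhibit nonvanishing of every boundary twistor and is the key step your proposal is missing. Once $S_\D\cap\SA=\emptyset$ is secured this way, the remaining assembly in your proposal is fine.
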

\begin{proof}
Since the amplituhedron map restricted to a BCFW cell is a open map between manifolds of the same dimension, the image is open. $S_\D$ is contained in a single loopy positroid cell, by \cref{prop:fixed_positroids}, hence by \cref{obs:SA_strata} it may intersect $\SA$ only if it is contained in $\SA.$ In this case by \cref{prop:CD bdries} it must be contained in the zero locus of some twistor, but such zero locus it easily seen not to have interior points.

Thus, $S_\D\cap\SA=\emptyset.$ Then \cref{lem:eps_red_vanishes_on_red},~\cref{lem:eps_blue_vanishes_on_blue} and \cref{lem:Sinj_strata} imply that $S_\D\subseteq \Srem_\D.$ Now \cref{lem:sign_bdry_twistors} and \cref{prop:CD bdries} show that all boundary twistors have a strong non zero sign on the image of the cell. 
\end{proof}
\begin{thm}\label{prop:inj_codim_1}
    For every matchable boundary of a BCFW cell which is not contained in $\SA$, $\partial_{\zeta_i}S_\D$ the map $\tZ|_{\partial_{\zeta_i}S_\D}$ is invertible with a smooth inverse, hence injective.
\end{thm}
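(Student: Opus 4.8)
The plan is to reduce the claim about matchable boundaries to the situation already handled for BCFW cells in \cref{prop:inj_bdries}. The key structural input is \cref{cor:regular_matchable_and_S_D_in_SREM} together with \cref{cor:red_bdry_in_sred}: a \emph{regular} matchable boundary $\partial_{\zeta_i}S_\D$ (one not in $\SA$ and not the red or blue boundary) is contained in $\Srem_\D$, so it is covered directly by \cref{prop:inj_bdries}, since $\tZ$ is injective with smooth inverse on all of $S_\D\cup\Srem_\D$. Hence the only cases requiring separate treatment are the red boundary $\partial_{\delta_0}S_\D$ and the blue boundary $\partial_{\varepsilon_h}S_\D$ (with $\D_h$ the lowest blue chord), when these avoid $\SA$.

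For those two cases the idea is to use \cref{lem:bdry_matching_geo}: the blue boundary $\partial_{\varepsilon_h}S_\D$ equals $\partial_{\gamma_\star}S_{\check\D}$ for $\check\D=\shift_{\varepsilon_h}\D$, and the red boundary $\partial_{\delta_0}S_\D$ equals $\partial_{\gamma_\star}S_{\shift_{\abe}\D}$ by definition. In both cases $\shift_{\varepsilon_h}\D$ and $\shift_{\abe}\D$ are chord diagrams with strictly fewer red/blue chords, and in the shifted diagram the relevant boundary is a $\gamma_\star$-boundary, which is \emph{uncolored}, hence by \cref{obs:bdry_matching_comb}, \cref{nn:reg_match} and \cref{rmk:all_matchable_reg}--style reasoning it is a regular matchable boundary of the shifted cell (formally: by \cref{cor:regular_matchable_and_S_D_in_SREM} it suffices to check it is not itself the red or blue boundary of the shifted diagram and not in $\SA$; the latter holds by hypothesis, and in the shifted diagram a $\gamma_\star$ boundary is uncolored). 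Thus $\partial_{\varepsilon_h}S_\D=\partial_{\gamma_\star}S_{\check\D}\subseteq\Srem_{\check\D}$, and likewise for the red boundary, and we can again invoke \cref{prop:inj_bdries} applied to the shifted diagram. The diffeomorphisms $\partial_{\gamma_\star}S_{\check\D}\simeq\partial_{\varepsilon_h}S_\D$ (from \cref{lem:blue_chane_of_coords_verification} and the analogous statement for the red boundary, see the proof of \cref{prop:mfld_w_bdry}) are by construction identity maps on the underlying points of $\Gr_{k,n;\ell}$ — they merely re-present points in a different domino form — so injectivity of $\tZ$ and smoothness of the inverse transfer verbatim.

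More directly, and perhaps cleaner to write: for \emph{every} matchable boundary $\partial_{\zeta_i}S_\D$ not contained in $\SA$, one shows $\partial_{\zeta_i}S_\D\subseteq\Srem_{\D'}$ for some chord diagram $\D'$ (namely $\D'=\D$ in the regular case, $\D'=\shift_{\varepsilon_h}\D$ in the blue case, $\D'=\shift_{\abe}\D$ in the red case), using \cref{cor:regular_matchable_and_S_D_in_SREM}, \cref{lem:bdry_matching_geo} and \cref{cor:reg_domino_strata}. Then \cref{prop:inj_bdries} gives that $\tZ$ restricted to $S_{\D'}\cup\Srem_{\D'}$, hence to $\partial_{\zeta_i}S_\D$, is injective with smooth inverse. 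The inverse is smooth because, as in the proof of \cref{prop:inj_bdries}, the functions $F^{\D'}_{\varepsilon_h}$ of \eqref{eq:def_F_eps} are nonvanishing on $\Srem_{\D'}$ by \cref{lem:Sinj_strata}, so the iterative row-by-row reconstruction of \cref{lem:inj_when_eps_neq0} yields explicit smooth (in fact functionary) formulas for the preimage.

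The main obstacle is bookkeeping rather than substance: one must be careful that the iterated red and blue shifts (as flagged in \cref{rmk:iterative boundary}) are actually needed — a single shift may produce a diagram whose $\gamma_\star$-boundary is still the red or blue boundary of the new diagram, in which case one iterates — and that at each stage the hypothesis "not contained in $\SA$" is preserved, which follows because the shift maps are geometric identities on points and $\SA$ is intrinsically defined. One should also confirm that the domain manifold structure matches up, i.e. that the diffeomorphism in \cref{lem:blue_chane_of_coords_verification} (and its red analogue in the proof of \cref{prop:mfld_w_bdry}) genuinely identifies $\partial_{\zeta_i}S_\D$ with the boundary stratum of the shifted cell as subsets of $\Gr_{k,n;\ell}$, so that "injective with smooth inverse" is literally the same statement on both sides. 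Once these matching issues are settled, no new computation is required.
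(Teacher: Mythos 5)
Your proposal is correct and follows essentially the same route as the paper: reduce to \cref{prop:inj_bdries} by showing that every internal matchable boundary is contained in $\Srem_{\D'}$ for some chord diagram $\D'$, via \cref{cor:regular_matchable_and_S_D_in_SREM} for the uncolored case and the shift identities (\cref{lem:bdry_matching_geo}, \cref{cor:red_bdry_in_sred}, \cref{lem:from SREDSBLUE to smaller diags in diags order}) for the red and blue boundaries. Two small points worth cleaning up. First, your worry about needing to iterate the shifts is unfounded here: by \cref{def:internal_bdries_domino} the red and blue boundaries are precisely the $\delta_0$- and lowest-blue-$\varepsilon$-boundaries, and by \cref{obs:bdry_matching_comb} the matched boundary $\zeta_j$ of $\partial_{\varepsilon_h}S_\D$ or $\partial_{\delta_0}S_\D$ is of type $\gamma_\star$, $\delta_j$, or $\gamma_j$ with $j\ne 0$ — none of which is the red or blue boundary of $\check\D$ — so a single shift always lands in an uncolored (hence regular, once $\SA$ is avoided) boundary; the iteration in \cref{rmk:iterative boundary} is about corners, not the codimension-one matched stratum. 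Second, you write that the blue boundary is always matched to $\gamma_\star$ of $\check\D$; this is only true when $\D_h$ is not very short. When $\D_h$ is very short the matched variable can be $\delta_j$ or $\gamma_j$ for a relabeled chord $j$, but in all cases the matched boundary remains uncolored, so the conclusion is unchanged.
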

\begin{proof}By \cref{cor:regular_matchable_and_S_D_in_SREM} every such boundary of $S_\D$ contained in $\Srem_{\D'}$ where $\D'$ is either $\D$ or a chord diagram smaller than $\D$ in the partial order on chord diagrams defined in \cref{lem:from SREDSBLUE to smaller diags in diags order}. Thus, the result follows from \cref{prop:inj_bdries}.
\end{proof}
\begin{rmk}\label{rmk:inv_prob_bcfw}
The inverse problem procedure described in this section, leaning on \cref{lem:inj_when_eps_neq0}, allows a calculation of the domino form. Using \cref{cor:BCFW2dominoExplicit} one can also deduce the BCFW form. However, one can also calculate the BCFW form directly, via a BCFW analog of \cref{lem:inj_when_eps_neq0}: Recall the vectors $v^{\D_i}_\balp,\ldots, v^{\D_i}_\beps$ of \cref{subsec:BCFW_form}, defined by the requirement that the row corresponding to $\D_i$ in the BCFW form of $U\vdots V\in S_\D$ can be written as
\[\balp_i v^{\D_i}_\balp+\ldots\beps_i v^{\D_i}_\beps,\]where for $i=0$ we omit $v^{\D_0}_{\bgam},v^{\D_0}_{\bdel}$ from the summation, and for $i=\star$ we omit $v^{\D_\star}_{\balp},v^{\D_\star}_{\balp}.$
By \cref{obs:bcfw_form} $v^{\D_i}_\bzet,~\bzet\in\{\balp,\ldots,\beps\}$ depends only on reduced BCFW coordinates of chords $\D_h$ for $h\in\ahead(i).$
Then as in \cref{lem:inj_when_eps_neq0} we have for $i\neq 0,\star$
\[[\balp_i:\bbet_i:\cdots:\beps_i]=[\llrr{v^{\D_i}_\bbet Z,v^{\D_i}_\bgam Z,v^{\D_i}_\bdel Z,v^{\D_i}_\beps Z}:-\llrr{v^{\D_i}_\balp Z,v^{\D_i}_\bgam Z,v^{\D_i}_\bdel Z,v^{\D_i}_\beps Z}:\cdots:\llrr{v^{\D_i}_\balp Z,v^{\D_i}_\bbet Z,v^{\D_i}_\bgam Z,v^{\D_i}_\bdel Z}],\]as long as one of them, say, $\llrr{v^{\D_i}_\balp Z,v^{\D_i}_\bbet Z,v^{\D_i}_\bgam Z,v^{\D_i}_\bdel Z}\neq 0,$ and an analogous statement can be made for $i=0,\star.$ \cref{rmk:v_becomes_explicit} shows that this twistor agrees with $F^\D_{\varepsilon_i}$ of \eqref{eq:def_F_eps}, which is indeed non zero on $S_\D$ by \cref{thm:inj}. 

Since the vectors $[\balp_i:\bbet_i:\cdots:\beps_i]$ are well defined functions of $Y\vdots L,$ they have a non projective representative with all entries permissible. Since \eqref{eq:def_F_eps} is a permissible expression, this non projective representative can be taken to be the vector appearing in the right hand side of the above equation.
\end{rmk}
\subsection{The functionary representation of BCFW coordinates}
We now turn to study the functionary representation of the BCFW 
coordinates, and their behavior under promotions.
Recall \cref{rmk:inclusion_of_Var}, which defines a correspondence between the reduced BCFW variables of ${\D'}$ and ${\D}$ whenever the former is a subchord diagram of the latter. 
\begin{prop}\label{prop:bcfw_coords_promote_2_bcfw_coords}
Let $\D\in\CD_{n,k}^{\ell}$ be a chord diagram. Every element of $\BBCFWV^\D
$ has a representation by a rational pure permissible functionary of fixed sign on the image of the cell. Moreover, 
\begin{itemize}
\item if $\D=\pre_{n-1}\D'$ and $F'$ is a representing functionary for an element of $\BBCFWV^{\D'}
,$ then the representing functionary for the corresponding element of $\BBCFWV^\D
$ is $\Psi^\pre F'.$ In addition, $\Fdelr^\D=\Psi^\pre\Fdelr^{\D'},$ when they are defined, where we recall that if $\ell=1$ $\Fdelr^\D$ is defined in \cref{obs:well_def_red_blue_outside_SA}.
\item If $\D=\D_L\bcfw\D_R$ then $\alpha_k=\pm\halp_k,\ldots,\varepsilon_k=\pm\heps_k$ are realized by the twistors of \cref{cor:5_3}, up to signs which can be read from \cref{def:L=1domino_signs}. Let $F'$ be a representing functionary for an element of $\BBCFWV^{\D_L}$ or $\BBCFWV^{\D_R}$ then the representing functionary for the corresponding element of $\BBCFWV^\D$ equals $\Psi^\bcfw_{ac} F'$. In addition if either $\D'=\D_L$ or $\D'=\D_R$ is a $1-$loop diagram, then $\Fdelr^\D=\Psi^\bcfw_{ac}\Fdelr^{\D'}.$ 
\item If $\D=\FL(\D')$ then $\gamma_\star=\pm\hgam_\star,\delta_\star=\pm\hdel_\star,\varepsilon_\star=\pm\heps_\star$ are given by the twistors of \cref{cor:5_3}, up to signs which can are fixed by \cref{def:L=1domino_signs}. $\alpha_0=\pm\halp_0,\beta_0=\pm\hbet_0,\varepsilon_0=\pm\heps_0$ are given by the twistors $\llrr{b_0n_0AB},-\llrr{a_0n_0AB},\llrr{a_0b_0AB},$ up to signs from \cref{def:L=1domino_signs} again. If $F'$ is the representative functionary for an element of $\zeta^{\D'}_i\in
\BBCFWV^{\D'},$ including $\delta_{\tr(\D)},$ then the functionary for the corresponding variable in $
\BBCFWV^\D$ is $(\Psi^\FL F')/c_{\zeta,i}$, where \[c_{\zeta,i}=\begin{cases}
F^\D_{\alpha_\star}:=\llrr{c_\star d_\star Bn},&\zeta=
\bgam,~\D_i\in\Red_\D
\\\frac{1}{F^\D_{\alpha_\star}}=\frac{1}{\llrr{c_\star d_\star Bn}},&\zeta\in\{\balp,\bbet,\bgam,\beps\},~\D_i\in\Blue_\D
\\
\Psi^\FL F^{\D'}_{\delta_0}=\frac{\llrr{na_0b_0|AB|c_\star d_\star n}}{\llrr{c_\star d_\star Bn}},&\zeta=
\bdel,~\D_i\in\Red_\D
\\1,&\text{otherwise}\end{cases}\]
\end{itemize}
\end{prop}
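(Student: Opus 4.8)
The plan is to prove \cref{prop:bcfw_coords_promote_2_bcfw_coords} by induction on the lexicographically ordered triple $(\ell,k,n)$, following the recursive structure of \cref{def:1-loop-cells}, exactly as in the proofs of the analogous tree-level statements in \cite{even2021amplituhedron,even2023cluster}. The base case $k=\ell=0$ is trivial (the cell is a point, the set $\BBCFWV$ is empty). For the inductive step there are three cases — $\D=\pre_{n-1}\D'$, $\D=\D_L\bcfw\D_R$, and $\D=\FL(\D')$ — and in each one I must produce permissible rational functionaries realizing the reduced BCFW coordinates of the newly-added chord (or forward-limit data), show they have fixed sign on the image of $S_\D$, and verify that the functionaries for the coordinates inherited from subdiagrams evolve by the prescribed promotions.

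First, for the ``new'' coordinates: in the $\bcfw$ case the chord $\D_k$ has BCFW parameters that are recovered directly from $\tZ(U\vdots V)$ by \cref{cor:5_3}~\cref{it:cor5_3_BCFW}, and \cref{lem:sign_of_chord_twistors}~\cref{it:chord_BCFW} gives them strong signs; the signs relating $\halp_k,\ldots,\heps_k$ to $\alpha_k,\ldots,\varepsilon_k$ come from \cref{it:BCFW2DOM_BCFW_chord} of \cref{lem:L=1_BCFW2Domino} and the rules of \cref{def:L=1domino_signs}. Similarly in the $\FL$ case, $\gamma_\star,\delta_\star,\varepsilon_\star$ are recovered by \cref{cor:5_3}~\cref{it:cor5_3_FL} and signed by \cref{lem:sign_of_chord_twistors}~\cref{it:chord_FL}, while $\alpha_0,\beta_0,\varepsilon_0$ are the twistors $\llrr{b_0n_0AB},-\llrr{a_0n_0AB},\llrr{a_0b_0AB}$, whose signs follow from \cref{cor:inj_when_epses_above_yellow_red_neq0}, \cref{lem:inj_when_eps_neq0} applied to the top red chord, and \cref{cor:sign_del0_on_FL}. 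The $\pre$ case adds no new coordinates.

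Second, for the inherited coordinates: the key point is the compatibility of the geometric promotions with the inverse-problem formulas. Here I would invoke \cref{rmk:vector_promotion}, in particular \eqref{eq:promotion_for_coords_prom}, which says that the vector promotions $\Psi^\bullet$ commute with forming twistors of functionary-valued vectors. Combined with the explicit inverse-problem construction of the BCFW form in \cref{rmk:inv_prob_bcfw} — where the coordinate $\bzet_i$ is the ratio of twistors $\llrr{v^{\D_i}_{\ast}Z,\dots}$ with the vectors $v^{\D_i}_{\bzet}$ built recursively from coordinates of chords ``ahead'' of $\D_i$ — this shows that applying $\Psi^\bullet$ to the functionary for $\bzet_i^{\D'}$ reproduces the functionary for $\bzet_i^{\D}$, up to a twistor scalar factor coming from how $v^{\D_i}_{\ast}$ itself transforms. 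Tracking these scalar factors is precisely what produces the correction constants $c_{\zeta,i}$: in the $\pre$ and $\bcfw$ cases the relevant vectors transform by $\Psi^\pre,\Psi^\bcfw$ cleanly (because $\Psi^\bcfw$ sends the vectors $\ee_{a},\ee_b,\ee_c,\ee_d,\ee_n$ to their promoted combinations without extra normalization), so $c_{\zeta,i}=1$; in the $\FL$ case, the extra normalizations $F^\D_{\alpha_\star}=\llrr{c_\star d_\star Bn}$ and $\Psi^\FL F^{\D'}_{\delta_0}$ appear because a red row inherits its starting domino from the top red row $D_0$ and a blue row from $u_\star$, and these get rescaled by $\llrr{c_\star d_\star Bn}$ under the forward-limit inversion — this is exactly the source of the gauge rescalings already recorded in \eqref{eq:passage_from_v_hat_to_v_bar}, \cref{lem:L=1_BCFW2Domino}~\cref{it:BCFW2DOM_FL_blue}–\cref{it:BCFW2DOM_FL_red}, and \cref{lem:scale_inv}. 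For the statement $\Fdelr^\D=\Psi^\bullet\Fdelr^{\D'}$, I would use the explicit formula for $\Fdelr$ from \cref{lem:quad_lemma} together with \eqref{eq:promotion_for_coords_prom} applied to $Z_\star,Z_0$, whose defining twistors are themselves promotions of the corresponding twistors for $\D'$ by \cref{cor:inj_when_epses_above_yellow_red_neq0}.

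Finally, fixed signs: permissibility of $\Psi^\pre F'$ and $\Psi^\bcfw F'$ is immediate from \cref{obs:pre_permissible}, and permissibility of $\Psi^\FL F'$ follows once I check the inherited functionaries (and $\Fdelr^{\D'}$) are pre-permissible in the sense of \cref{def:pre_perm} — the level-$0$ condition is built into the way $v^{\D_i}_{\bzet}$ avoids the $A,B$ indices for chords outside $\D_0$, per \cref{cor:inj_when_all_epses_neq0} and \cref{rmk:EPSES}. Then fixed sign is transported along the promotions by \cref{prop:sign_under_pre_promotion}, \cref{prop:sign_under_BCFW_promotion}, and \cref{prop:sign_under_FL_promotion}, where in the latter I need the input cell $S_{\D'}$ to satisfy \cref{condition:FL} — this is guaranteed by \cref{lem:bcfw_before_fl_tree}~\cref{it:bcfw_before_fl_yellow}. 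The correction factors $c_{\zeta,i}$ themselves are (strongly) signed twistors by \cref{cor:epsilon_signs} and \cref{lem:quad_lemma}, so dividing by them preserves fixed sign. I expect the main obstacle to be the bookkeeping in the $\FL$ case: correctly matching the scalar normalizations produced by the inversion of $\addL_{AB}$ and the reindexing of the top red chord (the $(-1)^{k'_R}$ signs of \cref{lem:L=1_BCFW2Domino}~\cref{it:BCFW2DOM_FL_topred}) against the claimed $c_{\zeta,i}$, and confirming the pre-permissibility level is exactly $0$ so that $\Psi^\FL$ genuinely descends to $\CC\Ampl^1_{n,k,4}$. This is a direct but delicate computation, and I would do it by pushing the explicit vector formulas of \cref{rmk:bcfw_vecs_fl} through \eqref{eq:promotion_for_coords_prom} rather than re-deriving everything from twistor substitutions.
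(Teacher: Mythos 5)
Your proposal takes essentially the same approach as the paper's proof: induction on the recursive structure, passing to the $B$-amplituhedron and using the vector promotions of \cref{rmk:vector_promotion} via \eqref{eq:promotion_for_coords_prom}, pushing the explicit vector formulas of \cref{rmk:bcfw_vecs_fl} through the promotion, identifying the $c_{\zeta,i}$ factors with the scalings from \eqref{eq:passage_from_v_hat_to_v_bar}, and establishing permissibility in the $\FL$ case by tracking pre-permissibility levels (which the paper records in the inline \cref{obs:pre_perm_BCFW}). The only stylistic difference is that the paper obtains fixed sign directly from non-vanishing of the coordinate functionaries (via \cref{thm:inj}) rather than threading signs through \cref{prop:sign_under_pre_promotion}--\cref{prop:sign_under_FL_promotion}; both routes work.
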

Note that the scaling in the case of the forward limit is compatible with the scaling that the passage from BCFW to reduced BCFW coordinates involves, see \cref{def:BCFW coords}.
In view of the previous proposition we introduce the following notation.
\begin{nn}\label{nn:representing_functionaries}
For a chord diagram $\D$ and $\zeta\in\BBCFWV^\D
$ we write $F^\D_\zeta = F^\D_\zeta(Y\vdots L;Z)$ for the (permissible, rational, pure) functionary representing it.
By slight abuse of notations we will sometimes think of $F^\D_\zeta$ as a function of $C\vdots D$ and $Z,$ by composing with the amplituhedron map. 
\end{nn}
\begin{proof}
We use the notations of \cref{rmk:inv_prob_bcfw}. 
The proof walks in the path of the proof of \cite[Proposition 11.14]{even2023cluster}. We will often consider inner products of row vectors. Whenever we write $u \cdot v,$ with $u\in \R^{N_1},~v\in\R^{N_1\cup N_2}$ are row vectors, we implicitly identify $u$ as the element $\pre_{N_2} u\in\R^{N_1\cup N_2}.$ By  \cref{rmk:inv_prob_bcfw} we can write, for $i\in[k+1]_\D$, 
\begin{align*}[\balp^\D_i(Y\vdots L;Z)&:\cdots:\beps^\D_i(Y\vdots L;Z)]=[F_{\balp_i}^\D(Y\vdots L;Z):\cdots:F_{\beps_i}^\D(Y\vdots L;Z)]
\\\notag&=[\llrr{v^{\D_i}_\bbet Z, v^{\D_i}_\bgam Z, v^{\D_i}_\bdel Z, v^{\D_i}_\beps Z}:\llrr{v^{\D_i}_\balp Z, v^{\D_i}_\bgam Z,v^{\D_i}_\bdel Z, v^{\D_i}_\beps Z}:-\llrr{v^{\D_i}_\balp Z,v^{\D_i}_\bbet Z,v^{\D_i}_\bdel Z, v^{\D_i}_\beps Z}:\\&\notag\quad\qquad\qquad\qquad\qquad\qquad\qquad:-\llrr{v^{\D_i}_\balp Z, v^{\D_i}_\bbet Z, v^{\D_i}_\bgam Z, v^{\D_i}_\beps Z}:\llrr{v^{\D_i}_\balp Z, v^{\D_i}_\bbet Z, v^{\D_i}_\bgam  Z, v^{\D_i}_\bdel Z}]\end{align*}
and a similar expression for the BCFW coordinates of the top red and yellow chord, if exist. We also have a similar expression for $[\balp^{\D'}_i:\cdots:\beps^{\D'}_i].$ This determines the BCFW coordinates up to scalings of the rows, where the signs are easily determined from the description of the BCFW form in \cref{subsec:BCFW_form} and the signs of boundary twistors \cref{lem:sign_bdry_twistors}, but we will not need them for this claim.
For this proof it will be more convenient to work in the $B$-amplituhedron, and use the vector promotions of \cref{rmk:vector_promotion}.
Using the correspondence between functionaries on $\Ampl_{n,k,4}^\ell$ and Pl\"ucker functionaries on $\mathcal{B}_{n,k,4}^\ell$ given in \cref{lem:B_amp_A_amp_dual} and \cref{obs:_check_B_amp_A_amp_dual}, we have $i\in[k+1]_\D$

\begin{align}\label{eq:domino_funcs_D}&[\balp^\D_i(Y\vdots L;Z):\cdots:\beps^\D_i(Y\vdots L;Z)]=[F_{\balp_i}^\D(\bz):\cdots:F_{\beps_i}^\D(\bz)]
\\\notag&=[\lr{v^{\D_i}_\bbet\cdot \bz, v^{\D_i}_\bgam\cdot \bz, v^{\D_i}_\bdel\cdot \bz, v^{\D_i}_\beps \cdot \bz}:-\lr{v^{\D_i}_\balp \cdot \bz, v^{\D_i}_\bgam \cdot \bz,v^{\D_i}_\bdel \cdot \bz, v^{\D_i}_\beps \cdot \bz}:\lr{v^{\D_i}_\balp\cdot \bz,v^{\D_i}_\bbet \cdot \bz,v^{\D_i}_\bdel \cdot \bz, v^{\D_i}_\beps \cdot \bz}:\\\notag&\qquad\qquad\qquad\qquad\qquad\qquad\qquad\qquad:-\lr{v^{\D_i}_\balp \cdot \bz, v^{\D_i}_\bbet \cdot \bz, v^{\D_i}_\bgam \cdot \bz, v^{\D_i}_\beps \cdot \bz}:\lr{v^{\D_i}_\balp \cdot \bz, v^{\D_i}_\bbet \cdot \bz, v^{\D_i}_\bgam  \cdot \bz, v^{\D_i}_\bdel \cdot \bz}]\end{align}
where $\bz\in\CC\Gr_{2,n;1}$ equals $f_Z^{-1}(Y\vdots L),$ where $f_Z$ is the map of \cref{lem:B_amp_A_amp_dual}, $\cdot$ is the standard inner product performed row-by-row, that is $v\cdot \bz=\bz v^T$, and we slightly abused notations and referred to $F^\D_\bzet$ as a function of $\bz,$ which is fine by the same theorem. 
Let $\check{\bz}$ be the $4-$plane obtained from $\bz$ by adding the columns $z_A,z_B,$ via the choice of a local lift $s:U\to\CC\check{\Gr}_{2,n;1}$ where $U$ is an open neighborhood of $\bz\in\CC\Gr_{2,n;1}.$
Then, independently of the local lift, for $i=0,$ 
\begin{align*}\notag[\balp^\D_0(Y\vdots L;Z):&\bbet^\D_0(Y\vdots L;Z):\beps^\D_0(Y\vdots L;Z)]
=[F_{\balp_0}^\D(\bz):F_{\bbet_0}^\D(\bz):F_{\beps_0}^\D(\bz)]
\\&=[\lr{v^{\D_0}_\bbet \cdot \bz, v^{\D_0}_\beps \cdot \bz,z_A,z_B}:-\lr{v^{\D_0}_\balp \cdot \bz, v^{\D_0}_\beps \cdot \bz,z_A, z_B}:\lr{v^{\D_0}_\balp \cdot \bz,v^{\D_0}_\bbet \cdot \bz,z_A, z_B}]\end{align*}and for $i=\star$ 
\begin{align*}\notag[\bgam^\D_\star(Y\vdots L;Z):&\bdel^\D_\star(Y\vdots L;Z):\beps^\D_\star(Y\vdots L;Z)]
=[F_{\bgam_\star}^\D(\bz):F_{\bdel_\star}^\D(\bz):F_{\beps_\star}^\D(\bz)]
\\&=[\lr{v^{\D_\star}_\bdel \cdot \bz, v^{\D_\star}_\beps \cdot \bz,z_A,z_B}:-\lr{v^{\D_\star}_\bgam \cdot \bz, v^{\D_\star}_\beps \cdot \bz,z_A, z_B}:\lr{v^{\D_\star}_\bgam \cdot \bz,v^{\D_\star}_\bdel \cdot \bz,z_A, z_B}]\end{align*}
\\\textbf{The case $\D=\pre_{n-1}(\D').$}
\\The effect of $\Psi^\pre$ on a functionary of $\D$ is trivial. To show that the same holds for the BCFW coordinates, note that the effect of $\Psi^{\pre_{n-1}}$ on $\bz$ is the removal of the $n-1$th column of $\bz$, by \cref{rmk:vector_promotion}. The lift $s$ can be chosen so that $z_A,z_B$ are unaffected. The effect on $v^{\D_i}_\bzet$ is adding a new zero entry at position $n-1.$ Thus, the comparing right hand side of \eqref{eq:domino_funcs_D} applied to a point in $S_\D,$ and to the corresponding point of $S_{\D'},$ we see that they are the same. The same argument works for $\Fdelr^\D$ using its expression given in \cref{obs:well_def_red_blue_outside_SA}.
\\\textbf{The case $\D=\D_L\bcfw\D_R.$}
\\Moving to the BCFW step case, we first observe that the row corresponding to the rightmost top chord $\D_k$ has a representation by the twistors given in \cref{cor:5_3}:
\[[F^\D_{\balp_k}(\bz):F^\D_{\bbet_k}(\bz):\cdots:F^\D_{\beps_k}(\bz)]=[\llrr{bcdn}:-\llrr{acdn}:\cdots:\llrr{abcd}]
\]
These twistors have strong non zero signs by \cref{lem:sign_of_chord_twistors}. The above projective vector equals $[\lr{z_{b}z_{c}z_{d}z_n}:-\lr{z_{a}z_{c}z_{d}z_n}:\cdots:\lr{z_{a}z_{b}z_{c}z_{d}}]$ by \cref{lem:B_amp_A_amp_dual}.
Let $\D_i$ be a chord coming from the right component.
By the construction of the vectors $v^{\D_i}_\bzet,$ performed in \cref{subsec:BCFW_form}, we easily see that
\[v^{\D_i}_\bzet=y_{c}(\frac{\bgam_k}{\bdel_k}) y_{d}(\frac{\bdel_k}{\beps_k})v^{(\D_R)_i}_\bzet.\]Since the action of $y_j$ are realized by right matrix multiplication, and its transpose is $x_{j}$ we have
\begin{equation}\label{eq:relate_vecs_before_after_bcfw}
v^{\D_i}_\bzet \cdot \bz = v^{(\D_R)_i}_\bzet (x_{d}(\frac{\bdel_k}{\beps_k})x_{c}(\frac{\bgam_k}{\bdel_k})  \bz).
\end{equation}
We now induct on chords, in descending order of endpoints, that the BCFW variables of $\D_R$ are promoted under $\Psi^\bcfw$ to the corresponding BCFW coordinates of $\D.$
First observe that after substituting in \eqref{eq:relate_vecs_before_after_bcfw} the twistor representations for $\bgam_k,\bdel_k,\beps_k$ the expression in the bracket is precisely the BCFW promotion of $\bz$, \cref{rmk:vector_promotion}, in the form of \cref{obs:BCFW_promotion_domino}. If $\D_i$ corresponds to a top chord of $\D_R$ then by \cref{obs:bcfw_form} $v^{(\D_R)_i}_\bzet$ is a constant vector, the right hand side of \eqref{eq:relate_vecs_before_after_bcfw} is the BCFW promotion of the corresponding vector for $\D_R.$ By \eqref{eq:domino_funcs_D}, and \cref{rmk:vector_promotion} which shows that the promotion on functionaries is induced from the promotion of vectors, the BCFW coordinates of $\D_i$ are the promotions of the BCFW coordinates of $(\D_R)_i.$

If $\D_i$ is not a top chord, then by \cref{obs:bcfw_form} $v^{\D_i}_\bzet$ is a Laurent polynomial in the BCFW coordinates of chords $\D_h$ for $h\in\ahead(i),$ and the same holds for $v^{(\D_R)_i}_\bzet.$By induction, the functionaries $F^{\D}_{\bzet_h}$ describing the BCFW coordinates of $\D_h$ for $h\in\ahead(i)$ are either the twistors of the $k$th chord, or are promotions of the corresponding functionaries $F^{\D_R}_{\bzet_h}$ for $(\D_R)_h.$
Thus,\begin{align*}
v^{\D_i}_\bzet((F^\D_{\bsigm_h})&_{h\in\ahead(i),\bsigm\in\{\balp,\ldots,\beps\}})\cdot \bz= \\&=v^{{\D_R}_i}_\bzet((F^\D_{\bsigm_h})_{h\in\ahead(i)\setminus\{k\},\bsigm\in\{\balp,\ldots,\beps\}})(x_d(-\frac{\lr{abdn}}{\lr{abcd}})x_c(-\frac{\lr{abdn}}{\lr{abdn}})\cdot \bz) =
\\&=v^{{\D_R}_i}_\bzet((\Psi^\bcfw_{ac}F^{\D_R}_{\bsigm_h})_{h\in\ahead(i)\setminus\{k\},\bsigm\in\{\balp,\ldots,\beps\}})\cdot(\Psi^\bcfw_{ac}\bz) =\\&= \Psi^\bcfw_{ac}([v^{{\D_R}_i}_\bzet((F^{\D_R}_{\bsigm_h})_{h\in\ahead_{\D_R}(i),\bsigm\in\{\balp,\ldots,\beps\}})]\cdot \bz),
\end{align*}
where $\ahead_{\D_R}$ is the function $\ahead$ calculated with respect to the diagram $\D_R,$ the first equality is  \eqref{eq:relate_vecs_before_after_bcfw}, the second is induction, and the third is by definition. Using \eqref{eq:promotion_for_coords_prom} 
and \eqref{eq:domino_funcs_D} we see that also the BCFW describing functionaries of the coordinates of $\D_i$ are promotions of the corresponding coordinates for $(\D_R)_i.$

Chords coming from the left component are treated similarly. Also $\Fdelr^\D$ is treated similarly, using its expression given in \cref{obs:well_def_red_blue_outside_SA}.
\\\textbf{The case $\D=\FL(\D').$}
\\
We can obtain the representing functionaries and vectors $v_{\bzet}^{\D_i}$ in using \cref{rmk:inv_prob_bcfw} again.
The BCFW functionaries  $F^\D_{\hgam_\star},F^\D_{\hdel_\star},F^\D_{\heps_\star}$ are given by the twistors of \cref{cor:5_3}, and have a strong non zero sign by \cref{lem:sign_of_chord_twistors}. 
In addition
\[[F^\D_{\halp_0}:F^\D_{\hbet_0}:F^\D_{\heps_0}]=[\llrr{b_0ABn},-\llrr{a_0ABn},\llrr{a_0b_0AB}].\]This follows from \cref{cor:5_3} applied to $\D',$ where $\D_0=\D'_{k+1}$ is the rightmost top chord, and the definition of the forward limit which just chops the $A,B$ entries of $\D'_{k+1}$ to obtain $\D_0.$ Note that these three twistors are invariant under the forward limit promotion, and they cannot be identically zero, by the proof \cref{thm:inj} which shows that they are coordinates on $\tZ(S_\D).$ Hence by \cref{prop:sign_under_FL_promotion} they have fixed weak non-zero sign on $S_\D.$

Let $s,\check\bz,z_A,z_B$ be as above. For every such choice of $z_A,z_B$ there is a unique choice of $F^{\D_\star}_\halp,F^{\D_\star}_\hbet$ such that \begin{equation}\label{eq:v_star_perp_z}\check{v}^{\D_\star}:=\left(\pre_{AB}(\lr{d_\star ABn}\ee_{c_\star}-\lr{c_\star ABn}\ee_{d_\star}+\lr{c_\star d_\star AB}\ee_{n})+F^{\D_\star}_\halp \ee_A+F^{\D_\star}_\hbet \ee_B\right) \perp \check{\bz},\end{equation}given by \[F^{\D_\star}_\halp=\lr{c_\star d_\star Bn},\qquad F^{\D_\star}_\hbet=-\lr{c_\star d_\star An},\]
by the $B$-amplituhedron analog of \cref{lem:5_3}, whose proof is identical. Note that the Pl\"ucker coordinates are calculated with respect to $\check{z},$ but this affects only the non permissible twistors.
Similarly, there is a unique choice of $F^{\D_0}_\hgam,F^{\D_0}_\hdel$ such that 
\begin{equation}\label{eq:v_0_perp_z}
\check{v}^{\D_0}:=\left(\pre_{AB}(\lr{b_0 ABn}\ee_{a_0}-\lr{a_0 ABn}\ee_{b_0}+\lr{a_0b_0AB}\ee_{n})+F^{\D_0}_\hgam \ee_A+(F^{\D_0}_\hdel +\frac{\hbet_\star}{\halp_\star}F^{\D_0}_\hgam)\ee_B\right) \perp \check{\bz}.
\end{equation}
given by
\[F^{\D_0}_\hgam=\lr{a_0b_0Bn},\qquad F^{\D_0}_\hdel=-\frac{\lr{na_0b_0|AB|c_\star d_\star n}}{\lr{c_\star d_\star Bn}}.\]

Recall \cref{obs:parameterizing_S_check_D}. Let $S_{\check\D}$ be constructed from $S_{\D'}$ with the same operations as in the definition of the forward limit, only without chopping the columns $A,B$ of $C\vdots D.$ Then the argument of \cref{cor:5_3} would show that above twistors and functionaries 
would indeed represent the additional BCFW coordinates $\halp_\star,\hbet_\star,\hgam_0,\hdel_0.$ Therefore we can think of the lift to $\check\bz$ as a lift to $S_{\check\D}.$ 
Observe that $F^{\D_0}_\hzet=\Psi^\FL F^{\D'_0}_\hzet,~\hzet\in\{\halp,\ldots,\heps\}.$


As in the BCFW case, we will induct on chords, in descending order of endpoints, that their \emph{non reduced} BCFW coordinates are the promotions of the corresponding BCFW coordinates for $\D'.$ We have shown it for $\D_0.$ It will be more convenient to work with the vectors $v_{\hzet}^{\D_i}$ obtained from $v_{\bzet}^{\D_i}$ by scaling in factors of $1,\halp_\star^{\pm1},\hdel_0$ according to the different cases of $i,\hzet,$ see \eqref{eq:passage_from_v_hat_to_v_bar}. This scaling has the effect that \eqref{eq:domino_funcs_D} holds with every $\bzet$ being replaced by $\hzet.$

For $i\in[k+1]_\D$,
by \cref{rmk:bcfw_vecs_fl} we can write
\begin{align}\label{eq:FL_v_promotion}
\notag v^{\D_i}_{\hzet}\cdot \bz&=\rem_{AB}\left(y_{c_\star}(\frac{\gamma_\star}{\delta_\star})x_{A}(\frac{\hbet_\star}{\halp_\star})\check{v}_{\hzet}^{\D'_i}-s\check{v}^{\D_\star}-r\check{v}^{\D_0}\right)\cdot \bz\\\notag&=\left(y_{c_\star}(\frac{\gamma_\star}{\delta_\star})x_{A}(\frac{\hbet_\star}{\halp_\star})\check{v}_{\hzet}^{\D'_i}-s\check{v}^{\D_\star}-r\check{v}^{\D_0}\right)\cdot\check{\bz}\\&=y_{c_\star}(\frac{\gamma_\star}{\delta_\star})x_{A}(\frac{\hbet_\star}{\halp_\star})\check{v}_{\hzet}^{\D'_i}\cdot\check{\bz}-s\check{v}^{\D_\star}\cdot\check{\bz}-r\check{v}^{\D_0}\cdot\check{\bz}=y_{c_\star}(\frac{\gamma_\star}{\delta_\star})x_{A}(\frac{\hbet_\star}{\halp_\star})\check{v}_{\hzet}^{\D'_i}\cdot\check{\bz},
\end{align}
where the vectors $\check{v}_{\hzet}^{\D'_i}$ are the vectors denoted $\check{v}_{\hzet,i}^{t-1}$ in \cref{rmk:bcfw_vecs_fl}, the second equality is because the $A,B$ entries of the vector in the parentheses are $0,$ and the last equality follows from \eqref{eq:v_0_perp_z},~\eqref{eq:v_star_perp_z}. As in the case of the BCFW step, by \cref{obs:bcfw_form}, $\check{v}_{\hzet}^{\D'_i}$ depends only on BCFW variables of $\D_h$ for $h\in\ahead(i)\setminus\{\star\}.$ By induction the  functionaries $F^{\D}_{\bzet_i}$ representing these BCFW variables in $\D$ for are the forward limit promotions of the  functionaries $F^{\D'}_{\bzet_i}$ representing the corresponding variables on $\D'.$ Thus, as in the case of BCFW step
\begin{align*}
v^{\D_i}_\hzet((F^\D_{\hsigm_h})&_{h\in\ahead(i),\hsigm\in\{\halp,\ldots,\heps\}})\cdot \bz= \\&=v^{{\D'}_i}_\hzet((F^\D_{\hsigm_h})_{h\in\ahead(i)\setminus\{\star\},\hsigm\in\{\halp,\ldots,\heps\}})(x_{c_\star}(-\frac{\lr{d_\star ABn}}{\lr{c_\star ABn}})y_A(-\frac{\lr{c_\star d_\star An}}{\lr{c_\star d_\star B n}})\cdot\check{\bz}) =
\\&=v^{{\D'}_i}_\hzet((\Psi^\FL F^{\D'}_{\hsigm_h})_{h\in\ahead(i)\setminus\{\star\},\hsigm\in\{\halp,\ldots,\heps\}})\cdot(\Psi^\FL \check{\bz}) =\\&= \Psi^\FL\left([v^{{\D'}_i}_\hzet((F^{\D'}_{\hsigm_h})_{h\in\ahead_{\D'}(i),\hsigm\in\{\halp,\ldots,\heps\}})]\cdot\check{\bz}\right),
\end{align*}
where $\ahead_{\D'}$ is the function $\ahead$ calculated with respect to the diagram $\D'$,  the first equality is \eqref{eq:FL_v_promotion}, the second is induction, and the third is from the definition of vector promotions in \cref{rmk:vector_promotion}.
Using the $\bullet=\FL$ case of \eqref{eq:promotion_for_coords_prom} in which we substitute $\check\bz$ for $\bz,$ and \eqref{eq:domino_funcs_D}, we see that also the BCFW describing functionaries of the coordinates of $\D_i$ are promotions of the corresponding coordinates for $\D'_i.$ Scaling by the vectors by $c_{\zeta,i},$ which are the functionaries representing the scalings by $1, \halp_\star^{\pm1},\hdel_0$ in the different cases yield the reduced vectors $v_{\bzet}^{\D_i}$, by \eqref{eq:passage_from_v_hat_to_v_bar}. Observe that by \cref{obs:bcfw_form} the vectors $v^{\D_i}_{\bzet}$ indeed depend only on reduced BCFW coordinates. 
\\\textbf{Signs, permissibility and purity.}
\\The functionaries representing BCFW variables cannot be identically zero on a BCFW cell, since they form coordinates by the proof of \cref{thm:inj}. Moreover, since they represent coordinates which are non-zero on the cell, the functionaries themselves cannot vanish on the cell. 
$\Fdelr^\D$ is non-zero on the cell by \cref{lem:Sinj_strata} and \cref{cor:regular_matchable_and_S_D_in_SREM}.

That the reduced BCFW coordinates are pure follows from simple induction on chord diagrams, using that for $\ell=0,k=1$ diagrams the twistor solution of a single (top) row is pure, and purity is preserved under promotions by \cref{cor:purity_preserved_under_prom}. For $\Fdelr^\D$ it follows from \cref{obs:well_def_red_blue_outside_SA} applied in the case $\D=\FL(\D'),$ and iterated applications of \cref{cor:purity_preserved_under_prom}.

$\Fdelr^\D$ is permissible by \cref{lem:quad_lemma}. 
Regarding the other reduced BCFW variables, note that if $\ell=0$ then all functionaries are permissible by definition. Also, it follows from induction and \cref{obs:pre_permissible} that a permissible functionary gets promoted under $\Psi^\pre,\Psi^\bcfw$ to a permissible functionary.
Thus, we only need to show that if $\D=\FL(\D')$ then all reduced BCFW variables of $\D$ are permissible.
By \cref{obs:pre_permissible} and the first part of the proposition, it is enough to show that the functionaries which get promoted to the functionaries representing reduced BCFW variables are pre-pemissible.
We will prove that by induction on the generation sequence of $\D'.$ Observe first that before performing the BCFW steps introducing the blue and red chords of $\D$ no representing functionary of a BCFW coordinate involves the indices $A,B.$

Let $\D^h,\D^{h+1},\ldots,\D^{k+1}$ be the subdiagrams whose rightmost top chord is the chord which in $\D$ is the lowest blue, second lowest blue (if they exist),...,top red chord.
Let $\D^{j}_L$ be the left subdiagram of $\D_j$ and $\D^{j}_R$ be the right subdiagram of $\D_j,$ which is $\D^{j-1}$ if $j>h.$ Then $\D_{j+1}=\D^j_L\bcfw\D^j_R.$ We will show below that
\begin{obs}\label{obs:pre_perm_BCFW}
\begin{itemize}
\item All functionaries representing BCFW variables have a representation in which each twistor either includes or excludes both $A,B$, except than those representing $\halp_j,\hbet_j,\hgam_j,\heps_j$ $j\in\Blue_\D$, $\hgam_j,\hdel_j$ for $j\in\Red_\D.$
\item The functionaries representing $\halp_j,\hbet_j,\hgam_j,\heps_j$ $j\in\Blue_\D$, are pre-permissible of level $1.$
\item The functionaries $F^{(\D^l)_j}_\hgam$ representing $\hgam_j,~j\in\Red_{\D_l}$ in $\D_l$ for $l>j$ are pre-permissible of level $-1;$  
 $F^{(\D^l)_l}_\hgam=\llrr{a_lb_lBn}.$
\item The functionaries representing $\hdel_j,~j\in\Red_{\D_l}$ in $\D_l$ are pre-permissible of level $1,$ hence the functionary representing $\hdel_j/\hdel_{k+1}$ is pre-permissible.
\end{itemize}
\end{obs}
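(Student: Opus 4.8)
The plan is to prove \cref{obs:pre_perm_BCFW} by induction along the generation sequence of $\D'$, which --- since $\D'$ is a tree diagram --- consists only of $\pre$ and $\bcfw$ steps. The inductive statement strengthens the four bullets to a claim about every intermediate subdiagram $\D''$ in that sequence: each functionary $F^{\D''}_{\hzet_i}$ representing a non-reduced BCFW variable admits a representation in which every twistor contains both of $A,B$ or neither, the exceptions being exactly the variables $\halp_i,\hbet_i,\hgam_i,\heps_i$ for chords ending at $(d,A)$ (the blue-to-be chords) and $\hgam_i,\hdel_i$ for chords ending at $(A,B)$ (the red-to-be chords); and these exceptional functionaries are pre-permissible of the stated levels, with the one transient exception that $\hgam_i$ for a chord ending at $(A,B)$ equals $\llrr{a_ib_iBn}$ at the step where that chord is introduced and becomes pre-permissible of level $-1$ at the following $\bcfw$ step.

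The base case is immediate: up to the first $\bcfw$ step whose new chord has support touching $A$ or $B$, no chord uses the markers $A,B$, so by the recursive formulas of \cref{prop:bcfw_coords_promote_2_bcfw_coords} none of the representing functionaries involves $A$ or $B$ and every assertion holds vacuously. The first genuine inputs occur at the stages $\D^h,\ldots,\D^{k+1}$, where a blue-to-be or red-to-be chord is introduced as the rightmost top chord. There one invokes \cref{cor:5_3}: the BCFW variables $\halp,\ldots,\heps$ of a chord $(a,b,c,d)$ are represented, up to the sign rules of \cref{def:L=1domino_signs}, by $\pm\llrr{bcdn},\pm\llrr{acdn},\llrr{abdn},-\llrr{abcn},\llrr{abcd}$. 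Substituting $(c,d)=(d_{\mathrm{old}},A)$ in the blue case and $(c,d)=(A,B)$ in the red case, one reads off the $A,B$-content of the five twistors and checks directly that it matches the claimed exceptions and levels --- in particular $\hdel=-\llrr{abAn}$ is pre-permissible of level $1$ for a red chord, while $\hgam=\llrr{abBn}$ is precisely the non-permissible twistor named in the third bullet.

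For the inductive step one follows the effect of a $\bcfw$ step on all previously constructed representing functionaries through the promotion $\Psi^{\bcfw}_{ac}$, which is legitimate because \cref{prop:bcfw_coords_promote_2_bcfw_coords} identifies the transformation of the functionaries with exactly this promotion. If the new chord does not touch $A$ or $B$, none of the substitutions of \cref{def:bcfw_prom} involves $A$ or $B$, so the $A,B$-content of each representing functionary --- hence every pre-permissibility claim --- is preserved (the $A$- and $B$-degrees can at most be shifted uniformly by the substituted twistors, which carry no $A$ or $B$). If the new chord is blue-to-be or red-to-be, the substitution $d\mapsto d-\frac{\llrr{abdn}}{\llrr{abcn}}c$ on the right factor turns a twistor carrying the old $d$-marker --- which is $A$ or $B$ --- into a combination of twistors; after clearing denominators and applying the Pl\"ucker relations of \cref{obs:relations_for_functionaries} (including the $\llrr{\cdot\cdot AB}$ relations) one rewrites every resulting twistor so that it contains both or neither of $A,B$, save in the listed cases, and the levels in bullets $2$--$4$ follow by tracking how the $A$-degrees of numerator and denominator change at that step.

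The step I expect to be the main obstacle is the red-chord $\hgam$ case --- showing that the genuinely non-permissible twistor $\llrr{a_jb_jBn}$ present when the red chord $\D_j$ is introduced is cured into a pre-permissible functionary of level $-1$ at the following red $\bcfw$ step and then stays so. Concretely, the next red $\bcfw$ step (for a chord with support $(a_{j'},b_{j'},A,B)$) applies $B\mapsto B-\frac{\llrr{a_{j'}b_{j'}Bn}}{\llrr{a_{j'}b_{j'}An}}A$, replacing $\llrr{a_jb_jBn}$ by $\bigl(\llrr{a_jb_jBn}\llrr{a_{j'}b_{j'}An}-\llrr{a_jb_jAn}\llrr{a_{j'}b_{j'}Bn}\bigr)/\llrr{a_{j'}b_{j'}An}$; the numerator has to be identified --- via a Pl\"ucker/$\llrr{\cdot\cdot AB}$ identity --- with a combination of $AB$-together twistors, which leaves the $A$-degree of the denominator one higher than that of the numerator and yields level $-1$. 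The same family of identities pins the blue variables and the red $\hdel$ at level exactly $1$; once these are in hand, the compatibility with the scalars $c_{\zeta,i}$ and with the third item of \cref{obs:pre_permissible} gives the permissibility of the reduced BCFW coordinates of $\D$, finishing the proof of \cref{prop:bcfw_coords_promote_2_bcfw_coords}.
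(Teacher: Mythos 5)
Your proof follows the same overall scheme as the paper's: induction along the generation sequence of $\D'$, a base case where $A,B$ do not yet appear (so all functionaries are trivially permissible), reading off the initial twistor representations of the BCFW variables of a newly introduced red or blue chord via \cref{cor:5_3}, and then tracking pre-permissibility of those representations through subsequent $\bcfw$ promotions. You correctly identify the red $\hgam$ case as the genuine obstacle and the Pl\"ucker relation as the curing mechanism, and the expression you write for the promoted $\llrr{a_jb_jBn}$ after the next red step is the correct one — it matches the paper's closed form $-\llrr{na_jb_j|a_{\p(j)}b_{\p(j)}|ABn}/\llrr{a_{\p(j)}b_{\p(j)}An}$ after a five-term Pl\"ucker relation.

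One small inaccuracy worth flagging: the BCFW promotion $\Psi^{\bcfw}_{a_{\p(j)}A}$ acting on a right-factor twistor containing both $B$ and $n$ applies \emph{both} the $d\ (=B)$-substitution and the $n$-substitution of \cref{def:bcfw_prom}, not only the $B$-substitution as you state. Your final expression is nevertheless correct because the $n$-substitution's extra contribution exactly cancels (using the second form of the $n$-substitution from \cref{obs:BCFW_promotion_domino}, the $\llrr{a_jb_jAB}$ terms produced by substituting $n$ and by substituting $B$ enter with coefficients $\nu\sigma$ and $-\tau$ with $\nu\sigma=\tau$). You should say so rather than silently dropping the $n$-substitution, since for a general twistor the two give different intermediate expressions. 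You also omit the $\pre_B$ step that the paper mentions in the blue case (needed because the blue chords ending at $(d,A)$ are introduced before the marker $B$ exists in the index set); this is a harmless bookkeeping point since $\Psi^{\pre}$ does not alter $A,B$-content, but should be acknowledged to keep the induction airtight.
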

Given the observation, the permissibility of the reduced variables of $\D$ follows from \cref{obs:pre_permissible} using $F^{\D_\star}_\halp=\llrr{c_\star d_\star Bn}$.
\end{proof}
\begin{proof}[Proof of \cref{obs:pre_perm_BCFW}]
The observation is also proven by induction. All variables of $\D^j_L,\D^h_R$ are permissible, as explained above, have permissible functionaries, which by \cref{obs:pre_permissible} remain permissible throughout the process.

When a blue chord $\D_j$ is added, the functionary $F^{(\D^j)_j}_\hdel$ representing $\hdel_j$ is $\llrr{a_jb_jd_\star n}$. The other BCFW variables of this chord involve $A,$ but not yet $B.$ The other steps of the construction, obtained by performing BCFW promotions via blue and red chords and a single $\pre_B$ operation, are easily seen to take $\delta_j$ to a pre-permissible functionary, and the other variables to pre-permissible functionaries of level $1.$

When a red chord $\D_j$ is added its $\halp_j,\hbet_j,\heps_j$ variables are pre-permissible, and this property is preserved under the BCFW promotions via red chords. Similarly, when $\D_j$ is added, $F^{(\D^j)_j}_\hdel=\llrr{a_jb_jAn}$ is pre-permissible of level $1,$ and it is easy to see that this property is preserved under the BCFW promotions via red chords.
Regarding $F^{(\D^j)_j}_\hgam,$ it equals $\llrr{a_jb_jBn}$ when $\D_j$ is introduced. If $j\neq k+1,$ then after the next BCFW step which adds $\D_{\p(j)}$, by the first part of the proof, the functionary representing $F^{(\D^{\p(j)})_j}_{\hgam}$ gets promoted to the pre-permissible functionary of level $-1$\[\Psi_{a_{\p(j)}A}^\bcfw(\llrr{a_jb_jBn})=-\frac{\llrr{na_jb_j|a_{\p(j)}b_{\p(j)}|ABn}}{\llrr{a_{\p(j)}b_{\p(j)}An}}.\]This property persists throughout the BCFW promotions via the remaining chords of $\Red_\D.$
\end{proof}
\begin{cor}\label{cor:strong_pos_coords}
For every $\D\in\CD_{n,k}^\ell,~\ell\in\{0,1\}$ the boundary twistors, the functionaries describing the reduced BCFW coordinates, the functionaries describing the domino coordinates and $\Fdelr^\D$ (which is not included in these sets when $\ell=1$) are not identically zero on $\Gr_{k,k+4;1}$. 
\end{cor}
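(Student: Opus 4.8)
\textbf{Proof plan for \cref{cor:strong_pos_coords}.}
The plan is to bootstrap on the preceding results, which have already established most of the needed non-vanishing statements on BCFW cells themselves, and then to remember that a function which is not identically zero on a (nonempty) BCFW cell is a fortiori not identically zero on the ambient Grassmannian $\Gr_{k,k+4;\ell}$. So the whole statement reduces to: (i) fix a chord diagram $\D\in\CD_{n,k}^\ell$; (ii) observe $S_\D\neq\emptyset$ by \cref{prop:fixed_positroids}, and pick some $(U\vdots V)\in S_\D$ and a positive $Z$; (iii) show that each of the listed functionaries is nonzero at $\tZ(U\vdots V)$. This last point is where we cite the earlier work.

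First I would treat the boundary twistors. By \cref{cor:cell_is_not_SA}, $S_\D\cap\SA=\emptyset$, and by \cref{prop:CD bdries} each boundary twistor vanishes on a preimage in $\Gr_{k,n;\ell}^\geq$ exactly when that point lies in $\SA$; hence no boundary twistor vanishes on the image of $S_\D$. (In fact \cref{cor:cell_is_not_SA} already records that these twistors have strong nonzero sign on $\tZ(S_\D)$.) Next, the functionaries describing the reduced BCFW coordinates and $\Fdelr^\D$: these are exactly the contents of the "Signs, permissibility and purity" paragraph of the proof of \cref{prop:bcfw_coords_promote_2_bcfw_coords}, which states that the representing functionaries of the reduced BCFW variables cannot be identically zero on $S_\D$ — they \emph{are} coordinates there by the proof of \cref{thm:inj}, and being representatives of coordinates that are strictly positive on the cell, the functionaries themselves are nonzero on the cell — and that $\Fdelr^\D$ is nonzero on $S_\D$ by \cref{lem:Sinj_strata} together with \cref{cor:regular_matchable_and_S_D_in_SREM}. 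Finally, the functionaries describing the domino coordinates are obtained from the reduced BCFW ones through the explicit Laurent-monomial formulas of \cref{cor:BCFW2dominoExplicit} (products and ratios of reduced-BCFW functionaries and boundary twistors, possibly with $\eta,\theta$ combinations), so a nonzero-on-$S_\D$ reduced-BCFW functionary together with nonzero-on-$S_\D$ boundary twistors produces a nonzero-on-$S_\D$ domino functionary; alternatively one invokes \cref{lem:inj_when_eps_neq0} and \cref{lem:Sinj_strata} directly, which give a functionary expression for every domino coordinate valid precisely where the $F^\D_{\varepsilon_h}$ are nonzero, and those are nonzero on all of $S_\D$ by \cref{lem:Sinj_strata}.

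I do not expect any genuine obstacle here: the corollary is essentially a bookkeeping statement collecting non-vanishing facts that have each been proved (often with the stronger "strong sign" conclusion) in \cref{cor:cell_is_not_SA}, \cref{lem:Sinj_strata}, and the proof of \cref{prop:bcfw_coords_promote_2_bcfw_coords}. The only mild care needed is in phrasing the reduction from "not identically zero on $S_\D$" to "not identically zero on $\Gr_{k,k+4;\ell}$", which is immediate because $\tZ(S_\D)$ is a nonempty subset of $\Gr_{k,k+4;\ell}$ (and $\tZ(S_\D)$ is in fact open in a submanifold of the right dimension by \cref{cor:open_map}, though that is not needed) — a function vanishing identically on $\Gr_{k,k+4;\ell}$ would in particular vanish on $\tZ(S_\D)$. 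The closest thing to a subtlety is making sure, for the $\ell=0$ versus $\ell=1$ cases, that the parenthetical exclusion of $\Fdelr^\D$ in the $\ell=1$ list is consistent (it is: for $\ell=1$, $\Fdelr^\D$ is already among the listed functionaries via \cref{obs:well_def_red_blue_outside_SA}), and I would simply state the argument uniformly, noting that for $\ell=0$ every functionary in sight is permissible and the cited tree-level results of \cite{even2021amplituhedron,even2023cluster} give the non-vanishing.
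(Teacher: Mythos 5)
Your proposal is correct and follows essentially the same route as the paper: all four families of functionaries are nonzero on the (nonempty, open) image $\tZ(S_\D)$, hence not identically zero on the ambient space. The one inaccuracy is the remark that \cref{cor:BCFW2dominoExplicit} gives the domino coordinates as \emph{Laurent monomials} in the reduced BCFW variables --- several of them (notably the $\gamma_i$) are sums of such monomials, so "nonzero times nonzero" does not by itself conclude --- but your alternative route through \cref{lem:inj_when_eps_neq0} and \cref{lem:Sinj_strata} (which matches the paper's own appeal to the preimage-finding procedure of \cref{thm:inj} and \cref{rmk:inv_prob_bcfw}) closes the gap, since on $S_\D$ every domino entry carries a definite nonzero sign.
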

\begin{proof}
Boundary twistors are clearly not identically $0.$ Functionaries describing the reduced BCFW or domino coordinates are non zero at least on the image of $S_\D,$ by the preimage finding procedure of \cref{thm:inj} and \cref{rmk:inv_prob_bcfw}, hence they cannot be identically $0$ as well.
$\Fdelr^\D$ is not identically $0$ by \cref{lem:eps_red_vanishes_on_red} and \cref{cor:regular_matchable_and_S_D_in_SREM}.
\end{proof}
\section{Tiling the amplituhedron}

The section is devoted to proving that the images of BCFW cells tile the amplituhedron. We will also describe the boundary of the amplituhedron. 
For this we need first to define what is a tiling. We follow \cite{BaoHe}.
\begin{definition}\label{def:tiling}
Let $f:X\to Y$ be a map of manifolds, and let $X^\geq\subseteq X$ compact. Let $S_1,\ldots, S_N\hookrightarrow X^\geq$ be manifolds of dimension $\dim(Y).$ 
We say that the images of $\{S_i\}_{i=1}^N$ \emph{tile} $f(X^\geq)$ if
\begin{itemize}
\item \emph{Injectivity}: $f:S_i \to f(S_i)$ is injective for every $i=1,\ldots,N.$
\item \emph{Separation}: $f(S_i)\cap f(S_j)=\emptyset$ for every $i\neq j.$
\item \emph{Surjectivity}: $\bigcup_{i=1}^N \overline{f(S_i)}=f(X^\geq).$
\end{itemize}
\end{definition}
We would like to prove that 
\begin{theorem*}for $\ell=0,1$ $k+4\leq n,$ and every positive $(k+4)\times n$ matrix $Z,$ the images of $\{S_\D\}_{\D\in\CD^\ell_{n,k}}$ tile $\Ampl_{n,k,4}^\ell(Z).$
\end{theorem*}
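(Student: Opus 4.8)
The plan is to verify the three defining properties of a tiling from Definition~\ref{def:tiling} one at a time, reducing the $\ell=1$ case to the already-developed machinery for BCFW cells and their boundaries. \emph{Injectivity} is essentially done: Theorem~\ref{thm:inj} (via Proposition~\ref{prop:inj_bdries}) shows that $\tZ$ restricted to each $S_\D$ is injective with smooth inverse, and the $\ell=0$ case is \cite{even2021amplituhedron}. \emph{Surjectivity} and \emph{separation} are the two genuine tasks, and separation is where I expect the main difficulty.

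For \textbf{separation} I would first record that, by Corollary~\ref{cor:cell_is_not_SA}, every $S_\D$ is disjoint from $\SA$ and hence every boundary twistor has a strong nonzero sign on $\tZ(S_\D)$; likewise the BCFW chord twistors and the functionaries $F^\D_\zeta$ representing the reduced BCFW coordinates have fixed signs on $\tZ(S_\D)$ by Proposition~\ref{prop:bcfw_coords_promote_2_bcfw_coords} and Lemmas~\ref{lem:sign_bdry_twistors}--\ref{cor:epsilon_signs}. The strategy is the standard ``separating functionary'' argument: given $\D_1\ne\D_2$ in $\CD^1_{n,k}$, I would produce a permissible functionary that has opposite (strong) signs on $\tZ(S_{\D_1})$ and $\tZ(S_{\D_2})$, or vanishes identically on one but has a fixed nonzero sign on the other, so the two images cannot meet. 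The recursive structure of chord diagrams should drive this: if the rightmost top chords of $\D_1,\D_2$ differ (or one has a $\pre$ step and the other does not, or one is an $\FL$ diagram and the other is not), then one of the chord/yellow-chord twistors of Lemma~\ref{lem:sign_of_chord_twistors} distinguishes them; if the rightmost top chords agree, one peels it off using injectivity of $\tZ$ on the chord twistors (Corollary~\ref{cor:5_3}) together with Propositions~\ref{prop:sign_under_BCFW_promotion}, \ref{prop:sign_under_FL_promotion}, \ref{prop:sign_under_pre_promotion} — which say signs of functionaries are preserved under the corresponding promotions — and recurses on the left and right subdiagrams. The genuinely new complication in the loop case, absent from \cite{even2021amplituhedron,even2023cluster}, is that when two diagrams share the same rightmost top chord but differ only in the colorings of red/blue/purple chords (e.g.\ the three $(1,5,1)$ cells of Example~\ref{ex:domino_matrices}), one must instead invoke the red- and blue-boundary analysis of \S\ref{subsec:redblue_bdries}: Lemmas~\ref{lem:eps_red_vanishes_on_red} and \ref{lem:eps_blue_vanishes_on_blue} identify $\Sred_\D,\Sblue_\D$ as the vanishing loci of the permissible functionaries $\Fdelr^\D$ and $\Fepsh^\D$, and Proposition~\ref{prop:bcfw_coords_promote_2_bcfw_coords} tracks these under promotion — so the relevant functionary separating two differently-colored diagrams is one of these.

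For \textbf{surjectivity} I would argue by a dimension/boundary count. Each $\tZ(S_\D)$ is a $4(k+\ell)$-dimensional open submanifold of the amplituhedron (Theorem~\ref{thm:dominoAndBCFWparams}, Corollary~\ref{cor:open_map}); the amplituhedron is connected and compact (Theorem~\ref{thm:ampli_well_def}) and, by Remark~\ref{rmk:cardinality}, has the ``right'' number of BCFW cells. The plan is: (i) show $\bigcup_\D \tZ(S_\D)$ together with the codimension-$1$ matchable boundaries covers a dense open subset $\Ampl^\circ$ of $\Ampl^\ell_{n,k,4}(Z)$ whose complement is the zero locus of the boundary twistors, i.e.\ $\tZ(\SA)$ (Proposition~\ref{prop:CD bdries}), plus the images of cells that hit $\SA$; (ii) use Theorem~\ref{prop:inj_codim_1} and Lemma~\ref{lem:bdry_matching_geo} — every regular matchable boundary of $S_\D$ is also a matchable boundary of a neighboring cell $S_{\D'}$ (via the $\shift/\match$ combinatorics of \S\ref{subsec:mathc_bdries}) — to conclude that a point on such a shared boundary lies in $\overline{\tZ(S_{\D'})}$ for the neighbor; (iii) conclude that $\bigcup_\D \overline{\tZ(S_\D)}$ is open \emph{and} closed in $\Ampl^\ell_{n,k,4}(Z)$: closed because each $\overline{\tZ(S_\D)}$ is compact, open because at any boundary point one can cross into an adjacent tile using the matching of boundaries. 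Connectedness then forces $\bigcup_\D\overline{\tZ(S_\D)}=\Ampl^\ell_{n,k,4}(Z)$. The red and blue boundaries, which are \emph{not} shared in the naive sense, are handled by Lemma~\ref{lem:from SREDSBLUE to smaller diags in diags order}: $\Sred_\D\cup\Sblue_\D\setminus\SA\subseteq\Srem_{\check\D}$ for a strictly smaller diagram $\check\D$, so these loci are covered by other cells' interiors and cause no gap; the part of every cell's boundary mapping into $\SA$ maps into the topological boundary of the amplituhedron (Proposition~\ref{prop:CD bdries}), hence contributes nothing to the interior that needs covering.

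The \textbf{main obstacle} is the separation step in the cases where two chord diagrams agree down to a subdiagram in which only the red/blue/purple colorings differ. Uncolored separation follows the template of \cite{even2021amplituhedron}, but the colored case requires genuinely new input: one must show that the functionaries $\Fdelr^\D$, $\Fepsh^\D$ (and their promotions, tracked in Proposition~\ref{prop:bcfw_coords_promote_2_bcfw_coords}) indeed have a fixed sign on one cell and vanish on the other, and that they are permissible so they descend to honest functions on the amplituhedron. This is where the delicate sign bookkeeping of Observations~\ref{obs:iterated_etas}, \ref{obs:ww} and Corollary~\ref{cor:sign_del0_on_FL} enters, and where the interplay between $\Sred_\D,\Sblue_\D$ and the smaller-diagram interiors (Corollary~\ref{cor:regular_matchable_and_S_D_in_SREM}, Lemma~\ref{lem:from SREDSBLUE to smaller diags in diags order}) must be used to guarantee no two distinct cells' images overlap even though several distinct cells lie in a single $1$-loop positroid cell.
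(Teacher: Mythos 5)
Your surjectivity plan is close in outline to what the paper does, but your separation plan diverges from the paper's actual argument in a way that matters. You propose to prove \emph{global} separation directly, by constructing for every pair $\D_1\neq\D_2$ a permissible functionary with opposite signs, and you sketch a recursion that peels off the rightmost top chord. The paper does not do this, and the reason is substantive: the inductive reduction you describe is only carried out in Proposition~\ref{prop:local_sep} for \emph{neighboring} chord diagrams, where Observation~\ref{obs:minimal_neighboring} guarantees that the two diagrams differ only in a single minimal subdiagram whose rightmost top chords are different — so exactly one of the two factors changes at each recursion step and the promotion lemmas apply cleanly. For an arbitrary pair of distinct diagrams both the left and right subdiagrams can differ simultaneously, the colorings can differ in several places at once, and the careful bookkeeping needed to guarantee the promoted candidate is (a) permissible, (b) not identically zero on either cell, and (c) still a Laurent monomial in BCFW coordinates of some cell, is not established in the paper and would require real work. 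The three cells of Example~\ref{ex:domino_matrices} that share a $1$-loop positroid cell are precisely the kind of case where your ``peel off'' recursion doesn't obviously terminate.

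What the paper does instead is prove only local separation (Proposition~\ref{prop:local_sep}), prove that one specific cell — the one whose generation sequence ends in the maximal number of $\pre$ steps — is globally separated from all others (Proposition~\ref{prop:cell_separated_from_all}), and then invoke a topological covering argument (Proposition~\ref{prop:criterion_for_surj}) which shows that on a dense open subset of the amplituhedron the number of preimages in $\bigcup_\D S_\D$ is constant. The globally separated cell forces that constant to be $1$, from which global separation of all pairs follows \emph{a posteriori}, without ever having to exhibit a separating functionary for non-neighboring cells. In other words, the paper proves surjectivity first and extracts global separation as a corollary of the covering-degree argument; your plan inverts the logical order and thereby inherits a genuine gap at the global-separation step. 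Your surjectivity outline — gluing cells across matchable boundaries, handling red/blue boundaries via Lemma~\ref{lem:from SREDSBLUE to smaller diags in diags order}, and using Proposition~\ref{prop:CD bdries} to control the external boundary — is broadly in line with Proposition~\ref{prop:criterion_for_surj}, though the ``open-and-closed'' heuristic needs the more careful set $M$ of low-dimensional strata and the connectedness-of-complement input (Lemma~\ref{lem:connectedness_in_complement_stratification}) that the paper isolates.
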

Since we have shown the injectivity of BCFW cells in \cref{thm:inj}, it remains to prove the following two theorems.
\begin{thm}\label{thm:sep}
Fix $k+4\leq n,$ and $\ell\in\{0,1\}.$ Let $\D\neq\D' \in \CD_{n,k}^\ell$, be chord diagrams, and $Z\in \Mat^{>}_{n \times(k+4)}$. Then, $\tZ(S_{\D})\cap\tZ(S_{\D'})=\emptyset,$ that is, the images of $S_\D,S_{\D'}$ are separated.
\end{thm}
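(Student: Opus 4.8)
The plan is to prove, by induction on the lexicographically ordered triple $(\ell,k,n)$, the sharper statement that $\D$ can be \emph{reconstructed} from any single point $Y\vdots L\in\tZ(S_\D)$ together with $Z$, and in a way that depends only on $Y\vdots L$ and $Z$. Granting this, if $Y\vdots L\in\tZ(S_\D)\cap\tZ(S_{\D'})$ then running the reconstruction on $Y\vdots L$ produces both $\D$ and $\D'$, so $\D=\D'$; hence $\D\ne\D'$ forces the empty intersection claimed in \cref{thm:sep}. The base cases are $k+\ell\le 1$, which are immediate, and the three diagrams with $(\ell,n,k)=(1,5,1)$ of \cref{ex:domino_matrices}, whose domino forms are visibly inequivalent. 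Each of the three reductions below strictly decreases $(\ell,k,n)$ lexicographically: a $\pre_{n-1}$ step lowers $n$; a $\bcfw$ step lowers $k$ on each factor; a forward-limit step lowers $\ell$ (from $1$ to $0$).

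First I would recover the last step of the generation sequence of $\D$ from $Y\vdots L$. The last step is a $\pre_{n-1}$ exactly when no chord of $\D$ ends at $n-1$; in that case $S_\D=\pre_{n-1}S_{\D_0}$ and $\tZ_Z\circ\pre_{n-1}=\tZ_{Z'}$ with $Z'=\rem_{n-1}Z$ a positive $(n-1)\times(k+4)$ matrix, so $Y\vdots L\in\tZ_{Z'}(S_{\D_0})$ and the inductive hypothesis reconstructs $\D_0$, hence $\D$. Otherwise the rightmost top chord $\D_m=(a,b,c,d)$ ends at $(n-2,n-1)$, and the last step is a $\bcfw$ step or a forward-limit step according as $\D_m$ is not red or is red. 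This dichotomy, along with the support $(a,b,c,d)$ (resp.\ the location of the yellow chord), is read off from the sign pattern of the boundary twistors and the chord/yellow-chord twistors, which have strong, \emph{fixed}, nonzero signs on all of $\tZ(S_\D)$ by \cref{cor:cell_is_not_SA}, \cref{lem:sign_of_chord_twistors} and \cref{cor:epsilon_signs}; the positroid coindependence statements of \cref{lem:bcfw_before_fl_tree} and \cref{lem:bcfw_after_fl} pin down exactly which of these twistors are allowed to vanish, making the combinatorial type of the last step a function of $Y\vdots L$ alone.

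Once the last step is identified, I would invert it. For a $\bcfw$ step $S_\D=S_{\D_L}\bcfw S_{\D_R}$: by \cref{cor:5_3},~\cref{it:cor5_3_BCFW} the five twistors $\llrr{bcdn},\dots,\llrr{abcd}$ recover the $\Gr_{1,5}^{>}$-factor up to positive scaling, and applying the reverse of the substitution of \cref{def:bcfw_prom} to $Y\vdots L$ (on the loopy side when a loop is present) produces a point of $\tZ(S_{\D_L})$ and a point of $\tZ(S_{\D_R})$ for the appropriately truncated $Z$-data, to which the inductive hypothesis applies; since BCFW coordinates determine an element of a cell (\cref{cor:BCFW_up_2_gauge}) this correctly identifies $\D_L,\D_R$, hence $\D=\D_L\bcfw\D_R$. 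For a forward-limit step $S_\D=\FL(S_{\D'})$: by \cref{cor:5_3},~\cref{it:cor5_3_FL} the twistors $\llrr{d_\star nAB}$, $-\llrr{c_\star nAB}$, $\llrr{c_\star d_\star AB}$ recover the $\FL$-parameters, and reversing the substitution of \cref{def:FL_prom} — which is exactly the inverse of the promotion computation in the case $\D=\FL(\D')$ of \cref{prop:bcfw_coords_promote_2_bcfw_coords} — yields a point of $\tZ(S_{\D'})$ on index set $\{1,\dots,n-1\}\cup\{A,B\}$ with $k+1$ rows and $\ell=0$, to which the inductive hypothesis applies. The consistency of all of this rests on \cref{prop:bcfw_coords_promote_2_bcfw_coords}: the functionaries $F^\D_\zeta$ representing the reduced BCFW coordinates are permissible and transform under $\Psi^{\pre},\Psi^{\bcfw}_{ac},\Psi^{\FL}$ exactly as the geometry does, so undoing a promotion on $Y\vdots L$ genuinely returns the twistor data of a point in the smaller cell.

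The main obstacle will be the forward-limit step and the red/blue bookkeeping around it. One must verify that \emph{rightmost top chord red versus not} is detectable from the twistor signs on the entire cell, not merely at generic points — this is where the full strength of \cref{lem:bcfw_after_fl},~\cref{it:bcfw_after_fl_for_Sblue_argument},~\cref{it:bcfw_after_fl_yellow},~\cref{it:bcfw_after_fl_reds} is needed — and that the point produced by the reverse $\FL$-substitution lies in $\tZ$ of a positroid cell satisfying \cref{condition:FL}, so that the inductive hypothesis is legitimately applicable to a $0$-loop diagram on a larger index set. A secondary subtlety, already present at tree level, is the $\pre_{n-1}$ versus non-$\pre$ split: it is not witnessed by the vanishing of any single twistor and must instead be inferred from the positroid type that the sign data of the boundary and chord twistors forces.
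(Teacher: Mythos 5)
Your approach is genuinely different from the paper's. The paper proves Theorem~\ref{thm:sep} by first establishing \emph{local} separation between neighboring chord diagrams (Proposition~\ref{prop:local_sep}), exhibiting one cell globally separated from all others (Proposition~\ref{prop:cell_separated_from_all}), and then invoking the covering-map/constant-degree argument of Proposition~\ref{prop:criterion_for_surj} to upgrade degree $1$ at one cell to degree $1$ everywhere. You instead try to prove separation directly, via a recursive reconstruction algorithm that reads $\D$ off from a single point $Y\vdots L$. If this worked it would be more constructive and would not need the topological degree machinery; that is a genuine gain in spirit.

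However, there is a real gap, and it sits at the heart of the argument. In your inductive step you must determine, from $Y\vdots L$ and $Z$ alone, whether the last step of $\D$'s generation sequence was a $\pre_{n-1}$, a $\bcfw$, or a forward limit — and if $\bcfw$, what the support $(a,b,c,d)$ of the rightmost top chord is. You assert this dichotomy is ``read off from the sign pattern of the boundary twistors and the chord/yellow-chord twistors'', but you never say how, and the assertion is essentially equivalent to the theorem you are trying to prove. Lemma~\ref{lem:sign_bdry_twistors} says the boundary twistors have a \emph{uniform} weak sign on the whole amplituhedron regardless of which BCFW cell you are in, so they cannot distinguish the $\pre$/non-$\pre$ cases. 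The chord twistors of Lemma~\ref{lem:sign_of_chord_twistors} are strongly positive only once one \emph{already knows} $(a,b,c,d)$ and hence knows $\D$; they are not a diagram-independent test. And the ``positroid type forced by the sign data'' cannot do the job either, because distinct BCFW cells can share the same $1$-loop positroid cell — the paper points this out explicitly after Definition~\ref{def:1-loop positroid} and illustrates it in Example~\ref{ex:domino_matrices}, where all three cells for $(\ell,n,k)=(1,5,1)$ lie in the same $1$-loop positroid cell. (Incidentally that same example shows your base case is also unjustified: ``the domino forms are visibly inequivalent'' is a statement about the parameterizations of the cells, not about disjointness of their $\tZ$-images, which is exactly what needs proving.) The inversion steps you cite (Corollary~\ref{cor:5_3}, Proposition~\ref{prop:bcfw_coords_promote_2_bcfw_coords}) are correct tools, but they all require the combinatorial type of $\D$ as an input — precisely the thing your reconstruction must first produce. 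Without an explicit, diagram-independent criterion that identifies the last step, the induction does not close.
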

\begin{thm}\label{thm:surj}
For every $k,n$ with $n\geq k+4$, $\ell\in\{0,1\}$, and $Z\in \Mat^{>}_{n \times(k+4)}$
\[\Ampl_{n,k,4}^\ell=\bigcup_{\D\in\CD_{n,k}^1}\overline{\tZ(S_{\D})}.\]Moreover, every boundary point of $\Ampl_{n,k,4}^\ell$ lies in the zero locus of a boundary twistor of the form $\llrr{i,i+1,j,j+1}$ or $\llrr{i,i+1,A,B},$ where
$+1$ is taken cyclically.
\end{thm}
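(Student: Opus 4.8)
The plan is to deduce the theorem from the injectivity of \cref{thm:inj} and the separation of \cref{thm:sep} together with the openness of the cell maps (\cref{cor:open_map}) and the pairwise matching of codimension-one boundaries (\cref{lem:bdry_matching_geo}, \cref{obs:bdry_matching_comb}), by a connectedness argument in the spirit of \cite{even2021amplituhedron}. Fix $\ell\in\{0,1\}$, a positive $Z$, write $\Ampl=\Ampl_{n,k,4}^\ell(Z)$, and let $B\subseteq\Gr_{k,k+4;\ell}$ be the closed union of the zero loci of the boundary twistors $\llrr{i,i+1,j,j+1}$ and, when $\ell=1$, $\llrr{i,i+1,A,B}$ (cyclic conventions). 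By \cref{prop:CD bdries} a point of $\Gr_{k,n;\ell}^\geq$ maps into $B$ if and only if it lies in $\SA$; hence $\tZ(\SA)=B\cap\Ampl$ and $B\cap\Ampl\subseteq\partial\Ampl$, so the ``moreover'' of the theorem is the reverse inclusion $\partial\Ampl\subseteq B$, which I will obtain along the way. Set $U=\bigsqcup_{\D\in\CD_{n,k}^\ell}\tZ(S_\D)$.

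First I would record elementary facts. Each $\tZ(S_\D)$ is open in $\Gr_{k,k+4;\ell}$ (\cref{cor:open_map}) and contained in $\Ampl\setminus B$ (by \cref{cor:cell_is_not_SA} the boundary twistors have strong nonzero sign on it); and for $\D\neq\D'$ one has $\tZ(S_\D)\cap\overline{\tZ(S_{\D'})}=\emptyset$, since otherwise the open set $\tZ(S_\D)$ would meet the dense subset $\tZ(S_{\D'})$ of $\overline{\tZ(S_{\D'})}$, contradicting \cref{thm:sep}. Therefore $W:=\bigcup_\D\bigl(\overline{\tZ(S_\D)}\setminus\tZ(S_\D)\bigr)$ is closed and nowhere dense, $\overline{\tZ(S_\D)}\setminus W=\tZ(S_\D)$ for each $\D$, and $U$ is clopen in $\Ampl\setminus W$. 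The reason this is not yet enough is that $W$ contains codimension-one ``walls'' that may disconnect $\Ampl\setminus W$, so the walls must be filled in.

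The heart of the argument is to show that the union $M^\circ$ of all cells $S_\D$ together with their matchable boundary strata \emph{not} contained in $\SA$, glued along the matching of \cref{lem:bdry_matching_geo}, maps under $\tZ$ to an \emph{open} subset of $\Gr_{k,k+4;\ell}$. By \cref{cor:strata_and_dim} and \cref{cor:red_bdry_in_sred} the codimension-one strata of $\overline{S}_\D$ outside $\SA$ are exactly the regular matchable boundaries and the non-$\SA$ red and blue boundaries; along each such wall exactly two cells $S_\D,S_{\D'}$ meet smoothly (\cref{prop:mfld_w_bdry}, \cref{obs:bdry_matching_comb}), $\tZ$ restricts to a homeomorphism onto its image on each $S_\D\cup\partial_{\zeta_i}S_\D$ (\cref{prop:inj_bdries}, \cref{prop:mfld_w_bdry}), and these homeomorphisms agree on the shared wall; hence a neighbourhood of a wall point in $\tZ(M^\circ)$ is a union of two half-spaces, i.e.\ a full neighbourhood in $\Gr_{k,k+4;\ell}$, so $\tZ(M^\circ)$ is open by invariance of domain. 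It lies in $\Ampl\setminus B$ by \cref{cor:cell_is_not_SA}, \cref{cor:regular_matchable_and_S_D_in_SREM} and \cref{cor:red_bdry_in_sred}. To conclude surjectivity I would show $\tZ(M^\circ)$ is dense in $\Ampl$: $\Ampl$ is compact connected (\cref{thm:ampli_well_def}), $\overline{\tZ(M^\circ)}$ is closed, and the complement $\Ampl\setminus\overline{\tZ(M^\circ)}$ is open and avoids $\tZ(M^\circ)$, so its frontier can meet neither a wall of $M^\circ$ (walls are interior points of $\tZ(M^\circ)$) nor a cell (cells are open), hence must lie entirely in the images of the $\SA$-strata and of the codimension-$\ge 2$ positroid strata of the $\overline{S}_\D$ (using \cref{lem:from SREDSBLUE to smaller diags in diags order} to place the non-$\SA$ parts of $\Sred_\D,\Sblue_\D$ in these strata); since the frontier of a nonempty open set has a codimension-one manifold point, this forces the complement to be empty, i.e.\ $\Ampl=\overline{\tZ(M^\circ)}\subseteq\bigcup_\D\overline{\tZ(S_\D)}$.

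For the ``moreover'', let $q\in\partial\Ampl$ with $q\notin B$; then $q=\tZ(x)$ with $x\in\overline{S}_\D\setminus\SA$ for some $\D$ (\cref{prop:CD bdries}). If $x\in S_\D$ then $q$ lies in the open set $\tZ(S_\D)\subseteq\Ampl$, so $q\in\operatorname{int}\Ampl$, a contradiction; if $x$ lies in a non-$\SA$ matchable boundary, the wall analysis again gives $q\in\operatorname{int}\Ampl$; the remaining possibility, $x$ in a codimension-$\ge 2$ stratum of $\overline{S}_\D\setminus\SA$, is too small a locus to contain $\partial\Ampl$ near a manifold point, and is excluded by the same frontier/dimension argument. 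Hence $\partial\Ampl\subseteq B$. The main obstacle I anticipate is exactly the point-set bookkeeping in the third paragraph: controlling the positroid stratification of $\overline{S}_\D$ outside $\SA$ finely enough to know its codimension-one strata are precisely the matchable boundaries, and making rigorous both the gluing of $\tZ$ across matched walls into an open embedding and the density of $\tZ(M^\circ)$. The $\ell=0$ case of all of this is \cite{even2021amplituhedron}; the genuinely new input is that the red and blue boundaries (absent at tree level) are matched \emph{interior} walls rather than genuine boundary, which is what \cref{cor:red_bdry_in_sred}, \cref{lem:from SREDSBLUE to smaller diags in diags order} and \cref{lem:bdry_matching_geo} supply.
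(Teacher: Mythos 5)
Your proposal follows essentially the same route as the paper's proof, which packages the topology into the general \cref{prop:criterion_for_surj} and then checks its hypotheses; you execute the identical gluing-via-invariance-of-domain, codimension-$\geq 2$, and connectedness argument directly without the intermediate general lemma. Two imprecisions worth fixing: first, \cref{thm:sep} is deduced in the paper from the same machinery being set up here (via the covering-map conclusion of \cref{prop:criterion_for_surj}), so your first-paragraph appeal to \cref{thm:sep} risks circularity — only neighbor-separation, which is \cref{prop:local_sep}, is actually needed and available; second, the last step of the ``moreover'' (the case where all preimages of $q$ lie in codimension-$\geq2$ strata) needs to be made explicit, as in the proof of conclusion (2) of \cref{prop:criterion_for_surj}: one must show that no boundary point can have a neighborhood $U$ whose intersection with $\partial\Ampl$ lies entirely inside $M$, and this uses that $U\setminus M$ is connected and that $M$ has empty interior, not merely a dimension count.
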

Tilings of amplituhedra were studied in many works: \cite{BaoHe,PSW} studied tilings of the $m=2$ tree amplituhedron, \cite{akhmedova2023tropical} studied the tropical $m=2$ tree case; \cite{even2021amplituhedron,even2023cluster} studied the $m=4$ tree case; \cite{galashin2024amplituhedra} studied the tree momentum amplituhedron case; \cite{perlstein2025bcfw} studied the ABJM amplituhedron case.

\subsection{Local separation}
This section is devoted to proving the following result.
\begin{prop}\label{prop:local_sep}
Let $\D_1,\D_2\in\CD_{n,k}^\ell$ be two neighboring chord diagrams. Then for every positive $Z\in\Mat_{n,k+4}^>,$ $\tZ(S_{\D_1})\cap\tZ(S_{\D_2})=\emptyset.$ 
\end{prop}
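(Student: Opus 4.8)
The plan is to exploit the combinatorial structure of neighboring chord diagrams together with the boundary-matching lemma \cref{lem:bdry_matching_geo} and the injectivity results just proven. Recall that $\D_1,\D_2$ neighboring means $\D_2 = \shift_{\zeta_i}\D_1$ for some $\zeta_i \in \tVar_{\D_1}\cup\{\abe\}$, and by \cref{obs:bdry_matching_comb} there is a matched pair $\match(\D_1,\zeta_i) = (\D_2,\sigma_j)$. The key geometric fact is \cref{lem:bdry_matching_geo}: $\partial_{\zeta_i}S_{\D_1} = \partial_{\sigma_j}S_{\D_2}$ as subspaces of $\Gr_{k,n;\ell}$. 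So the two cells $S_{\D_1}, S_{\D_2}$ share a common matchable boundary stratum $B := \partial_{\zeta_i}S_{\D_1} = \partial_{\sigma_j}S_{\D_2}$.

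First I would reduce to the case where this shared boundary is \emph{not} contained in $\SA$. If $B \subseteq \SA$, then we must argue differently — but actually the natural strategy is the other way around: I would argue that if $\tZ(S_{\D_1})\cap\tZ(S_{\D_2})\neq\emptyset$, then the shared boundary $B$ plays the role of a "wall" separating the images. Concretely, pick a point $p \in \tZ(S_{\D_1})\cap\tZ(S_{\D_2})$; by \cref{thm:inj} it has unique preimages $x_1 \in S_{\D_1}$ and $x_2 \in S_{\D_2}$. The cells $S_{\D_1}$ and $S_{\D_2}$ are disjoint (they lie in different loopy positroid cells, or more elementarily have different domino/BCFW forms by \cref{thm:dominoAndBCFWparams} and \cref{obs:1loopCDvs1loopBCFW}), so $x_1 \neq x_2$. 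The standard approach — following the local-separation argument of \cite{even2021amplituhedron,even2023cluster} — is to use the open-map property \cref{cor:open_map}: both $\tZ(S_{\D_1})$ and $\tZ(S_{\D_2})$ are open, hence so is their intersection, so there would be an open set $O$ of common image points. Then I would track a path in $\tZ(S_{\D_1})$ from (the image of) the shared boundary $B$ into $O$: since the functionary $F^{\D_1}_{\zeta_i}$ (the functionary representing the BCFW/domino coordinate $\zeta_i$, from \cref{prop:bcfw_coords_promote_2_bcfw_coords}) has a \emph{fixed nonzero sign} on $\tZ(S_{\D_1})$ and vanishes exactly on $\tZ(B)$, while the matched functionary $F^{\D_2}_{\sigma_j}$ has a fixed nonzero sign on $\tZ(S_{\D_2})$ and also vanishes exactly on $\tZ(B)$, the two cells approach the common wall $B$ from "opposite sides" in the sense that their union $S_{\D_1}\cup B\cup S_{\D_2}$ is a manifold (by \cref{prop:mfld_w_bdry} applied on each side and glued along $B$). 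Injectivity of $\tZ$ on $S_{\D_1}\cup B$ and on $S_{\D_2}\cup B$ separately (\cref{prop:inj_codim_1}) combined with this manifold-gluing then forces the images to be disjoint except along $\tZ(B)$: if they overlapped on an open set, one could produce a point near $B$ with two distinct preimages in, say, $S_{\D_1}\cup B$, contradicting \cref{prop:inj_bdries}.

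More precisely, the cleanest route is: consider the union $X := S_{\D_1}\cup \partial_{\zeta_i}S_{\D_1}$ and $X' := S_{\D_2}\cup\partial_{\sigma_j}S_{\D_2}$, which by \cref{prop:mfld_w_bdry} are manifolds with boundary, and they share the boundary $B$. Glue them to $M := X\cup_B X'$, a manifold without boundary (along $B$). The map $\tZ$ restricted to $M$ is continuous and, on each half, an open map with continuous inverse (\cref{cor:open_map}, \cref{prop:inj_bdries}). I claim $\tZ|_M$ is injective; granting this, $\tZ(S_{\D_1})\cap\tZ(S_{\D_2}) = \tZ(S_{\D_1}\cap S_{\D_2}) = \emptyset$ since $S_{\D_1}\cap S_{\D_2}=\emptyset$. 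To prove injectivity of $\tZ|_M$: suppose $\tZ(y_1) = \tZ(y_2)$ with $y_1\in S_{\D_1}, y_2\in S_{\D_2}$. Take a path $\gamma$ in $M$ from $y_1$ crossing $B$ transversally into $S_{\D_2}$, staying in a small neighborhood; its image $\tZ\circ\gamma$ is a path in $\Gr_{k,k+4;\ell}$. Since $\tZ|_X$ and $\tZ|_{X'}$ are each homeomorphisms onto their (open-with-boundary) images, and these images meet along $\tZ(B)$, the image $\tZ(M)$ is an open neighborhood of $\tZ(B)$ with $\tZ|_M$ a local homeomorphism near $B$; a connectedness/covering argument then shows $\tZ|_M$ is globally injective on a neighborhood of $B$. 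For the global statement I would invoke that the sign of $F^{\D_1}_{\zeta_i}$ is opposite to that of $F^{\D_2}_{\sigma_j}$ under the gluing (this is the content of the matched-shift calculation, and follows from the explicit change-of-coordinates maps in \cref{lem:boundary_matching_operators} and \cref{lem:blue_chane_of_coords_verification}, which are invertible and orientation-reversing across the wall), so $\tZ(S_{\D_1})$ and $\tZ(S_{\D_2})$ lie in $\{F > 0\}$ and $\{F < 0\}$ respectively for a single continuous functionary $F$ (namely $F^{\D_1}_{\zeta_i}$, re-expressed via the matching as $-c\cdot F^{\D_2}_{\sigma_j}$ for positive $c$) — hence disjoint.

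The main obstacle I anticipate is handling the colored (red/blue) boundaries and the case where $B \subseteq \SA$. When $\zeta_i = \delta_0$ (red shift) or $\zeta_i = \varepsilon_h$ for the lowest blue chord (blue shift), the shared boundary is the red or blue boundary, which by \cref{cor:red_bdry_in_sred} and \cref{cor:regular_matchable_and_S_D_in_SREM} lies in $\Sred_{\D_1}$ or $\Sblue_{\D_1}$ — \emph{not} in $\Srem_{\D_1}$ — and the weak domino form may degenerate or lose uniqueness. Here I would instead use \cref{lem:from SREDSBLUE to smaller diags in diags order}: such a point lies in $\Srem_{\check\D}$ for a strictly smaller diagram $\check\D$, reducing to a configuration where injectivity on the boundary (\cref{prop:inj_bdries}, \cref{prop:inj_codim_1}) applies, and then the same opposite-sign argument goes through using the explicit diffeomorphism of \cref{lem:blue_chane_of_coords_verification}. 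For the case $B\subseteq\SA$: then $B$ maps into the topological boundary of the amplituhedron (\cref{prop:CD bdries}), and since $\tZ(S_{\D_1}),\tZ(S_{\D_2})$ are open, any common image point would have to lie strictly inside both open sets, away from $\SA$; one then runs the previous argument after perturbing off $\SA$, or notes that the relevant boundary twistor has a strong nonzero sign on each cell (\cref{cor:cell_is_not_SA}) which already separates the open images. Carefully checking that the sign flips (and does not merely vanish) across every one of the fifteen types of matched shifts in \cref{rmk:diag_shifts} is the bookkeeping-heavy step, but each case is reduced to the commutation relations of \cref{lem:boundary_matching_operators} which are already established.
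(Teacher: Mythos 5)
Your proposal takes a genuinely different, more topological route than the paper, but it has two substantive gaps that prevent it from constituting a proof.

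\textbf{First gap: the gluing argument does not yield global disjointness.} You glue $X = S_{\D_1}\cup B$ and $X' = S_{\D_2}\cup B$ along the shared matchable boundary $B$ to form a manifold $M$ and then argue (via a ``connectedness/covering argument'') that $\tZ|_M$ is injective, which would give the result. But local injectivity of $\tZ$ near $B$ (which you do get from the homeomorphism-on-the-image results \cref{cor:open_map}, \cref{prop:inj_bdries}) does not propagate to all of $M$: the images $\tZ(S_{\D_1})$ and $\tZ(S_{\D_2})$ could overlap far from $\tZ(B)$, and the step ``if they overlapped on an open set, one could produce a point near $B$ with two distinct preimages'' has no justification---the overlap need not occur anywhere near $\tZ(B)$. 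The analogous situation with $S_1=(0,1)$, $S_2=(1,2)$, $B=\{1\}$, $f(x)=\sin(\pi x)$ shows that local injectivity at the wall plus injectivity on each half does not imply disjointness of images.

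\textbf{Second gap: the claimed identity of functionaries is not established.} To repair the first gap you invoke the claim that $F^{\D_1}_{\zeta_i}$ and $F^{\D_2}_{\sigma_j}$ are the same functionary up to a positive scalar, so that a single rational functionary $F$ is strictly positive on $\tZ(S_{\D_1})$ and strictly negative on $\tZ(S_{\D_2})$. But these are two \emph{a priori different} rational functions on $\Gr_{k,k+4;\ell}$: the change-of-coordinates maps of \cref{lem:boundary_matching_operators} are invertible maps \emph{of coordinates on the domain near the wall}, not identities of functionaries, and they do not say anything about the value or sign of $F^{\D_1}_{\zeta_i}$ on the image of points of $S_{\D_2}$ away from the boundary. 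The paper's proof is precisely the work of constructing \emph{one} functionary $F$ (generically a quadratic in twistors such as $\llrr{na'b'|cd|abn}$, or a boundary twistor) and showing by two \emph{independent} sign computations---using \cref{prop:sign_under_pre_promotion}, \cref{prop:sign_under_BCFW_promotion}, \cref{prop:sign_under_FL_promotion} to propagate a strongly-positive boundary twistor through the two generation sequences---that it has opposite weak signs on the two cells, and separately showing it is not identically zero because it is a Laurent monomial in reduced BCFW coordinates of \emph{some} cell. This is a ``one functionary, two sign determinations'' argument, not a ``two boundary-coordinate functionaries agree'' argument, and it entirely avoids the need for a shared wall. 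Your intuition about opposite signs is correct in spirit but the crucial identity $F^{\D_1}_{\zeta_i}=-c\cdot F^{\D_2}_{\sigma_j}$ is missing and would need to be shown case-by-case; the paper's induction on minimal neighboring subdiagrams together with the promotion propositions does essentially that work in disguise. Finally, the case $B\subseteq\SA$ is also glossed over: noting that boundary twistors have strong nonzero sign on each cell does not separate them unless the signs are shown to be opposite, which is again the crux. The paper's functionary argument handles this uniformly without ever stratifying by whether $B\subseteq\SA$.
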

\begin{proof}
Our strategy will be to find a \emph{separating functionary}, that is a functionary which is not identically $0$ on $\tZ(S_{\D_1}),\tZ(S_{\D_2}),$ and has opposite weak signs on them. We will then apply \cref{obs:weak_sign_separator}, which shows that in this case the spaces do not intersect. In order to guarantee that the separating functionaries we construct are not identically $0$ we will pick permissible functionaries which are Lauren monomials in 
reduced BCFW coordinates and $\Fdelr^\D$ for some $S_\D,~\D\in\CD_{n,k}^\ell.$

Recall \cref{def:minimal_neighboring}. Let $\bar{\D}_1,\bar{\D}_2$ be the minimal neighboring subdiagrams of $\D_1,\D_2.$ 
We prove the result by induction on the lexicographically ordered $(\ell-\bar\ell,k-\bar{k},n-\bar{n}),$ where $\D_1\in\CD_{n,k}^\ell$ and $\D'_1\in\CD_{\bar n,\bar k}^{\bar\ell}.$ 
If $(\ell-\bar\ell,k-\bar k,n-\bar n)=(0,0,0)$ then $\D_1,\D_2$ are their own minimal neighboring subdiagrams.
Recall that in this case, by \cref{obs:minimal_neighboring}, $\D_1,\D_2$ do not have the same rightmost top chord, or equivalently, the last operation performed in their generation sequences are different. There are three cases to consider.
\medskip
\noindent
\\\textbf{(I) $\D_1=\pre_{n-1}\D'_1,$ and $\D_2$ is not of this form.}
In this case the rightmost top chord of $\D_2$ ends at $(n-1,n-2).$ If this chord is black, then by \cref{lem:sign_of_chord_twistors} the functionary $F=\Fdel_{k}^{\D_2}=\llrr{a_k,b_k,n-2,n}$ is strongly negative on $S_{\D_2},$ and is a reduced BCFW functionary for $\D_2.$ 
If the rightmost top chord is red, then by \cref{lem:sign_of_chord_twistors}
$F=\Fdel_{\star}^{\D_2}=\llrr{A,B,n-2,n}$ is strongly negative on $S_{\D_2},$ and is a reduced BCFW functionary for $\D_2.$ 
In both cases $F$ is a boundary functionary of $\Ampl_{[n]\setminus\{n-1\},k,4}^\ell$, hence by \cref{cor:cell_is_not_SA} it is strongly positive on $S_{\D'_1}$. By \cref{prop:sign_under_pre_promotion} $\Psi^\pre F$ has the same strong sign on $S_{\D_1}.$
The same argument works also if $\D_2=\pre_{n-1}\D'_2$ and $\D_1$ is not of that form.

\medskip
\noindent
\textbf{(II) $\D_2=\FL(\D'_2),$ and $\D_1$ is not of this form. }
The case $\D_1=\pre(\D'_1)$ has been treated above. It is enough to consider $\D_1=\D_L\bcfw_{ac} \D_R,$ and $(c,d)=(n-2,n-1).$ If $\D_R\in\CD_{\{b,b+1,\ldots,c,d,n\},k'}^1,$ consider the boundary functionary $\llrr{ABdn}$ on $\D_R.$ It is a strongly positive functionary, by \cref{cor:cell_is_not_SA}. Thus, by \cref{prop:sign_under_BCFW_promotion}
\[\Psi^\bcfw_{ac}\llrr{ABdn}=-\frac{\llrr{nab|cd|ABn}}{\llrr{abcn}}\]has a weak sign $+$ on $S_{\D_1}$. Using \cref{lem:sign_of_chord_twistors} this implies that $\llrr{nab|cd|ABn}\geq0$ on $S_{\D_1}$.  

If $\D_L\in\CD_{\{1,2,\ldots,a,b,n\},k'}^1$ we act similarly, only that this time we employ the twistor $\llrr{ABbn}$, which is a strongly positive boundary twistor for $S_{\D_L}$. 
\[\Psi^\bcfw_{ac}\llrr{ABbn}=-\frac{\llrr{ncd|ab|ABn}}{\llrr{acdn}}=-\frac{\llrr{nab|AB|cdn}}{\llrr{acdn}},\]
where the second equality follows from the Pl\"ucker relations. 
By \cref{prop:sign_under_BCFW_promotion} this functionary
has weak sign $(-1)^{k_R+1}$ on $S_{\D_1}$. Using \cref{lem:sign_of_chord_twistors} this shows that $\llrr{nab|AB|cdn}\leq 0$ on $S_{\D_1}.$ 

On 
$S_{\D_2}$ the same functionary equals $\frac{\llrr{a,b,Z_\star,Z_0}}{\llrr{abAB}},$ by \cref{lem:quad_lemma}, which is nonnegative by \cref{cor:sign_del0_on_FL} and \cref{lem:sign_bdry_twistors}.

Note that if $\D\in\CD_{n,k}^1$ is a chord diagram of the form $\D_L\bcfw_{ac}\D_R$ where $\D_R=\FL(\D'_R),$ then $\llrr{nab|AB|cdn}$ is $\frac{\Fdelr^\D}{F_{\heps_0}^\D}$, which is Laurent monomial in the reduced BCFW coefficients and $\Fdelr^\D.$

Of course the same argument works if $\D_1=\FL\D'_1$ and $\D_2$ is not of that form. 

\medskip
\noindent
\textbf{(III) $\D_1=\D_{L_1}\bcfw_{a,c}\D_{R_1},~\D_2=\D_{L_2}\bcfw_{a',c'}\D_{R_2}$.}
This is the last case to consider. From the previous cases we may assume $c=c'=n-2,~a\neq a'.$ Without loss of generality $a<a'.$ Write $b=a+1,b'=a'+1,d=n-1.$

First, if $k=1$ then $\llrr{a'cdn}$ is strongly positive on $S_{\D_1}$ and strongly negative on $S_{\D_2}$ by \cref{lem:Cauchy-Binet}, since all terms in \eqref{eq:cauchy-binet} are easily seen to have the same plus sign on $S_{\D_1}$ and minus sign on$S_{\D_2}$. This functionary is a reduced BCFW functionary for $S_{\D_2},$ hence satisfies our requirements.

Assume $k>1.$  
We will show that $F=\llrr{na'b'|cd|abn}$ is a separating functionary. 
$\llrr{a'b'dn}$ is a strongly positive boundary twistor of $S_{\D_{R_1}},$ by \cref{cor:cell_is_not_SA}. It is not a loopy twistor, hence by \cref{prop:sign_under_BCFW_promotion} 
\[\Psi^\bcfw_{ac}(\llrr{a'b'dn})=\frac{\llrr{na'b'|cd|abn}}{\llrr{abcn}}>0\]on $S_{\D_1}.$ Combining with \cref{lem:sign_of_chord_twistors}, this shows $F=\llrr{na'b'|cd|abn}<0$ on $S_{\D_1}.$

Similarly, $\llrr{abb'n}$ is a strongly positive boundary functionary on $S_{\D_{L_2}},$ by \cref{cor:cell_is_not_SA}. Again it is not a loopy twistor. Hence, by \cref{prop:sign_under_BCFW_promotion}
\[\Psi^\bcfw_{a'c}(\llrr{abb'n})=-\frac{\llrr{ncd|a'b'|abn}}{\llrr{a'cdn}}=\frac{\llrr{na'b'|cd|abn}}{\llrr{a'cdn}}\]has strong sign $(-1)^{k_R+1}$ on $S_{\D_2}.$ The second equality follows from the Pl\"ucker relations. Thus, using \cref{lem:sign_of_chord_twistors} $F=\llrr{na'b'|cd|abn}>0$ on $S_{\D_2}.$

Finally, for $\D = \D_L\bcfw\D_R\in\CD_{n,k}^\ell$ with $\D_R$ having rightmost top chord $(a',b',c,d)$ the twistor $\llrr{a'b'dn}$ is a reduced BCFW functionary, hence by \cref{prop:bcfw_coords_promote_2_bcfw_coords} $F$ is indeed a Laurent monomial in reduced BCFW variables.

\medskip
\noindent
We now assume that $\D_1,\D_2$ are not their minimal neighboring subdiagrams, but we have shown the claim for lexicographically smaller $(\ell-\bar\ell,k-\bar k,n-\bar n).$
Then by \cref{obs:minimal_neighboring} the last operation in the generation sequence of $\D_1,\D_2$ is the same. We split into cases according to what this operation is.
\\\textbf{(i) $\D_1=\pre_{n-1}\D'_1,~\D_2=\pre_{n-1}\D'_2.$ }

\noindent
If both $D_1$ and $D_2$ have no chord endings at $(n-1,n-2)$ then we can write
\[\D_1=\pre_{n-1}\D'_1,~\D_2=\pre_{n-1}\D'_2,~\D'_1,\D'_2\in\CD_{[n]\setminus\{n-1\},k}^{\ell}\]which by \cref{obs:minimal_neighboring} are also neighboring, but have smaller value of $(\ell-\bar\ell,k-\bar k,n-\bar n).$ By induction we can find a separating functionary $F$ for $S_{\D'_1},S_{\D'_2}$ which is a Laurent polynomial in reduced BCFW functionaries of a $S_{\D'}$ for some $\D'\in\CD_{[n]\setminus\{n-1\},k}^\ell.$ $\Psi^\pre F$ is a Laurent monomial in the reduced BCFW functionaries of $S_{\pre_{n-1}\D'}$ by \cref{prop:bcfw_coords_promote_2_bcfw_coords}, and $\pre_{n-1}\D'\in\CD_{n,k}^\ell$. 
By \cref{prop:sign_under_pre_promotion} it is a separating functionary between $S_{\D_1},S_{\D_2}.$ If the sign of $F$ is strong on the cells, so it the sign of $\Psi^\pre F,$ by \cref{prop:sign_under_pre_promotion}.

\medskip
\noindent
\textbf{(ii) $\D_1=\D_{L_1}\bcfw_{a,c}\D_{R_1},~\D_2=\D_{L_2}\bcfw_{a,c}\D_{R_2}.$}

\noindent
We use the notations $a,b,c,d$ as in \cref{nn:bcfwmap}. In this case, by \cref{obs:minimal_neighboring}, either $\D_{L_1}, \D_{L_2}$ are neighboring and $\D_{R_1}=\D_{R_2}$ or $\D_{R_1}, \D_{R_2}$ are neighboring and $\D_{L_1}=D_{L_2}$. Denote the neighboring subdiagrams $\D'_1,\D'_2\in\CD_{N,k'}^{\ell'}.$ In both cases $(\ell'-\bar\ell,k'-\bar k,|N|-\bar n)$ is smaller that $(\ell-\bar\ell,k-\bar k,n-\bar n)$ By induction we can find a separating functionary $F$ for the neighboring subdiagrams. This functionary is a Laurent monomial in the reduced variables for a BCFW cell $S_{\D'}$,~$\D'\in\CD_{N,k'}^{\ell'}.$
By \cref{prop:bcfw_coords_promote_2_bcfw_coords} $\Psi^\bcfw_{ac} F$ is again a Laurent monomial in reduced BCFW variables for a cell in $\D''\bcfw_{ac}\D'\in\CD_{n,k}^\ell.$ 
By \cref{prop:sign_under_BCFW_promotion} it is a separating functionary between $\D_1,\D_2$. Moreover, if $\D_1,\D_2$ are tree BCFW cells then by induction and \cref{prop:sign_under_BCFW_promotion} the separator functionary has opposite strong signs on the images of the two cells. 
\medskip
\noindent
\\\textbf{(iii) $\D_1=\FL(\D'_1),~\D_2=\FL(\D'_{2}).$}

\noindent In this case $\ell=1,$ the rightmost top chord of $\D'_1,\D'_2$ is not short and ends at $(c_\star,d_\star).$
By \cref{obs:minimal_neighboring} $\D'_1,\D'_2$ are neighboring tree diagrams.
Let $F$ be a separating functionary between $S_{\D'_1},~S_{\D'_2}$ which can be written as a Laurent monomial in the reduced BCFW variables of $S_{\D'}$ for some $\D'\in\CD_{[n]\cup\{A,B\},k+1}^0$. The existence of $F$ follows from the induction. By \cref{prop:sign_under_FL_promotion} $\Psi^\bcfw_{ac} F$ has opposite weak signs on $S_{\D_1},S_{\D_2},$ and by \cref{prop:bcfw_coords_promote_2_bcfw_coords} it is a Laurent monomial in BCFW variables of $S_\D:=S_{\FL(\D')}.$
Following closely the steps $I,III,i,ii$ of the construction of the separating functionary for $S_{\D'_1},S_{\D'_2}$ we see that $\D$ also has a rightmost top chord $(a_0,b_0,c_\star,d_\star)$, which is not short. Moreover $F_{\hgam_{\tr(\D)}}^{\D'}$ is not used in the separating Lauren monomial $F$. Using \cref{obs:pre_perm_BCFW} and \cref{obs:pre_permissible} we learn that $G=F\llrr{c_\star d_\star Bn}^l$ is pre-permissible, for some $l,$ and is still a separating functionary since $\llrr{c_\star d_\star Bn}>0$ on $\D'_1,\D'_2$ by \cref{cor:cell_is_not_SA}.
Thus $\Psi^\FL G$ is a permissible separating functionary. Note that $\Psi^\FL\llrr{c_\star d_\star Bn}=\llrr{c_\star d_\star Bn}.$ By the forward limit case of \cref{prop:bcfw_coords_promote_2_bcfw_coords} $\Psi^\FL G$ differs from a Laurent monomial in reduced BCFW variables of $\D$ by $\llrr{na_0b_0|AB|c_\star d_\star n}^m,$ for some $m.$ Since $\llrr{na_0b_0|AB|c_\star d_\star n} = \Fdelr^{\D}/F_{\heps_{\tr(\D)}}^{\D}$, which follows from \cref{lem:quad_lemma}, and since the numerator and the boundary twistor in the denominator are  positive on $S_{\D_1},S_{\D_2}$, by \cref{cor:sign_del0_on_FL} and \cref{cor:cell_is_not_SA} respectively, we deduce that $H=\Psi^\FL(G)\llrr{na_0b_0|AB|c_\star d_\star n}^m$ is a separating functionary which is a Laurent monomial in reduced BCFW variabels of $S_\D.$


To summarize, we have shown that for any two neighboring BCFW cells $S_{\D_1},S_{\D_2}\in\CD_{n,k}^\ell$ there is a permissible functionary which has opposite weak signs on their images under the amplituhedron map. 
Moreover, this functionary is a Laurent polynomial in reduced BCFW coordinates of a cell $S_\D$ for $\D\in\CD_{n,k}^\ell.$ By \cref{cor:strong_pos_coords} this functionary is not identically $0$ on the amplituhedron. Since it is algebraic and $\Gr_{k,k+4;1}$ is reduced, this means that the complement of its zero locus is Zariski open and dense, and also Hasudorff open and dense. We know that for every positive $Z$ $\tZ(S_{\D_1}),\tZ(S_{\D_2})$ is open, by \cref{cor:open_map}. Thus, the proposition immediately follows from 
\begin{obs}\label{obs:weak_sign_separator}
Let $X$ be a manifold and $F:X\to\R$ a function. Denote by $Y$ the zero locus of $F,$ and assume $Y$ has an empty interior. Let $U,V\subseteq X$ be open sets for which $F|_U\geq 0,~F|_V\leq 0.$ Then $U\cap V=\emptyset.$
\end{obs}
\begin{proof}
Assume towards contradiction the existence of $p\in U\cap V.$ Then $p$ has an open neighborhood $W$ also contained in this intersection. $W\setminus Y\neq\emptyset.$ But on $U\setminus Y,~V\setminus Y$ $F$ has opposite non zero signs, which is a contradiction.
\end{proof}
\end{proof}
Even though we do not need it, we point out that the separation argument can be slightly adjusted to show the following claim:
if $S_\D,S_{\D'}$ are neighboring BCFW cells, the separating functionary $F$ can actually be taken to be a functionary vanishing on the $\tZ-$image of their common boundary.
\begin{prop}\label{prop:cell_separated_from_all}
Let $\D$ be the unique chord diagram in $\CD_{n,k}^\ell$ in which all chords' supports are contained in $1,\ldots,k+3.$ Then for all positive $Z,$ and every other chord diagram $\D'\in\CD_{n,k}^\ell$ 
\[\tZ(S_\D)\cap\tZ(S_{\D'})=\emptyset.\]
\end{prop}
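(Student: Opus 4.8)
The statement asks to separate the image of one particular ``minimal'' BCFW cell $S_\D$ from every other BCFW cell image simultaneously. The natural approach is to reduce this global separation to the \emph{local} separation already established in \cref{prop:local_sep}, using connectedness of the amplituhedron together with the fact that neighboring cells share codimension-one boundaries. More precisely, the plan is as follows. First, one shows that the cells $\{S_{\D'}\}_{\D'\in\CD^\ell_{n,k}}$, together with their regular, red and blue matchable boundaries that avoid $\SA$, glue into a ``BCFW complex'' whose images cover a dense open subset of $\Ampl_{n,k,4}^\ell$: by \cref{prop:mfld_w_bdry} each $S_{\D'}$ together with its matchable boundaries not in $\SA$ is a manifold with boundary, by \cref{lem:bdry_matching_geo} each such boundary is shared by exactly one other cell via the matching $\match$, and by \cref{prop:inj_codim_1} the amplituhedron map is an open embedding on each such boundary stratum. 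Since $\tZ$ is an open map on each $S_{\D'}$ (\cref{cor:open_map}) and on each matchable boundary, the union $\bigcup_{\D'}\bigl(\tZ(S_{\D'})\cup\bigcup_{\zeta}\tZ(\partial_\zeta S_{\D'})\bigr)$, minus the measure-zero locus coming from $\SA$ and from higher-codimension strata, is an open subset of the manifold $\Gr_{k,k+4;\ell}$ that is also relatively closed in $\Ampl_{n,k,4}^\ell$ away from $\SA$; the boundary-twistor analysis of \cref{prop:CD bdries} and \cref{thm:surj} (assumed available, or proved independently) then identifies the complement with a union of boundary-twistor zero loci, which has empty interior.

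With this picture in place, the separation of $\tZ(S_\D)$ from all other $\tZ(S_{\D'})$ follows from a connectedness argument. Suppose $Y\in\tZ(S_\D)\cap\tZ(S_{\D'})$ for some $\D'\neq\D$. Choose a path $\D=\D^{(0)},\D^{(1)},\ldots,\D^{(m)}=\D'$ in the ``shift graph'' on $\CD^\ell_{n,k}$ whose edges are pairs of neighboring chord diagrams — such a path exists because the shift operations of \cref{def:shift_ops} act transitively (this transitivity is exactly the combinatorial content of \cref{rmk:cardinality} together with the structure of the recursion, and should be verified by a short induction on $(\ell,k,n)$ showing every chord diagram can be shifted toward the ``leftmost'' one). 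Along consecutive cells $S_{\D^{(i)}},S_{\D^{(i+1)}}$, \cref{prop:local_sep} produces a permissible separating functionary $F_i$, a Laurent monomial in reduced BCFW coordinates and $\Fdelr$-functionaries, with opposite weak signs on the two images and with a zero locus of empty interior (\cref{cor:strong_pos_coords}). The key local fact is that this $F_i$ in fact vanishes on $\tZ$ of the shared matchable boundary $\partial S_{\D^{(i)}}=\partial S_{\D^{(i+1)}}$ — the remark right after the proof of \cref{prop:local_sep} asserts precisely this. Hence the images $\tZ(S_{\D^{(i)}})$ and $\tZ(S_{\D^{(i+1)}})$ lie in opposite open half-spaces $\{F_i>0\}$, $\{F_i<0\}$, and are therefore disjoint (\cref{obs:weak_sign_separator}). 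This already gives $\tZ(S_\D)\cap\tZ(S_{\D'})=\emptyset$ when $\D'$ is a \emph{neighbor} of $\D$; for the general case, since $F_0$ vanishes on the common boundary of $S_{\D}$ and $S_{\D^{(1)}}$, and since we can propagate signs: one shows by induction along the path that $Y$ cannot simultaneously lie in $\tZ(S_{\D})$ and in $\tZ(S_{\D^{(i)}})$ for any $i\geq 1$, using that each $\tZ(S_{\D^{(i)}})$ is on a definite side of each separating functionary encountered earlier; the point is that the functionaries $F_0,\dots,F_{m-1}$ are \emph{global} functions on $\Gr_{k,k+4;\ell}$, so a sign constraint acquired at one step persists.

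Alternatively, and perhaps more cleanly, one bypasses the combinatorial transitivity by exploiting the special structure of $\D$: every chord of $\D$ is supported on $\{1,\ldots,k+3\}$, so $S_\D$ is (up to $\pre$ operations adding zero columns at positions $k+4,\ldots,n-1$, and up to the forward-limit/BCFW recursion) built from the ``densest'' tree or one-loop diagram on $k+4$ or $k+5$ markers. One then iterates \cref{prop:sign_under_pre_promotion,prop:sign_under_BCFW_promotion,prop:sign_under_FL_promotion}: starting from a functionary that separates the base case $\CD^\ell_{k+4,k}$ (or $\CD^\ell_{k+5,k}$) cell from all others — here $k$ is small enough that this is a finite, explicit check, essentially the $n=k+4$ case where the amplituhedron has few cells — and promoting it through the generation sequence of $\D$, one obtains for each $\D'$ a functionary with fixed opposite weak signs on $\tZ(S_\D)$ and $\tZ(S_{\D'})$. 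I expect the main obstacle to be the bookkeeping needed to guarantee that the separating functionary obtained for $\D$ versus $\D'$ is genuinely nonvanishing on both images: weak signs are cheap (they come for free from the promotion machinery), but ruling out that the functionary is identically zero on one of the two images requires that it be a Laurent monomial in the reduced BCFW / $\Fdelr$ functionaries of \emph{some} cell, and tracking which cell that is through the whole recursion — especially across forward-limit steps where extra $\llrr{\cdot\cdot\cdot AB}$ factors are introduced and must be shown permissible via \cref{obs:pre_perm_BCFW} — is the delicate part. The connectedness/path argument of the first plan sidesteps most of this by only ever invoking \cref{prop:local_sep} between neighbors, so that is the route I would write up, with the transitivity of shifts as the one new combinatorial lemma to prove.
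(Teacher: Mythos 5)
Your preferred route (the connectedness/path argument) has a genuine gap at its core step. You write that ``each $\tZ(S_{\D^{(i)}})$ is on a definite side of each separating functionary encountered earlier'' because ``the functionaries $F_0,\dots,F_{m-1}$ are global functions on $\Gr_{k,k+4;\ell}$, so a sign constraint acquired at one step persists.'' This inference does not hold: $F_j$ is constructed to have fixed weak signs on the two images $\tZ(S_{\D^{(j)}})$ and $\tZ(S_{\D^{(j+1)}})$, but nothing in \cref{prop:local_sep} or its proof asserts that $F_j$ has a definite sign on any other cell image further down the path. Being a globally defined function is compatible with changing sign arbitrarily across the (open, non-adjacent) images $\tZ(S_{\D^{(i)}})$ for $i>j+1$. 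Without some additional structure — e.g.\ a ``total positivity'' control of the functionaries over the whole BCFW stratification, or an argument via the covering map of \cref{prop:criterion_for_surj} — the path induction does not go through. Note also that the covering-map machinery of \cref{prop:criterion_for_surj} cannot be invoked here without circularity, since the paper uses the present proposition precisely to pin down the covering degree in the proof of \cref{thm:sep}; a covering-style argument inside this proposition would be redoing that step.

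Your alternative sketch (second plan) is much closer to the paper's actual argument, but you overestimate what needs to be tracked. The paper does not promote a separator through the \emph{entire} generation sequence of $\D$, through BCFW and forward-limit steps, into the ``densest'' diagram; it exploits only that $\D$'s generation sequence ends with $n-k-4$ consecutive $\pre_{n-1}$ operations, one more than any other $\D'$. This yields a clean induction on $n$: for $n=k+4$ there is nothing to prove; for $n>k+4$, either $\D'=\pre_{n-1}\D'_2$, in which case one applies $\Psi^{\pre}$ to the inductively obtained separator between the $\pre$-predecessors (as in case (i) of \cref{prop:local_sep}), or $\D'$ has a chord ending at $(n-2,n-1)$, in which case the argument in case (I) of \cref{prop:local_sep} gives a separating functionary on the nose (a reduced BCFW coordinate of $S_{\D'}$, strongly negative there, versus a boundary twistor of the smaller amplituhedron for $\D$, strongly positive there via \cref{cor:cell_is_not_SA} and \cref{prop:sign_under_pre_promotion}). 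No $\Psi^{\bcfw}$ or $\Psi^{\FL}$ promotion enters, so the bookkeeping through \cref{obs:pre_perm_BCFW} and the $\llrr{\cdot\cdot AB}$ factors that you flag as delicate simply does not arise. One then concludes with \cref{obs:weak_sign_separator} exactly as you propose. I would encourage you to rework the proof along these lines and drop the shift-graph transitivity claim entirely.
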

\begin{proof}
$\D$ has the property that the generation sequence of $S_\D$ ends by $n-k-4$ $\pre_p$ operations. For any other BCFW cell of the same $\ell,k,n$ there are less $\pre_p$ operations in the suffix of the generation sequence.

Let $\D'\in\CD_{n,k}^\ell$ be another chord diagram. As in \cref{prop:local_sep} we will find a permissible separating functionary between $\D,\D'$ which is a Laurent monomial in reduced BCFW coordinates of  $S_{\D'}.$

We will induct on $n.$ If $n=k+4$ there is no other chord diagram except $\D$, and there is nothing to prove. Otherwise, if $\D'=\pre_{n-1}\D_2,$ we can write $\D=\pre_{n-1}\D_1.$ By induction we can find a separating functionary $F$ between $\D_1,\D_2$ which is Laurent monomial in the reduced BCFW coordinates of $S_{\D_2}.$ As in case $(i)$ in the proof of \cref{prop:local_sep}, $\pre_{n-1} F$ is a separating functionary for $\D,\D'$ which is Laurent monomial in the BCFW coordinates of $S_{\D'}.$

The other possibility is that $\D'$ has a chord ending at $(n-2,n-1).$ In this case we proceed as in case (I) of the proof of \cref{prop:local_sep}, to find a functionary $F$ representing a reduced BCFW coordinate of $S_{\D'}$ which serves as a separating functionary.
The induction follows.

Using $F$, the proof follows from \cref{obs:weak_sign_separator}, just like the end of the proof of \cref{prop:local_sep}.
\end{proof} \subsection{Surjectivity and separation}
In order to prove \cref{thm:surj} and \cref{thm:sep} we need to equip ourselves with several topological and geometric lemmas.
\begin{prop}\label{prop:criterion_for_surj}
Let $f:B\to N$ be a smooth submersion 
between two smooth manifolds (\emph{without boundary}) $B,N,$ where $\dim(N)=n$. Let $L$ be an open connected subspace of $B$ with a compact closure $\overline{L}\subset B.$ Write $K$ for $f(\overline{L})$. Let $M\subset N$ be a subspace with empty interior, such that $M\cup\partial K$ is closed and for every open connected $U\subseteq N,$ $U\setminus M$ is connected. 
Let $\{S_a\}_{a\in A}$ be a finite collection of $n$-dimensional topological and weak smooth submanifolds \emph{without boundary} of $B,$ which are contained in $\overline{L}$ and satisfy the following properties:
\begin{enumerate}
    \item $\overline{S_a}$ is compact. 
    $\partial \overline{S_a}=S'_a\cup\bigcup_{i=1}^{k_a} S_{a;i},$ 
    where each $S_{a;i}$ is an $n-1$-dimensional manifold (\emph{without boundary}), $S_a\cup\bigcup_{i\in[k_a]}S_{a;i}$ is a topological and a weak smooth submanifold with boundary of $B$, and \[f(S'_a)\subseteq\partial K\cup M.\] 
    \item 
    Every $S_{a;i}$ equals $S_{b;j}$ for some other $(b;j).$ 
    Boundaries of the this type are called \emph{shared boundaries}, and we refer to $S_a,S_b$ in this case as neighboring.
    \item For every $a\in A,$ $f$ is injective on the union $S_a$ and its shared boundaries.
    \item $f(S_a)\cap f(S_b)=\emptyset,$ for neighboring $a, b.$
\end{enumerate}
Then 
\begin{enumerate}
\item \[f(\bigcup_{a\in A}\overline{S_a})=K.\]
\item $f(S_{a;i})\cap \partial K=\emptyset$, for every shared boundary $S_{a;i},$ and 
\[\partial K=\bigsqcup_a\overline{f(\partial\overline{S_a})\cap\partial K\setminus M)}.\]
\item Write $R=f^{-1}\left(\bigcup_{a\in A} (S_a\bigcup_{i\in[k_a]}S_{a;i})\setminus M\right).$ Then $R$ is open and dense in $\bigcup_{a\in A} (S_a\bigcup_{i\in[k_a]}S_{a;i}),$ and $f|_R:R\to f(R)$ is a covering map between manifolds. Thus, for every $x\in f(R)$ has a constant number of preimages.
\end{enumerate}
\end{prop}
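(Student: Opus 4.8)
The plan is to prove the three conclusions in order, extracting topological content from the hypotheses via a degree/covering argument. Throughout, write $T=\bigcup_{a\in A}\bigl(S_a\cup\bigcup_{i\in[k_a]}S_{a;i}\bigr)$ for the union of all the top strata together with their shared boundaries, and $T^\circ=\bigcup_a S_a$ for the union of the open top strata. The key observation is that $T$ is an $n$-dimensional manifold \emph{without boundary}: locally near a point of a shared boundary $S_{a;i}=S_{b;j}$, the space $T$ looks like two half-discs glued along their flat edge, which is an open $n$-disc; this uses hypothesis (2). Since each $S_a\cup\bigcup_i S_{a;i}$ is a topological submanifold with boundary of $B$, and these are glued along shared boundaries, $T$ is a topological submanifold of $B$ (without boundary), and similarly a weak smooth submanifold.

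First I would prove conclusion (3). Consider the restriction $g:=f|_{T^\circ}$. Since $B$ is a manifold without boundary and $f$ is a submersion, and $T^\circ$ is $n=\dim N$-dimensional, the key point is that $g$ is \emph{locally injective}: an $n$-dimensional submanifold mapping by a submersion to an $n$-manifold is a local diffeomorphism onto its image at smooth points, and by hypothesis (3) $f$ is globally injective on each $S_a$ together with its shared boundaries, so $g$ is injective on a neighborhood of every point of $T^\circ$ — but we need more. Actually the cleanest route: let $R=f^{-1}(T\setminus M)\cap T=(T\setminus f^{-1}(M))$; since $M$ has empty interior in $N$ and $f$ is a submersion (hence open), $f^{-1}(M)$ has empty interior in $B$, so $R$ is open and dense in $T$. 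On $R$, I claim $f$ is a local homeomorphism: near a point $x\in S_{a;i}\cap R$ the map $f$ restricted to $S_a\cup S_b$ (the two strata sharing that boundary) is injective by (3), and it is an open map from an $n$-manifold to an $n$-manifold, hence a local homeomorphism there; near a point of $T^\circ\cap R$ it is a local homeomorphism since $f$ is a submersion and $S_a$ is $n$-dimensional. Then $f|_R:R\to f(R)$ is a local homeomorphism; it is proper onto its image because $\overline{S_a}$ is compact for each $a$ and $A$ is finite (so $\overline{T}=\bigcup_a\overline{S_a}$ is compact). A proper local homeomorphism between manifolds is a covering map, giving the constant fiber-count statement. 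The main subtlety here is verifying properness correctly — a proper local homeomorphism onto an open subset of a manifold is a finite covering — and making sure $f(R)$ is indeed open (it is, as the image of an open set under the open map $f$).

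Next, conclusion (1): $f(\bigcup_a\overline{S_a})=K$. The inclusion $\subseteq$ is immediate since each $S_a\subseteq\overline L$ so $\overline{S_a}\subseteq\overline L$ and $f(\overline{S_a})\subseteq f(\overline L)=K$. For $\supseteq$: by conclusion (3), the number of preimages $|f^{-1}(y)\cap R|$ is locally constant on $f(R)$, hence constant on each connected component of $f(R)$. The set $f(T^\circ)$ is open in $N$ (again $f$ open), nonempty, and one shows $f(\overline T)$ is closed (compactness) and equals $\overline{f(T^\circ)}$ up to lower-dimensional pieces. The real work is a connectedness argument: the interior $\mathrm{int}(K)$ is a connected open set (since $L$ is connected and $f(L)\subseteq K$ with $f$ open, $f(L)$ is open connected and dense in $K$); by hypothesis $\mathrm{int}(K)\setminus M$ is connected; on this connected set the fiber count of $f|_R$ is a constant $N_0$, and $N_0\geq 1$ because $f(T^\circ)$ meets it (we must check $f(T^\circ)$ is nonempty and actually lands in $\mathrm{int}(K)$, which follows since $f(S'_a)\subseteq\partial K\cup M$ by hypothesis (1), so the ``new'' boundary part of $\partial\overline{S_a}$ is the only place mapping outside $\mathrm{int}K$). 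Since the fiber count is a positive constant on the dense connected set $\mathrm{int}(K)\setminus M$, we get $\mathrm{int}(K)\setminus M\subseteq f(T)$, hence $\mathrm{int}(K)\subseteq f(\overline T)$ by closedness of the latter; combined with $f(\overline T)$ closed and contained in $K$ and containing the dense subset $\mathrm{int}(K)$ of $K$, we conclude $f(\overline T)=K$.

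Finally, conclusion (2): that shared boundaries do not map into $\partial K$, and the description of $\partial K$. For the first part: if $x\in S_{a;i}=S_{b;j}$ with $f(x)\in\partial K$ but $f(x)\notin M$, then near $x$ the map $f|_{S_a\cup S_b}$ is an open local homeomorphism from an open $n$-disc (the two half-discs glued) into $N$, so its image is an open neighborhood of $f(x)$, contradicting $f(x)\in\partial K$; if $f(x)\in M$ it is excluded from the relevant set. Hence $f(S_{a;i})\setminus M\subseteq\mathrm{int}(K)$, i.e. $f(S_{a;i})\cap\partial K\subseteq M$; since $M$ has empty interior and $\partial K\cup M$ is closed, and $S'_a$ is where the genuine boundary goes, one gets $\partial K\subseteq\bigcup_a\overline{f(\partial\overline{S_a})\cap\partial K\setminus M}$. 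The reverse inclusion and the disjointness $\bigsqcup$ follow from hypothesis (4): for neighboring $a,b$, $f(S_a)\cap f(S_b)=\emptyset$, and the boundary contributions $f(S'_a)$ from non-neighboring strata are separated by a covering-sheet-counting argument (a boundary point has exactly one ``sheet'' approaching it from within $\bigcup\overline{S_a}$). The main obstacle overall is this last bookkeeping — correctly organizing which stratum's boundary covers which part of $\partial K$ and proving the decomposition is disjoint — but it is forced once the local-homeomorphism/covering picture from conclusion (3) is in place; I would present it carefully but without belaboring the point-set topology.
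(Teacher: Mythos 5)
Your overall architecture matches the paper's: glue the strata along shared boundaries to obtain an $n$-manifold, use invariance of domain to get a local homeomorphism, invoke the connectedness of $\mathrm{int}(K)\setminus M$, and establish the covering-map structure via properness. However, there are two genuine gaps, both at steps you flag as ``subtle'' but do not actually carry out.

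First, properness of $f|_R$. You assert it follows ``because $\overline{S_a}$ is compact for each $a$,'' and then write ``the main subtlety here is verifying properness correctly'' — but you never verify it. Compactness of $\overline{T}$ alone is insufficient: given a compact $Q\subseteq f(R)$ and a sequence $y_i\in f|_R^{-1}(Q)$, you can extract a subsequence converging in some $\overline{S_a}$, but the limit $y$ might a priori lie in $S'_a$ or in $f^{-1}(M)$, i.e.\ outside $R$. The argument that rules this out — $f(y)\in Q$, while $f(S'_a)\subseteq\partial K\cup M$, which $Q\subseteq \mathrm{int}(K)\setminus M$ avoids, and which is why you must first know $f(T)\subseteq\mathrm{int}(K)$ — is exactly the content you omitted. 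This is the paper's sequential-compactness argument for item (3).

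Second, and more seriously, your deduction of conclusion (1) from conclusion (3) is circular. You write that ``on this connected set [$\mathrm{int}(K)\setminus M$] the fiber count of $f|_R$ is a constant $N_0$.'' But the covering map $f|_R$ has image $f(R)$, and the fiber count is a locally constant function on $f(R)$, not on all of $\mathrm{int}(K)\setminus M$. To run the argument you are invoking, you must already know $f(R)=\mathrm{int}(K)\setminus M$ — which is precisely the surjectivity you are trying to prove. The paper closes this circle without the covering structure at all, by showing that $f(T)\setminus M$ is simultaneously open (since $T$ is an $n$-manifold mapped by an open map into $\mathrm{int}(K)\setminus M$) and relatively closed (using compactness of $\overline{T}$ and condition (1) on $f(S'_a)$) in the connected set $\mathrm{int}(K)\setminus M$; equality then follows from non-emptiness. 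You gesture at the closedness with ``one shows $f(\overline T)$ is closed,'' but that is not the same statement, and you never deduce the required relative closedness of $f(T)\setminus M$. Proving (1) by this open-and-closed argument \emph{before} (3), as the paper does, dissolves the circularity, and makes the properness argument clean because $f(R)=\mathrm{int}(K)\setminus M$ is then known.
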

This proposition is similar to the analogous propositions in \cite{even2021amplituhedron} and \cite{perlstein2025bcfw}. There are differences since here we require weaker control over the different strata $\overline{S}_a,$ since in the application to the loop amplituhedron below we have weaker control over the corresponding strata.
\begin{proof}
It follows from the statement that $int(K)$ is connected (see first sentences of the analogous proof of \cite[Proposition 8.5]{even2021amplituhedron}), hence by the assumption on $M$ also $int(K)\setminus M$ is open and connected, and $\overline{int(K)\setminus M}=K$.

We now show that $T=\bigcup_a f(S_a\cup\bigcup_{i\in[k_a]} S_{a;i})$ is a manifold: By the first condition $S_a\cup S_{a;i}$ is a smooth manifold with boundary for every $i\in [k_a].$ If $S_{a;i}=S_{b;j}$ We can glue the manifold with boundaries $S_a\cup S_{a;i},~S_b\cup S_{b;j}$ along their common boundary to obtain a \emph{topological manifold} (without boundary). Using the third and fourth condition this manifold maps injectively to $N,$ hence by Brouwer's Invariance of Domain Theorem \cite[Proposition 7.4, Ch. IV]{Dold} $f$ restricted to the union is a homeomorphism on the image, and the image is a (topological) $n-$manifold. Taking the union over $a\in A,i\in[k_a]$ we see that $T$ is a (topological) $n-$manifold. Note that it also follows that $f$ is an open map, when restricted to the gluing of ${\bigcup_{a\in A}S_a\cup\bigcup_{i\in[k_a]}S_{a;i}}$ along the common boundaries. 

Since $\partial K$ is compact with an empty interior, and $T\subseteq K,$ it follows that $T\cap\partial K=\emptyset,$ or $T\subseteq int(K).$
Using the assumptions on $M$ and since $\partial K$, we  have that $$T\setminus M=\bigcup_a f(S_a\cup\bigcup_{i\in[k_a]} S_{a;i})\setminus M\subseteq int(K)\setminus M$$is a non empty topological $n-$manifold, hence it is open in $int(K)\setminus M.$ But from the first condition it is also relative closed in $int(K)\setminus M.$ Hence, since $int(K)\setminus M$ is connected, we have $T\setminus M=int(K)\setminus M$. By the above, and the compactness of each $\overline{S}_a$ it follows that $f(\bigcup_a\overline{S_a})=K.$ 

For the second item, let 
\[S' = \bigcup_a S'_a=\bigcup_a (\partial\overline{S}_a\setminus(\bigcup_{i} S_{a;i})),\qquad\qquad S = S'\setminus f^{-1}(M).\]
It follows from above that $\partial K\subseteq f(S').$ Note that $f(S)\subseteq\partial K,$ by the first condition, hence also $\overline{f(S)} \subseteq\partial K,$ and in addition $f(S')\setminus f(S)\subseteq M.$
We need to show that $\partial K=\overline{f(S)}.$ Assume there is a point $p\in\partial K\setminus \overline{f(S)}$. Then $p$ has a connected neighborhood $U$ with $U\cap\overline{f(S)}=\emptyset,$ hence \begin{equation}\label{eq:K,M bdry}U\cap\partial K\subseteq U\cap f(S')\setminus f(S)\subseteq M.\end{equation} Since $p$ is a boundary point $U$ must contain points of $int(K)$ and of $K^c.$ Now $U\setminus M$ is connected, and $int(K)\cap U\setminus M$ is open and nonempty. But it is also relatively closed in $U\setminus M,$ by \eqref{eq:K,M bdry}. Thus, the two set must be equal, hence $(U\setminus M)\cap K^c=\emptyset.$ But since $M$ has empty interior, this contradicts $U\cap K^c$ being open and non empty. And the proof follows. 

For the last item, we will show that $f|_R:R\to f(R)=T\setminus M=int(K)\setminus M$ is a covering map. Since $f(R)=int(K)\setminus M$ is connected, it will also imply the claim about the cardinalities of preimages of points in $int(K)\setminus M$.
It follows from above that $f$ is a local homeomorphism from $\bigcup_{a\in A} (S_a\bigcup_{i\in[k_a]}S_{a;i})$ onto its image. Since $f(S_a),f(S_{a;i})$ avoid $\partial K$ for every $i,a$, $M$ has no interior, and $f$ restricted to $\bigcup_{a\in A} (S_a\bigcup_{i\in[k_a]}S_{a;i})$ is open, $R$ is open and dense in $\bigcup_{a\in A} (S_a\bigcup_{i\in[k_a]}S_{a;i}),$ and maps locally homeomorphically onto $int(K)\setminus M.$ 

In order to show that $f|_R$ is a covering map, we just need to show it is proper (see e.g. \cite[Lemma 2]{covering_proper}). To this end, let $Q$ be a compact subset of $int(K)\setminus M.$ We must show that $f|_R^{-1}(Q)$ is compact. Since we work with manifolds, it is enough to prove sequential compactness. Let $y_1,y_2,\ldots\in f|_R^{-1}(Q)$ be an arbitrary sequence. We will find a converging subsequence. Let $x_i=f(y_i)\in Q.$ Since $Q$ is compact we can find a converging subsequence, which, without loss of generality, may be taken to be $x_1,x_2,\ldots.$ Thus, $\lim_{i\to\infty }x_i=x\in Q.$ By the third item of the assumptions, and the first item of the conclusions, each $x_i$ has finitely many preimages in $R,$ thus, there is at least one $a\in A$ such that infinitely many of the $y_i$'s belong to it. By passing to a subsequence and relabeling the $x_i$'s we may assume that every $y_i\in \overline{S}_a.$ Since $\overline{S}_a$ is compact there exists $y\in \overline{S}_a$ with 
\[\lim_{i\to\infty} y_i = y.\]
If $y\in R$ we are done. Otherwise, by the first item, $y$ must map to $M\cup\partial K.$ But $f(y)=x\notin M\cup\partial K,$ by assumption and the second item of the conclusion. Thus, $y\in R$ showing that $f|_R$ is proper, hence a covering map.
\end{proof}
\begin{rmk}\label{rmk:why_weak_smooth_etc}
    We used that $S_a\cup\bigcup_{i\in[k_a]}S_{a;i}$ is a topological and smooth submanifold of $B$ for the gluing argument: 
    The topological embeddings guarantee that open sets in $S_{a;i}$ as a boundary of $S_a$ are the same as open sets as a boundary of $S_b,$ where $S_a,S_b$ are neighboring at $S_{a;i}=S_{b;j}.$ Being a smooth, or even continuous, injective image of a smooth manifold with boundary guarantees the existence of collar neighborhood which we can glue in order to construct the manifold topology of the glued space. 
\end{rmk}
The next result is analogous to \cite[Lemma 8.1]{even2021amplituhedron}, and it proven via the exact same argument
{\begin{lemma}\label{lem:smooth_extension}The amplituhedron map can be extended, via the same formula to a smooth submersion
\[\tZ:B\to \Gr_{k,k+m},~~V\mapsto \tZ(V)=V\cdot \bz,\]
where $B\subset\Gr_{k,n;1}$ is an open neighborhood of $\Gr^{\geq}_{k,n;1}.$
\end{lemma}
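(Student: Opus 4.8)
The plan is to observe that the amplituhedron map $\tZ:\Gr_{k,n;1}^{\geq}\to\Gr_{k,k+4;1}$, defined in matrix representatives by $(C\vdots D)\mapsto(CZ\vdots DZ\bmod CZ)$, is the restriction of a map given by a genuinely algebraic formula, and that the only thing that can fail on a neighborhood of $\Gr_{k,n;1}^{\geq}$ is the rank/dimension conditions defining $\Gr_{k,k+4;1}$ as a subspace of $\widetilde{\Gr}_{k,k+4;1}$. First I would recall from \cref{thm:ampli_well_def} (and Karp's result \cite[Section 4]{karp2017sign}) that for $C\vdots D\in\Gr_{k,n;1}^{\geq}$ both $CZ$ and $(C+D)Z$ have full rank $k$ and $k+2$ respectively, so $\tZ$ genuinely lands in $\Gr_{k,k+4;1}$. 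Since the conditions ``$\rk(CZ)=k$'' and ``$\rk((C+D)Z)=k+2$'' are open conditions on $(C\vdots D)$ — they are the non-vanishing of certain determinants depending polynomially on the entries of a local matrix representative and on $Z$ — the set $B$ of loopy vector spaces in $\widetilde{\Gr}_{k,n;1}$ (or rather in $\Gr_{k,n;1}$, which is already cut out by its own open rank conditions) on which $CZ$ and $(C+D)Z$ have the correct ranks is an open neighborhood of $\Gr_{k,n;1}^{\geq}$. On $B$ the formula $(C\vdots D)\mapsto(CZ\vdots DZ\bmod CZ)$ is well defined, independent of the choice of representative (the $\GL_{k;1}$-ambiguity is absorbed by the analogous action downstairs, exactly as in the proof of \cref{thm:ampli_well_def}), and manifestly smooth, being built from matrix multiplication by the fixed matrix $Z$ followed by the smooth quotient operations on the loopy Grassmannian.

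Next I would verify that this extension is a submersion. This is the same computation as in the loopless case \cite[Lemma 8.1]{even2021amplituhedron}: working in a local chart, the differential of $V\mapsto VZ$ is $\dot C\mapsto\dot C Z$ on the tree part and the induced map on the loop part; since $Z$ is a full-rank $n\times(k+4)$ matrix with $n\geq k+4$, right multiplication by $Z$ is surjective on the relevant spaces of matrices, and this surjectivity is inherited by the map on tangent spaces of the (loopy) Grassmannians because $\PROJ$ and $\PL$ are submersions (\cref{lem:submersion2Grass}) and the torsor structure of the fibers (\cref{def:group}) makes the vertical directions match up. Hence $d\tZ$ is surjective at every point of $B$, so $\tZ:B\to\Gr_{k,k+4;1}$ is a smooth submersion.

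There is essentially no hard obstacle here — the content is bookkeeping: checking that ``correct rank of $CZ$ \emph{and} of $(C+D)Z$'' defines an open set, checking representative-independence of the extended formula, and transporting the surjectivity of $\cdot Z$ through the two submersions $\PC,\PD$. The one point that deserves a sentence of care is that $B$ should be taken inside $\Gr_{k,n;1}$ (where the rank conditions \eqref{eq:ranks} defining full-rank loopy vector spaces already hold) rather than all of $\widetilde{\Gr}_{k,n;1}$, so that both $\PROJ$-fibers and the quotient $DZ\bmod CZ$ make sense; since $\Gr_{k,n;1}$ is itself open in $\widetilde{\Gr}_{k,n;1}$ and contains $\Gr_{k,n;1}^{\geq}$, intersecting it with the open locus where $CZ,(C+D)Z$ have full rank still yields an open neighborhood of $\Gr_{k,n;1}^{\geq}$, as required. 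The argument is identical in form to \cite[Lemma 8.1]{even2021amplituhedron}, now applied to both the tree part $C$ and the combined part $C+D$ simultaneously.
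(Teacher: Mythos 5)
Your proposal is correct and coincides with the paper's intended argument: the paper simply states that the lemma ``is proven via the exact same argument'' as \cite[Lemma 8.1]{even2021amplituhedron}, and what you have written is exactly the natural loopy elaboration of that argument (open rank conditions for both $CZ$ and $(C+D)Z$, representative-independence, and propagation of the surjectivity of $\cdot\,Z$ through the submersions $\PC,\PD$). The only cosmetic discrepancy is that the paper's lemma statement writes the target as $\Gr_{k,k+m}$, evidently a carryover typo from the tree-level version; your reading of the target as $\Gr_{k,k+4;1}$ is the intended one.
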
}

Our next lemma is a simple variant of a nice argument from \cite{4168411,4169287}. For a proof see \cite[Lemma 10.4]{perlstein2025bcfw}
\begin{lemma}\label{lem:connectedness_in_complement_stratification}
Let $X$ be smooth connected manifold of dimension $n$, and $D_1,\ldots,D_h$ smooth manifolds of dimensions at most $n-2.$ Let $f_i:D_i\to X$ be smooth injections, for $i=1,\ldots, h.$ Then $X\setminus \bigsqcup f_i(D_i)$ is connected.
\end{lemma}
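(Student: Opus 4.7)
The plan is to prove that $X \setminus \bigsqcup_i f_i(D_i)$ is path-connected. Given two points $p, q$ in the complement, connectedness of $X$ together with standard smoothing gives a smooth path $\gamma:[0,1]\to X$ with $\gamma(0)=p$ and $\gamma(1)=q$, and the task reduces to perturbing $\gamma$, rel.\ endpoints, so that it avoids every $f_i(D_i)$.

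By compactness I would cover $\gamma([0,1])$ by finitely many coordinate charts of $X$ each diffeomorphic to an open ball in $\mathbb{R}^n$, subdivide the parameter interval accordingly into $[s_j,s_{j+1}]$, and (by a preliminary small perturbation of the junction points, possible because $\bigsqcup_i f_i(D_i)$ has empty interior in $X$) arrange that each $\gamma(s_j)$ lies in the complement. It then suffices, inside one chart identified with $\mathbb{R}^n$, to take a smooth path $\eta:[0,1]\to\mathbb{R}^n$ with $\eta(0),\eta(1)\notin A$, where $A\subset\mathbb{R}^n$ is the intersection of $\bigcup_i f_i(D_i)$ with the chart, and produce an endpoint-fixing perturbation that avoids $A$. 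I would fix a smooth bump function $\phi:[0,1]\to\mathbb{R}$ with $\phi>0$ on $(0,1)$ and $\phi(0)=\phi(1)=0$, and consider the family $\eta_v(t)=\eta(t)+\phi(t)v$ for $v$ in a small ball $B\subset\mathbb{R}^n$. The set of \emph{bad} parameters $\{v\in B:\eta_v([0,1])\cap A\neq\emptyset\}$ is contained in the image of the locally Lipschitz map $(t,a)\mapsto(a-\eta(t))/\phi(t)$ defined on $(0,1)\times A$, whose domain has Hausdorff dimension at most $1+(n-2)=n-1$. Therefore the bad set has Hausdorff dimension at most $n-1$, hence Lebesgue measure zero in $B$, and a generic $v\in B$ gives a perturbation $\eta_v$ that avoids $A$. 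Concatenating the perturbed pieces over all charts produces a path in $X\setminus\bigsqcup_i f_i(D_i)$ from $p$ to $q$.

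The main obstacle is that each $f_i$ is merely a smooth injection, not a smooth embedding, so $f_i(D_i)$ need not be a submanifold of $X$ and can have complicated closure; transversality statements about paths meeting a fixed submanifold are therefore not directly available. This is precisely why the argument is phrased in terms of Hausdorff dimension rather than transversality: smooth maps are locally Lipschitz and hence do not increase Hausdorff dimension, so each $f_i(D_i)$ has Hausdorff dimension at most $n-2$ regardless of any pathology. The hypothesis $\dim D_i\le n-2$ enters precisely to ensure that $1+\dim D_i\le n-1$, which places the bad set of perturbations in a set of Hausdorff dimension strictly less than $n$, leaving a set of full measure of good perturbations in $B$ to choose from.
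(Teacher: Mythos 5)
The paper does not actually prove this lemma itself---it defers to an external reference (Lemma 10.4 of the ABJM paper it cites), which runs exactly this kind of Sard/measure-zero perturbation argument---and your proof is correct and essentially coincides with that standard approach. The only step worth tightening is the claim $\dim_H\bigl((0,1)\times A\bigr)\le 1+\dim_H(A)$, which is true but requires the product inequality $\dim_H(E\times F)\le \dim_H(E)+\overline{\dim}_B(F)$ with $F$ the interval (the naive sum of Hausdorff dimensions is only a lower bound in general); this can be sidestepped entirely by instead pushing forward the genuine manifold $(0,1)\times f_i^{-1}(U)$, of dimension at most $n-1$, under the smooth map $(t,d)\mapsto \bigl(f_i(d)-\eta(t)\bigr)/\phi(t)$, whose image then has measure zero for each $i$.
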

\begin{proof}[Proof of \cref{thm:surj}]
Write
\begin{itemize}
    \item $f=\tZ,$ the amplituhedron map;
    \item $N=\Gr_{k,k+4;1}.$ The set $M$ is the union of images of boundary strata of BCFW cells which are neither internal mathcable, nor included in $\SA$. 
    \item 
    $L=\Gr^{>}_{k,n;1}$ and $B\subset\Gr_{k,n;1}$ is an open set containing $\overline{L}=\Gr_{k,n;1}^{\geq}$.
    \item Finally $\{S_a\}=\{S_\D\}_{\D\in\CD_{n,k}^1}$ are the BCFW cells, and the boundary strata $S_{a;i}$ are the internal matchable boundaries of $S_\D$, that is, the matchable strata which are regular with respect to at least one of the two cells containing them.
\end{itemize}
We will verify that the assumptions of~\cref{prop:criterion_for_surj} hold, with these $f,L,B$ and the collection $\{S_a\}$, for all positive $Z.$ The first conclusion of \cref{prop:criterion_for_surj} will then imply that for all positive $Z$ the images of BCFW cells cover the amplituhedron.
\\\underline{Assumptions on $L,B,N$:} 
$B,N$ are clearly manifolds. $L$ is connected by \cref{lem:connected}. \\\underline{Assumptions on $f$:} $f$ is a smooth submersion by \cref{lem:smooth_extension}. 
\\\underline{Assumptions on $M$:} 
\\\underline{Empty interior.} $M$ is the union of images of subspaces $X_1,\ldots,X_N$, which are contained in $\Srem_\D$ for some $\D,$ by definition and \cref{lem:from SREDSBLUE to smaller diags in diags order}. They are the smooth injective images of smooth manifolds of dimensions at most $\dim N-2,$ by \cref{cor:strata_and_dim}. 
$f$ is injective on every $X_i$ by \cref{prop:inj_bdries}. Hence by the classical Morse-Sard's theorem \cite[Chapter 3]{Hirsch} $int(M)=\emptyset.$
\\\underline{$M\cup\partial K$ is closed.} 
By \cref{prop:CD bdries} $f(\SA)\subseteq\partial K,$ and both are easily seen to be closed, hence compact sets.

The spaces $X_i$ are of the form $\partial_AS_\D\subseteq\Srem_\D,$ for some $A\subseteq\tVar_\D,$ by \cref{lem:from SREDSBLUE to smaller diags in diags order} and \cref{cor:reg_domino_strata}. Every point in the closure of $\overline{X}_i$ is therefore either in $\SA,$ or in $\partial_{A'}S_{\D'}$ where $\D'\leq\D$ in the partial order of \cref{lem:from SREDSBLUE to smaller diags in diags order}, and it is easy to see that unless $A'=A,\D'=\D$ then $|A'|>1,$ in which case the point belongs to some other $X_j.$
Thus \begin{equation}\label{eq:M_partialK}\overline{
\bigcup_i X_i}\subseteq \bigcup_i X_i\cup\SA,\end{equation}and the right hand side is closed, hence compact. Since $K$ is compact and contains $f(\SA)$, also $M\cup\partial K$ is compact: if $x_1,x_2,\ldots $is a sequence of points in $M$, we can find for them preimages $y_1,y_2,\ldots$ in $\bigcup_i X_i.$ Since $\overline{\bigcup_i X_i}$ is compact, we can extract a converging subsequence, $y_{i_1},y_{i_2},\ldots$. Its limit $y$ belongs to $\SA\cup\bigcup_i X_i$, hence \[\lim_{j\to\infty}x_{i_j}=\lim_{j\to\infty}f(y_{i_j})=f(y)\in M\cup \partial K.\]
\\\underline{$U\setminus M$ is connected for every open $U.$} 
Since for all positive $Z,$ $M$ is the union of injective images of finitely many manifolds of codimensions $2$ or more, the connectedness of $U\setminus M,$ for an open and connected $U\subseteq N,$ 
follows from applying \cref{lem:connectedness_in_complement_stratification}, with $X=U$ and each $D_i$ is a smooth manifold which maps smoothly and injectively to the weak smooth submanifold $X_i.$
\\\underline{Assumptions on $\{S_a\}_{a\in A}:$}
We turn to verify the assumptions on $S_{a;i}.$ The first assumption follows from 
 \cref{cor:strata_and_dim}, \cref{lem:from SREDSBLUE to smaller diags in diags order} and \cref{prop:mfld_w_bdry}. The second is \cref{lem:bdry_matching_geo}. The fourth assumption is \cref{prop:local_sep}.

We turn to the third condition.
\cref{prop:inj_bdries} shows that the map is injective when restricted to the union of $S_\D$ and its regular matchable boundaries. 
Its image avoids the images of $\Sred_\D,\Sblue_\D$ by \cref{lem:Sinj_strata},~\cref{lem:eps_blue_vanishes_on_blue},~\cref{lem:eps_red_vanishes_on_red}.
The blue boundary of $S_\D,$ if it is internal matchable, is contained in $\Sblue_\D$ and avoids $\Sred_\D$ by \cref{cor:regular_matchable_and_S_D_in_SREM}. By \cref{lem:eps_red_vanishes_on_red} its image does not intersect the image of $\Sred_\D.$ The red boundary, if is an internal mathcable boundary, is contained in $\Sred_\D$ by \cref{cor:red_bdry_in_sred}. Thus, the union of $S_\D$ and its internal matchable boundaries maps injectively. It only remains to show that the red and blue boundaries, if they are matchable, also map injectively. But in this case they are also regular matchable boundaries of other cells, hence map injectively, by \cref{prop:inj_codim_1}. 

We will show a slightly strong claim. Let 
Let $T_{\zeta_i}$ be the gluing of $S_\D\sqcup\partial_{\zeta_i}S_\D$ and $S_{\D'}\sqcup\partial_{\zeta_i}S_\D$, along their common boundary, where 
$\D'=\shift_{\zeta_i}\D.$  By \cref{prop:mfld_w_bdry}, $T_{\zeta_i}$ is a manifold without boundary of dimension $4k+4,$ obtained as the gluing of two smooth manifolds with boundary along their common boundary. We will show that $\tZ$ is injective on $T_{\zeta_i}$ with a continous inverse, which will imply the claim. It is injective on $S_\D,S_{\D'}$ by \cref{thm:inj}. It is injective on $\partial_{\zeta_i}S_\D$ by \cref{prop:inj_codim_1}. The images of $S_\D,S_{\D'}$ are disjoint by \cref{prop:local_sep}. They are also disjoint from the image of $\partial_{\zeta_i}S_\D,$ as shown in the previous paragraph.  Thus, the amplituhedron map is injective when restricted to $T_{\zeta_i}.$  
Hence, by Brouwer's Invariance of Domain Theorem \cite[Proposition 7.4, Ch. IV]{Dold} the continuous map $\tZ$ maps $T_{\zeta_i}$ homeomorphically onto its image in $\Gr_{k,k+4;1},$ which must be an open submanifold of $\Gr_{k,k+4;1},$ and the inverse map is also continuous.

Thus, \cref{prop:criterion_for_surj} shows that for all positive $Z$
\[\Ampl_{n,k,4}^1(Z)=\bigsqcup_{\D\in\CD_{n,k}^1} \tZ(\overline{S}_\D).\]
\\\textbf{The moreover part: } The second conclusion of \cref{prop:criterion_for_surj} shows that
for all positive $Z,$
\[\partial \Ampl_{n,k,4}^1(Z) = \bigsqcup_{\D\in\CD_{n,k}^1} \overline{\left(\tZ(\overline{S}_\D)\cap\partial \Ampl_{n,k,4}^1(Z)\setminus M\right)}.\]
The above discussion shows that every point of a BCFW cell or an internal matchable boundary maps to an internal point of the amplituhedron. Thus, from the definition of $M$
\[\tZ(\overline{S}_\D)\cap\partial \Ampl_{n,k,4}^1(Z)\setminus M\subseteq \tZ(\overline{S}_\D\cap \SA),\]and since the right hand side is closed, it also contains the closure of the left hand side.
Thus, for all positive $Z$ 
\[\partial \Ampl_{n,k,4}^1(Z)\subseteq
\bigsqcup_{\D\in\CD_{n,k}^1} \tZ(\overline{S}_\D\cap \SA).\] Combining with \cref{prop:CD bdries} we deduce that 
\[\partial\Ampl_{n,k,4}^1(Z)=\Ampl_{n,k,4}^1(Z)\cap\left(\bigsqcup_{i,i+1,j,j+1\in[n]~\text{are all distinct}}V(\llrr{i,i+1,j,j+1})\sqcup\bigsqcup_{i\in[n]}V(\llrr{i,i+1,A,B})\right),\]
where as usual $+1$ is taken cyclically.
The theorem follows.
\end{proof}
\begin{proof}[Proof of \cref{thm:sep}]
The proof of \cref{thm:surj}, shows that with $f=\tZ,N=\Gr_{k,k+4},\{S_a\}_{a\in A}$ being the BCFW cells, and $M$ is the subspace described in that proof, the assumptions of \cref{prop:criterion_for_surj} hold. Thus, by the third conclusion of \cref{prop:criterion_for_surj}, for every point in an open dense subset of the amplituhedron, the number of preimages inside the union of the BCFW cells is constant. But this constant must be $1,$ since by \cref{prop:cell_separated_from_all} there is, for every $k,n$, a BCFW cell whose image is separated from the images of all other BCFW cells. This image is open by \cref{cor:open_map}, hence intersects the above open dense set.

This implies that the images of every two BCFW cells are disjoint. Indeed,  had two such images intersected, by \cref{cor:open_map} again, they would have intersected in an open set. Every point in this open set would have had at least two preimages, which is a contradiction.
\end{proof}

\bibliographystyle{alpha}
\bibliography{1Lbib}
\end{document}